\documentclass[journal,onecolumn,11pt]{IEEEtran}

\usepackage[utf8]{inputenc}
\usepackage[margin=.75in]{geometry} 
\usepackage{fancyhdr}
\usepackage[dvipsnames]{xcolor}

\usepackage{mathpazo}
\usepackage{amssymb}
\usepackage{amsmath}
\usepackage{mathtools}
\usepackage{relsize}
\usepackage{hyperref}
\usepackage{amsthm}
\usepackage{physics}
\usepackage{enumitem}
\usepackage[most]{tcolorbox}
\usepackage{mathrsfs}
\usepackage{algorithm}
\usepackage{algpseudocode}
\usepackage{authblk}
\usepackage{subcaption}

\usepackage{matlab-prettifier}

\usepackage{float} 
\allowdisplaybreaks 
\usepackage{hyperref} 
\usepackage{mwe}
\usepackage{listing}
\usepackage[en-US]{datetime2}
\usepackage{placeins}
\usepackage{capt-of}
\usepackage{tikz}
\usepackage{array}


\usepackage{comment}
\usepackage[sort&compress,numbers]{natbib}
\bibliographystyle{unsrtnat}

\newcommand{\ve}{\varepsilon}
\newcommand{\mrm}{\mathrm}
\newcommand{\mbb}{\mathbb}
\newcommand{\mbf}{\mathbf}

\newcommand{\wt}{\widetilde}
\newcommand{\ol}{\overline}

\newcommand{\supp}{\mrm{supp}}
\newcommand{\linspan}{\mrm{span}}

\newcommand{\Var}{\mrm{Var}}

\renewcommand{\tr}{\operatorname{Tr}}

\renewcommand\bra[1]{{\langle{#1}|}}
\renewcommand\ket[1]{{|{#1}\rangle}}

\newcommand{\cA}{\mathcal{A}}

\newcommand{\cE}{\mathcal{E}}
\newcommand{\cF}{\mathcal{F}}
\newcommand{\cG}{\mathcal{G}}

\newcommand{\cM}{\mathcal{M}}
\newcommand{\cN}{\mathcal{N}}

\newcommand{\cP}{\mathcal{P}}

\newcommand{\cR}{\mathcal{R}}
\newcommand{\cS}{\mathcal{S}}
\newcommand{\cT}{\mathcal{T}}
\newcommand{\cU}{\mathcal{U}}
\newcommand{\cV}{\mathcal{V}}
\newcommand{\cW}{\mathcal{W}}
\newcommand{\cX}{\mathcal{X}}
\newcommand{\cY}{\mathcal{Y}}
\newcommand{\cZ}{\mathcal{Z}}

\newcommand{\Lin}{\mathrm{L}}
\newcommand{\Trans}{\mathrm{T}}
\newcommand{\Pos}{\mathrm{Pos}}
\newcommand{\Herm}{\mathrm{Herm}}
\newcommand{\Channel}{\mathrm{C}}

\newcommand{\Density}{\mathrm{D}}

\newcommand{\id}{\textrm{id}}
\newcommand{\Pd}{\mathrm{Pd}}

\newenvironment{mylist}[1]{\begin{list}{}{
	\setlength{\leftmargin}{#1}
	\setlength{\rightmargin}{0mm}
	\setlength{\labelsep}{2mm}
	\setlength{\labelwidth}{8mm}
	\setlength{\itemsep}{0mm}}}
	{\end{list}}

\newcounter{problemcounter}


\theoremstyle{definition}
\newtheorem{theorem}{Theorem}
\newtheorem*{theorem*}{Theorem}
\newtheorem{result}[theorem]{Result}
\newtheorem{lemma}[theorem]{Lemma}

\newtheorem{definition}[theorem]{Definition}

\newtheorem{corollary}[theorem]{Corollary}

\newtheorem{remark}[theorem]{Remark}
\newtheorem{proposition}[theorem]{Proposition}
\newtheorem{fact}[theorem]{Fact}

\setcounter{tocdepth}{2}

\begin{document}

\title{\huge A Unified Approach to Quantum Contraction and Correlation Coefficients}
\author[1]{Ian George}
\author[1,2]{Marco Tomamichel}
\affil[1]{\small Centre for Quantum Technologies, National University of Singapore, Singapore}
\affil[2]{\small Department of Electrical and Computer Engineering, National University of Singapore, Singapore}

\maketitle

\begin{abstract}
        The maximal correlation coefficient measures the linear correlation in a bipartite distribution and contraction coefficients measure how much information is lost under a noisy channel. Remarkably, Raginsky established a close relation between these two concepts by showing that the $\chi^2$ contraction coefficient equals the maximal correlation coefficient of the joint input/output distribution of the channel.
        In quantum theory, several generalizations of these concepts have been proposed, but none recover all the classical properties.
        Here we construct a framework in which the classical theory extends to the quantum setting. We introduce families of quantum maximal correlation coefficients and show that many impose limits on converting quantum states under local operations. We establish a family of quantum contraction coefficients are efficiently computable, yielding a generic efficient algorithm for mixing times of quantum channels with a full rank fixed point. Furthermore, we establish a quantum analogue of Raginsky's classical correspondence that relates these two families of quantities.
        To do this, we develop the operator-theoretic approach to Petz's family of non-commutative $L^{2}(p)$ spaces that extend the data processing inequality for variance to quantum theory. 
\end{abstract}

\tableofcontents

\section{Introduction}
The limits of information processing are established through measures of correlation and their data processing inequalities, which govern how these measures change when information is manipulated. For example, data processing inequalities formally capture that modifying local data can only destroy correlation in a distributed setting. While most traditional limits of information processing are investigated through divergences and divergence-induced quantities, such as conditional entropy and mutual information, there do exist other relevant measures such as the (Hirschfeld-Gebelein-R\'{e}nyi) maximal correlation coefficient $\mu(X:Y)_{p}$ \cite{Hirschfeld-1935a,Gebelein-1941a,Renyi-1959sieve,Renyi-1959measuresofdep} which in classical information theory establishes fundamental limits of processing information in both a distributed parallel setting and in a sequential setting \cite{Witsenhausen-1975a,Anantharam-2013a, Makur-2020a,GZB-preprint-2024}.

In the parallel setting, the maximal correlation limits the ability to convert an arbitrary number of a joint distribution $p_{XY}$ into a single copy of a different distribution $q_{X'Y'}$  under local operations as seen in Fig~\ref{fig:classical-parallel-processing}, which represents a theorem by Witsenhausen~\cite{Witsenhausen-1975a}:
\begin{figure}[H]
    \begin{center}
    \begin{tikzpicture}
        \tikzstyle{porte} = [draw=black!50, fill=black!20]
            \draw
            (-3.5,0.25) node (exist) {$\exists n \in \mbb{N}$, $\cW_{X^{n} \to X'}$, $\cR_{Y^{n} \to Y'}$}
            (-3.5,-0.25) node (exists2) { such that}
            (0,0) node (input) {$p_{XY}^{\otimes n}$}
            ++(1.5,-0.75) node[porte] (Decoder-2) {$\cR_{Y^{n} \to Y'}$}
            ++(0,1.5) node[porte] (Decoder-1) {$\cW_{X^{n} \to X'}$}
            ++(1.5,0) node (out-A) {$X'$}
            ++(0,-1.5) node (out-C) {$Y'$}
            ++(0.75,0.75) node (out) {$=q_{X'Y'}$}
            ++(3.5,0) node (cond) {~~~~only if~~~$\mu(X:Y)_{p} \geq \mu(X':Y')_{q}$.}
            ;
            \path[draw=black, -] (input) -- (0,-0.75);
            \path[draw=black, ->] (0,-0.75) -- (Decoder-2) node[midway, below,sloped] {$Y^{n}$};
            \path[draw=black, -] (input) -- (0,0.75);
            \path[draw=black, ->] (0,0.75) -- (Decoder-1) node[midway, above, sloped] {$X^{n}$};
            \path[draw=black, ->] (Decoder-1) -- (out-A);
            \path[draw=black, ->] (Decoder-2) -- (out-C);
        \end{tikzpicture}
        \end{center}
        \caption{A necessary condition for the existence of a conversion for some integer $n$ under local operations with stochastic maps $\cW_{X^{n} \to X'}$ and $\cR_{Y^{n} \to Y'}$ is given in terms of the maximal correlation coefficient in the classical setting.}
         \label{fig:classical-parallel-processing}
\end{figure}

\noindent In other words, the maximal correlation coefficient is a `strong' monotone in the sense that it implies limits that are independent of the number of input copies. Moreover, the maximal correlation coefficient satisfies $\mu(X:Y)_{p} = 1$ if and only if a bit of perfect correlation can be distilled under local operations. We call this the Witsenhausen property. In fact, Witsenhausen~\cite{Witsenhausen-1975a} showed something even stronger: one can approximate perfect correlation to vanishing error from $p_{XY}^{\otimes n}$ as $n$ grows using local operations only if $\mu(X:Y)_{p} = 1$. Satisfying this asymptotic property was called ``$p_{XY}$ having asymptotic common data (ACD)" in \cite{Beigi-2013a}. All of these powerful results on conversions under local operations follow from the maximal correlation coefficient $\mu(X:Y)_{p}$ satisfying two properties: data processing under local operations, $\mu(X':Y')_{(\cW_{X \to X'} \otimes \cR_{Y \to Y'})(p)} \leq \mu(X:Y)_{p}$, and tensorization, $\mu(XX':YY')_{p^{\otimes 2}} = \mu(X:Y)_{p}$. The tensorization property distinguishes it from the standard correlation measure in information theory, the mutual information, which is why the maximal correlation coefficient is useful for establishing these strong limits.\footnote{In particular, recall that the mutual information is additive over tensor products, $I(A^{n}:B^{n})_{\rho^{\otimes n}} = nI(A:B)_{\rho}$, so it is a measure of correlation that grows as $n$ grows.}

In the sequential setting, the maximal correlation upper bounds the rate at which a time-homogeneous, discrete time Markov chain converges to its unique distribution (see Fig.~\ref{fig:classical-sequential-processing}). 
\begin{figure}[H]
    \begin{center}
    \begin{tikzpicture}
        \tikzstyle{porte} = [draw=black!50, fill=black!20]
            \draw
            (0,0) node (input) {$p_{X}$}
            ++(1.5,0) node[porte] (Chan-1) {$\cW_{X \to X}$}
            ++(2,0) node[porte] (Chan-2) {$\cW_{X \to X}$}
            ++(2,0) node[porte] (Chan-3) {$\cW_{X \to X}$}
            ++(2,0) node (ongoing)
            {$\cdots \pi$}
            ++(3,0) node (speed) {at rate at most $\mu(\pi_{XY})$}
            ;
            \path[draw=black, ->] (input) -- (Chan-1);
            \path[draw=black, ->] (Chan-1) -- (Chan-2);
            \path[draw=black, ->] (Chan-2) -- (Chan-3);
            \path[draw=black, ->] (Chan-3) -- (ongoing);
        \end{tikzpicture}
        \end{center}
        \caption{Bound on speed of convergence to stationary distribution in the classical setting in terms of maximal correlation coefficie.}
         \label{fig:classical-sequential-processing}
\end{figure}

\noindent That is, given $\cW_{X \to X}$ with unique, full rank stationary distribution $\pi_{X} = \sum_{x} \pi(x)\dyad{x}$, the rate at which $\cW^{n} \coloneq \circ_{i \in [n]} \cW$ converges an arbitrary input to $\pi$ is upper bounded by $\mu(\pi_{XY})$ where $\pi_{XY} \coloneq \sum_{x,y} \pi(x)\cW(y\vert x) \dyad{x} \otimes \dyad{y}$. Moreover, this upper bound can be tight \cite{Makur-2020a} and is largely independent of the divergence used to measure the distinguishability of the process and the stationary distribution \cite{GZB-preprint-2024}. Beyond the asymptotic setting, because $\mu(X:Y)_{p}$ is in fact efficient to compute in many cases, one may also use it to practically compute mixing times. Formally, these sequential results stem from the fact that $\mu(X:Y)_{\pi}$ is the input-dependent contraction coefficient of the $\chi^{2}$-divergence \cite{Raginsky-2016a}, i.e. 
\begin{align}\label{eq:correlation-to-contraction}
     \eta_{\chi^{2}}(\cW,\pi)=\mu(X:Y)_{\pi}^{2} \ . 
\end{align} 
Note by the tensorization property of $\mu(p_{XY})$ this further implies 
\begin{align}\label{eq:cl-contraction-tensorize}
    \eta_{\chi^{2}}(\cW_{X \to Y}^{\otimes 2},\pi_{X}^{\otimes 2}) = \eta_{\chi^{2}}(\cW,\pi) \ . 
\end{align}
In total, while the maximal correlation coefficient is not itself a divergence in the traditional sense, it captures a fundamental aspect of the data processing inequality of (a family of classical) divergences.

A key aspect of the above results is that both the maximal correlation coefficient and $\chi^{2}$-divergence may be expressed in terms of expectation and variance. Indeed, the maximal correlation coefficient may be expressed as
\begin{align}\label{eq:intro-max-corr}
    \mu(X:Y)_{p} = \sup\{ \mbb{E}_{p_{XY}}[fg] : \mbb{E}_{p_{X}}[f] = 0 = \mbb{E}_{p_{Y}}[g] \, \& \, \Var_{p_{X}}[f] = 1 = \Var_{p_{Y}}[g] \} \ ,  
\end{align}
where the supremum is with respect to real-valued functions and subscripts denote the relevant probability measure for the given expectation or variance. Moreover, for probability distributions $p$ and $q$ and defining the likelihood ratio $\frac{p}{q}$, 
\begin{align}\label{eq:chi-sq-as-variance}
    \chi^{2}(p \Vert q) = \Var_{q}\left[\frac{p}{q}\right] \ .
\end{align}
The reason these probabilistic identifications are important is that they allow the quantities to be written in terms of the $L^{2}(p)$ space that induces the expectation and variance. Witsenhausen used this ``operator-theoretic interpretation" to establish his seminal results \cite{Witsenhausen-1975a}. Similarly, by ``following Witsenhausen," Raginsky established \eqref{eq:correlation-to-contraction} in a simple manner \cite[Theorem 3.2]{Raginsky-2016a}. In particular, his proof identifies $\eta_{\chi^{2}}(\cW,q)$ as the minimal decrease of the variance of the likelihood ratio under $\cW$, re-expresses this as an operator norm between subspaces of $L^{2}(q)$ and $L^{2}(\cW(q))$, and notes this operator norm is equivalent to the optimization in \eqref{eq:intro-max-corr} when applied to $q_{XY} = \sum_{x,y} q(x)\cW(y\vert x) \dyad{x} \otimes \dyad{y}$. These identifications are thus made possible by working with $L^{2}(p)$ spaces, which is why the probabilistic identification is critical. The total of these ideas and results in classical information theory are summarized in Fig.~\ref{subfig:CIT-properties}.

In quantum information theory, the analogue of the relations above is not well understood. To address the parallel setting above, Beigi introduced a quantum maximal correlation coefficient \cite{Beigi-2013a}. He established it as a strong monotone for local operations, but also showed that it does not satisfy the Witsenhausen property \cite[Example 7]{Beigi-2013a} and thus could not be used to detect whether $\rho_{AB}$ has asymptotic common data. The sequential setting is further muddled. Temme \emph{et al.}~\cite{Temme-2010a} introduced a family of quantum $\chi^{2}$-divergences, and their results suggest most of the family of quantum $\chi^{2}$-divergences may be relevant for studying quantum time-homogeneous Markov chains. Recent work \cite{GZB-preprint-2024,Beigi-2025a}
suggests multiple quantum $\chi^{2}$-divergences are relevant even in the asymptotic setting. \cite{Cao-2019a} introduced a quantum maximal correlation coefficient that extended the correspondence in \eqref{eq:correlation-to-contraction} with a specific quantum $\chi^{2}$-divergence under certain conditions.  However, the authors did not determine why this quantity in particular extends \eqref{eq:correlation-to-contraction} nor were they able to unify their quantity with Beigi's. As such, the current extensions of maximal correlation coefficients to the quantum theory are disparate and their relations to each other and quantum $\chi^{2}$-divergences are not understood.

This paper unifies the various previous extensions and relations. We introduce new families of quantum maximal correlation coefficients, $\mu_f$, parametrized by operator monotone functions $f$, and establish operational interpretations for many instances of these quantities. We identify their relation to the contraction coefficients of $\chi^{2}_{f}$-divergences. We establish equivalent conditions to these divergences saturating the data processing inequality, and construct efficient algorithms for computing mixing times of many quantum channels through their $\chi^{2}_{f}$ contraction coefficients. Central to our approach is viewing these quantities as being induced via non-commutative extensions of the $L^{2}(p)$ space, the $L^{2}_{f}(\sigma)$ spaces parameterized by operator monotone function $f$. In total, our methodology and results extend the entire classical framework to quantum theory except the asymptotic common data property, which depends on a new question about the stability under tensor products of two of the quantum maximal correlation coefficients (see Section \ref{sec:conclusion} for the formal statement of the open question).

\begin{figure}
    \begin{center}
    \centering
    \begin{subfigure}{\textwidth}
        \begin{center}
        \begin{tikzpicture}
        \tikzstyle{porte} = [draw=black!50, fill=black!20]
            \draw
            (0,0) node[porte, fill=black!20] (L2-space) {$L^{2}(p)$ space}
            (-3,-2.5) node[circle, fill=black!20] (MCC) {$\mu(X:Y)_{p}$};
            \draw (-0.5,-3) node[align=center] (effic) {\small Efficient to \\ \small Compute};
            \draw (-4.5,-1) node[align=center] (Wits-p) {\small Witsenhausen \\ \small Property \cite{Witsenhausen-1975a}};
            \draw (-6.5,-2.5) node[align=center] (ACD-p) {\small ACD iff \\ $\mu(X:Y)_{p} = 1$ \cite{Witsenhausen-1975a}};
            \draw (-5.5,-4) node[align=center] (DPI) {\small DPI under \\ LO};
            \draw (-3,-4) node[align=center] (tens) {\small Tensorizes \cite{Witsenhausen-1975a}};
            \draw (-4,-5.25) node[align=center] (strong-MO) {\small Strong \\ \small Monotone \cite{Witsenhausen-1975a}};
            \draw (4, -2.5) node[align=center,circle,fill=black!20] (contract) {\small $\eta_{\chi^{2}}(\cW,\pi)$};
            \draw (6, -0.75) node[align=center] (Rate) {\small Bound on \\ \small Rate of Contraction \\ \small under $D_{f}(p \Vert q)$ \cite{GZB-preprint-2024}};
            \draw (7, -2.5) node[align=center] (Mixing-Bound) {\small TVD Mixing Time \\ \small Bounds \cite{Diaconis-1991a,Fill-1991a}};
            \draw (6, -4) node[align=center] (Raginsky) {\small $=\mu((\id_{X} \otimes \cW)(\chi^{\vert \pi}))$ \cite{Raginsky-2016a}};
            \draw (2.5, -5.25) node[align=center] (Comp-Mixing-Bound) {\small Computable Mixing \\ \small Time Bounds};
            \draw (-5.5, 0.5) node[align=center] (CIT-summ) {\underline{Classical Information Theory}};
            \path[draw=blue, ->] (L2-space) -- (contract) node[midway] {$\chi^{2}(p \Vert q) = \Var_{q}[\frac{p}{q}]$ \cite{Raginsky-2016a}};
            \path[draw=blue, ->] (L2-space) -- (MCC);
            \path[draw=black, -] (contract) -- (Raginsky);
            \path[draw=black, -] (contract) -- (Mixing-Bound);
            \path[draw=black, -] (contract) -- (Rate);
            \path[draw=black, -] (MCC) -- (tens);
            \path[draw=black, -] (MCC) -- (effic);
            \path[draw=black, -] (MCC) -- (DPI);
            \path[draw=black, -] (MCC) -- (ACD-p);
            \path[draw=black, -] (MCC) -- (Wits-p);
            \path[draw=blue, ->] (DPI) -- (strong-MO);
            \path[draw=blue, ->] (tens) -- (strong-MO);
            \path[draw=blue, ->] (effic) -- (Comp-Mixing-Bound);
            \draw [draw = blue, ->]
    (Mixing-Bound) edge [bend left=40] (Comp-Mixing-Bound);
        \end{tikzpicture}
        \end{center}
        \caption{\footnotesize Summary of known relations of maximal correlation coefficient $\mu(p_{XY})$ and the contraction coefficient of the $\chi^{2}$-divergence $\eta_{\chi^{2}}(\cW,\pi)$ in classical information theory. Some stated known results are not made explicit in the cited works. Claims without citations are believed to have been well-understood. To the best of our knowledge, it is not until \cite{GZB-preprint-2024} that the efficiency of computing the mixing time bounds via the contraction coefficient $\eta_{\chi^{2}}(\cW,\pi)$ was explicated although it follows from previous work.}
        \label{subfig:CIT-properties}
    \end{subfigure}
    \\[5mm]
    \begin{subfigure}{\textwidth}
        \begin{center}
        \begin{tikzpicture}
        \tikzstyle{porte} = [draw=black!50, fill=black!20]
            \draw
            (0,0) node[porte, fill=black!20, align=center] (L2-space) {\small $L_{f}^{2}(\sigma)$ spaces \\ \small (Section \ref{sec:non-commutative-L2-spaces})}
            (-3,-2.5) node[circle, fill=black!20, align=center] (MCC) {$\mu_{f}(A:B)_{\rho}$ \\
            $\mu_{f}^{\text{Lin}}(A:B)_{\rho}$};
            \draw (-5,-0.75) node[align=center] (Wits-p) {\small Witsenhausen Property \\ \small for $\mu_{AM}(A:B)_{\rho}$ (Thm.~\ref{thm:extreme-values-summary})};
            \draw (-6.5,-2.5) node[align=center] (ACD-p) {\small ACD only if \\ \small $\mu_{GM}(A:B)_{\rho} = 1$ \\ \small
            (Thm.~\ref{thm:asymptotic-data})};
            \draw (-5.5,-4) node[align=center] (DPI) {\small DPI under LO \\ \small (Prop.~\ref{prop:DPI-for-f-correlation})};
            \draw (-3,-5) node[align=center] (tens) {\small $\mu_{f}^{\text{Lin}}$ tensorizes for \\ \small $f = x^{k}$, $k \in [0,1]$ \\ \small
            (Lemma ~\ref{lem:k-correlation-tensorize})};
            \draw (-4,-6.75) node[align=center] (strong-MO) {{\small Strong Monotone} \\ {\small for $\mu^{\text{Lin}}_{x^{k}}$, $k \in [0,1]$}};
            \draw (4, -2.5) node[align=center,circle,fill=black!20] (contract) {\small $\eta_{\chi_{f}^{2}}(\cE,\sigma)$};
            \draw (6, -0.5) node[align=center] (Rate) {\small Bound on \\ \small Rate of Contraction (Thm.~\ref{thm:mixing-rate}): \\ 
            $D_{f} \to \eta_{\chi_{LM}^{2}}$, $\widetilde{D}_{\alpha} \to \eta_{\chi^{2}_{GM}}$, \\
            ($\ol{D}_{f} \to \eta_{\chi^{2}_{HM}}$ \cite{GZB-preprint-2024})};
            \draw (7, -2.5) node[align=center] (Mixing-Bound) {\small TD Mixing Time \\ \small Bounds \cite{Temme-2010a}};
            \draw (6, -4) node[align=center] (Raginsky) {\small $=\mu_{f}(A:B)_{H}$ (Thm.~\ref{thm:correspondence-between-contraction-coeffs-and-max-corr-coeffs}) \\ \small Cor.~\ref{cor:contraction-for-sandwiched-case}; See also \cite{Cao-2019a}};
            \draw (4.5, -5.3) node[align=center] (sufficiency) {\small Sufficiency \& Recovery  \\ 
            \small New: Thm.~\ref{thm:DPI-with-equality} \\ 
            \small (Cors.~\ref{cor:suff-for-chi-sq} \& \ref{cor:suff-for-SRD}; See also \cite{jenvcova2012reversibility,Jencova-2017a,Gao-2023-sufficient-fisher})};
            \draw (-5.5, 0.5) node[align=center] (QIT-summ) {\underline{Quantum Information Theory}};
            \draw (0.5,-4) node[align=center] (effic) {{\small Often Efficient} \\ {\small to Compute} \\ {\small
            (Section \ref{sec:computability})}};
            \draw (2.5, -6.75) node[align=center] (Comp-Mixing-Bound) {\small Computable Quantum  \\ \small Mixing Time Bounds (Thm.~\ref{thm:computable-mixing-times})};
            \path[draw=blue, ->] (L2-space) -- (contract) node[midway] {$\chi^{2}_{f}(\rho \Vert \sigma)$ \cite{Temme-2010a}};
            \path[draw=blue, ->] (L2-space) -- (MCC) node[midway] {(Section \ref{sec:q-maximal-correlation-coeff})};
            \path[draw=black, -] (contract) -- (Raginsky);
            \path[draw=black, -] (contract) -- (Mixing-Bound);
            \path[draw=black, -] (contract) -- (Rate);
            \path[draw=black, -] (3.65,-3.36) -- (3.4,-4.6);
            \path[draw=black, -] (MCC) -- (tens);
            \path[draw=black, -] (MCC) -- (DPI);
            \path[draw=black, -] (MCC) -- (ACD-p);
            \path[draw=black, -] (MCC) -- (Wits-p);
            \path[draw=blue, ->] (DPI) -- (strong-MO);
            \path[draw=blue, ->] (tens) -- (strong-MO);
            \draw [draw = blue, ->]
    (Mixing-Bound) edge [bend left=38] (Comp-Mixing-Bound);
            \draw [draw = blue, ->] (effic) -- (Comp-Mixing-Bound);
            \draw [draw = black, -] (MCC) -- (effic);
            \draw [draw = black, -] (contract) -- (effic);
        \end{tikzpicture}
        \end{center}
        \caption{\footnotesize Summary of relations of quantum maximal correlation coefficients and $\chi^{2}$-divergences in quantum information theory as established in this work. AM, GM, LM, HM stand for arithmetic, geometric, logarithm, and harmonic mean respectively. We note that many of the results on the quantum maximal correlation coefficients listed build on results of \cite{Beigi-2013a}, which is not referenced directly in the figure.}
        \label{subfig:QIT-properties}
    \end{subfigure}
    \end{center}
    \caption{{\small Summary of the relation between the maximal correlation coefficient and $\chi^{2}$-divergence in (a) classical information theory and (b) how it changes when generalized to quantum theory as established in this work. TVD, TD, DPI, LO, and ACD stand for total variational distance, trace distance, data processing inequality, local operations, and asymptotic common data respectively. A `strong' monotone means that conversion is limited independent of the number of input copies. If a quantum result includes a citation to a result outside this work, it is because we have obtained an improvement or deeply related result to one in that work}.}
    \label{fig:CIT-vs-QIT}
\end{figure}

\subsection{Overview of Paper and Results}
We briefly overview the structure of the paper and the key contributions of this work. The structure of the paper as well as many of the key results may also be gleaned from Fig.~\ref{subfig:QIT-properties}. 

In Section \ref{sec:non-commutative-L2-spaces}, we establish that, using inner products introduced by Petz \cite{Petz-1996a}, we can identify non-commutative generalizations of the $L^{2}(p)$ space that induce non-commutative notions of variance that admit a data processing inequality (Proposition \ref{prop:variance-properties}). We call such generalizations of the $L^{2}(p)$ space the $L^{2}_{f}(\sigma)$ spaces indexed by operator monotone function $f$. This identification is central to or approach for two reasons. First, as the $L^{2}_{f}(\sigma)$ spaces provide a systematic generalization of expectation and variance to the non-commutative setting, we can use the framework to establish non-commutative generalizations of the identities in Eqs.~\eqref{eq:intro-max-corr} and \eqref{eq:chi-sq-as-variance}. Second, because the $L^{2}_{f}(\sigma)$ spaces are inner product spaces, this framework allows us to generalize the operator-theoretic approach of Witsenhausen to the quantum setting. Thus it is through establishing the $L^{2}_{f}(\sigma)$ spaces that we will be able to unify quantum maximal correlation coefficients and $\chi^{2}$-divergences.

In Section \ref{sec:functional-analysis-tools}, we relate a variety of functional analytic quantities that we use in this work. In particular, in Lemma \ref{lem:map-norms-as-optimizations}, we make explicit a simple method for identifying when an optimization problem is implicitly an operator norm between two Hilbert spaces. This identification, in tandem with the $L^{2}_{f}(\sigma)$ spaces, is used in a subsequent section to prove the correspondence between quantum $\chi^{2}_{f}$ contraction coefficients and quantum maximal correlation coefficients (Theorem \ref{thm:correspondence-between-contraction-coeffs-and-max-corr-coeffs}).

In Section \ref{sec:q-maximal-correlation-coeff}, we introduce two families of quantum maximal correlation coefficients, $\mu_{f}(A:B)_{\rho}$ and $\mu_{f}^{\text{Lin}}(A:B)_{\rho}$. These quantities are indexed by the operator monotone function $f$ that determines the $L^{2}_{f}(\sigma)$ space they are induced by as well as whether they are expressed as optimization problems over Hermitian or linear operators. We identify which of these quantities satisfy the key classical properties such as data processing, being bounded above and below by 0 and 1 respectively, and the operational interpretations of saturating said extreme values (Theorems \ref{thm:properties-of-maximal-corr-coeffs} and \ref{thm:extreme-values-summary}). We identify which quantum maximal correlation coefficient recovers the Witsenhausen property and establish an algebraic equivalent condition (Item 3 of Theorem \ref{thm:extreme-values-summary}). This generalizes the classical case and identifies a quantum generalization of the notion of a `decomposable distribution' introduced by Ahlswede and G\'{a}cs \cite{Ahlswede-1976a}. Moreover, we establish a family of these maximal correlation coefficients are strong monotones for conversion under local operations, thereby extending Fig.~\ref{fig:classical-parallel-processing} to the quantum setting and generalizing the main result of \cite{Beigi-2013a}, which corresponds to the case $k\in \{0,1\}$ in the following statement.
\begin{tcolorbox}[width=\linewidth, sharp corners=all, colback=white!95!black, boxrule=0pt,frame hidden]
\begin{result}(Thm.~\ref{thm:k-correlation-nec-for-local-processing} Simplified)
    Define $f_{k}(x) \coloneq x^{k}$. If there exists $k \in [0,1]$ such that 
    $$\mu^{\text{Lin}}_{f_{k}}(A:B)_{\rho}~<\mu^{\text{Lin}}_{f_{k}}(A':B')_{\sigma} \ , $$ 
    then $\rho_{AB}^{\otimes n}$ cannot be converted to $\sigma_{A'B'}$  for any $n \in \mbb{N}$ using local operations.
\end{result}
\end{tcolorbox}
\noindent  This allows us to identify many new maximal correlation coefficients equaling one as necessary conditions for a state to have asymptotic common data (Theorem \ref{thm:asymptotic-data}).

In Section \ref{sec:quantum-chi-squared}, we study the quantum $\chi_{f}^{2}$-divergences from \cite{Temme-2010a} via the non-commutative probabilistic framework of $L_{f}^{2}(\sigma)$ spaces. Specifically, for a large class of operator monotone $f$, we establish the non-commutative generalization of \eqref{eq:chi-sq-as-variance}. To the best of our knowledge, this was not known previously and is critical to establishing our results. Critically, this allows us to establish the generic correspondence between quantum $\chi^{2}$-contraction coefficients and quantum maximal correlation coefficients, generalizing \eqref{eq:correlation-to-contraction} to the quantum setting.
\begin{tcolorbox}[width=\linewidth, sharp corners=all, colback=white!95!black, boxrule=0pt,frame hidden]
\begin{result}(Thm.~\ref{thm:correspondence-between-contraction-coeffs-and-max-corr-coeffs} Simplified). For every `standard' operator monotone function $f$, positive, trace-preserving map $\cE_{A \to B}$, and quantum state $\rho_{A}$,
\begin{align*}
        \sqrt{\eta_{\chi^{2}_{f}}}(\cE,\rho) = \mu_{f}(A:B)_{H} \ ,
\end{align*}
where $H_{AB}$ is a Hermitian operator satisfying $H_{A} = \rho_{A}$ and $H_{B} = \cE(\rho_{A})$, i.e. a `relaxed quantum coupling.' Moreover, for $f_{GM} \coloneq \sqrt{x}$, we also have that for any quantum state $\rho_{AB}$, there exists a quantum channel $\cE_{A \to B}$ such that $\eta_{\chi^{2}_{f_{GM}}}(\cE,\rho_{A}) = \mu_{f_{GM}}(A:B)_{\rho}$.
\end{result}
\end{tcolorbox}
\noindent This result unifies the quantum $\chi^{2}$ contraction coefficients and quantum maximal correlation coefficients. The `moreover' statement for $f(x) = \sqrt{x}$ generalizes the correspondence established in \cite[Theorem 18]{Cao-2019a} by removing rank conditions and showing that the identification holds for the whole joint state space. In this case, the correspondence arises because $H_{AB}$ is the standard coupling in quantum Shannon theory--- the channel $\cE_{A \to B}$ applied to the canonical purification of $\rho_{A}$. For other choices of $f$, the canonical purification is replaced by some ``$f$-twisted" Hermitian extension of $\rho_{A}$ that the channel then acts on. Furthermore, and possibly of independent interest, the relaxed quantum couplings constructed for this result may be identified as `quantum states over time' \cite{Leifer_2013,Fullwood_2022} constructed by choosing an operator monotone function $f$ to decide upon a specific non-commutative $L^{2}_{f}(\sigma)$ space (See Remark \ref{rem:QSOT-from-NC-Prob}).

We also establish equivalent conditions under which the $\chi_{f}^{2}$-divergence saturates the data processing inequality that are analogous to those known for the quantum relative entropy. In particular, a seminal result of Petz shows that the quantum relative entropy satisfies $D(\cE(\rho) \Vert \cE(\sigma)) = D(\rho \Vert \sigma)$ if and only if the `Petz recovery map' $\cP_{\cE,\sigma}$ recovers $\rho$, i.e. $(\cP_{\cE,\sigma} \circ \cE)(\rho) = \rho$ while it always holds $(\cP_{\cE,\sigma} \circ \cE)(\sigma) = \sigma$ \cite{petz1986sufficient}. More generally, there exists a family of linear maps $\cS_{f,\cE,\sigma}$ indexed by operator monotone function $f$ such that $(\cS_{f,\cE,\sigma} \circ \cE)(\sigma) = \sigma$ and $\cS_{f_{GM},\cE,\sigma} = \cP_{\cE,\sigma}$. These maps play the same role as the Petz recovery map in our characterization.
\begin{tcolorbox}[width=\linewidth, sharp corners=all, colback=white!95!black, boxrule=0pt,frame hidden]
        \begin{result}(Thm.~\ref{thm:DPI-with-equality}, Simplified)
        For a large class of operator monotone $f$, if $\rho \ll \sigma$ and $\cE$ is a quantum channel, then $\chi^{2}_{f}(\cE(\rho)\Vert\cE(\sigma)) = \chi^{2}_{f}(\rho \Vert \sigma)$ if and only $(\cS_{f,\cE,\sigma} \circ \cE)(\rho) = \rho$.
\end{result}
\end{tcolorbox}
\noindent Prior to this work, to the best of our knowledge, the only case of the above result that was known was for $f$ being the geometric mean, which is a special case of \cite[Theorem 4.6]{Gao-2023-sufficient-fisher} and was established with a distinct proof method.

In Section \ref{sec:computability}, we show that the quantum $\chi_{f}^{2}$ input-dependent contraction coefficients and the relevant quantum maximal correlation coefficients are efficient to compute in a generic sense (Theorems \ref{thm:contraction-coeff-computability} and \ref{thm:f-max-corr-computability}). This in particular implies an efficient method for bounding the mixing time of a channel $\cE_{A \to A}$ with unique fixed point $\pi \in \Density(A)$ under dissimilarity measure $\Delta$, denoted $t^{\Delta}_{\min}(\cE,\delta) \coloneq \min\{n \in \mbb{N}: \max_{\sigma \in \Density(A)} \Delta(\cE^{n}(\sigma), \pi) \leq \delta\}$. 
    \begin{tcolorbox}[width=\linewidth, sharp corners=all, colback=white!95!black, boxrule=0pt,frame hidden, breakable]
    \begin{result}(Thm.~\ref{thm:computable-mixing-times}, Simplified)
    For $\Delta$ being trace distance or relative entropy and any channel $\cE$ with unique, full rank fixed point $\pi$, an upper bound on $t^{\Delta}_{\min}(\cE,\delta)$ can be efficiently determined using Algorithm \ref{alg:contraction-coeff}. Moreover, either the the upper bound of this method will be finite or all well-defined contraction coefficients for $\cE$ are one.
    \end{result}
    \end{tcolorbox}
This result is appealing for a variety of reasons. First, to the best of our knowledge, prior to this work the generic computability of any these quantities was not shown in the quantum setting. In fact, establishing the computability of a specific quantity in the considered family was an open problem of \cite{GZB-preprint-2024}. Second, the `moreover' statement in the result shows that, under the above conditions, any method that is more generally applicable for mixing times of time-homogeneous Markov chains with a full rank fixed point formally relies on more structure than knowledge of the classical description of the channel and the unique fixed point. Third, while not obvious in this simplified form, because we show one can compute the $\chi^{2}_{f}$ input-dependent contraction coefficient for many choices of operator monotone $f$, many of these choices of $f$ can bound the mixing time, and one may interpret using the $\chi^{2}_{f}$ input-dependent contraction coefficient as a spectral gap method for mixing times, our result is the ability to efficiently construct a \textit{family} of spectral gaps, which one could then optimize over choice of $f$ to get the tightest bounds from the family of spectral gaps (See Remark \ref{rem:relation-to-spectral-gap-methods}). Lastly, this result may be appealing as the simple classical method for computing mixing times under trace distance does not generically extend to the quantum setting as the trace distance contraction coefficient can be NP-hard to compute as established in concurrent work \cite{delsol2025computationalaspectstracenorm}.
\subsection{Relation to Previous Work}
The consideration of operator monotone functions and their relation to statistical quantities appears to be first addressed by Petz \cite{Petz-1996a} and later related to quantum Fisher information \cite{Petz-2011a}, a quantity which we do not consider in this work (see \cite{scandi2023quantum} for a recent review on quantum Fisher information). Of particular relevance, \cite{Petz-2011a} introduced a non-commutative extension of variance for Hermitian operators, noted it admitted a data processing inequality, and related it to the quantum Fisher information. Hiai and Petz subsequently generalized the definition for covariance \cite{hiai-2012quasi}, which aligns with this work's definition. As mentioned, Beigi introduced a quantum maximal correlation coefficient \cite{Beigi-2013a}, and Section \ref{sec:q-maximal-correlation-coeff} extends many of the results from that work. We remark that \cite{Beigi-2023maximal-gaussian} analyzed Beigi's maximal correlation coefficient for Gaussian states, whereas this work remains working in finite dimensions. Many of the tools used in Section \ref{sec:q-maximal-correlation-coeff} build on tools and ideas from \cite{Beigi-2013a,Delgosha-2014a}. 

The quantum $\chi^{2}$-divergences, which we will denote $\chi^{2}_{f}$ for relevant operator monotone functions $f$, were introduced in \cite{Temme-2010a} and have been studied subsequently \cite{jenvcova2012reversibility,Cao-2019a,Gao-2023-sufficient-fisher}. Jen\v{c}ov\'{a} established exact conditions for the saturation of the data processing inequality of $\chi^{2}$-divergences in terms of the Petz recovery map \cite[Proposition 4]{jenvcova2012reversibility}. Among other things, \cite{Gao-2023-sufficient-fisher} established approximate recoverability bounds for many quantum $\chi^{2}$-divergences and recovered a subset of the exact case established by Jen\v{c}ov\'{a}. We recover a special case of \cite[Proposition 4]{jenvcova2012reversibility} as a corollary to our new saturation results.

With regards to mixing times, \cite{Temme-2010a} studied mixing times under $\chi^{2}$-divergences. \cite{CARLEN20171810} considers a superset of Petz's inner product spaces, but does not develop the probabilistic interpretation. \cite{Gao-2022a} studies the convergence of continuous and discrete time time-homogeneous Markov chains under the relative entropy, where the latter type are the focus of Section \ref{sec:time-homogeneous-Markov-chains}. In their study, they focus on the power functions as was done in \cite{CARLEN20171810}. Their key lemma bounds a specific $\chi^{2}$-divergence to the relative entropy in an input-dependent manner, which makes it related to results in \cite{GZB-preprint-2024,Hirche-2024a} of which we make use. They also consider bounds on the input-\textit{independent} contraction coefficient of the relative entropy in terms of the input-\textit{independent} contraction coefficient of this specific $\chi^{2}$-divergence. As such, many of the ideas and results of \cite{Gao-2022a} are related, but none seem to be in direct correspondence. In \cite{george2025quantumdoeblincoefficientsinterpretations}, the authors studied quantum Doeblin coefficients and showed one may be used to efficiently bound the mixing time of time-inhomogeneous discrete time quantum Markov chains whenever the quantum Doeblin coefficient is non-zero. 

Finally, \cite{Cao-2019a} also considered the input-dependent contraction coefficients of the quantum $\chi^{2}$-divergences and proposed a family of objects they called maximal correlation coefficients. With the exception of a quantity we denote $\mu^{\text{Lin}}_{GM}$, the maximal correlation coefficients of this work and those in \cite{Cao-2019a} appear to be distinct. This is likely because those proposed in \cite{Cao-2019a} were not defined to satisfy the data processing inequality, which our definitions guarantee through our framework of $L^{2}_{f}(\sigma)$ spaces. As previously alluded to, \cite{Cao-2019a} established $\eta_{\chi^{2}_{GM}}(\cE,\sigma) = \mu_{GM}^{\text{Lin}}(A:B)_{(\id_{A} \otimes \cE)(\psi_{\sigma})}$, but under the assumptions $\sigma$ is full rank, $\cE(\sigma)$ is full rank, and $\cE$ is a quantum channel (completely positive, trace-preserving map), which are significantly stronger assumptions than our Corollary \ref{cor:contraction-for-sandwiched-case} which only requires $\sigma$ is full rank and $\cE$ is a positive, trace-preserving map. Finally, we note that \cite{Cao-2019a} aimed to generalize \eqref{eq:cl-contraction-tensorize} to all quantum $\chi^{2}_{f}$-divergences. For arbitrary channels, they were only able to establish this for $\chi^{2}_{GM}$ where $f_{GM} \coloneq f(x) = x^{1/2}$, though they established it more generally for entanglement-breaking channels. Our methodology arguably shows that the tensorization of $\chi^{2}_{GM}(\cE,\sigma)$ is inherited by the `more fundamental' tensorization property of $\mu_{GM}(\rho_{AB})$. We believe it is fair to say that the variety of improvements we make to results in \cite{Cao-2019a} stem from our systematic approach to analyzing the maximal correlation coefficient and $\chi^{2}$-contraction coefficient through the use of non-commutative $L_{f}^{2}(\sigma)$ spaces.

\section{Petz's Non-Commutative \texorpdfstring{$L^{2}(p)$ Spaces from Operator Monotone Functions}{} }\label{sec:non-commutative-L2-spaces}

In this section, we introduce the non-commutative extensions of the $L^{2}(p)$ space for a probability distribution $p$. These will be induced by a quantum state $\sigma$ and a choice of operator monotone function $f$, so we will ultimately denote these spaces as the $L^{2}_{f}(\sigma)$ spaces. These spaces will be used to unify and analyze the contraction and correlation coefficients considered in this work as highlighted in Fig.~\ref{fig:CIT-vs-QIT}. Many results in this section are minor extensions of results of Petz that may be found in \cite{Petz-1996a,Petz-2011a} and when this is the case we acknowledge it throughout. In Section \ref{sec:Preliminaries} we introduce the basic notation we will use throughout this work. In Section \ref{sec:J-operator}, we identify and establish the relevant properties of a family of linear operators induced by operator monotone functions $f$, which we will use to construct the $L^{2}_{f}(\sigma)$ spaces. In Section \ref{subsec:non-comm-L2-inner-products}, we show how we can construct the $L^{2}_{f}(\sigma)$ spaces via these operators and establish properties of the notion of non-commutative variance for these spaces. 

\subsection{Preliminaries} \label{sec:Preliminaries}
For a finite dimensional Hilbert space $A \cong \mbb{C}^{d}$, we denote $\Pos(A)$ and $\Pd(A)$ as the positive and positive definite operators respectively, which may also be denoted via the L\"{o}wner order, $A \geq B$, e.g. $P \in \Pd(A)$ if and only if $P > 0$. We denote the set of linear operators from $A$ to $B$, $\Lin(A,B)$ and define $\Lin(A) \equiv \Lin(A,A)$. The entry-wise conjugate, transpose, and complex conjugate of $X \in \Lin(A,B)$ are denoted $\ol{X}$, $X^{T}$, and $X^{\ast}$ respectively. A linear transformation from $\Lin(A)$ to $\Lin(B)$ is called a linear `map.' We denote the identity map from $\Lin(A)$ to itself by $\id_{A}$. We denote the Hilbert-Schmidt (HS) inner product $\langle X,Y \rangle \coloneq \Tr[X^{\ast}Y]$. For a given linear map $\Phi: \Lin(A) \to \Lin(A)$, we will make use of inner products induced by the linear map and the HS inner product, $\langle X,Y \rangle_{\Phi} \coloneq \langle X, \Phi(Y) \rangle$. We reserve the notation $\cE^{\ast}$ for the adjoint of a map $\cE$ with respect to the Hilbert-Schmidt inner product. 

A map $\cE_{A \to B}$ is $k$-positive if $(\id_{R} \otimes \cE)(P_{RA}) \geq 0$ for all $P_{RA} \geq 0$ where $R \cong \mbb{C}^{k}$. A map is positive if it is $1$-positive and completely-positive (CP) if it is $k$-positive for all $k \in \mbb{N}$. We denote the set of quantum channels (completely positive, trace-preserving (TP) maps, i.e. CPTP maps) from $\Lin(A)$ to $\Lin(B)$ by $\Channel(A,B)$. Finally, a class of maps related to $1$-positive and $2$-positive maps that will be referred to are ``Schwarz maps." A 1-positive map $\cE$ is a Schwarz map if it satisfies the ``Schwarz inequality":\footnote{Generally Schwarz maps are stated for a chosen ``$C^*$-algebra $\cA \subseteq \Lin(A)$," but we present the special case of using the entire matrix algebra for accessibility to a general quantum information theory audience.}
\begin{align}\label{eq:Schwarz-Inequality}
    \cE(x^{\ast}x) \geq \cE(x)^{\ast}\cE(x) \quad \forall x \in \Lin(A) 
\end{align} For the purposes of this paper, the relevant facts are that all unital $2$-positive maps are Schwarz maps \cite[Proposition 3.3]{paulsen2002completely}, but that Schwarz maps are a more general class.\footnote{In particular, \cite[Proposition 1]{jenvcova2012reversibility} establishes $2$-positivity holds if and only if a generalized notion of the Schwarz inequality for maps holds for the given map.} Thus, for generality, many results will be stated in terms of the adjoint of unital Schwarz maps, which are a class of trace-preserving maps more general than $2$-positive, trace-preserving maps, which are themselves a more general class of maps than quantum channels.

We denote $\Density(A)$, $\Density_{+}(A)$ as the quantum states and positive definite quantum states respectively. Given $\sigma \in \Density$, we define $\supp(\sigma)$ as the space spanned by the support of $\sigma$ and denote the projector on to this support as $\Pi_{\supp(\sigma)}$. We write $\rho \ll \sigma$ if $\supp(\rho) \subseteq \supp(\sigma)$. We denote a perfectly correlated classical state distributed according to distribution $p$ as $\chi^{\vert p}_{XX'} \coloneq \sum_{x} p(x)\dyad{x}_{X} \otimes \dyad{x}_{X'}$.

We introduce some taxonomy on functions. Some naming conventions will only make sense once the monotone metrics are introduced in the subsequent subsection.
\begin{definition}\label{def:function-taxonomy}
    Given a function $f: \mbb{R}_{+} \to \mbb{R}$, we say it is
    \begin{enumerate}
        \item (\textit{Operator Monotone}) $f(A) \leq f(B)$ for all $A,B \in \Pos(A)$ such that $A \leq B$,
        \item (\textit{Normalized}) $f(1) = 1$,
        \item (\textit{Symmetry-Inducing}) $xf(x^{-1}) = f(x)$ for $x \in \mbb{R}_{+}$.
    \end{enumerate}
    Following \cite{Petz-2011a}, we say $f$ is `standard monotone' if it is continuous on $\mbb{R}_{+}$, operator monotone, normalized, and symmetry-inducing. We denote the set of standard monotone functions by $\cM_{\text{St}}$.
\end{definition}
The following operator monotone functions which belong to $\cM_{\text{St}}$ will be of particular interest:
\begin{enumerate}
    \item \textit{Arithmetic mean (AM)}: $f_{AM}(x) = \frac{x+1}{2}$ 
    \item \textit{Harmonic mean (HM)}: $f_{HM}(x) = \frac{2x}{x+1}$
    \item \textit{Logarithmic mean (LM)}: $f_{LM}(x) = \frac{x-1}{\log(x)}$
    \item \textit{Geometric mean (GM)}: $f_{GM}(x) =\sqrt{x}$, 
\end{enumerate}
where we remark that $f_{LM}$ is normalized is by continuous extension. We also note the useful relation
\begin{align}\label{eq:ordering-of-means}
    f_{HM} \leq f_{GM} \leq f_{LM} \leq f_{AM} \ ,
\end{align}
which is within the ordering \cite{Kubo-1980a}
\begin{align}\label{eq:ordering-of-standard-monotones}
    f_{HM} \leq f \leq f_{AM} \quad  \forall f \in \cM_{\text{St}} \ .
\end{align}
Beyond the above functions, we will primarily consider the following family of normalized, operator monotone functions:
\begin{align}\label{eq:power-functions}
    f_{k}(x) \coloneq x^{k} \quad k \in [0,1] \ .
\end{align}
It is straightforward to verify from the definition of symmetry-inducing that $f_{k}(x)$ is symmetry inducing only if $k = 1/2$.

Throughout this work we use the standard information theory conventions $0f(0/0) \coloneq 0$, $0\cdot \infty \coloneq 0$,
\begin{align}
    f(0^{+}) \coloneq \lim_{x \downarrow 0} f(x) \, , \quad \text{and} \quad f'(+\infty) \coloneq \lim_{x \to +\infty} \frac{f(x)}{x} \ . 
\end{align}
Following \cite{Hiai-2017a}, we define the perspective function of continuous $f$ extended to $[0,+\infty) \times [0,+\infty)$ by
\begin{equation}\label{eq:perspective-function}
    P_{f}(x,y) \coloneq \begin{cases}
        yf(x/y) & x,y > 0 \, , \\
        yf(0^{+}) & x = 0 \, , \\
        xf'(+\infty) & y = 0 
    \end{cases}
\end{equation}
Note that if $f:\mbb{R}_{+} \to \mbb{R}_{+}$ is a monotonic function such that $f(0+) \geq 0$, then $P_{f}$ is a non-negative function. Moreover if $f_{1},f_{2}$ are monotonic functions such that $f_{1} \leq f_{2}$, then $0 \leq P_{f_{1}} \leq P_{f_{2}}$.

\subsection{The Linear Operator \texorpdfstring{$\mbf{J}_{f,\sigma}$}{} and its Properties}\label{sec:J-operator}
In this subsection we introduce the family of linear operators induced by a choice of operator monotone function $f$ and quantum state $\sigma$, $\mbf{J}_{f,\sigma}$, and establish relevant properties for this work. These linear operators will be used to construct our $L^{2}_{f}(\sigma)$ spaces. These operators have been studied previously--- see in particular~\cite{Petz-2011a} and the review in \cite[Section IV]{scandi2023quantum}. For clarity, we briefly motivate our approach. 

As discussed in the introduction, the classical versions of the maximal correlation coefficient and the $\chi^{2}$-divergence can be defined in terms of expectation and variance and thus can be addressed with operator-theoretic methods via the $L^{2}(p)$ space (See Eqs.~\eqref{eq:intro-max-corr} and \eqref{eq:chi-sq-as-variance} and subsequent discussion). We thus want non-commutative extensions of the expectation, variance, and $L^{2}(p)$ space. Classically, clearly one way of defining expectation and variance is through the $L_{2}$-inner product on (real-valued) functions $\cF(\cX)$ with respect to a distribution $p$: $\langle f, g \rangle_{p} \coloneq \sum_{x \in \cX} p(x)f(x)g(x)$. One may then define the expectation of $f$ and covariance between $f$ and $g$ through this inner product. Namely, 
\begin{align}
    \mbb{E}_{p}[f] \coloneq \langle \mbf{1} , f \rangle_{p} \quad \text{Cov}_{p}[f,g] \coloneq \langle f - \mbb{E}_{p}[f], g - \mbb{E}_{p}[g]\rangle_{p} \ , 
\end{align} 
where $\mbf{1}$ is the unit function $\mbf{1}(x) = 1$ for all $x \in \cX$. Our goal is to extend this to the non-commutative setting. To this end, we observe that one may relate the $L_{2}$-inner product with respect to $p$ to the Euclidean inner product via a linear operator $\hat{J}_{p}: \mbb{R}^{\vert \cX \vert} \to \mbb{R}^{\vert \cX \vert}$: $\langle f , g \rangle_{p} = \langle f , \hat{J}_{p}(g) \rangle_{\text{Euc}}$ where $\hat{J}_{p}(g)[x] = p(x)g(x)$. We thus wish to generalize the linear operator $\hat{J}_{p}: \mbb{R}^{\vert \cX \vert} \to \mbb{R}^{\vert \cX \vert}$ to $\mbf{J}_{\sigma}: \Lin(A) \to \Lin(A)$ for quantum state $\sigma \in \Density(A)$. 
The following definition does exactly this by also making use of an operator monotone function and the standard left and right multiplication operators:
\begin{align}\label{eq:left-and-right-mult-operators}
L_{W}(X) \coloneq WX \quad R_{W}(X) \coloneq XW \quad \forall X,W \in \Lin(A) \ . 
\end{align}
\begin{definition}\label{def:J-operator}
    Let $f$ be operator monotone and $\sigma \in \Density(A)$ with  spectral decomposition $\sigma = \sum_{i \in [d_{A}]} \lambda_{i}Q_{i}$ such that $f(\lambda_{i}/\lambda_{j})$ is defined for all $i,j \in [d_{A}]$ so that we have
    \begin{align}\label{eq:func-expansion}
        f(L_{\sigma}R_{\sigma}^{-1}) = \sum_{i,j \in [d_{A}]} f(\lambda_{i}/\lambda_{j})L_{Q_{i}}R_{Q_{j}} \ .
    \end{align}
    We then define the linear operator
    \begin{align}\label{eq:J-operator-func-expansion}
        \mbf{J}_{f,\sigma} \coloneq f(L_{\sigma}R_{\sigma}^{-1})R_{\sigma} = \sum_{i,j \in [d_{A}]} P_{f}(\lambda_{i},\lambda_{j})L_{Q_{i}} R_{Q_{j}} \ , 
    \end{align}
    where the equality in \eqref{eq:J-operator-func-expansion} makes use of \eqref{eq:func-expansion} and the perspective function as given in \eqref{eq:perspective-function}.
\end{definition}
\begin{remark}
    \eqref{eq:func-expansion} follows from functional calculus for linear maps. Works on quantum information theory generally omit the derivation \cite{HIAI_2011,Hiai-2017a}, so we provide a proof in Appendix \ref{app:func-expansion}.
\end{remark}

From the $\mbf{J}_{f,\sigma}$ operators, we induce two inner products:
\begin{align}\label{eq:inner-product definitions}
    \langle X, Y \rangle_{f,\sigma} \coloneq \langle X, \mbf{J}_{f,\sigma}(Y) \rangle \quad \langle X, Y \rangle_{f,\sigma}^{\star} \coloneq \langle X, \mbf{J}^{-1}_{f,\sigma}(Y) \rangle \ .
\end{align}
The former inner product will be used to induce the $L^{2}_{f}(\sigma)$ spaces motivated above in Section \ref{subsec:non-comm-L2-inner-products}. The latter inner product however is relevant for monotone metrics and the data processing inequality as was first established by Petz \cite{Petz-1996a} as the following definition and fact summarize.
\begin{definition}
    A metric\footnote{Petz defined a metric on the linear operators $\gamma_{\sigma}: \Lin(A) \times \Lin(A) \to \mbb{C}$ to have to satisfy the following properties:
    \begin{enumerate}[itemsep=0pt]
        \item $(X,Y) \mapsto \gamma_{\sigma}(X,Y)$ is sesquilinear,
        \item $\gamma_{\sigma}(X,X) \geq 0$ with equality if and only if $X = 0$,
        \item $\sigma \mapsto \gamma_{\sigma}(X,X)$ is continuous on $\sigma \in \Density_{+}(X)$ for all $X \in \Lin(A)$.
    \end{enumerate}} $\gamma_{\sigma}(X,Y): \Lin(A) \times \Lin(A) \to \mbb{C}$ is monotone if 
    \begin{align}\label{eq:metric-monotonicity}
        \gamma_{\cE(\sigma)}(\cE(X),\cE(X)) \leq \gamma_{\sigma}(X,X) \quad  \forall (\cE,\sigma,X) \in \Channel(A,B) \times \Density_{+}(A) \times \Lin(A) \, . 
    \end{align}
\end{definition}
\begin{fact}\label{fact:monotone-metric} \cite[Theorems 4 and 5]{Petz-1996a}
     A metric $\gamma_{\sigma}$ is monotone if and only if there exists an operator monotone function $f:\mbb{R}_{+} \to \mbb{R}_{\geq 0}$ such that
     \begin{align}
     \gamma_{\sigma}(X,Y) = \langle X , \mbf{J}^{-1}_{f,\sigma}(Y) \rangle = \langle X, Y \rangle^{\star}_{f,\sigma} \, . 
     \end{align}
\end{fact}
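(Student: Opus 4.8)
The plan is to prove the two implications separately, starting from the common representation $\gamma_\sigma(X,Y) = \langle X, K_\sigma(Y)\rangle$. Here sesquilinearity and positive-definiteness (properties~1--2 of a metric) guarantee that, for each $\sigma \in \Density_+(A)$, the map $K_\sigma : \Lin(A) \to \Lin(A)$ is a positive-definite superoperator; the content of the Fact is that monotonicity of $\gamma$ forces $K_\sigma = \mbf{J}^{-1}_{f,\sigma}$ for some operator monotone $f$, and conversely. Throughout I would phrase monotonicity at the level of superoperators: the inequality $\gamma_{\cE(\sigma)}(\cE(X),\cE(X)) \leq \gamma_\sigma(X,X)$ for all $X$ is equivalent to $\cE^* K_{\cE(\sigma)} \cE \leq K_\sigma$ in the L\"{o}wner order on superoperators.

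For sufficiency ($\Leftarrow$), fix operator monotone $f : \mbb{R}_+ \to \mbb{R}_{\geq 0}$ and take $K_\sigma = \mbf{J}^{-1}_{f,\sigma}$, which is positive definite since $\mbf{J}_{f,\sigma}$ has the strictly positive eigenvalues $P_f(\lambda_i,\lambda_j)$ on the eigenoperators $L_{Q_i}R_{Q_j}$. The task is the superoperator inequality $\cE^* \mbf{J}^{-1}_{f,\cE(\sigma)} \cE \leq \mbf{J}^{-1}_{f,\sigma}$ for every channel $\cE$. I would invoke L\"{o}wner's integral representation to write $f$ as a nonnegative mixture of the elementary operator monotone functions $x \mapsto \frac{x(1+s)}{x+s}$, whose perspectives are parallel-sum (harmonic-mean) building blocks of the form $L_\sigma R_\sigma (L_\sigma + s R_\sigma)^{-1}$; monotonicity then reduces to each building block, where it follows from the Kadison--Schwarz inequality applied to the unital map $\cE^*$. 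Equivalently, one may phrase this as the joint operator concavity of the non-commutative perspective $P_f(L_\sigma,R_\sigma)$, from which data processing for the induced form is standard.

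The substantive direction is necessity ($\Rightarrow$). First I would extract \emph{unitary covariance}: applying monotonicity to the unitary channel $\mathrm{Ad}_U : X \mapsto UXU^*$ and to its inverse $\mathrm{Ad}_{U^*}$ --- both in $\Channel(A,A)$ --- and combining the two inequalities gives the equality $K_{U\sigma U^*} = \mathrm{Ad}_U \circ K_\sigma \circ \mathrm{Ad}_{U^*}$. Writing $\sigma = \sum_i \lambda_i Q_i$ in its eigenbasis and testing this against diagonal phase unitaries and permutations (and invoking the continuity axiom) forces $K_\sigma$ to be diagonal in the matrix-unit basis, $K_\sigma(E_{ij}) = c(\lambda_i,\lambda_j)\,E_{ij}$, for a single bivariate function $c$; consistency of $c$ across dimensions comes from testing monotonicity against isometric embeddings and the corresponding partial traces. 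A homogeneity argument---extending the form to the positive cone and using monotonicity under dilations---shows $c$ is homogeneous of degree $-1$, so that $c(x,y) = \big(y f(x/y)\big)^{-1}$ for some $f$; this is exactly the symbol of $\mbf{J}^{-1}_{f,\sigma}$.

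It remains to pin down the constraint monotonicity imposes on $f$, and this is the main obstacle. The aim is to show that, after the reductions above, the surviving content of $\cE^* K_{\cE(\sigma)} \cE \leq K_\sigma$ is precisely that $f$ is operator monotone. I would reduce to a minimal test family---a two-level state $\sigma$ together with the coarse-graining / partial-trace channels on a $2\times 2$ block---and translate the resulting scalar inequalities on $c$ into the $2\times 2$ positive-semidefinite (Pick--L\"{o}wner determinant) criterion characterizing operator monotone functions; L\"{o}wner's theorem then upgrades this to a genuine operator monotone $f$ and closes the equivalence. I expect the delicate points to be (i) establishing that a single function $c$ governs all dimensions simultaneously, and (ii) proving that monotonicity under only the restricted test channels already forces \emph{operator} monotonicity of $f$, not merely ordinary monotonicity---both resting on the interplay between the semidefinite structure of the channel constraints and L\"{o}wner's characterization of operator monotone functions.
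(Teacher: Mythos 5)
The paper offers no proof of this statement: it is imported as a Fact directly from \cite{Petz-1996a} (Theorems 4 and 5), so your proposal must be measured against Petz's original argument rather than anything in the text. Your sufficiency direction is essentially Petz's route: decompose $f$ via the L\"{o}wner integral representation $f(t) = \alpha + \beta t + \int_0^\infty \frac{t(1+s)}{t+s}\, d\mu(s)$ with $\alpha,\beta \geq 0$, handle the linear pieces by the Schwarz inequality for the unital map $\cE^{\ast}$, and handle the harmonic blocks by the transformer inequality for parallel sums. One bookkeeping caveat: the integral decomposition applies to $\mbf{J}_{f,\sigma}$, not to its inverse, and inverses do not distribute over sums; so you should prove $\cE \circ \mbf{J}_{f,\sigma} \circ \cE^{\ast} \leq \mbf{J}_{f,\cE(\sigma)}$ blockwise and then pass to the form $\cE^{\ast} \circ \mbf{J}^{-1}_{f,\cE(\sigma)} \circ \cE \leq \mbf{J}^{-1}_{f,\sigma}$ via the equivalence recorded in \eqref{eq:operator-form-of-DPI} (Petz's Lemma 2). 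Your unitary-covariance and homogeneity reductions in the necessity direction (Morozova--Chentsov function $c(\lambda_i,\lambda_j)$, degree $-1$ homogeneity via tensoring with maximally mixed states and partial traces) are also sound and standard.

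The genuine gap is the final step of necessity. You claim the surviving constraints can be tested on a two-level state with $2\times 2$ coarse-graining channels, translated into ``the $2\times 2$ positive-semidefinite (Pick--L\"{o}wner determinant) criterion characterizing operator monotone functions,'' with L\"{o}wner's theorem then ``upgrading'' the conclusion. This fails: positivity of $2\times 2$ L\"{o}wner matrices is precisely $2$-monotonicity, which is strictly weaker than operator monotonicity --- there exist functions that are matrix monotone of order $2$ but not of order $3$ (Hansen--Ji--Tomiyama-type examples), and L\"{o}wner's theorem provides no passage from order $2$ to all orders. Operator monotonicity requires positivity of the $n \times n$ L\"{o}wner matrices for \emph{every} $n$, hence test states and channels in every dimension; this is exactly why the monotone-metric hypothesis is a family over all matrix algebras and why Petz's necessity argument is dimension-uniform. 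His actual route identifies $\mbf{J}_{f,\sigma}$ as a Kubo--Ando-type operator mean of $L_{\sigma}$ and $R_{\sigma}$, extracts the transformer inequality with arbitrary compressions in arbitrary dimension from monotonicity, and invokes the Kubo--Ando representation theorem to conclude $f$ is operator monotone. You correctly flagged this as your delicate point (ii), but the mechanism you propose for it is mathematically incorrect as stated; the fix is to run the L\"{o}wner test at all orders $n$ simultaneously (or to follow the Kubo--Ando mean route), not to hope for an upgrade from the $2\times 2$ case.
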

\begin{remark}
    In \cite{Petz-1996a}, it is stated as $f: \mbb{R}_+ \to \mbb{R}$ such that $f(0^{+}) \geq 0$. However, by monotonicity, this implies the function is non-negative valued on $\mbb{R}_{+}$.
\end{remark}

The above fact establishes that we may identify monotone metrics with the corresponding linear operator  $\mbf{J}_{f,\sigma}^{-1}$ and inner product, $\langle \cdot, \cdot \rangle_{f,\sigma}^{\star}$. Note this shows all monotone metrics are non-commutative weighted variants of the HS inner product, which will be a common idea in this work. Moreover, using this identification and that $\cE^{\ast}$ denotes the adjoint map with respect to the HS inner product, one may directly re-express the monotonicity property \eqref{eq:metric-monotonicity} as $\langle X , (\cE^{\ast} \circ \mbf{J}_{f,\sigma}^{-1} \circ \cE) X \rangle \leq \langle X , X \rangle$ for all $X \in \Lin(A)$. Note that by the non-negativity property of a metric, we may conclude $\cE^{\ast} \circ \mbf{J}_{\sigma}^{-1} \circ \cE$ is a positive definite operator on the Hilbert space $(\Lin(A),\langle \cdot, \cdot \rangle)$. We may then re-express monotonicity as a positive semidefinite relation, 
\begin{align}\label{eq:operator-form-of-DPI}
    \cE^{\ast} \circ \mbf{J}_{f,\cE{\sigma}}^{-1} \circ \cE \leq \mbf{J}_{f,\sigma}^{-1} \iff \cE \circ \mbf{J}_{f,\sigma} \circ \cE^{\ast} \leq \mbf{J}_{f,\cE(\sigma)} \ , 
\end{align}
where the equivalence is \cite[Lemma 2]{Petz-1996a}. 

The above positive semidefinite relation in fact holds more generally. Namely, as noted in \cite{Petz-1996a} it follows from \cite{Petz-1986a}, that it holds for any map $\cE$ that is the adjoint of a unital Schwarz map.
\begin{proposition}\cite{Petz-1996a} \label{prop:DPI-for-J-op}
    A monotone metric $\gamma_{f,\sigma}(A,B)$ for $\sigma \in \Density_{+}$ is monotonic for any map $\cE$ that is the adjoint of a unital Schwarz map:
    \begin{align}\label{eq:DPI-expression-for-Jfsigma}
        \cE^{\ast} \circ \mbf{J}_{f,\cE(\sigma)}^{-1} \circ \cE \leq \mbf{J}_{f,\sigma}^{-1} \quad \cE \circ \mbf{J}_{f,\sigma} \circ \cE^{\ast} \leq \mbf{J}_{f,\cE(\sigma)} \ .
    \end{align}
\end{proposition}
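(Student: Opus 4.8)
The plan is to establish the inverse-free inequality $\cE\circ\mbf{J}_{f,\sigma}\circ\cE^{\ast}\leq\mbf{J}_{f,\cE(\sigma)}$ and then obtain the other form from the stated equivalence in \eqref{eq:operator-form-of-DPI}; working with the $\mbf{J}$-form avoids inverting the possibly rank-deficient operator $\mbf{J}_{f,\cE(\sigma)}$. Throughout, operator inequalities are on the Hilbert space $(\Lin(A),\langle\cdot,\cdot\rangle)$, where $L_{\sigma}$ and $R_{\sigma}$ are positive definite since $\sigma\in\Density_{+}$, because $\langle X,L_{\sigma}X\rangle=\Tr[X^{\ast}\sigma X]\geq0$ and likewise for $R_{\sigma}$. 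First I would reduce to an elementary building block. By the integral (Löwner) representation of operator monotone functions, $f$ decomposes into a non-negative affine term plus an integral of the elementary functions $x\mapsto x/(x+s)$ against a positive measure. Since $f\mapsto P_{f}$ is linear and $\mbf{J}_{f,\sigma}=\sum_{i,j}P_{f}(\lambda_i,\lambda_j)L_{Q_i}R_{Q_j}$, this transfers to a representation of $\mbf{J}_{f,\sigma}$ as a positive combination of $R_{\sigma}$, $L_{\sigma}$, and the parallel-sum operators $K_{s,\sigma}\coloneq(L_{\sigma}^{-1}+sR_{\sigma}^{-1})^{-1}$ (the perspective of the building block yields exactly the eigenvalues $(\lambda_i^{-1}+s\lambda_j^{-1})^{-1}$). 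As positive combinations preserve the Löwner order and $\cE(\cdot)\cE^{\ast}$ is linear, it suffices to treat each family of building blocks.

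The affine blocks and the core of the parallel-sum case all rest on a single pair of one-sided Schwarz inequalities: writing $\Phi\coloneq\cE^{\ast}$ for the unital Schwarz map,
\begin{align*}
\cE\circ L_{\sigma}\circ\cE^{\ast}\leq L_{\cE(\sigma)}\,,\qquad \cE\circ R_{\sigma}\circ\cE^{\ast}\leq R_{\cE(\sigma)}\,.
\end{align*}
I would prove these by evaluating the quadratic form: moving $\cE$ across the inner product gives $\langle Y,(\cE L_{\sigma}\cE^{\ast})Y\rangle=\Tr[\sigma\,\Phi(Y)\Phi(Y)^{\ast}]$ and $\langle Y,L_{\cE(\sigma)}Y\rangle=\Tr[\sigma\,\Phi(YY^{\ast})]$, so the claim becomes $\Tr[\sigma(\Phi(YY^{\ast})-\Phi(Y)\Phi(Y)^{\ast})]\geq0$. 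This holds since $\sigma\geq0$ and $\Phi(YY^{\ast})\geq\Phi(Y)\Phi(Y)^{\ast}$, which is the Schwarz inequality \eqref{eq:Schwarz-Inequality} applied at $x=Y^{\ast}$ combined with the $\ast$-preservation $\Phi(Y^{\ast})=\Phi(Y)^{\ast}$ of the positive map $\Phi$; the $R$-case is identical with $x=Y$. This is exactly the point where only the Schwarz property, and not full (complete) positivity, enters, which is what enables the generalization to adjoints of unital Schwarz maps.

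I would then combine these through the variational characterization of the parallel sum, $\langle X,K_{s,\sigma}X\rangle=\inf_{X=U+V}(\langle U,L_{\sigma}U\rangle+s^{-1}\langle V,R_{\sigma}V\rangle)$. Given any decomposition $X=U'+V'$, the pair $(\cE^{\ast}U',\cE^{\ast}V')$ decomposes $\cE^{\ast}X$, so the one-sided inequalities yield
\begin{align*}
\langle \cE^{\ast}X,K_{s,\sigma}\cE^{\ast}X\rangle\leq\langle U',(\cE L_{\sigma}\cE^{\ast})U'\rangle+s^{-1}\langle V',(\cE R_{\sigma}\cE^{\ast})V'\rangle\leq\langle U',L_{\cE(\sigma)}U'\rangle+s^{-1}\langle V',R_{\cE(\sigma)}V'\rangle\,.
\end{align*}
Taking the infimum over decompositions $X=U'+V'$ gives $\langle X,\cE K_{s,\sigma}\cE^{\ast}X\rangle\leq\langle X,K_{s,\cE(\sigma)}X\rangle$, i.e. $\cE K_{s,\sigma}\cE^{\ast}\leq K_{s,\cE(\sigma)}$. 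Integrating this together with the two affine cases against the representing measure produces $\cE\circ\mbf{J}_{f,\sigma}\circ\cE^{\ast}\leq\mbf{J}_{f,\cE(\sigma)}$, after which the equivalence in \eqref{eq:operator-form-of-DPI} delivers the inverse form.

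The main obstacle I anticipate is rank deficiency of $\cE(\sigma)$: although $\sigma\in\Density_{+}$, the trace-preserving positive map $\cE$ need not preserve full rank, so $\mbf{J}_{f,\cE(\sigma)}$ can be singular and its inverse must be read on $\supp(\cE(\sigma))$. Proving the inverse-free form first sidesteps this, but passing back to the $\mbf{J}^{-1}$ form via \eqref{eq:operator-form-of-DPI} requires checking that the boundary conventions of the perspective function \eqref{eq:perspective-function} (the $f(0^{+})$ and $f'(+\infty)$ terms) correctly encode the singular directions, and that the endpoint contributions of the integral representation are accounted for. A secondary technical point is justifying the interchange of the infimum with the integral over $s$, which follows from monotone/dominated convergence once the representing measure is fixed.
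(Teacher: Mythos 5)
Your proposal is correct, but a comparison with ``the paper's proof'' is moot: the paper does not prove Proposition~\ref{prop:DPI-for-J-op} at all --- it imports it from Petz \cite{Petz-1996a}, who in turn derives it from the quasi-entropy machinery of \cite{Petz-1986a}. What you have written is essentially a self-contained reconstruction of that classical argument: the L\"{o}wner integral representation reduces $\mbf{J}_{f,\sigma}$ to positive combinations of $L_{\sigma}$, $R_{\sigma}$, and weighted parallel sums; the one-sided inequalities $\cE \circ L_{\sigma} \circ \cE^{\ast} \leq L_{\cE(\sigma)}$ and $\cE \circ R_{\sigma} \circ \cE^{\ast} \leq R_{\cE(\sigma)}$ follow from the Schwarz inequality \eqref{eq:Schwarz-Inequality} for $\Phi = \cE^{\ast}$ (your quadratic-form computation $\Tr[\sigma(\Phi(YY^{\ast}) - \Phi(Y)\Phi(Y)^{\ast})] \geq 0$, using $\ast$-preservation of the positive map, is exactly where Schwarz rather than complete positivity enters, matching the generality claimed in the proposition); and the Anderson--Duffin variational formula pushes the inequality through each parallel-sum block. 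Two minor remarks. First, a bookkeeping slip: the perspective of $g_{s}(x) = x/(x+s)$ has eigenvalues $\lambda_{i}\lambda_{j}/(\lambda_{i} + s\lambda_{j}) $, which is $(R_{\sigma}^{-1} + sL_{\sigma}^{-1})^{-1}$ rather than your $(L_{\sigma}^{-1}+sR_{\sigma}^{-1})^{-1}$; this is harmless since the argument is symmetric in $L$ and $R$. Second, your anticipated obstacle about interchanging the infimum with the integral over $s$ is a non-issue: the infimum is used only to prove each fixed-$s$ block inequality $\cE \circ K_{s,\sigma} \circ \cE^{\ast} \leq K_{s,\cE(\sigma)}$, and one then integrates an already-established family of operator inequalities against a positive measure, which in finite dimensions needs no interchange. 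Your caveat about rank deficiency of $\cE(\sigma)$ is, by contrast, genuine and correctly handled: proving the inverse-free form first and reading $\mbf{J}_{f,\cE(\sigma)}^{-1}$ as the pseudoinverse of \eqref{eq:J-pseudoinverse} on $\supp(\cE(\sigma))$, then invoking the abstract equivalence underlying \eqref{eq:operator-form-of-DPI} (\cite[Lemma~2]{Petz-1996a}, which is a statement about positive operators independent of the class of maps), is the right resolution.
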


\paragraph{Hadamard Product Representation} To establish further properties of $\mbf{J}_{f,\sigma}$ it will be useful to know the following that follows immediately from \eqref{eq:J-operator-func-expansion} by writing an operator $X$ in the eigenbasis of $\sigma$. \footnote{This identification was observed in \cite{Petz-2011a} without identification with the perspective function. It is also implicitly used in \cite{Lesniewski-1999a}.} For an operator monotone (continuous) function $f:\mbb{R}_{+} \to \mbb{R}_{+}$,
\begin{align}\label{eq:Hadamard-prod-form}
    \mbf{J}_{f,\sigma}X = Z_{f,\sigma} \odot X \quad \text{and} \quad \mbf{J}^{-1}_{f,\sigma}X = W_{f,\sigma} \odot X \ , 
\end{align}
where $\odot$ is the Hadamard product, the basis for $i,j$ is determined by the eigenvectors of $\sigma$ and $(Z_{f,\sigma})_{ij} = P_{f}(\lambda_{i},\lambda_{j})$, $(W_{f,\sigma})_{ij} = P_{f}(\lambda_{i},\lambda_{j})^{-1}$. The inverse can be extended in the pseudoinverse sense:
\begin{align}\label{eq:J-pseudoinverse}
    [\mbf{J}_{f,\sigma}^{-1}X]_{ij} = \begin{cases}
        P_{f}(\lambda_{i},\lambda_{j})^{-1} X_{ij} & P_{f}(\lambda_{i},\lambda_{j}) \neq 0 \\
        0 & \text{o.w.}
    \end{cases}
\end{align}
More generally, the real-valued powers $p \in (-\infty,\infty)$ of the operator $\mbf{J}_{f,\sigma}$ can be expressed in this form: 
\begin{align}\label{eq:general-Hadamard-prod-form}
[(\mbf{J}_{f,\sigma}^{p})X]_{ij} = P_{f}(\lambda_{i},\lambda_{j})^{p} X_{ij} \quad \forall i,j \ , 
\end{align}
where we define the negative powers in the pseudoinverse sense. Note this tells us that $\mbf{J}_{f,\sigma}^{p}$ are Hadamard maps. Since a Hadamard map is CP if and only $Y$ is positive semidefinite \cite[Proposition 4.17]{WatrousBook}, this gives us a means to know when $\mbf{J}_{f,\sigma}$ is CP or not. In particular, as claimed in \cite{Petz-2011a} and detailed in \cite[Section IV.C]{scandi2023quantum} the following is known.
\begin{proposition}\cite{Petz-2011a,scandi2023quantum}\label{prop:nec-cond-for-CP-ness}
    For $f \in \cM_{\text{St}}$, $\mbf{J}_{f,\sigma}$ is completely positive only if $f \leq \sqrt{x}$ and $\mbf{J}_{f,\sigma}^{-1}$ is completely positive only if $f \geq \sqrt{x}$.
\end{proposition}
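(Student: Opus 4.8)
The plan is to exploit the Hadamard-product representation of $\mbf{J}_{f,\sigma}$ recorded in \eqref{eq:Hadamard-prod-form}--\eqref{eq:general-Hadamard-prod-form} together with the characterization of completely positive Hadamard maps. Recall that in the eigenbasis of $\sigma$ the map $\mbf{J}_{f,\sigma}$ acts as the Schur multiplier $X \mapsto Z_{f,\sigma} \odot X$ with $(Z_{f,\sigma})_{ij} = P_{f}(\lambda_{i},\lambda_{j})$, and $\mbf{J}_{f,\sigma}^{-1}$ as $X \mapsto W_{f,\sigma}\odot X$ with $(W_{f,\sigma})_{ij} = P_{f}(\lambda_{i},\lambda_{j})^{-1}$. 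By \cite[Proposition 4.17]{WatrousBook}, a Hadamard map is completely positive if and only if its defining matrix is positive semidefinite, so the whole question reduces to deciding positive semidefiniteness of $Z_{f,\sigma}$ and of $W_{f,\sigma}$.

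First I would reduce to the qubit case. Since a positive semidefinite matrix has all of its principal submatrices positive semidefinite, it suffices to pick any two eigenvalues $\lambda_{i} = a$ and $\lambda_{j} = b$ and examine the corresponding $2\times 2$ principal submatrix; equivalently, one may simply take $\sigma$ to be a qubit state with eigenvalues $a,b > 0$, in which case $Z_{f,\sigma}$ is exactly this block. Using that $f$ is normalized gives the diagonal entries $P_{f}(a,a) = a f(1) = a$ and $P_{f}(b,b) = b$, while using that $f$ is symmetry-inducing gives $P_{f}(a,b) = b f(a/b) = a f(b/a) = P_{f}(b,a)$, so the block is the real symmetric matrix with diagonal $(a,b)$ and off-diagonal $b f(a/b)$. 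Its determinant is $ab - b^{2} f(a/b)^{2}$, which is nonnegative exactly when $f(a/b)^{2} \leq a/b$. Writing $t = a/b$ and recalling $f \geq 0$, positive semidefiniteness of this block is equivalent to $f(t) \leq \sqrt{t}$.

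To finish the statement for $\mbf{J}_{f,\sigma}$ I would run the quantifier the right way: if $\mbf{J}_{f,\sigma}$ is completely positive, then for every qubit state $\sigma$ the above block is positive semidefinite, hence $f(t) \leq \sqrt{t}$ for the ratio $t$ realized by $\sigma$; since for each $t > 0$ there is a qubit state (e.g. with eigenvalues $\tfrac{t}{1+t}, \tfrac{1}{1+t}$) realizing that ratio, we obtain $f(x) \leq \sqrt{x}$ on all of $\mbb{R}_{+}$. The claim for $\mbf{J}_{f,\sigma}^{-1}$ is identical except that the $2\times 2$ block now has diagonal $(1/a,1/b)$ and off-diagonal $1/\bigl(b f(a/b)\bigr)$; its determinant condition flips to $f(a/b)^{2} \geq a/b$, giving the necessary condition $f \geq \sqrt{x}$.

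The computations above are routine; the only points that need care are (i) invoking that a positive semidefinite matrix forces every principal submatrix---in particular every $2\times 2$ block---to be positive semidefinite, so that the qubit reduction is legitimate, and (ii) the quantifier handling in the last step, namely that complete positivity of $\mbf{J}_{f,\sigma}$ across all states $\sigma$ (equivalently, across all eigenvalue ratios) is what promotes the pointwise determinant inequality to the global inequality $f \leq \sqrt{x}$ on $\mbb{R}_{+}$. One should also note that for $\sigma$ full rank all $P_{f}(\lambda_{i},\lambda_{j}) > 0$, so the entries of $W_{f,\sigma}$ are genuine reciprocals and the $\mbf{J}_{f,\sigma}^{-1}$ computation is well posed.
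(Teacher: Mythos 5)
Your argument is correct, and it is precisely the route the paper itself indicates: it represents $\mbf{J}_{f,\sigma}^{\pm 1}$ as Hadamard (Schur-multiplier) maps via \eqref{eq:Hadamard-prod-form}, invokes \cite[Proposition 4.17]{WatrousBook} to reduce complete positivity to positive semidefiniteness of the defining matrix, and then defers the remaining computation to \cite{Petz-2011a,scandi2023quantum} --- the computation being exactly your $2\times 2$ principal-minor check $P_{f}(a,b)^{2} \leq P_{f}(a,a)P_{f}(b,b) = ab$, i.e. $f(t)^{2} \leq t$, with normalization and the symmetry-inducing property used just as you use them. Your care with the two quantifier points (every ratio $t>0$ is realized by some qubit $\sigma$, and strict positivity of $P_{f}(\lambda_i,\lambda_j)$ for full-rank $\sigma$ so that $W_{f,\sigma}$ is well defined) is exactly what makes the cited sketch a complete proof.
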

Beyond the property above, we make extensive use of the Hadamard product representation of these operators throughout this work, including using it to show it is algorithmically efficient to compute $\mbf{J}_{f,\sigma}$ and the inner product $\langle \cdot , \cdot \rangle_{\mbf{J}_{f,\sigma}^{p}}$ on an arbitrary operator, which we utilize in Section \ref{sec:computability}. It also allows us to see $\mbf{J}_{f,\sigma}^{p} = \sum_{i,j \in [d_{A}]} P_{f}(\lambda_{i},\lambda_{j})^{p}L_{Q_{i}}R_{Q_{j}}$. A direct calculation using this expression establishes the following to which we later appeal.
\begin{proposition}\label{prop:J-operator-self-adjoint}
    For operator monotone $f$, $\sigma \in \Density_{+}$, and $p \in (-\infty,\infty)$, $\mbf{J}^{p}_{f,\sigma}$ is self-adjoint with respect to the HS inner product.
\end{proposition}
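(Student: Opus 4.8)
The plan is to verify self-adjointness directly from the explicit operator expansion
\begin{align*}
    \mbf{J}^{p}_{f,\sigma} = \sum_{i,j \in [d_{A}]} P_{f}(\lambda_{i},\lambda_{j})^{p}\, L_{Q_{i}}R_{Q_{j}} \ ,
\end{align*}
which is available from \eqref{eq:general-Hadamard-prod-form}. Self-adjointness with respect to the HS inner product means $\langle X, \mbf{J}^{p}_{f,\sigma}(Y) \rangle = \langle \mbf{J}^{p}_{f,\sigma}(X), Y \rangle$ for all $X,Y \in \Lin(A)$, i.e. $(\mbf{J}^{p}_{f,\sigma})^{\ast} = \mbf{J}^{p}_{f,\sigma}$ as maps. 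Since the map decomposes as a real-coefficient sum of the elementary maps $L_{Q_{i}}R_{Q_{j}}$, it suffices to compute the HS-adjoint of each such elementary map and check that the adjoint reorganizes into the same sum.

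First I would compute the adjoint of $L_{Q_{i}}R_{Q_{j}}$. For any $W$, the left-multiplication map $L_{W}$ satisfies $\langle X, L_{W}(Y)\rangle = \Tr[X^{\ast}WY] = \Tr[(W^{\ast}X)^{\ast}Y] = \langle L_{W^{\ast}}(X), Y\rangle$, so $(L_{W})^{\ast} = L_{W^{\ast}}$; similarly $(R_{W})^{\ast} = R_{W^{\ast}}$. Since $L_{W}$ and $R_{V}$ commute as maps, $(L_{Q_{i}}R_{Q_{j}})^{\ast} = R_{Q_{j}^{\ast}}L_{Q_{i}^{\ast}} = L_{Q_{i}^{\ast}}R_{Q_{j}^{\ast}}$. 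The key structural input is that the $Q_{i}$ are the spectral projectors of the Hermitian operator $\sigma$, hence each $Q_{i}$ is self-adjoint, so $Q_{i}^{\ast} = Q_{i}$ and $(L_{Q_{i}}R_{Q_{j}})^{\ast} = L_{Q_{i}}R_{Q_{j}}$. Thus each elementary building block is already self-adjoint.

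Next, taking the adjoint of the full sum and using that the coefficients $P_{f}(\lambda_{i},\lambda_{j})^{p}$ are real-valued (each $\lambda_{i}>0$ since $\sigma \in \Density_{+}$, and $f$ is real-valued, so $P_{f}$ is real), I obtain
\begin{align*}
    (\mbf{J}^{p}_{f,\sigma})^{\ast} = \sum_{i,j} P_{f}(\lambda_{i},\lambda_{j})^{p}\,(L_{Q_{i}}R_{Q_{j}})^{\ast} = \sum_{i,j} P_{f}(\lambda_{i},\lambda_{j})^{p}\, L_{Q_{i}}R_{Q_{j}} = \mbf{J}^{p}_{f,\sigma} \ ,
\end{align*}
which is the claim. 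The only subtlety is the negative-power case, where $\mbf{J}^{p}_{f,\sigma}$ is defined in the pseudoinverse sense via \eqref{eq:J-pseudoinverse}; there the same expansion holds with the convention that vanishing coefficients are dropped, and those coefficients remain real, so the argument is unchanged.

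I do not expect any genuine obstacle here: the statement is essentially immediate once the explicit $L_{Q_{i}}R_{Q_{j}}$ expansion is in hand, and the content reduces to the two observations that the spectral projectors $Q_{i}$ are Hermitian and the perspective-function coefficients are real. The only place warranting a sentence of care is confirming that the Hadamard-product / pseudoinverse description of negative powers is consistent with the same real-coefficient elementary-map expansion, so that self-adjointness extends uniformly across all real $p$.
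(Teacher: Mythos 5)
Your proposal is correct and follows exactly the route the paper intends: the paper states the result as "a direct calculation" using the expansion $\mbf{J}^{p}_{f,\sigma} = \sum_{i,j} P_{f}(\lambda_{i},\lambda_{j})^{p} L_{Q_{i}}R_{Q_{j}}$, and your computation of $(L_{Q_i}R_{Q_j})^{\ast} = L_{Q_i^{\ast}}R_{Q_j^{\ast}} = L_{Q_i}R_{Q_j}$ together with the reality of the perspective-function coefficients is precisely that calculation, with the pseudoinverse subtlety for negative $p$ handled correctly.
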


\paragraph{Other Properties} 
We now can turn to establishing the remaining properties we will need. We begin by explaining the naming conventions for the monotone functions previously introduced. A direct calculation via \eqref{eq:Hadamard-prod-form} and Fact \ref{fact:monotone-metric} will verify the condition $\gamma_{f,\sigma}(X,X) = \Tr[\sigma^{-1}X^{\ast}X]$ when $[\sigma,X] = 0$ is equivalent to $f(1)=1$, which justifies the term `normalized' in Definition \ref{def:function-taxonomy}. The condition `symmetry-inducing' implies that the perspective function is symmetric in its arguments and the corresponding inner product $\langle \cdot , \cdot \rangle_{f,\sigma}$ is symmetric on the space of Hermitian operators. The following proposition shows these are all equivalent statements.
\begin{proposition}\label{prop:symmetry-inducing-equivalences}
    For an operator monotone function and $\sigma \in D_{+}(A)$, the following are equivalent
    \begin{enumerate}
        \item $f(x) = xf(x^{-1})$ for all $x \in (0,\infty)$,
        \item $P_{f}(x,y) = P_{f}(y,x)$ for all $x,y \in [0,\infty)^{\times 2}$,
        \item $\mbf{J}^{p}_{f,\sigma}$ is Hermitian-preserving for any $p \in (-\infty,\infty)$, and
        \item For any $p \in (-\infty,\infty)$, $\langle A, B\rangle_{\mbf{J}_{f,\sigma}^{p}} = \langle B, A \rangle_{\mbf{J}_{f,\sigma}^{p}} \in \mbb{R}$ for all $A,B \in \Herm(A)$. In particular, this means the metric is symmetric on Hermitian operators, i.e. $\gamma_{f,\sigma}(A,B) = \gamma_{f,\sigma}(B,A)$.
    \end{enumerate}
\end{proposition}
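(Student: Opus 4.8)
The plan is to prove the four statements equivalent by establishing the cycle $(1) \Rightarrow (2) \Rightarrow (3) \Rightarrow (4) \Rightarrow (1)$, with the two middle implications relying on the Hadamard-product representation \eqref{eq:general-Hadamard-prod-form} of $\mbf{J}_{f,\sigma}^{p}$ and on its self-adjointness with respect to the Hilbert--Schmidt inner product (Proposition \ref{prop:J-operator-self-adjoint}). Throughout I would work in an eigenbasis $\{\ket{i}\}$ of $\sigma$ with eigenvalues $\lambda_{i} > 0$, so that $[\mbf{J}_{f,\sigma}^{p}(X)]_{ij} = P_{f}(\lambda_{i},\lambda_{j})^{p} X_{ij}$, noting that the coefficients $P_{f}(\lambda_{i},\lambda_{j})$ are real since $f$ is real-valued. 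Observe that conditions $(1)$ and $(2)$ are statements about $f$ alone and do not involve $\sigma$.

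For $(1) \Leftrightarrow (2)$: on the interior $x,y>0$ the claim $P_{f}(x,y) = P_{f}(y,x)$ reads $y f(x/y) = x f(y/x)$, and the substitution $t = x/y$ turns this into $f(t) = t f(1/t)$, which is exactly $(1)$; so the two are equivalent away from the boundary. To cover the boundary I note that $(1)$, evaluated in the limit $x \to +\infty$ via $f(x)/x = f(1/x) \to f(0^{+})$, forces $f'(+\infty) = f(0^{+})$, which is precisely the equality $P_{f}(x,0) = P_{f}(0,x)$ demanded by \eqref{eq:perspective-function}; the case $x=y=0$ is trivial by the convention $P_{f}(0,0)=0$.

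For $(2) \Rightarrow (3) \Rightarrow (4)$: given symmetry of $P_{f}$ and a Hermitian $A$, the entrywise action gives $[\mbf{J}_{f,\sigma}^{p}(A)]_{ij} = P_{f}(\lambda_{i},\lambda_{j})^{p} A_{ij}$ while $\overline{[\mbf{J}_{f,\sigma}^{p}(A)]_{ji}} = P_{f}(\lambda_{j},\lambda_{i})^{p}\,\overline{A_{ji}} = P_{f}(\lambda_{i},\lambda_{j})^{p} A_{ij}$, so $\mbf{J}_{f,\sigma}^{p}(A)$ is Hermitian, which is $(3)$. For $(3) \Rightarrow (4)$, I write $\langle A,B\rangle_{\mbf{J}_{f,\sigma}^{p}} = \tr[A\, \mbf{J}_{f,\sigma}^{p}(B)]$ using $A^{*}=A$; since $\mbf{J}_{f,\sigma}^{p}(B)$ is Hermitian by $(3)$, this trace of a product of two Hermitian operators is real, and self-adjointness of $\mbf{J}_{f,\sigma}^{p}$ (Proposition \ref{prop:J-operator-self-adjoint}) together with cyclicity of the trace yields $\langle A,B\rangle_{\mbf{J}_{f,\sigma}^{p}} = \tr[\mbf{J}_{f,\sigma}^{p}(A)\,B] = \langle B,A\rangle_{\mbf{J}_{f,\sigma}^{p}}$, giving $(4)$.

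The closing implication $(4) \Rightarrow (1)$ is where the real work lies, and I expect it to be the main obstacle. Fixing a pair of indices $i \neq j$, I would test $(4)$ on the Hermitian operators $A = \ket{i}\bra{j} + \ket{j}\bra{i}$ and $B = \mathrm{i}(\ket{i}\bra{j} - \ket{j}\bra{i})$; a short computation of $\tr[A\,\mbf{J}_{f,\sigma}(B)]$ and $\tr[B\,\mbf{J}_{f,\sigma}(A)]$ collapses $(4)$ to $P_{f}(\lambda_{i},\lambda_{j}) = P_{f}(\lambda_{j},\lambda_{i})$. The subtlety is that a single $\sigma$ only certifies symmetry of $P_{f}$ at the finitely many spectral ratios $\lambda_{i}/\lambda_{j}$ — indeed, for $\sigma$ maximally mixed the conditions $(3)$ and $(4)$ become vacuous — so to recover the functional identity $(1)$ at every $x \in (0,\infty)$ I would read the operator-level conditions as quantified over $\sigma \in \Density_{+}(A)$, equivalently choosing two-level $\sigma$ whose eigenvalue ratio realizes an arbitrary $x$, and then pass to the boundary $x \in \{0,+\infty\}$ using continuity of $f$ and the definition \eqref{eq:perspective-function}. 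Handling this genericity and continuity step cleanly, rather than the routine entrywise computations, is the crux of the argument.
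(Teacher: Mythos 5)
Your proof is correct and rests on the same engine as the paper's: the Hadamard-product representation of $\mbf{J}_{f,\sigma}^{p}$ reducing everything to symmetry of $P_{f}$ at eigenvalue pairs, with $(1)\Leftrightarrow(2)$ via the substitution $t = x/y$. Two differences are worth noting. First, your routing differs mildly: the paper proves $(2)\Leftrightarrow(3)$ and $(2)\Leftrightarrow(4)$ separately by direct matrix expansion, whereas you close a cycle and get $(3)\Rightarrow(4)$ cleanly from self-adjointness of $\mbf{J}_{f,\sigma}^{p}$ (Proposition \ref{prop:J-operator-self-adjoint}) plus the fact that the trace of a product of Hermitian operators is real --- slightly slicker than the paper's entrywise computation, and your test operators $\ket{i}\bra{j}+\ket{j}\bra{i}$ and $\mathrm{i}(\ket{i}\bra{j}-\ket{j}\bra{i})$ do extract $P_{f}(\lambda_{i},\lambda_{j}) = P_{f}(\lambda_{j},\lambda_{i})$ as claimed. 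Second, and more substantively, your flagged ``crux'' is a genuine gap in the paper's own proof rather than in yours: the paper passes from Hermitian-preservation for a \emph{fixed} $\sigma$ to ``$P_{f}(\lambda_{i},\lambda_{j}) = P_{f}(\lambda_{j},\lambda_{i})$ for all $\lambda_{i},\lambda_{j} \in [0,\infty)^{\times 2}$'' without comment, which a single $\sigma$ cannot certify --- your maximally mixed counterexample shows $(3)$ and $(4)$ are vacuous there while $(1)$ can fail, so the statement must implicitly quantify over $\sigma \in \Density_{+}(A)$ (or over spectra). Your repair --- realizing an arbitrary ratio $x$ with a two-level $\sigma$, then recovering the boundary cases $P_{f}(x,0)=P_{f}(0,x)$ from $(1)$ via $f'(+\infty) = f(0^{+})$ --- is exactly right and makes your write-up more careful than the paper's on this point; note also that since eigenvalues of $\sigma \in \Density_{+}$ are strictly positive, the boundary of $[0,\infty)^{\times 2}$ in condition $(2)$ can \emph{only} be reached through this limiting step, so it is not optional.
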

\begin{proof}
    ($1 \iff 2$) Follows immediately from \eqref{eq:perspective-function}. \\
    ($2 \iff 3$) We focus on the power $p = 1$ as all other cases have the same argument. Recall that to be Hermitian-preserving $\mbf{J}_{f,\sigma}(H)$ needs to be Hermitian for all $H \in \Herm(A)$. By \eqref{eq:Hadamard-prod-form}, this would require for all $i,j$, $P_{f}(\lambda_{i},\lambda_{j})H_{i,j} = P_{f}(\lambda_{j},\lambda_{i})\ol{H}_{j,i} =  P_{f}(\lambda_{j},\lambda_{i})H_{i,j}$ where we used that $H$ is Hermitian. Simplifying, this is equivalent to $P_{f}(\lambda_{i},\lambda_{j}) = P_{f}(\lambda_{j},\lambda_{i})$ for all $\lambda_{i},\lambda_{j} \in [0,\infty)^{\times 2}$. \\  ($2 \iff 4$) We consider $p = -1$ as this is the relevant case for $\gamma_{f,\sigma}(A,B)$, but all choices of $p$ use the same argument. By direct calculation, for arbitrary $X,Y \in \Lin(A)$,
    \begin{align}\label{eq:matrix-expansion}
        \gamma_{f,\sigma}(X,Y) = \sum_{i,j} \ol{X}_{i,j} P_{f}^{-1}(\lambda_{i},\lambda_{j})Y_{i,j} \ .
    \end{align}
    It follows that if $A,B$ are Hermitian so that $A_{i,j} = \ol{A}_{j,i}$, $B_{i,j} = \ol{B}_{j,i}$ then we have 
    $$ \gamma_{f,\sigma}(A,B) = \sum_{i,j} \ol{A}_{i,j} P_{f}^{-1}(\lambda_{i},\lambda_{j})B_{i,j} = \sum_{i,j} A_{j,i} P_{f}^{-1}(\lambda_{i},\lambda_{j}) \ol{B}_{j,i} = \sum_{i,j} A_{i,j} P_{f}^{-1}(\lambda_{j},\lambda_{i}) \ol{B}_{i,j} $$ where the last equality is relabeling $i,j$. Again using \eqref{eq:matrix-expansion}, $\gamma_{f,\sigma}(B,A) = \sum_{i,j} \ol{B}_{i,j} P_{f}^{-1}(\lambda_{i},\lambda_{j})A_{i,j}$. As these two summations only differ in the indexing of the perspective function, setting them equal to each other and re-ordering gets the equivalent condition $0 = \sum_{i,j} [P_{f}^{-1}(\lambda_{i},\lambda_{j}) - P_{f}^{-1}(\lambda_{j},\lambda_{i})] \ol{B}_{i,j}A_{i,j}$. For this condition to hold for all choices of Hermitian $A,B$, it must be the case $P_{f}^{-1}(\lambda_{i},\lambda_{j}) = P_{f}^{-1}(\lambda_{j},\lambda_{i})$ for all $i,j$. Similarly, if this condition holds, then $\gamma_{f,\sigma}(A,B) = \gamma_{f,\sigma}(B,A)$ for all $A,B \in \Herm(A)$. Lastly, as $\mbf{J}^{p}_{f,\sigma}$ is Hermitian preserving, $\langle A , \mbf{J}_{f,\sigma}^{p}(B) \rangle \in \mbb{R}$ as it is the HS inner product of two Hermitian operators. This completes the proof.
\end{proof}
These appealing properties of symmetry-inducing functions is part of the reason the ordering given in \eqref{eq:ordering-of-standard-monotones} can be so relevant. We also note a known ordering on the operator as $f$ is varied and $\sigma$ is left fixed \cite{Petz-2011a} of which we provide a proof using an identity we will use throughout this work.
\begin{proposition}\label{prop:ordering-of-NC-mult}
    If $0 \leq f_{1} \leq f_{2}$ are operator monotone, $\mbf{J}_{f_{2},\sigma} \geq \mbf{J}_{f_{1},\sigma} \geq 0$ and $0 \leq \mbf{J}_{f_{2},\sigma}^{-1} \leq \mbf{J}_{f_{1},\sigma}^{-1}$.
\end{proposition}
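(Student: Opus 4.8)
The plan is to reduce everything to the Hadamard-product representation \eqref{eq:general-Hadamard-prod-form} together with the entrywise ordering of perspective functions recorded just after \eqref{eq:perspective-function}. The central observation---the ``identity we will use throughout''---is that the quadratic form of $\mbf{J}_{f,\sigma}^{p}$ on the Hilbert space $(\Lin(A),\langle\cdot,\cdot\rangle)$ diagonalizes over matrix entries. Concretely, writing $X_{ij}$ for the entries of $X$ in an eigenbasis $\{v_i\}$ of $\sigma$, equation \eqref{eq:general-Hadamard-prod-form} gives $[\mbf{J}_{f,\sigma}^{p}X]_{ij}=P_f(\lambda_i,\lambda_j)^p X_{ij}$, so
\begin{align*}
\langle X,\mbf{J}_{f,\sigma}^{p}(X)\rangle = \Tr[X^{\ast}\mbf{J}_{f,\sigma}^{p}(X)] = \sum_{i,j} P_f(\lambda_i,\lambda_j)^{p}\,\abs{X_{ij}}^{2}\,.
\end{align*}
Since Proposition \ref{prop:J-operator-self-adjoint} already tells us that $\mbf{J}_{f,\sigma}^{p}$ is self-adjoint with respect to the HS inner product, this closed form for its quadratic form completely determines its L\"{o}wner position, and both claimed orderings become pointwise statements about perspective functions.

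With this identity in hand, the first inequality is immediate. Applying it with $p=1$ to the difference $\mbf{J}_{f_2,\sigma}-\mbf{J}_{f_1,\sigma}$ yields $\langle X,(\mbf{J}_{f_2,\sigma}-\mbf{J}_{f_1,\sigma})(X)\rangle=\sum_{i,j}\bigl(P_{f_2}(\lambda_i,\lambda_j)-P_{f_1}(\lambda_i,\lambda_j)\bigr)\abs{X_{ij}}^{2}$, which is nonnegative for every $X$ because $0\le f_1\le f_2$ forces $0\le P_{f_1}\le P_{f_2}$ pointwise (the remark following \eqref{eq:perspective-function}). The same computation applied to $f_1$ alone gives $\mbf{J}_{f_1,\sigma}\ge0$, so $\mbf{J}_{f_2,\sigma}\ge\mbf{J}_{f_1,\sigma}\ge0$.

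For the inverse ordering I would instead take $p=-1$. Because $\sigma\in\Density_{+}$ has strictly positive eigenvalues, $P_f(\lambda_i,\lambda_j)=\lambda_j f(\lambda_i/\lambda_j)$ is evaluated only at positive arguments, and a nonzero nonnegative operator monotone function is strictly positive on $(0,\infty)$; hence $0<P_{f_1}(\lambda_i,\lambda_j)\le P_{f_2}(\lambda_i,\lambda_j)$ on every relevant index pair, and reciprocating reverses the order to $P_{f_2}(\lambda_i,\lambda_j)^{-1}\le P_{f_1}(\lambda_i,\lambda_j)^{-1}$. Summing against $\abs{X_{ij}}^{2}$ then gives $\langle X,(\mbf{J}_{f_1,\sigma}^{-1}-\mbf{J}_{f_2,\sigma}^{-1})(X)\rangle\ge0$ together with $\mbf{J}_{f_2,\sigma}^{-1}\ge0$, i.e. $0\le\mbf{J}_{f_2,\sigma}^{-1}\le\mbf{J}_{f_1,\sigma}^{-1}$. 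One could alternatively deduce this from operator antitonicity of the inverse on positive definite operators once $0<\mbf{J}_{f_1,\sigma}\le\mbf{J}_{f_2,\sigma}$ is established, but the entrywise computation is more uniform and sidesteps a separate positive-definiteness step.

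The only genuine subtlety---and the step I would be most careful about---is the well-definedness of the reciprocals $P_f(\lambda_i,\lambda_j)^{-1}$, i.e. ruling out degenerate perspective values. This is exactly where the hypothesis $\sigma\in\Density_{+}$ is used, and it is resolved by the dichotomy that a nonnegative operator monotone $f$ is either identically zero (a degenerate case where $\mbf{J}_{f,\sigma}=0$ and the inverse statement is read in the pseudoinverse convention of \eqref{eq:J-pseudoinverse}) or strictly positive on $(0,\infty)$. Everything else is a direct diagonal computation against the identity above, so no deeper machinery is required.
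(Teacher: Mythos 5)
Your proof is correct and follows essentially the same route as the paper: the paper's proof is exactly the quadratic-form identity $\langle A,\mbf{J}_{f,\sigma}A\rangle=\sum_{i,j}P_{f}(\lambda_{i},\lambda_{j})\abs{A_{ij}}^{2}$ (its \eqref{eq:MC-form}) combined with the pointwise ordering $0\le P_{f_{1}}\le P_{f_{2}}$ of perspective functions, with the inverse case handled ``identically'' via the pseudoinverse form \eqref{eq:J-pseudoinverse}. Your additional care on the inverse case --- invoking $\sigma\in\Density_{+}$ and the dichotomy that a nonzero nonnegative operator monotone function is strictly positive on $(0,\infty)$, so that the reciprocal perspective values are well-defined and order-reversing --- is a correct tightening of a step the paper dispatches in one line, not a different approach.
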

\begin{proof}
    By direct calculation using \eqref{eq:matrix-expansion},
\begin{align}\label{eq:MC-form}
    \langle A , \mbf{J}_{f,\sigma}A \rangle = \sum_{i,j} P_{f}(\lambda_{i},\lambda_{j}) \vert A_{ij} \vert^{2} \ . 
\end{align}
Recalling the properties of the perspective function for monotone functions given at the start of the note, we may conclude for operator monotone functions $f_{1},f_{2}$ such that $0 \leq f_{1} \leq f_{2}$, we have $0 \leq \mbf{J}_{f_{1},\sigma} \leq \mbf{J}_{f_{2},\sigma}$. The argument for the inverse is identical using \eqref{eq:J-pseudoinverse}.
\end{proof}

Next, we establish all cases such that $\mbf{J}_{f,\sigma}$ is multiplicative over tensor products of states.
\begin{proposition}\label{prop:multiplicativity-of-J}
    $\mbf{J}_{f,\rho \otimes \sigma} = \mbf{J}_{f,\rho} \otimes \mbf{J}_{f,\sigma}$ for all $\rho_{A},\sigma_{A'} \in \Density(A) \times \Density(A')$ if and only if 
    $f(x) = x^{k}$ for some $k \in [0,1]$. Moreover, the geometric mean, $f_{GM}(x) = x^{1/2}$, is the unique symmetry-inducing operator monotone function satisfying multiplicativity.
\end{proposition}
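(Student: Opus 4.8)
The plan is to expand both sides of the claimed identity using the perspective-function representation \eqref{eq:J-operator-func-expansion} and thereby reduce the operator equation to a scalar functional equation for $f$. Write the spectral decompositions $\rho = \sum_i \mu_i P_i$ and $\sigma = \sum_j \nu_j Q_j$, so that $\rho \otimes \sigma$ has spectral data $\{\mu_i\nu_j,\, P_i\otimes Q_j\}$. Using the elementary identity $(L_{P_i}R_{P_m}) \otimes (L_{Q_j}R_{Q_n}) = L_{P_i\otimes Q_j}R_{P_m\otimes Q_n}$, which follows from $L_{W}\otimes L_{V} = L_{W\otimes V}$ and $R_{W}\otimes R_{V}=R_{W\otimes V}$ under the identification $\Lin(A)\otimes\Lin(A')\cong\Lin(A\otimes A')$, formula \eqref{eq:J-operator-func-expansion} gives
\[
\mbf{J}_{f,\rho}\otimes\mbf{J}_{f,\sigma} = \sum_{(i,j),(m,n)} P_f(\mu_i,\mu_m)\,P_f(\nu_j,\nu_n)\, L_{P_i\otimes Q_j}R_{P_m\otimes Q_n},
\]
whereas $\mbf{J}_{f,\rho\otimes\sigma} = \sum_{(i,j),(m,n)} P_f(\mu_i\nu_j,\mu_m\nu_n)\, L_{P_i\otimes Q_j}R_{P_m\otimes Q_n}$.

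For necessity I would restrict to non-degenerate $\rho,\sigma$, so the operators $L_{P_i\otimes Q_j}R_{P_m\otimes Q_n}$ are linearly independent (they are the matrix units in the joint eigenbasis); matching coefficients then forces $P_f(\mu_i\nu_j,\mu_m\nu_n)=P_f(\mu_i,\mu_m)P_f(\nu_j,\nu_n)$ for all eigenvalues. Setting $u=\mu_i/\mu_m$, $v=\nu_j/\nu_n$ and dividing through by $\mu_m\nu_n$ collapses this to the multiplicative Cauchy equation $f(uv)=f(u)f(v)$; since a qubit spectrum $\{p,1-p\}$ already realizes an arbitrary ratio in $(0,\infty)$, the pairs $(u,v)$ range independently over all of $\mbb{R}_+^2$, so the equation holds everywhere. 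Because operator monotone functions are continuous, $s\mapsto\log f(e^{s})$ is additive and continuous, hence linear, giving $f(x)=x^{k}$ for some real $k$; operator monotonicity on $[0,\infty)$ (L\"{o}wner--Heinz) then forces $k\in[0,1]$. I would also note that the degenerate solution $f\equiv 0$ is excluded, since $f\not\equiv 0$ together with multiplicativity already yields $f(1)=1$. For sufficiency, the cleanest route uses that for $f_k(x)=x^{k}$ one has $P_{f_k}(x,y)=x^{k}y^{1-k}$, which manifestly factorizes, or equivalently $\mbf{J}_{f_k,\sigma}=L_{\sigma}^{k}R_{\sigma}^{1-k}$ combined with $L_{\rho\otimes\sigma}=L_\rho\otimes L_\sigma$ and $R_{\rho\otimes\sigma}=R_\rho\otimes R_\sigma$, yielding multiplicativity by a one-line computation. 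The ``moreover'' clause is then immediate: among the power functions the symmetry-inducing condition $xf(x^{-1})=f(x)$ reads $x^{1-k}=x^{k}$, so $k=\tfrac12$, i.e. $f=f_{GM}$, as already observed below \eqref{eq:power-functions}.

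The main obstacle I anticipate is the bookkeeping in the coefficient-matching step. One must justify that equality of the two operators forces equality of the perspective-function coefficients, which rests on (i) the linear independence of the family $\{L_{P_i\otimes Q_j}R_{P_m\otimes Q_n}\}$, valid for non-degenerate spectra (and this suffices, because the identity is demanded for \emph{all} states), and (ii) the observation that qubit eigenvalue ratios already sweep out $(0,\infty)$, allowing $u$ and $v$ to vary independently. Once the scalar equation $f(uv)=f(u)f(v)$ is in hand, the passage to a power law is routine given the continuity of operator monotone functions.
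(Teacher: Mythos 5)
Your proposal is correct and follows essentially the same route as the paper's proof: expand both sides in the joint eigenbasis (your $L_{P_i\otimes Q_j}R_{P_m\otimes Q_n}$ coefficient-matching is the operator form of the paper's Hadamard-product computation), reduce to the scalar equation $f(uv)=f(u)f(v)$ on $(0,\infty)$, conclude $f(x)=x^{k}$ from Cauchy's multiplicative functional equation, force $k\in[0,1]$ by the characterization of operator monotone power functions, and settle the ``moreover'' clause via $x^{k}=x^{1-k}\iff k=\tfrac12$. The only deviations are cosmetic refinements — you derive the power law directly by log-linearization rather than citing Acz\'{e}l, and you explicitly record the linear-independence and $f\equiv 0$ bookkeeping that the paper passes over with ``holds for all linear operators and choices of eigenvalues.''
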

\begin{proof}
    The proof proceeds in steps. First, we establish $\mbf{J}_{f,\rho \otimes \sigma} = \mbf{J}_{f,\rho} \otimes \mbf{J}_{f,\sigma}$ for all $\rho_{A},\sigma_{A'} \in \Density(A) \times \Density(A')$ is equivalent to $f$ being multiplicative over the positive reals. We then combine this with Cauchy's multiplicative functional equation \cite{Aczel-1966a} and the characterization of operator monotone functions \cite{Bhatia-1997a} to complete the proof.
    
    We begin by showing $f$ must be multiplicative. Let $\rho_{A} = \sum_{i \in \Sigma} \lambda_{i} \dyad{\nu_{i}}$ and $\sigma_{A'} = \sum_{j \in \Lambda} \kappa_{j} \dyad{\omega_{j}}$. We may express $X_{AB}$ in this basis via 
    \begin{align}
        X_{AB} = \sum_{\substack{i,i' \in \Sigma \\ j,j' \in \Lambda}} X_{i,i',j,j'} \ket{\nu_{i}}\bra{\nu_{i'}} \otimes \ket{\omega_{i}}\bra{\omega_{j}} \ ,
    \end{align} 
    and $\rho_{A} \otimes \sigma_{A'} = \sum_{i,j} \lambda_{i}\kappa_{j} \dyad{\nu_{i}} \otimes \dyad{\omega_{j}}$.
    By \eqref{eq:Hadamard-prod-form} and the definition of the perspective function, we have 
    \begin{align}
        \mbf{J}_{f,\rho \otimes \sigma}(X)_{i,i',j,j'} &= (\lambda_{i'}\cdot \kappa_{j'})f\left(\frac{\lambda_{i} \cdot \kappa_{j}}{\lambda_{i'} \cdot \kappa_{j'}} \right) X_{i,i',j,j'} \\ 
        \mbf{J}_{f,\rho} \otimes \mbf{J}_{f,\sigma}(X)_{i,i',j,j'} &=\lambda_{i'}f\left(\frac{\lambda_{i}}{\lambda_{i'}} \right) \cdot \kappa_{j'}f\left(\frac{\kappa_{j}}{ \kappa_{j'}} \right) X_{i,i',j,j'} \ .
    \end{align}
    Since the equality would need to hold for all linear operators and choices of eigenvalues, setting the two equal to each other, the above simplifies to $f\left(a \cdot b \right) =f(a)f(b)$ for all $a,b \in (0,+\infty)$.
    
    Now, by Cauchy's multiplicative functional equation, $f(a \cdot b) = f(a)f(b)$ for all $a,b \in (0,+\infty)$ if and only if $f(x) \coloneq x^{k}$ for some $k \in \mbb{R}$ (see e.g. \cite[Section 2.1, Theorem 3]{Aczel-1966a}). Moreover, as $\mbf{J}_{f,\sigma}$ is defined via an operator monotone function (Definition \ref{def:J-operator}) and the power function is operator monotone if and only if $k \in [0,1]$ \cite[Theorem V.2.10]{Bhatia-1997a}, this establishes the first claim of the proposition. To obtain the moreover statement, recall that being symmetry-inducing means $f(x) = xf(x^{-1})$. Thus, a power function is symmetry-inducing if and only if $x^{k} = x^{1-k} \iff x^{2k-1} = 1$ for all $x \in (0,\infty)$, which only holds for $k = 1/2$. This completes the proof.
\end{proof}
\noindent We stress the above shows many operator monotone functions that induce means that we are interested in, e.g. $f \in \{HM,LM,AM\}$, do not induce $\mbf{J}_{f,\sigma}$ that are multiplicative over tensor products.

We also make use of the following which shows that for all operator monotone $f$, under the trace $\mbf{J}_{f,\sigma}$ and its inverse act as multiplying or dividing by $\sigma$. We note part of these identities was previously established in \cite{Lesniewski-1999a} via different methods.
\begin{proposition}\label{prop:mult-and-div-under-trace} 
    For any $\sigma$ and normalized monotone $f$, for all $X \in \Lin(A)$, the following identities hold 
    \begin{align*} 
        \Tr[\mbf{J}_{f,\sigma}(X)] = \Tr[\sigma X] = \Tr[\sigma^{1/2}\mbf{J}_{f,\sigma}^{1/2}(X)] \ ,
    \end{align*}
    \begin{align*}
    \Tr[\mbf{J}_{f,\sigma}^{-1}(X)] = \Tr[\sigma^{-1}X] \, , \quad \Tr[\sigma \mbf{J}_{f,\sigma}^{-1}(X)] = \Tr[\Pi_{\supp(\sigma)}X\Pi_{\supp(\sigma)}]    \ . 
    \end{align*}
\end{proposition}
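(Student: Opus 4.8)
The plan is to reduce every identity to a diagonal computation in the eigenbasis of $\sigma$ using the Hadamard product representation established in \eqref{eq:Hadamard-prod-form}--\eqref{eq:general-Hadamard-prod-form}. Writing $\sigma = \sum_i \lambda_i Q_i$ and expanding $X$ in the associated eigenbasis, each operator $\mbf{J}_{f,\sigma}^{p}$ acts entrywise by the scalar $P_f(\lambda_i,\lambda_j)^p$, so that $[\mbf{J}_{f,\sigma}^{p}(X)]_{ij} = P_f(\lambda_i,\lambda_j)^p X_{ij}$. Since the trace only sees diagonal entries, the single fact driving all of these equalities is that the perspective function collapses on the diagonal: $P_f(\lambda_i,\lambda_i) = \lambda_i f(1) = \lambda_i$, where normalization of $f$ is the only hypothesis used. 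First I would record this observation explicitly, as it makes each subsequent step a one-line diagonal sum.

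For the first line, taking the trace of $[\mbf{J}_{f,\sigma}(X)]_{ij} = P_f(\lambda_i,\lambda_j) X_{ij}$ gives $\Tr[\mbf{J}_{f,\sigma}(X)] = \sum_i P_f(\lambda_i,\lambda_i) X_{ii} = \sum_i \lambda_i X_{ii} = \Tr[\sigma X]$. For the rightmost expression I would use $[\mbf{J}_{f,\sigma}^{1/2}(X)]_{ii} = P_f(\lambda_i,\lambda_i)^{1/2} X_{ii} = \lambda_i^{1/2} X_{ii}$, where the square root is real because $P_f \geq 0$ for monotone $f$ (as recorded after \eqref{eq:perspective-function}); then $\Tr[\sigma^{1/2}\mbf{J}_{f,\sigma}^{1/2}(X)] = \sum_i \lambda_i^{1/2}\cdot \lambda_i^{1/2} X_{ii} = \Tr[\sigma X]$ as well.

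The second line is identical in spirit but requires care when $\sigma$ is not full rank, since both $\mbf{J}_{f,\sigma}^{-1}$ and $\sigma^{-1}$ are taken in the pseudoinverse sense. Using \eqref{eq:J-pseudoinverse}, the diagonal entry $[\mbf{J}_{f,\sigma}^{-1}(X)]_{ii}$ equals $\lambda_i^{-1} X_{ii}$ when $\lambda_i \neq 0$ and vanishes when $\lambda_i = 0$ (because then $P_f(\lambda_i,\lambda_i)=0$); summing reproduces $\Tr[\sigma^{-1} X] = \sum_{i:\lambda_i\neq 0}\lambda_i^{-1}X_{ii}$. For the last identity, multiplying by $\sigma$ before tracing contributes a factor $\lambda_i$ on the diagonal, so $\Tr[\sigma\mbf{J}_{f,\sigma}^{-1}(X)] = \sum_{i:\lambda_i\neq 0} X_{ii}$, which is exactly $\Tr[\Pi_{\supp(\sigma)} X \Pi_{\supp(\sigma)}]$ since $\Pi_{\supp(\sigma)} = \sum_{i:\lambda_i\neq 0} Q_i$.

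The only genuine obstacle is bookkeeping on the kernel of $\sigma$: one must confirm that the pseudoinverse convention in \eqref{eq:J-pseudoinverse} genuinely annihilates the kernel contributions and that this matches the pseudoinverse $\sigma^{-1}$ and the support projector $\Pi_{\supp(\sigma)}$ appearing on the right-hand sides. Everything else reduces to the diagonal evaluation $P_f(\lambda_i,\lambda_i)=\lambda_i$, so I would not expect any step beyond routine index manipulation.
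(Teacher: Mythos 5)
Your proof is correct and follows essentially the same route as the paper's: both expand in the eigenbasis of $\sigma$, use the Hadamard-product representation of $\mbf{J}_{f,\sigma}^{p}$ so the trace collapses to the diagonal, and reduce everything to $P_{f}(\lambda_{i},\lambda_{i})=\lambda_{i}f(1)=\lambda_{i}$ together with the pseudoinverse convention of \eqref{eq:J-pseudoinverse} on the kernel of $\sigma$. Your explicit isolation of the diagonal-collapse observation up front is a slightly cleaner organization of the identical computation, not a different argument.
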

\begin{proof}
    We prove the relation involving the square root and the inverse case since this makes it clear how to prove the others. We begin with the square root identity:
    \begin{align*}
        \Tr[\sigma^{1/2}\mbf{J}_{f,\sigma}^{1/2}(X)] =& \Tr\left[\left(\sum_{i} \lambda_{i}^{1/2} \dyad{\nu_{i}}\right)\left(\sum_{j,k} P_{f}(\lambda_{j},\lambda_{k})^{1/2} X_{i,j}\ket{\nu_{j}}\bra{\nu_{k}}\right)\right] \\
        =& \sum_{i,j,k} \lambda_{i}^{1/2} P_{f}(\lambda_{j},\lambda_{k})^{1/2} \delta_{i,j}\delta_{i,k} \\
        =& \sum_{i} \lambda_{i}^{1/2} P_{f}^{1/2}(\lambda_{i},\lambda_{i})X_{ii} = \sum_{i} \lambda_{i} X_{ii} = \Tr[\sum_{i} \lambda_{i} \dyad{\nu_{i}}X] = \Tr[\sigma X] \ , 
    \end{align*}
    where the fourth equality uses that $P_{f}^{1/2}(\lambda_{i},\lambda_{i}) = \lambda_{i}^{1/2}$ by normalization of $f$. The inverse identity is similar:
    \begin{align*}
        \Tr[\mbf{J}_{f,\sigma}^{-1}(X)] = \sum_{i,j} \Tr[P_{f}^{-1}(\lambda_{i},\lambda_{j})X_{ij}] 
        = \sum_{i} \Tr[P_{f}^{-1}(\lambda_{i},\lambda_{i})X_{ii}] 
        =& \sum_{i : \lambda_{i} > 0 } \lambda_{i}^{-1} X_{ii} \\
        =& \sum_{i : \lambda_{i} > 0 } \Tr[\lambda_{i} ^{-1}\dyad{\nu_{i}}X] \\
        =& \Tr[\sigma^{-1}X]  \ , 
    \end{align*}
    where the second equality evaluates the trace in the basis of spectral decomposition of $\sigma$, the third uses the normalization of $f$ and \eqref{eq:perspective-function}, the fourth is by the spectral decomposition, and the last is by the Moore-Penrose pseudoinverse. From this it is straightforward to prove the multiplication case. The inverse multiplied by $\sigma$ follows the same argument as above except the eigenvalues will now cancel.
\end{proof}

Lastly, we note the following useful fact, which applies to special cases we care about, e.g. $f_{GM},f_{LM}$, and allows us to identify cases where we can restrict to the support of $\sigma$ without loss of generality. This will be useful in Section \ref{sec:quantum-chi-squared}.
\begin{proposition}\label{prop:suff-conds-for-restricting-support}
    If $f$ is operator monotone and such that $\lim_{x \to \infty} f(x) = +\infty$, $\lim_{x \to \infty} f'(x) = 0$, and $\lim_{x \downarrow 0} f(x) = 0$, then 
    $$\mbf{J}_{f,\sigma}(X) = \Pi_{\supp(\sigma)}\mbf{J}_{f,\sigma}(X)\Pi_{\supp(\sigma)} = \mbf{J}_{f,\sigma}(\Pi_{\supp(\sigma)}X\Pi_{\supp(\sigma)}) \ , $$
    and similarly for $\mbf{J}_{f,\sigma}^{-1}$.
\end{proposition}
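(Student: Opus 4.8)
The plan is to reduce everything to an entrywise statement in the eigenbasis of $\sigma$ using the Hadamard product representation \eqref{eq:Hadamard-prod-form}, and to show that the perspective function $P_{f}(\lambda_{i},\lambda_{j})$ vanishes as soon as one of the eigenvalues $\lambda_{i},\lambda_{j}$ is zero. Writing $\sigma = \sum_{i}\lambda_{i}Q_{i}$, the projector $\Pi_{\supp(\sigma)}$ is exactly the projection onto the indices with $\lambda_{i} > 0$, so conjugation by $\Pi_{\supp(\sigma)}$ simply zeroes out every matrix entry $X_{ij}$ for which $i$ or $j$ lies outside the support. Thus both claimed identities will follow once I know that $[\mbf{J}_{f,\sigma}(X)]_{ij} = P_{f}(\lambda_{i},\lambda_{j})X_{ij}$ vanishes whenever $\lambda_{i}=0$ or $\lambda_{j}=0$.

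The crux is therefore to check that the two boundary branches of the perspective function \eqref{eq:perspective-function} are zero under the stated hypotheses. The branch $x=0$ gives $P_{f}(0,\lambda_{j}) = \lambda_{j}f(0^{+})$, which is zero because $f(0^{+}) = \lim_{x\downarrow 0}f(x) = 0$ by assumption. The branch $y=0$ gives $P_{f}(\lambda_{i},0) = \lambda_{i}f'(+\infty)$, and here I use that $f'(+\infty) \coloneq \lim_{x\to\infty} f(x)/x$ (the paper's convention). Since $\lim_{x\to\infty}f(x) = +\infty$ by hypothesis, this is an $\infty/\infty$ indeterminate form, and L'H\^{o}pital's rule gives $f'(+\infty) = \lim_{x\to\infty}f'(x) = 0$, again by hypothesis. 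The remaining corner $\lambda_{i}=\lambda_{j}=0$ is covered by the convention $0f(0/0)\coloneq 0$. Hence $P_{f}(\lambda_{i},\lambda_{j}) = 0$ whenever either eigenvalue vanishes, which is exactly what is needed.

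With this in hand the first equality is immediate: every entry of $\mbf{J}_{f,\sigma}(X)$ indexed by a row or column outside the support is zero, so $\mbf{J}_{f,\sigma}(X)$ already lives in the support block, i.e. $\mbf{J}_{f,\sigma}(X) = \Pi_{\supp(\sigma)}\mbf{J}_{f,\sigma}(X)\Pi_{\supp(\sigma)}$. For the second equality I compare $\mbf{J}_{f,\sigma}(\Pi_{\supp(\sigma)}X\Pi_{\supp(\sigma)})$ with $\mbf{J}_{f,\sigma}(X)$ entrywise: on the support block $\Pi_{\supp(\sigma)}X\Pi_{\supp(\sigma)}$ agrees with $X$, while off the support block both $[\Pi_{\supp(\sigma)}X\Pi_{\supp(\sigma)}]_{ij}$ and the multiplier $P_{f}(\lambda_{i},\lambda_{j})$ vanish, so the two operators coincide. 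The statement for $\mbf{J}_{f,\sigma}^{-1}$ follows by the identical argument using the pseudoinverse representation \eqref{eq:J-pseudoinverse}, whose defining convention already sets the entry to $0$ precisely when $P_{f}(\lambda_{i},\lambda_{j})=0$; no extra work is required.

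A remark on where the only real content sits. Everything except one line is bookkeeping with the Hadamard representation; the single nontrivial step is identifying $f'(+\infty)$, defined as the asymptotic ratio $\lim_{x\to\infty}f(x)/x$, with the limit of the derivative $\lim_{x\to\infty}f'(x)$. I expect this L'H\^{o}pital step to be the main obstacle worth stating carefully, since it is what ties the hypotheses $\lim_{x\to\infty}f(x)=+\infty$ and $\lim_{x\to\infty}f'(x)=0$ together into the clean conclusion $f'(+\infty)=0$; one could alternatively invoke operator concavity of $f$ to guarantee that $f'$ is monotone and the limit is well behaved, but the direct L'H\^{o}pital argument suffices.
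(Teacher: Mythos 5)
Your proof is correct and takes essentially the same route as the paper's: the paper likewise obtains $f'(+\infty) = 0$ via L'H\^{o}pital and $f(0^{+}) = 0$ from the hypotheses, concludes that $P_{f}(\lambda_{i},\lambda_{j})$ vanishes when an eigenvalue is zero, and then reads off both identities from the Hadamard representation \eqref{eq:Hadamard-prod-form} together with the pseudoinverse convention \eqref{eq:J-pseudoinverse}. Your write-up merely makes explicit the entrywise bookkeeping (and the one-directional vanishing of $P_{f}$, which is all that is actually needed) that the paper leaves implicit.
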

\begin{proof}
    Under the assumptions on $f$, $f'(+\infty) = 0$ via L'H\^{o}pital's and $f(0^{+}) = 0$. It follows $P_{f}(x,y) = 0$ if and only if $\lambda_{i}$ or $\lambda_{j}$ is zero. Thus, in this setting $\mbf{J}_{f,\sigma}(X)$ projects $X$ onto the support of $\sigma$ by \eqref{eq:Hadamard-prod-form}. By the definition of the pseudoinverse generalization of $\mbf{J}_{f,\sigma}^{-1}$ given in \eqref{eq:J-pseudoinverse}, we may conclude the same for $\mbf{J}_{f,\sigma}^{-1}$.
\end{proof}

\subsection{Non-Commutative Extensions of \texorpdfstring{$L^{2}(p)$, \texorpdfstring{$L^{2}_{f}(\sigma)$}{}}{}}\label{subsec:non-comm-L2-inner-products} 
With the theory of the operator $\mbf{J}_{f,\sigma}$ explicated, we now may define the non-commutative extensions of $L^{2}(p)$. Given an operator monotone function $f$ and quantum state $\sigma \in \Density_{+}$, we replace the inner product on real-valued functions $\langle \cdot ,\cdot \rangle_{p}$ with the inner product on linear operators $\langle X , Y \rangle_{f,\sigma} = \langle X, \mbf{J}_{f,\sigma}(Y) \rangle$ as introduced in \eqref{eq:inner-product definitions}. These will be definite inner products on $\Lin(A)$ as is straightforward to see or may be verified via the following section.

Recalling both that $\mbb{E}_{p}[f] = \langle \mbf{1} , f \rangle_{p}$ and that for an observable $X \in \Herm(A)$, we take the expectation of $X$ with respect to $\sigma$ to be $\Tr[\sigma X]$, Proposition \ref{prop:mult-and-div-under-trace} motivates the definition of the expectation of a linear operator with respect to a state $\sigma$ for a normalized operator monotone $f$:
\begin{align}\label{eq:f-expectation}
    \mbb{E}_{f,\sigma}[X] \coloneq \langle \mbb{1}, X\rangle_{f,\sigma} = \Tr[\sigma X] \ .
\end{align}
Note that this is independent of the choice of normalized operator monotone function $f$. Furthering this identification, we can define the general notion of covariance with respect to $\sigma$ for any normalized operator monotone $f$:\footnote{A direct calculation will verify that \eqref{eq:f-covariance} can be simplified to the definition introduced in \cite{hiai-2012quasi}. It also recovers the simpler form in \cite{Petz-2011a} in the case $X,Y$ are Hermitian.}
\begin{align}\label{eq:f-covariance}
    \text{Cov}_{f,\sigma}(X,Y) \coloneq \langle X - \langle \mbb{1}, X \rangle_{f,\sigma}\mbb{1} , Y - \langle \mbb{1} , Y \rangle_{f,\sigma}\mbb{1} \rangle_{f,\sigma}  \ .
\end{align}
One may then obtain the non-commutative variance: $\Var_{f,\sigma}(X) \coloneq \text{Cov}_{f,\sigma}(X,X)$. Motivated by \eqref{eq:f-expectation} and \eqref{eq:f-covariance}, for any normalized operator monotone $f$, we may then view the linear operators equipped with $\langle \cdot, \cdot \rangle_{f,\sigma}$ as a non-commutative extension of $L^{2}(p)$, which we denote $L^{2}_{f}(\sigma)$ for notational clarity. We will formally verify this as an inner product space in Section \ref{sec:functional-analysis-tools}. 

While \eqref{eq:f-expectation} and \eqref{eq:f-covariance} justify our claim that $L^{2}_{f}(\sigma)$ generalizes $L^{2}(p)$, we further verify this by establishing some basic properties of the standard variance are satisfied for our non-commutative extensions. These properties will be useful in subsequent sections. Of particular importance for our work is the data processing inequality for variance (Item 3). At least for Hermitian observables and CPTP maps, this data processing inequality was stated previously \cite[Page 10]{Petz-2011a}.\footnote{To the best of our knowledge, the data processing inequality for non-commutative variance has not been utilized elsewhere. We suspect this is because its identification does not seem relevant for quantum Fisher information. This is likely why it was left as an offhand comment in that work whereas in this work it is a central idea.}
\begin{proposition}\label{prop:variance-properties}
    For any normalized operator monotone $f:\mbb{R}_{+} \to \mbb{R}_{\geq 0}$ and $\sigma \in \Density_{+}(A)$, 
    \begin{enumerate}
    \item If $X \in \Herm$,\begin{align}
        \Var_{f,\sigma}(X) = \langle X , X \rangle_{f,\sigma} - \left( \langle \mbb{1}, X \rangle_{f,\sigma}  \right)^{2} \ .
    \end{align}
    \item If $f \geq 0$, then $\Var_{f,\sigma}(X) \geq 0$ for all $X \in \Lin(X)$.
    \item For any $X \in \Lin(A)$ and any $\cE$ that is the adjoint of a unital Schwarz map, 
    \begin{align}
        \Var_{f,\sigma}(\cE^{\ast}(X)) \leq \Var_{f,\cE(\sigma)}(X) \ .
    \end{align}
    \item If $X \in \Lin(A)$ and $\langle \mbb{1}, X \rangle_{f,\sigma} = 0$, then $\Var_{f,\sigma}(X) = \langle X , X \rangle_{f,\sigma}$.
    \end{enumerate}
\end{proposition}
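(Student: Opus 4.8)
The plan is to extract a single computational identity, $\mbf{J}_{f,\sigma}(\mbb{1}) = \sigma$, from which items 1 and 4 fall out by expanding the inner product, then to read off item 2 from the Hadamard-product form, and finally to reduce item 3 to the operator-level data processing inequality of Proposition~\ref{prop:DPI-for-J-op}.

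First I would evaluate $\mbf{J}_{f,\sigma}$ on the identity. Using the expansion $\mbf{J}_{f,\sigma} = \sum_{i,j} P_f(\lambda_i,\lambda_j) L_{Q_i}R_{Q_j}$ from \eqref{eq:J-operator-func-expansion} and orthogonality of the spectral projectors, $L_{Q_i}R_{Q_j}(\mbb{1}) = Q_iQ_j = \delta_{ij}Q_i$, so only diagonal terms survive and $\mbf{J}_{f,\sigma}(\mbb{1}) = \sum_i P_f(\lambda_i,\lambda_i)Q_i = \sum_i \lambda_i f(1)Q_i = \sigma$ by normalization $f(1)=1$. Since $\mbf{J}_{f,\sigma}$ is self-adjoint for the Hilbert--Schmidt inner product (Proposition~\ref{prop:J-operator-self-adjoint}), the form $\langle\cdot,\cdot\rangle_{f,\sigma}$ is Hermitian, giving $\langle X,\mbb{1}\rangle_{f,\sigma} = \overline{\langle\mbb{1},X\rangle_{f,\sigma}}$ and $\langle\mbb{1},\mbb{1}\rangle_{f,\sigma} = \Tr[\sigma] = 1$. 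Expanding the definition \eqref{eq:f-covariance} with $Y = X$ by sesquilinearity and writing $e = \langle\mbb{1},X\rangle_{f,\sigma}$ then collapses the cross terms to $\Var_{f,\sigma}(X) = \langle X,X\rangle_{f,\sigma} - \abs{e}^2$. For Hermitian $X$ the mean $e = \Tr[\sigma X]$ is real, which is item 1; and if $e = 0$ then every correction vanishes, which is item 4 for arbitrary $X$.

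For item 2 I would observe that the same expansion gives $\Var_{f,\sigma}(X) = \langle Z,Z\rangle_{f,\sigma}$ with $Z = X - e\mbb{1}$, and then apply the matrix expansion \eqref{eq:MC-form}, namely $\langle Z,Z\rangle_{f,\sigma} = \sum_{i,j} P_f(\lambda_i,\lambda_j)\abs{Z_{ij}}^2$. As $f \geq 0$ forces the perspective function $P_f$ to be non-negative (noted after \eqref{eq:perspective-function}), this sum is non-negative, establishing item 2.

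The crux is item 3. Set $e = \mbb{E}_{f,\cE(\sigma)}[X] = \Tr[\cE(\sigma)X]$; the adjoint relation $\Tr[\sigma\cE^{\ast}(X)] = \Tr[\cE(\sigma)X]$ together with \eqref{eq:f-expectation} shows that $\cE^{\ast}(X)$ has the same mean $e$ with respect to $\sigma$. Because $\cE$ is the adjoint of a \emph{unital} Schwarz map, the map $\cE^{\ast}$ is unital, so with $\tilde X \coloneq X - e\mbb{1}_B$ we have $\cE^{\ast}(X) - e\mbb{1}_A = \cE^{\ast}(\tilde X)$. Unfolding the definition of variance on both sides then gives
\[
\Var_{f,\sigma}(\cE^{\ast}(X)) = \langle \cE^{\ast}(\tilde X), \mbf{J}_{f,\sigma}\cE^{\ast}(\tilde X)\rangle = \langle \tilde X, (\cE\circ\mbf{J}_{f,\sigma}\circ\cE^{\ast})(\tilde X)\rangle,
\]
where the last step moves $\cE^{\ast}$ across the Hilbert--Schmidt inner product using $(\cE^{\ast})^{\ast} = \cE$. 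The operator inequality $\cE\circ\mbf{J}_{f,\sigma}\circ\cE^{\ast} \leq \mbf{J}_{f,\cE(\sigma)}$ from \eqref{eq:DPI-expression-for-Jfsigma} then bounds the right-hand side by $\langle \tilde X, \mbf{J}_{f,\cE(\sigma)}\tilde X\rangle = \Var_{f,\cE(\sigma)}(X)$, the final equality again being the definition of variance for the $e$-centered $X$. I expect the main obstacle to be conceptual rather than computational: recognizing that the operator DPI for $\mbf{J}_{f,\sigma}$ yields the variance DPI only after recentering, and that this recentering is legitimate precisely because $\cE^{\ast}$ is unital and preserves expectations, so that the mean subtractions on the two sides are compatible.
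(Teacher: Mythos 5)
Your proof is correct and follows essentially the same route as the paper's: items 1 and 4 by expanding the covariance using $\mbf{J}_{f,\sigma}(\mbb{1})=\sigma$ and $f(1)=1$, item 2 via non-negativity of the perspective function in \eqref{eq:MC-form} (the substance of Proposition~\ref{prop:ordering-of-NC-mult}, which the paper cites), and item 3 by recentering via unitality of $\cE^{\ast}$ and mean preservation and then applying the operator inequality of Proposition~\ref{prop:DPI-for-J-op}. Your unified identity $\Var_{f,\sigma}(X)=\langle X,X\rangle_{f,\sigma}-\abs{\langle \mbb{1},X\rangle_{f,\sigma}}^{2}$ for arbitrary linear $X$ is a marginally cleaner packaging of the paper's separate computations for items 1 and 4, but the argument is the same.
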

\begin{proof}
    For Item 1, let $\mu \coloneq \langle \mbb{1}, X \rangle_{f,\sigma} $. Then, by linearity,
    \begin{align}
        \Var_{f,\sigma}(X) =& \langle X , X \rangle_{f,\sigma} - \mu \langle X ,\mbb{1} \rangle_{f,\sigma} - \mu^{2} + \mu^{2}\langle \mbb{1}, \mbb{1}\rangle_{f,\sigma} \\
        =& \langle X , X \rangle_{f,\sigma} - \mu \langle X ,\mbb{1} \rangle_{f,\sigma} - \mu^{2} + \mu^{2}\langle \mbb{1}, \mbb{1}\rangle_{f,\sigma} \\
        =& \langle X , X \rangle_{f,\sigma} - \mu \langle X , \sigma \rangle - \mu^{2} + \mu^{2}\langle \mbb{1}, \sigma \rangle \label{eq:variance-property-step-1} \\
        =& \langle X , X \rangle_{f,\sigma} - \mu \langle \sigma, X \rangle 
    \end{align}
    where the third equality uses that if $[\sigma,X] = 0$, $\mbf{J}_{f,\sigma}(X) = \sigma X$ and that $X = \mbb{1}$ in this case, and the fourth uses the assumption $X$ is Hermitian. Using $\mu = \Tr[\sigma X]$ completes the proof of this item. Item 2 follows directly from the definition by using \eqref{eq:f-covariance} and Proposition \ref{prop:ordering-of-NC-mult}. Item 3 follows from
    \begin{align}
        \Var_{f,\cE(\sigma)}(X) =& \langle X - \langle \mbb{1}, X \rangle_{f,\cE(\sigma)}\mbb{1}, \mbf{J}_{f,\cE(\sigma)}[X - \langle \mbb{1}, X \rangle_{f,\cE(\sigma)}\mbb{1}] \rangle \\
        \geq & \langle X - \langle \mbb{1}, X \rangle_{f,\cE(\sigma)}\mbb{1}, (\cE \circ \mbf{J}_{f,\sigma} \circ \cE^{\ast}) [X - \langle \mbb{1}, X \rangle_{f,\cE(\sigma)}\mbb{1}] \rangle \\
        =& \langle \cE^{\ast}(X) - \langle \mbb{1}, X \rangle_{f,\cE(\sigma)}\mbb{1},[\cE^{\ast}(X) - \langle \mbb{1}, X \rangle_{f,\cE(\sigma)}\mbb{1}] \rangle_{f,\sigma} \\
        =& \Var_{f,\sigma}(\cE^{\ast}(X))
    \end{align}
    where the inequality is Proposition \ref{prop:DPI-for-J-op}, the second equality uses that the adjoint of a trace-preserving map is unital, and the final equality uses 
    $$ \langle \mbb{1} , X \rangle_{f,\cE(\sigma)} = \Tr[\mbf{J}_{f,\cE(\sigma)}(X)] = \Tr[\cE(\sigma)X] = \Tr[\sigma \cE^{\ast}(X)] = \langle \mbb{1}, \cE^{\ast}(X) \rangle_{f,\sigma} \ , $$
    where we have used Proposition \ref{prop:mult-and-div-under-trace} and the definition of adjoint map. Item 4 follows from  $\langle \mbb{1}, X \rangle_{f,\sigma} = \Tr[\mbb{1}\mbf{J}_{f,\sigma}(X)] = \Tr[\sigma X] = \Tr[\sigma X^{\ast}] =0$ and plugging this into \eqref{eq:variance-property-step-1}.
\end{proof}

\section{The Choi Isomorphism, Operator Norms, and Hilbert Spaces}\label{sec:functional-analysis-tools}
The rest of this paper will make use of the $L^{2}_{f}(\sigma)$ spaces we just introduced. It will however make use of two other key pre-existing tools: the Choi isomorphism and maps between Hilbert spaces. The latter we mean in a more general sense than commonly used in contemporary quantum information theory, so we provide extra background for clarity of the work. The key point we aim to clarify is the relation between eigenvalues of an operator, operator norms of the operator, and certain optimization problems.

\paragraph{Inner Product and Hilbert Spaces}
Recall that an inner product space over field $\mbb{F}$, $(V,\langle \cdot , \cdot \rangle_{e})$ is a vector space $V$ with an inner product $\langle \cdot, \cdot \rangle_{e} : V \times V \to \mbb{F}$. Also recall that a Hilbert space over field $\mbb{F}$ is an inner product space $(V,\langle \cdot , \cdot \rangle_{e})$ that also is a complete metric space $(V,\Vert \cdot \Vert_{e})$ where $\Vert X \Vert_{e} \coloneq \sqrt{\langle X , X \rangle_{e}}$ is the canonical norm. Recalling that any finite dimensional normed space is complete, see e.g. \cite{Kreyszig-1991a}, as long as $\langle \cdot, \cdot \rangle_{e}$ is in fact an inner product on finite-dimensional vector space $V$, $(V, \Vert \cdot \Vert_{e})$ is a Hilbert space. 

As stated in Section \ref{sec:Preliminaries}, given a linear map $\Phi_{A \to A}$, we may define what we hope to be an inner product on a subset of linear operators $\Lin(A)$, $\langle X , Y \rangle_{\Phi} \coloneq \langle X, \Phi(Y) \rangle$. In this work, we will be exclusively interested in the case $\Phi = \mbf{J}^{p}_{f,\sigma}$ for $f$ being operator monotone, $\sigma \in \Density(A)$, and $p \in \{-1,1\}$, i.e. the inner products $\langle \cdot, \cdot \rangle_{f,\sigma}$ and $\langle \cdot, \cdot \rangle_{f,\sigma}^{\star}$ defined in \eqref{eq:inner-product definitions}. The relevant space of linear operators will generally be the following:
\begin{align}\label{eq:lin-op-inner-prod-space}
    \Lin(A|\sigma) := \{X \in \Lin(A) : \Pi_{\supp(\sigma)}X = X\Pi_{\supp(\sigma)} \} \ .
\end{align}
\sloppy This is the relevant space due to the following proposition, which establishes $L^{2}_{f}(\sigma) \coloneq (\Lin(A),\langle \cdot , \cdot \rangle_{f,\sigma})$ is a complex Hilbert space in the same manner $L^{2}(p)$ is a real Hilbert space.
\begin{proposition}
    For any operator monotone $f$, density matrix $\sigma$, and subspace $X \subset L(A\vert \sigma)$ $(X, \Vert \cdot \Vert_{f,\sigma})$ is a (complex) Hilbert space.
\end{proposition}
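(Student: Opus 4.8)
The plan is to lean on the observation recalled immediately before the statement—that every finite-dimensional inner product space is automatically complete—so that $(X,\norm{\cdot}_{f,\sigma})$ is a Hilbert space as soon as $\langle\cdot,\cdot\rangle_{f,\sigma}$ restricts to a genuine (positive-definite) inner product on the finite-dimensional complex vector space $X$. The whole task therefore reduces to checking the three inner-product axioms for $\langle X,Y\rangle_{f,\sigma}=\langle X,\mbf{J}_{f,\sigma}(Y)\rangle$ on $X\subseteq\Lin(A\vert\sigma)$; and since a linear subspace inherits the inner product of its ambient space, it is enough to verify that the form is a definite inner product on the full space on which it is used.

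First I would dispatch the two routine axioms. Sesquilinearity is inherited directly from the sesquilinearity of the Hilbert--Schmidt inner product together with the linearity of the map $\mbf{J}_{f,\sigma}$. Conjugate symmetry follows from Proposition~\ref{prop:J-operator-self-adjoint}, which gives that $\mbf{J}_{f,\sigma}$ is self-adjoint with respect to the Hilbert--Schmidt inner product: one writes $\langle X,Y\rangle_{f,\sigma}=\langle X,\mbf{J}_{f,\sigma}(Y)\rangle=\langle \mbf{J}_{f,\sigma}(X),Y\rangle=\overline{\langle Y,\mbf{J}_{f,\sigma}(X)\rangle}=\overline{\langle Y,X\rangle_{f,\sigma}}$, where the second equality is self-adjointness and the third is the conjugate symmetry of the Hilbert--Schmidt inner product.

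The substantive step, and the one I expect to be the main obstacle, is positive-definiteness. For non-negativity I would diagonalize $\sigma=\sum_i\lambda_i\dyad{\nu_i}$ and invoke the diagonal form \eqref{eq:MC-form}, namely $\langle X,X\rangle_{f,\sigma}=\sum_{i,j}P_f(\lambda_i,\lambda_j)\abs{X_{ij}}^2$, with $X_{ij}$ the matrix entries of $X$ in the eigenbasis of $\sigma$; since $f$ is a non-negative operator monotone function (as required for $\langle\cdot,\cdot\rangle_{f,\sigma}$ to arise from a monotone metric), its perspective function obeys $P_f\geq 0$ by the remark after \eqref{eq:perspective-function}, so the sum is non-negative. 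The delicate point is strict definiteness, i.e.\ that $\langle X,X\rangle_{f,\sigma}=0$ forces $X=0$, and this is exactly where the choice of space \eqref{eq:lin-op-inner-prod-space} matters. Elements of $\Lin(A\vert\sigma)$ commute with $\Pi_{\supp(\sigma)}$ and are hence block-diagonal across $\supp(\sigma)\oplus\supp(\sigma)^{\perp}$, so any surviving entry $X_{ij}$ has both indices in the support (where $\lambda_i,\lambda_j>0$) or both in the complement. The analytic input is that a nonzero operator monotone $f$ with $f(0^{+})\geq 0$ is strictly positive on $(0,\infty)$: such an $f$ is monotone, so if it vanished at some $x_0>0$ it would vanish on all of $(0,x_0]$, contradicting its being a nonconstant (strictly increasing) operator monotone function with $f(1)=1$. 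Hence $P_f(\lambda_i,\lambda_j)=\lambda_j f(\lambda_i/\lambda_j)>0$ whenever $\lambda_i,\lambda_j>0$. When $\sigma\in\Density_{+}$ one has $\Pi_{\supp(\sigma)}=\mathbb{1}$, so every term is strictly positive and $\langle X,X\rangle_{f,\sigma}=0$ already forces all $X_{ij}=0$; for general $\sigma$ the same conclusion holds once one works on the support of $\sigma$, where the complement block is absent, which is precisely the regime in which $\langle\cdot,\cdot\rangle_{f,\sigma}$ is employed. Combining the three verified axioms with automatic finite-dimensional completeness then yields that $(X,\norm{\cdot}_{f,\sigma})$ is a complex Hilbert space.
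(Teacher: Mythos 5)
Your proof is correct and takes essentially the same route as the paper's: the paper likewise reduces the claim to checking that $\langle \cdot,\cdot\rangle_{f,\sigma}$ is an inner product, treats all axioms as routine except positive-definiteness, settles that via \eqref{eq:MC-form} together with the restriction to $\Lin(A\vert\sigma)$, and invokes finite-dimensional completeness for the Hilbert-space structure. You are simply more explicit where the paper is terse, spelling out conjugate symmetry via self-adjointness of $\mbf{J}_{f,\sigma}$, the strict positivity of $P_{f}$ on the support block (via strict monotonicity of a nonconstant operator monotone function), and the block-diagonal structure that makes the support restriction deliver definiteness.
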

\begin{proof}
    As already explained, it suffices to check $\Vert \cdot \Vert_{f,\sigma}$ is an inner product. All properties are straightforward to verify from the definition except positive-definiteness. However, by \eqref{eq:MC-form}, $\langle X, \mbf{J}_{f,\sigma}(X)\rangle = 0$ implies that $\Pi_{\supp(\sigma)}X\Pi_{\supp(\sigma)} = 0$. It follows that positive definiteness holds on $\Lin(A \vert \sigma)$ and thus any subspace. Thus it is an inner product space and thus a Hilbert space. The subspace case then follows as it is itself a vector space and the same inner product works.
\end{proof}

We will want to often work with Hermitian operators, which are arguably the non-commutative extension of real-valued functions. The Hermitian operators are a \textit{real} vector space, so in that case we will also need to guarantee the inner product is real-valued on Hermitian operators to guarantee we have a real Hilbert space.\footnote{Note this means that if we instead defined $L^{2}_{f}(\sigma) \coloneq (\Herm(A),\langle \cdot , \cdot \rangle_{f,\sigma})$, we would have that it even is the same type of Hilbert space as $L^{2}(p)$. We do not do this because we will define a set of maximal correlation coefficients that use linear operators, thereby recovering the key definition of \cite{Beigi-2013a}.}
\begin{proposition}\label{prop:orthogonal-Herm-is-Hilbert} Let $f$ be symmetry-inducing, $\sigma \in \Density(A)$, and $p \in (-\infty,+\infty)$. For any subspace of $\Lin(A \vert \sigma) \cap \Herm(A)$ is a real Hilbert space when equipped with the inner product induced by the HS inner product and linear map $\mbf{J}^{p}_{f,\sigma}$.
\end{proposition}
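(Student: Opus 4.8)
The plan is to mirror the proof of the preceding proposition, which established the complex case for $p=1$, while supplying the two additional ingredients needed here: that restricting to Hermitian operators yields a \emph{real} inner product, and that the argument goes through for every real power $p$ rather than only $p=1$. As recalled in the discussion of inner product spaces above, any finite-dimensional inner product space is automatically complete, so it suffices to check that $\langle \cdot, \cdot \rangle_{\mbf{J}^{p}_{f,\sigma}}$ is a genuine real inner product on the given subspace $W \subseteq \Lin(A\vert\sigma) \cap \Herm(A)$.

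First I would observe that $W$ is a real vector space: $\Herm(A)$ is a real vector space, $\Lin(A\vert\sigma)$ is closed under (real) linear combinations, and a subspace inherits both properties. Next, because $f$ is symmetry-inducing, Proposition \ref{prop:symmetry-inducing-equivalences} (Items 3 and 4) applies for every $p \in (-\infty,+\infty)$: $\mbf{J}^{p}_{f,\sigma}$ is Hermitian-preserving and $\langle A, B \rangle_{\mbf{J}^{p}_{f,\sigma}} = \langle B, A \rangle_{\mbf{J}^{p}_{f,\sigma}} \in \mbb{R}$ for all $A, B \in \Herm(A)$. This supplies exactly the features distinguishing a real inner product from the complex one, namely that the form is real-valued and symmetric; real-bilinearity is then immediate, since the HS inner product is $\mbb{R}$-bilinear once scalars are real and $\mbf{J}^{p}_{f,\sigma}$ is linear.

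The remaining point is positive-definiteness, which I would treat exactly as in the preceding proposition but with the general-power identity \eqref{eq:general-Hadamard-prod-form} in place of \eqref{eq:MC-form}. Expanding $X$ in the eigenbasis of $\sigma$ gives $\langle X, \mbf{J}^{p}_{f,\sigma}(X)\rangle = \sum_{i,j} P_{f}(\lambda_{i},\lambda_{j})^{p}\,\abs{X_{ij}}^{2} \geq 0$, since $P_{f} \geq 0$ for monotone $f$. Vanishing of this sum forces $X_{ij} = 0$ at every pair with $P_{f}(\lambda_{i},\lambda_{j}) > 0$; for $X \in \Lin(A\vert\sigma)$ this yields $\Pi_{\supp(\sigma)} X \Pi_{\supp(\sigma)} = 0$, which is precisely the definiteness statement already established for $\Lin(A\vert\sigma)$, and positive-definiteness on a space restricts to any subspace $W$. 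Assembling the three ingredients, $(W, \langle \cdot, \cdot\rangle_{\mbf{J}^{p}_{f,\sigma}})$ is a finite-dimensional real inner product space and hence a real Hilbert space.

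I expect the only genuinely load-bearing step to be the real-valuedness and symmetry of the form on Hermitian operators: this is exactly where the symmetry-inducing hypothesis is used, and without it the form is in general complex-valued, as one sees by conjugating the sum above and comparing the weights $P_{f}(\lambda_{i},\lambda_{j})^{p}$ and $P_{f}(\lambda_{j},\lambda_{i})^{p}$. Everything else is a bookkeeping transcription of the complex-case argument, with the power-$p$ Hadamard representation substituted for the $p=1$ one.
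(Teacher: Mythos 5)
Your proposal is correct and follows essentially the same route as the paper's proof: invoke Proposition \ref{prop:symmetry-inducing-equivalences} for Hermitian-preservation and real symmetry of the form on $\Herm(A)$, then inherit the inner-product axioms (in particular positive-definiteness on $\Lin(A\vert\sigma)$) from the complex case, with finite-dimensionality giving completeness. Your only deviation is that you re-verify positive-definiteness for general $p$ via \eqref{eq:general-Hadamard-prod-form} rather than citing the $p=1$ case, which is if anything slightly more careful than the paper, since the preceding proposition was stated only for $\langle \cdot,\cdot\rangle_{f,\sigma}$.
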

\begin{proof}
    By Proposition \ref{prop:symmetry-inducing-equivalences}, $\mbf{J}_{f,\sigma}^{p}$ is Hermitian preserving. Thus $\langle A , \mbf{J}^{p}_{f,\sigma}(B) \rangle \in \mbb{R}$ as it is the Hilbert-Schmidt inner product between two Hermitian operators. Moreover, we already know it satisfies the conditions for being an inner product on linear operators, so this completes the proof.
\end{proof}

\paragraph{Choi Representation}
The Choi representation is a bijection between $\Trans(A,B)$ and $\Lin(A) \otimes \Lin(B)$. There is some freedom in our choice of bijection, and we choose the following. For $\cE \in \Trans(A,B)$
\begin{align}\label{eq:action-of-Choi}
    \Omega_{\cE} = (\id_{A} \otimes \cE)(\Phi^{+}) = \sum_{i,j} \ket{i}\bra{j} \otimes \cE(\ket{i}\bra{j}) \ ,
\end{align}
where $\Phi^{+} = \dyad{\Phi^{+}}$ and $\ket{\Phi^+} = \sum_{i} \ket{i}\ket{i}$ is the unnormalized maximally entangled state. A direct calculation will find the action of a channel may be expressed in terms of the Choi operator:
\begin{align}
    \cE(X) = \Tr_{A}[X^{T} \otimes I_{B} \Omega_{\cE}] \ . 
\end{align}
We will make use of this identity subsequently.

\paragraph{Connecting Operator Norms and Optimization}

Recall that given a linear map $\Lambda \in \Trans(A,B)$, if $V \subset \Lin(A), W \subset (B)$ are vector spaces and $\mathscr{X} = (\Lin(A),\Vert \cdot \Vert_{X})$, $\mathscr{Y} = (\Lin(B), \Vert \cdot \Vert_{Y})$ are normed spaces, then one may define the (map) norm \cite[Proposition 2.1]{Conway-1985a} 
\begin{align}
    \Vert \Lambda \Vert_{\mathscr{X} \to \mathscr{Y}} =& \sup\{ \Vert \Lambda(x) \Vert_{Y}: 0 \neq x \in V \, , \, \Vert x \Vert_{X} \leq 1 \} \\
    =& \sup\left\{ \frac{\Vert \Lambda(x) \Vert_{Y}}{\Vert x \Vert_{X}}: 0 \neq x \in V \right\} \ . \label{eq:map-norm-ratio-form} 
\end{align}
In the case the normed spaces are given the further structure that they are Hilbert spaces, one obtains the following, which is an immediate consequence of the Riesz-Representation theorem (a proof is provided in the appendix for completeness).
\begin{lemma}\label{lem:map-norms-as-optimizations}
    Let $X,Y$ be finite-dimensional vector spaces. Let $\mathscr{X} = (X,\langle \cdot, \cdot \rangle_{X})$, $\mathscr{Y} = (Y,\langle \cdot, \cdot \rangle_{Y})$ be Hilbert spaces. Let $\Lambda: X \to Y$ be a linear map. Then
    \begin{align}\label{eq:map-norm-as-optimization}
        \Vert \Lambda \Vert_{\mathscr{X} \to \mathscr{Y}} = \max\{ \vert \langle y, \Lambda(x) \rangle_{Y} \vert : x \in X, \, , y \in Y, \, \Vert x \Vert_{X} = 1 \, , \, \Vert y \Vert_{Y} = 1 \} \ .
    \end{align}
\end{lemma}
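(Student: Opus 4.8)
The plan is to reduce the two-variable optimization on the right-hand side of \eqref{eq:map-norm-as-optimization} to the single-variable definition of the map norm by invoking the standard Hilbert-space fact that the norm of a vector is recovered by maximizing its inner product against unit vectors. First I would recall the unit-ball form of the map norm from \eqref{eq:map-norm-ratio-form}, namely $\Vert \Lambda \Vert_{\mathscr{X} \to \mathscr{Y}} = \sup\{\Vert \Lambda(x) \Vert_Y : x \in X, \Vert x \Vert_X = 1\}$, so that the task becomes re-expressing the inner norm $\Vert \Lambda(x) \Vert_Y$ as an optimization over the codomain.

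The heart of the argument is the pointwise identity that for every $z \in Y$ one has $\Vert z \Vert_Y = \max\{|\langle y, z \rangle_Y| : \Vert y \Vert_Y = 1\}$. This is precisely the isometric content of the Riesz representation theorem: it asserts that the functional $y \mapsto \langle y, z\rangle_Y$ has operator norm equal to $\Vert z \Vert_Y$. The upper bound $|\langle y, z\rangle_Y| \le \Vert z \Vert_Y$ for unit $y$ is Cauchy-Schwarz, and equality is attained by choosing $y = z/\Vert z \Vert_Y$ when $z \neq 0$, while for $z = 0$ both sides vanish trivially. I would verify this identity as the one nontrivial ingredient.

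Next I would substitute $z = \Lambda(x)$ and combine the two suprema into a single optimization:
\begin{align*}
\Vert \Lambda \Vert_{\mathscr{X} \to \mathscr{Y}} = \sup_{\Vert x \Vert_X = 1} \Vert \Lambda(x) \Vert_Y = \sup_{\Vert x \Vert_X = 1} \; \sup_{\Vert y \Vert_Y = 1} |\langle y, \Lambda(x)\rangle_Y| = \sup\bigl\{ |\langle y, \Lambda(x)\rangle_Y| : \Vert x \Vert_X = 1, \, \Vert y \Vert_Y = 1 \bigr\} \ .
\end{align*}
Finally, since $X$ and $Y$ are finite-dimensional, their unit spheres are compact and both $\Lambda$ and the inner product are continuous, so the supremum is attained and may be written as the maximum appearing in \eqref{eq:map-norm-as-optimization}.

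I do not anticipate any genuine obstacle here, as the result is classical; the only points that require care are handling the degenerate case $z = 0$ in the pointwise norm identity and explicitly invoking compactness of the unit spheres, valid in finite dimensions, to upgrade the supremum to a maximum. The argument is purely functional-analytic and is independent of the specific structure of the $L^{2}_{f}(\sigma)$ spaces, which is why it applies uniformly whenever those spaces are instantiated as the Hilbert spaces $\mathscr{X}$ and $\mathscr{Y}$.
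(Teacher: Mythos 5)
Your proposal is correct and follows essentially the same route as the paper's proof: both express $\Vert \Lambda(x) \Vert_{Y}$ as a supremum of inner products against unit vectors and then invoke compactness of the unit spheres in finite dimensions to upgrade the supremum to a maximum. The only cosmetic difference is that the paper reaches the pointwise identity $\Vert z \Vert_{Y} = \sup_{\Vert y \Vert_{Y} \leq 1} \vert \langle y, z \rangle_{Y} \vert$ by combining a dual-space norm formula with the Riesz representation theorem, whereas you prove it directly via Cauchy--Schwarz with the explicit maximizer $y = z/\Vert z \Vert_{Y}$, which is a slightly more self-contained instantiation of the same idea.
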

\noindent While simple, this variational form of an operator norm will be our connection to optimization problems and will be central our approach to relating maximal correlation and $\chi^{2}$-contraction coefficients.

Note Lemma \ref{lem:map-norms-as-optimizations} combined with \eqref{eq:action-of-Choi} tells us that if the output inner product space uses the Hilbert-Schmidt inner product, it may be re-expressed in terms of the Choi operator:
\begin{align}\label{eq:map-norm-in-terms-of-Choi}
    \Vert \Lambda \Vert_{\mathscr{X} \to L_{2}} = \max\{ \Tr[x^{\Trans} \otimes  y \Omega_{\Lambda} ] \vert : x \in X, \, , y \in Y, \, \Vert x \Vert_{X} = 1 \, , \, \Vert y \Vert_{2} = 1 \} \ .
\end{align}
This has been noted and utilized previously in \cite{Delgosha-2014a}. In the case the input norm is invariant under the transpose, such as when it uses the HS inner product, we may also remove the transpose.

\paragraph{Singular Values and Schmidt Decompositions}
Lastly, we note a relation between the singular values of a linear map $\Lambda: X \to Y$ and the Schmidt decomposition of its Choi operator in $(X, \langle \cdot , \cdot \rangle)$ that was observed in \cite{Beigi-2013a,Delgosha-2014a}. 
\begin{proposition}\label{prop:Schmidt-to-sing}
Let $X \subset \Lin(A), Y \subset \Lin(B)$ such that  $\mathscr{X} = (X,\langle \cdot, \cdot \rangle)$, $\mathscr{Y} = (Y,\langle \cdot, \cdot \rangle)$ are Hilbert spaces. Consider a linear map $\Lambda: X \to Y$. The Schmidt decomposition of $\Omega_{\Lambda}$ with respect to $\mathscr{X}$, $\mathscr{Y}$ are the singular values of $\Lambda$ and vice-versa.
\end{proposition}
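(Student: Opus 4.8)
The plan is to use the Choi isomorphism as a linear bijection between maps $\Lambda \colon \mathscr{X} \to \mathscr{Y}$ and vectors $\Omega_{\Lambda} \in \mathscr{X} \otimes \mathscr{Y}$, and to show that under this bijection a singular value decomposition of $\Lambda$ and a Schmidt decomposition of $\Omega_{\Lambda}$ carry the same nonnegative coefficients. The observation driving the proof is that both objects have the same shape: an SVD of $\Lambda$ is a pair of HS-orthonormal families weighted by nonnegative scalars, and so is a Schmidt decomposition of $\Omega_{\Lambda}$. So I would simply transport one decomposition into the other and read off the coefficients.

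Concretely, I would run the recovery direction first since it is unambiguous. Start from a Schmidt decomposition $\Omega_{\Lambda} = \sum_{k} s_{k}\, P_{k} \otimes Q_{k}$ with $\{P_{k}\} \subset X$ and $\{Q_{k}\} \subset Y$ orthonormal in the HS inner product and $s_{k} \geq 0$, and apply the recovery identity $\Lambda(Z) = \tr_{A}[(Z^{\t} \otimes I)\,\Omega_{\Lambda}]$. A one-line computation gives $\tr_{A}[(Z^{\t}\otimes I)(P_{k}\otimes Q_{k})] = \tr[Z^{\t} P_{k}]\, Q_{k} = \langle \ol{P_{k}}, Z\rangle\, Q_{k}$, hence $\Lambda(Z) = \sum_{k} s_{k}\, \langle \ol{P_{k}}, Z\rangle\, Q_{k}$. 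Since entrywise conjugation is an anti-isometry for the HS inner product, $\{\ol{P_{k}}\}$ is again orthonormal, so this is precisely an SVD of $\Lambda$ with singular values $\{s_{k}\}$; this proves the Schmidt coefficients are singular values. For the reverse implication I would plug an SVD $\Lambda(u_{i}) = s_{i} v_{i}$ (with $\{u_{i}\}$ orthonormal in $\mathscr{X}$ and $\{v_{i}\}$ in $\mathscr{Y}$) into the Choi construction to get $\sum_{i} u_{i}\otimes \Lambda(u_{i}) = \sum_{i} s_{i}\, u_{i}\otimes v_{i}$, a Schmidt-form expression with coefficients $\{s_{i}\}$; by uniqueness of the Schmidt spectrum the two multisets coincide, giving the equality in both directions.

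The step requiring the most care — and the main obstacle — is the complex conjugation (partial transpose) that the Choi isomorphism silently introduces between the two factors. It surfaces as the $\ol{P_{k}}$ above: the right singular vectors of $\Lambda$ are the conjugates of the first Schmidt vectors rather than those vectors themselves, so one must track this to state the vector correspondence correctly, even though the nonnegative coefficients — the content of the proposition — match exactly. Relatedly, evaluating the Choi construction in an arbitrary complex orthonormal basis only reproduces $\Omega_{\Lambda}$ up to a unitary acting on the first tensor factor, so in the reverse direction I would note that such a one-sided unitary preserves Schmidt coefficients. A final bookkeeping point is to keep the two orthonormal families inside the prescribed subspaces $X$ and $Y$, so that the decomposition is genuinely a Schmidt decomposition in $\mathscr{X}\otimes\mathscr{Y}$.
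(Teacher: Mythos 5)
Your proposal is correct and takes essentially the same route as the paper's proof: both transport a Schmidt decomposition of $\Omega_{\Lambda}$ through the recovery identity $\Lambda(Z) = \tr_{A}[(Z^{T}\otimes I)\,\Omega_{\Lambda}]$ and track the transpose/conjugation the Choi isomorphism introduces, except that where the paper computes $\Lambda^{\ast}\circ\Lambda$ and exhibits $\{M_{i}^{T}\}$ as its eigenvectors with eigenvalues $\lambda_{i}^{2}$, you read off an SVD directly from $\Lambda(Z)=\sum_{k} s_{k}\,\langle \ol{P_{k}}, Z\rangle\, Q_{k}$ — the same computation in marginally shorter form. Your explicit converse (a basis change in the Choi construction is a one-sided unitary on the first factor, which preserves Schmidt coefficients) is a small addition the paper leaves implicit via uniqueness of the singular and Schmidt spectra, and your flagging of the conjugates $\ol{P_{k}}$ needing to remain in the subspace $X$ is a bookkeeping point the paper also glosses over.
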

\begin{proof}
By assumption $\mathscr{X},\mathscr{Y}$ are Hilbert spaces. Therefore, we may apply the Schmidt decomposition  of $\Omega_{\Lambda}$, $\Omega_{\Lambda} = \sum_{i} \lambda_{i} M_{i} \otimes N_{i}$ where $\{\lambda_{i}\}_{i}$ are real and non-negative, and $\{M_{i}\}_{i}$, $\{N_{i}\}_{i}$ form orthonormal bases (with respect to HS inner product) for $X$ and $Y$ respectively. It is known that the adjoint map (with respect to the Hilbert Schmidt inner product) $\Lambda^{\ast}$ satisfies $\Omega_{\Lambda^{\ast}} = \mbb{F}(\Omega_{\Lambda})^{T}\mbb{F}$ \cite{George-2024a}, but we verify it more generally here:
\begin{align}
    \langle y, \Lambda(x) \rangle = \Tr[x^{T} \otimes y \Omega_{\Lambda}] = \sum_{i} \lambda_{i} \Tr[x^{T} \otimes y (M_{i} \otimes N_{i})] =& \sum_{i} \lambda_{i} \Tr[x \otimes y^{T} (M_{i}^{T} \otimes N_{i}^{T})] \\
    =& \Tr[y^{T} \otimes x \mbb{F}(\Omega_{\Lambda})^{T}\mbb{F}] \\
    =& \langle \Lambda^{\ast}(y), x \rangle \ .
\end{align}
Then we have the equations
    \begin{align}
        \Lambda(X) =& \Tr_{A}[(X^{T} \otimes I)\Omega_{\Lambda}] = \sum_{i} \lambda_{i} \Tr_{A}[(X^{T} \otimes I) M_{i} \otimes N_{i}] = \sum_{i} \lambda_{i} \langle X, M_{i}^{T} \rangle N_{i} \label{eq:action-in-terms-of-sing-values} \\
        \Lambda^{\ast}(Y) =& \Tr_{A}[(Y^{T} \otimes I) \Omega_{\Lambda^{\ast}}] = \sum_{i} \lambda_{i} \Tr_{A}[(Y^{T} \otimes I) N^{T}_{i} \otimes M^{T}_{i}] = \sum_{i} \lambda_{i} \langle Y^{\ast}, N_{i} \rangle M_{i}^{T} \ .
    \end{align}
    From this, we obtain 
    \begin{align}(\Lambda^{\ast} \circ \Lambda)(X) = \sum_{i'} \lambda_{i'} \langle \Lambda(X)^{\ast}, N_{i'} \rangle M_{i'}^{T} = \sum_{i,i'} \lambda_{i'}\lambda_{i} \ol{\langle X, M_{i}^{T} \rangle} \langle N_{i}^{\ast}, N_{i'} \rangle M_{i'}^{T} = \sum_{i} \lambda_{i}^{2} \ol{\langle X, M_{i}^{T} \rangle} M_{i}^{T} \ ,
    \end{align}
    where we have used the orthonormality of the $N_{i}$. As the $\{M_{i}\}$ are an orthonormal basis (ONB), so too are $\{M_{i}^{T}\}$, which thus are the eigenvectors of $\Lambda^{\ast} \circ \Lambda$, making $\lambda_{i}^{2}$ the eigenvalues. This completes the proof.
\end{proof}

\section{Quantum Maximal Correlation Coefficients}\label{sec:q-maximal-correlation-coeff}
In this section, we develop a framework of quantum maximal correlation coefficients using the non-commutative $L^{2}_{f}(\sigma)$ spaces with the aim of extending the operational interpretations of the classical maximal correlation coefficient in the quantum setting (See Fig.~\ref{fig:CIT-vs-QIT}). We begin by introducing families of quantum maximal correlation coefficients and establishing their fundamental properties, including consistency with the classical case, invariance under isometries, and satisfying the data processing inequality (Theorem \ref{thm:properties-of-maximal-corr-coeffs}). 
We then analyze the conditions under which these coefficients attain their extreme values (Section \ref{subsec:extreme-values-and-ACD}). While these extremal cases are mathematically natural, they also carry operational significance. In the classical setting, $\mu(X:Y)_{p}$ characterizes independence and $\mu(X:Y)_{p}=1$ characterizes the ability to extract perfect classical correlations via local operations both in one-shot and asymptotic regimes. In the quantum setting, we show that all maximal correlation coefficients vanish if and only if the systems $A$ and $B$ are independent and that for all $f \geq f_{GM}$, they are bounded above by one. We establish the equivalent conditions for perfect classical correlations extracted in the one-shot setting using local operations, which includes $\mu_{AM}(A:B)_{\rho} = 1$ (Theorem \ref{thm:extreme-values-summary}).
Finally, in Section \ref{subsec:strong-monotones}, we identify a family of quantum maximal correlation coefficients that tensorize, and hence define strong monotones for state transformations under local operations (Theorem \ref{thm:k-correlation-nec-for-local-processing}). This result can be further strengthened for the coefficient corresponding to the geometric mean $f=f_{GM}$ (Theorem \ref{thm:mu-GM-as-monotone}). As an application, we derive new necessary conditions for asymptotically extracting perfect correlations (Theorem \ref{thm:asymptotic-data}).In Table \ref{table:q-max-corr-coeff}, we summarize many of the results of this section and where to find them.
\begin{table}
\centering
\begin{tabular}{>{\centering\arraybackslash}p{8.5cm} | >{\centering\arraybackslash}p{8.5cm}}\hline\hline
Property & Operator Monotone Functions \\\hline\hline
Data Processing Under Local Operations & $f$ such that $f(1) = 1$ (Theorem \ref{thm:properties-of-maximal-corr-coeffs}) \\  \hline
$\mu_{f}^{\text{Lin}}(A:B)_{\rho},\mu_{f}(A:B)_{\rho} = 0$ iff $\rho_{AB} = \rho_{A} 
\otimes \rho_{B}$ & $f$ such that $f(1) = 1$ (Theorem \ref{thm:extreme-values-summary}) \\ \hline
$\mu_{f}(A:B)_{\rho},\mu_{f}^{\text{Lin}}(A:B)_{\rho} \leq 1$ & $f \geq f_{GM}$ (Theorem \ref{thm:extreme-values-summary}), $\{f_{k}\}_{k \in [0,1]}$ (Lemma \ref{lem:map-norm-is-1-for-k-correlation-coeff}) \\ \hline
Characterizes Exact Correlation Extraction & $f_{AM}$ (Theorem \ref{thm:extreme-values-summary}) \\ \hline
$\mu^{\text{Lin}}_{f}(AA':BB')_{\rho \otimes \sigma} = \max\{\mu^{\text{Lin}}_{f}(A:B)_{\rho},\mu^{\text{Lin}}_{f}(A':B')_{\sigma}\}$ & $\{f_{k}\}_{k \in [0,1]}$ (Lemma \ref{lem:k-correlation-tensorize})  \\ \hline
Necessary Conditions for ACD & $\{f_{k}\}_{k \in [0,1]}$ (Theorem \ref{thm:asymptotic-data}) \\
\hline\hline
\end{tabular}
\caption{Summary of important properties of quantum maximal correlation coefficients, which operator monotone functions we establish them for, and where they are proven in this section.
}
\label{table:q-max-corr-coeff}
\end{table}

\subsection{Quantum Maximal Correlation Coefficients and Data Processing}\label{subsec:quantum-maximal-corr-coeff-and-DPI}
\sloppy To motivate our quantum definitions, we begin by recalling the definition of the classical maximal correlation coefficient. Given a joint distribution $p_{XY}$, the classical (Hirschfeld-Gebelein-R\'{e}nyi) maximal correlation coefficient \cite{Hirschfeld-1935a,Gebelein-1941a,Renyi-1959sieve,Renyi-1959measuresofdep} is given by 
\begin{equation}\label{eq:maximal-correlation-coefficient}
    \begin{aligned}
    \mu(X:Y)_{p} \coloneq& \sup \vert \mbb{E}_{p_{XY}}[f(X)g(Y)] \vert \\
    & \mbb{E}_{p_{X}}[f(X)] = \mbb{E}_{p_{Y}}[g(Y)] = 0 \\
    & \mbb{E}_{p_{X}}[f(X)^{2}] = \mbb{E}_{p_{Y}}[g(Y)^{2}] = 1 \ , 
    \end{aligned}
\end{equation}
where $f,g$ are real-valued, measurable functions on the corresponding spaces. We have used subscripts to clarify what distributions control which expectations. We remark that because the expectation is required to be zero, the second set of constraints is equivalent to $\Var_{p_{X}}[f(X)] = 1 = \Var_{p_{Y}}[g(Y)]$. Furthermore, note that $\mbb{E}_{p}[f(X)g(Y)] = \Tr[F \otimes G^{\ast} p_{XY}]$ where $F(i,j) = \begin{cases} f(x) & i = j = x \\ 0 & \text{o.w.} 
\end{cases}$
and $G$ is similarly defined. Given Propositions \ref{prop:mult-and-div-under-trace} and \ref{prop:variance-properties}, we can define quantum maximal correlation coefficients with our non-commutative expectation and variance.
\begin{tcolorbox}[width=\linewidth, sharp corners=all, colback=white!95!black, boxrule=0pt,frame hidden]
\begin{definition}\label{def:f-quantum-max-corr}
    For any normalized operator monotone $f:\mbb{R}_{+} \to \mbb{R}_{\geq 0}$, we define the $f$-quantum maximal correlation coefficient of $\rho_{AB}$ as
    \begin{equation}
    \begin{aligned}
    \label{eq:f-quantum-max-corr}
        \mu_{f}(A:B)_{\rho} \coloneq \max \; & \vert \Tr[X \otimes Y^{\ast}\rho_{AB}] \vert \\
        \text{s.t.} \; & \langle \mbb{1}, X \rangle_{f,\rho_{A}} = \langle \mbb{1}, Y \rangle_{f,\rho_{B}} = 0 \\
        \; &  \langle X, X \rangle_{f,\rho_{A}} = \langle Y, Y \rangle_{f,\rho_{B}} = 1 \ ,
    \end{aligned}
    \end{equation}
    where the maximization is over $X \in \Herm(A)$, $Y \in \Herm(B)$. We may also relax it to supremizing over the linear operators, which we denote by $\mu_{f}^{\text{Lin}}(A:B)_{\rho}$.
\end{definition}
\end{tcolorbox}

For any operator monotone $f$, the constraints linear in $X,Y$ are equivalent to $\Tr[\rho_{A}X] = 0 = \Tr[\rho_{B}Y]$ by Proposition \ref{prop:mult-and-div-under-trace}. The quadratic terms are equivalent to $\Var_{f,\rho_{A}}[X] = 1 = \Var_{f,\rho_{B}}[Y]$ via Item 4 of Proposition \ref{prop:variance-properties}. Thus, considering \eqref{eq:maximal-correlation-coefficient}, we may view these as natural generalizations of the maximal correlation coefficient.
The following shows our family recovers the two previously defined quantum maximal correlation coefficients that were shown to have any operational relevance.
\begin{proposition}\label{prop:previous-q-max-corr-coeffs}
    The quantum maximal correlation coefficient defined in \cite{Beigi-2013a}, which we denote $\mu_{B}(A:B)_{\rho}$ satisfies $\mu_{B}(A:B)_{\rho} = \mu_{f_{1}}^{\text{Lin}}(A:B)_{\rho}$. The quantum maximal correlation coefficient denoted $\mu_{\kappa^{1/2}}$ in \cite{Cao-2019a} is $\mu_{f_{1/2}}^{\text{Lin}}(A:B)_{\rho}$.
\end{proposition}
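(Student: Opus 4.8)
The plan is to prove the two identifications by unwinding the definitions of $\mu_B$ and $\mu_{\kappa^{1/2}}$ from the cited works and matching them term-by-term against Definition~\ref{def:f-quantum-max-corr} with the specific function choices $f_1(x) = x$ and $f_{1/2}(x) = \sqrt{x}$. The essential observation is that both previously-defined coefficients are optimization problems of exactly the same shape as \eqref{eq:f-quantum-max-corr}---an objective $\abs{\Tr[X \otimes Y^{\ast} \rho_{AB}]}$ subject to a centering constraint and a normalization constraint---so the content of the proof is entirely in verifying that the inner products $\langle \cdot, \cdot \rangle_{f,\sigma}$ appearing in the constraints coincide with the (possibly differently-phrased) constraints in \cite{Beigi-2013a} and \cite{Cao-2019a} for these two functions.

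First I would treat Beigi's coefficient. The key computation is to evaluate $\mbf{J}_{f_1,\sigma}$ explicitly: since $f_1(x) = x$ is the arithmetic identity with $P_{f_1}(\lambda_i,\lambda_j) = \lambda_j f_1(\lambda_i/\lambda_j) = \lambda_i$, we get $\mbf{J}_{f_1,\sigma} = L_\sigma$, i.e. $\mbf{J}_{f_1,\sigma}(Y) = \sigma Y$, directly from the Hadamard-product form \eqref{eq:Hadamard-prod-form} or equivalently from \eqref{eq:J-operator-func-expansion}. Consequently $\langle X, Y \rangle_{f_1,\sigma} = \Tr[X^\ast \sigma Y]$, the centering constraint $\langle \mbb{1}, X\rangle_{f_1,\rho_A}=0$ becomes $\Tr[\rho_A X] = 0$ (consistent with Proposition~\ref{prop:mult-and-div-under-trace}), and the normalization $\langle X, X\rangle_{f_1,\rho_A}=1$ becomes $\Tr[X^\ast \rho_A X] = 1$. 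I would then cite the definition in \cite{Beigi-2013a} and observe these are precisely Beigi's constraints (his $\rho_A$-weighted inner product is exactly the $f_1$ inner product), so the two optimization problems are identical as problems over linear operators, giving $\mu_B(A:B)_\rho = \mu^{\text{Lin}}_{f_1}(A:B)_\rho$.

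Second I would treat the coefficient from \cite{Cao-2019a}. Here $f_{1/2}(x) = \sqrt{x}$ is the geometric mean, with $P_{f_{1/2}}(\lambda_i,\lambda_j) = \sqrt{\lambda_i \lambda_j}$, so $\mbf{J}_{f_{1/2},\sigma}(Y) = \sigma^{1/2} Y \sigma^{1/2}$ and the inner product is $\langle X, Y\rangle_{f_{1/2},\sigma} = \Tr[X^\ast \sigma^{1/2} Y \sigma^{1/2}]$. The normalization constraint becomes $\Tr[X^\ast \sigma^{1/2} X \sigma^{1/2}] = 1$, which is the symmetric ``$\kappa^{1/2}$'' weighting used in \cite{Cao-2019a}. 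Again I would match the centering and normalization constraints against the definition of $\mu_{\kappa^{1/2}}$ there and conclude equality of the two linear optimizations.

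The main obstacle is not mathematical difficulty but notational reconciliation: the cited works phrase their constraints and objectives using their own conventions (placement of adjoints/transposes, whether the weighting is written as left-multiplication versus symmetric conjugation, normalization of the reference state, and the convention for the bilinear pairing $\Tr[X\otimes Y^\ast \rho_{AB}]$ versus their bracket). The delicate step is verifying that these conventions agree once unwound---in particular that the centering conditions, which in \eqref{eq:f-quantum-max-corr} are $\langle \mbb{1}, X\rangle_{f,\rho_A} = 0$, reduce to the same trace-orthogonality condition $\Tr[\rho_A X]=0$ independent of $f$ (guaranteed by Proposition~\ref{prop:mult-and-div-under-trace}), and that the objective functional transcribes correctly under the chosen Choi/transpose convention of \eqref{eq:action-of-Choi}. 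Once the dictionary between conventions is fixed, both claims follow by direct inspection.
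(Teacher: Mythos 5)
Your proposal is correct and follows essentially the same route as the paper: the paper's proof is exactly the direct calculation that $\mbf{J}_{f_{1},\rho}(Y)=\rho Y$ (giving the constraint $\Tr[\rho XX^{\ast}]=1$ of \cite{Beigi-2013a}) and that $\mbf{J}_{f_{1/2},\rho}(Y)=\rho^{1/2}Y\rho^{1/2}$ coincides with the operator $\mho_{\rho}^{\kappa^{1/2}}$ of \cite{Cao-2019a}, after which both identifications follow by inspection of the optimization problems. Your additional remark that the centering constraints reduce to $\Tr[\rho_{A}X]=0$ independently of $f$ via Proposition~\ref{prop:mult-and-div-under-trace} is consistent with the paper's setup and introduces no discrepancy.
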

\begin{proof}
    By a direct calculation using \eqref{eq:J-operator-func-expansion}, $\langle X , X \rangle_{f_{0},\rho} = \Tr[X^{\ast}X\rho]$ and $\langle X, X \rangle_{f_{1},\rho} = \Tr[\rho XX^{\ast}]$. Thus $f_{1}$ directly recovers the definition in \cite{Beigi-2013a}. Similarly, a straightforward calculation will verify the operator defined in \cite{Cao-2019a} $\mho_{\rho}^{\kappa^{1/2}}$ is equal to $ \mbf{J}_{f_{GM},\rho}$, which is sufficient to recover $\mu_{f_{GM}}^{\text{Lin}}$ from \cite[Definition 16]{Cao-2019a}.
\end{proof}
\noindent We remark that the ``quantum maximal correlation coefficients" proposed in \cite{Cao-2019a} other than $\mu_{GM}^{\text{Lin}}$ likely are not recovered. This is because in \cite{Cao-2019a} the variance terms are defined by the inner products that may be expressed as $\langle X, \mbf{J}_{f_{GM},\sigma} \circ \mbf{J}_{f,\sigma}^{-1} \circ \mbf{J}_{f_{GM},\sigma}[X]\rangle$. While it is clear in the case $f = f_{GM}$ is chosen for the middle map this inner product simplifies to $\langle X , \mbf{J}_{f_{GM},\sigma}[X] \rangle$ and Proposition \ref{prop:DPI-for-J-op} may be appealed to, it is unclear what can be appealed to otherwise.

The above discussion justifies the quantum maximal correlation coefficients as generalizations of the classical maximal correlation coefficient from a probabilistic perspective. To further justify the definition, we aim to establish 3 properties: that their value can only decrease under local actions, that they are isometrically invariant, and that the quantities recover the classical maximal correlation coefficient when the joint state is classical. The first property is desirable as it is a normal axiom for any correlation measure, because acting independently on a system ought to only decrease how correlated it is with another system. Moreover, this property is critical for applications in information processing. The second property, the isometric invariance, is needed to justify that the mathematical quantity is \textit{physically} relevant as it shows the measure is independent of the matrix representation of the state $\rho_{AB}$, and the final property formally verifies that quantum maximal correlation coefficients are indeed quantum probabilistic generalizations of the classical maximal correlation coefficient. For simplicity of presentation, the following theorem summarizes that we establish these properties.
\begin{tcolorbox}[width=\linewidth, sharp corners=all, colback=white!95!black, boxrule=0pt,frame hidden,breakable]
\begin{theorem}\label{thm:properties-of-maximal-corr-coeffs}
    Let $f:\mbb{R}_{+} \to \mbb{R}_{\geq 0}$ be a normalized operator monotone function.
    \begin{enumerate}
        \item $\mu_{f}(A:B)_{\rho}$ decreases under the local action of the adjoint of unital Schwarz maps, i.e. for quantum state $\rho_{AB}$ and $\cE_{A \to A'}$ and $\cF_{B \to B'}$ being the adjoints of unital Schwarz maps, 
        \begin{align}
            \mu_{f}(A':B')_{(\cE \otimes \cF)(\rho)} \leq \mu_{f}(A:B)_{\rho} \ . 
        \end{align}
        \item $\mu_{f}(A:B)_{\rho}$ is isometrically invariant, i.e. for all density matrices $\rho_{AB}$ and isometries $V_{A \to A'}$, $W_{B \to B'}$,
        \begin{align}
            \mu_{f}(A:B)_{\rho} = \mu_{f}(A':B')_{V \otimes W\rho_{AB}V^{\ast}\otimes W^{\ast}} \ .
        \end{align}
        \item Let $\rho_{AB} = p_{XY}$, i.e.~the state is classical. Then $\mu_{f}(X:Y)_{p} = \mu_{f}^{\text{Lin}}(X:Y)_{p} = \mu(X:Y)_{p}$ where $\mu(X:Y)_{p}$ is given in \eqref{eq:maximal-correlation-coefficient}.
    \end{enumerate}
    All of the above statements also hold for $\mu_{f}^{\text{Lin}}(A:B)_{\rho}$.
\end{theorem}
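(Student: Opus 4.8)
The plan is to reduce all three items to the non-commutative calculus of expectation and variance established in Propositions~\ref{prop:mult-and-div-under-trace} and~\ref{prop:variance-properties}, with the data processing inequality (Item~1) carrying essentially all the content; isometric invariance (Item~2) and classical consistency (Item~3) then follow with little extra work. Throughout I would first rewrite the constraints of Definition~\ref{def:f-quantum-max-corr} in variance form: the linear constraints read $\mathbb{E}_{f,\rho_A}[X]=0=\mathbb{E}_{f,\rho_B}[Y]$, and, given these, Item~4 of Proposition~\ref{prop:variance-properties} turns the quadratic constraints into $\Var_{f,\rho_A}(X)=1=\Var_{f,\rho_B}(Y)$, so that the variance data processing inequality can be invoked.

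For Item~1, let $\mathcal E_{A\to A'},\mathcal F_{B\to B'}$ be adjoints of unital Schwarz maps (hence both trace preserving and Hermitian preserving), write $\rho'=(\mathcal E\otimes\mathcal F)(\rho)$, and take optimizers $X'\in\Herm(A')$, $Y'\in\Herm(B')$ for $\mu_f(A':B')_{\rho'}$. The first step is the adjoint trick: since $\mathcal F^{\ast}$ is Hermitian preserving (so it commutes with $(\cdot)^{\ast}$),
\begin{align*}
\Tr[X'\otimes Y'^{\ast}\,\rho'] = \Tr\!\big[\mathcal E^{\ast}(X')\otimes \mathcal F^{\ast}(Y')^{\ast}\,\rho\big],
\end{align*}
so the pulled-back operators $\widetilde X=\mathcal E^{\ast}(X')$, $\widetilde Y=\mathcal F^{\ast}(Y')$ reproduce the objective on $\rho$. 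The second step checks feasibility up to normalization: using $\rho'_{A'}=\mathcal E(\rho_A)$ and Proposition~\ref{prop:mult-and-div-under-trace}, $\mathbb{E}_{f,\rho_A}[\widetilde X]=\Tr[\mathcal E(\rho_A)X']=\Tr[\rho'_{A'}X']=0$, so the mean-zero constraint survives, while Item~3 of Proposition~\ref{prop:variance-properties} gives $\Var_{f,\rho_A}(\widetilde X)\le\Var_{f,\rho'_{A'}}(X')=1$, and likewise for $\widetilde Y$. The final step rescales $\widetilde X,\widetilde Y$ to unit $\norm{\cdot}_{f,\rho_A}$, $\norm{\cdot}_{f,\rho_B}$ norm; since these norms are at most $1$, rescaling can only increase $\abs{\Tr[\widetilde X\otimes\widetilde Y^{\ast}\rho]}$, so the rescaled pair is feasible for $\mu_f(A:B)_\rho$ and witnesses $\mu_f(A:B)_\rho\ge\mu_f(A':B')_{\rho'}$. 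A zero-norm pullback is harmless because $\supp(\rho_{AB})\subseteq\supp(\rho_A)\otimes\supp(\rho_B)$, so an operator vanishing on $\supp(\rho_A)$ contributes nothing to the objective. The identical argument with linear rather than Hermitian operators gives the $\mu_f^{\text{Lin}}$ statement.

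Item~2 follows by applying Item~1 in both directions. Conjugation by an isometry $V$ is the adjoint of the unital completely positive (hence Schwarz) map $V^{\ast}(\cdot)V$, so Item~1 yields $\mu_f(A':B')_{\rho'}\le\mu_f(A:B)_\rho$. For the reverse inequality I would pad the naive inverse into a genuine channel $\widetilde{\mathcal G}_{A'\to A}(\cdot)=V^{\ast}(\cdot)V+\Tr[(I-VV^{\ast})(\cdot)]\,\omega$ (any fixed $\omega\in\Density(A)$), which is CPTP and hence the adjoint of a unital Schwarz map; since $\rho'$ is supported on $\im V\otimes\im W$ the padding terms annihilate it, so $(\widetilde{\mathcal G}\otimes\widetilde{\mathcal H})(\rho')=\rho$, and Item~1 gives $\mu_f(A:B)_\rho\le\mu_f(A':B')_{\rho'}$, hence equality. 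For Item~3, when $\rho_{AB}=p_{XY}$ is diagonal the objective $\Tr[X\otimes Y^{\ast}\rho]=\sum_{x,y}p(x,y)X_{xx}\overline{Y_{yy}}$ depends only on the diagonals of $X,Y$, whereas by \eqref{eq:MC-form} off-diagonal entries contribute non-negatively to $\langle X,X\rangle_{f,\rho_A}=\sum_{i,j}P_f(\lambda_i,\lambda_j)\abs{X_{ij}}^2$; thus replacing $X,Y$ by their diagonal parts never decreases the rescaled objective, and the optimum is attained on diagonals. On diagonals $P_f(\lambda_x,\lambda_x)=\lambda_x=p(x)$ by normalization of $f$, so the program collapses to exactly \eqref{eq:maximal-correlation-coefficient}, independently of $f$; the $\mu_f^{\text{Lin}}$ case agrees because the maximal correlation of the real array $p(x,y)$ is a singular value, identical over real and complex test operators.

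The main obstacle I anticipate is the bookkeeping in Item~1 around normalization and support. The variance data processing inequality is stated for full-rank states, so one must restrict to the supports of the marginals---working inside $\Lin(A\mid\rho_A)$ where $\langle\cdot,\cdot\rangle_{f,\rho_A}$ is definite---and carefully justify that the pullback yields variance $\le 1$ rather than $=1$, which is precisely what makes the rescaling step legitimate. This full-rank caveat resurfaces in Item~2, since the marginals of $\rho'$ need not be full rank on $A',B'$; that is exactly why the recovery map must be completed to a genuine channel $\widetilde{\mathcal G}$ instead of using the bare $V^{\ast}(\cdot)V$, which is neither trace preserving nor unital-adjoint.
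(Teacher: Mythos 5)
Your proposal is correct, and Item 1 is essentially the paper's argument: pull the optimizers back through the adjoints (using that adjoints of unital Schwarz maps are trace-preserving and Hermitian-preserving so the objective and expectation constraints transfer exactly), apply the variance data processing inequality (Item 3 of Proposition~\ref{prop:variance-properties}), and rescale by a factor at least one; the degenerate zero-norm case is handled the same way in both arguments. Where you genuinely diverge is in Items 2 and 3. For Item 2 you derive isometric invariance from Item 1 by padding $V^{\ast}(\cdot)V$ into a CPTP map $\widetilde{\cG}$ whose adjoint $VXV^{\ast}+\Tr[\omega X](\mbb{1}-VV^{\ast})$ is unital and Schwarz, and noting the padding annihilates states supported on $\im V\otimes\im W$; the paper instead proves invariance \emph{directly} by a spectral computation (Proposition~\ref{prop:iso-inv-of-q-max-corr}) and then uses it to lift the support-restricted DPI (Proposition~\ref{prop:DPI-for-f-correlation-with-supp-restriction}) to the general one, so your logical order is inverted. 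This inversion is sound but puts more weight on your support-compression step inside Item 1: to apply the variance DPI with the full-rank state $\rho_{\wt A}$ you must view $\cE$ as acting from the compressed space, i.e. verify that the adjoint of $\cE\circ\cV$ is still a unital Schwarz map --- exactly the paper's Proposition~\ref{prop:restricted-unital-Schwarz}, which you acknowledge implicitly but should cite or prove to close the argument. For Item 3, your treatment of $\mu_f^{\text{Lin}}$ is actually \emph{more} self-contained than the paper's: after the diagonal reduction (your direct off-diagonal-zeroing via \eqref{eq:MC-form} replaces the paper's dephasing-channel argument, to the same effect), you dispose of the complex-versus-real issue by observing that the collapsed program is the operator norm of a real compressed matrix, whose singular values are the same over $\mbb{R}$ and $\mbb{C}$. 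The paper instead defers this point to the forward reference \eqref{eq:GM-max-corr-doesn't-depend-on-lin}, which rests on the Schmidt-coefficient machinery of Lemma~\ref{lem:k-correlation-coeff-Schmidt-coeff-characterization}; your route supplies precisely the classical singular-value appeal the paper's footnote says is needed, and avoids the forward dependency. Net: same engine for data processing, a valid alternative derivation of invariance (modulo the compressed-Schwarz citation), and a cleaner, elementary resolution of the $\mu_f^{\text{Lin}}$ classical-consistency claim.
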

\end{tcolorbox}

All three of the aforementioned properties are either statements about data-processing of $\mu_{f}(A:B)_{\rho}$ or a direct application of data-processing. The data processing of the quantum maximal correlation coefficients are controlled by the data processing inequality for $\Var_{f,\sigma}[X]$ established in Items 3 and 4 of Proposition \ref{prop:variance-properties}. To see this and avoid redundancy in subsequent proofs, the following captures the unifying proof idea:

Using the definition of adjoint map, that the adjoint of a $k$-positive map is $k$-positive, and that positive maps are Hermitian preserving, we have
\begin{gather}
    \vert \Tr[X \otimes Y^{\ast}\sigma_{\wt{A}'\wt{B}'}] \vert = \vert \Tr[\cE^{\ast}(X) \otimes (\cF^{\ast}(Y))^{\ast} \rho_{\wt{A}\wt{B}}] \vert \label{eq:DPId-obj-func} \\
    \Tr[X\sigma_{\wt{A}'}] = \Tr[\cE^{\ast}(X)\rho_{\wt{A}}] \quad \Tr[Y\sigma_{\wt{B}'}] = \Tr[\cF^{\ast}(Y)\rho_{\wt{B}}] \label{eq:DPId-expect-const} \ .
\end{gather}
These equations relate the objective function and expectation constraints in Definition \ref{def:f-quantum-max-corr} for linear operators $X,Y$ and $\cE^{\ast}(X),\cF^{\ast}(Y)$ with regards to the processed and unprocessed state respectively. The remaining issue is then always the variance terms. If $\Var_{f,\rho_{A}}[\cE^{\ast}(X)] \leq \Var_{f,\cE(\rho_{A})}[X]$, then one could rescale $\cE^{\ast}(X)$ by a scalar greater than or equal to one to increase the objective function and satisfy the constraints as $\Tr[\cE^{\ast}(X)\rho] = 0$ will be invariant under scaling, and thus one can construct better optimizers for the unprocessed case. As the needed inequality is the data processing for variance (Item 3 of Proposition \ref{prop:variance-properties}), this will generally hold. This argument is the gist of the proof for data processing under local operations and the other proofs are variants upon this idea.

We begin by observing without loss of generality the optimizers are restricted to the support of the marginal states $\rho_{A}$ and $\rho_{B}$, so we can restrict the optimization to the supports.
\begin{proposition}\label{prop:non-neg-func-need-not-restrict-support}
    As $f:\mbb{R}_{+} \to \mbb{R}_{\geq 0}$ is a non-negative valued function, the optimization problem in \eqref{eq:maximal-correlation-coefficient} may be restricted to the support of the marginals.
\end{proposition}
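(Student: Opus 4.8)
The plan is to show that compressing any feasible pair $(X,Y)$ to the supports of the marginals, $\tilde X := \Pi_A X \Pi_A$ and $\tilde Y := \Pi_B Y \Pi_B$ with $\Pi_A := \Pi_{\supp(\rho_A)}$, $\Pi_B := \Pi_{\supp(\rho_B)}$, leaves the objective and the two expectation constraints in \eqref{eq:f-quantum-max-corr} untouched while it can only decrease the two variance (norm) constraints. After a harmless rescaling this yields a feasible point, supported on the marginals' supports, whose objective value is at least as large, so the support-restricted optimum equals the full optimum; the reverse inequality is immediate since restricting the domain only shrinks the feasible set. Because Hermiticity of $X,Y$ is never used, the identical argument applies verbatim to $\mu_f^{\text{Lin}}(A:B)_\rho$.

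First I would record the three invariance/monotonicity facts. (i) The objective is unchanged: since $\supp(\rho_{AB}) \subseteq \supp(\rho_A)\otimes\supp(\rho_B)$ one has $\rho_{AB}=(\Pi_A\otimes\Pi_B)\rho_{AB}(\Pi_A\otimes\Pi_B)$, so cycling the projectors under the trace gives $\Tr[X\otimes Y^\ast\,\rho_{AB}]=\Tr[\tilde X\otimes\tilde Y^\ast\,\rho_{AB}]$. (ii) The expectation constraint is unchanged: by \eqref{eq:f-expectation} and Proposition \ref{prop:mult-and-div-under-trace}, $\langle\mbb{1},X\rangle_{f,\rho_A}=\Tr[\rho_A X]=\Tr[\Pi_A\rho_A\Pi_A X]=\Tr[\rho_A\tilde X]=\langle\mbb{1},\tilde X\rangle_{f,\rho_A}$, and likewise on $B$. (iii) The norm does not increase: expanding $\langle X,X\rangle_{f,\rho_A}=\sum_{i,j}P_f(\lambda_i,\lambda_j)\abs{X_{ij}}^2$ via \eqref{eq:MC-form} in the eigenbasis of $\rho_A$ and splitting according to whether $i,j$ lie in $\supp(\rho_A)$, the terms with both indices in the support are exactly $\langle\tilde X,\tilde X\rangle_{f,\rho_A}$, while every remaining term carries a weight $P_f(\lambda_i,0)=\lambda_i f'(+\infty)$, $P_f(0,\lambda_j)=\lambda_j f(0^+)$, or $P_f(0,0)=0$.

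The crux---and the only place the hypothesis $f\geq 0$ is used---is that these boundary weights are non-negative: $f\geq 0$ forces $f(0^+)\geq 0$ and $f'(+\infty)\geq 0$, so the off-support contributions to $\langle X,X\rangle_{f,\rho_A}$ are non-negative and hence $\langle\tilde X,\tilde X\rangle_{f,\rho_A}\leq\langle X,X\rangle_{f,\rho_A}=1$; the same holds on $B$. Were $f$ permitted to take negative values, compression could enlarge the norm and the argument would break, which is precisely why the proposition is stated for non-negative $f$.

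It then remains to normalize. If the starting value $v=\abs{\Tr[X\otimes Y^\ast\rho_{AB}]}$ is zero there is nothing to prove, as the restricted optimum is already $\geq 0$. If $v>0$, then by (i) $\tilde X,\tilde Y\neq 0$, and since $\langle\cdot,\cdot\rangle_{f,\rho_A}$ is positive definite on $\Lin(A\vert\rho_A)$ (to which $\tilde X$ belongs, and $\tilde Y$ to $\Lin(B\vert\rho_B)$), we get $\norm{\tilde X}_{f,\rho_A}>0$ and $\norm{\tilde Y}_{f,\rho_B}>0$. Rescaling $X':=\tilde X/\norm{\tilde X}_{f,\rho_A}$ and $Y':=\tilde Y/\norm{\tilde Y}_{f,\rho_B}$ preserves the vanishing expectations, meets the norm constraints with equality, keeps both operators on the supports, and multiplies the objective by $1/\bigl(\norm{\tilde X}_{f,\rho_A}\norm{\tilde Y}_{f,\rho_B}\bigr)\geq 1$. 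Hence the support-restricted problem attains a value at least $v$, and taking the supremum over all feasible $(X,Y)$ gives the claim. I expect the only delicate points to be this normalization bookkeeping and the degenerate case $\tilde X=0$ or $\tilde Y=0$, both handled above; all genuine mathematical content sits in the non-negativity of the boundary perspective weights.
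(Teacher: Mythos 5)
Your proof is correct and takes essentially the same route as the paper's: compress $(X,Y)$ by the support projectors, observe that the objective and the expectation constraints are invariant while $\langle X, X\rangle_{f,\rho_{A}}$ can only decrease because the perspective weights $P_{f}(\lambda_{i},\lambda_{j})$ are non-negative for $f \geq 0$, and then rescale to restore the unit-norm constraints. Your explicit identification of the off-support weights $\lambda_{i}f'(+\infty)$, $\lambda_{j}f(0^{+})$, $P_{f}(0,0)=0$ and your handling of the degenerate case $\wt{X}=0$ or $\wt{Y}=0$ are just slightly more careful bookkeeping than the paper records, not a different argument.
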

\begin{proof}
    As the feasible set where the operators are not restricted can only be larger, it suffices to show we can construct an optimizer from the unrestricted case. Let $(X,Y)$ be feasible for $\mu_{f}(A:B)_{\rho}$. As $\rho_{AB} = \Pi_{\supp(\rho_{A})} \otimes \Pi_{\supp(\rho_{B})}\rho_{AB}\Pi_{\supp(\rho_{A})} \otimes \Pi_{\supp(\rho_{B})}$, the objective function is 
    $$\Tr[\Pi_{\supp(\rho_{A})}X\Pi_{\supp(\rho_{A})} \otimes (\Pi_{\supp(\rho_{B})}Y\Pi_{\supp(\rho_{A})})^{\ast}\rho_{AB}] \ . $$
    For the same reason, the expectation terms only depend on the operators restricted to the support of $\rho_{A}$ and $\rho_{B}$ respectively. Thus, it suffices to show we can only increase the objective value by restricting to the support. Let $\wt{A} \coloneq \supp(\rho_{A})$ and $X = X_{i,j}\ket{v_{i}}\bra{v_{j}}$ where $\{\ket{v_{i}}\}_{i \in [d_{A}]}$ is an eigenbasis for $\rho_{A}$. Then, by \eqref{eq:MC-form}, 
    \begin{align}
        \langle X , X \rangle_{f,\rho_{A}} = \sum_{i,j \in [d_{A}]} P_{f}(\lambda_{i},\lambda_{j})\vert X_{i,j} \vert^{2} \geq \sum_{i,j \in [d_{\wt{A}}]} P_{f}(\lambda_{i},\lambda_{j}) \vert X_{i,j} \vert^{2} \ , 
    \end{align}
    where the inequality uses that as $f$ is non-negative, the perspective function is non-negative. Thus, restricting to the support of $\rho_{A}$ and then re-scaling the optimizer can only increase the objective value. This completes the proof.
\end{proof}

We now begin by establishing the invariance of the $f$-maximal correlation coefficients under local isometries. In effect this follows because the isometries don't change the relevant structure on the support of the marginals.
\begin{proposition}\label{prop:iso-inv-of-q-max-corr}
    Let $f$ be a normalized monotone function. Let $V_{A \to A'}$, $W_{B \to B'}$ be isometries. Then  
    \begin{align*}
        \mu_{f}(A:B)_{\rho} = \mu_{f}(A':B')_{V \otimes W\rho_{AB}V^{\ast}\otimes W^{\ast}} \ .
    \end{align*}
    The same holds for $\mu_{f}^{\text{Lin}}(A:B)_{\rho}$.
\end{proposition}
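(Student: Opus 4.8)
The plan is to show that local isometries leave the entire optimization problem in Definition~\ref{def:f-quantum-max-corr} invariant, by exhibiting an explicit correspondence between feasible pairs $(X,Y)$ for $\mu_f(A:B)_\rho$ and feasible pairs for $\mu_f(A':B')_{\sigma}$ where $\sigma_{A'B'} \coloneq (V \otimes W)\rho_{AB}(V \otimes W)^\ast$. Write $\rho_{A'} = V\rho_A V^\ast$ and $\rho_{B'} = W\rho_B W^\ast$ for the transformed marginals. Since $V$ is an isometry, $V^\ast V = \mbb{1}_A$, so $\rho_A$ and $\rho_{A'}$ have the same spectrum, with eigenvectors related by $\ket{v_i} \mapsto V\ket{v_i}$; the perspective-function data $P_f(\lambda_i,\lambda_j)$ controlling the inner products $\langle \cdot,\cdot\rangle_{f,\rho_A}$ via \eqref{eq:MC-form} is therefore identical for the two states. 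The natural map between optimizers is $X \mapsto V X V^\ast$ and $Y \mapsto W Y W^\ast$, and I will check that this map preserves Hermiticity, the objective, and both sets of constraints.

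First I would handle the objective function: a direct computation gives
\begin{align*}
    \Tr\big[(VXV^\ast) \otimes (WYW^\ast)^\ast\, \sigma_{A'B'}\big]
    = \Tr\big[(VXV^\ast)\otimes(WY^\ast W^\ast)(V\otimes W)\rho_{AB}(V\otimes W)^\ast\big]
    = \Tr\big[X \otimes Y^\ast\, \rho_{AB}\big] \ ,
\end{align*}
using $V^\ast V = \mbb{1}$ and $W^\ast W = \mbb{1}$ to cancel the isometries against their adjoints. Next I would treat the constraints. The key observation is that, because the isometry preserves the spectral data of the marginal, one has $\langle VXV^\ast,\, VXV^\ast\rangle_{f,\rho_{A'}} = \langle X,X\rangle_{f,\rho_A}$ and $\langle \mbb{1}_{A'}, VXV^\ast\rangle_{f,\rho_{A'}} = \langle \mbb{1}_A, X\rangle_{f,\rho_A}$; the cleanest way to see this is to note that the expectation constraint reduces to $\Tr[\rho_{A'}\, VXV^\ast] = \Tr[\rho_A X]$ by Proposition~\ref{prop:mult-and-div-under-trace} together with $V^\ast V = \mbb{1}$, while the variance constraint follows from \eqref{eq:MC-form} since expressing $VXV^\ast$ in the eigenbasis $\{V\ket{v_i}\}$ of $\rho_{A'}$ gives matrix entries identical to those of $X$ in the eigenbasis $\{\ket{v_i}\}$ of $\rho_A$, against the same perspective-function weights.

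The one genuine subtlety is that $V\ket{v_i}$ spans only the support of $\rho_{A'}$ inside the possibly larger space $A'$, so $VXV^\ast$ lives on $\supp(\rho_{A'})$ and an arbitrary feasible optimizer for $\mu_f(A':B')_\sigma$ need not be supported there. This is exactly what Proposition~\ref{prop:non-neg-func-need-not-restrict-support} resolves: since $f$ is non-negative valued, we may without loss of generality restrict both optimizations to the supports of their respective marginals, after which the map $X \mapsto VXV^\ast$ is a bijection between the feasible sets (its inverse being $X' \mapsto V^\ast X' V$ on $\supp(\rho_{A'})$, and symmetrically in $B$). Combining the three invariances, every feasible pair on one side maps to a feasible pair on the other with the same objective value, so the two suprema coincide. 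The main obstacle is thus purely bookkeeping around supports rather than any analytic difficulty, and the Hermitian and linear cases are handled by identical arguments since $X\mapsto VXV^\ast$ preserves both Hermiticity and general linearity; hence the statement for $\mu_f^{\text{Lin}}(A:B)_\rho$ follows verbatim.
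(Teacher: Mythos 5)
Your proof is correct and follows essentially the same route as the paper's: conjugation of the optimizers by the local isometries, invariance of the objective and of the $f$-weighted inner products via the spectral/perspective-function expansion \eqref{eq:MC-form}, and restriction to the marginal supports via Proposition~\ref{prop:non-neg-func-need-not-restrict-support}. Your packaging of the two-sided inequality as a bijection between the support-restricted feasible sets is a cosmetic difference only (and you correctly cite Proposition~\ref{prop:non-neg-func-need-not-restrict-support}, where the paper's text appears to mis-cite Proposition~\ref{prop:suff-conds-for-restricting-support}).
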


\begin{proof}
    We prove the $\mu_{f}$ case, the argument is identical for $\mu_{f}^{\text{Lin}}$. The proof follows from establishing the maximal correlation coefficient evaluated on $\rho_{AB}$ and $\sigma_{A'B'} \coloneq V \otimes W\rho_{AB}V^{\ast}\otimes W^{\ast}$ upper bound each other.
    
    Let $\wt{A} \coloneq \supp(\rho_{A})$. Then the spectral decomposition takes the form $\rho_{A} = \sum_{i \in [d_{\wt{A}}]} \lambda_{i} \dyad{\nu_{i}}$ and so $\sigma_{A'} = \sum_{i \in [d_{\wt{A}}]} \lambda_{i} \dyad{\phi_{i}}$ where $\ket{\phi_{i}} = V\ket{v_{i}}$ for all $i \in [d_{\wt{A}}]$. It follows $\wt{A}' \coloneq \supp(\sigma_{A'}) = \linspan(\{\ket{\phi_{i}}\}_{i \in [d_{\wt{A}}]})$. The same argument can be made for the $B$ space. Let $X \in \Lin(\wt{A}')$ and thus $X = \sum_{i,j \in [d_{\wt{A}}]} X_{i,j} \ket{\phi_{i}}\bra{\phi_{j}}$. It follows $V^{\ast}XV = \sum_{i,j \in [d_{\wt{A}}]} X_{i,j} \ket{v_{i}}\bra{v_{j}}$. Therefore, using Item 4 of Proposition \ref{prop:variance-properties} and \eqref{eq:MC-form},
    \begin{align}
        \Var_{f,\sigma_{A'}}[X] = \langle X , X \rangle_{f,\sigma_{\wt{A}'}} = \sum_{i,j \in [d_{\wt{A}}]} P_{f}(\lambda_{i},\lambda_{j}) \vert X_{i,j} \vert^{2} = \langle V^{\ast}XV , V^{\ast}XV \rangle_{f,\rho_{\wt{A}}} = \Var_{f,\rho_{\wt{A}}}[V^{\ast}XV] \ . 
    \end{align}
    The same argument holds on the $B$ space. As without loss of generality the optimizer is restricted to the support of the marginals (Proposition \ref{prop:suff-conds-for-restricting-support}), for an optimizer $(X_{\wt{A}'},Y_{\wt{B}'})$ of $\mu_{f}(A':B')_{\sigma}$, $(V^{\ast}X_{\wt{A}'}V,W^{\ast}Y_{\wt{B}'}W)$ is feasible for $\mu_{f}(A:B)_{\rho}$ and, by a direct calculation using \eqref{eq:DPId-obj-func}, achieves the same value. As \eqref{eq:f-quantum-max-corr} is a maximization, $\mu_{f}(A':B')_{\sigma} \leq \mu_{f}(A:B)_{\rho}$.

    To prove the inequality in the other direction. Let $\hat{X} \in \Lin(\wt{A})$. Then $\hat{X}_{i,j} = \sum_{i,j} \hat{X}_{i,j} \ket{v_{i}}\bra{v_{j}}$ and $V\hat{X}V^{\ast} = \sum_{i,j} \hat{X}_{i,j} \ket{\phi_{i}}\bra{\phi_{j}}$. It follows
    \begin{align}
        \Var_{f,\sigma_{\wt{A}'}}[VXV^{\ast}] = \langle VXV^{\ast}, VXV^{\ast} \rangle_{f,\sigma_{\wt{A}'}} = \sum_{i,j} P_{f}(\lambda_{i},\lambda_{j}) \vert \hat{X}_{i,j} \vert = \langle \hat{X} , \hat{X} \rangle_{f,\rho_{\wt{A}}} = \Var_{f,\rho_{\wt{A}}}[\widehat{X}] \ .
    \end{align}
    The same argument holds on the $B$ space. Thus, for an optimizer $(\widehat{X}_{\wt{A}},\widehat{Y}_{\wt{B}})$ of $\mu_{f}(A:B)_{\rho}$, $(V\widehat{X}V^{\ast},WYW^{\ast})$ is feasible for $\mu_{f}(A':B')_{\sigma}$ and, by a direct calculation using \eqref{eq:DPId-obj-func}, achieves the same value. As \eqref{eq:f-quantum-max-corr} is a maximization, $\mu_{f}(A':B')_{\sigma} \geq \mu_{f}(A:B)_{\rho}$.
\end{proof}

We now turn to establishing data processing. We begin with a special case, which we subsequently lift using isometric invariance.
\begin{proposition}\label{prop:DPI-for-f-correlation-with-supp-restriction}
    Let $f:\mbb{R}_{+} \to \mbb{R}_{\geq 0}$ be a normalized operator monotone. Let $\sigma_{A'B'} = (\cE \otimes \cF)(\rho_{AB})$ where $\cE_{A \to A'},\cF_{B \to B'}$ are the adjoint maps of unital Schwarz maps and $A \coloneq \supp(\rho_{A}), B \coloneq \supp(\rho_{B})$. Then $\mu_{f}(A':B')_{\sigma} \leq \mu_{f}(A:B)_{\rho}$. The same holds for $\mu_{f}^{\text{Lin}}(A:B)_{\rho}$.
\end{proposition}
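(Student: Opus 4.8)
The plan is to execute the ``unifying proof idea'' stated just before the proposition: take an optimizer of the \emph{processed} coefficient $\mu_{f}(A':B')_{\sigma}$, pull it back through the maps $\cE^{\ast},\cF^{\ast}$ to produce a \emph{feasible} point for the \emph{unprocessed} coefficient $\mu_{f}(A:B)_{\rho}$, and use the variance data processing inequality to control the resulting normalization. The restriction $A \coloneq \supp(\rho_{A})$, $B \coloneq \supp(\rho_{B})$ is what makes this clean: the marginals $\rho_{A},\rho_{B}$ are then full rank, so $\langle\cdot,\cdot\rangle_{f,\rho_{A}}$ and $\langle\cdot,\cdot\rangle_{f,\rho_{B}}$ are genuine positive-definite inner products on all of $\Lin(A),\Lin(B)$, and Item~3 of Proposition~\ref{prop:variance-properties} applies with base state in $\Density_{+}$.

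First I would fix an optimizer $(X,Y)$ of $\mu_{f}(A':B')_{\sigma}$ and, by Proposition~\ref{prop:non-neg-func-need-not-restrict-support}, take it supported on $\supp(\sigma_{A'}),\supp(\sigma_{B'})$, so that $\langle X,X\rangle_{f,\sigma_{A'}}=\langle Y,Y\rangle_{f,\sigma_{B'}}=1$, both expectations vanish, and the objective equals $\mu_{f}(A':B')_{\sigma}$. I would then record two structural facts. First, $\cE,\cF$ are trace-preserving (as adjoints of unital maps), so tracing out the other factor gives $\sigma_{A'}=\cE(\rho_{A})$ and $\sigma_{B'}=\cF(\rho_{B})$. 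Second, $\cE^{\ast}$ is the underlying unital Schwarz map, which is $1$-positive and hence Hermitian-preserving; this guarantees $\cE^{\ast}(X)$ is Hermitian whenever $X$ is (needed for the $\mu_{f}$ case) and that $\cF^{\ast}(Y^{\ast})=(\cF^{\ast}(Y))^{\ast}$, which is what makes \eqref{eq:DPId-obj-func} hold.

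Next I would propose the candidate $(\cE^{\ast}(X),\cF^{\ast}(Y))$ and verify feasibility. The expectation constraints transfer directly via \eqref{eq:DPId-expect-const}: $\Tr[\cE^{\ast}(X)\rho_{A}]=\Tr[X\sigma_{A'}]=0$, and likewise for $Y$. For the variance, Item~3 of Proposition~\ref{prop:variance-properties} with base state $\rho_{A}$ gives $v_{A}\coloneq\Var_{f,\rho_{A}}[\cE^{\ast}(X)]\le\Var_{f,\cE(\rho_{A})}[X]=\Var_{f,\sigma_{A'}}[X]=1$, and similarly $v_{B}\coloneq\Var_{f,\rho_{B}}[\cF^{\ast}(Y)]\le 1$; Item~4 identifies these variances with the squared norms since the expectations vanish. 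If $v_{A}=0$ (resp.\ $v_{B}=0$), positive-definiteness on the full-rank support forces $\cE^{\ast}(X)=0$, so by \eqref{eq:DPId-obj-func} the objective $\lvert\Tr[X\otimes Y^{\ast}\sigma_{A'B'}]\rvert=0=\mu_{f}(A':B')_{\sigma}$ and the claim is trivial; hence I may assume $v_{A},v_{B}>0$ and rescale to $\cE^{\ast}(X)/\sqrt{v_{A}}$, $\cF^{\ast}(Y)/\sqrt{v_{B}}$, which have unit variance and still zero expectation. This pair is feasible for $\mu_{f}(A:B)_{\rho}$, and by \eqref{eq:DPId-obj-func} its objective equals $(v_{A}v_{B})^{-1/2}\lvert\Tr[X\otimes Y^{\ast}\sigma_{A'B'}]\rvert\ge\lvert\Tr[X\otimes Y^{\ast}\sigma_{A'B'}]\rvert=\mu_{f}(A':B')_{\sigma}$ because $v_{A},v_{B}\le 1$. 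Since $\mu_{f}(A:B)_{\rho}$ is a maximum over feasible points, this yields $\mu_{f}(A:B)_{\rho}\ge\mu_{f}(A':B')_{\sigma}$; the identical argument (dropping the Hermiticity bookkeeping, which is automatic) handles $\mu_{f}^{\text{Lin}}$.

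The only genuinely delicate point is the normalization: the variance data processing inequality produces variances bounded by $1$ rather than equal to $1$, so one must rescale by a factor $\ge 1$---which is exactly where $v_{A},v_{B}\le 1$ is used to ensure the objective does not shrink---and separately dispose of the degenerate case $v_{A}v_{B}=0$, where rescaling is impossible but the objective is forced to vanish. Everything else is direct substitution through \eqref{eq:DPId-obj-func} and \eqref{eq:DPId-expect-const}.
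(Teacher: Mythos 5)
Your proof is correct and follows essentially the same route as the paper's: pull the optimizer of $\mu_{f}(A':B')_{\sigma}$ back through $\cE^{\ast},\cF^{\ast}$, use the variance data processing inequality (Item 3 of Proposition \ref{prop:variance-properties}) to rescale by scalars at least one, and observe that the zero-expectation constraints are scale-invariant, so the objective can only increase. You additionally handle the degenerate case $v_{A}v_{B}=0$ explicitly (where rescaling is impossible but the objective is forced to vanish), a detail the paper's proof passes over silently.
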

\begin{proof}
    Let $X,Y$ optimize $\mu_{f}(\sigma_{A'B'})$. By our assumptions on the marginals of $\rho_{AB}$ being full rank, by Item 3 of Proposition \ref{prop:variance-properties}, $\Var_{f,\rho_{A}}(\cE^{\ast}(X)) \leq \Var_{f,\cE(\rho)_{A'}}(X)$ and similarly for the $B'$ system. Thus one may rescale $\cE^{\ast}(X)$ and $\cF^{\ast}(Y)$ with scalars greater than or equal to one such that the variance terms are returned to one and increase the objective function given \eqref{eq:DPId-obj-func}. As $0 = \Tr[X\cE(\rho_{A})] = \Tr[\cE^{\ast}(X)\rho_{A}]$, the rescaling does not change the expectation constraints, so we have constructed a feasible point for $\mu_{f}(\rho_{AB})$ that results in a value that can only be larger than $\mu_{f}(\sigma_{A'B'})$. This completes the proof of the $\mu_{f}$ case, and the $\mu_{f}^{\text{Lin}}$ proof is identical because Item 3 of Proposition \ref{prop:variance-properties} holds for arbitrary linear operators.
\end{proof}
We now use the local isometric invariance to generalize the above.
\begin{proposition}\label{prop:DPI-for-f-correlation}
    Let $f:\mbb{R}_{+} \to \mbb{R}_{\geq 0}$ be a normalized operator monotone. Let $\rho_{AB} \in \Density(A \otimes B)$ and $\cE_{A \to A'}$ and $\cF_{B \to B'}$ be the adjoints of unital Schwarz maps. Then $\mu_{f}(A':B')_{\cE \otimes \cF(\rho_{AB})} \leq \mu_{f}(A:B)_{\rho}$. The same holds for $\mu_{f}^{\text{Lin}}(A:B)_{\rho}$.
\end{proposition}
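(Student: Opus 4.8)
The plan is to reduce to the support-restricted case already established in Proposition~\ref{prop:DPI-for-f-correlation-with-supp-restriction}, whose only use of its hypothesis was that the input marginals be full rank. Isometric invariance (Proposition~\ref{prop:iso-inv-of-q-max-corr}) lets me ``cut'' $\rho_{AB}$ down to the supports of its marginals without changing the correlation coefficient; I then check that the local maps descend to these subspaces while remaining adjoints of unital Schwarz maps, so that the restricted proposition applies verbatim.

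Concretely, set $\wt A \coloneq \supp(\rho_A)$ and $\wt B \coloneq \supp(\rho_B)$, with inclusion isometries $V_{\wt A \to A}$ and $W_{\wt B \to B}$, so that $V^* V = \mathbb{1}_{\wt A}$ and $VV^* = \Pi_{\supp(\rho_A)}$ (and similarly for $W$). Since $\supp(\rho_{AB}) \subseteq \supp(\rho_A) \otimes \supp(\rho_B)$, the state is fixed by the projection onto $\wt A \otimes \wt B$, giving $\rho_{AB} = (V \otimes W)\wt\rho(V^* \otimes W^*)$ for $\wt\rho \coloneq (V^* \otimes W^*)\rho_{AB}(V \otimes W) \in \Density(\wt A \otimes \wt B)$. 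By construction $\wt\rho_{\wt A} = V^* \rho_A V$ is full rank on $\wt A$ and likewise for $\wt B$, and Proposition~\ref{prop:iso-inv-of-q-max-corr} yields $\mu_{f}(A:B)_\rho = \mu_{f}(\wt A:\wt B)_{\wt\rho}$. Defining the restricted maps $\wt\cE(X) \coloneq \cE(VXV^*)$ and $\wt\cF(Y) \coloneq \cF(WYW^*)$, a direct computation gives $(\wt\cE \otimes \wt\cF)(\wt\rho) = (\cE \otimes \cF)(\rho_{AB}) = \sigma_{A'B'}$. Provided $\wt\cE$ and $\wt\cF$ are adjoints of unital Schwarz maps, Proposition~\ref{prop:DPI-for-f-correlation-with-supp-restriction} applied to $\wt\rho$ gives $\mu_{f}(A':B')_\sigma \leq \mu_{f}(\wt A:\wt B)_{\wt\rho}$, and chaining the two statements proves the claim. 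The $\mu_{f}^{\text{Lin}}$ case is identical.

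The one genuinely nontrivial point---and the main obstacle---is verifying that $\wt\cE$ is again the adjoint of a unital Schwarz map. Writing $\cE = \Psi^*$ for a unital Schwarz map $\Psi$, an adjoint computation identifies $\wt\cE^* = \wt\Psi$ with $\wt\Psi(Z) = V^* \Psi(Z) V$. Unitality is immediate, since $\wt\Psi(\mathbb{1}) = V^*\Psi(\mathbb{1})V = V^* \mathbb{1} V = \mathbb{1}_{\wt A}$. For the Schwarz inequality I would use that $Z \mapsto V^* Z V$ is positive and that $VV^* \leq \mathbb{1}$ (being a projection), so that
\begin{align*}
\wt\Psi(Z^*Z) = V^*\Psi(Z^*Z)V \geq V^*\Psi(Z)^*\Psi(Z)V \geq V^*\Psi(Z)^* VV^* \Psi(Z) V = \wt\Psi(Z)^*\wt\Psi(Z) \, ,
\end{align*}
where the first inequality is the Schwarz property of $\Psi$ conjugated by $V^*$, and the second is exactly the statement $V^*\Psi(Z)^*(\mathbb{1} - VV^*)\Psi(Z)V \geq 0$. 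This confirms $\wt\Psi$ is a unital Schwarz map, hence $\wt\cE$ (and by the symmetric argument $\wt\cF$) is the adjoint of one, closing the reduction.
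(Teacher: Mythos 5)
Your proposal is correct and follows essentially the same route as the paper: restrict to $\wt A = \supp(\rho_A)$, $\wt B = \supp(\rho_B)$ via isometric invariance (Proposition~\ref{prop:iso-inv-of-q-max-corr}), apply the support-restricted DPI (Proposition~\ref{prop:DPI-for-f-correlation-with-supp-restriction}) to the composed maps $\cE \circ \cV$, $\cF \circ \cW$, and verify these remain adjoints of unital Schwarz maps. Your ``nontrivial point'' is exactly the paper's Proposition~\ref{prop:restricted-unital-Schwarz}, and your verification of the Schwarz inequality via $\mathbb{1} - VV^{\ast} \geq 0$ matches the paper's appendix proof line for line.
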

\begin{proof}
    We prove the $\mu_{f}$ case. The proof for $\mu^{\text{Lin}}_{f}$ is identical. Let $\wt{A} \coloneq \supp(\rho_{A})$, $\wt{B} \coloneq \supp(\rho_{B})$. Let $V_{\wt{A} \to A}$ be the isometry from $\wt{A}$ to $A$ and similarly for $W_{\wt{B} \to B}$. Then $\rho_{\wt{A}\wt{B}} = (V \otimes W)^{\ast}\rho_{AB}(V \otimes W)$. Defining the channel $\cV(\cdot) \coloneq V \cdot V^{\ast}$ and similarly for $\cW$, we have $(\cE \otimes \cF)(\rho_{AB}) = (\cE \circ \cV \otimes \cF \circ \cW)(\rho_{\wt{A}\wt{B}})$. As the adjoint of $\cE$ is a unital Schwarz map, the adjoint of $\cE \circ \cV$ is a unital Schwarz map (Proposition \ref{prop:restricted-unital-Schwarz}) and similarly for $\cF \circ \cW$. Furthermore $\cE \circ \cV$ acts on the support of $\rho_{\wt{A}}$ and similarly for $\cF \circ \cW$ with $\rho_{\wt{B}}$. Thus, we satisfy the conditions of Proposition \ref{prop:DPI-for-f-correlation-with-supp-restriction}. Thus,
    \begin{align}
        \mu_{f}(A':B')_{(\cE \otimes \cF)(\rho_{AB})} = \mu_{f}(A':B')_{(\cE \circ \cV \otimes \cF \circ \cW)(\rho_{\wt{A}\wt{B}})} \leq \mu_{f}(\wt{A}:\wt{B})_{\rho_{\wt{A}\wt{B}}} = \mu_{f}(\rho_{AB}) \ , 
    \end{align}
    where the inequality is Proposition \ref{prop:DPI-for-f-correlation-with-supp-restriction} and the second equality is Proposition \ref{prop:iso-inv-of-q-max-corr}.
\end{proof}

Next, we establish all the $f$-maximal correlation coefficients recover the classical case. We remark that for $\mu_{f}$ this is an elementary argument. For $\mu_{f}^{\text{Lin}}$, this appears to require using a more advanced result we prove later in the work. This is not necessarily surprising as it implies the classical maximal correlation coefficient cannot change when optimizing over complex-valued functions, which also does not seem to have an elementary proof.\footnote{It is claimed in \cite{Beigi-2013a} to be easy to see $\mu_{f_{0}}^{\text{Lin}}(p_{XY}) = \mu(p_{XY})$, however, it seems one needs to appeal to the Schmidt coefficient characterization of the maximal correlation coefficient in that specific case as well.}
\begin{proposition}\label{prop:recovers-classical}
    Let $f:\mbb{R}_{+} \to \mbb{R}_{\geq 0}$ and $\rho_{AB} = p_{XY}$, i.e.~the state is classical. Then $\mu_{f}(X:Y)_{p} = \mu_{f}^{\text{Lin}}(X:Y)_{p} = \mu(X:Y)_{p}$ where $\mu(X:Y)_{p}$ is given in \eqref{eq:maximal-correlation-coefficient}.
\end{proposition}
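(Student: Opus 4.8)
The plan is to use the fact that for a classical state the $f$-dependence in Definition~\ref{def:f-quantum-max-corr} collapses once the optimization is reduced to diagonal operators, after which both coefficients coincide with the classical optimization \eqref{eq:maximal-correlation-coefficient}. Working in the common eigenbasis of the classical marginals (the computational basis), and restricting to the supports of $\rho_A,\rho_B$ without loss of generality by Proposition~\ref{prop:non-neg-func-need-not-restrict-support}, I would first record how each ingredient of the definition sees the entries of $X$ and $Y$. Since $\rho_{AB}=\sum_{x,y}p(x,y)\dyad{x}\otimes\dyad{y}$ is diagonal, a direct computation gives $\Tr[X\otimes Y^{\ast}\rho_{AB}]=\sum_{x,y}p(x,y)X_{xx}\overline{Y_{yy}}$, so the objective depends only on the diagonal entries; by \eqref{eq:f-expectation} the expectation constraints read $\sum_x p_X(x)X_{xx}=0=\sum_y p_Y(y)Y_{yy}$, again only through diagonals; and by the Hadamard representation \eqref{eq:MC-form}, $\langle X,X\rangle_{f,\rho_A}=\sum_{i,j}P_f(p_X(i),p_X(j))\abs{X_{ij}}^2$, whose diagonal part is $\sum_i p_X(i)\abs{X_{ii}}^2$ (using $P_f(\lambda,\lambda)=\lambda$ by normalization $f(1)=1$) and whose off-diagonal part is nonnegative because $f\geq 0$ forces $P_f\geq 0$.

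From these observations the reduction to diagonal operators is immediate: replacing any feasible $(X,Y)$ by its diagonal part preserves both the objective and the expectation constraints while only decreasing the variance, so rescaling to restore unit variance can only increase the objective. Hence for both $\mu_f$ and $\mu_f^{\text{Lin}}$ it suffices to optimize over diagonal operators, and every appearance of $f$ has disappeared. For $\mu_f(A:B)_\rho$ the diagonal Hermitian operators have real entries, so identifying $a_x=X_{xx}$ and $b_y=Y_{yy}$ reproduces exactly \eqref{eq:maximal-correlation-coefficient} and gives $\mu_f(X:Y)_p=\mu(X:Y)_p$. For $\mu_f^{\text{Lin}}(A:B)_\rho$ the diagonal entries are complex, so the reduced problem is the complex-valued analogue of \eqref{eq:maximal-correlation-coefficient}, and the real content of the proposition is that complex test functions cannot increase the coefficient. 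This is the step I expect to be the main obstacle, consistent with the remark that the linear case does not reduce to the classical definition term by term.

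To clear this obstacle, I plan a real/imaginary splitting argument. After a global phase on $Y$ (which preserves all constraints) the optimal value $S=\sum_{x,y}p(x,y)a_x\overline{b_y}$ is real and nonnegative, hence $S=\sum_{x,y}p(x,y)(a_x^{R}b_y^{R}+a_x^{I}b_y^{I})$, where superscripts denote real and imaginary parts of $a_x=X_{xx}$ and $b_y=Y_{yy}$. The zero-mean constraint passes separately to $a^{R},a^{I},b^{R},b^{I}$, while the unit-variance constraints become $\lVert a^{R}\rVert^2+\lVert a^{I}\rVert^2=1=\lVert b^{R}\rVert^2+\lVert b^{I}\rVert^2$ in the $L^2(p_X),L^2(p_Y)$ norms. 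Each real bilinear form is bounded by $\mu$ times the product of the corresponding norms (apply the definition of $\mu(X:Y)_p$ to the normalized real functions), so $S\leq\mu(\lVert a^{R}\rVert\,\lVert b^{R}\rVert+\lVert a^{I}\rVert\,\lVert b^{I}\rVert)$, and a final Cauchy--Schwarz on the pairs $(\lVert a^{R}\rVert,\lVert a^{I}\rVert)$ and $(\lVert b^{R}\rVert,\lVert b^{I}\rVert)$ bounds the right-hand side by $\mu$. Since the reverse inequality $\mu_f^{\text{Lin}}\geq\mu$ is trivial (real diagonal operators are feasible), this yields $\mu_f^{\text{Lin}}(X:Y)_p=\mu(X:Y)_p$.

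As an alternative to the splitting argument, and closer to the operator-theoretic machinery the paper develops, I could identify the reduced diagonal optimization as an operator norm via Lemma~\ref{lem:map-norms-as-optimizations} and then use Proposition~\ref{prop:Schmidt-to-sing} to read it off as the second-largest singular value of the real matrix with entries $p(x,y)/\sqrt{p_X(x)p_Y(y)}$ (the largest being $1$, attained by the constants that the mean-zero constraint removes); since the singular values of a real matrix are unchanged whether the singular vectors are taken real or complex, the complex and real optimizations agree. Either route closes the argument, with the complex-to-real passage being the only nontrivial point.
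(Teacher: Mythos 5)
Your proposal is correct, and its second half takes a genuinely different route from the paper. For the Hermitian coefficient $\mu_f$, your diagonal-reduction is essentially the paper's own argument: the paper phrases it as invariance of $p_{XY}$ under the local dephasing channels $\Delta_X\otimes\Delta_Y$ together with the data processing inequality for variance (Item 3 of Proposition~\ref{prop:variance-properties}), while you execute the same reduction entrywise via the Hadamard representation \eqref{eq:MC-form} — equivalent content, since the diagonal part is exactly $\Delta(X)$ and your nonnegativity of the off-diagonal variance contribution is a direct proof of the dephasing inequality. The divergence is in the $\mu_f^{\text{Lin}}$ case. The paper explicitly declines to prove that complex-valued functions cannot increase the classical coefficient ("Rather than establish this, we note..."); instead it observes that the dephased optimizers commute with $p_{XY}$, making the variance constraint $f$-independent, and then invokes the forward result \eqref{eq:GM-max-corr-doesn't-depend-on-lin} that $\mu_{GM}=\mu_{GM}^{\text{Lin}}$, which rests on the Schmidt-coefficient machinery of Sections~\ref{subsec:strong-monotones}--\ref{subsec:strengthened-results-for-GM}. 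Your phase-rotation plus real/imaginary splitting plus Cauchy--Schwarz argument proves the complex-to-real passage directly and elementarily: the zero-mean and unit-variance constraints split cleanly into real and imaginary parts, each bilinear term is bounded by $\mu$ times the product of the corresponding $L^2$ norms, and the final Cauchy--Schwarz on the norm pairs closes the bound. This is self-contained, avoids the forward dependence on later machinery, and notably settles the very point the paper's footnote flags as not seeming to admit an elementary proof (the footnote instead suggests the Schmidt/singular-value characterization, which is precisely your alternative route via the singular values of $p_X^{-1/2}p_{XY}p_Y^{-1/2}$ being insensitive to whether singular vectors are taken real or complex). What the paper's route buys is economy — it reuses machinery needed anyway for Theorem~\ref{thm:k-correlation-nec-for-local-processing}; what yours buys is a local, elementary proof that could simplify the paper's presentation.
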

\begin{proof}
    We prove the case $\mu_{f}(X:Y)_{p}$ and then extend this to $\mu_{f}^{\text{Lin}}(X:Y)_{p}$. Recalling \eqref{eq:maximal-correlation-coefficient}, by encoding the optimal choices of functions $f$ and $g$ as Hermitian operators as discussed below \eqref{eq:maximal-correlation-coefficient} and noting the quantum maximal correlation coefficient is an optimization, we conclude $\mu(X:Y)_{p} \leq \mu_{f}(X:Y)_{p}$. Thus, we aim to prove the other direction of the inequality. We will do this by showing that the optimizers $X$ and $Y$ can be achieved using classical functions. By an abuse of notation, we let $\cX$, $\cY$ be the alphabets of the support of $p_{X}$ and $p_{Y}$ respectively. Let $(X,Y)$ be the optimizers of $\mu_{f}(X:Y)_{p}$. Let $\Delta_{X}(\cdot) = \sum_{i \in \cX} \dyad{i} \cdot \dyad{i}$ be the completely dephasing channel in the computational basis on the $X$ space and similarly for $\Delta_{Y}$. Then the classical state is invariant under this dephasing, i.e.~$p_{XY} = (\Delta_{X} \otimes \Delta_{Y})(p_{XY})$. Then, as the completely dephasing channel is self-adjoint, by \eqref{eq:DPId-obj-func} and \eqref{eq:DPId-expect-const}, $\Delta_{X}(X)$ and $\Delta_{Y}(Y)$ are all that matter for the objective function and the expectations. Moreover, as $\Var_{f, p_{X}}[\Delta(X)] \leq \Var_{f,p_{X}}[X]$, we may conclude the optimizers are in fact also invariant under dephasing. Thus, $X = \sum_{i \in \cX} X_{i,i} \dyad{i}$ and similarly for $Y$. As $X$ is Hermitian, $X_{i,i}$ is real for all $i \in \cX$. Thus, we may define the real-valued function $f(i) = X_{i,i}$ for all $i \in \cX$. We may similarly define the function $g(j) = Y_{j,j}$ for all $j \in \cY$ to obtain functions $f,g$ such that $\mbb{E}_{p_{XY}}[fg] = \mu_{f}(X:Y)_{p}$. Thus, we conclude $\mu_{f}(X:Y)_{p} \leq \mu(X:Y)_{p}$, which establishes $\mu_{f}(X:Y)_{p} = \mu(X:Y)_{p}$.

    We now extend the above to $\mu_{f}^{\text{Lin}}(X:Y)_{p}$. Note the above construction on linear operators would result in \textit{complex}-valued functions and it is not clear a priori that optimizing over complex-valued rather than real-valued functions would not alter the optimal value. Rather than establish this, we note that as the above argument shows the optimizers commute with $p_{XY}$, the variance term is the same expression for all normalized operator monotone $f$. Moreover, later in the section we determine that the geometric mean maximal correlation coefficient is the same when optimized over linear operators or not (see \eqref{eq:GM-max-corr-doesn't-depend-on-lin} and the surrounding discussion), so we may conclude $\mu_{f}^{\text{Lin}}(X:Y)_{p} = \mu_{f}(X:Y)_{p} = \mu(X:Y)_{p}$ for all normalized operator monotone $f$, which completes the proof.
\end{proof}

\subsubsection{Relating \texorpdfstring{$f$}{}-Maximal Correlation Coefficients}\label{subsec:relation-between-maximal-correlation-coeffs}
We now establish relations between the $f$-maximal correlation coefficients in terms of the choice of function $f$ and determine conditions under which the ordering collapses. As already discussed, as the expectation induced by the $L_{f}^{2}(\sigma)$ space takes the same value for all choices of $f$, what varies for the $f$-maximal correlation coefficients as $f$ is varied is how the variance of the observables are measured. We thus can obtain a partial ordering on the quantum maximal correlation coefficients from the partial ordering on operator monotone functions.
\begin{proposition}\label{prop:relating-f-correlation-coefficients} ~
    \begin{enumerate}
        \item If $f_{1},f_{2}$ are normalized operator monotone functions such that $f_{1} \leq f_{2}$, then $\mu_{f_{1}}(A:B)_{\rho} \geq \mu_{f_{2}}(A:B)_{\rho}$ for all $\rho_{AB}$. The same relation holds for $\mu_{f}^{\text{Lin}}$.
        \item If $\mu_{f_{1}}(A:B)_{\rho}$ contains an optimizer $(X,Y)$ such that $[X,\rho_{A}] = 0 = [Y,\rho_{B}] =0$, then $\mu_{f_{1}}(A:B)_{\rho} = \mu_{f_{2}}(A:B)_{\rho} = \mu_{AM}(A:B)_{\rho}$. The same holds for $\mu^{\text{Lin}}_{f}$.
    \end{enumerate}
\end{proposition}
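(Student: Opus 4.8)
The plan is to exploit that, among all the data defining $\mu_{f}$, only the variance normalization depends on $f$: the objective $\vert\Tr[X \otimes Y^{\ast}\rho_{AB}]\vert$ and the mean constraints $\langle \mbb{1},X\rangle_{f,\rho_{A}} = \Tr[\rho_{A}X] = 0$ are independent of $f$ by Proposition \ref{prop:mult-and-div-under-trace}, so the whole family differs only through the quadratic normalization $\langle X,X\rangle_{f,\rho_{A}}=1$. Both items reduce to controlling how this normalization moves as $f$ varies.

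For Item 1 I would start from an optimizer $(X,Y)$ of $\mu_{f_{2}}(A:B)_{\rho}$, which by Proposition \ref{prop:non-neg-func-need-not-restrict-support} may be taken supported on $\supp(\rho_{A})$ and $\supp(\rho_{B})$. Since $f_{1} \leq f_{2}$, Proposition \ref{prop:ordering-of-NC-mult} gives $\mbf{J}_{f_{1},\rho_{A}} \leq \mbf{J}_{f_{2},\rho_{A}}$, hence $0 < \langle X,X\rangle_{f_{1},\rho_{A}} \leq \langle X,X\rangle_{f_{2},\rho_{A}} = 1$, where strict positivity uses that on the support all eigenvalues are positive so the perspective function is strictly positive. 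Rescaling $X \mapsto X/\Vert X \Vert_{f_{1},\rho_{A}}$ and $Y \mapsto Y/\Vert Y \Vert_{f_{1},\rho_{B}}$ then produces a point feasible for $\mu_{f_{1}}$ — the mean constraint is scale-invariant — whose objective is multiplied by a factor $\geq 1$. Hence $\mu_{f_{1}}(A:B)_{\rho} \geq \mu_{f_{2}}(A:B)_{\rho}$. Nothing here uses Hermiticity, so the argument transfers verbatim to $\mu^{\text{Lin}}_{f}$.

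For Item 2 the key observation is that for any $X$ commuting with $\rho_{A}$ the $f$-variance is $f$-independent: writing $X$ in the eigenbasis of $\rho_{A}$ it is (block-)diagonal, so by \eqref{eq:MC-form} only the diagonal terms survive and $\langle X,X\rangle_{f,\rho_{A}} = \sum_{i} P_{f}(\lambda_{i},\lambda_{i})\vert X_{ii}\vert^{2} = \sum_{i}\lambda_{i}\vert X_{ii}\vert^{2} = \Tr[\rho_{A}X^{\ast}X]$, using $P_{f}(\lambda,\lambda)=\lambda f(1)=\lambda$ by normalization. Consequently a commuting optimizer $(X,Y)$ of $\mu_{f_{1}}$ is \emph{simultaneously} feasible for every normalized operator monotone $f$ with the identical objective value $C \coloneq \mu_{f_{1}}(A:B)_{\rho}$, giving the uniform lower bound $\mu_{f}(A:B)_{\rho} \geq C$. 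I would then sandwich: Item 1 applied with $f_{1}\leq f_{2}$ yields $\mu_{f_{2}}(A:B)_{\rho} \leq C$, and applied with $f_{1}\leq f_{AM}$ (which holds by \eqref{eq:ordering-of-standard-monotones}) yields $\mu_{AM}(A:B)_{\rho} \leq C$; combined with the lower bounds from commuting feasibility, all three coincide with $C$. The commuting computation and the sandwich ignore whether we optimize over Hermitian or all linear operators, so the $\mu^{\text{Lin}}_{f}$ statement follows identically.

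The routine-but-delicate point is the strict positivity $\langle X,X\rangle_{f_{1},\rho_{A}} > 0$ needed before rescaling, which is exactly why restricting to the support matters. The genuinely load-bearing structural inputs are the $f$-independence of variance on commuting operators together with the monotone ordering of Proposition \ref{prop:ordering-of-NC-mult}. The main obstacle to watch is the logic of the collapse to $\mu_{AM}$: the lower bound from commuting feasibility holds for all $f$, but the matching upper bound comes only from Item 1, so the equality $\mu_{f_{2}}=\mu_{AM}$ is obtained precisely for $f_{2}$ lying above $f_{1}$ in the ordering inherited from Item 1 and for $f_{AM}$ being the maximal standard monotone function; there is no extra upper bound available for functions below $f_{1}$, so the hypothesis $f_{1}\leq f_{2}$ (and $f_{1}\leq f_{AM}$) cannot be dropped.
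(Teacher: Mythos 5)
Your argument is correct and follows essentially the same route as the paper: Item 1 is the identical rescaling argument via Proposition \ref{prop:ordering-of-NC-mult} (you merely make explicit the strict positivity of $\langle X,X\rangle_{f_{1},\rho_{A}}$ on the support, which the paper leaves implicit), and Item 2 is the paper's observation that $\langle X,X\rangle_{f,\sigma}=\Tr[X^{\ast}X\sigma]$ for $[X,\sigma]=0$, completed by the same sandwich with Item 1 that the paper's terse proof leaves unstated. The only caveat—shared with the paper itself—is that citing \eqref{eq:ordering-of-standard-monotones} to justify $f_{1}\leq f_{AM}$ is valid only for $f_{1}\in\cM_{\text{St}}$ rather than all normalized operator monotones, a scope restriction you correctly flag in your closing paragraph by treating $f_{1}\leq f_{AM}$ as a standing hypothesis.
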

\begin{proof}
    We begin with Item 1. Let $X,Y$ optimize $\mu_{f_{2}}(\rho_{AB})$. Then $1 = \langle X , X \rangle_{f_{2},\rho_{A}} \geq \langle X , X \rangle_{f_{1},\rho_{A}} \geq 0$ by Proposition \ref{prop:ordering-of-NC-mult}. It follows one may scale $X$ by $1/\sqrt{\langle X , X \rangle_{f_{1},\rho_{A}}} \geq 1$ and similarly for $Y$ to obtain a feasible point for $\mu_{f_{1}}$. This constructs a feasible point for $\mu_{f_{1}}$ that obtains at least as large of a value as $\mu_{f_{2}}$. The same argument holds for optimizing over linear operators. Item 2 follows from the fact that when $[X,\sigma] = 0$, $\langle X , X \rangle_{f,\sigma} = \Tr[X^{\ast}X\sigma]$, noting in this case $[X^{\ast},\rho]=0$ as well, and using the definition of $f_{AM}$. 
\end{proof}

Combining Item 1 of the above proposition with \eqref{eq:ordering-of-means} and \eqref{eq:ordering-of-standard-monotones}, we obtain
\begin{align}\label{eq:ordering-of-maximal-corr-means}
    \mu_{AM} \leq \mu_{LM} \leq \mu_{GM} \leq \mu_{HM} \quad \text{and} \quad \mu_{AM} \leq \mu_{f} \leq \mu_{HM} \quad \forall f \in \cM_{St} \ . 
\end{align}
Item 2 of Proposition \ref{prop:relating-f-correlation-coefficients} tells us that, when optimizers commute for some choice of $f$, the hierarchy collapses down to the `minimal' maximal correlation coefficient, $\mu_{AM}$.  
Moreover, we can in fact prove the sufficient conditions for quantum maximal correlation coefficients being equal can in fact also be necessary conditions.
\begin{proposition}\label{prop:separating-mu-AM-and-mu-GM}
$\mu_{AM}(A:B)_{\rho} \leq \mu_{GM}(A:B)_{\rho}$ holds with equality if and only if there exist optimizers $(X,Y)$ for $\mu_{GM}$ that commute with $\rho_{A},\rho_{B}$ respectively.
\end{proposition}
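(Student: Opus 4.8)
The inequality $\mu_{AM}(A:B)_{\rho} \le \mu_{GM}(A:B)_{\rho}$ is already Item 1 of Proposition \ref{prop:relating-f-correlation-coefficients} applied to $f_{GM} \le f_{AM}$, so only the characterization of equality requires work. The \emph{if} direction is immediate: if $\mu_{GM}$ admits an optimizer $(X,Y)$ with $[X,\rho_A]=0=[Y,\rho_B]$, then Item 2 of Proposition \ref{prop:relating-f-correlation-coefficients} (taking $f_1 = f_{GM}$) gives $\mu_{GM}(A:B)_{\rho} = \mu_{AM}(A:B)_{\rho}$ at once. The plan is therefore to concentrate on the \emph{only if} direction.

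The computational core is to quantify the gap between the two quadratic forms. Using the diagonal form \eqref{eq:MC-form} together with $P_{f_{AM}}(\lambda_i,\lambda_j) = \tfrac{\lambda_i+\lambda_j}{2}$ and $P_{f_{GM}}(\lambda_i,\lambda_j) = \sqrt{\lambda_i\lambda_j}$, I would record, for any $X$ expressed in an eigenbasis of $\sigma$,
\begin{align}
\langle X, X\rangle_{f_{AM},\sigma} - \langle X, X\rangle_{f_{GM},\sigma} = \tfrac{1}{2}\sum_{i,j}\big(\sqrt{\lambda_i}-\sqrt{\lambda_j}\big)^2 \lvert X_{ij}\rvert^2 \ge 0 \ .
\end{align}
Restricting to $\supp(\sigma)$ (legitimate by Proposition \ref{prop:non-neg-func-need-not-restrict-support}, where all $\lambda_i>0$), this difference vanishes if and only if $X_{ij}=0$ whenever $\lambda_i\neq\lambda_j$, i.e.\ if and only if $[X,\sigma]=0$. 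This is the bridge that converts the scalar equality $\mu_{AM}=\mu_{GM}$ into a commutation statement.

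For the \emph{only if} direction, the key move is to start from an optimizer $(X,Y)$ of $\mu_{AM}$ (not of $\mu_{GM}$). Set $a \coloneq \langle X,X\rangle_{f_{GM},\rho_A} \le \langle X,X\rangle_{f_{AM},\rho_A} = 1$ and similarly $b \le 1$ on the $B$ side. Since the expectation constraint $\Tr[\rho_A X]=0$ is independent of $f$ (Proposition \ref{prop:mult-and-div-under-trace}), the pair $(X/\sqrt{a},\,Y/\sqrt{b})$ is feasible for $\mu_{GM}$ and, as the objective is bilinear in positive real scalars, achieves value $\mu_{AM}/\sqrt{ab}$. Hence $\mu_{GM}\ge \mu_{AM}/\sqrt{ab}\ge \mu_{AM}$. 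Assuming $\mu_{AM}=\mu_{GM}>0$, this chain collapses, forcing $\sqrt{ab}=1$; with $a,b\le 1$ this yields $a=b=1$, so the gap formula above gives $[X,\rho_A]=0=[Y,\rho_B]$. Since $a=b=1$, the pair $(X,Y)$ is itself $\mu_{GM}$-feasible and attains $\mu_{AM}=\mu_{GM}$, so it is the desired commuting optimizer.

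The one remaining case is $\mu_{AM}=\mu_{GM}=0$, where the inequality chain is vacuous. Here I would invoke the product-state characterization (Theorem \ref{thm:extreme-values-summary}): $\mu_{GM}=0$ gives $\rho_{AB}=\rho_A\otimes\rho_B$, whence every feasible pair yields objective $\lvert\Tr[\rho_A X]\rvert\,\lvert\Tr[\rho_B Y^\ast]\rvert = 0$ and is automatically optimal; choosing $X$ (and $Y$) diagonal in the eigenbasis of $\rho_A$ (resp.\ $\rho_B$) produces a commuting optimizer. The main obstacle to anticipate is precisely the choice of which quantity to optimize first: rescaling a $\mu_{GM}$-optimizer downward produces an uninformative inequality ($\sqrt{ab}\ge 1$ with $a,b\ge 1$), so the argument only closes by rescaling a $\mu_{AM}$-optimizer upward, together with the separate treatment of the degenerate $\mu=0$ case.
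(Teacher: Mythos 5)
Your proof is correct, and its strategic skeleton matches the paper's: both arguments start from an optimizer of $\mu_{AM}$ (not of $\mu_{GM}$) and rescale it \emph{upward} into a $\mu_{GM}$-feasible point, so that strict inequality follows unless the $f_{GM}$- and $f_{AM}$-norms already agree on that optimizer — your closing remark that rescaling a GM-optimizer downward is uninformative is precisely the pivot the paper's proof also turns on. Where you genuinely diverge is in how the norm comparison and its equality case are obtained. The paper proves $\langle X,X\rangle_{f_{GM},\rho_A}\le\langle X,X\rangle_{f_{AM},\rho_A}$ by Cauchy--Schwarz applied to $\langle \rho^{1/2}X, X\rho^{1/2}\rangle$ (its \eqref{eq:norm-conversion-CS}) and extracts commutation from the saturation condition of Cauchy--Schwarz; you instead use the exact entrywise identity $\langle X,X\rangle_{f_{AM},\sigma}-\langle X,X\rangle_{f_{GM},\sigma}=\tfrac12\sum_{i,j}\bigl(\sqrt{\lambda_i}-\sqrt{\lambda_j}\bigr)^2\lvert X_{ij}\rvert^2$, from which both the ordering and the equality condition $[X,\sigma]=0$ on $\supp(\sigma)$ drop out simultaneously; this is the same content but more transparent, since it bypasses the small extra step needed to convert proportionality of $\rho^{1/2}X$ and $X\rho^{1/2}$ into genuine commutation. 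You also close a gap the paper's proof silently skips: when $\mu_{AM}=\mu_{GM}=0$, the paper's strict-increase step $\lvert\Tr[\kappa_X X\otimes\kappa_Y Y^{\ast}\rho]\rvert>\mu_{AM}$ is vacuous (scaling a zero objective yields zero), whereas your appeal to the independence characterization of Theorem \ref{thm:extreme-values-summary} handles this case explicitly; the only residual caveat, shared equally by the paper's statement, is the trivial situation of rank-one marginals, where the feasible set is empty and no optimizer of either quantity exists. A final merit of your write-up is that you make explicit the bridge the paper leaves implicit: a commuting AM-optimizer with $a=b=1$ is automatically $\mu_{GM}$-feasible and attains the common value, hence \emph{is} the commuting GM-optimizer the proposition asserts.
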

\begin{proof}
     As the `if' condition is established by Item 2 in Proposition \ref{prop:relating-f-correlation-coefficients}, we only need to establish the other direction. The argument makes use of Definition \ref{def:f-quantum-max-corr} and $\mbf{J}_{GM,\sigma}$ versus $\mbf{J}_{AM,\sigma}$. The basic idea will be to show in this case we can argue directly from the Cauchy-Schwarz inequality and when it is saturated. First, using that $X \in \Herm$,
    \begin{align}
        \Vert X \Vert_{GM,\rho_{A}} = \Tr[X^{\ast}\rho^{1/2}X\rho^{1/2}] 
        =&\langle \rho^{1/2}X, X\rho^{1/2} \rangle \\ 
        \leq& \Vert \rho^{1/2}X \Vert_{2} \Vert X \rho^{1/2} \Vert_{2} \label{eq:norm-conversion-CS} \\
        =& \Vert X \rho^{1/2} \Vert_{2}^{2}
        = \Tr[\rho^{1/2}X^{\ast}X\rho^{1/2}] 
        = \Vert X \Vert_{AM,\rho_{A}} \ ,
    \end{align}
    where the inequality is Cauchy-Schwarz (CS) and the fourth equality uses that $\Vert \rho^{1/2}X \Vert_{2} = \Vert (\rho^{1/2}X)^{\ast} \Vert_{2}$ and that $X$ and $\rho$ are both Hermitian. The same idea holds for $\Vert Y \Vert_{GM,\rho_{B}}$. Now, given $\rho_{AB}$, define $\cF^{GM}_{\text{opt}}$ (resp.~$\cF^{AM}_{\text{opt}}$) as the set of optimizers for $\mu_{GM}(A:B)_{\rho}$ (resp.~$\mu_{AM}(A:B)_{\rho}$). Assume there does not exist $(\wt{X},\wt{Y}) \in \cF^{AM}_{\text{opt}}$ such that $[\wt{X},\rho_{A}] = 0 = [\wt{Y},\rho_{B}]$. Then \eqref{eq:norm-conversion-CS} is always strict as the Cauchy-Schwarz inequality is an equality if and only if both arguments arguments are the same up to a scalar. It follows there exists $\kappa_{X},\kappa_{Y} > 1$ such that $\Vert \kappa_{X} X \Vert_{\rho} = 1$, $\Vert \kappa_{Y} Y \Vert_{\rho} =1$, i.e. are feasible for $\mu_{GM}(A:B)_{\rho}$ and $\vert \Tr[\kappa_{X}X \otimes \kappa_{Y}Y^{\ast} \rho_{AB}] \vert > \mu_{AM}(A:B)_{\rho}$. As $\mu_{GM}(A:B)_{\rho}$ is a supremization, this proves the inequality is strict.
\end{proof}

Before moving forward, we note that quantum systems can be entangled and still have their correlation coefficients collapse. This is because a global symmetry may guarantee the optimizers commute with the local states and thus satisfy Proposition \ref{prop:relating-f-correlation-coefficients}. Our example is a generalization of an example in \cite{Beigi-2013a}.
\begin{proposition}
    Consider the $d$-dimensional isotropic state $\rho_{d,\lambda} \coloneq \lambda \widehat{\Phi}^{+} + (1-\lambda) \pi$ where $\widehat{\Phi}^{+}$ is the maximally entangled state. For all operator monotone $f$, 
    $$\mu_{f}(A:B)_{\rho_{d,\lambda}} = \mu_{f}^{\text{Lin}}(A:B)_{\rho_{d,\lambda}} = \mu_{AM}(A:B)_{\rho_{d,\lambda}} = \lambda \ . $$
\end{proposition}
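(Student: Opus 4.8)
The plan is to reduce every coefficient in the statement to one elementary optimization by exploiting that the isotropic state has maximally mixed marginals. First I would compute the marginals: both $\widehat{\Phi}^{+}$ and $\pi$ reduce to $I/d$ on each system, so $\rho_{A} = \rho_{B} = I/d$. The key structural observation is that $\rho_{A} = I/d$ has all eigenvalues equal to $1/d$, so $P_{f}(1/d,1/d) = \tfrac1d f(1) = \tfrac1d$ for every normalized $f$; by the Hadamard representation \eqref{eq:Hadamard-prod-form} this means $\mbf{J}_{f,\rho_{A}}(X) = \tfrac1d X$ for all $X \in \Lin(A)$. Consequently the defining constraints in \eqref{eq:f-quantum-max-corr} are \emph{independent of} $f$: the expectation constraint reads $\langle \I, X\rangle_{f,\rho_{A}} = \tfrac1d\Tr[X] = 0$ and the variance constraint $\langle X,X\rangle_{f,\rho_{A}} = \tfrac1d\Tr[X^{\ast}X] = 1$, i.e. $\Tr[X]=0$ and $\norm{X}_{2}^{2} = d$ (and likewise for $Y$). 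Since the objective of \eqref{eq:f-quantum-max-corr} never involves $f$, the optimization problem is literally the same for every normalized operator monotone $f$; it therefore suffices to evaluate this single problem, and the only remaining distinction is whether we optimize over $\Herm$ or all of $\Lin$.

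Next I would evaluate the objective $\abs{\Tr[X\otimes Y^{\ast}\rho_{d,\lambda}]}$ on this feasible set. The maximally mixed part contributes $(1-\lambda)\Tr[X\otimes Y^{\ast}\pi] = \tfrac{1-\lambda}{d^{2}}\Tr[X]\,\Tr[Y^{\ast}] = 0$ since $X$ is traceless. Writing $\widehat{\Phi}^{+} = \dyad{\psi}$ with $\ket{\psi} = \tfrac{1}{\sqrt{d}}\sum_{i}\ket{ii}$, a short index calculation gives $\Tr[X\otimes Y^{\ast}\widehat{\Phi}^{+}] = \tfrac1d\Tr[X\ol{Y}]$, so the objective equals $\tfrac{\lambda}{d}\bigabs{\Tr[X\ol{Y}]}$.

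Finally I would bound this quantity from both sides. Cauchy--Schwarz in the Hilbert--Schmidt inner product gives $\bigabs{\Tr[X\ol{Y}]} \le \norm{X}_{2}\norm{\ol{Y}}_{2} = \norm{X}_{2}\norm{Y}_{2} = d$, so the objective is at most $\lambda$; because this estimate is valid for arbitrary linear operators, the upper bound $\lambda$ applies to $\mu_{f}^{\mrm{Lin}}$, hence also to $\mu_{f}$ and $\mu_{AM}$. For the matching lower bound I would take $X = Y$ equal to any real symmetric traceless matrix normalized so that $\Tr[X^{2}] = d$ (such matrices exist for $d \ge 2$); then $\ol{Y} = Y$ gives $\Tr[X\ol{Y}] = \Tr[X^{2}] = d$, so this Hermitian feasible pair attains the value $\lambda$. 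Combining the upper bound with this explicitly Hermitian optimizer yields $\mu_{f}(A:B)_{\rho_{d,\lambda}} = \mu_{f}^{\mrm{Lin}}(A:B)_{\rho_{d,\lambda}} = \mu_{AM}(A:B)_{\rho_{d,\lambda}} = \lambda$. The computation is short; the only steps demanding care are the conjugate/transpose bookkeeping in the overlap identity $\Tr[X\otimes Y^{\ast}\widehat{\Phi}^{+}] = \tfrac1d\Tr[X\ol{Y}]$ and confirming --- via the single $f$-independent Cauchy--Schwarz bound together with one Hermitian optimizer --- that all of $\mu_{f}$, $\mu_{f}^{\mrm{Lin}}$, and $\mu_{AM}$ are pinned to $\lambda$ simultaneously.
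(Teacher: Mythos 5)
Your proposal is correct and takes essentially the same route as the paper's proof: both exploit that the marginals are maximally mixed to make the constraints $f$-independent (the paper invokes Item 2 of Proposition \ref{prop:relating-f-correlation-coefficients}, while you compute $\mbf{J}_{f,I/d}=\tfrac{1}{d}\,\id$ directly from \eqref{eq:Hadamard-prod-form}, which is an equivalent and if anything more transparent step), reduce the objective to $\tfrac{\lambda}{d}\abs{\Tr[X^{T}Y^{\ast}]}=\tfrac{\lambda}{d}\abs{\Tr[X\ol{Y}]}$ after the $\pi$-term dies by tracelessness, and conclude via Cauchy--Schwarz plus a Hermitian achiever. The only cosmetic difference is the optimizer: the paper writes out an explicit diagonal traceless $Z$ with $\Tr[Z^{2}]=d$ (with a parity correction for odd $d$), whereas you take any real symmetric traceless $X=Y$ with $\Tr[X^{2}]=d$, whose existence for $d\ge 2$ is immediate, so your argument is complete as stated.
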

\begin{proof}
    Note that the marginals are the maximally mixed state, so they commute with all linear operators. We therefore have
    $$\mu_{f}(\rho_{d,\lambda}) = \mu_{AM}(\rho_{d,\lambda}) = \max\{\Tr[\rho_{\lambda} X \otimes Y^{\ast}] : \; \Tr[X] = \Tr[Y] = 0 \, , \, \Vert X \Vert_{2} = \sqrt{d} = \Vert Y \Vert_{2} \} \ , $$
    where the first equality is Proposition \ref{prop:relating-f-correlation-coefficients} and the second is by direct calculation. Then
    \begin{align}
        \Tr[\rho_{\lambda} X \otimes Y^{\ast}] =& \lambda \Tr[X \otimes Y^{\ast}\widehat{\Phi}^{+}] + \frac{1-\lambda}{d}\Tr[X \otimes Y^{\ast}] \\ 
        =& \lambda/d \cdot \Tr[X^{\Trans}Y^{\ast}] \\
        \leq& \lambda/d \Vert X^{\Trans} \Vert_{2} \Vert Y \Vert_{2} \\ 
        =& \lambda \ ,
    \end{align}
    where we used the tracelessness of the operator, the transpose trick, and Cauchy-Schwarz. Now let $X = Y = Z$ where $Z = \sum_{i \in \{1,...,d\}} \lambda_{i} \dyad{i}$ where $\{\ket{i}\}_{i}$ is the basis the transpose is defined with respect to and 
    $$\lambda_{i} = \begin{cases} \sqrt{1 + \mbb{1}\{d \text{ is odd}\}\frac{1}{2\lfloor d/2 \rfloor}} & i \in \{1,...,\lfloor d/2 \rfloor\} \\
    -\sqrt{1 + \mbb{1}\{d \text{ is odd}\}\frac{1}{2\lfloor d/2 \rfloor}} & i \in \{d,d-1,...,d-\lfloor d/2 \rfloor + 1\} \\
    0 & \text{otherwise}
    \end{cases} \ . $$
    It follows for all dimensions, $Z$ is Hermitian, traceless, and $\Vert Z \Vert_{2} = \sqrt{\Tr(Z^{2})} = \sqrt{d}$ and since $X = Y$ the Cauchy-Schwarz inequality is saturated. Thus, this is achievable. The same argument holds for $\mu_{f}^{\text{Lin}}$.
\end{proof}

Lastly, we remark that there exist equivalence classes of choice of function $f$ for given $\mu_{f}(\rho_{AB})$. In particular, $\mu_{f_{k}}(\rho) = \mu_{f_{1-k}}(\rho) = \mu_{f_{k,sym}}$ as follows from $\Tr[X^{\ast}\sigma^{k}X\sigma^{1-k}] = \Tr[X\sigma^{k}X^{\ast}\sigma^{1-k}]=\Tr[X^{\ast}\sigma^{1-k}X\sigma^{k}]$ where we used that $X \in \Herm$ and the cyclicity of trace. In the linear operator relaxed version, $\mu_{f_{k},\sigma}^{\text{Lin}}(\rho) = \mu_{f_{1-k},\sigma}^{\text{Lin}}(\rho)$ because $\vert \Tr[\rho_{AB} X \otimes Y^{\ast}]\vert = \vert \ol{\Tr[\rho X^{\ast} \otimes Y]} \vert$, but it is unclear (and seemingly unlikely) if the symmetrized equivalence also holds. 

\subsection{Extreme Values and Classical Correlation}\label{subsec:extreme-values-and-ACD}
The classical maximal correlation coefficient is bounded between zero and one, takes the value zero if and only if the joint distribution is independent, and takes the value one if and only if perfect classical correlation can be extracted from it using local operations. Moreover, this final property is known to be equivalent to $p_{XY}$ being `decomposable' as we define subsequently. In this section, we extend all these classical results to the quantum setting for the appropriate range of operator monotone functions. We summarize the results of this subsection in the following theorem.
\begin{tcolorbox}[width=\linewidth, sharp corners=all, colback=white!95!black, boxrule=0pt,frame hidden, breakable=true]
\begin{theorem}\label{thm:extreme-values-summary} ~
    \begin{enumerate}
        \item \textbf{Independence-Detection:} For normalized operator monotone function $f$, $\mu_{f}(A:B)_{\rho} \geq 0$ with equality if and only if $\rho_{AB} = \rho_{A} \otimes \rho_{B}$. The same holds for $\mu_{f}^{\text{Lin}}(A:B)_{\rho}$.
        \item \textbf{Normalized:} For $f \geq f_{GM}$, $\mu_{f}(A:B)_{\rho} \leq \mu_{f}^{\text{Lin}}(A:B)_{\rho} \leq 1$.
        \item \textbf{Equivalences of Exact Correlation Extraction:} The following are equivalent:
        \begin{enumerate}[label=\roman*)]
            \item $\mu_{AM}(A:B)_{\rho}=1$,  
            \item there exist local two-outcome measurements, $\cM_{A \to X}$ and $\cN_{B \to X'}$ such that 
            $$(\cM \otimes \cN)(\rho_{AB}) = \chi^{\vert p}_{XX'} \text{ for } 0 < p < 1 \, , $$
            \item the previous item holds with projective two-outcome measurements,
            \item there exist decompositions of the spaces $A = A_{0} \oplus A_{1}$ and $B = B_{0} \oplus B_{1}$ such that
            $$\rho_{AB} = p \rho^{0}_{A_{0}B_{0}} + (1-p)\rho^{1}_{A_{1}B_{1}} + X + X^{\ast} \ , $$
            where $\rho^{i} \in \Density(A_{i} \otimes B_{i})$ for $i \in \{0,1\}$, $X \in \Lin(A_{0} \otimes B_{0}, A_{1} \otimes B_{1})$, and $p \in (0,1)$.
        \end{enumerate}
    \end{enumerate}
\end{theorem}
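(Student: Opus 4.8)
The plan is to treat the three items in turn, in each case leaning on the $L^{2}_{f}(\sigma)$ machinery already developed.

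For Item 1 (independence detection), non-negativity is immediate since the objective is a modulus and, after restricting to the supports (Proposition \ref{prop:non-neg-func-need-not-restrict-support}), the feasible set is nonempty whenever a marginal has rank $\geq 2$ (the rank-one case forces a product state and is handled separately). For the equivalence, the $\Leftarrow$ direction is a one-line factorization: under the constraints $\Tr[\rho_A X]=\Tr[\rho_B Y]=0$ (Proposition \ref{prop:mult-and-div-under-trace}), $\Tr[X\otimes Y^{*}(\rho_A\otimes\rho_B)] = \Tr[\rho_A X]\,\overline{\Tr[\rho_B Y]}=0$. For $\Rightarrow$, I would set $\Delta := \rho_{AB}-\rho_A\otimes\rho_B$, note that the same factorization gives $\Tr[X\otimes Y^{*}\rho_{AB}] = \langle X\otimes Y,\Delta\rangle$ on the feasible cone, so $\mu_f=0$ forces $\langle X\otimes Y,\Delta\rangle=0$ for all Hermitian $X\perp_{\mathrm{HS}}\rho_A$, $Y\perp_{\mathrm{HS}}\rho_B$. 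Combining this with the zero-marginal identities $\Tr_A\Delta=\Tr_B\Delta=0$ and the orthogonal splitting $\Herm(A)=\mbb{R}\rho_A\oplus\rho_A^{\perp}$, one checks that $\Delta$ is HS-orthogonal to all of $\Herm(A)\otimes\Herm(B)$, hence $\Delta=0$. The $\mu_f^{\mathrm{Lin}}$ case follows since $\mu_f\le\mu_f^{\mathrm{Lin}}$ and the factorization still holds.

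For Item 2 (normalization), I would first reduce to the geometric mean: by Proposition \ref{prop:relating-f-correlation-coefficients} and $\mu_f\le\mu_f^{\mathrm{Lin}}$ it suffices to prove $\mu_{f}^{\mathrm{Lin}}\le 1$ for all $f\ge f_{GM}$. Using Lemma \ref{lem:map-norms-as-optimizations}, the version of $\mu_f^{\mathrm{Lin}}$ without the traceless constraint is the operator norm $\norm{\Lambda}$ of $\Lambda := \mbf{J}_{f,\rho_B}^{-1}\circ\mathcal R$ from $L^2_f(\rho_A)$ to $L^2_f(\rho_B)$, where $\mathcal R(X)=\Tr_A[(X\otimes\mbb{1})\rho_{AB}]$; a direct computation gives $\norm{\Lambda X}_{f,\rho_B}^2=\gamma_{f,\rho_B}(\mathcal R X,\mathcal R X)$, so the bound $\norm{\Lambda}\le 1$ is equivalent to the operator inequality $\mathcal R^{*}\mbf{J}_{f,\rho_B}^{-1}\mathcal R\le \mbf{J}_{f,\rho_A}$. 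To prove this I would write $\rho_{AB}=(\id_A\otimes\mathcal E)(\phi_{\rho_A})$ for the canonical purification $\phi_{\rho_A}$ and a channel $\mathcal E$ with $\mathcal E(\rho_{\bar A})=\rho_B$; this factors $\mathcal R$ through $\mathcal E$ and $\mbf{J}_{GM,\rho_{\bar A}}$, after which the monotone-metric data-processing inequality $\mathcal E^{*}\mbf{J}_{f,\rho_B}^{-1}\mathcal E\le\mbf{J}_{f,\rho_{\bar A}}^{-1}$ (Proposition \ref{prop:DPI-for-J-op}) reduces the claim to the commuting, entrywise inequality $\mbf{J}_{GM,\rho_{\bar A}}^2\mbf{J}_{f,\rho_{\bar A}}^{-1}\le \mbf{J}_{f,\rho_{\bar A}}$, i.e. $P_{f_{GM}}(\lambda_i,\lambda_j)^2\le P_f(\lambda_i,\lambda_j)^2$, which is exactly $f\ge f_{GM}$ (Proposition \ref{prop:ordering-of-NC-mult}). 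The choice $X=Y=\mbb{1}$ shows the norm equals $1$, and imposing $\langle\mbb{1},X\rangle_{f,\rho_A}=0$ only shrinks it.

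For Item 3 I would close the cycle $(i)\Rightarrow(iii)\Rightarrow(ii)\Rightarrow(i)$ together with $(iii)\Leftrightarrow(iv)$. Here $(iii)\Rightarrow(ii)$ is trivial, and $(ii)\Rightarrow(i)$ follows by applying data processing (Proposition \ref{prop:DPI-for-f-correlation}) to the measurement channels, using $\mu_{AM}(\chi^{\vert p})=\mu(\chi^{\vert p})=1$ (Proposition \ref{prop:recovers-classical}) and $\mu_{AM}\le1$ from Item 2. The heart is $(i)\Rightarrow(iii)$: the optimum is attained by some Hermitian $X,Y$, and writing the objective as $\langle (X\otimes\mbb{1})\sqrt{\rho_{AB}},(\mbb{1}\otimes Y)\sqrt{\rho_{AB}}\rangle$ with both vectors of unit HS norm, the value $1$ is precisely equality in Cauchy--Schwarz, forcing $(X\otimes\mbb{1})\rho_{AB}=(\mbb{1}\otimes Y)\rho_{AB}$. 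Since $X\otimes\mbb{1}$ and $\mbb{1}\otimes Y$ commute, induction gives $h(X)\otimes\mbb{1}\cdot\rho_{AB}=\mbb{1}\otimes h(Y)\cdot\rho_{AB}$ for every function $h$; choosing $h$ to be a spectral threshold of $X$ (which is non-constant on $\supp(\rho_A)$ since it has zero mean and unit variance) yields commuting projections $P,Q$ with $(P\otimes\mbb{1})\rho_{AB}=(\mbb{1}\otimes Q)\rho_{AB}$ and $0<\Tr[P\rho_A]<1$. A short trace computation then shows the projective measurements $\{P,\mbb{1}-P\}$ and $\{Q,\mbb{1}-Q\}$ produce exactly $\chi^{\vert p}_{XX'}$ with $p=\Tr[P\rho_A]$, giving $(iii)$. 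For $(iii)\Rightarrow(iv)$, the vanishing of the off-diagonal outcome probabilities forces $(P^{\perp}\otimes Q)\rho_{AB}=(P\otimes Q^{\perp})\rho_{AB}=0$, so $\rho_{AB}$ is supported on $A_0B_0\oplus A_1B_1$ with $A_1=\mathrm{range}(P)$, $B_1=\mathrm{range}(Q)$, giving the block form; $(iv)\Rightarrow(ii)$ is immediate by measuring the block projectors, which annihilates the cross term $X+X^{*}\in\Lin(A_0\otimes B_0,A_1\otimes B_1)$ and closes the loop.

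The main obstacle is the operator inequality in Item 2: the factorization $\rho_{AB}=(\id\otimes\mathcal E)(\phi_{\rho_A})$ and the attendant transpose and support bookkeeping (non-full-rank $\rho_{AB}$, pseudo-inverse conventions for $\mbf{J}^{-1}$) must be handled carefully so that the monotone-metric DPI applies on the correct supports. In Item 3 the nonroutine step is recognizing that $\mu_{AM}=1$ is exactly a Cauchy--Schwarz equality and converting the resulting operator identity into a pair of commuting spectral projections; once that is in hand, the remaining implications are routine trace manipulations.
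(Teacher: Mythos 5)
Your proposal is correct, but it reaches the theorem by genuinely different routes than the paper, most markedly in Items 1 and 3. For Item 1 the paper argues operationally: it proves Lemma \ref{lem:indep-if-and-only-if-measurements} via the Holevo--Helstrom theorem (correlation survives to \emph{some} pair of local two-outcome measurements unless the state is product) and then lifts the classical fact (Proposition \ref{prop:classical-max-corr-zero-means-independent}) through data processing (Proposition \ref{prop:DPI-for-f-correlation}); your argument is instead a purely linear-algebraic spanning computation with $\Delta=\rho_{AB}-\rho_{A}\otimes\rho_{B}$, and it does close: the hypothesis kills the $\rho_{A}^{\perp}\otimes\rho_{B}^{\perp}$ block, the zero-marginal identities $\langle W\otimes\mbb{1},\Delta\rangle=\langle\mbb{1}\otimes Z,\Delta\rangle=0$ then kill the mixed blocks (decompose $\mbb{1}=c\rho_{A}+X_{0}$ with $\Tr[\rho_{A}X_{0}]=0$) and finally the $\rho_{A}\otimes\rho_{B}$ block, so $\Delta\perp\Herm(A)\otimes\Herm(B)=\Herm(AB)$, hence $\Delta=0$. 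Your route is more elementary; the paper's buys an operational lemma of independent use elsewhere. For Item 2 both proofs rest on the same two pillars, the canonical-purification channel (Proposition \ref{prop:every-joint-state-is-a-degraded-purif}) and the monotone-metric DPI (Proposition \ref{prop:DPI-for-J-op}); the paper first collapses to $f_{GM}$ via Proposition \ref{prop:relating-f-correlation-coefficients} and runs Cauchy--Schwarz with the adjoint Petz map, whereas you keep general $f\geq f_{GM}$ and reduce to an entrywise perspective-function inequality. The transpose you flagged is benign but worth spelling out: $T\circ\mbf{J}^{-1}_{f,\rho_{A}}\circ T$ has multiplier $P_{f}(\lambda_{j},\lambda_{i})^{-1}$ with \emph{swapped} arguments, so the needed inequality is $\lambda_{i}\lambda_{j}\leq P_{f}(\lambda_{i},\lambda_{j})P_{f}(\lambda_{j},\lambda_{i})$, which still follows from $f\geq f_{GM}$ even for non-symmetry-inducing $f$ since both factors dominate $\sqrt{\lambda_{i}\lambda_{j}}$. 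For Item 3 your $(i)\Rightarrow(iii)$ is the most interesting divergence: writing the objective as $\langle(X\otimes\mbb{1})\sqrt{\rho_{AB}},(\mbb{1}\otimes Y)\sqrt{\rho_{AB}}\rangle$ (which is exactly why $f_{AM}$ is the right choice, as $\langle X,X\rangle_{AM,\rho_{A}}=\Tr[X^{2}\rho_{A}]$), forcing $(X\otimes\mbb{1})\rho_{AB}=(\mbb{1}\otimes Y)\rho_{AB}$ at Cauchy--Schwarz equality, and bootstrapping to threshold projections by functional calculus, in one stroke reproves both ingredients the paper imports, namely Beigi's Lemma \ref{lem:Beigi-thm-5} and the POVM-to-projective reduction of Proposition \ref{prop:POVM-to-PVM}; likewise your $(iii)\Rightarrow(iv)$ uses the annihilation principle $\Tr[\Pi\rho]=0\Rightarrow\Pi\rho=\rho\Pi=0$ for projections where the paper argues blockwise with Sylvester's criterion. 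The only points a full write-up must make explicit---attainment of the maximum in Definition \ref{def:f-quantum-max-corr} (compactness after restriction to supports via Proposition \ref{prop:non-neg-func-need-not-restrict-support}, which also makes $\rho_{A}$ full rank so your threshold gives $p=\Tr[P\rho_{A}]\in(0,1)$) and the sign in the Cauchy--Schwarz equality case (absorb into $Y$, the objective being real on Hermitian observables)---are routine and not gaps.
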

\end{tcolorbox}
We remark Items 1 and 2 were established for the case $f_{0}$ by Beigi \cite{Beigi-2013a}, but his proofs do not directly generalize. For Item 1, we provide an operational characterization of independence in the quantum setting (Lemma \ref{lem:indep-if-and-only-if-measurements}) and use it to lift the classical result. For Item 2, we establish a non-commutative extension of a probability-theoretic proof method.\footnote{In the subsequent section, we extend Beigi's proof method for proving the maximal correlation coefficient is bounded above by one. We can use that extended result to establish Item 2. However, that proof method is so operator-theoretic that it makes it unclear why $f_{GM}$ is central, which is why we choose our probabilistic proof method.} This will allow us to identify why $f_{GM}$ is so central to this result (Remark \ref{rem:importance-of-f-GM-for-probabilistic-proof}). The equivalence of Items 3)i) and 3)ii) rather directly follows from \cite{Beigi-2013a} and our identification of quantum maximal correlation coefficients. It is Items 3)iii) and 3)iv) that are our main technical contribution. It is in particular Item 3)iv) that allows us to identify a quantum generalization of the notion of a joint distribution being decomposable.

Note that Item 3)iv) of Theorem \ref{thm:extreme-values-summary} truly is a generalization of the classical case. Indeed, a simple example is the maximally entangled state of two qubits where $A_{0} = \linspan\{\ket{0}\} = B_{0}$, $A_{1} = \linspan\{\ket{1}\} = B_{1}$, $\rho^{i}_{A_{i}B_{i}} = \dyad{i}^{\otimes 2}$, $p = 1/2$, and $X = \frac{1}{2}\ket{11}\bra{00}$. More generally, for any $\lambda \in [0,1]$, $\lambda \widehat{\Phi}^{+} + (1-\lambda) Z_{A}\widehat{\Phi}^{+}Z_{A}$ satisfies this decomposition, which shows the equivalence can also hold for entangled \textit{mixed} states.

In the rest of the subsection, we develop the tools and provide the proofs to establish Theorem \ref{thm:extreme-values-summary}.

\subsubsection{Equivalence between Maximal Correlation Coefficient Being Zero and Independence}
As already mentioned, the following is known.
\begin{proposition}\label{prop:classical-max-corr-zero-means-independent}
    $\mu(X:Y)_{p}$ if and only if $p_{XY} = p_{X} \otimes p_{Y}$ . 
\end{proposition}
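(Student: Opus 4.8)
The plan is to prove both directions directly from the variational definition \eqref{eq:maximal-correlation-coefficient} of $\mu(X:Y)_{p}$, reading the statement as $\mu(X:Y)_{p} = 0$ if and only if $p_{XY} = p_{X} \otimes p_{Y}$. The forward direction is immediate: if $p_{XY} = p_{X} \otimes p_{Y}$, then for any feasible pair $(f,g)$ the joint expectation factorizes as $\mathbb{E}_{p_{XY}}[f(X)g(Y)] = \mathbb{E}_{p_{X}}[f(X)] \cdot \mathbb{E}_{p_{Y}}[g(Y)]$, which vanishes by the mean-zero constraints; hence the supremum is zero.

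For the converse, I would first argue that $\mu(X:Y)_{p} = 0$ is equivalent to the vanishing of $\mathrm{Cov}_{p_{XY}}[f(X),g(Y)]$ for \emph{every} pair of real-valued functions $f,g$. Given arbitrary $f,g$, one centers them by subtracting their respective means; a centered function with nonzero variance can be rescaled by $1/\sqrt{\Var_{p_{X}}[f]}$ (resp.\ $1/\sqrt{\Var_{p_{Y}}[g]}$) to meet the unit-variance constraint, at which point the objective equals $\mathrm{Cov}_{p_{XY}}[f(X),g(Y)]$ divided by the positive factor $\sqrt{\Var_{p_{X}}[f]\,\Var_{p_{Y}}[g]}$. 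Since the supremum over such normalized pairs is zero and the objective carries an absolute value, each covariance must be zero. Functions that are constant on the support have zero variance and trivially zero covariance, so the degenerate cases cause no trouble.

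Finally, I would specialize to indicators: for each pair $(x,y)$ in the support of $p_{XY}$, take $f = \mathbb{1}[X = x]$ and $g = \mathbb{1}[Y = y]$. The covariance condition then reads $p_{XY}(x,y) - p_{X}(x)\,p_{Y}(y) = 0$, so the joint distribution factorizes pointwise, which is precisely independence.

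The only real subtlety — and the step I would be most careful about — is the reduction from ``the normalized supremum equals zero'' to ``every covariance vanishes,'' which requires correctly handling the unit-variance normalization together with the constant (zero-variance) functions that the normalization excludes. Once this bookkeeping is settled, the indicator test delivers the factorization in a single line, and both implications close the proof.
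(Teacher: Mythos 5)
The paper offers no proof of this proposition at all: it is introduced with ``As already mentioned, the following is known'' and treated as a classical fact (due to R\'{e}nyi), used downstream as an ingredient for lifting the independence characterization to the quantum setting via Lemma \ref{lem:indep-if-and-only-if-measurements}. So there is no in-paper argument to compare against; your elementary proof is a correct, self-contained substitute for the citation, and it is the standard one. Your reading of the statement as $\mu(X:Y)_{p}=0$ if and only if $p_{XY}=p_{X}\otimes p_{Y}$ is the intended one (the printed statement drops the ``$=0$''). The forward direction is fine, and your reduction from ``normalized supremum equals zero'' to ``every covariance vanishes'' is handled correctly, including the zero-variance (constant-on-support) degenerate cases, which is indeed the only bookkeeping subtlety.

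One quantifier in the final step should be tightened. You apply the indicator test only ``for each pair $(x,y)$ in the support of $p_{XY}$,'' which yields factorization only at points where $p_{XY}(x,y)>0$ and misses exactly the dangerous case: $p_{XY}(x,y)=0$ while $p_{X}(x)\,p_{Y}(y)>0$, where independence fails. Since you have already established that $\mathrm{Cov}_{p_{XY}}[f(X),g(Y)]=0$ for \emph{every} pair of functions, the fix is immediate: test all pairs $(x,y)\in\cX\times\cY$ (equivalently, all pairs with $p_{X}(x)>0$ and $p_{Y}(y)>0$). For a pair of the dangerous type the covariance of the indicators equals $0-p_{X}(x)\,p_{Y}(y)\neq 0$, a contradiction, so the factorization $p_{XY}(x,y)=p_{X}(x)\,p_{Y}(y)$ holds at every point, not just on the support. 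With that one-word correction the proof is complete.
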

\noindent This property is special to the maximal correlation coefficient because the Pearson correlation coefficient (by which the classical maximal correlation coefficient may be induced) only detects linear correlation and thus can be zero for dependent random variables. To extend this to the quantum setting, we will reduce independence of quantum systems to independence of all distributions one could generate from measuring the quantum systems locally. In other words, we will establish and use the following lemma, which says a quantum state is correlated if and only if there exist two-outcome local measurements such that the measurement outcomes are correlated. This is intuitive but is established using the Holevo-Helstrom theorem and linear algebra (see Appendix \ref{app-subsec:max-corr-coeff-lemmata} for the formal proof). 
\begin{lemma}\label{lem:indep-if-and-only-if-measurements}
    Let $\rho_{AB}$ be finite-dimensional. There does not exist local two-outcome measurements $\cM_{A \to X}$, $\cN_{B \to Y}$ such that $\cM_{A \to X} \otimes \cN_{B \to Y}(\rho_{AB}) \neq \cM_{A \to X}(\rho_{A}) \otimes \cN_{B \to Y}(\rho_{B})$ if and only if $\rho_{AB} = \rho_{A} \otimes \rho_{B}$.
\end{lemma}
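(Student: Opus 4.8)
The plan is to prove the two implications separately, taking the reverse (easy) direction first. For $\rho_{AB} = \rho_A \otimes \rho_B \Rightarrow$ no correlating measurements exist, I would simply observe that for any local channels $\cM_{A \to X}$ and $\cN_{B \to Y}$ one has $(\cM \otimes \cN)(\rho_A \otimes \rho_B) = \cM(\rho_A) \otimes \cN(\rho_B)$ by the tensor-product structure, and that the marginals of a product state are exactly $\rho_A$ and $\rho_B$. Hence the joint output always equals the product of its own marginals, so no correlating pair can exist.

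For the forward direction I would argue the contrapositive: assuming $\rho_{AB} \neq \rho_A \otimes \rho_B$, I construct a correlating pair of two-outcome measurements. The first step is a reduction. A pair of two-outcome POVMs $\{M, I-M\}$ on $A$ and $\{N, I-N\}$ on $B$ produces the $2\times 2$ output distribution $p(x,y) = \Tr[(M_x \otimes N_y)\rho_{AB}]$, and since its marginals are pinned down by $M, N$ and the reduced states, independence of this table is equivalent to the single bilinear condition $\Tr[(M \otimes N)\Delta] = 0$, where $\Delta := \rho_{AB} - \rho_A \otimes \rho_B$. Thus ``no correlating pair exists'' is exactly ``$\Tr[(M \otimes N)\Delta] = 0$ for all $0 \le M \le I$ and $0 \le N \le I$,'' and it remains only to show that this forces $\Delta = 0$.

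The heart of the argument is then linear-algebraic. The map $M \mapsto \Tr[(M \otimes N)\Delta]$ is linear, and $\{M : 0 \le M \le I\}$ has nonempty interior in $\Herm(A)$ (e.g.\ around $I/2$); a linear functional vanishing on a set with interior is identically zero, so the condition extends to all Hermitian $M$, and likewise to all Hermitian $N$. Since $\Herm(A) \otimes \Herm(B) = \Herm(A \otimes B)$ and $\Delta$ is Hermitian, this gives $\Tr[Z\Delta] = 0$ for all $Z$, whence $\Delta = 0$ by nondegeneracy of the Hilbert--Schmidt inner product, which is the contrapositive we wanted. If one instead wants an explicit construction of the measurements (which is where the Holevo--Helstrom theorem enters, and is presumably the route in the appendix), I would first fix a two-outcome measurement $\{M, I-M\}$ on $A$ for which the conditional unnormalized states on $B$, namely $\Tr_A[(M \otimes I)\rho_{AB}]$ and $\Tr_A[((I-M)\otimes I)\rho_{AB}]$, are not proportional; such an $M$ exists precisely because $\Delta \neq 0$, again by the spanning argument applied on the $A$ side. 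The Holevo--Helstrom theorem then supplies a two-outcome (projective) measurement on $B$ optimally distinguishing these conditional states, and nontrivial distinguishability is exactly correlation between the two outcomes.

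The main obstacle in either route is the same bookkeeping point: one must guarantee that the detecting operators can be taken to be genuine two-outcome POVM elements lying in $[0,I]$. I would handle this using that $\Delta$ has vanishing marginals, $\Tr_A \Delta = 0 = \Tr_B \Delta$, so that writing $M = \alpha I + \beta \tilde M$ and $N = \alpha' I + \gamma \tilde N$ with $\tilde M, \tilde N \in [0,I]$ kills all cross terms involving a factor of $I$ and leaves $\Tr[(M \otimes N)\Delta] = \beta\gamma\,\Tr[(\tilde M \otimes \tilde N)\Delta]$. This lets me rescale any detecting Hermitian pair into valid POVM elements without destroying the nonzero correlation, thereby closing the construction with only two outcomes per side.
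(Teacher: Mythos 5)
Your proof is correct, but it takes a genuinely different route from the paper's. The paper argues directly from the ``no correlating pair'' hypothesis: it invokes the Holevo--Helstrom theorem twice operationally (first to show that for every two-outcome measurement on $A$ the conditional states on $B$ must coincide, then, via a coin-flip measurement, that Bob's conditional state is independent of which measurement Alice chose), and then finishes with an operator Schmidt decomposition of $\rho_{AB}$ to conclude $\rho_{AB} = \rho_A \otimes \rho_B$. You instead prove the contrapositive by pure linear algebra: you first observe that, because $\Tr_A \Delta = \Tr_B \Delta = 0$ for $\Delta = \rho_{AB} - \rho_A \otimes \rho_B$, independence of every $2\times 2$ output table is equivalent to the single condition $\Tr[(M \otimes N)\Delta] = 0$ over POVM elements $M, N \in [0,\mbb{1}]$; then interiority of $[0,\mbb{1}]$ in $\Herm$ upgrades this to all Hermitian $M, N$, and the fact that Hermitian product operators span $\Herm(A \otimes B)$ forces $\Delta = 0$. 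Both reductions are sound (each entry of the output table deviates from the product by $\pm\Tr[(M\otimes N)\Delta]$, and your affine-rescaling bookkeeping correctly uses the vanishing marginals of $\Delta$ to preserve nonzero correlation when shifting into $[0,\mbb{1}]$). What each approach buys: yours is shorter, avoids Holevo--Helstrom in the main line, and makes transparent that the obstruction is just nondegeneracy of the Hilbert--Schmidt pairing on a spanning cone; the paper's operational argument constructs the correlating measurements explicitly (your optional Helstrom-based construction recovers this) and fits the paper's broader theme of reducing quantum independence to classical independence under all local dichotomic measurements, which is the form in which the lemma is then combined with the data processing inequality.
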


Using this lemma and the data processing inequality for quantum maximal correlation coefficients, we can lift the classical maximal correlation's property to all quantum maximal correlation coefficents.
\begin{proposition}[Item 1 of Theorem \ref{thm:extreme-values-summary}] For all normalized operator monotone $f$, $\mu_{f}(A:B)_{\rho} \geq 0$ with equality if and only if $\rho_{AB} = \rho_{A} \otimes \rho_{B}$. The same holds for $\mu_{f}^{\text{Lin}}(A:B)_{\rho}$.
\end{proposition}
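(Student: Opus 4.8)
The plan is to prove the non-negativity directly and then the equivalence in two directions, with the forward (product-state) direction being an immediate factorization and the reverse direction following by contraposition from the machinery already assembled in this section. Non-negativity $\mu_{f}(A:B)_{\rho}\geq 0$ is immediate from Definition \ref{def:f-quantum-max-corr}: the objective is an absolute value and the feasible set is nonempty, so the maximum is a maximum of non-negative numbers. The identical remark applies to $\mu_{f}^{\text{Lin}}$.

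For the ``if'' direction I would compute directly. If $\rho_{AB}=\rho_{A}\otimes\rho_{B}$, then for any feasible pair $(X,Y)$ the objective factorizes as
\[
    \Tr[X \otimes Y^{\ast}(\rho_{A}\otimes\rho_{B})] = \Tr[\rho_{A}X]\,\ol{\Tr[\rho_{B}Y]} \, ,
\]
where I used $\Tr[Y^{\ast}\rho_{B}]=\ol{\Tr[\rho_{B}Y]}$ since $\rho_{B}$ is Hermitian. Both factors vanish because the expectation constraints $\langle \mbb{1},X\rangle_{f,\rho_{A}}=\langle \mbb{1},Y\rangle_{f,\rho_{B}}=0$ are exactly $\Tr[\rho_{A}X]=\Tr[\rho_{B}Y]=0$ by Proposition \ref{prop:mult-and-div-under-trace}. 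Hence the objective is identically zero on the feasible set and $\mu_{f}(A:B)_{\rho}=0$; the computation is unchanged for $\mu_{f}^{\text{Lin}}$.

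For the ``only if'' direction I would argue by contraposition, reducing to the classical fact of Proposition \ref{prop:classical-max-corr-zero-means-independent} via local measurements. Suppose $\rho_{AB}\neq\rho_{A}\otimes\rho_{B}$. By Lemma \ref{lem:indep-if-and-only-if-measurements} there exist local two-outcome measurements $\cM_{A\to X}$, $\cN_{B\to Y}$ such that the induced classical distribution $p_{XY}\coloneq(\cM\otimes\cN)(\rho_{AB})$ is not a product distribution. Then Proposition \ref{prop:classical-max-corr-zero-means-independent} gives $\mu(X:Y)_{p}>0$, and since $p_{XY}$ is classical, Proposition \ref{prop:recovers-classical} identifies $\mu(X:Y)_{p}=\mu_{f}(X:Y)_{p}$. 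Finally, a measurement channel is CPTP, hence the adjoint of a unital Schwarz map (its own Hilbert--Schmidt adjoint is unital and completely positive, therefore Schwarz), so the data processing inequality Proposition \ref{prop:DPI-for-f-correlation} applies and yields
\[
    \mu_{f}(A:B)_{\rho} \geq \mu_{f}(X:Y)_{(\cM\otimes\cN)(\rho)} = \mu(X:Y)_{p} > 0 \, .
\]
This is the desired contrapositive. The same chain holds for $\mu_{f}^{\text{Lin}}$ because both Proposition \ref{prop:recovers-classical} and Proposition \ref{prop:DPI-for-f-correlation} are stated for it.

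The main obstacle is not in this proof itself---given the assembled toolkit it is a clean chain of data processing and the classical-recovery identity---but is pushed entirely into Lemma \ref{lem:indep-if-and-only-if-measurements}, namely that a genuinely correlated bipartite quantum state can always be witnessed by a pair of local two-outcome measurements producing a correlated classical distribution. That is the one genuinely quantum input (where the Holevo--Helstrom argument does the work), and everything else here is bookkeeping.
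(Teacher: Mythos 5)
Your proof is correct and follows essentially the same route as the paper: non-negativity from the absolute value, the product case by direct factorization, and the converse via Lemma \ref{lem:indep-if-and-only-if-measurements}, the classical fact (Proposition \ref{prop:classical-max-corr-zero-means-independent}), and data processing (Proposition \ref{prop:DPI-for-f-correlation}). The only cosmetic difference is that you argue the hard direction contrapositively and make explicit the appeal to Proposition \ref{prop:recovers-classical}, which the paper uses implicitly when it writes $\mu(X:Y)_{(\cM\otimes\cN)(\rho)}=0$.
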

\begin{proof}
    The lower bound of $0$ follows from Definition \ref{def:f-quantum-max-corr} having an absolute value. To establish the saturation condition, assume $\mu_{f}(A:B)_{\rho} = 0$. Then by DPI (Proposition \ref{prop:DPI-for-f-correlation}), for all two-outcome measurements $\cM_{A \to X},\cN_{B \to Y}$, it holds $\mu(X:Y)_{(\cM \otimes \cN)(\rho_{AB})} = 0$. As the classical maximal correlation coefficient takes the value zero only if the joint distribution is independent (Proposition \ref{prop:classical-max-corr-zero-means-independent}), we may conclude that for all two-outcome measurements, the resulting state $p_{XY} = (\cM \otimes \cN)(\rho_{AB})$ satisfies $p_{XY} = p_{X} \otimes p_{Y}$. By Lemma \ref{lem:indep-if-and-only-if-measurements}, this can only be the case if $\rho_{AB} = \rho_{A} \otimes \rho_{B}$. For the other direction, note if $\rho_{AB} = \rho_{A} \otimes \rho_{B}$, then for any feasible $(X,Y)$ of $\mu_{f}(A:B)_{\rho}$, we have
     \begin{align}
         \Tr[X \otimes Y^{\ast}\rho_{AB}] = \Tr[X\rho_{A}]\Tr[Y^{\ast}\rho_{B}] = 0 
     \end{align}
    as $\Tr[X\rho_{A}] = 0 = \Tr[Y\rho_{B}]$, so $\mu_{f}(A:B)_{\rho}= 0$. The same argument holds for $\mu^{\text{Lin}}_{f}$.
\end{proof}

\subsubsection{The Quantum Maximal Correlation Coefficients that are Bounded Above by One} \label{subsubsec:q-max-corr-bounded-above-by-one}
Next, we establish Item 2 of Theorem \ref{thm:extreme-values-summary}. The classical maximal correlation coefficient being bound above by one follows from the probabilistic form of the Cauchy-Schwarz inequality, $\vert \text{Cov}(X,Y) \vert \leq \sqrt{\text{Var}(X)\text{Var}(Y)}$ for random variables $X$ and $Y$ on the same probability space, and then generalizing to when $X$ and $Y$ are on different probability spaces by using the adjoint of the classical channel. In other words, in the classical setting, there is a direct probability-theoretic argument for the maximal correlation coefficient being bounded above one. Here we extend this probability-theoretic argument to the quantum regime.

We begin with a few preliminaries. We first need some identities related to the canonical purification of a state $\sigma_{A}$:
\begin{align}
    \psi_{AA'}^{\sigma} \coloneq \sigma^{1/2}_{A}\Phi^{+}_{AA'}\sigma^{1/2}_{A} \ . 
\end{align}
First, we show that for a quantum state $\rho_{AB}$, there is a unique channel $\cE_{A \to B}$ that takes the canonical purification of $\rho_{A}$ to $\rho_{AB}$. This is the quantum generalization of the fact that given $p_{XY}$, there always exists $W_{X \to Y}$ such that $(\id_{X} \otimes W)(\chi^{\vert p}_{XX}) = p_{XY}$.
\begin{proposition}\label{prop:every-joint-state-is-a-degraded-purif}
    $\rho_{AB} \in \Density(AB)$ if and only if there exists a quantum channel $\cE_{A' \to B}$ such that $(\text{id} \otimes \cE)(\psi^{\rho}_{AA'}) = \rho_{AB}$ where $\psi^{\rho}$ is the canonical purification of $\rho_{A}$. Moreover this channel is unique.
\end{proposition}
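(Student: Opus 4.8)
The plan is to transfer the whole statement onto the Choi operator of $\cE$ using the Choi representation of Section~\ref{sec:functional-analysis-tools}, since the defining condition becomes a single algebraic identity there. The key observation is that the factors $\rho_A^{1/2}$ appearing in $\psi^{\rho}_{AA'} = (\rho_A^{1/2}\otimes I_{A'})\Phi^{+}_{AA'}(\rho_A^{1/2}\otimes I_{A'})$ act only on the untouched system $A$, so applying $\id_A\otimes\cE_{A'\to B}$ yields
\begin{align*}
 (\id_A\otimes\cE)(\psi^{\rho}_{AA'}) = (\rho_A^{1/2}\otimes I_B)\,\Omega_{\cE}\,(\rho_A^{1/2}\otimes I_B) \ ,
\end{align*}
where $\Omega_{\cE} = (\id_A\otimes\cE)(\Phi^{+})$ is the Choi operator, its input leg identified with $A$ via $A'\cong A$. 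Thus the requirement $(\id\otimes\cE)(\psi^{\rho}) = \rho_{AB}$ is equivalent to the operator equation $(\rho_A^{1/2}\otimes I_B)\Omega_{\cE}(\rho_A^{1/2}\otimes I_B) = \rho_{AB}$, while $\cE$ is a channel precisely when $\Omega_{\cE}\geq 0$ and $\Tr_B[\Omega_{\cE}] = I_A$.

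The ``if'' direction is then immediate: $\psi^{\rho}_{AA'}$ is a genuine density operator, being manifestly positive semidefinite with $\Tr[\psi^{\rho}] = \Tr[(\rho_A\otimes I)\Phi^{+}] = \Tr[\rho_A] = 1$, so its image under the channel $\id_A\otimes\cE$ is again a state.

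For the ``only if'' direction I would construct $\cE$ from its Choi operator. Set $\Omega := (\rho_A^{-1/2}\otimes I_B)\rho_{AB}(\rho_A^{-1/2}\otimes I_B)$, reading $\rho_A^{-1/2}$ as the Moore--Penrose pseudoinverse. Conjugation preserves positivity, so $\Omega\geq 0$, and since $\supp(\rho_{AB})\subseteq\supp(\rho_A)\otimes B$ one checks $(\rho_A^{1/2}\otimes I_B)\Omega(\rho_A^{1/2}\otimes I_B) = (\Pi_{\supp(\rho_A)}\otimes I_B)\rho_{AB}(\Pi_{\supp(\rho_A)}\otimes I_B) = \rho_{AB}$, so the defining identity holds. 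The one subtlety is trace preservation: $\Tr_B[\Omega] = \rho_A^{-1/2}\rho_A\rho_A^{-1/2} = \Pi_{\supp(\rho_A)}$, which equals $I_A$ only when $\rho_A$ is full rank. In general I would repair this by taking $\Omega_{\cE} := \Omega + (I_A - \Pi_{\supp(\rho_A)})\otimes\tau$ for an arbitrary fixed $\tau\in\Density(B)$; this preserves $\Omega_{\cE}\geq 0$, yields $\Tr_B[\Omega_{\cE}] = I_A$, and leaves the defining identity untouched since $\rho_A^{1/2}(I_A-\Pi_{\supp(\rho_A)}) = 0$. The associated $\cE$ is then CPTP and satisfies the requirement.

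For uniqueness, if $\cE_1,\cE_2$ both work then $(\rho_A^{1/2}\otimes I_B)(\Omega_{\cE_1}-\Omega_{\cE_2})(\rho_A^{1/2}\otimes I_B) = 0$, which forces $\Omega_{\cE_1}$ and $\Omega_{\cE_2}$ to agree on the block $\Pi_{\supp(\rho_A)}\otimes I_B$ and hence to define the same channel on the input support probed by $\psi^{\rho}$; in particular the channel is genuinely unique whenever $\rho_A$ has full rank. I expect the main obstacle to be exactly this support bookkeeping: making the pseudoinverse construction rigorous and phrasing uniqueness correctly as uniqueness on the input support (the action off that support being invisible to the constraint), rather than any conceptual difficulty.
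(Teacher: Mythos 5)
Your proposal is correct, but it reaches existence by a genuinely different route than the paper. The paper argues via purifications: it takes an arbitrary purification $\phi^{\rho}_{ABE}$ of $\rho_{AB}$, observes that $\psi^{\rho}_{AA'}$ and $\phi^{\rho}$ are both purifications of $\rho_{A}$, invokes the standard isometric equivalence of purifications to get $V_{A' \to BE}$ with $\ket{\phi^{\rho}} = V\ket{\psi^{\rho}}$, and sets $\cE = \Tr_{E} \circ \cV$ — so complete positivity and trace preservation come for free from an isometric conjugation followed by partial trace, with no pseudoinverses or support bookkeeping at all. You instead build the Choi operator directly, $\Omega = (\rho_{A}^{-1/2} \otimes I_{B})\rho_{AB}(\rho_{A}^{-1/2}\otimes I_{B})$, verify $\Tr_{B}[\Omega] = \Pi_{\supp(\rho_{A})}$, and patch with $(I_{A}-\Pi_{\supp(\rho_{A})})\otimes \tau$; all of your individual steps check out, including the identity $(\id_{A}\otimes\cE)(\psi^{\rho}) = (\rho_{A}^{1/2}\otimes I_{B})\Omega_{\cE}(\rho_{A}^{1/2}\otimes I_{B})$ (the conjugation acts on the untouched $A$ leg) and the inclusion $\supp(\rho_{AB})\subseteq \supp(\rho_{A})\otimes B$. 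For uniqueness both proofs reduce to this same identity, but yours is actually the more careful of the two: the paper concludes ``$\Omega_{\cE}$ is unique to $\cE$'' from $\rho_{AB} = \rho_{A}^{1/2}\Omega_{\cE}\rho_{A}^{1/2}$, which only pins down $\Omega_{\cE}$ when $\rho_{A}$ is invertible, whereas your $\tau$-patch exhibits explicitly that when $\rho_{A}$ is rank-deficient the channel is determined only on the block $\Pi_{\supp(\rho_{A})}\otimes I_{B}$ (equivalently, on inputs supported in $\supp(\rho_{A})$), with full uniqueness exactly in the full-rank case — which is the regime in which the paper actually uses the proposition. In short: the paper's purification argument is slicker and hides the support issues inside the isometric-equivalence theorem; your Choi-operator construction is more hands-on and, as a byproduct, states the uniqueness claim in its sharp form.
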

\begin{proof}
    ($\leftarrow$) If $\cE$ is as promised, then $(\text{id}_{A} \otimes \cE)(\psi^{\rho}) \in \Density(AB)$ by the CPTP property of the channel. \\
    ($\rightarrow$) Let $\rho_{AB} \in \Density(AB)$. Let $\phi^{\rho} \in D(ABE)$ be a purification of $\rho_{AB}$. It follows $\psi^{\rho}$ and $\phi^{\rho}$ are both purifications of $\rho_{A}$. Thus, there exists an isometry $V_{A' \to BE}$ such that $\ket{\phi^{\rho}} = V\ket{\psi^{\rho}}$. Then, $\rho_{AB} = \Tr_{E}[\phi^{\rho}] = \Tr_{E}[V\psi^{\rho}V^{\ast}]$. Defining $\cV(\cdot) \coloneq V \cdot V^{\ast}$ and then defining $\cE_{A \to B} \coloneq \Tr_{E} \circ \cV$, we have $\rho_{AB} = (\id_{A} \otimes \cE_{A \to B})(\psi^{\rho})$ as claimed. This completes the equivalence. To see that it is unique, note 
    $$\rho_{AB} = (\id_{A} \otimes \cE_{A' \to B})(\psi^{\rho}) = \sqrt{\rho}_{A}(\id_{A} \otimes \cE)(\dyad{\Phi})\sqrt{\rho}_{A} = \rho_{A}^{1/2}\Omega_{\cE}\rho_{A}^{1/2} \ . $$
    By the Choi isomorphism, $\Omega_{\cE}$ is unique to $\cE$.
\end{proof}
\begin{remark}
    Note that uniqueness is because we fixed working with the canonical purification, otherwise there would be an isometric degree of freedom.
\end{remark}
\noindent Furthermore, we may relate the canonical purification to the $\mbf{J}_{f_{GM},\rho}$ operator. Specifically, if we define the transpose via the eigenbasis of $\rho_{A}$, we see by direct calculation that the Choi operator of $\mbf{J}_{f_{GM},\rho}$ is the canonical purification of $\rho_{A}$:
\begin{align}\label{eq:choi-of-GM-J-operator}
    \Omega_{\mbf{J}_{f_{GM},\rho}} = (\id_{A} \otimes \mbf{J}_{f_{GM},\rho})(\Phi^{+}) = \rho_{A'}^{1/2}\Phi^{+}\rho_{A'}^{1/2} = \rho_{A}^{1/2}\Phi^{+}\rho_{A}^{1/2} \ .
\end{align}
Note that we used the transpose trick and that by assumption $\rho_{A} = \rho_{A}^{T}$. The advantage of these identifications is that for any $\rho_{AB}$, there exists a channel $\cE_{A \to B}$ such that 
\begin{align}\label{eq:identification-of-rhoAB-via-GM}
    \rho_{AB} = \Omega_{\cE \circ \mbf{J}_{f_{GM},\rho_{A}}} \ . 
\end{align}

Our remaining preliminaries are with regards to the `Petz recovery map' originally introduced in \cite{Petz-1986a}, and is a quantum generalization of the adjoint of a classical channel with respect to the $L^{2}(p)$-space.
\begin{definition}\label{def:Petz-recovery-map}
    Given a quantum channel $\cE_{A \to B}$ and a quantum state $\rho \in \Density(A)$, the Petz recovery map is
    \begin{align}\label{eq:Petz-recovery-defn}
        \cP_{\cE,\rho}(X) \coloneq \rho^{1/2}\cE^{\ast}\left( [\cE(\rho)]^{-1/2} X [\cE(\rho)]^{-1/2} \right)\rho^{1/2} \ .
    \end{align}
\end{definition}
\noindent Note that the Petz recovery map may be expressed as $\cP_{\cE,\rho} = \mbf{J}_{f_{GM},\rho} \circ \cE^{\ast} \circ \mbf{J}_{f_{GM},\cE(\rho)}^{-1}$, and it is through this identification it is relevant to the the $L^{2}_{f_{GM}}(\rho)$ measure space. What is important for the following is that $\cP_{\cE,\rho}^{\ast} = \mbf{J}_{f_{GM},\cE(\rho)}^{-1} \circ \cE \circ \mbf{J}_{f_{GM},\rho}$, which note, for a specific choice of $\cE$, is the map considered in the Choi operator in \eqref{eq:identification-of-rhoAB-via-GM} composed with $\mbf{J}_{f_{GM},\cE(\rho)}^{-1}$. Moreover, it satisfies being a contraction in the following sense.
\begin{proposition}\label{prop:adjoint-of-Petz-is-contraction}
    Let $\cE_{A \to B}$ be a quantum channel, $\rho_{A}$ be a quantum state, and $X \in \Lin(A)$. Then $\Vert \cP^{\ast}_{\cE,\rho}(X) \Vert_{f_{GM},\cN(\rho)} \leq \Vert X \Vert_{f_{GM},\rho}$.
\end{proposition}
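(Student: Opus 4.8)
The plan is to recognize the claimed inequality as the statement that $\cP^{\ast}_{\cE,\rho}$ is a contraction from the Hilbert space $L^{2}_{f_{GM}}(\rho)$ into $L^{2}_{f_{GM}}(\cE(\rho))$, and to reduce it to the operator form of the data processing inequality for $\mbf{J}^{-1}_{f_{GM}}$ recorded in Proposition \ref{prop:DPI-for-J-op}. The essential inputs are the factorization $\cP^{\ast}_{\cE,\rho} = \mbf{J}_{f_{GM},\cE(\rho)}^{-1}\circ\cE\circ\mbf{J}_{f_{GM},\rho}$ noted in the text, and the self-adjointness of the powers $\mbf{J}^{p}_{f_{GM},\sigma}$ with respect to the HS inner product (Proposition \ref{prop:J-operator-self-adjoint}). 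Since $\cE$ is a quantum channel, its adjoint $\cE^{\ast}$ is unital and completely positive, hence a Schwarz map, so $\cE = (\cE^{\ast})^{\ast}$ is the adjoint of a unital Schwarz map and Proposition \ref{prop:DPI-for-J-op} applies to it.

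First I would expand the squared norm and substitute the factorization. Writing $Y \coloneq \mbf{J}_{f_{GM},\rho}(X)$, so that $\cP^{\ast}_{\cE,\rho}(X) = \mbf{J}_{f_{GM},\cE(\rho)}^{-1}\cE(Y)$, one computes
\begin{align*}
\Vert \cP^{\ast}_{\cE,\rho}(X) \Vert^{2}_{f_{GM},\cE(\rho)} &= \langle \cP^{\ast}_{\cE,\rho}(X),\, \mbf{J}_{f_{GM},\cE(\rho)}\,\cP^{\ast}_{\cE,\rho}(X)\rangle \\
&= \langle \cE(Y),\, \mbf{J}_{f_{GM},\cE(\rho)}^{-1}\mbf{J}_{f_{GM},\cE(\rho)}\mbf{J}_{f_{GM},\cE(\rho)}^{-1}\cE(Y)\rangle \\
&= \langle \cE(Y),\, \mbf{J}_{f_{GM},\cE(\rho)}^{-1}\cE(Y)\rangle,
\end{align*}
where the second equality moves the leading $\mbf{J}^{-1}_{f_{GM},\cE(\rho)}$ onto the left slot by self-adjointness and the third uses the Moore--Penrose identity $\mbf{J}^{-1}\mbf{J}\mbf{J}^{-1}=\mbf{J}^{-1}$. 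Moving $\cE$ across the inner product and applying the first inequality of Proposition \ref{prop:DPI-for-J-op} with $\sigma=\rho$ gives
\begin{align*}
\langle Y,\, \cE^{\ast}\mbf{J}_{f_{GM},\cE(\rho)}^{-1}\cE(Y)\rangle \leq \langle Y,\, \mbf{J}_{f_{GM},\rho}^{-1}(Y)\rangle.
\end{align*}
Resubstituting $Y=\mbf{J}_{f_{GM},\rho}(X)$ collapses the right-hand side to $\langle \mbf{J}_{f_{GM},\rho}(X),X\rangle = \langle X,\mbf{J}_{f_{GM},\rho}(X)\rangle = \Vert X\Vert^{2}_{f_{GM},\rho}$, and taking square roots yields the claim.

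The main obstacle is the bookkeeping when $\rho$ or $\cE(\rho)$ fail to be full rank, since then $\mbf{J}^{-1}_{f_{GM},\sigma}$ is only a pseudoinverse. This is resolved by the Hadamard-product representation \eqref{eq:Hadamard-prod-form}: for the geometric mean $P_{f_{GM}}(\lambda_{i},\lambda_{j})=\sqrt{\lambda_{i}\lambda_{j}}$ vanishes precisely when $\lambda_{i}$ or $\lambda_{j}$ is zero, so $\mbf{J}^{-1}_{f_{GM},\rho}\mbf{J}_{f_{GM},\rho}(X)$ agrees with $X$ on every component that contributes to $\Vert X\Vert_{f_{GM},\rho}$, which is exactly what makes the final collapse $\langle Y,\mbf{J}^{-1}_{f_{GM},\rho}(Y)\rangle = \Vert X\Vert^{2}_{f_{GM},\rho}$ an equality; the same vanishing pattern makes the $\mbf{J}^{-1}\mbf{J}\mbf{J}^{-1}$ reduction valid on the $\cE(\rho)$ side. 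To invoke Proposition \ref{prop:DPI-for-J-op}, whose stated hypothesis is $\sigma\in\Density_{+}$, I would restrict attention to $\supp(\rho)$ and $\supp(\cE(\rho))$ via the isometric compression argument already used for the correlation coefficients, reducing to the full-rank case without loss of generality.
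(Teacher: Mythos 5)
Your proof is correct and takes essentially the same route as the paper's: expand the squared norm, substitute the factorization $\cP^{\ast}_{\cE,\rho} = \mbf{J}_{f_{GM},\cE(\rho)}^{-1} \circ \cE \circ \mbf{J}_{f_{GM},\rho}$, use self-adjointness of the $\mbf{J}$-operators to reduce everything to the operator form of data processing in Proposition \ref{prop:DPI-for-J-op}, and collapse back. Your explicit handling of the pseudoinverse identities and the support restriction in the non-full-rank case is a minor refinement of bookkeeping the paper leaves implicit.
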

\begin{proof}
    Using the definition of the inner product, $\mbf{J}^{-1}_{f_{GM},\rho}$ is self-adjoint with respect to the Hilbert-Schmidt inner product, and the data processing inequality for the $\mbf{J}_{f,\rho}^{-1}$ operator,
    \begin{align}
        \Vert \cP^{\ast}_{\cE,\rho}(X) \Vert_{f_{GM},\cN(\rho)} &= \sqrt{\langle \cP^{\ast}_{\cE,\rho}(X), \cP^{\ast}_{\cE,\rho}(X) \rangle_{f,\cN(\rho)}} \\
        &= \sqrt{\langle  \mbf{J}_{f_{GM},\rho}(X) , \cE^{\ast} \circ \mbf{J}_{f_{GM},\cE(\rho)}^{-1} \circ\cE \circ [\mbf{J}_{f_{GM},\rho}(X)] \rangle } \\
        &\leq \sqrt{\langle  \mbf{J}_{f_{GM},\rho}(X) , \mbf{J}_{f_{GM},\rho}^{-1} \circ [\mbf{J}_{f_{GM},\rho}(X)] \rangle } \\
        &= \Vert X \Vert_{f_{GM},\rho} \ .
    \end{align} 
\end{proof}

We now bound a large class of quantum maximal correlation coefficients.
\begin{proposition}[Item 2 of Theorem \ref{thm:extreme-values-summary}]
    Let $f$ be a normalized operator monotone function satisfying $f \geq f_{GM}$. Then 
    $$\mu_{f}(A:B)_{\rho} \leq \mu_{f}^{\text{Lin}}(A:B)_{\rho} \leq 1 \ . $$
\end{proposition}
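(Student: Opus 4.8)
The plan is to prove the statement in three moves. The inequality $\mu_f(A:B)_\rho \le \mu_f^{\text{Lin}}(A:B)_\rho$ is immediate, since the two quantities share the same objective and constraints but $\mu_f^{\text{Lin}}$ optimizes over all of $\Lin(A)\times\Lin(B)$ rather than the smaller set $\Herm(A)\times\Herm(B)$. Next I would reduce the upper bound to the single case $f=f_{GM}$: because $f \ge f_{GM}$, Item 1 of Proposition~\ref{prop:relating-f-correlation-coefficients} gives $\mu_f^{\text{Lin}}(A:B)_\rho \le \mu_{f_{GM}}^{\text{Lin}}(A:B)_\rho$, so it suffices to show $\mu_{f_{GM}}^{\text{Lin}}(A:B)_\rho \le 1$. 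All the real work is thus the geometric-mean case, which is exactly where the symmetry-inducing property of $f_{GM}$ becomes essential. (Throughout I may restrict to the supports of $\rho_A,\rho_B$ by Proposition~\ref{prop:non-neg-func-need-not-restrict-support}, keeping the inverses below well-defined.)

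For the $f_{GM}$ bound I would run the non-commutative version of the classical Cauchy--Schwarz argument. By Proposition~\ref{prop:every-joint-state-is-a-degraded-purif} there is a channel $\cE_{A\to B}$ with $\rho_{AB} = (\id_A \otimes \cE)(\psi^\rho_{AA'})$, and combining \eqref{eq:identification-of-rhoAB-via-GM} with the factorization $\cE \circ \mbf{J}_{f_{GM},\rho_A} = \mbf{J}_{f_{GM},\rho_B} \circ \cP^{\ast}_{\cE,\rho}$ (obtained by unwinding Definition~\ref{def:Petz-recovery-map}) yields $\rho_{AB} = \Omega_\Lambda$ for $\Lambda := \mbf{J}_{f_{GM},\rho_B}\circ\cP^{\ast}_{\cE,\rho}$. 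Applying the Choi identity $\Tr[(x^{T}\otimes y)\Omega_\Lambda] = \langle y, \Lambda(x)\rangle$ from the proof of Proposition~\ref{prop:Schmidt-to-sing} with $x = X^{T}$ and $y = Y^{\ast}$, the objective becomes
\begin{align}
\Tr[(X\otimes Y^{\ast})\rho_{AB}] = \Tr[Y\,\mbf{J}_{f_{GM},\rho_B}(\cP^{\ast}_{\cE,\rho}(X^{T}))] = \bigip{Y^{\ast}}{\cP^{\ast}_{\cE,\rho}(X^{T})}_{f_{GM},\rho_B} \ .
\end{align}

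I would then apply Cauchy--Schwarz in the Hilbert space $L^2_{f_{GM}}(\rho_B)$, bounding the objective by $\norm{Y^{\ast}}_{f_{GM},\rho_B}\,\norm{\cP^{\ast}_{\cE,\rho}(X^{T})}_{f_{GM},\rho_B}$. The second factor is at most $\norm{X^{T}}_{f_{GM},\rho_A}$ by the contraction property of the adjoint Petz map (Proposition~\ref{prop:adjoint-of-Petz-is-contraction}). It remains to check the normalizations survive the conjugate and transpose: using the expansion \eqref{eq:MC-form} and the symmetry $P_{f_{GM}}(\lambda_i,\lambda_j)=\sqrt{\lambda_i\lambda_j}=P_{f_{GM}}(\lambda_j,\lambda_i)$, one gets $\norm{Y^{\ast}}_{f_{GM},\rho_B} = \norm{Y}_{f_{GM},\rho_B} = 1$ and $\norm{X^{T}}_{f_{GM},\rho_A} = \norm{X}_{f_{GM},\rho_A} = 1$. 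Combining these gives $\abs{\Tr[(X\otimes Y^{\ast})\rho_{AB}]} \le 1$ for every feasible pair, hence $\mu_{f_{GM}}^{\text{Lin}}(A:B)_\rho \le 1$. Note that only the variance (norm) constraints are used, the mean-zero constraints merely shrinking the feasible set.

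The main obstacle is the bookkeeping in this $f_{GM}$ step: correctly tracking the transposes and adjoints introduced by the Choi representation and by passing between $Y$ and $Y^{\ast}$, and verifying they leave the $f_{GM}$-norms invariant. This is precisely where the argument needs $f_{GM}$ rather than a general $f$ — transpose-invariance of the norm is equivalent to symmetry of the perspective function — which explains why the geometric mean is central to the bound and why the second move pushes every $f \ge f_{GM}$ down to this one case.
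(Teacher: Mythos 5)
Your proposal is correct and takes essentially the same route as the paper's proof: reduce to $f_{GM}$ via the ordering in Proposition~\ref{prop:relating-f-correlation-coefficients}, identify $\rho_{AB} = \Omega_{\cE\circ\mbf{J}_{f_{GM},\rho_{A}}}$ using Proposition~\ref{prop:every-joint-state-is-a-degraded-purif}, rewrite the objective through the adjoint Petz map, and close with Cauchy--Schwarz in $L^{2}_{f_{GM}}$ plus the contraction property (Proposition~\ref{prop:adjoint-of-Petz-is-contraction}) and transpose-invariance of the norm. The only discrepancy is a harmless conjugation slip --- the Choi identity with $y = Y^{\ast}$ gives $\Tr[Y^{\ast}\Lambda(X^{T})] = \langle Y, \Lambda(X^{T})\rangle$ rather than $\Tr[Y\,\Lambda(X^{T})]$ --- which your extra check $\Vert Y^{\ast}\Vert_{f_{GM},\rho_{B}} = \Vert Y\Vert_{f_{GM},\rho_{B}}$ renders immaterial.
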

\begin{proof}
    By the ordering on the maximal correlation coeffficients given in Proposition \ref{prop:relating-f-correlation-coefficients}, it suffices to prove the bound for $f$ being the geometric mean. By definition of the maximal correlation coefficient, we assume $A$ and $B$ are such that $\rho_{A}$ and $\rho_{B}$ are full rank. Let the transpose on $A$ be defined in the eigenbasis of $\rho_{A}$. Let $\cE$ be the channel such that $\rho_{AB} = \Omega_{\cE \circ \mbf{J}_{f_{GM}}}$ as exists by Lemma \ref{prop:every-joint-state-is-a-degraded-purif}. Then we have the equalities
    \begin{align}
        \Tr[X \otimes Y^{\ast}\rho_{AB}]
        &=  \Tr[X \otimes Y^{\ast}(\id_{A} \otimes \cE \circ \mbf{J}_{f_{GM},\rho})(\Phi^{+})] \label{eq:identify-rho-AB-with-Jfgm} \\
        &= \Tr[X \otimes Y^{\ast}(\id_{A} \otimes \mbf{J}_{f_{GM},\cE(\rho)} \circ \mbf{J}_{f_{GM},\cE(\rho)}^{-1} \circ \cE \circ \mbf{J}_{f_{GM},\rho})(\Phi^{+})] \\
        &= \Tr[X \otimes Y^{\ast}(\id_{A} \otimes \mbf{J}_{f_{GM},\cE(\rho)} \circ \cP_{\cE,\rho}^{\ast})(\Phi^{+})] \\
        &= \Tr[X \otimes Y^{\ast}\Omega_{\mbf{J}_{f_{GM},\cE(\rho)} \circ \cP_{\cE,\rho}^{\ast}}] \\ 
        &= \langle Y, \mbf{J}_{f_{GM},\cE(\rho)} \circ \cP_{\cE,\rho}^{\ast}(X^{T}) \rangle \\
        &= \langle Y, \cP_{\cE,\rho}^{\ast}(X^{T}) \rangle_{f_{GM},\cE(\rho)} \ , 
    \end{align}
    where the third equality is the definition of the Petz recovery map, the fourth is the definition of the Choi operator, and the fifth is \eqref{eq:action-of-Choi}.

    Now, using the Cauchy-Schwarz inequality and Proposition \ref{prop:adjoint-of-Petz-is-contraction},
    \begin{align}
         \vert \Tr[X \otimes Y^{\ast}\rho_{AB}] \vert =  \vert \langle Y, \cP_{\cE,\rho}^{\ast}(X^{T}) \rangle_{f_{GM},\cE(\rho)} \vert 
         &\leq \Vert Y \Vert_{f_{GM},\cE(\rho)} \Vert \cP_{\cE,\rho}^{\ast}(X^{T}) \Vert_{f_{GM},\cE(\rho)} \\ 
         &\leq \Vert Y \Vert_{f_{GM},\cE(\rho)} \Vert X^{T} \Vert_{f_{GM},\rho} \ . 
    \end{align}
    Finally, note that $\cE(\rho_{A}) = \rho_{B}$ and using the way the transpose has been chosen a direct calculation verifies $\Vert X^{T} \Vert_{f_{GM},\rho} = \Vert X \Vert_{f_{GM},\rho}$. Thus, we have for any feasible $X$ and $Y$,  $\vert \Tr[X \otimes Y^{\ast}\rho_{AB}] \vert \leq \Vert Y \Vert_{f_{GM},\cE(\rho)} \Vert X \Vert_{f_{GM},\rho} \leq 1$. This completes the proof.
\end{proof}

\begin{remark}\label{rem:importance-of-f-GM-for-probabilistic-proof}
While we have not yet introduced the technical details to make this clear, the reason $f_{GM}$ is central in the above argument is that we may always find a quantum channel $\cE$ such that $\rho_{AB} = \Omega_{\cE \circ \mbf{J}_{f_{GM},\rho_{A}}}$ so that we can switch to the $f_{GM}$ inner product. For other choices of operator monotone function $f$, this correspondence does not handle the full space of quantum states. Further details are provided in Remarks \ref{remark:bounding-contraction-coeff} and \ref{rem:bounding-other-max-corr-coeffs} once further mathematical details are provided.
\end{remark}

\subsubsection{Quantum Decomposability and Correlation Distillation} Ahlswede and G\'{a}cs defined the notion of a distribution being `decomposable' in the following manner. \begin{definition}\label{def:decomposable-distribution}
    \cite{Ahlswede-1976a} A joint distribution $p_{XY}$ is `decomposable' if there exists $A \subset \cX$, $B \subset \cY$ such that $0 < \Pr[x \in A], \Pr[y \in B] < 1$ and $x \in A$ if and only if $y \in B$. Otherwise, it is indecomposable.  
\end{definition}
\noindent This in effect says there are local events $A \subset \cX$ and $B \subset \cY$ such that neither happens with certainty, but either both $A$ and $B$ occur or neither $A$ nor $B$ occur. Given this, perhaps intuitively, Witsenhausen \cite{Witsenhausen-1975a} showed the following are equivalent:
\begin{enumerate}[itemsep=0pt]
    \item $\mu(X:Y)_{p} = 1$,
    \item there exist deterministic functions $f:\cX \to \{0,1\}$, $g:\cY \to \{0,1\}$ and parameter $p \in (0,1)$ such that $(f \otimes g)(p_{XY}) = p\dyad{0}^{\otimes 2}+(1-t)\dyad{1}^{\otimes 2}$, and 
    \item the joint distribution is decomposable.
\end{enumerate}
Motivated by this, we call Item 2 the `Witsenhausen property' of $\mu(p_{XY})$. 

Our goal is to generalize this equivalence of Witsenhausen's. This will make use of two lemmas. The first is the following technical result proven by Beigi.
\begin{lemma}\label{lem:Beigi-thm-5} \cite[Part of Theorem 5(b)]{Beigi-2013a}
    $\mu_{AM}(A:B)_{\rho} = 1$ if and only if there exist local measurements $\{M,\mbb{1}-M\}$, $\{N,\mbb{1}-N\}$ such that $\tr(\rho_{AB}M \otimes N) \in (0,1)$ and 
    $$\tr(\rho_{AB}(M \otimes (\mbb{1}-N)) = \tr(\rho_{AB}((\mbb{1}-M) \otimes N)) = 0 \ . $$
\end{lemma}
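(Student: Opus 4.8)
The plan is to handle the two directions separately, exploiting that $f_{AM}$ induces the \emph{standard} quantum variance: since $P_{f_{AM}}(x,y)=\tfrac{x+y}{2}$ we have $\mbf{J}_{f_{AM},\sigma}(X)=\tfrac12(\sigma X+X\sigma)$, so for Hermitian mean-zero $X$ the normalization reads $\langle X,X\rangle_{f_{AM},\rho_A}=\Tr[\rho_A X^2]=1$ (Item 4 of Proposition~\ref{prop:variance-properties}). With this in hand the ``if'' direction is immediate from data processing: given the measurements, the local measurement channels $\cM_{A\to X},\cN_{B\to Y}$ send $\rho_{AB}$ to the classical state $p_{XY}=p\ket{00}\bra{00}+(1-p)\ket{11}\bra{11}$ with $p=\Tr[\rho_{AB}(M\otimes N)]\in(0,1)$, whose two bits are perfectly correlated so that $\mu(X:Y)_{p}=1$. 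Because $\mu_{AM}$ recovers the classical coefficient on classical states (Proposition~\ref{prop:recovers-classical}) and satisfies data processing under local operations (Proposition~\ref{prop:DPI-for-f-correlation}), one gets $1=\mu_{AM}(X:Y)_{p}\le\mu_{AM}(A:B)_\rho$, and the matching upper bound $\mu_{AM}(A:B)_\rho\le1$ (Item 2 of Theorem~\ref{thm:extreme-values-summary}, valid since $f_{AM}\ge f_{GM}$) forces equality.

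For the ``only if'' direction I would start from optimizers $X\in\Herm(A)$, $Y\in\Herm(B)$, normalized and mean-zero, with $\Tr[(X\otimes Y)\rho_{AB}]=1$ (the value is real and positive after adjusting the sign of $Y$). Writing $\rho_{AB}^{1/2}$ and using Hermiticity, the objective equals the Hilbert--Schmidt inner product $\langle (X\otimes\mathbb{1})\rho_{AB}^{1/2},\,(\mathbb{1}\otimes Y)\rho_{AB}^{1/2}\rangle$, while the normalization makes each factor a unit vector, e.g.\ $\Vert(X\otimes\mathbb{1})\rho_{AB}^{1/2}\Vert^2=\Tr[X^2\rho_A]=1$. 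Thus the value $1$ is exactly saturation of Cauchy--Schwarz, which for unit vectors forces the operator identity $(X\otimes\mathbb{1})\rho_{AB}^{1/2}=(\mathbb{1}\otimes Y)\rho_{AB}^{1/2}$. Because $X\otimes\mathbb{1}$ and $\mathbb{1}\otimes Y$ commute, I can substitute this identity into itself repeatedly to obtain $(X^k\otimes\mathbb{1})\rho_{AB}^{1/2}=(\mathbb{1}\otimes Y^k)\rho_{AB}^{1/2}$ for every $k$, hence $(h(X)\otimes\mathbb{1})\rho_{AB}^{1/2}=(\mathbb{1}\otimes h(Y))\rho_{AB}^{1/2}$ for every function $h$; in particular the spectral projectors of $X$ and $Y$ at a common eigenvalue $\lambda$ satisfy $(\Pi_{X=\lambda}\otimes\mathbb{1})\rho_{AB}^{1/2}=(\mathbb{1}\otimes\Pi_{Y=\lambda})\rho_{AB}^{1/2}$.

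To finish I would set $M=\Pi_{X=\lambda}$ and $N=\Pi_{Y=\lambda}$, genuine projective two-outcome measurements. From the projector identity and $M^2=M$ one computes $\Tr[\rho_{AB}(M\otimes N)]=\Vert(M\otimes\mathbb{1})\rho_{AB}^{1/2}\Vert^2=\Tr[M\rho_A]$, and comparing the norms of the two sides of the identity gives $\Tr[M\rho_A]=\Tr[N\rho_B]$; the off-diagonal terms then vanish, $\Tr[\rho_{AB}(M\otimes(\mathbb{1}-N))]=\Tr[M\rho_A]-\Tr[M\rho_A]=0$ and similarly $\Tr[\rho_{AB}((\mathbb{1}-M)\otimes N)]=\Tr[N\rho_B]-\Tr[M\rho_A]=0$. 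Finally $\Tr[\rho_{AB}(M\otimes N)]=\Tr[M\rho_A]\in(0,1)$ for a suitable $\lambda$, since $X$ being mean-zero with $\Tr[X^2\rho_A]=1$ cannot be a scalar on $\supp(\rho_A)$ and therefore has a spectral projector of $\rho_A$-weight strictly between $0$ and $1$. The main obstacle is the middle step: upgrading the scalar Cauchy--Schwarz equality to an \emph{operator} identity on $\supp(\rho_{AB})$ and then bootstrapping it—via the commutativity of $X\otimes\mathbb{1}$ and $\mathbb{1}\otimes Y$—to the level of spectral projectors, which is precisely what turns ``perfect correlation of two observables'' into ``perfectly correlated two-outcome measurements.'' Everything else is routine once the standard-variance interpretation of $f_{AM}$ and the bound $\mu_{AM}\le1$ are available.
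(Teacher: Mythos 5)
Your proof is correct, but it takes a genuinely different route from the paper's. The paper's own proof of Lemma~\ref{lem:Beigi-thm-5} is a two-line reduction: using the equivalence $\mu_{AM}(A:B)_{\rho}=\mu_{f_{1}}(A:B)_{\rho}$ on Hermitian optimizers (noted in Subsection~\ref{subsec:relation-between-maximal-correlation-coeffs}, since $\langle X,X\rangle_{f_{AM},\sigma}=\Tr[\sigma X^{2}]=\langle X,X\rangle_{f_{1},\sigma}$ for Hermitian $X$) together with Beigi's bound $\mu^{\text{Lin}}_{f_{1}}\leq 1$, it simply invokes \cite[Theorem 5(b)]{Beigi-2013a}, which already contains the stated POVM characterization. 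You instead re-derive the equivalence from first principles: the ``if'' direction via measurement channels, consistency with the classical coefficient (Proposition~\ref{prop:recovers-classical}, whose Hermitian case is self-contained), data processing (Proposition~\ref{prop:DPI-for-f-correlation}), and the independently proven bound $\mu_{AM}\leq 1$ (Item 2 of Theorem~\ref{thm:extreme-values-summary}, so no circularity with Item 3); the ``only if'' direction via the vectorization $u=(X\otimes\mathbb{1})\rho_{AB}^{1/2}$, $v=(\mathbb{1}\otimes Y)\rho_{AB}^{1/2}$, where the constraints make $u,v$ unit vectors and the value $1$ forces $u=v$ by Cauchy--Schwarz, after which the commutation bootstrap $(X^{k}\otimes\mathbb{1})\rho_{AB}^{1/2}=(\mathbb{1}\otimes Y^{k})\rho_{AB}^{1/2}$ and polynomial interpolation on the finite joint spectrum legitimately upgrade the identity to spectral projectors. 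All the delicate steps check out: attainment of the maximum is consistent with the paper's restriction to the supports (Proposition~\ref{prop:non-neg-func-need-not-restrict-support}), the CS equality with inner product exactly $1$ gives $u=v$ rather than mere proportionality, and your weight argument ($\Tr[\rho_{A}X]=0$, $\Tr[\rho_{A}X^{2}]=1$ force at least two eigenvalues of $X$ with positive $\rho_{A}$-weight) correctly produces $\Tr[M\rho_{A}]\in(0,1)$, with $N\neq 0$ automatic from $\Tr[M\rho_{A}]=\Tr[N\rho_{B}]$. Notably, your argument buys something the paper's citation does not: it outputs \emph{projective} measurements directly, so along your route the equivalence of Items 3)ii) and 3)iii) of Theorem~\ref{thm:extreme-values-summary} comes for free and the separate Holevo--Helstrom upgrade in Proposition~\ref{prop:POVM-to-PVM} becomes unnecessary; what the paper's approach buys is brevity and an explicit link to \cite{Beigi-2013a}, whereas yours is a self-contained reproof of Beigi's theorem within the paper's $L^{2}_{f}(\sigma)$ framework.
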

\begin{proof}
    The conditions in terms of a POVM are stated in \cite[Theorem 5]{Beigi-2013a} when $\mu_{f_{1}}(\rho_{AB})$ is achieved with a Hermitian operator. By an equivalence observed in Subsection \ref{subsec:relation-between-maximal-correlation-coeffs}, $\mu_{AM}(\rho_{AB}) = \mu_{f_{1}}(\rho_{AB}) \leq \mu_{f_{1}}^{\text{Lin}}(\rho_{AB})$ where the inequality becomes an equality when $\mu^{\text{Lin}}_{f_{1}}$ is achieved by a Hermitian operator. As Beigi showed $\mu^{\text{Lin}}_{f_{1}}(\rho_{AB}) \leq 1$, this is equivalent to stating the exact conditions for when $\mu_{AM}(\rho_{AB}) = 1$. This completes the proof.
\end{proof}

The second is a reduction from the existence of perfect correlation under two-outcome local measurements to the existence of projective two-outcome measurements, which we prove in Appendix \ref{app-subsec:max-corr-coeff-lemmata}.
\begin{proposition}\label{prop:POVM-to-PVM}
    If there exist two-outcome measurements with POVM elements $\{M,\mbb{1}_{A}-M\}$, $\{N,\mbb{1}_{B}-N\}$ such that 
    $$\Tr[\rho M \otimes (\mbb{1}_{B} -N)] = 0 = \Tr[\rho (\mbb{1}_{A} - M) \otimes N] \ , $$
    and $0 < \Tr[\rho M \otimes N] < 1$, then there exist projective measurements that do the same.
\end{proposition}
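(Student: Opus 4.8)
The plan is to replace each POVM element by its support projection and to show that the resulting projective measurements reproduce the \emph{exact} same joint statistics. The key elementary fact is that for positive semidefinite $\rho$ and $X$ one has $\Tr[\rho X]=0$ if and only if $\supp(\rho)\perp\supp(X)$, since $\Tr[\rho X]=0$ forces $\rho^{1/2}X\rho^{1/2}=0$ and hence $X^{1/2}\rho X^{1/2}=0$. Writing $\Pi_M,\Pi_{\bar M}$ for the support projections of $M$ and $\mbb{1}_A-M$ (and $\Pi_N,\Pi_{\bar N}$ on $B$), and using $\supp(M\otimes(\mbb{1}_B-N))=\supp(M)\otimes\supp(\mbb{1}_B-N)$, the two hypotheses upgrade immediately to $\Tr[\rho(\Pi_M\otimes\Pi_{\bar N})]=0$ and $\Tr[\rho(\Pi_{\bar M}\otimes\Pi_N)]=0$.

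I would then propose the projective two-outcome measurements $\{\Pi_M,\mbb{1}_A-\Pi_M\}$ and $\{\Pi_N,\mbb{1}_B-\Pi_N\}$. The two off-diagonal vanishing conditions follow from operator monotonicity: since $M\le\Pi_M$ we get $\mbb{1}_A-\Pi_M\le\mbb{1}_A-M\le\Pi_{\bar M}$ (the last step because $\|\mbb{1}_A-M\|_\infty\le 1$ and $\mbb{1}_A-M$ is supported on $\im\Pi_{\bar M}$), and likewise $\mbb{1}_B-\Pi_N\le\Pi_{\bar N}$. Combining with the projection identities from the first step and $\rho\ge 0$ gives $\Tr[\rho(\Pi_M\otimes(\mbb{1}_B-\Pi_N))]\le\Tr[\rho(\Pi_M\otimes\Pi_{\bar N})]=0$ and symmetrically $\Tr[\rho((\mbb{1}_A-\Pi_M)\otimes\Pi_N)]=0$, so the new measurement carries no cross weight. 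The lower bound $\Tr[\rho(\Pi_M\otimes\Pi_N)]>0$ is then immediate from $M\otimes N\le\Pi_M\otimes\Pi_N$ and the hypothesis $\Tr[\rho(M\otimes N)]>0$.

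The main obstacle is the strict upper bound $\Tr[\rho(\Pi_M\otimes\Pi_N)]<1$, equivalently $\Tr[\rho((\mbb{1}_A-\Pi_M)\otimes(\mbb{1}_B-\Pi_N))]>0$; here the naive operator inequality points the wrong way, since $\mbb{1}_A-M\ge\mbb{1}_A-\Pi_M$. To resolve it I would pass to the spectral decomposition, splitting $A$ into the eigenspaces $V^M_0,V^M_{\mathrm{mid}},V^M_1$ of $M$ with eigenvalue $0$, in $(0,1)$, and $1$ (and $B$ analogously), and track the diagonal weights $w_{s,t}=\Tr[\rho(V^M_s\otimes V^N_t)]$. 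The two projection identities force every block with an intermediate index, together with the mismatched blocks $(1,0)$ and $(0,1)$, to carry zero weight, leaving only $w_{0,0}$ and $w_{1,1}$ with $w_{0,0}+w_{1,1}=1$. Because $\mbb{1}_A-M$ and $\mbb{1}_B-N$ vanish on the eigenvalue-$1$ spaces and the blocks touching $V_{\mathrm{mid}}$ have zero weight, one obtains $\Tr[\rho((\mbb{1}_A-M)\otimes(\mbb{1}_B-N))]=w_{0,0}=\Tr[\rho((\mbb{1}_A-\Pi_M)\otimes(\mbb{1}_B-\Pi_N))]$; the left-hand side equals $1-\Tr[\rho(M\otimes N)]$ after using the two original vanishing conditions, so the hypothesis $\Tr[\rho(M\otimes N)]<1$ closes the bound. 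The same block computation moreover yields $\Tr[\rho(\Pi_M\otimes\Pi_N)]=w_{1,1}=\Tr[\rho(M\otimes N)]$, showing the projective measurements reproduce the POVM statistics exactly and not merely qualitatively.
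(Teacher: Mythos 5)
Your proof is correct, but it takes a genuinely different route from the paper's. The paper argues via state discrimination: it forms the conditional states $\rho^{0}_{B} \propto \Tr_{A}[M\otimes\mbb{1}_{B}\,\rho_{AB}]$ and $\rho^{1}_{B} \propto \Tr_{A}[(\mbb{1}_{A}-M)\otimes\mbb{1}_{B}\,\rho_{AB}]$, shows the hypotheses force $\Tr[N\rho^{0}]=1$ and $\Tr[N\rho^{1}]=0$, invokes the Holevo--Helstrom theorem to conclude $\rho^{0}\perp\rho^{1}$, replaces $\{N,\mbb{1}-N\}$ by $\{\Pi_{\supp(\rho^{0})},\mbb{1}-\Pi_{\supp(\rho^{0})}\}$, and then repeats the argument symmetrically on the $A$ side. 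You instead work directly with the spectral structure of the POVM elements themselves: the elementary fact that $\Tr[\rho X]=0$ for $\rho,X\geq 0$ forces orthogonal supports upgrades the hypotheses to statements about $\Pi_{M}\otimes\Pi_{\bar N}$ and $\Pi_{\bar M}\otimes\Pi_{N}$, and the three-way eigenspace split $V_{0}\oplus V_{\mathrm{mid}}\oplus V_{1}$ with the block-weight bookkeeping correctly kills every block except $(0,0)$ and $(1,1)$ (I checked: the two vanishing conditions eliminate exactly the blocks with an intermediate index plus $(1,0)$ and $(0,1)$, and block-diagonality of $(\mbb{1}-M)\otimes(\mbb{1}-N)$ justifies your reduction to $w_{0,0}$). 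Your approach buys two things the paper's does not make explicit: it is entirely elementary (no discrimination theorem, and no two-stage replacement where the second round depends on the first), and it yields sharper structural output --- the projective measurements are canonically the support projections of the original POVM elements, $\rho$ carries no weight on the intermediate spectral blocks, and $\Tr[\rho(\Pi_{M}\otimes\Pi_{N})]=\Tr[\rho(M\otimes N)]$ exactly, not merely some value in $(0,1)$. What the paper's route buys is brevity and conceptual alignment with the rest of Section 4, where Holevo--Helstrom and conditional-state distinguishability are already the working tools (e.g.\ in the independence lemma); in fairness, the paper's final replacement also reproduces the same probabilities, so the two proofs land in the same place. One cosmetic note: your operator-monotonicity detour $\mbb{1}_{A}-\Pi_{M}\leq\mbb{1}_{A}-M\leq\Pi_{\bar M}$ for the off-diagonal vanishing is valid but redundant, since $\mbb{1}_{A}-\Pi_{M}$ is the projector onto $V^{M}_{0}$ and the vanishing of the $(\mathrm{mid},0)$ and $(1,0)$ blocks already gives it directly from your own bookkeeping.
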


We now prove the quantum generalization of the equivalences.
\begin{proposition}[Item 3 of Theorem \ref{thm:extreme-values-summary}]\label{prop:mu-AM-equivalences}
    Let $\rho_{AB}$ be a quantum state. The following are equivalent
    \begin{enumerate}
        \item $\mu_{AM}(A:B)_{\rho}=1$,  
        \item there exist local two-outcome measurements, $\cM_{A \to X}$ and $\cN_{B \to X'}$ such that $(\cM \otimes \cN)(\rho_{AB}) = \chi^{\vert p}_{XX'}$ for $0 < p < 1$,
        \item the previous item holds with projective two-outcome measurements,
        \item there exist decompositions of the spaces $A = A_{0} \oplus A_{1}$ and $B = B_{0} \oplus B_{1}$ such that
        $$\rho_{AB} = p \rho^{0}_{A_{0}B_{0}} + (1-p)\rho^{1}_{A_{1}B_{1}} + X + X^{\ast} \ , $$
        where $\rho^{i} \in \Density(A_{i} \otimes B_{i})$ for $i \in \{0,1\}$, $X \in \Lin(A_{0} \otimes B_{0}, A_{1} \otimes B_{1})$, and $p \in (0,1)$.
    \end{enumerate}
\end{proposition}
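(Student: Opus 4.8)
The plan is to prove the four conditions equivalent via the chain $(1) \Leftrightarrow (2) \Leftrightarrow (3) \Leftrightarrow (4)$, leaning on the two results already in hand (Lemma~\ref{lem:Beigi-thm-5} and Proposition~\ref{prop:POVM-to-PVM}) for the first two equivalences and reserving the genuinely new work for the equivalence with the decomposability condition in Item~4. For $(1) \Leftrightarrow (2)$, I would note that if $\cM_{A \to X}$ has POVM elements $\{M, \mathbb{1}_{A} - M\}$ (for outcomes $0$ and $1$) and $\cN_{B \to X'}$ has $\{N, \mathbb{1}_{B} - N\}$, then the four entries of the classical--classical output $(\cM \otimes \cN)(\rho_{AB})$ are exactly $\tr(\rho_{AB}\, M \otimes N)$, $\tr(\rho_{AB}\, M \otimes (\mathbb{1}_{B} - N))$, $\tr(\rho_{AB}\,(\mathbb{1}_{A} - M) \otimes N)$ and the complementary term. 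Hence the output equals $\chi^{\vert p}_{XX'}$ with $p \in (0,1)$ precisely when the two cross terms vanish and the diagonal term lies in $(0,1)$, which is verbatim the condition in Lemma~\ref{lem:Beigi-thm-5}. Thus Items~1 and~2 are the same statement in different language.

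For $(2) \Leftrightarrow (3)$, the implication $(3) \Rightarrow (2)$ is immediate since projective measurements are two-outcome measurements, and $(2) \Rightarrow (3)$ is exactly Proposition~\ref{prop:POVM-to-PVM}, which upgrades the POVMs realizing $\chi^{\vert p}$ to projective measurements realizing the same perfect correlation.

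The core of the argument is $(3) \Leftrightarrow (4)$. For $(3) \Rightarrow (4)$, let $P_{A}, P_{B}$ be the rank-projectors of the two projective measurements, inducing $A = A_{0} \oplus A_{1}$ with $A_{0} = \im P_{A}$ and $B = B_{0} \oplus B_{1}$ with $B_{0} = \im P_{B}$. Write $\Pi_{ij}$ for the four block projectors ($\Pi_{00} = P_{A} \otimes P_{B}$, and so on). The condition that the output is $\chi^{\vert p}_{XX'}$ says $\tr(\rho_{AB}\Pi_{01}) = \tr(\rho_{AB}\Pi_{10}) = 0$ while $\tr(\rho_{AB}\Pi_{00}) = p \in (0,1)$. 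The key elementary fact is that for a positive operator $\rho$ and a projector $\Pi$, $\tr(\rho \Pi) = 0$ forces $\Pi\rho = \rho\Pi = 0$: indeed $\tr(\Pi\rho\Pi) = 0$ with $\Pi\rho\Pi \geq 0$ gives $\Pi\rho\Pi = 0$, and expanding $\rho$ in its eigenbasis shows $\Pi$ annihilates the support. Applying this to $\Pi_{01}$ and $\Pi_{10}$ shows $\rho_{AB}$ has no matrix elements touching $A_{0} \otimes B_{1}$ or $A_{1} \otimes B_{0}$, so $\rho_{AB}$ is supported on $(A_{0} \otimes B_{0}) \oplus (A_{1} \otimes B_{1})$. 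Reading off the block structure, the diagonal blocks are $p\rho^{0}_{A_{0}B_{0}}$ and $(1-p)\rho^{1}_{A_{1}B_{1}}$ with $p = \tr(\rho_{AB}\Pi_{00})$, and the single off-diagonal block is $X \coloneq \Pi_{11}\rho_{AB}\Pi_{00} \in \Lin(A_{0} \otimes B_{0}, A_{1} \otimes B_{1})$ with Hermitian conjugate $X^{\ast} = \Pi_{00}\rho_{AB}\Pi_{11}$, which is precisely the decomposition in Item~4.

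For $(4) \Rightarrow (3)$, I would take the projective measurements onto $A_{0}$ and $B_{0}$; a direct computation using that $\rho^{0}, \rho^{1}$ live in the diagonal blocks while $X, X^{\ast}$ are purely off-diagonal shows the cross probabilities $\tr(\rho_{AB}\Pi_{01})$ and $\tr(\rho_{AB}\Pi_{10})$ vanish and $\tr(\rho_{AB}\Pi_{00}) = p \in (0,1)$, so the output is exactly $\chi^{\vert p}_{XX'}$. I expect the main obstacle to be the bookkeeping in $(3) \Rightarrow (4)$ --- correctly tracking the four block subspaces and confirming that all off-diagonal data collapses to the single operator $X$ and its conjugate --- rather than any deep difficulty, since the positivity lemma underlying the block-support conclusion is entirely elementary.
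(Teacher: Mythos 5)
Your proposal is correct and follows essentially the same route as the paper: the same chain $(1)\Leftrightarrow(2)\Leftrightarrow(3)\Leftrightarrow(4)$, with $(1)\Leftrightarrow(2)$ read off from Lemma~\ref{lem:Beigi-thm-5}, $(2)\Leftrightarrow(3)$ from Proposition~\ref{prop:POVM-to-PVM}, and the same block decomposition $\rho_{AB} = p\rho^{0} + (1-p)\rho^{1} + X + X^{\ast}$ extracted from the projective measurements. The only divergence is in how the off-support blocks are killed in $(3)\Rightarrow(4)$: the paper writes $\rho_{AB}$ in block-matrix form and invokes Sylvester's criterion to zero out every row and column meeting the vanishing diagonal blocks, whereas you use the elementary fact that $\Tr[\rho\,\Pi]=0$ with $\rho\geq 0$ forces $\Pi\rho=\rho\Pi=0$ (via $\Pi\rho^{1/2}=0$) to conclude directly that $\supp(\rho_{AB})\subseteq(A_{0}\otimes B_{0})\oplus(A_{1}\otimes B_{1})$ --- the same positivity argument packaged more cleanly, and a perfectly valid substitute.
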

\begin{proof}
    $(1 \iff 2)$ Let $\mu_{AM}(\rho_{AB}) = 1$. By Lemma \ref{lem:Beigi-thm-5}, 
    \begin{align}
        (\cM \otimes \cN)(\rho_{AB}) =& \Tr[\rho M \otimes N]\dyad{0} \otimes \dyad{0} + \Tr[((\mbb{1} - M) \otimes (\mbb{1}-N)\rho_{AB})]\dyad{1} \otimes \dyad{1} \\
        =& p\dyad{0}^{\otimes 2} + (1-p)\dyad{1}^{\otimes 2} \\
        =& \chi^{\vert p}_{XX'} \ ,
    \end{align}
    where $p \in (0,1)$. The other direction follows by noting Alice and Bob can both perform the same binary projective measurement on $\chi^{\vert p}$ to satisfy Item 3 of Lemma \ref{lem:Beigi-thm-5}.\\

    ($2 \iff 3$) This follows from Item 3 of Lemma \ref{lem:Beigi-thm-5} and Proposition \ref{prop:POVM-to-PVM}, which shows POVMs that achieve this strategy imply the existence of projective measurements that do the same.  \\
    
    ($3 \iff 4$) First, if such a decomposition exists, a direct calculation will verify the dichotomous, projective measurements $\{\Pi_{A_{0}},\Pi_{A_{1}}\}$, $\{\Pi_{B_{0}},\Pi_{B_{1}}\}$ will obtain $\chi^{\vert p}$. Thus, we just focus on the other direction. Assume such dichotomous, projective measurements exist, which we denote $\{P_{0},P_{1}\}$, $\{Q_{0},Q_{1}\}$. Now we define $A_{i} = \linspan(\supp(P_{i}))$ for $i \in \{0,1\}$ and similarly for $B_{i}$. This means $P_{0}$ is the projection onto $A_{0}$ and similarly for the others. Note that we may decompose the entire space as $A \otimes B = (A_{0} \otimes B_{0}) \oplus (A_{0} \otimes B_{1}) \oplus (A_{1} \otimes B_{0}) \oplus (A_{1} \otimes B_{1})$, which defines an ordering on the product basis vectors. We may express $\rho_{AB} = \sum_{i,i',j,j' \in \{0,1\}} E_{i,j} \otimes E_{i',j'} \otimes X_{ii',jj'}$ where $X_{ii',jj'} \in \Lin(A_{j} \otimes B_{j'}, A_{i} \otimes A_{i'})$. This is in effect writing it in block matrix form in terms of the subspaces of $A \otimes B$ given above. Since $\rho_{AB}$ is a density matrix, we know that $X_{ii',ii'} = q(i,i')\sigma_{A_{i}B_{i}}$ where $\sigma_{A_{i}B_{i}} \in \Density(A_{i} \otimes B_{i})$ and $\{q(i,i')\}_{i,i' \in \{0,1\}^{\times 2}}$ is a probability distribution. Now, $0 = \Tr[P_{0} \otimes Q_{1}\rho_{AB}] = t_{0,1}\Tr[ \sigma_{A_{0}B_{1}}]$ where the first equality is our assumption and the second is our block decomposition. It follows $t_{0,1} = 0$. By an identical argument $t_{1,0} = 0$. This means that $X_{01,01}$ and $X_{10,10}$ are both the zero matrix. By Sylvester's criterion, we know every column/row that includes an element of $X_{01,01}$, $X_{10,10}$ must be zero. In other words, for all $i,j$, 
    $$ X_{01,ij} = 0 \quad X_{10,ij} = 0 \quad X_{ij,01} = 0 \quad X_{ij,10} = 0 \ . $$
    The only blocks remaining are $\sigma_{A_{0}B_{0}},\sigma_{A_{1}B_{1}}$, $X_{00,11}$ and $X_{11,00}$. As $\rho_{AB}$ is Hermitian, $X_{00,11} = X^{\ast}_{11,00}$. Defining $p \coloneq t_{0,0}$, $\rho^{i}_{A_{i}B_{i}} \coloneq \sigma_{A_{i}B_{i}}$ and $X \coloneq X_{00,11}$ completes the proof.
\end{proof}

\subsection{Quantum Correlation Coefficients as Strong Monotones for Conversion under Local Processing}\label{subsec:strong-monotones}
As mentioned in the introduction, the property that classically makes the maximal correlation coefficient a strong monotone for transformations under local operations is that it \textit{tensorizes} over independent distributions, i.e. for $p_{XY} \otimes q_{X'Y'}$, $\mu(XX':YY')_{p \otimes q} = \max\{\mu(X:Y)_{p},\mu(X':Y')\}$. In \cite{Beigi-2013a}, Beigi showed this extends to $\mu^{\text{Lin}}_{f_{1}}(A:B)_{\rho}$. In this section, we extend this in a variety of manners.

First, we show that $\mu^{\text{Lin}}_{f_{k}}(A:B)_{\rho}$ tensorizes for all $k \in [0,1]$. That is, given Proposition \ref{prop:multiplicativity-of-J}, the tensorization property holds for all $\mu^{\text{Lin}}_{f}(A:B)$ that are induced by a multiplicative operator monotone function. This shows all these quantities are strong monotones for local operations:
\begin{tcolorbox}[width=\linewidth, sharp corners=all, colback=white!95!black, boxrule=0pt,frame hidden]
\begin{theorem}\label{thm:k-correlation-nec-for-local-processing}
    There exists $n \in \mbb{N}$ such that $\rho_{AB}^{\otimes n}$ can be converted to $\sigma_{A'B'}$ under local two-positive trace-preserving maps, only if
    $$\mu^{\text{Lin}}_{f_{k}}(A:B)_{\rho} \geq \mu^{\text{Lin}}_{f_{k}}(A':B')_{\sigma} \quad \forall k \in [0,1] \, . $$
\end{theorem}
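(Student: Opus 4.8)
The plan is to prove this as a strong-monotone statement by combining two ingredients that are already in place: the data processing inequality for $\mu^{\text{Lin}}_{f}$ under local operations (Proposition~\ref{prop:DPI-for-f-correlation}) and the tensorization of $\mu^{\text{Lin}}_{f_k}$ for power functions (Lemma~\ref{lem:k-correlation-tensorize}), which itself rests on the multiplicativity of $\mbf{J}_{f_k,\sigma}$ established in Proposition~\ref{prop:multiplicativity-of-J}. This is exactly the quantum analogue of Witsenhausen's classical template sketched in the introduction: a quantity that both contracts under local processing and tensorizes over independent copies is automatically a monotone whose constraint is insensitive to the number of input copies $n$.

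Concretely, suppose $(\cW_{A^n \to A'} \otimes \cR_{B^n \to B'})(\rho_{AB}^{\otimes n}) = \sigma_{A'B'}$ for local two-positive trace-preserving maps $\cW, \cR$ and some $n \in \mbb{N}$. First I would verify that these maps lie in the class for which the DPI applies: since $\cW$ and $\cR$ are two-positive and trace-preserving, their Hilbert--Schmidt adjoints $\cW^{\ast}, \cR^{\ast}$ are two-positive and unital, hence Schwarz maps by the cited fact that every unital two-positive map is a Schwarz map. Thus $\cW$ and $\cR$ are adjoints of unital Schwarz maps, and Proposition~\ref{prop:DPI-for-f-correlation} gives, for each fixed $k \in [0,1]$,
\begin{align*}
\mu^{\text{Lin}}_{f_k}(A':B')_{\sigma} = \mu^{\text{Lin}}_{f_k}(A':B')_{(\cW \otimes \cR)(\rho^{\otimes n})} \leq \mu^{\text{Lin}}_{f_k}(A^n:B^n)_{\rho^{\otimes n}} \ .
\end{align*}
Next I would apply the tensorization property inductively: Lemma~\ref{lem:k-correlation-tensorize} yields $\mu^{\text{Lin}}_{f_k}(A^2:B^2)_{\rho^{\otimes 2}} = \max\{\mu^{\text{Lin}}_{f_k}(A:B)_{\rho}, \mu^{\text{Lin}}_{f_k}(A:B)_{\rho}\} = \mu^{\text{Lin}}_{f_k}(A:B)_{\rho}$, and iterating collapses $\mu^{\text{Lin}}_{f_k}(A^n:B^n)_{\rho^{\otimes n}} = \mu^{\text{Lin}}_{f_k}(A:B)_{\rho}$. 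Chaining the two displays gives $\mu^{\text{Lin}}_{f_k}(A':B')_{\sigma} \leq \mu^{\text{Lin}}_{f_k}(A:B)_{\rho}$, and since $k$ was arbitrary the inequality holds for all $k \in [0,1]$.

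The genuine obstacle is not in the theorem itself but in the tensorization lemma it invokes. The only operator monotone functions for which $\mbf{J}_{f,\sigma}$ is multiplicative over tensor products are the power functions $f_k(x)=x^k$ (Proposition~\ref{prop:multiplicativity-of-J}), and it is precisely this multiplicativity that allows the $L^2_{f_k}$-variance constraints to factorize across tensor components, reducing the supremum over $\Lin(A^n)$ to an optimizer concentrated on a single factor. Within the proof of the theorem as stated, the only care required is the map-class bookkeeping in the first step---confirming that two-positive trace-preserving maps are adjoints of unital Schwarz maps so that the DPI is applicable---and observing that the restriction to $\{f_k\}_{k \in [0,1]}$ is inherited entirely from the tensorization step, since the contraction step of Proposition~\ref{prop:DPI-for-f-correlation} holds for \emph{any} normalized operator monotone $f$.
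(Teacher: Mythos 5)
Your proposal is correct and follows essentially the same route as the paper's own proof: chain the data processing inequality of Proposition~\ref{prop:DPI-for-f-correlation} with the tensorization of Lemma~\ref{lem:k-correlation-tensorize} applied inductively, noting that the adjoints of two-positive trace-preserving maps are unital Schwarz maps. Your explicit verification of the map-class bookkeeping is a point the paper leaves implicit, but it is the same argument.
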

\end{tcolorbox}

Second, we establish that all the Schmidt coefficients relevant to $\mu_{GM}(A:B)_{\rho}$ are monotones under local operations. This implies a new majorization condition for conversion under local operations that includes $\mu_{GM}$ acting as a monotone for conversion under local positive maps as a special case.
\begin{tcolorbox}[width=\linewidth, sharp corners=all, colback=white!95!black, boxrule=0pt,frame hidden]
\begin{theorem}\label{thm:mu-GM-as-monotone}
   Consider quantum states $\rho_{AB},\sigma_{A'B'},\tau_{A''B''}$. Define $\wt{\rho} \coloneq (\rho_{A} \otimes \rho_{B})^{-1/4}\rho_{AB}(\rho_{A} \otimes \rho_{B})^{-1/4}$ and similarly for $\wt{\sigma}$, $\wt{\tau}$. Let $\{\lambda_{i}\}_{i \in [d_{A}d_{B}]}$, $\{\omega_{j}\}_{j \in [d_{A'}d_{B'}]}$, $\{\zeta_{k}\}_{k \in [d_{A''}d_{B''}]}$ be the (ordered) Schmidt coefficients with respect to inner product spaces $(\Herm(A),\langle \cdot, \cdot \rangle)$, $(\Herm(B),\langle \cdot, \cdot \rangle)$ of $\wt{\rho}$, $\wt{\sigma}$, $\wt{\tau}$ respectively. Define $r^{\downarrow}$ be the vector of scalars $\{\lambda_{i}\zeta_{k}\}_{i,k}$ ordered to be decreasing. There exist positive, trace-preserving maps  $\cE_{AA'' \to A'}$ and $\cF_{BB'' \to B'}$ such that $(\cE \otimes \cF)(\rho \otimes \tau) = \sigma$ only if $r_{j} \geq \omega_{j}$ for all $j$ where $\omega_{j}$ may be embedded. \\
   
   In particular, there exists $n \in \mbb{N}$ such that $\rho_{AB}^{\otimes n}$ can be converted to $\sigma_{A'B'}$ under positive, trace-preserving maps $\cE_{A^{n} \to A'}$ and $\cF_{B^{n} \to B'}$ only if 
    \begin{align}
        \mu_{GM}(A:B)_{\rho} \geq \mu_{GM}(A':B')_{\sigma} \ . 
    \end{align}
\end{theorem}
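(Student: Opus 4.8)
The plan is to recast $\mu_{GM}$ and, more generally, all the Schmidt coefficients appearing in the statement as singular values of an explicit linear map, and then to show that local positive trace-preserving maps act on this map by composition with Hilbert--Schmidt contractions, so that every singular value can only decrease. \emph{Reduction to singular values:} since $\mbf{J}_{f_{GM},\sigma}^{1/2}(Z)=\sigma^{1/4}Z\sigma^{1/4}$, the substitution $\hat X\coloneq\mbf{J}_{f_{GM},\rho_A}^{1/2}(X)$, $\hat Y\coloneq\mbf{J}_{f_{GM},\rho_B}^{1/2}(Y)$ turns the constraints of Definition \ref{def:f-quantum-max-corr} into $\langle\hat X,\hat X\rangle=\langle\hat Y,\hat Y\rangle=1$ and $\langle\rho_A^{1/2},\hat X\rangle=\langle\rho_B^{1/2},\hat Y\rangle=0$, and turns the objective into $\lvert\Tr[(\hat X\otimes\hat Y^\ast)\wt\rho]\rvert$. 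By Lemma \ref{lem:map-norms-as-optimizations} with Proposition \ref{prop:Schmidt-to-sing}, the value of this optimization is a singular value of the map $\Lambda_\rho$ whose Choi operator is $\wt\rho$, i.e.\ a Schmidt coefficient of $\wt\rho$ in $(\Herm(A),\langle\cdot,\cdot\rangle)\times(\Herm(B),\langle\cdot,\cdot\rangle)$. A direct computation gives $\Tr[(\rho_A^{1/2}\otimes\rho_B^{1/2})\wt\rho]=1$, so $(\rho_A^{1/2},\rho_B^{1/2})$ is a Schmidt pair, and the bound $\mu_{f_{GM}}^{\text{Lin}}(A:B)_\rho\le1$ from Item 2 of Theorem \ref{thm:extreme-values-summary} (applied without the expectation constraints) shows its coefficient $\lambda_1=1$ is the largest. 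Since the expectation constraints force $\hat X,\hat Y$ into the orthogonal complements of this top pair, deflation gives $\mu_{GM}(A:B)_\rho=\lambda_2$, the second Schmidt coefficient of $\wt\rho$ (and likewise for $\wt\sigma$, $\wt\tau$).

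\emph{Tensorization and the monotone step:} by multiplicativity of $\mbf{J}_{f_{GM}}$ (Proposition \ref{prop:multiplicativity-of-J}) one has $\wt{\xi\otimes\eta}=\wt\xi\otimes\wt\eta$ under the bipartition grouping the $A$- and $B$-type systems, so the Schmidt coefficients of $\wt{\rho\otimes\tau}$ are exactly the products $\{\lambda_i\zeta_k\}$, i.e.\ the entries of $r^\downarrow$; for $\xi=\rho^{\otimes n}$ the coefficients of $\wt\xi$ are $n$-fold products of the $\lambda_i$, whose second largest is $\lambda_1^{n-1}\lambda_2=\lambda_2=\mu_{GM}(A:B)_\rho$ because $\lambda_1=1$. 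For the monotonicity I would treat a general bipartite input $\xi$ on $\hat A\hat B$ with $\sigma=(\cE\otimes\cF)(\xi)$, $\sigma_{A'}=\cE(\xi_{\hat A})$, $\sigma_{B'}=\cF(\xi_{\hat B})$, and set
\begin{equation*}
C_A\coloneq\mbf{J}_{f_{GM},\xi_{\hat A}}^{1/2}\circ\cE^\ast\circ\mbf{J}_{f_{GM},\sigma_{A'}}^{-1/2},\qquad C_B\coloneq\mbf{J}_{f_{GM},\xi_{\hat B}}^{1/2}\circ\cF^\ast\circ\mbf{J}_{f_{GM},\sigma_{B'}}^{-1/2}.
\end{equation*}
A short inner-product computation shows $\Tr[(\hat X\otimes\hat Y^\ast)\wt\sigma]=\Tr[(C_A\hat X\otimes(C_B\hat Y)^\ast)\wt\xi]$, i.e.\ $\Lambda_\sigma=C_B^\ast\circ\Lambda_\xi\circ C_A$. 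Granting that $C_A,C_B$ are Hilbert--Schmidt contractions, the standard singular-value inequality $s_j(C_B^\ast\Lambda_\xi C_A)\le\norm{C_B}\,s_j(\Lambda_\xi)\,\norm{C_A}\le s_j(\Lambda_\xi)$ yields $\omega_j\le r_j$ for all $j$ (embedding $\omega$ into $r$ by zero-padding, which is the ``where $\omega_j$ may be embedded'' clause). Taking $\xi=\rho\otimes\tau$ gives the first claim, and taking $\xi=\rho^{\otimes n}$ and reading off $j=2$ gives $\mu_{GM}(A':B')_\sigma=\omega_2\le r_2=\mu_{GM}(A:B)_\rho$, the second claim.

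\emph{Main obstacle:} everything hinges on $C_A,C_B$ being contractions. Unwinding the definitions, $\norm{C_A}\le1$ is equivalent to $\cE\circ\mbf{J}_{f_{GM},\xi_{\hat A}}\circ\cE^\ast\le\mbf{J}_{f_{GM},\sigma_{A'}}$, which is precisely the data-processing inequality of Proposition \ref{prop:DPI-for-J-op}; that proposition delivers it when $\cE$ is the adjoint of a unital Schwarz map, hence for two-positive trace-preserving maps. The real difficulty---and the reason the statement is asserted for merely \emph{positive} trace-preserving maps---is obtaining this inequality under positivity alone. Here I would exploit the special feature that $f_{GM}$ is the unique monotone for which both $\mbf{J}_{f_{GM},\sigma}^{1/2}$ and $\mbf{J}_{f_{GM},\sigma}^{-1/2}$ are completely positive conjugations, so that $C_A(Z)=\xi_{\hat A}^{1/4}\,\cE^\ast\big(\sigma_{A'}^{-1/4}Z\sigma_{A'}^{-1/4}\big)\,\xi_{\hat A}^{1/4}$ is itself a positive map whose Hilbert--Schmidt norm can be analysed directly. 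The goal is to reproduce at the operator level the classical proof of this contraction (Cauchy--Schwarz together with the column-stochastic identity), and since all arguments in $\mu_{GM}$ are Hermitian, the hope is that a Kadison/operator-Jensen inequality for the positive unital map $\cE^\ast$ suffices in place of the full Schwarz inequality. Pinning down exactly which such inequality for positive unital maps makes $C_A,C_B$ contractive once all arguments are self-adjoint---and thereby justifying the positivity rather than two-positivity hypothesis---is the delicate step on which the whole argument rests.
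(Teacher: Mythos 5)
Your skeleton matches the paper's: the identification of the Schmidt coefficients of $\wt{\rho}$ with the singular values of the map $\Lambda_{\wt{\rho}}$ (Proposition \ref{prop:Schmidt-to-sing}), the top Schmidt pair $(\rho_A^{1/2},\rho_B^{1/2})$ with coefficient one plus deflation to get $\mu_{GM}(A:B)_{\rho}=\lambda_2$ (Lemmas \ref{lem:norm-bound}, \ref{lem:map-norm-is-1-for-k-correlation-coeff}, \ref{lem:k-correlation-coeff-Schmidt-coeff-characterization} at $k=1/2$, together with \eqref{eq:GM-max-corr-doesn't-depend-on-lin}), and tensorization via multiplicativity of $\mbf{J}_{f_{GM}}$ (Proposition \ref{prop:multiplicativity-of-J}) are all exactly how the paper proceeds. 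However, the step you explicitly leave open---that $C_A$ and $C_B$ are Hilbert--Schmidt contractions when $\cE,\cF$ are merely positive and trace-preserving---is precisely where the theorem's stated generality lives, and your fallback is insufficient: as you correctly note, $\Vert C_A\Vert_{2\to2}\le 1$ is equivalent to $\cE\circ\mbf{J}_{f_{GM},\xi_{\hat A}}\circ\cE^{\ast}\le\mbf{J}_{f_{GM},\sigma_{A'}}$, and Proposition \ref{prop:DPI-for-J-op} delivers this only for adjoints of unital Schwarz maps. So the proposal as written proves the theorem for $2$-positive trace-preserving maps, not positive ones, and the closing paragraph is a research plan (``which Kadison/operator-Jensen inequality suffices'') rather than a proof.

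The missing ingredient in the paper is not a new matrix inequality but a regrouping plus a norm theorem for positive maps. Observe that
\begin{align*}
C_B^{\ast} \;=\; \mbf{J}^{-1/2}_{f_{GM},\cF(\xi_{\hat B})}\circ\cF\circ\mbf{J}^{1/2}_{f_{GM},\xi_{\hat B}} \;=\; \Gamma_{\cF(\xi_{\hat B})}^{-1/2}\circ\cF\circ\Gamma_{\xi_{\hat B}}^{1/2} \ ,
\end{align*}
where $\Gamma_{\tau}(X)=\tau^{1/2}X\tau^{1/2}$: this is a composition of two \emph{completely positive} conjugations with the positive map $\cF$, hence itself a positive map, and the paper (Lemma \ref{lem:sandwiched-correlation-coeff-DPI}, borrowing from \cite{Delgosha-2014a}) concludes $\Vert C_B^{\ast}\Vert = 1$ directly from the Russo--Dye theorem \cite[2.37 Theorem]{Bhatia-2009a}---no Schwarz inequality and no restriction to Hermitian arguments. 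Moreover, the paper never introduces the two-sided splitting at once: it processes one subsystem at a time, writing $\Omega_{\wt{\sigma}}=\wt{\cE}\circ\Omega_{\wt{\rho}}$ with $\wt{\cE}=\Gamma_{\cE(\rho_B)}^{-1/2}\circ\cE\circ\Gamma_{\rho_B}^{1/2}$, applying $s_i(\wt{\cE}\circ\Omega_{\wt{\rho}})\le\Vert\wt{\cE}\Vert\, s_i(\Omega_{\wt{\rho}})$, and then repeating on the other tensor factor; this is equivalent to your $C_A,C_B$ bookkeeping. You came close---you even noted that $C_A$ is a positive map---but then searched for a Schwarz-type pointwise inequality where what is needed is the global norm statement for positive maps; inserting the Russo--Dye bound at that point closes your argument and yields the full strength of the theorem.
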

\end{tcolorbox}

An application of these new results is that we obtain many new necessary conditions for `asymptotic common data,' as we briefly motivate before formally stating. Letting $\cZ = \{0,1\}$, it is natural to ask if one can extract $\chi^{\vert p}_{ZZ'}$ for some $p \in (0,1)$ asymptotically, i.e. does there exist a sequence of local operations $(\cE_{A^{n} \to \cZ})_{n}$, $(\cF_{B^{n} \to \cZ})_{n}$ and parameter $p \in (0,1)$ such that 
\begin{align}\label{eq:ACD}
    \lim_{n \to \infty} \Vert \cE_{A^{n} \to \cZ} \otimes \cF_{B^{n} \to \cZ}(\rho_{AB}^{\otimes n}) - \chi^{\vert p}_{ZZ'} \Vert_{1} = 0 \, .
\end{align}
In classical information theory, the classical special case of Item 3 of Theorem \ref{thm:extreme-values-summary} combined with the fact $\mu(X:Y)_{p}$ tensorizes is sufficient to conclude this is possible if and only if $\mu(X:Y)_{p}=1$. However, as Theorem \ref{thm:extreme-values-summary} shows $\mu_{AM}(A:B)_{\rho}$ characterizes the ability to extract exact correlation using local operations in the single-copy setting and we have no proof that it tensorizes,\footnote{Indeed, we conjecture it does not.} we cannot extend the classical result to the quantum setting. Thus, following \cite{Beigi-2013a}, we say $\rho_{AB}$ `asymptotically has common data' if there are sequences of quantum channels such that \eqref{eq:ACD} holds. \cite{Beigi-2013a} showed $\mu_{f_{0}}^{\text{Lin}}(A:B)_{\rho} = 1$ is a necessary condition for \eqref{eq:ACD} to hold. By the same argument as Beigi and our above results, we obtain the following.
\begin{tcolorbox}[width=\linewidth, sharp corners=all, colback=white!95!black, boxrule=0pt,frame hidden]
\begin{theorem}\label{thm:asymptotic-data}
    $\rho_{AB}$ asymptotically has common data only if $\mu_{f_{k}}^{\text{Lin}}(A:B)_{\rho} = 1$ for all $k \in (0,1/2)\cup(1/2,1)$ and $\mu_{GM}(A:B)_{\rho} = 1$.
\end{theorem}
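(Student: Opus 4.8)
The plan is to run Beigi's argument that $\mu^{\text{Lin}}_{f_0}(A:B)_\rho = 1$ is necessary for asymptotic common data, substituting the tensorization and monotonicity results we have proved for the broader families in place of the ingredients Beigi used only for $f_0$. Fix $k \in (0,1/2)\cup(1/2,1)$ and suppose $\rho_{AB}$ asymptotically has common data, so by \eqref{eq:ACD} there are local channels $\cE_{A^n \to \cZ}$, $\cF_{B^n \to \cZ}$ and $p \in (0,1)$ with $\omega_n \coloneq (\cE_n \otimes \cF_n)(\rho_{AB}^{\otimes n}) \to \chi^{\vert p}_{ZZ'}$ in trace norm. Each of $\cE_n,\cF_n$ is the adjoint of a unital Schwarz map (its own adjoint is unital and completely positive, hence Schwarz), so the data-processing inequality (Proposition \ref{prop:DPI-for-f-correlation}) together with tensorization of $\mu^{\text{Lin}}_{f_k}$ (Lemma \ref{lem:k-correlation-tensorize}) gives, for every $n$,
\[
    \mu^{\text{Lin}}_{f_k}(Z:Z')_{\omega_n} \;\leq\; \mu^{\text{Lin}}_{f_k}(A^n:B^n)_{\rho^{\otimes n}} \;=\; \mu^{\text{Lin}}_{f_k}(A:B)_{\rho}\,.
\]
Since $\mu^{\text{Lin}}_{f_k}(A:B)_\rho \leq 1$ by Lemma \ref{lem:map-norm-is-1-for-k-correlation-coeff}, it then suffices to prove the lower-semicontinuity bound $\liminf_n \mu^{\text{Lin}}_{f_k}(\omega_n) \geq 1$.

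To obtain this bound, I would first note $\mu^{\text{Lin}}_{f_k}(\chi^{\vert p}_{ZZ'}) = \mu(\chi^{\vert p}_{ZZ'}) = 1$ via Proposition \ref{prop:recovers-classical} and the fact that a perfectly correlated distribution is maximally correlated; an explicit optimizer is the diagonal Hermitian pair $X_0 = Y_0 = x_0\dyad{0} + x_1\dyad{1}$ with $p x_0 + (1-p)x_1 = 0$ and $p x_0^2 + (1-p)x_1^2 = 1$, for which $\Tr[X_0 \otimes Y_0^{\ast}\chi^{\vert p}] = 1$. The substance is transporting this fixed optimizer to feasible points for $\omega_n$. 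Because $p \in (0,1)$ the limit marginal $\chi^{\vert p}_Z$ is full rank, so $\omega_{n,A},\omega_{n,B}$ are full-rank qubit states converging to it for large $n$. I would therefore set
\[
    \wt{X}_n \coloneq \frac{X_0 - \langle \mbb{1}, X_0\rangle_{f_k,\omega_{n,A}}\,\mbb{1}}{\sqrt{\Var_{f_k,\omega_{n,A}}\big(X_0 - \langle \mbb{1}, X_0\rangle_{f_k,\omega_{n,A}}\,\mbb{1}\big)}}\,,
\]
and $\wt{Y}_n$ analogously, which are feasible for $\mu^{\text{Lin}}_{f_k}(\omega_n)$ by construction (Item 4 of Proposition \ref{prop:variance-properties}). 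Because $\langle X, X\rangle_{f_k,\sigma} = \Tr[X^{\ast}\sigma^{k}X\sigma^{1-k}]$ is continuous in $\sigma$ (the powers $k,1-k$ are non-negative), the centering and normalizing constants converge to those for $\chi^{\vert p}_Z$, so $\wt{X}_n \to X_0$ and $\wt{Y}_n \to Y_0$. Continuity of the trace functional then yields $\Tr[\wt{X}_n \otimes \wt{Y}_n^{\ast}\omega_n] \to \Tr[X_0 \otimes Y_0^{\ast}\chi^{\vert p}] = 1$, whence $\liminf_n \mu^{\text{Lin}}_{f_k}(\omega_n) \geq 1$. Combined with the first display this forces $\mu^{\text{Lin}}_{f_k}(A:B)_\rho = 1$.

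The geometric-mean statement follows from the same construction with one change of input. The optimizers $X_0,Y_0$ are Hermitian and $\mu_{GM}(\chi^{\vert p}) = 1$ by the same diagonal test operator, so $\liminf_n \mu_{GM}(\omega_n) \geq 1$ is obtained verbatim. Since tensorization of the Hermitian coefficient is not available, in place of the first display I would invoke Theorem \ref{thm:mu-GM-as-monotone}: as $\rho_{AB}^{\otimes n}$ is converted to $\omega_n$ by positive trace-preserving maps, its strong-monotone conclusion gives $\mu_{GM}(A:B)_\rho \geq \mu_{GM}(Z:Z')_{\omega_n}$ for each $n$. With $\mu_{GM}(A:B)_\rho \leq 1$ from Item 2 of Theorem \ref{thm:extreme-values-summary}, we conclude $\mu_{GM}(A:B)_\rho = 1$.

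I expect the main obstacle to be the lower-semicontinuity step rather than the monotonicity chain: one must certify that the fixed optimizer for the limit state perturbs into \emph{genuinely feasible} operators for $\omega_n$ whose objective converges, and this rests on the limit marginals being full rank (so normalization does not degenerate) and on continuity of the $f_k$-inner products in the state — which is clean here precisely because the operator powers $\sigma^{k},\sigma^{1-k}$ are non-negative and hence continuous up to the boundary.
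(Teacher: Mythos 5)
Your proposal is correct, and for the power-function family it is essentially the paper's route: the paper proves this by combining Lemma~\ref{lem:k-correlation-tensorize} with ``the same proof as Beigi's Theorem 9,'' and that cited proof is precisely the argument you reconstruct --- data processing plus tensorization to pin $\mu^{\text{Lin}}_{f_k}(\omega_n)$ below the single-copy value, then lower semicontinuity at $\chi^{\vert p}$ obtained by recentering and renormalizing the fixed classical optimizer against the (eventually full-rank) marginals $\omega_{n,A},\omega_{n,B}$, using continuity of $\sigma \mapsto \sigma^{k}$ for $k \in (0,1)$. Your explicit treatment of this continuity step is a legitimate filling-in of what the paper delegates to Beigi's lemmata. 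The one place you genuinely diverge is the geometric-mean leg: the paper chains $\mu_{AM}(A^{n}{:}B^{n})_{\rho^{\otimes n}} \leq \mu_{GM}(A^{n}{:}B^{n})_{\rho^{\otimes n}} = \mu_{GM}(A{:}B)_{\rho}$ using the ordering \eqref{eq:ordering-of-maximal-corr-means} together with tensorization of $\mu_{GM} = \mu^{\text{Lin}}_{f_{1/2}}$ (via \eqref{eq:GM-max-corr-doesn't-depend-on-lin}), whereas you invoke the Schmidt-majorization monotone of Theorem~\ref{thm:mu-GM-as-monotone} and prove semicontinuity of $\mu_{GM}$ at $\chi^{\vert p}$ directly. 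Both are sound; your route is marginally more robust in that it covers conversion under positive trace-preserving maps rather than channels, though since ACD is defined with channels this extra generality is not exploited here, and you could equally have absorbed the GM case into your main argument at $k = 1/2$.
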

\end{tcolorbox}
\begin{proof}
    In the case of $\mu_{GM}$, this follows from $\mu_{AM}(A^{n}:B^{n})_{\rho^{\otimes n}} \leq \mu_{GM}(A^{n}:B^{n})_{\rho^{\otimes n}} = \mu_{GM}(A:B)_{\rho}$ by \eqref{eq:ordering-of-maximal-corr-means} and Lemma \ref{lem:k-correlation-tensorize}. For the other cases, it follows from Lemma \ref{lem:k-correlation-tensorize} and the same proof as \cite[Theorem 9]{Beigi-2013a}, which relies on lemmata from that work.
\end{proof}
\noindent We remark the most important case of the above theorem is likely $\mu_{GM}$ given Proposition \ref{prop:separating-mu-AM-and-mu-GM} and the operational relevance of $\mu_{GM}$ in Section \ref{sec:quantum-chi-squared}.

In the rest of the section, we establish Theorems \ref{thm:k-correlation-nec-for-local-processing} and \ref{thm:mu-GM-as-monotone}. For Theorem \ref{thm:k-correlation-nec-for-local-processing}, this comes from first proving $\mu^{\text{Lin}}_{f}(A:B)_{\rho} \leq 1$ (Lemma \ref{lem:map-norm-is-1-for-k-correlation-coeff}) and then using this to prove the quantity tensorizes. Note that $\mu^{\text{Lin}}_{f}(A:B)_{\rho} \leq 1$ does not follow from Item 2 of Theorem \ref{thm:extreme-values-summary} as $x^{k} \not \geq x^{1/2}$ for all $x \geq 0$. Instead, we will use an operator-theoretic argument that generalizes the methods used by Beigi to address $\mu_{f_{0}}^{\text{Lin}}(A:B)_{\rho}$ \cite{Beigi-2013a}. In particular, we will generalize many of Beigi's identifications and use complex interpolation. For Theorem \ref{thm:mu-GM-as-monotone}, we use a significantly more direct proof method, which borrows ideas from \cite{Delgosha-2014a}.

\subsubsection{Establishing Theorem \ref{thm:k-correlation-nec-for-local-processing}}
We begin with some notation that we will need. As mentioned, this will stem from using complex interpolation on a relevant family of maps. To this end, we define the complex strip $\mrm{St} \coloneq \{z \in \mbb{C} : 0 \leq \mbb{Re}[z] \leq 1 \}$. Then we define the following family of maps, $\cT_{\tau,\gamma}(X) \coloneq \tau^{\gamma/2}X\tau^{(1-\gamma)/2}$ where we allow $\gamma \in \mrm{St}$. While this family of maps may seem strange, it will be natural to consider, which we briefly explain. Namely, when $\gamma \in [0,1]$, one may verify $\cT_{\tau,\gamma} =\mbf{J}_{f_{\gamma},\tau}^{1/2}$. This latter operator will naturally arise in the following lemma. The reason to introduce $\cT_{\tau,\gamma}$ over the complex strip more generally is so that we may apply complex interpolation in Lemma \ref{lem:map-norm-is-1-for-k-correlation-coeff}. Lastly, we note it is clear that for $\tau >0$, $\cT^{-1}_{\tau,\gamma}(X) =  \tau^{-(\gamma/2)}X\tau^{-(1-\gamma)/2}$ since this will indeed invert the operation. 

With these points established, we make the following formal identification, which converts evaluating $\mu_{f_{k}}(\rho_{AB})$ to evaluating the $2 \to 2$-norm on certain linear maps.
\begin{lemma}\label{lem:norm-bound}
    For $k \in [0,1]$, 
    \begin{equation}\label{eq:change-of-variables-for-Xk}
    \begin{aligned}
        \mu^{\text{Lin}}_{f_{k}}(A:B)_{\rho} = \sup & \; \Big\vert \Tr[\wt{X} \otimes \wt{Y}^{\ast} \wt{\rho}^{k}_{AB}] \Big\vert \\
        \text{s.t.} \; & \langle \rho_{A}^{1/2} , \wt{X}_{k} \rangle = 0 = \langle \rho_{B}^{1/2} , \wt{Y} \rangle \\
        & \Vert \wt{X} \Vert_{2} = 1 = \Vert \wt{Y} \Vert_{2} \ ,
    \end{aligned}
\end{equation}
where $\wt{X} \in \Lin(A), \wt{Y} \in \Lin(B)$. This implies $\mu_{f_{k}}^{\text{Lin}}(\rho_{AB}) \leq \Vert \Lambda_{\wt{\rho}_{k}} \Vert_{2 \to 2}$ where 
\begin{align}\label{eq:rho-tilde-k-map}
\Lambda_{\wt{\rho}_{k}} \coloneq \cT^{-1}_{\rho_{B},k} \circ \Lambda_{\rho} \circ \cT^{-1}_{\ol{\rho}_{A},1-k} \ , 
\end{align} 
and 
\begin{align}\label{eq:wt-rho-k-def}
    \wt{\rho}_{k} = (\rho_{A}^{-(1-k)/2} \otimes \rho_{B}^{-k/2}) \, \rho_{AB} (\rho_{A}^{-k/2} \otimes \rho_{B}^{-(1-k)/2}) \in \Lin(A \otimes B) \ . 
\end{align}
\end{lemma}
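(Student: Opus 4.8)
The plan is to prove the identity \eqref{eq:change-of-variables-for-Xk} by an explicit change of variables on the optimizers and then to deduce the stated inequality by relaxing the two linear (orthogonality) constraints and invoking the operator-norm/Choi dictionary. Throughout I take $\Lambda_{\rho}$ to be the map with Choi operator $\rho_{AB}$, so that the claim amounts to identifying the map named in \eqref{eq:rho-tilde-k-map} with the one whose Choi operator is $\wt{\rho}_{k}$.

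First I would specialize the general machinery to $f_{k}(x) = x^{k}$. From \eqref{eq:J-operator-func-expansion} one computes $\mbf{J}_{f_{k},\sigma}(X) = \sigma^{k} X \sigma^{1-k}$, so the variance constraint reads $\langle X, X \rangle_{f_{k},\rho_{A}} = \Tr[X^{\ast}\rho_{A}^{k} X \rho_{A}^{1-k}]$, while by Proposition \ref{prop:mult-and-div-under-trace} the expectation constraint reads $\Tr[\rho_{A} X] = 0$ (and analogously on $B$). By Proposition \ref{prop:non-neg-func-need-not-restrict-support} I may restrict the optimization to the supports of $\rho_{A}$ and $\rho_{B}$, on which the marginals are positive definite; this legitimizes the inverse fractional powers used below. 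I then introduce the substitution $\wt{X} \coloneq \cT_{\rho_{A},k}(X) = \rho_{A}^{k/2} X \rho_{A}^{(1-k)/2}$ and $\wt{Y} \coloneq \cT_{\rho_{B},k}(Y)$, a bijection on the support subspaces with inverse $X = \cT_{\rho_{A},k}^{-1}(\wt{X}) = \rho_{A}^{-k/2}\wt{X}\rho_{A}^{-(1-k)/2}$.

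Next I would verify that this substitution transports all the data of the problem. A short cyclicity-of-trace computation gives $\Tr[\wt{X}^{\ast}\wt{X}] = \Tr[X^{\ast}\rho_{A}^{k} X \rho_{A}^{1-k}] = \langle X, X\rangle_{f_{k},\rho_{A}}$, so the variance constraint becomes $\Vert \wt{X}\Vert_{2} = 1$; likewise $\Tr[\rho_{A}X] = \Tr[\rho_{A}^{1/2}\wt{X}] = \langle \rho_{A}^{1/2}, \wt{X}\rangle$ (the exponents collapsing via $1 - k/2 - (1-k)/2 = 1/2$), so the expectation constraint becomes $\langle \rho_{A}^{1/2}, \wt{X}\rangle = 0$. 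For the objective, substituting $X = \rho_{A}^{-k/2}\wt{X}\rho_{A}^{-(1-k)/2}$ and $Y^{\ast} = \rho_{B}^{-(1-k)/2}\wt{Y}^{\ast}\rho_{B}^{-k/2}$ into $\Tr[X\otimes Y^{\ast}\rho_{AB}]$ and pushing the marginal factors onto $\rho_{AB}$ by cyclicity produces exactly $\Tr[\wt{X}\otimes \wt{Y}^{\ast}\, \wt{\rho}_{k}]$ with $\wt{\rho}_{k}$ as in \eqref{eq:wt-rho-k-def}. This establishes \eqref{eq:change-of-variables-for-Xk}. For the inequality I would drop the two orthogonality constraints (which only enlarges the feasible set), reducing the objective to $\sup\{\vert\Tr[\wt{X}\otimes\wt{Y}^{\ast}\,\wt{\rho}_{k}]\vert : \Vert\wt{X}\Vert_{2} = \Vert\wt{Y}\Vert_{2} = 1\}$. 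By \eqref{eq:map-norm-in-terms-of-Choi} and Lemma \ref{lem:map-norms-as-optimizations}, taking the input and output norms to be the HS norm (so the transpose may be removed, and noting $\wt{Y}^{\ast}$ ranges over the same HS sphere as $\wt{Y}$), this supremum is precisely $\Vert\Lambda\Vert_{2\to 2}$ for the map $\Lambda$ with Choi operator $\wt{\rho}_{k}$. To close the loop I would confirm this map is the one in \eqref{eq:rho-tilde-k-map}: using the sandwich identity $\Omega_{\cC\circ\Lambda\circ\cD} = (d_{1}^{T}\otimes c_{1})\,\Omega_{\Lambda}\,(d_{2}^{T}\otimes c_{2})$ for $\cD(\cdot) = d_{1}(\cdot)d_{2}$ and $\cC(\cdot)=c_{1}(\cdot)c_{2}$, together with $\Omega_{\Lambda_{\rho}} = \rho_{AB}$ and the Hermitian identity $(\ol{\rho}_{A}^{\,s})^{T} = \rho_{A}^{\,s}$, one obtains $\Omega_{\Lambda_{\wt{\rho}_{k}}} = \wt{\rho}_{k}$, matching \eqref{eq:wt-rho-k-def}.

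The main obstacle is bookkeeping rather than conceptual. One must track the various fractional powers of $\rho_{A}$ and $\rho_{B}$ through the change of variables and check they recombine cleanly into $\wt{\rho}_{k}$; the single genuinely delicate point is handling the transpose and complex conjugate in the Choi dictionary, so that the $\ol{\rho}_{A}$ appearing in the definition \eqref{eq:rho-tilde-k-map} of $\Lambda_{\wt{\rho}_{k}}$ transposes back to $\rho_{A}$ in \eqref{eq:wt-rho-k-def}. Everything else reduces to cyclicity of the trace, with the support restriction of Proposition \ref{prop:non-neg-func-need-not-restrict-support} guaranteeing that all inverse powers are well defined.
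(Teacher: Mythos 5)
Your proposal is correct and follows essentially the same route as the paper: the same Beigi-style change of variables $\wt{X} = \rho_{A}^{k/2}X\rho_{A}^{(1-k)/2}$, $\wt{Y} = \rho_{B}^{k/2}Y\rho_{B}^{(1-k)/2}$ on the supports, verification of the transported constraints and objective by cyclicity of the trace, relaxation of the two orthogonality constraints, and identification of the resulting supremum with $\Vert \Lambda_{\wt{\rho}_{k}} \Vert_{2 \to 2}$ via \eqref{eq:map-norm-in-terms-of-Choi}. The only cosmetic difference is that you derive $\Omega_{\Lambda_{\wt{\rho}_{k}}} = \wt{\rho}_{k}$ from the abstract Choi sandwich identity (with the transpose handled by Hermiticity of $\rho_{A}$), whereas the paper performs the equivalent explicit index computation in \eqref{eq:delgosha-calculation}.
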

\begin{proof}
Following \cite{Beigi-2013a}, we can make a change of variables $\wt{X} \coloneq \rho_{A}^{k/2}X\rho_{A}^{(1-k)/2}$, $\wt{Y} \coloneq \rho_{B}^{k/2}Y\rho_{B}^{(1-k)/2}$ where we keep the $k$ dependence implicit. Then by direct calculations one may verify \eqref{eq:change-of-variables-for-Xk}. Note that we have to work with linear operator relaxation as for all $k \in [0,1]\setminus \{1/2\}$ this is not generically a Hermitian operator. The form of $\Lambda_{\wt{\rho}^{k}}$ follows an identical calculation to that before \cite[Eq. 29]{Delgosha-2014a}, which we provide for completeness:
\begin{equation}\label{eq:delgosha-calculation}
\begin{aligned}
    \wt{\rho}_{k} =& \sum_{i,j} \rho_{A}^{-(1-k)/2}\ket{i}\bra{j}\rho_{A}^{-k/2} \otimes \rho_{B}^{-k/2}\Lambda_{\rho}(\ket{i}\bra{j})\rho_{B}^{-(1-k)/2} \\
    =& \sum_{k,l,i,j} \dyad{k}\rho_{A}^{-(1-k)/2}\ket{i}\bra{j}\rho_{A}^{-k/2}\dyad{l} \otimes \rho_{B}^{-k/2}\Lambda_{\rho}(\ket{i}\bra{j})\rho_{B}^{-(1-k)/2} \\
    =& \sum_{k,l,i,j} \ket{k}\bra{i}\ol{\rho}_{A}^{-(1-k)/2}\ket{k}\bra{l}\ol{\rho}_{A}^{-k/2}\ket{j}\bra{l} \otimes \rho_{B}^{-k/2}\Lambda_{\rho}(\ket{i}\bra{j})\rho_{B}^{-(1-k)/2} \\
    =& \sum_{k,l,i,j} \ket{k}\bra{l} \otimes \rho_{B}^{-k/2}\Lambda_{\rho}(\ket{i}\bra{i}\ol{\rho}_{A}^{-(1-k)/2}\ket{k}\bra{l}\ol{\rho}_{A}^{-k/2}\ket{j}\bra{j})\rho_{B}^{-(1-k)/2} \\
    =& \sum_{i,j} \ket{k}\bra{l} \otimes \rho_{B}^{-k/2}\Lambda_{\rho}(\ol{\rho}_{A}^{-(1-k)/2}\ket{k}\bra{l}\ol{\rho}_{A}^{-k/2})\rho_{B}^{-(1-k)/2} \\
    =& (\id \otimes \cT^{-1}_{\rho_{B},k} \otimes \Lambda_{\rho} \otimes \cT_{\ol{\rho}_{A},1-k}^{-1})(\Phi^{+}) \ . 
\end{aligned}
\end{equation}
Finally,
\begin{equation}\label{eq:k-correlation-coeff-bound-by-map-norm}
\begin{aligned}
    \mu^{\text{Lin}}_{f_{k}}(A:B)_{\rho} \leq& \sup_{\Vert \wt{X} \Vert_{2} = 1 = \Vert \wt{Y} \Vert_{2}} \Big\vert \Tr[\wt{X} \otimes \wt{Y}\wt{\rho}_{k}] \vert 
    =  \sup_{\Vert \wt{X} \Vert_{2} = 1 = \Vert \wt{Y} \Vert_{2}} \Big\vert \Tr[\wt{X} \otimes \wt{Y} \Omega_{\Lambda_{\wt{\rho}_{k}}}] \Big\vert
    = \Vert \Lambda_{\wt{\rho}_{k}} \Vert_{2 \to 2} \ ,
\end{aligned}
\end{equation}
where we relaxed the supremization, used the definition of the Choi operator, and then applied \eqref{eq:map-norm-in-terms-of-Choi} where we have used $\Vert X \Vert_{2} = \Vert X^{T} \Vert_{2}$.
\end{proof}

Now we turn to our main technical lemma of this subsection, which will make use of the Hadamard's three-line theorem \cite{Reed-1975}.
\begin{proposition}\label{prop:Hadamard-3-line}
    Let $f:\mathrm{St} \to \mbb{C}$ be a bounded function that is holomorphic in the interior of $\mathrm{St}$ and continuous on the boundary. For $k \in \{0,1\}$ let $M_{k} = \sup_{t \in \mbb{R}} \vert f(k+it) \vert$. Then for every $0 \leq \theta \leq 1$, $\vert f(\theta) \vert \leq M_{0}^{1-\theta}M_{1}^{\theta}$.
\end{proposition}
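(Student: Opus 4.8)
The plan is to prove this classical statement (Hadamard's three-line theorem) by normalizing away the boundary bounds and then invoking the maximum modulus principle, with an auxiliary Gaussian damping factor to control the behaviour on the unbounded strip. First I would reduce to the normalized case. Assuming momentarily that $M_0, M_1 > 0$, define the auxiliary function
\[
    g(z) \coloneq \frac{f(z)}{M_0^{\,1-z}\, M_1^{\,z}} \ ,
\]
where $M_0^{1-z} M_1^z = \exp\big((1-z)\log M_0 + z\log M_1\big)$ is entire and nowhere vanishing, so that $g$ is holomorphic in the interior of $\mrm{St}$ and continuous on its boundary.

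Next I would bound $g$ on the two boundary lines. On $\mathrm{Re}(z)=0$ the denominator has modulus exactly $M_0$ (since $\abs{M_0^{-it}} = \abs{M_1^{it}} = 1$), hence $\abs{g} \le M_0/M_0 = 1$ there; the analogous computation on $\mathrm{Re}(z)=1$ gives $\abs{g}\le 1$. Moreover, because $f$ is bounded on $\mrm{St}$ and the denominator has modulus $M_0^{1-x}M_1^x \ge \min(M_0,M_1) > 0$ uniformly in the imaginary part, $g$ is itself bounded on $\mrm{St}$.

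The hard part is that $\mrm{St}$ is unbounded in the imaginary direction, so the maximum modulus principle cannot be applied directly. To circumvent this I would introduce the damped family
\[
    g_n(z) \coloneq g(z)\, e^{(z^2-1)/n} \ , \qquad n \in \mbb{N} \ .
\]
For $z = x+it$ with $0 \le x \le 1$ one has $\mathrm{Re}(z^2-1) = x^2 - t^2 - 1 \le 0$, so $\abs{e^{(z^2-1)/n}} \le 1$ on the strip; this keeps $\abs{g_n}\le 1$ on both boundary lines. At the same time $\abs{g_n(z)} \le \norm{g}_\infty\, e^{-t^2/n} \to 0$ as $\abs{t}\to\infty$, uniformly in $x$, so I can choose $T>0$ with $\abs{g_n} \le 1$ whenever $\abs{t} \ge T$. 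Applying the maximum modulus principle on the rectangle $[0,1]\times[-T,T]$, on all four sides of which $\abs{g_n}\le 1$, yields $\abs{g_n}\le 1$ throughout the strip. Letting $n\to\infty$ the factor $e^{(z^2-1)/n}\to 1$ pointwise, so $\abs{g(z)}\le 1$ for all $z \in \mrm{St}$, i.e. $\abs{f(z)} \le \abs{M_0^{1-z}M_1^z}$.

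Finally I would specialize to $z=\theta$ real, where $\abs{M_0^{1-\theta}M_1^\theta} = M_0^{1-\theta}M_1^\theta$, giving the claimed bound. To discharge the assumption $M_0,M_1>0$, I would rerun the argument with each $M_k$ replaced by $M_k+\ve$, obtaining $\abs{f(\theta)} \le (M_0+\ve)^{1-\theta}(M_1+\ve)^\theta$, and then let $\ve \downarrow 0$. The main obstacle throughout is precisely the noncompactness of the strip: the entire content of the proof is the Gaussian damping trick $e^{(z^2-1)/n}$, which forces decay at $\pm i\infty$ and thereby restores the applicability of the maximum modulus principle on a bounded rectangle.
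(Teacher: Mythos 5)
Your proof is correct: the paper does not actually prove this proposition, importing it as Hadamard's three-line theorem from Reed--Simon \cite{Reed-1975}, and your argument — normalizing by $M_0^{1-z}M_1^{z}$, damping with $e^{(z^2-1)/n}$ to restore decay at $\pm i\infty$, applying the maximum modulus principle on a truncated rectangle $[0,1]\times[-T,T]$, and removing the assumption $M_0,M_1>0$ via the $\varepsilon$-perturbation $M_k \mapsto M_k+\varepsilon$ — is precisely the standard textbook proof found there. All steps check out, including the uniform lower bound $M_0^{1-x}M_1^{x}\geq \min(M_0,M_1)$ guaranteeing $g$ is bounded, the sign computation $\mathrm{Re}(z^2-1)=x^2-t^2-1\leq -t^2$ on the strip, and the (implicit, standard) reading of ``continuous on the boundary'' as continuity on the closed strip, which is what the maximum modulus argument on the closed rectangle requires.
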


\begin{lemma}\label{lem:map-norm-is-1-for-k-correlation-coeff}
For $k \in [0,1]$, $\Vert \Lambda_{\wt{\rho}_{k}} \Vert_{2 \to 2} = 1 = \lambda_{1}(\wt{\rho}_{k})$ where $\lambda_{i}(\wt{\rho}_{k})$ denote the Schmidt coefficients of $\wt{\rho}_{k}$ over the Hilbert spaces $(\Lin(A), \langle \cdot , \cdot \rangle)$ and $(\Lin(B), \langle \cdot , \cdot \rangle)$. Moreover, the operator norm is achieved with $X = \rho_{A}^{1/2}$, $Y = \rho_{B}^{1/2}$.
\end{lemma}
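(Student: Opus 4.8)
The plan is to establish the two inequalities $\Vert\Lambda_{\wt\rho_k}\Vert_{2\to2}\ge 1$ and $\Vert\Lambda_{\wt\rho_k}\Vert_{2\to2}\le 1$ separately, and then read off the value as the top Schmidt coefficient using Proposition \ref{prop:Schmidt-to-sing}. Throughout I assume, as permitted by Definition \ref{def:f-quantum-max-corr} and Proposition \ref{prop:non-neg-func-need-not-restrict-support}, that $\rho_A$ and $\rho_B$ are full rank on the relevant supports, so all complex powers $\rho_A^{w},\rho_B^{w}$ are well defined.

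\emph{Lower bound and the optimizer.} First I would evaluate the bilinear form at the claimed optimizer. Using $\langle\wt Y,\Lambda_{\wt\rho_k}(\wt X)\rangle=\Tr[\wt X\otimes\wt Y\,\wt\rho_k]$ (cf.\ \eqref{eq:k-correlation-coeff-bound-by-map-norm}) with $\wt X=\rho_A^{1/2}$, $\wt Y=\rho_B^{1/2}$, which both have unit Hilbert--Schmidt norm since $\Tr[\rho_A]=\Tr[\rho_B]=1$, and recalling $(\rho_A^{1/2})^{T}=\rho_A^{1/2}$ because the transpose is taken in the eigenbasis of $\rho_A$, the definition \eqref{eq:wt-rho-k-def} telescopes: the factors $\rho_A^{k/2}$ and $\rho_B^{(1-k)/2}$ cancel against $\rho_A^{-k/2}$ and $\rho_B^{-(1-k)/2}$ under the trace, leaving $\Tr[\rho_{AB}]=1$. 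Hence $\Vert\Lambda_{\wt\rho_k}\Vert_{2\to2}\ge1$, and this pair is the candidate optimizer.

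\emph{Upper bound via Hadamard's three-line theorem.} For fixed unit-norm $\wt X,\wt Y$ I would define $G(z)\coloneq\Tr[(\wt X\otimes\wt Y)\,\wt\rho_z]$, where $\wt\rho_z$ is the analytic continuation of \eqref{eq:wt-rho-k-def} to the strip $\St$. Since $\rho_A,\rho_B$ are full rank, the powers are entire in $z$ with operator norm uniformly bounded on $\St$, so $G$ is bounded, holomorphic in the interior, and continuous on the boundary. The crux is to bound $|G|$ on the two boundary lines. On $\mrm{Re}(z)=0$ I would write $\rho_A^{-(1-it)/2}=\rho_A^{-1/2}\rho_A^{it/2}$ and $\rho_B^{-(1-it)/2}=\rho_B^{-1/2}\rho_B^{it/2}$, pull out the purely imaginary powers as the unitary $V=\rho_A^{it/2}\otimes\rho_B^{-it/2}$ (with $V^{\ast}$ on the right), and observe that $\rho'\coloneq V\rho_{AB}V^{\ast}$ is again a state with marginals $\rho_A,\rho_B$, since the imaginary powers commute with the respective marginals. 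This reduces $\wt\rho_{it}$ exactly to the form $(\rho_A^{-1/2}\otimes I)\,\rho'\,(I\otimes\rho_B^{-1/2})$. I then establish the key contraction: for any state $\sigma_{AB}$ with marginals $\rho_A,\rho_B$,
$$\bigl|\Tr[(\wt X\,\rho_A^{-1/2}\otimes\rho_B^{-1/2}\wt Y)\,\sigma_{AB}]\bigr|\le\Vert\wt X\Vert_2\,\Vert\wt Y\Vert_2,$$
by Cauchy--Schwarz with the split $P^{\ast}=(\wt X\,\rho_A^{-1/2}\otimes I)\sigma_{AB}^{1/2}$, $Q=\sigma_{AB}^{1/2}(I\otimes\rho_B^{-1/2}\wt Y)$; tracing out one system collapses $\Vert P\Vert_2^2$ and $\Vert Q\Vert_2^2$ to $\Vert\wt X\Vert_2^2$ and $\Vert\wt Y\Vert_2^2$ precisely because the two copies of $\rho_A^{-1/2}$ (resp.\ $\rho_B^{-1/2}$) sandwich the marginal $\rho_A$ (resp.\ $\rho_B$) into a support projector. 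Taking $\sigma=\rho'$ gives $|G(it)|\le1$. On $\mrm{Re}(z)=1$ the same reduction puts $\wt\rho_{1+it}$ into the ``$k=1$'' form, which I would handle by the system-swap identity $\wt{(\rho_{BA})}_{0}=\mathrm{SWAP}(\wt\rho_{1})$ (swapping preserves Schmidt coefficients), reducing it to the already-proven contraction applied to $\rho_{BA}$; thus $|G(1+it)|\le1$. Proposition \ref{prop:Hadamard-3-line} with $M_0,M_1\le1$ then yields $|G(k)|\le1$ for all $k\in[0,1]$, and since $\wt X,\wt Y$ were arbitrary unit vectors, $\Vert\Lambda_{\wt\rho_k}\Vert_{2\to2}\le1$.

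\emph{Conclusion and Schmidt coefficients.} Combining the bounds gives $\Vert\Lambda_{\wt\rho_k}\Vert_{2\to2}=1$, attained at $\wt X=\rho_A^{1/2},\wt Y=\rho_B^{1/2}$ where the Cauchy--Schwarz step is saturated. Finally, by Proposition \ref{prop:Schmidt-to-sing} the singular values of $\Lambda_{\wt\rho_k}$ coincide with the Schmidt coefficients $\lambda_i(\wt\rho_k)$ of its Choi operator $\wt\rho_k$ over the Hilbert--Schmidt spaces, and the largest singular value equals the $2\to2$ operator norm, so $\lambda_1(\wt\rho_k)=1$. I expect the boundary contraction estimate to be the main obstacle: both the unitary/marginal bookkeeping that lets the imaginary powers drop out without disturbing the marginals, and the choice of Cauchy--Schwarz split so that each normalization collapses against a marginal to return exactly $\Vert\wt X\Vert_2\Vert\wt Y\Vert_2$ rather than a dimension factor. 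The $\mrm{Re}(z)=1$ boundary is the subtle one, since there the normalizations sit on the outer sides of $\wt X,\wt Y$ and the direct split fails; the swap-symmetry reduction is what rescues it.
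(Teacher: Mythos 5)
Your proposal is correct and follows essentially the same route as the paper: Hadamard three-line interpolation over the strip, with the imaginary powers stripped off as unitaries (your marginal-preserving conjugation $\rho' = V\rho_{AB}V^{\ast}$ is the same fact the paper phrases as absorbing the unitary channels $\cU_{\rho_{B},t}$, $\cU_{\ol{\rho}_{A},t}$ via unitary invariance of $\Vert \cdot \Vert_{2}$), and the boundary estimate is exactly the paper's Cauchy--Schwarz split against $\rho_{AB}^{1/2}$ used for $\wt{\rho}_{0}$. The only deviations are minor: you handle the $\mathrm{Re}(z)=1$ boundary by a swap-symmetry reduction to the $k=0$ contraction rather than citing Beigi's argument for $\wt{\rho}_{1}$, and you verify achievability at $(\rho_{A}^{1/2},\rho_{B}^{1/2})$ explicitly, which the paper's displayed proof leaves implicit (it appears only in the analogous appendix computation for Proposition \ref{prop:standard-operator-monotone-maximal-corr}).
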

\begin{proof}
    As $(\Lin(A),\langle \cdot, \cdot \rangle)$ is a Hilbert space with the Schatten $2$-norm being the canonical norm, by Lemma \ref{lem:map-norms-as-optimizations}, for any linear map $\Phi$,
    \begin{align}
        \Vert \Phi \Vert_{2 \to 2} = \max_{\Vert X \Vert_{2} = 1 = \Vert Y \Vert_{2}} \vert \Tr[Y \Phi(X)] \vert 
    \end{align}
    where $X$ and $Y$ are linear operators. Our goal is to bound this operator norm in the case $\Phi$ is $\Lambda_{\wt{\rho}_{k}}$. To do this, we construct a family of maps that extend the maps $\{\Lambda_{\wt{\rho}_{k}}\}_{k \in [0,1]}$ to the complex strip, apply Hadamard's three-line theorem to a function induced by those maps, and then show for $k \in \{0,1\}$, $M_{k} \leq 1$ where $M_{k}$ is as defined in Proposition \ref{prop:Hadamard-3-line}.
    
    To construct extensions of the relevant maps to the complex strip, we define the map $T_{z} \coloneq \cT_{\rho_{B},z}^{-1} \circ \Lambda_{\rho} \circ \cT_{\ol{\rho}_{A},1-z}^{-1}$. Note that for complex number $z = a +ib$ and (full rank) density matrix $\tau$,
    \begin{align}
        \cT_{\tau,1-z}(X) = \tau^{-(1-a-ib)/2}X\tau^{-(a+ib)/2} = \tau^{-(1-a)/2}\tau^{ib/2}X\tau^{-(ib/2)}\tau^{-a/2} = (\cT^{-1}_{\tau,1-a} \circ \cU_{\tau,b})(X) \ ,  
    \end{align}
    where $\cU_{\tau,b}$ is the unitary channel defined by the unitary $U = \tau^{-ib/2}$, which is a unitary as $\tau$ is Hermitian. It follows that 
    \begin{align}\label{eq:Tz-expansion}
        T_{z} = \cU_{\rho_{B},\Im{z}} \circ \cT^{-1}_{\rho_{B},\Re{z}} \circ \Lambda_{\rho} \circ \cT^{-1}_{\tau,1-\Re{z}} \circ \cU_{\ol{\rho}_{A},\Im{z}} = \cU_{\rho_{B},\Im{z}} \circ \Lambda_{\wt{\rho}_{\Re{z}}} \circ \cU_{\ol{\rho}_{A},\Im{z}}  \ .
    \end{align}

    Next we use Hadamard's three-line theorem. Let $X$ and $Y$ such that $\Vert X \Vert_{2} = \Vert Y \Vert_{2} = 1$. Define the function on the complex strip by
    \begin{align}
        f(z) \coloneq \Tr[Y T_{z}(X)] \ . 
    \end{align}
    This function satisfies the conditions of Hadamard's three-line theorem, so we have for $\theta \in [0,1]$
    \begin{align}\label{eq:complex-interpolation-bound}
        \vert \Tr[Y T_{\theta}(X)] \vert \leq \left( \sup_{t \in \mbb{R}}  \left\vert\Tr[Y T_{it}(X)] \right\vert \right)^{1-\theta}\left( \sup_{t \in \mbb{R}}  \left\vert\Tr[Y T_{1+it}(X)] \right\vert \right)^{\theta} \ . 
    \end{align}
    Now,
    \begin{align}
        \vert \Tr[Y T_{it}(X)] \vert &\leq \sup_{\Vert X \Vert_{2} = 1 = \Vert Y \Vert_{2}} \vert \Tr[YT_{it}(X)] \vert \\
        &= \sup_{\Vert X \Vert_{2} = 1 = \Vert Y \Vert_{2}} \vert \Tr[Y \cU_{\rho_{B},t} \circ \Lambda_{\wt{\rho}_{0}} \circ \cU_{\ol{\rho}_{A},t}(X)] \vert \\
        &= \sup_{\Vert X \Vert_{2} = 1 = \Vert Y \Vert_{2}} \vert \Tr[ \cU_{\rho_{B},t}^{\ast}(Y) \circ \Lambda_{\wt{\rho}_{0}}( \cU_{\ol{\rho}_{A},t}(X))] \vert \\
        &= \sup_{\Vert X \Vert_{2} = 1 = \Vert Y \Vert_{2}} \vert \Tr[ Y \circ \Lambda_{\wt{\rho}_{0}}(X)] \vert \\
        &= \sup_{\Vert X \Vert_{2} = 1 = \Vert Y \Vert_{2}} \vert \Tr[ X \otimes Y \wt{\rho}_{0}] \vert,
    \end{align}
    where the inequality is our choice of $X$ and $Y$, the first equality is \eqref{eq:Tz-expansion}, the third is the unitary invariance of the Schatten $2$-norm, so we may absorb the unitary channels into the optimizations, and the final equality is \eqref{eq:map-norm-in-terms-of-Choi} and the norm being invariant under transpose. An identical argument shows $\vert \Tr[Y T_{1+it}(X)] \vert \leq \sup_{\Vert X \Vert_{2} = 1 = \Vert Y \Vert_{2}} \vert \Tr[ X \otimes Y \wt{\rho}_{1}] \vert$. Thus, we just need to bound these terms. 
    
    It was shown in the proof of \cite[Theorem 1]{Beigi-2013a} that $\sup_{\Vert X \Vert_{2} = 1 = \Vert Y \Vert_{2}} \vert \Tr[ X \otimes Y \wt{\rho}_{1}] \vert \leq 1$. For self-containment of our work, we show the proof for $\wt{\rho}_{0}$, which is nearly identical. Consider the Hilbert spaces $(\Lin(A), \langle \cdot , \cdot \rangle)$ and $(\Lin(B), \langle \cdot , \cdot \rangle)$ so that 
\begin{align}
    \wt{\rho}_{0} = \sum_{i} \lambda_{i} M_{i} \otimes N_{i} \ , 
\end{align}
where $\lambda_{1} \geq \lambda_{2} ... \geq 0$ are the Schmidt coefficients and $\{M_{i}\}_{i}$, $\{N_{i}\}_{i}$ are orthonormal bases (with respect to the Hilbert-Schmidt inner product) of $\Lin(A),\Lin(B)$ respectively. It follows that $\lambda_{1}(\wt{\rho}_{0}) = \Vert \Lambda_{\wt{\rho}_{0}} \Vert_{2 \to 2}.$ We thus just need to bound $\lambda_{1}(\wt{\rho}_{0})$. Then
\begin{equation}
\begin{aligned}
    \lambda_{1}(\wt{\rho}_{0}) =& \Tr[M_{1}^{\ast} \otimes N_{1}^{\ast}\wt{\rho}_{0}] \\
    =& \Tr[(M_{1}^{\ast}\rho_{A}^{-1/2}) \otimes (\rho_{B}^{-1/2}N_{1}^{\ast})\rho_{AB}] \\
    =& \Tr[((M_{1}^{\ast}\rho_{A}^{-1/2}) \otimes \mbb{1}_{B}) \rho_{AB}(\mbb{1}_{A} \otimes (\rho_{B}^{-1/2}N_{1}^{\ast}))] \\
    =& \langle \rho_{AB}^{1/2}((\rho_{A}^{-1/2}M_{1}) \otimes \mbb{1}_{B}), \rho_{AB}^{1/2}(\mbb{1}_{A} \otimes (\rho_{B}^{-1/2}N_{1}^{\ast})) \rangle \\
    \leq& \Tr[(M_{1}^{\ast}\rho_{A}^{-1/2} \otimes \mbb{1}_{B})\rho_{AB}(\rho_{A}^{-1/2}M_{1} \otimes \mbb{1}_{B})]^{1/2} \Tr[(\mbb{1}_{A} \otimes N_{1} \rho_{B}^{-1/2})\rho_{AB}(\mbb{1}_{A} \otimes \rho_{B}^{-1/2}N_{1}^{\ast})]^{1/2} \\
    =& \Tr[M_{1}^{\ast}\rho_{A}^{-1/2}\rho_{A}\rho_{A}^{-1/2}M_{1}]^{1/2}\Tr[N_{1} \rho_{B}^{-1/2}\rho_{B}\rho_{B}^{-1/2}N_{1}]^{1/2} \\
    =& \Tr[M_{1}^{\ast}M]^{1/2}\Tr[N_{1}^{\ast}N]^{1/2} \\
    =& 1 \ ,
\end{aligned}
\end{equation}
where the second equality is definition of $\wt{\rho}_{0}$, the third is cyclicity of trace, the fourth is re-writing in terms of HS inner product, the inequality is Cauchy-Schwarz, the fifth equality is the definition of partial trace, the sixth is canceling terms and the final is the orthonormality of the $\{M_{i}\}$ and $\{N_{i}\}$. 

Combining the abouve bounds with \eqref{eq:complex-interpolation-bound} and noting that our initial choice of $X$ and $Y$ in defining the function $f$ was arbitrary except that $\Vert X \Vert = \Vert Y \Vert =1$, we obtain for $\theta \in [0,1]$, $\Vert T_{\theta} \Vert_{2 \to 2} \leq 1$. As $T_{\theta} = \Lambda_{\wt{\rho}_{\theta}}$, this completes the proof.
\end{proof}

With the above established, we obtain a Schmidt coefficient characterization of the $\mu_{f_{k}}^{\text{Lin}}$ by following the method in \cite{Beigi-2013a}. As noted in \cite{Beigi-2013a}, this is equivalently a singular value characterization of $\Lambda_{\wt{\rho}_{k}}$ as follows from Proposition \ref{prop:Schmidt-to-sing}.

\begin{lemma}\label{lem:k-correlation-coeff-Schmidt-coeff-characterization}
    Let $\rho_{AB} \in \Density(AB)$, $k \in [0,1]$. The following are equivalent:
    \begin{enumerate}
        \item $\mu_{f_{k}}^{\text{Lin}}(A:B)_{\rho}$ , 
        \item $\lambda_{2}(\wt{\rho}_{k})$ where $\lambda_{i}(\wt{\rho}_{k})$ denote the Schmidt coefficients of $\wt{\rho}_{k}$ over the Hilbert spaces $(\Lin(A), \langle \cdot , \cdot \rangle)$ and $(\Lin(B), \langle \cdot , \cdot \rangle)$ , 
        \item the second singular value of $\Lambda_{\wt{\rho}_{k}}: \Lin(A) \to \Lin(B)$.
    \end{enumerate}
\end{lemma}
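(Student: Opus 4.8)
The plan is to show all three quantities coincide by recognizing $\mu^{\text{Lin}}_{f_{k}}(A:B)_{\rho}$ as the operator norm of $\Lambda_{\wt{\rho}_{k}}$ after deflating its top singular direction. First I would start from the variational form \eqref{eq:change-of-variables-for-Xk} of Lemma \ref{lem:norm-bound} and, exactly as in the manipulations leading to \eqref{eq:k-correlation-coeff-bound-by-map-norm}, rewrite the objective via the Choi identity \eqref{eq:map-norm-in-terms-of-Choi} as $\bigl\vert \langle \wt{Y}, \Lambda_{\wt{\rho}_{k}}(\wt{X}) \rangle \bigr\vert$ up to a transpose and conjugate. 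These operations are HS-norm-preserving (anti)isometries and, since the transpose is taken in the eigenbasis of $\rho_{A}$ so that $(\rho_{A}^{1/2})^{\Trans}=\rho_{A}^{1/2}$ and $(\rho_{B}^{1/2})^{\ast}=\rho_{B}^{1/2}$, they fix the one-dimensional subspaces $\linspan(\rho_{A}^{1/2})$ and $\linspan(\rho_{B}^{1/2})$ and hence are compatible with the linear constraints. The crucial difference from the upper bound proved in Lemma \ref{lem:norm-bound} is that here I keep the two constraints $\langle \rho_{A}^{1/2}, \wt{X}\rangle = 0$ and $\langle \rho_{B}^{1/2}, \wt{Y}\rangle = 0$. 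Writing $V_{0}\subset\Lin(A)$ and $W_{0}\subset\Lin(B)$ for the HS-orthogonal complements of $\rho_{A}^{1/2}$ and $\rho_{B}^{1/2}$, and $\Pi_{V_{0}},\Pi_{W_{0}}$ for the corresponding orthogonal projections, fixing $\wt{X}\in V_{0}$ and maximizing over unit $\wt{Y}\in W_{0}$ gives $\Vert \Pi_{W_{0}}\Lambda_{\wt{\rho}_{k}}(\wt{X})\Vert_{2}$ by Lemma \ref{lem:map-norms-as-optimizations}, so that $\mu^{\text{Lin}}_{f_{k}}(A:B)_{\rho} = \Vert \Pi_{W_{0}}\,\Lambda_{\wt{\rho}_{k}}\,\Pi_{V_{0}} \Vert_{2\to2}$.

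The key structural input is that $\rho_{A}^{1/2}$ and $\rho_{B}^{1/2}$ form a matched pair of top singular vectors of $\Lambda_{\wt{\rho}_{k}}$. Indeed $\Vert \rho_{A}^{1/2}\Vert_{2}^{2}=\Tr[\rho_{A}]=1$ and likewise for $\rho_{B}^{1/2}$, so both are unit vectors, and Lemma \ref{lem:map-norm-is-1-for-k-correlation-coeff} states that $\Vert\Lambda_{\wt{\rho}_{k}}\Vert_{2\to2}=1$ is attained at $(\rho_{A}^{1/2},\rho_{B}^{1/2})$. Equality throughout $\bigl\vert\langle\rho_{B}^{1/2},\Lambda_{\wt{\rho}_{k}}(\rho_{A}^{1/2})\rangle\bigr\vert \le \Vert\rho_{B}^{1/2}\Vert_{2}\,\Vert\Lambda_{\wt{\rho}_{k}}(\rho_{A}^{1/2})\Vert_{2}\le 1$ forces $\Lambda_{\wt{\rho}_{k}}(\rho_{A}^{1/2})=\omega\,\rho_{B}^{1/2}$ with $\vert\omega\vert=1$, whence $\rho_{A}^{1/2}$ is a unit eigenvector of the self-adjoint operator $\Lambda_{\wt{\rho}_{k}}^{\ast}\Lambda_{\wt{\rho}_{k}}$ with eigenvalue $1$ and $\Lambda_{\wt{\rho}_{k}}^{\ast}(\rho_{B}^{1/2})=\ol{\omega}\,\rho_{A}^{1/2}$.

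Next I would deflate. Since $\linspan(\rho_{A}^{1/2})$ is invariant under $\Lambda_{\wt{\rho}_{k}}^{\ast}\Lambda_{\wt{\rho}_{k}}$, so is $V_{0}$; moreover for $\wt{X}\in V_{0}$ one computes $\langle\rho_{B}^{1/2},\Lambda_{\wt{\rho}_{k}}(\wt{X})\rangle=\langle\Lambda_{\wt{\rho}_{k}}^{\ast}(\rho_{B}^{1/2}),\wt{X}\rangle=\omega\,\langle\rho_{A}^{1/2},\wt{X}\rangle=0$, so $\Lambda_{\wt{\rho}_{k}}(V_{0})\subseteq W_{0}$. Hence $\Lambda_{\wt{\rho}_{k}}$ is block diagonal with respect to $\Lin(A)=\linspan(\rho_{A}^{1/2})\oplus V_{0}$ and $\Lin(B)=\linspan(\rho_{B}^{1/2})\oplus W_{0}$, the first block being multiplication by $\omega$. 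Consequently the multiset of singular values $\{s_{i}(\Lambda_{\wt{\rho}_{k}})\}$ is $\{1\}$ together with the singular values of the restriction $\Lambda_{\wt{\rho}_{k}}\vert_{V_{0}\to W_{0}}$; removing one copy of the top value $1$ shows $s_{2}(\Lambda_{\wt{\rho}_{k}})=\Vert\Lambda_{\wt{\rho}_{k}}\vert_{V_{0}\to W_{0}}\Vert_{2\to2}=\Vert\Pi_{W_{0}}\Lambda_{\wt{\rho}_{k}}\Pi_{V_{0}}\Vert_{2\to2}$, which is exactly $\mu^{\text{Lin}}_{f_{k}}(A:B)_{\rho}$ from the first paragraph. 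This establishes the equivalence of items (1) and (3). I expect this deflation to be the main obstacle, precisely because it must be robust when the top singular value $s_{1}=1$ has multiplicity greater than one (the case $s_{2}=1$); the block-decomposition argument handles this cleanly since it never assumes $\rho_{A}^{1/2},\rho_{B}^{1/2}$ are the unique top singular vectors, only that they are one such matched pair.

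Finally, item (2) equals item (3) directly from Proposition \ref{prop:Schmidt-to-sing}: by \eqref{eq:delgosha-calculation} we have $\wt{\rho}_{k}=\Omega_{\Lambda_{\wt{\rho}_{k}}}$, so the Schmidt coefficients of $\wt{\rho}_{k}$ over $(\Lin(A),\langle\cdot,\cdot\rangle)$ and $(\Lin(B),\langle\cdot,\cdot\rangle)$ are the singular values of $\Lambda_{\wt{\rho}_{k}}$, and in particular $\lambda_{2}(\wt{\rho}_{k})=s_{2}(\Lambda_{\wt{\rho}_{k}})$. Combining with the previous paragraph completes the chain of equalities among the three quantities.
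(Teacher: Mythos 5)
Your proof is correct, and it rests on the same two pillars as the paper's argument --- Lemma \ref{lem:map-norm-is-1-for-k-correlation-coeff} and Proposition \ref{prop:Schmidt-to-sing} --- but executes the middle step in a dual way. The paper works on the \emph{state} side: it takes a Schmidt decomposition $\wt{\rho}_{k} = \sum_{i} \lambda_{i} M_{i} \otimes N_{i}$ with $M_{1} = \rho_{A}^{1/2}$, $N_{1} = \rho_{B}^{1/2}$ chosen ``without loss of generality,'' expands the objective $\Tr[\wt{X} \otimes \wt{Y}\,\wt{\rho}_{k}]$ in this basis, kills the $i=1$ term using the orthogonality constraints, and applies Cauchy--Schwarz to the coefficient sequences to bound by $\lambda_{2}$, attained at $(\wt{X},\wt{Y}) = (M_{2},N_{2})$. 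You work on the \emph{map} side: you identify $\mu^{\text{Lin}}_{f_{k}}$ as the compressed norm $\Vert \Pi_{W_{0}}\Lambda_{\wt{\rho}_{k}}\Pi_{V_{0}} \Vert_{2\to 2}$, use the equality case of Cauchy--Schwarz to show $(\rho_{A}^{1/2},\rho_{B}^{1/2})$ is a matched top singular pair with $\Lambda_{\wt{\rho}_{k}}(\rho_{A}^{1/2}) = \omega\,\rho_{B}^{1/2}$, and deflate via the block decomposition $\Lin(A) = \linspan(\rho_{A}^{1/2}) \oplus V_{0}$, $\Lin(B) = \linspan(\rho_{B}^{1/2}) \oplus W_{0}$. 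Each version buys something. Yours makes explicit what the paper's ``without loss of generality'' quietly assumes: when the top singular value is degenerate, choosing a Schmidt decomposition whose first pair is $(\rho_{A}^{1/2},\rho_{B}^{1/2})$ requires precisely your matched-pair derivation, and your block-diagonal deflation handles the multiplicity case $s_{2}=1$ transparently --- a point you rightly flag as the main obstacle. The paper's route, in exchange, produces explicit optimizers $(M_{2},N_{2})$ for the original variational problem, so attainment of the maximum is immediate rather than an appeal to compactness through Lemma \ref{lem:map-norms-as-optimizations}; this also aligns directly with how the lemma is used downstream in the tensorization argument of Lemma \ref{lem:k-correlation-tensorize}, where one multiplies Schmidt coefficient lists.
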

\begin{proof}
    Fix a $k \in [0,1]$ as the proof is identical in each case. We consider the Schmidt decomposition $
    \wt{\rho}_{k} = \sum_{i} \lambda_{i} M_{i} \otimes N_{i}$ where the $\lambda_{i}$, $\{M_{i}\}$, $\{N_{i}\}$ depend on the $k$, but we omit this dependence for notational simplicity. By Lemma \ref{lem:map-norm-is-1-for-k-correlation-coeff}, we know $\lambda_{1}(\wt{\rho}) = 1$ and $M_{1} = \rho_{A}^{1/2}$, $N_{1} = \rho_{B}^{1/2}$ without loss of generality.

    Therefore, for any feasible $\wt{X},\wt{Y}$ in \eqref{eq:change-of-variables-for-Xk},\begin{equation}\label{eq:Schmidt-coeff-bound}
    \begin{aligned}
        \vert \Tr[\wt{X} \otimes \wt{Y}\wt{\rho}_{AB}] \vert =& \vert \sum_{i \geq 1} \lambda_{i} \langle \wt{X}^{\ast}, M_{i} \rangle \langle \wt{Y}^{\ast}, N_{i} \rangle \vert \\
        =& \vert \sum_{i \geq 2} \lambda_{i} \langle \wt{X}^{\ast}, M_{i} \rangle \langle \wt{Y}^{\ast}, N_{i} \rangle \vert \\
        \leq& \left( \sum_{i \geq 2} \lambda_{i} \vert \langle \wt{X}^{\ast},  M_{i} \rangle \vert^{2} \right)^{1/2} \left( \sum_{i \geq 2} \lambda_{i} \vert \langle  \wt{Y}^{\ast}, N_{i} \rangle \vert^{2} \right)^{1/2} \\
        \leq& \lambda_{2} \left( \sum_{i \geq 2} \vert \langle \wt{X}^{\ast}, M_{i}  \rangle \vert^{2} \right)^{1/2} \left( \sum_{i \geq 2} \vert \langle \wt{Y}^{\ast},  N_{i} \rangle \vert^{2} \right)^{1/2} \\
        =& \lambda_{2}
    \end{aligned}
    \end{equation}
    where the first equality uses the Schmidt decomposition, the second equality uses $M_{1} = \rho^{1/2}_{A}$, $\langle \rho^{1/2}_{A} , \wt{X} \rangle = 0$ and similarly for the $B$ space, the first inequality is Cauchy-Schwarz using complex vectors $u,v$ via $u(i) = \sqrt{\lambda_{i}}\langle M_{i}^{\ast}, \wt{X}^{\ast} \rangle$ and similarly for $v$, the second inequality uses that $\lambda_{2} \geq \lambda_{2+j}$ for all $j \geq 0$, and the last equality uses $\sum_{i \geq 2} \vert \langle \wt{X}^{\ast}, M_{i}  \rangle \vert^{2} = \Vert \wt{X}^{\ast} \Vert_{2}^{2} = \Vert \wt{X} \Vert_{2}^{2} = 1$ and similarly for $\wt{Y}$ where the first equality uses the orthogonality property of $\wt{X}$ and that $\{M_{i}\}$ are an orthonormal basis and the final equality uses the assumption $\Vert \wt{X} \Vert_{2} = 1$. Moreover, if $\wt{X} = M_{2}$ and $\wt{Y} = N_{2}$, then the inequalities are equalities. This completes the Schmidt decomposition identification, and the singular value identification then follows from Proposition \ref{prop:Schmidt-to-sing}. This completes the proof.
\end{proof}

We now obtain that these correlation coefficients tensorize.
\begin{lemma}\label{lem:k-correlation-tensorize}
    Let $k \in [0,1]$. Consider $\rho_{AB},\sigma_{A'B'} \in \Density$. Then 
    $$\mu_{f_{k}}(AA':BB')_{\rho \otimes \sigma} = \max\{\mu_{f_{k}}(A:B)_{\rho},\mu_{f_{k}}(A':B')_{\sigma}\} \ . $$
\end{lemma}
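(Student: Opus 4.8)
The plan is to prove the identity at the level of Schmidt coefficients, leveraging the characterization of $\mu^{\text{Lin}}_{f_k}$ obtained in Lemma~\ref{lem:k-correlation-coeff-Schmidt-coeff-characterization} together with the fact that, for the power functions $f_k$, every operator entering the definition \eqref{eq:wt-rho-k-def} of $\wt\rho_k$ factorizes over tensor products. I work with the linear-operator coefficient $\mu^{\text{Lin}}_{f_k}$, for which that lemma provides a Schmidt-coefficient formula, and I may assume the marginals are full rank by restricting to their supports.

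First I would apply Lemma~\ref{lem:k-correlation-coeff-Schmidt-coeff-characterization} to each of the three states, writing $\mu^{\text{Lin}}_{f_k}(A:B)_\rho = \lambda_2(\wt\rho_k)$, $\mu^{\text{Lin}}_{f_k}(A':B')_\sigma = \lambda_2(\wt\sigma_k)$, and $\mu^{\text{Lin}}_{f_k}(AA':BB')_{\rho\otimes\sigma} = \lambda_2(\wt{(\rho\otimes\sigma)}_k)$, where $\lambda_i(\cdot)$ denotes the $i$-th Schmidt coefficient over $(\Lin(\cdot),\langle\cdot,\cdot\rangle)$ and, by Lemma~\ref{lem:map-norm-is-1-for-k-correlation-coeff}, $\lambda_1 = 1$ in every case. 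The claim then reduces to the purely spectral statement $\lambda_2(\wt{(\rho\otimes\sigma)}_k) = \max\{\lambda_2(\wt\rho_k),\lambda_2(\wt\sigma_k)\}$.

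The central step is to show that, after permuting the tensor legs from the order $AA'BB'$ into $ABA'B'$, one has $\wt{(\rho\otimes\sigma)}_k = \wt\rho_k\otimes\wt\sigma_k$. This follows by inserting $(\rho\otimes\sigma)_{AA'} = \rho_A\otimes\sigma_{A'}$ and $(\rho\otimes\sigma)_{BB'} = \rho_B\otimes\sigma_{B'}$ into \eqref{eq:wt-rho-k-def}, using $(\rho_A\otimes\sigma_{A'})^p = \rho_A^p\otimes\sigma_{A'}^p$ to split every conjugating factor, and then regrouping the $A,B$ legs onto $\rho_{AB}$ and the $A',B'$ legs onto $\sigma_{A'B'}$; this is the concrete form of the multiplicativity of $\mbf{J}_{f_k,\cdot}$ recorded in Proposition~\ref{prop:multiplicativity-of-J}. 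Since the regrouping permutation respects the bipartition $AA':BB'$, the Schmidt decompositions multiply: if $\wt\rho_k = \sum_i \lambda_i(\wt\rho_k)\, M_i\otimes N_i$ and $\wt\sigma_k = \sum_j \lambda_j(\wt\sigma_k)\, P_j\otimes Q_j$ are the Schmidt decompositions, then $\{M_i\otimes P_j\}$ and $\{N_i\otimes Q_j\}$ are orthonormal bases of $\Lin(AA')$ and $\Lin(BB')$, so the Schmidt coefficients of $\wt{(\rho\otimes\sigma)}_k$ are exactly the products $\{\lambda_i(\wt\rho_k)\lambda_j(\wt\sigma_k)\}_{i,j}$.

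It then remains to extract the second-largest product. Using $\lambda_1 = 1$ and $\lambda_i\le 1$ for every $i$, the largest product is $\lambda_1(\wt\rho_k)\lambda_1(\wt\sigma_k) = 1$ and the second largest is $\max\{\lambda_2(\wt\rho_k)\lambda_1(\wt\sigma_k),\,\lambda_1(\wt\rho_k)\lambda_2(\wt\sigma_k)\} = \max\{\lambda_2(\wt\rho_k),\lambda_2(\wt\sigma_k)\}$, since any remaining product $\lambda_i(\wt\rho_k)\lambda_j(\wt\sigma_k)$ with $(i,j)\notin\{(1,1),(2,1),(1,2)\}$ is dominated by one of these two. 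Substituting back the three identifications of the first step yields the tensorization formula. I expect the only delicate point to be the bookkeeping in the factorization step — verifying that the tensor-factor permutation carrying $\wt{(\rho\otimes\sigma)}_k$ to $\wt\rho_k\otimes\wt\sigma_k$ is precisely the one compatible with the bipartition $AA':BB'$, so that Schmidt coefficients genuinely multiply — together with the routine but necessary check that selecting the second-largest product correctly accounts for possible multiplicities of the top coefficient.
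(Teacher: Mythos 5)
Your proof is correct and follows essentially the same route as the paper's: reduce via the Schmidt-coefficient characterization of Lemma~\ref{lem:k-correlation-coeff-Schmidt-coeff-characterization}, use $(\rho_A\otimes\sigma_{A'})^{\alpha}=\rho_A^{\alpha}\otimes\sigma_{A'}^{\alpha}$ to obtain $\wt{(\rho\otimes\sigma)}_k=\wt{\rho}_k\otimes\wt{\sigma}_k$, and invoke $\lambda_1=1$ from Lemma~\ref{lem:map-norm-is-1-for-k-correlation-coeff} to identify the second Schmidt coefficient of the product as $\max\{\lambda_2(\wt{\rho}_k),\lambda_2(\wt{\sigma}_k)\}$. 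Your explicit treatment of the tensor-leg regrouping and of the dominance of the remaining products $\lambda_i\omega_j$ only spells out bookkeeping the paper leaves implicit.
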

\begin{proof}
    Fix a $k \in [0,1]$ as the proof is the same for each. Let $\lambda_{1} \geq \lambda_{2} \geq ...$ and $\omega_{1} \geq \omega_{2} \geq ...$ be the Schmidt coefficients of $\wt{\rho}_{k}$ and $\wt{\sigma}_{k}$. By Item 2 of Lemma \ref{lem:k-correlation-coeff-Schmidt-coeff-characterization}, $\mu_{f_{k}}(\rho_{AB} \otimes \sigma_{A'B'})$ is the second Schmidt coefficient of $\wt{\rho \otimes \sigma}_{k} = \wt{\rho}_{k} \otimes \wt{\sigma}_{k}$, where we have used $(\rho_{A} \otimes \sigma_{A'})^{\alpha} = \rho_{A}^{\alpha} \otimes \sigma_{A'}^{\alpha}$ for $\alpha \in [0,1]$. The second Schmidt coefficient of $\wt{\rho}_{k} \otimes \wt{\sigma}_{k}$ is 
    $$\max\{\lambda_{1}\omega_{2},\lambda_{2}\omega_{1}\} = \max\{\mu_{f_{k}}(A:B)_{\rho},\mu_{f_{k}}(A':B')_{\sigma}\}  \ , $$
    where we have used that the first Schmidt coefficient is always one (Lemma \ref{lem:map-norm-is-1-for-k-correlation-coeff}), and Item 2 of Lemma \ref{lem:k-correlation-coeff-Schmidt-coeff-characterization} again.
\end{proof}

Finally, we may prove that the $\mu_{f_{k}}$ capture necessary conditions for transformations under local processing.
\begin{proof}[Proof of Theorem \ref{thm:k-correlation-nec-for-local-processing}]
For any $n \in \mbb{N}$ and two-positive, trace-preserving maps $\cE_{A^{n} \to A'}$, $\cF_{B^{n} \to B'}$, let $\sigma_{A'B'} \coloneq (\cE \otimes \cF)(\rho_{AB}^{\otimes n})$. Then
$$\mu_{f_{k}}(A':B')_{\sigma} \geq \mu_{f_{k}}(A^{n}:B^{n})_{\rho^{\otimes n}} = \mu_{f_{k}}(A:B)_{\rho} \ , $$
where the first equality is Proposition \ref{prop:DPI-for-f-correlation} and the second inequality is Lemma \ref{lem:k-correlation-tensorize}.
\end{proof}

\subsubsection{Establishing Theorem \ref{thm:mu-GM-as-monotone} (Strengthened Results for \texorpdfstring{$\mu_{GM}$)}{}}\label{subsec:strengthened-results-for-GM}
In this subsection we establish that in fact all of the Schmidt coefficients of the operators that give rise to $\mu_{GM}(\rho_{AB})$ satisfy data processing under \textit{positive} maps. This allows us to establish Corollary \ref{thm:mu-GM-as-monotone}. We remark the fact that $\mu_{GM}$ admits the DPI under positive maps may be intuitive as, as we note later, $\mu_{GM}$ is deeply related to the Sandwiched divergence of order $2$, and it was shown in \cite{Muller-2017a} that sandwiched divergences satisfy DPI under positive maps.

For notational simplicity, for the rest of this section, given a quantum state $\rho$, we let $\wt{\rho}$ denote what was written as $\wt{\rho}_{1/2}$ in \eqref{eq:wt-rho-k-def}.
\begin{lemma}\label{lem:sandwiched-correlation-coeff-DPI}
    \sloppy Consider $\rho_{AB}$ and $\sigma_{AB'} \coloneq (\text{id}_{A} \otimes \cE)(\rho_{AB})$ where $\cE$ is a positive linear map. Then 
    \begin{align}
        \lambda_{i}(\wt{\rho}_{AB}) \geq \omega_{i}(\wt{\sigma}_{AB'}) \quad \forall i \ ,
    \end{align}
    where $\lambda_{1} \geq \lambda_{2} ... \geq 0$ are the Schmidt coefficients of $\wt{\rho}_{AB}$ for local Hilbert spaces $(\Herm(A),\langle \cdot, \cdot \rangle),(\Herm(B),\langle \cdot, \cdot \rangle)$ and similarly for $\omega_{i}$ and $\wt{\sigma}$. In particular, this implies that $\mu_{GM}(\rho_{AB}) \geq \mu_{GM}(\sigma_{AB'})$.
\end{lemma}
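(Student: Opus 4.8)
The plan is to reduce the claimed singular-value domination to the data processing inequality for the operator $\mbf{J}_{f_{GM}}^{-1}$ under positive trace-preserving maps. First I would record an explicit intertwining relation between the two rotated states. Writing $\wt{\rho}=(\rho_{A}\otimes\rho_{B})^{-1/4}\rho_{AB}(\rho_{A}\otimes\rho_{B})^{-1/4}$ and recalling that $\mbf{J}_{f_{GM},\tau}^{1/2}(Y)=\tau^{1/4}Y\tau^{1/4}$, a direct computation (pulling the $A$-side factors through $\id_{A}\otimes\cE$ and using $\sigma_{A}=\rho_{A}$, $\sigma_{B'}=\cE(\rho_{B})$) gives
\begin{align}
    \wt{\sigma} = (\id_{A}\otimes\cG)(\wt{\rho}), \qquad \cG \coloneq \mbf{J}_{f_{GM},\cE(\rho_{B})}^{-1/2}\circ\cE\circ\mbf{J}_{f_{GM},\rho_{B}}^{1/2} \ .
\end{align}
Here I would assume without loss of generality that $\rho_{A},\rho_{B}$ are full rank (restricting to the supports of the marginals as in Proposition \ref{prop:non-neg-func-need-not-restrict-support}), so that the inverse powers are well defined, and I would note that $\cG$ is Hermitian-preserving since $\cE$ is positive.

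Next I would translate this into a statement about singular values. By the Choi calculus of \eqref{eq:action-of-Choi}, the relation above is equivalent to $\Lambda_{\wt{\sigma}}=\cG\circ\Lambda_{\wt{\rho}}$, and by Proposition \ref{prop:Schmidt-to-sing} the Schmidt coefficients $\lambda_{i}(\wt{\rho})$, $\omega_{i}(\wt{\sigma})$ over the Hermitian Hilbert spaces are exactly the singular values of $\Lambda_{\wt{\rho}}$, $\Lambda_{\wt{\sigma}}$. Using the standard singular-value product bound $s_{i}(\cG\circ\Lambda_{\wt{\rho}})\le \Vert\cG\Vert_{\infty}\, s_{i}(\Lambda_{\wt{\rho}})$, it then suffices to show that $\cG$ is a contraction on $(\Herm(B),\langle\cdot,\cdot\rangle)$, i.e.\ $\Vert\cG\Vert_{\infty}\le 1$.

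The contractivity of $\cG$ is the crux. Since the $\mbf{J}^{p}_{f_{GM}}$ are self-adjoint (Proposition \ref{prop:J-operator-self-adjoint}), the condition $\cG^{\ast}\circ\cG\le\id$ is equivalent, after conjugating by the invertible positive operator $\mbf{J}_{f_{GM},\rho_{B}}^{-1/2}$, to
\begin{align}
    \cE^{\ast}\circ\mbf{J}_{f_{GM},\cE(\rho_{B})}^{-1}\circ\cE \ \le\ \mbf{J}_{f_{GM},\rho_{B}}^{-1} \ .
\end{align}
Unfolding the quadratic forms, this reads $\Tr[\cE(X)^{\ast}\cE(\rho_{B})^{-1/2}\cE(X)\cE(\rho_{B})^{-1/2}]\le \Tr[X^{\ast}\rho_{B}^{-1/2}X\rho_{B}^{-1/2}]$ for all $X$, which is precisely the data processing inequality for the $\chi^{2}_{GM}$ bilinear form. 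This is the step I expect to be the main obstacle: it does \emph{not} follow from Proposition \ref{prop:DPI-for-J-op}, because the adjoint of a positive trace-preserving map need only be positive unital rather than a unital Schwarz map. It is exactly here that $f_{GM}$ is special --- the required inequality for merely positive trace-preserving $\cE$ is the operator form of the data processing inequality for the sandwiched R\'{e}nyi divergence of order $2$, established in \cite{Muller-2017a}, which I would invoke (or reprove via its standard operator-convexity argument).

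Combining the two steps yields $\omega_{i}(\wt{\sigma})=s_{i}(\cG\circ\Lambda_{\wt{\rho}})\le s_{i}(\Lambda_{\wt{\rho}})=\lambda_{i}(\wt{\rho})$ for all $i$. Finally, the displayed consequence for $\mu_{GM}$ follows by specializing to $i=2$: the first Schmidt coefficient is always $1$ (Lemma \ref{lem:map-norm-is-1-for-k-correlation-coeff}) and the second equals the maximal correlation coefficient (Lemma \ref{lem:k-correlation-coeff-Schmidt-coeff-characterization} together with the identity $\mu_{GM}=\mu_{GM}^{\text{Lin}}$), so $\lambda_{2}(\wt{\rho})\ge\omega_{2}(\wt{\sigma})$ is exactly $\mu_{GM}(\rho_{AB})\ge\mu_{GM}(\sigma_{AB'})$.
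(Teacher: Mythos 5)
Your proposal is correct and follows the paper's skeleton exactly up to the crux: the intertwining $\Lambda_{\wt{\sigma}}=\cG\circ\Lambda_{\wt{\rho}}$ (your $\cG$ is literally the paper's $\wt{\cE}=\Gamma_{\cE(\rho_{B})}^{-1/2}\circ\cE\circ\Gamma_{\rho_{B}}^{1/2}$, since $\mbf{J}_{f_{GM},\tau}=\Gamma_{\tau}$), the min-max bound $s_{i}(\cG\circ\Lambda_{\wt{\rho}})\le\Vert\cG\Vert\,s_{i}(\Lambda_{\wt{\rho}})$ via Proposition \ref{prop:Schmidt-to-sing}, and the $i=2$ specialization are all as in the paper. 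Where you genuinely diverge is the contractivity step. The paper argues $\wt{\cE}$ is a positive map (composition of the positive $\cE$ with the completely positive $\Gamma^{1/2}_{\rho_{B}}$, $\Gamma^{-1/2}_{\cE(\rho_{B})}$) and invokes the Russo--Dye theorem to get $\Vert\wt{\cE}\Vert=1$; you instead conjugate $\cG^{\ast}\circ\cG\le\id$ by $\mbf{J}^{-1/2}_{f_{GM},\rho_{B}}$ to reduce contractivity to the operator-form monotonicity $\cE^{\ast}\circ\mbf{J}^{-1}_{f_{GM},\cE(\rho_{B})}\circ\cE\le\mbf{J}^{-1}_{f_{GM},\rho_{B}}$, correctly note this lies outside Proposition \ref{prop:DPI-for-J-op} (the adjoint of a positive trace-preserving map need not be a unital Schwarz map), and import the DPI for sandwiched R\'{e}nyi divergences under positive trace-preserving maps from \cite{Muller-2017a}. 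What each buys: the paper's route is shorter and self-contained, but Russo--Dye natively controls the $\infty\to\infty$ norm of a positive map via its value at the identity, whereas the singular-value step requires the $2\to2$ (Hilbert--Schmidt) norm --- your quadratic-form reduction addresses exactly that norm, and it is the mechanism the paper itself gestures at in the remark preceding the lemma relating $\mu_{GM}$ to the order-$2$ sandwiched divergence. Two points you should tighten: \cite{Muller-2017a} gives DPI of $\wt{D}_{2}$ on pairs of states, so the passage to the quadratic form on traceless Hermitian $X$ (take $\rho=\sigma+\ve X$ for small $\ve$, use $\chi^{2}_{GM}=\exp(\wt{D}_{2})-1$, and handle the $\sigma$-component and cross terms by trace preservation) should be spelled out --- Hermitian inputs are all you need, since the Schmidt decomposition is over $(\Herm(A),\langle\cdot,\cdot\rangle)$, $(\Herm(B),\langle\cdot,\cdot\rangle)$; and the full-rank reduction is better justified by pseudo-inverses and padding with zero Schmidt coefficients than by Proposition \ref{prop:non-neg-func-need-not-restrict-support}, which concerns the optimization defining $\mu_{f}$ rather than the full Schmidt spectrum.
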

\begin{proof}
    Our proof method borrows ideas from \cite{Delgosha-2014a}. As is standard, e.g. \cite{Beigi-2013sandwiched}, for $\tau \in \Density_{+}(A)$ we define the map $\Gamma_{\tau}(X) \coloneq \tau^{1/2}X\tau^{1/2}$. One may then verify by direct calculation $\Gamma_{\tau}^{-1/2} = \cT_{\tau,1/2}^{-1}$. Therefore, by \eqref{eq:rho-tilde-k-map}, we know\begin{align}\label{eq:Omega-wt-rho-composition}
        \Omega_{\wt{\rho}} = \Gamma_{\rho_{B}}^{-1/2} \circ \Omega_{\rho} \circ \Gamma_{\ol{\rho}_{A}}^{-1/2} \ .
    \end{align} 
    Now, $\Omega_{\sigma} = (\text{id}_{A} \otimes \cE)(\rho_{AB}) = (\id_{A} \otimes \cE \circ \Omega_{\rho})(\dyad{\Phi^{+}})$. Thus, $\Omega_{\sigma} = \cE \circ \Omega_{\rho}$. Combining this with \eqref{eq:Omega-wt-rho-composition},
    \begin{align}
        \Omega_{\wt{\sigma}} =& \Gamma_{\Phi(\rho_{B})}^{-1/2} \circ \cE \circ \Omega_{\rho} \circ \Gamma_{\ol{\rho}_{A}}^{-1/2} \\
        =& \Gamma_{\Phi(\rho_{B})}^{-1/2} \circ \cE \circ \Gamma_{\rho_{B}}^{1/2} \circ \Gamma_{\rho_{B}}^{-1/2}  \circ \Omega_{\rho} \circ \Gamma_{\ol{\rho}_{A}}^{-1/2} \\
        \coloneqq& \; \wt{\cE} \circ \Omega_{\wt{\rho}} \ .
    \end{align}
    Now, by the min-max principle for singular values and the relationship between Schmidt coefficients and singular values, 
    \begin{align}
        \lambda_{i}(\wt{\sigma}) = s_{i}(\wt{\cE} \circ \Omega_{\wt{\rho}}) \leq s_{1}(\wt{\cE}) s_{i}(\Omega_{\wt{\rho}}) = \Vert \wt{\cE} \Vert s_{i}(\Omega_{\wt{\rho}}) =  \Vert \wt{\cE} \Vert \lambda_{i}(\wt{\rho}) \ .
    \end{align}
    We thus just need to bound the operator norm of $\wt{\cE}$. As $\cE$ is positive and $\Gamma_{\Phi(\rho)}^{-1/2},\Gamma^{1/2}_{\rho_{B}}$ are completely positive, we can conclude $\wt{\cE}$ is positive, so by the Russo-Dye theorem (see \cite[2.37 Theorem]{Bhatia-2009a}), $\Vert \wt{\cE} \Vert = 1$. This completes the proof.
\end{proof}

We use the above to prove Theorem \ref{thm:mu-GM-as-monotone}.
\begin{proof}[Proof of Theorem \ref{thm:mu-GM-as-monotone}]
    By \eqref{eq:wt-rho-k-def}, $\wt{\rho_{AB} \otimes \tau_{A''B''}} = \wt{\rho}_{AB} \otimes \wt{\tau}_{A''B''}$. Thus, the Schmidt coefficients of the former are the ordered products of pairs of Schmidt coefficients, i.e. $\{\lambda_{i}\zeta_{k}\}_{i,k}$ ordered. Thus applying Lemma \ref{lem:sandwiched-correlation-coeff-DPI} establishes the general claim. To obtain the in particular case, we let $\tau = \rho$ and then use that the first Schmidt coefficient is always one.
\end{proof}
Before moving forward, we make three remarks in regards to this theorem. First, a similar claim can be made for the adjoint of a unital Schwarz map and $\mu_{f_{1}}^{\text{Lin}}$ given Proposition \ref{prop:DPI-for-f-correlation} and \cite[Theorem 4]{Beigi-2013a}. Second, for pure states, a similar claim to Theorem \ref{thm:mu-GM-as-monotone} is given in \cite[Theorem 7]{George-2024-pure-state}. The difference is rather than capturing necessary conditions, that result captures the minimum possible error as measured under fidelity. Finally, because the Schmidt decomposition of $\wt{\rho_{AB}}$ with respect to $(\Herm(A), \langle \cdot , \cdot \rangle)$, $(\Herm(B), \langle \cdot, \cdot \rangle)$ also is a Schmidt decomposition with respect to $(\Lin(A), \langle \cdot , \cdot \rangle)$, $(\Lin(B), \langle \cdot, \cdot \rangle)$, we may conclude it is always the case
\begin{align}\label{eq:GM-max-corr-doesn't-depend-on-lin}
    \mu_{GM}(A:B)_{\rho} = \mu_{GM}^{\text{Lin}}(A:B)_{\rho} \ , 
\end{align} which we used in Proposition \ref{prop:recovers-classical}.

\begin{remark}[Standard Operator Monotone Maximal Correlation Coefficients] Following the proof method we have used, it is not hard to establish that many other maximal correlation coefficients admit a Schmidt coefficient interpretation.
\begin{proposition}\label{prop:standard-operator-monotone-maximal-corr}
    Let $f \in \cM_{\text{St}}$ such that $f_{GM} \leq f \leq f_{AM}$. Then $\mu_{f}(A:B)_{\rho} \leq 1$. Moreover, it is the second Schmidt coefficient of $\wt{\rho}_{f} \coloneq (\mbf{J}_{f,\rho_{A}}^{-1/2} \otimes \mbf{J}_{f,\rho_{B}}^{-1/2})(\rho_{AB})$ with respect to $(\Herm(A), \langle \cdot , \cdot \rangle)$, $(\Herm(B), \langle \cdot, \cdot \rangle)$.
\end{proposition}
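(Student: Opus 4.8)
The plan is to mirror the strategy used for $\mu_{f_k}$ in Lemmas \ref{lem:norm-bound}--\ref{lem:k-correlation-coeff-Schmidt-coeff-characterization}: perform a change of variables that turns the two variance constraints into Hilbert--Schmidt normalization constraints, identify the resulting problem as an operator-norm / second-singular-value problem for $\wt\rho_f$, and read the answer off its Schmidt decomposition. Since $f$ is standard (hence symmetry-inducing), Proposition \ref{prop:symmetry-inducing-equivalences} gives that $\mbf J_{f,\sigma}^{\pm 1/2}$ is Hermitian-preserving and Proposition \ref{prop:J-operator-self-adjoint} that it is self-adjoint for the HS inner product; by Proposition \ref{prop:non-neg-func-need-not-restrict-support} I may assume $\rho_A,\rho_B$ are full rank so these maps are invertible bijections of $\Herm$. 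Setting $\wt X=\mbf J_{f,\rho_A}^{1/2}(X)$, $\wt Y=\mbf J_{f,\rho_B}^{1/2}(Y)$, the constraint $\langle X,X\rangle_{f,\rho_A}=1$ becomes $\Vert \wt X\Vert_2=1$; the identity $\Tr[\rho_A X]=\Tr[\rho_A^{1/2}\mbf J_{f,\rho_A}^{1/2}(X)]$ from Proposition \ref{prop:mult-and-div-under-trace} turns the mean constraint into $\langle\rho_A^{1/2},\wt X\rangle=0$; and self-adjointness moves both $\mbf J^{-1/2}$ factors onto $\rho_{AB}$, rewriting the objective as $\lvert\Tr[\wt X\otimes\wt Y^*\,\wt\rho_f]\rvert$. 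Thus $\mu_f(A:B)_\rho$ equals the optimum of $\lvert\Tr[\wt X\otimes\wt Y^*\,\wt\rho_f]\rvert$ over Hermitian $\wt X,\wt Y$ with $\Vert \wt X\Vert_2=\Vert \wt Y\Vert_2=1$, orthogonal (in HS) to $\rho_A^{1/2}$ and $\rho_B^{1/2}$ respectively.

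Next I would pin down the top of the Schmidt spectrum, namely $\lambda_1(\wt\rho_f)=1$ with Schmidt vectors $\rho_A^{1/2},\rho_B^{1/2}$. For the lower bound, a direct eigenbasis computation gives $\mbf J_{f,\sigma}^{-1/2}(\sigma^{1/2})=\Pi_{\supp(\sigma)}$ (using $P_f(\lambda_i,\lambda_i)=\lambda_i$ by normalization), so feeding $\wt X=\rho_A^{1/2}$, $\wt Y=\rho_B^{1/2}$ (which are HS-normalized) into the objective yields $\Tr[\rho_{AB}]=1$; hence $\Vert \Lambda_{\wt\rho_f}\Vert_{2\to2}\ge1$ and, by the equality case of Cauchy--Schwarz, $(\rho_A^{1/2},\rho_B^{1/2})$ is a top singular pair. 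For the matching upper bound I would avoid re-running the interpolation argument and instead reduce to the already-established geometric-mean case: the proof of Item 2 of Theorem \ref{thm:extreme-values-summary} never uses the zero-mean constraint, so it in fact proves the unconstrained inequality $\lvert\Tr[X\otimes Y^*\rho_{AB}]\rvert\le\Vert X\Vert_{f_{GM},\rho_A}\Vert Y\Vert_{f_{GM},\rho_B}$. Since $f\ge f_{GM}$, Proposition \ref{prop:ordering-of-NC-mult} gives $\mbf J_{f_{GM},\sigma}\le\mbf J_{f,\sigma}$ and hence $\Vert\,\cdot\,\Vert_{f_{GM},\sigma}\le\Vert\,\cdot\,\Vert_{f,\sigma}$, so the same inequality holds with $f$ in place of $f_{GM}$; this is exactly $\lambda_1(\wt\rho_f)\le1$. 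Therefore $\lambda_1=1$, and in particular $\mu_f(A:B)_\rho\le1$, consistent with (and an alternative route to) Theorem \ref{thm:extreme-values-summary}. Note the hypothesis $f\le f_{AM}$ is automatic for $f\in\cM_{\text{St}}$ by \eqref{eq:ordering-of-standard-monotones}, so the only active assumption is $f\ge f_{GM}$.

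Finally, with $\lambda_1=1$ and $M_1=\rho_A^{1/2}$, $N_1=\rho_B^{1/2}$ fixed as the first Schmidt pair, I would close exactly as in \eqref{eq:Schmidt-coeff-bound}: the orthogonality constraints kill the $i=1$ term of $\Tr[\wt X\otimes\wt Y^*\wt\rho_f]=\sum_i\lambda_i\langle\wt X^*,M_i\rangle\langle\wt Y^*,N_i\rangle$, and Cauchy--Schwarz together with $\lambda_i\le\lambda_2$ for $i\ge2$ and $\Vert \wt X\Vert_2=\Vert \wt Y\Vert_2=1$ bounds the objective by $\lambda_2$, attained at $\wt X=M_2$, $\wt Y=N_2$. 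Hence $\mu_f(A:B)_\rho=\lambda_2(\wt\rho_f)$. I expect the main obstacle to be the upper bound $\lambda_1\le1$: the key realization is that it need not be reproved from scratch for each $f$, but follows from the monotonicity of $\mbf J_{f,\sigma}$ in $f$ once one observes that the $f_{GM}$ bound holds \emph{without} the mean constraint. A secondary point requiring care is the possible degeneracy of $\lambda_1$, which is harmless: one may always choose the Schmidt basis so that $\rho_A^{1/2},\rho_B^{1/2}$ form the first pair, and if $\lambda_2=1$ as well then the identity $\mu_f=\lambda_2=1$ still holds.
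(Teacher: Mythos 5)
Your proposal is correct, and its overall architecture matches the paper's: the same change of variables $\wt{X}=\mbf{J}_{f,\rho_{A}}^{1/2}(X)$, $\wt{Y}=\mbf{J}_{f,\rho_{B}}^{1/2}(Y)$ (justified by Propositions \ref{prop:mult-and-div-under-trace}, \ref{prop:J-operator-self-adjoint}, and \ref{prop:symmetry-inducing-equivalences}) reducing to the optimization \eqref{eq:mu_f-st-opt}, the identification of $(\rho_{A}^{1/2},\rho_{B}^{1/2})$ as a Schmidt pair with coefficient one, and the Cauchy--Schwarz argument of \eqref{eq:Schmidt-coeff-bound} extracting $\lambda_{2}$. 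Where you genuinely diverge is the key upper bound $\lambda_{1}(\wt{\rho}_{f})\leq 1$. The paper computes $\Lambda_{\wt{\rho}_{f}}(\rho_{A}^{1/2})=\rho_{B}^{1/2}$ via the Choi calculus and then rules out any Schmidt coefficient above one by contradiction through the chain $\mu_{f}\leq\mu_{GM}\leq\mu^{\text{Lin}}_{f_{1/2}}\leq 1$, with the final inequality supplied by the complex-interpolation result (Lemma \ref{lem:map-norm-is-1-for-k-correlation-coeff}). You instead observe --- correctly --- that the probabilistic proof of Item 2 of Theorem \ref{thm:extreme-values-summary} never uses the zero-mean constraints, so it in fact proves the unconstrained bilinear bound $\vert\Tr[X\otimes Y^{\ast}\rho_{AB}]\vert\leq\Vert X\Vert_{f_{GM},\rho_{A}}\Vert Y\Vert_{f_{GM},\rho_{B}}$, which upgrades to the $f$-norms via Proposition \ref{prop:ordering-of-NC-mult} since $f\geq f_{GM}$; after the change of variables this is exactly $\lambda_{1}\leq 1$. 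This buys an interpolation-free and arguably more transparent route for that step (the paper's own contradiction argument also silently needs the Schmidt vectors attached to a hypothetical $\lambda_{i'}>1$ to be orthogonal to $\rho_{A}^{1/2},\rho_{B}^{1/2}$, which your direct bound sidesteps entirely). You are also more explicit on two points the paper's appendix proof leaves implicit: the equality case of Cauchy--Schwarz confirming $(\rho_{A}^{1/2},\rho_{B}^{1/2})$ is a genuine singular pair (rather than the paper's direct computation of $\Lambda_{\wt{\rho}_{f}}(\rho_{A}^{1/2})$), and the degenerate case $\lambda_{2}=1$, where the unitary freedom within the degenerate block lets you fix $(\rho_{A}^{1/2},\rho_{B}^{1/2})$ as the first pair so $(M_{2},N_{2})$ stays feasible. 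Your side remarks are likewise accurate: $f\leq f_{AM}$ is automatic for $f\in\cM_{\text{St}}$ by \eqref{eq:ordering-of-standard-monotones}, and $\mbf{J}_{f,\sigma}^{-1/2}(\sigma^{1/2})=\Pi_{\supp(\sigma)}$ by normalization of $f$.
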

However, since the $\mbf{J}_{f,\sigma}^{-1/2}$ operator is not multiplicative for $f \in \cM_{\text{St}}\setminus\{f_{GM}\}$ by Proposition \ref{prop:multiplicativity-of-J}, it is unclear what use this identification has. For completeness, we provide the proof of Proposition \ref{prop:standard-operator-monotone-maximal-corr} in Appendix \ref{app-subsec:max-corr-coeff-lemmata}.
\end{remark}

\section{Quantum \texorpdfstring{$\chi^{2}$}{}-Divergences}\label{sec:quantum-chi-squared}
We now turn to our second application of our $L^{2}_{f}(\sigma)$ spaces, input-dependent contraction coefficients of quantum $\chi^{2}$-divergences. Just as for quantum maximal correlation coefficients, our interest is lifting the classical theory to the quantum setting. This will be partitioned into three goals. The first goal will be understanding the correspondence between quantum maximal correlation coefficients and $\chi^{2}$ contraction coefficients. This is entirely resolved by Theorem \ref{thm:correspondence-between-contraction-coeffs-and-max-corr-coeffs} and its corollaries. This theorem follows from identifying the quantum $\chi^{2}$-divergences as the variance of non-commutative likelihood ratios (Section \ref{subsec:chi-squared-variance}), using this correspondence to convert $\chi^{2}$-contraction coefficients to operator norms (Lemma \ref{lem:contraction-as-map-norm}), and then introducing the quantum generalizations of couplings (Definition \ref{def:quantum-coupling}). The second goal is to establish new conditions for the saturation of the data processing inequality of quantum $\chi^{2}$-contraction coefficients, which is done in Section \ref{sec:chi-square-extreme-values-and-recoverability}. These results rely upon the eigenvalue characterization of input-dependent $\chi^{2}$-contraction coefficients, which is known (see e.g. \cite{Cao-2019a}), but we have to slightly improve upon the analysis, and thus this is briefly presented in Section \ref{sec:refined-eigenvalues}. Finally, our third goal is to clarify lifting the theory of convergence of time-homogeneous Markov chains \cite{GZB-preprint-2024} to the quantum setting, which is done in Section \ref{sec:time-homogeneous-Markov-chains} and builds on recent results of \cite{Beigi-2025a}.

\subsection{\texorpdfstring{$\chi^{2}$}{}-Divergences, Non-commutative Variance, and Data Processing}\label{subsec:chi-squared-variance}
In this section we introduce the quantum $\chi^{2}_{f}$ divergences of \cite{Temme-2010a} as well as some relevant quantities for investigating the data processing inequality. Beyond providing background, the main contribution is to show they may be expressed as non-commutative variances, which will be useful to resolve their relation to quantum maximal correlation coefficients.

We begin by recalling the classical $\chi^{2}$-divergence: $\chi^{2}(p \Vert q) = \sum_{x} q(x)^{-1}(p(x)-q(x))^{2} = \Tr[q^{-1}(p-q)^{2}]$. It follows that we could define a quantum $\chi^{2}$-divergence as $\Tr[\sigma^{-1}(\rho-\sigma)^{2}]$, but this makes a specific choice of division by $\sigma$. In \cite{Temme-2010a}, the authors observed that one can replace $\sigma^{-1}$ with any non-commutative division by $\sigma$ such that monotonicity holds and thus introduced the quantum $\chi^{2}$-divergences indexed by operator monotone functions $f$:
\begin{align}\label{eq:chi-squared-f-def}
    \chi^{2}_{f}(\rho \Vert \sigma) \coloneq \langle \rho-\sigma, \mbf{J}_{f,\sigma}^{-1}(\rho-\sigma) \rangle \ .
\end{align}
We note that given Fact \ref{fact:monotone-metric}, as has been observed previously, this means each $\chi^{2}_{f}$ really is a monotone metric evaluated at $\rho-\sigma$ : $\chi^{2}_{f}(\rho \Vert \sigma) = \gamma_{f,\sigma}(\rho - \sigma,\rho-\sigma)$.

As highlighted in the introduction, an alternative way of looking at the $\chi^{2}$-divergence is as the variance of the likelihood ratio, i.e. defining $r(x) \coloneq \frac{p(x)}{q(x)}$, $\chi^{2}(p \Vert q) = \sum_{x} q(x)(\frac{p(x)}{q(x)} -1) = \mbb{E}_{q}[r - \mbb{E}_{q}[r]] = \Var_{q}[r]$. This has been useful classically \cite{Raginsky-2016a}. The following proposition shows the quantum $\chi^{2}_{f}$-divergence maintains this property in the non-commutative setting. To the best of our knowledge, unlike the other two statements in the following proposition, this was not known previously.
\begin{proposition}\label{prop:chi-squared-properties} For operator monotone $f$,
    \begin{enumerate}   
        \item (Data-Processing) Let $\rho \ll \sigma$ and $\cE$ be the adjoint of a unital Schwarz map, then
    \begin{align}
        \chi^{2}_{f}(\cE(\rho) \Vert \cE(\sigma)) \leq \chi^{2}_{f}(\rho \Vert \sigma) \ .
    \end{align}
        \item (Variance Relation) If $f \in \cM_{\text{St}}$ and $\rho \ll \sigma$,
        $\chi^{2}_{f,\sigma}(\rho \Vert \sigma) = \Var_{f,\sigma}[\mbf{J}_{f,\sigma}^{-1}(\rho)]$.
        \item (Ordering) If $0 \leq f_{1} \leq f_{2}$, then $\chi^{2}_{f_{2}}(\rho \Vert \sigma) \leq \chi^{2}_{f_{1}}(\rho \Vert \sigma)$.
    \end{enumerate}
\end{proposition}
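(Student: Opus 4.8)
The plan is to handle the three items in increasing order of difficulty, leaning throughout on the operator-theoretic facts already established for $\mbf{J}_{f,\sigma}$ and its inverse. Item 3 (Ordering) is immediate: by Proposition \ref{prop:ordering-of-NC-mult}, $0 \le f_{1} \le f_{2}$ yields the operator inequality $0 \le \mbf{J}_{f_{2},\sigma}^{-1} \le \mbf{J}_{f_{1},\sigma}^{-1}$ on $(\Lin(A),\langle \cdot,\cdot\rangle)$. Evaluating both sides in the fixed Hermitian ``direction'' $\rho-\sigma$ and invoking \eqref{eq:chi-squared-f-def} gives $\chi^{2}_{f_{2}}(\rho\Vert\sigma) = \langle \rho-\sigma, \mbf{J}_{f_{2},\sigma}^{-1}(\rho-\sigma)\rangle \le \langle \rho-\sigma, \mbf{J}_{f_{1},\sigma}^{-1}(\rho-\sigma)\rangle = \chi^{2}_{f_{1}}(\rho\Vert\sigma)$, with no support subtleties since $\rho\ll\sigma$ keeps $\rho-\sigma$ in the range where the pseudoinverses agree.

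For Item 1 (Data processing), I would rewrite the processed divergence using linearity of $\cE$ and the defining adjoint relation: $\chi^{2}_{f}(\cE(\rho)\Vert\cE(\sigma)) = \langle \cE(\rho-\sigma), \mbf{J}_{f,\cE(\sigma)}^{-1}\cE(\rho-\sigma)\rangle = \langle \rho-\sigma, (\cE^{\ast}\circ \mbf{J}_{f,\cE(\sigma)}^{-1}\circ \cE)(\rho-\sigma)\rangle$. The operator-form DPI of Proposition \ref{prop:DPI-for-J-op}, namely $\cE^{\ast}\circ \mbf{J}_{f,\cE(\sigma)}^{-1}\circ \cE \le \mbf{J}_{f,\sigma}^{-1}$, then bounds this by $\langle \rho-\sigma, \mbf{J}_{f,\sigma}^{-1}(\rho-\sigma)\rangle = \chi^{2}_{f}(\rho\Vert\sigma)$. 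The only care needed is that Proposition \ref{prop:DPI-for-J-op} is stated for $\sigma\in\Density_{+}$; when $\sigma$ is rank-deficient I would restrict to $\supp(\sigma)$, using $\rho\ll\sigma$ (and that a positive trace-preserving $\cE$ preserves the support relation, $\cE(\rho)\ll\cE(\sigma)$) so that all pseudoinverses act as genuine inverses on the relevant subspaces and the restricted map remains the adjoint of a unital Schwarz map.

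The substantive, new statement is Item 2 (Variance relation). Here I would introduce the non-commutative likelihood ratio $X \coloneq \mbf{J}_{f,\sigma}^{-1}(\rho)$. Since $f\in\cM_{\text{St}}$ is symmetry-inducing, Proposition \ref{prop:symmetry-inducing-equivalences} guarantees $\mbf{J}_{f,\sigma}^{-1}$ is Hermitian-preserving, so $X\in\Herm$ and Item 1 of Proposition \ref{prop:variance-properties} applies, giving $\Var_{f,\sigma}(X) = \langle X,X\rangle_{f,\sigma} - (\langle \mbb{1}, X\rangle_{f,\sigma})^{2}$. Two trace identities then finish the job: first, $\langle \mbb{1}, X\rangle_{f,\sigma} = \Tr[\sigma X] = \Tr[\sigma\,\mbf{J}_{f,\sigma}^{-1}(\rho)] = \Tr[\Pi_{\supp(\sigma)}\rho\Pi_{\supp(\sigma)}] = 1$ by Proposition \ref{prop:mult-and-div-under-trace} and $\rho\ll\sigma$; second, $\langle X,X\rangle_{f,\sigma} = \langle X,\mbf{J}_{f,\sigma}(X)\rangle = \langle \mbf{J}_{f,\sigma}^{-1}(\rho),\rho\rangle = \Tr[\rho\,\mbf{J}_{f,\sigma}^{-1}(\rho)]$, using $\mbf{J}_{f,\sigma}\mbf{J}_{f,\sigma}^{-1}(\rho)=\rho$ on $\supp(\sigma)$. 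Hence $\Var_{f,\sigma}(X) = \Tr[\rho\,\mbf{J}_{f,\sigma}^{-1}(\rho)]-1$. In parallel I expand $\chi^{2}_{f}(\rho\Vert\sigma)$ into four terms, using $\mbf{J}_{f,\sigma}(\mbb{1})=\sigma$ (so $\mbf{J}_{f,\sigma}^{-1}(\sigma)=\Pi_{\supp(\sigma)}$), normalization of $f$, and self-adjointness of $\mbf{J}_{f,\sigma}^{-1}$ (Proposition \ref{prop:J-operator-self-adjoint}) to collapse the cross and normalization terms to $-1$, leaving $\chi^{2}_{f}(\rho\Vert\sigma) = \Tr[\rho\,\mbf{J}_{f,\sigma}^{-1}(\rho)]-1$, matching the variance. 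Equivalently and more compactly, $X-\mbb{1} = \mbf{J}_{f,\sigma}^{-1}(\rho-\sigma)$ has vanishing expectation under $\langle \cdot,\cdot\rangle_{f,\sigma}$, so $\Var_{f,\sigma}(X) = \langle X-\mbb{1}, X-\mbb{1}\rangle_{f,\sigma} = \langle \mbf{J}_{f,\sigma}^{-1}(\rho-\sigma),\rho-\sigma\rangle = \chi^{2}_{f}(\rho\Vert\sigma)$.

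I expect the main obstacle to be the bookkeeping around supports when $\sigma$ is rank-deficient: ensuring that $\mbf{J}_{f,\sigma}^{-1}$ read as a pseudoinverse acts as a true inverse on $\supp(\sigma)$, that $\mbf{J}_{f,\sigma}^{-1}(\rho)$ carries the correct trace normalization, and that Proposition \ref{prop:DPI-for-J-op} transfers cleanly from the full-rank to the restricted setting. These are precisely the places where the hypothesis $\rho\ll\sigma$ (rather than $\rho,\sigma\in\Density_{+}$) does real work and where Proposition \ref{prop:mult-and-div-under-trace} must be invoked with care.
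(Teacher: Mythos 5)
Your proposal is correct and follows essentially the same route as the paper's proof: Item 3 via Proposition \ref{prop:ordering-of-NC-mult}, Item 1 via the operator-form DPI of Proposition \ref{prop:DPI-for-J-op} after restricting to $\supp(\sigma)$, and Item 2 by the same two direct calculations using Propositions \ref{prop:mult-and-div-under-trace}, \ref{prop:J-operator-self-adjoint}, \ref{prop:symmetry-inducing-equivalences}, and Item 1 of Proposition \ref{prop:variance-properties}, merely meeting at the normal form $\Tr[\rho\,\mbf{J}_{f,\sigma}^{-1}(\rho)]-1$ instead of the paper's $\langle \mbf{J}_{f,\sigma}^{-1}(\rho),\rho-\sigma\rangle$. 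The only nitpick is that $\mbf{J}_{f,\sigma}^{-1}(\sigma)=\Pi_{\supp(\sigma)}$ rather than $\mbb{1}$ when $\sigma$ is rank-deficient, so your identity $X-\mbb{1}=\mbf{J}_{f,\sigma}^{-1}(\rho-\sigma)$ holds after the support restriction you already flag, which $\rho\ll\sigma$ licenses.
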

\begin{proof}
    Define $A \coloneq \rho - \sigma$. Then $\chi^{2}_{f}(\rho \Vert \sigma) = \langle A , \mbf{J}^{-1}_{f,\sigma} A \rangle$. For Item 1, as $\rho \ll \sigma$ we may restrict to the support of $\sigma$ without changing the calculation so that in effect $\sigma \in \Density_+$.  Then
    \begin{align}
        \chi^{2}_{f}(\cE(\rho) \Vert \cE(\sigma)) = \langle \cE(A) , \mbf{J}_{f,\cE(\sigma)}^{-1}(\cE(A)) \rangle =& \gamma_{\cE(\sigma)}(\cE(A),\cE(A)) \\
        \leq& \gamma_{f,\sigma}(A,A) = \langle \rho - \sigma , \mbf{J}_{f,\sigma}^{-1}(\rho - \sigma)\rangle = \chi^{2}_{f}(\rho \Vert \sigma) \ ,
    \end{align}
    where we used Proposition \ref{prop:DPI-for-J-op}. We prove Item 2 via two direct calculations. The first calculation is
    \begin{align}
        \chi^{2}_{f,\sigma}(\rho \Vert \sigma) = \langle \rho - \sigma, \mbf{J}_{f,\sigma}^{-1}(\rho - \sigma) \rangle = \langle \rho - \sigma, \mbf{J}^{-1}_{f,\sigma}(\rho) - \mbb{1} \rangle = \langle \mbf{J}_{f,\sigma}^{-1}(\rho), \rho - \sigma \rangle \ , 
    \end{align}
    where the second equality used that $\mbf{J}_{f,\sigma}^{-1}$ is linear and how it acts on operators that commute with $\sigma$, the third equality is cancelling terms and that $\mbf{J}_{f,\sigma}^{-1}$ is Hermitian-preserving, which we know holds if and only if $f$ is symmetry-inducing (Proposition \ref{prop:symmetry-inducing-equivalences}). The second calculation is
    \begin{align}
        \Var_{f,\sigma}[\mbf{J}_{f,\sigma}^{-1}(\rho)] =& \langle \mbf{J}_{f,\sigma}^{-1}(\rho), \mbf{J}_{f,\sigma}^{-1}(\rho) \rangle_{\mbf{J}_{f,\sigma}} - \langle \mbb{1}, \mbf{J}_{f,\sigma}^{-1}(\rho) \rangle_{\mbf{J}_{f,\sigma}}\langle \mbf{J}_{f,\sigma}^{-1}(\rho), \mbb{1} \rangle_{\mbf{J}_{f,\sigma}} \\
        =& \langle \mbf{J}_{f,\sigma}^{-1}(\rho), \rho \rangle - \langle \mbb{1}, \rho \rangle\langle \mbf{J}_{f,\sigma}^{-1}(\rho), \sigma \rangle \\
        =&  \langle \mbf{J}_{f,\sigma}^{-1}(\rho), \rho - \sigma \rangle \ ,
    \end{align}
    where the first equality follows from \eqref{eq:variance-property-step-1}, the second follows from the definition of the inversion of a map, the last uses linearity and Hermitian-preserving properties of $\mbf{J}_{f,\sigma}^{-1}$ and the linearity of trace. Combining these equalities completes the proof of Item 2. Note \eqref{eq:variance-property-step-1} relies on $f$ being normalized and we appealed to symmetry-inducing, which is why this selects for the family of standard operator monotones. Item 3 follows from Proposition \ref{prop:ordering-of-NC-mult}.
\end{proof}
Note that by \eqref{eq:ordering-of-means} and \eqref{eq:ordering-of-standard-monotones}, Proposition \ref{prop:ordering-of-NC-mult}, and Item 3 of the above proposition, we have the following known ordering:
\begin{align}\label{eq:ordering-of-chi-squareds}
    \chi^{2}_{AM} \leq \chi^{2}_{LM} \leq \chi^{2}_{GM} \leq \chi^{2}_{HM} \quad \text{and} \quad \chi^{2}_{AM} \leq \chi^{2}_{f} \leq \chi^{2}_{HM} \quad \forall f \in \cM_{St} \ .
\end{align}
The arithmetic, logarithmic, and geometric mean $\chi^{2}_{f}$ divergences will have operational relevance established in Section \ref{sec:time-homogeneous-Markov-chains}. $\chi^{2}_{GM}$ will be established to be particularly central to the theory of data processing throughout this section.

To end this subsection, we recall the definitions of the input-dependent and input-independent contraction coefficients for any $\chi_{f}^{2}$-divergence:
\begin{align}
    \eta_{\chi^{2}_{f}}(\cE,\sigma) =& \sup_{\rho \in \Density \, : \rho \neq \sigma } \frac{\chi^{2}_{f}(\cE(\rho) \Vert \cE(\sigma))}{\chi^{2}_{f}(\rho \Vert \sigma)} \label{eq:input-dependent-contraction-coeff} \\
    \eta_{\chi^{2}_{f}}(\cE) =& \sup_{\sigma \in \Density} \eta_{\chi^{2}_{f}}(\cE,\sigma) \ .
\end{align}
These quantities study if the data processing inequality of $\chi^{2}_{f}$ is strict, because by definition $\chi^{2}_{f}(\cE(\rho) \Vert \cE(\sigma)) \leq \eta_{\chi^{2}_{f}}(\cE,\sigma)\chi^{2}_{f}(\rho \Vert \sigma)$. It is these quantities which the following subsections predominantly study. Note that by Item 2 of Proposition \ref{prop:chi-squared-properties}, for $f \in \cM_{\text{St}}$, these quantities are `really' measuring how the variance of the non-commutative likelihood ratio decreases under the action of a quantum channel.

\subsection{Correspondence to Quantum Maximal Correlation Coefficients}\label{sec:correspondence-to-q-max-corr-coeffs}
In this section, we establish the quantum generalization of the correspondence between the maximal correlation coefficient and the $\chi^{2}$ input-dependent contraction coefficient. Namely, we establish the following, which generalizes the classical case stated in \eqref{eq:correlation-to-contraction}.
\begin{tcolorbox}[width=\linewidth, sharp corners=all, colback=white!95!black, boxrule=0pt,frame hidden]
\begin{theorem}\label{thm:correspondence-between-contraction-coeffs-and-max-corr-coeffs}
    Let $f \in \cM_{\text{St}}$. Let $\sigma_{A} \in \Density_{+}(A)$ and $\cE_{A \to B}$ be a positive, trace-preserving map. Let $H_{AB} \coloneq \Omega_{\cE \circ \mbf{J}_{f,\sigma}}$ where the Choi operator is defined in the eigenbasis of $\sigma_{A}$. $H_{AB}$ is a relaxed quantum coupling (Definition \ref{def:quantum-coupling}), i.e. satisfies $H_{A} = \sigma_{A}$ and $H_{B} = \cE(\sigma)$. Then,
    \begin{align}
        \sqrt{\eta_{\chi^{2}_{f}}}(\cE,\sigma) = \mu_{f}(A:B)_{H} \ ,
    \end{align}
    where the RHS is the maximal correlation coefficient extended to relaxed quantum couplings (Definition \ref{def:q-max-corr-coeff-on-coupling}).
\end{theorem}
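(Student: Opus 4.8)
The plan is to carry out the quantum analogue of Raginsky's argument sketched in the introduction: reinterpret the input-dependent contraction coefficient as a squared operator norm between mean-zero subspaces of two of the $L^2_f(\sigma)$ spaces, and then recognize that optimization as exactly $\mu_f$ evaluated on the coupling $H_{AB}$. First I would dispatch the marginals. Writing $H_{AB} = \sum_{i,j}\ket{i}\bra{j}\otimes\cE(\mbf{J}_{f,\sigma}(\ket{i}\bra{j}))$ in the eigenbasis of $\sigma$, the $A$-marginal uses trace-preservation of $\cE$ together with $\Tr[\mbf{J}_{f,\sigma}(X)]=\Tr[\sigma X]$ (Proposition \ref{prop:mult-and-div-under-trace}) to give $H_A=\sigma$, while the $B$-marginal uses $\mbf{J}_{f,\sigma}(\mbb{1})=\sigma$ (immediate from $f(1)=1$ and the perspective-function form \eqref{eq:J-operator-func-expansion}) to give $H_B=\cE(\sigma)$, confirming $H_{AB}$ is a relaxed coupling.

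Next I would invoke the variance relation (Item 2 of Proposition \ref{prop:chi-squared-properties}), valid since $f\in\cM_{\text{St}}$ and $\sigma\in\Density_+$ forces $\rho\ll\sigma$. Setting $Y\coloneq\mbf{J}_{f,\sigma}^{-1}(\rho)$, so that $\mbb{E}_{f,\sigma}[Y]=\Tr[\rho]=1$, the divergence becomes $\chi^2_f(\rho\Vert\sigma)=\Var_{f,\sigma}[Y]$, and the output divergence equals $\Var_{f,\cE(\sigma)}[\Phi(Y)]$ for $\Phi\coloneq\mbf{J}_{f,\cE(\sigma)}^{-1}\circ\cE\circ\mbf{J}_{f,\sigma}$. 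Since $\Phi(\mbb{1})=\mbb{1}$ and $\Phi$ preserves the mean (both following from the marginal computation), and since variance is invariant under adding multiples of $\mbb{1}$, the ratio depends only on the mean-zero component of $Y$. A short perturbation argument, taking $\rho=\sigma+t\mbf{J}_{f,\sigma}(Y_0)$ for mean-zero Hermitian $Y_0$ and small $t$ (a legitimate density matrix because $\sigma$ is full rank and $\mbf{J}_{f,\sigma}$ is Hermitian-preserving), shows the supremum over density matrices $\rho$ equals the supremum over the whole mean-zero Hermitian subspace. Hence $\eta_{\chi^2_f}(\cE,\sigma)=\Vert\Phi\Vert^2_{\mathscr{X}\to\mathscr{Y}}$, where $\mathscr{X}$, $\mathscr{Y}$ are the mean-zero Hermitian subspaces of $L^2_f(\sigma)$, $L^2_f(\cE(\sigma))$, which are real Hilbert spaces by Proposition \ref{prop:orthogonal-Herm-is-Hilbert}.

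Then I would apply Lemma \ref{lem:map-norms-as-optimizations} to write $\Vert\Phi\Vert_{\mathscr{X}\to\mathscr{Y}}=\max|\langle Y,\Phi(X)\rangle_{f,\cE(\sigma)}|$ over unit-norm mean-zero Hermitian $X,Y$, and separately expand the coupling objective: a direct computation gives $\Tr[X\otimes Y^\ast H_{AB}]=\langle Y,\Phi(X^T)\rangle_{f,\cE(\sigma)}$, with the transpose taken in the eigenbasis of $\sigma$. Because $f$ is symmetry-inducing, Proposition \ref{prop:symmetry-inducing-equivalences} together with the Hadamard form \eqref{eq:MC-form} shows $X\mapsto X^T$ is a norm-preserving, mean-preserving, Hermitian-preserving bijection of $\mathscr{X}$, so re-parameterizing $X\mapsto X^T$ identifies the $\mu_f(A:B)_H$ optimization with the operator-norm optimization. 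This yields $\mu_f(A:B)_H=\Vert\Phi\Vert_{\mathscr{X}\to\mathscr{Y}}=\sqrt{\eta_{\chi^2_f}}(\cE,\sigma)$.

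The main obstacle I anticipate is the reduction from density matrices to the full mean-zero subspace together with the support bookkeeping when $\cE(\sigma)$ fails to be full rank: the output variance relation requires $\cE(\rho)\ll\cE(\sigma)$, and one must check that the pseudoinverse conventions for $\mbf{J}^{-1}_{f,\cE(\sigma)}$ make both the contraction coefficient and $\mu_f(A:B)_H$ restrict consistently to $\supp(\cE(\sigma))$ (appealing to Proposition \ref{prop:non-neg-func-need-not-restrict-support}). The remaining care is purely the transpose bookkeeping, which hinges crucially on $f\in\cM_{\text{St}}$ being symmetry-inducing; everything else is routine manipulation of the $\mbf{J}_{f,\sigma}$ calculus.
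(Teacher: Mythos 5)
Your proposal is correct and takes essentially the same route as the paper: the paper's Lemma \ref{lem:contraction-as-map-norm} performs exactly your variance-relation/perturbation/operator-norm chain with the Heisenberg time-reversal map $\cR_{f,\cE,\sigma}$ (your $\Phi$), including the same eigenbasis-of-$\sigma$ transpose trick, and then combines it with Proposition \ref{prop:f-couplings} (your marginal computation) and Definition \ref{def:q-max-corr-coeff-on-coupling}, handling the support issue you flag by restricting a priori to $\linspan(\supp(\cE(\sigma)))$. The only negligible discrepancy is that transpose-invariance of the norm on Hermitian operators holds from \eqref{eq:MC-form} alone (since $\lvert X_{ji}\rvert = \lvert X_{ij}\rvert$ for Hermitian $X$), so symmetry-inducing is needed for the variance relation and Hermitian-preservation rather than for that step.
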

\end{tcolorbox}
By Proposition \ref{prop:nec-cond-for-CP-ness}, we know $H_{AB}$ cannot generally be positive semidefinite. However, as explained in Remark \ref{rem:QSOT-from-NC-Prob}, they could always be identified as ``quantum states over time" in the sense of the quantum foundations community as expounded in  \cite{Leifer_2013,Fullwood_2022,parzygnat2023time}. Nonetheless, whenever $H_{AB}$ is positive semidefinite, $H_{AB}$ is physical in the traditional sense of being a quantum state. An example of when the coupling is guaranteed to be a quantum state is when $f= f_{HM}$ as $\mbf{J}_{f_{HM},\sigma}(X) = \int_{0}^{\infty} \exp(-t \sigma^{-1}/2)X\exp(-t \sigma^{-1}/2)$ \cite{Petz-2011a}, so it is the integral of the action of completely positive maps and thus $H_{AB}$ is always positive semidefinite. The more standard case of when $\mbf{J}_{f,\sigma}$ is when $f=f_{GM}$, from which we obtain the following, which generalizes \cite{Cao-2019a} by relaxing $\cE_{A \to B}(\sigma)$ needing to be full rank, relaxing that $\cE_{A \to B}$ need be a quantum channel, and establishing that this correspondence holds on the full state space.

\begin{tcolorbox}[width=\linewidth, sharp corners=all, colback=white!95!black, boxrule=0pt,frame hidden]
\begin{corollary}\label{cor:contraction-for-sandwiched-case}
    Let $\sigma \in \Density_{+}(A)$ and $\cE_{A \to B}$ be a positive, trace-preserving map. Then
    \begin{align}\label{eq:GM-contraction-as-maximal-corr}
        \sqrt{\eta_{\chi^{2}_{GM}}(\cE,\sigma)} = \mu_{GM}(A:B)_{(\text{id}_{A} \otimes \cE)(\wt{\psi}_{\sigma})} \ ,
    \end{align}
    where $\wt{\psi}_{\sigma}$ is an arbitrary purification of $\sigma$. Furthermore, for any quantum state $\rho_{AB}$, there exists a quantum channel $\cE_{A \to B}$ such that
    \begin{align}
        \mu_{GM}(A:B)_{\rho_{AB}} = \sqrt{\eta_{\chi^{2}_{GM}}(\cE,\rho_{A})} \ , 
    \end{align}
    i.e. this correspondence always exists.
\end{corollary}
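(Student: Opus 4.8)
The plan is to obtain both assertions as the $f=f_{GM}$ specialization of Theorem~\ref{thm:correspondence-between-contraction-coeffs-and-max-corr-coeffs}, exploiting that the geometric mean is the one standard operator monotone for which the relaxed coupling is an honest quantum state. First I would invoke the theorem with $f=f_{GM}$ and $\sigma\in\Density_{+}(A)$, so that the coupling is $H_{AB}=\Omega_{\cE\circ\mbf{J}_{f_{GM},\sigma}}$. Factorizing the Choi map as $\Omega_{\cE\circ\mbf{J}_{f_{GM},\sigma}}=(\id_{A}\otimes\cE)(\Omega_{\mbf{J}_{f_{GM},\sigma}})$ and applying the identity \eqref{eq:choi-of-GM-J-operator}, namely $\Omega_{\mbf{J}_{f_{GM},\sigma}}=\psi^{\sigma}_{AA'}$, identifies $H_{AB}=(\id_{A}\otimes\cE)(\psi^{\sigma})$ with the channel acting on the canonical purification of $\sigma$. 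The theorem then reads $\sqrt{\eta_{\chi^{2}_{GM}}(\cE,\sigma)}=\mu_{GM}(A:B)_{(\id_{A}\otimes\cE)(\psi^{\sigma})}$. I would also remark, via Proposition~\ref{prop:nec-cond-for-CP-ness}, that since $f_{GM}=\sqrt{x}$ the operator $\mbf{J}_{f_{GM},\sigma}$ is completely positive, so $H_{AB}\ge 0$ really is a genuine coupling; this is exactly why $f_{GM}$ (and not a general $f$) yields the clean purification statement.

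To upgrade the canonical purification $\psi^{\sigma}$ to an \emph{arbitrary} purification $\wt{\psi}_{\sigma}$, I would invoke isometric invariance of $\mu_{GM}$ (Item~2 of Theorem~\ref{thm:properties-of-maximal-corr-coeffs}). Any two purifications of $\sigma_{A}$ are related by an isometry $W$ on the purifying leg, so $\wt{\psi}_{\sigma}=(\id_{A}\otimes\cW)(\psi^{\sigma})$; applying the channel to the purifying system of $\wt{\psi}_{\sigma}$ (precomposed with $\cW^{\ast}$, which returns $W^{\ast}W=\mbb{1}$) reproduces $H_{AB}$, and the residual local isometry can be absorbed without changing $\mu_{GM}$. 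This yields \eqref{eq:GM-contraction-as-maximal-corr} for any purification.

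For the ``furthermore'' part, given an arbitrary state $\rho_{AB}$ I would produce the channel from Proposition~\ref{prop:every-joint-state-is-a-degraded-purif}: there is a (unique) quantum channel $\cE_{A'\to B}$ with $(\id_{A}\otimes\cE)(\psi^{\rho})=\rho_{AB}$, which is precisely the identification recorded in \eqref{eq:identification-of-rhoAB-via-GM}. Taking $\sigma=\rho_{A}$ in the first part then gives $\sqrt{\eta_{\chi^{2}_{GM}}(\cE,\rho_{A})}=\mu_{GM}(A:B)_{(\id_{A}\otimes\cE)(\psi^{\rho})}=\mu_{GM}(A:B)_{\rho_{AB}}$, establishing the desired correspondence for the whole joint state space.

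The main obstacle I expect is not conceptual but bookkeeping around rank deficiency: Theorem~\ref{thm:correspondence-between-contraction-coeffs-and-max-corr-coeffs} requires $\sigma\in\Density_{+}(A)$, whereas for an arbitrary $\rho_{AB}$ the marginal $\rho_{A}$ may be singular. I would handle this by restricting $A$ to $\supp(\rho_{A})$, where $\rho_{A}$ is full rank, applying the theorem there, and then checking that neither side changes under this restriction: for $\mu_{GM}$ this is the support reduction of Proposition~\ref{prop:non-neg-func-need-not-restrict-support}, and for $\eta_{\chi^{2}_{GM}}(\cE,\rho_{A})$ it follows because $\chi^{2}_{GM}(\rho\Vert\rho_{A})$ is finite only when $\rho\ll\rho_{A}$, so the supremum in \eqref{eq:input-dependent-contraction-coeff} is effectively over states supported on $\supp(\rho_{A})$, and the channel may be extended off the support arbitrarily without affecting the ratio. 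Verifying that this restriction–extension is consistent on both sides is the one place where care is genuinely needed.
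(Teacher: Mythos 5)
Your proposal is correct and follows essentially the same route as the paper's proof: Theorem \ref{thm:correspondence-between-contraction-coeffs-and-max-corr-coeffs} specialized to $f_{GM}$, the identification $\Omega_{\mbf{J}_{f_{GM},\sigma}} = \psi^{\sigma}$ from \eqref{eq:choi-of-GM-J-operator}, isometric invariance of $\mu_{GM}$ to pass to an arbitrary purification, and Proposition \ref{prop:every-joint-state-is-a-degraded-purif} for the ``furthermore'' statement (your explicit support-restriction bookkeeping for a singular $\rho_{A}$ is care the paper leaves implicit). One small nit: Proposition \ref{prop:nec-cond-for-CP-ness} gives only a \emph{necessary} condition for complete positivity, so to justify $H_{AB}\geq 0$ you should instead observe directly that $\mbf{J}_{f_{GM},\sigma}(X)=\sigma^{1/2}X\sigma^{1/2}$ is manifestly completely positive.
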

\end{tcolorbox}
\noindent Because of this generic correspondence and that $\mu_{GM}(A:B)$ tensorizes (special case of Lemma \ref{lem:k-correlation-tensorize}), it immediately follows $\eta_{\chi^{2}_{GM}}$ tensorizes when the maps are quantum channels.
\begin{tcolorbox}[width=\linewidth, sharp corners=all, colback=white!95!black, boxrule=0pt,frame hidden]
\begin{corollary}
    $\eta_{\chi^{2}_{GM}}$ tensorizes for CPTP maps. That is, for CPTP maps $\cE_{A_0 \to B_0}$, $\cF_{A_1 \to B_1}$, $\sigma_{A_0}$, $\rho_{A_1}$,
    $$\eta_{\chi^{2}_{GM}}(\cE \otimes \cF,\sigma \otimes \rho) = \max\{\eta_{\chi^{2}_{GM}}(\cE,\sigma), \eta_{\chi^{2}_{GM}}(\cF,\rho)\} \ . $$
\end{corollary}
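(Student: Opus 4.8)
The plan is to transport the claim through the correspondence of Corollary~\ref{cor:contraction-for-sandwiched-case} and then invoke the tensorization of the geometric-mean maximal correlation coefficient. The reason the statement restricts to CPTP maps is precisely that for $f = f_{GM}$ and completely positive $\cE$, the relaxed coupling $(\id \otimes \cE)(\wt{\psi}_{\sigma})$ is a genuine quantum state (it is $\cE$ applied to one half of a purification), so the state-level tensorization Lemma~\ref{lem:k-correlation-tensorize} applies directly rather than only to formal couplings.

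Concretely, I would first fix purifications $\wt{\psi}_{\sigma}$ of $\sigma_{A_0}$ and $\wt{\psi}_{\rho}$ of $\rho_{A_1}$, with reference systems $A_0$, $A_1$ and with $\cE_{A_0 \to B_0}$, $\cF_{A_1 \to B_1}$ acting on the non-reference halves. Writing $H^{\cE} \coloneq (\id \otimes \cE)(\wt{\psi}_{\sigma})$ and $H^{\cF} \coloneq (\id \otimes \cF)(\wt{\psi}_{\rho})$, Corollary~\ref{cor:contraction-for-sandwiched-case} gives
\begin{align}
    \sqrt{\eta_{\chi^{2}_{GM}}(\cE,\sigma)} = \mu_{GM}(A_0 : B_0)_{H^{\cE}} \,, \qquad \sqrt{\eta_{\chi^{2}_{GM}}(\cF,\rho)} = \mu_{GM}(A_1 : B_1)_{H^{\cF}} \,.
\end{align}

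Next, I would exploit that the corollary holds for an arbitrary purification, so for $\sigma \otimes \rho$ I may select the product purification $\wt{\psi}_{\sigma} \otimes \wt{\psi}_{\rho}$. Since $\cE \otimes \cF$ is CPTP and acts factorwise, applying $\id \otimes (\cE \otimes \cF)$ yields exactly $H^{\cE} \otimes H^{\cF}$, now read with the bipartition $A_0 A_1 : B_0 B_1$. Hence Corollary~\ref{cor:contraction-for-sandwiched-case} again gives $\sqrt{\eta_{\chi^{2}_{GM}}(\cE \otimes \cF, \sigma \otimes \rho)} = \mu_{GM}(A_0 A_1 : B_0 B_1)_{H^{\cE} \otimes H^{\cF}}$. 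Applying Lemma~\ref{lem:k-correlation-tensorize} with $k = 1/2$ (recalling $f_{GM} = f_{1/2}$) to the bipartite product state $H^{\cE}_{A_0 B_0} \otimes H^{\cF}_{A_1 B_1}$,
\begin{align}
    \mu_{GM}(A_0 A_1 : B_0 B_1)_{H^{\cE} \otimes H^{\cF}} = \max\{\mu_{GM}(A_0 : B_0)_{H^{\cE}}, \mu_{GM}(A_1 : B_1)_{H^{\cF}}\} \,.
\end{align}
Combining the three displays and using that $\sqrt{\max\{a,b\}} = \max\{\sqrt{a},\sqrt{b}\}$ for $a,b \geq 0$, so that both sides may be squared, yields the claimed identity.

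The only step requiring genuine care is the factorization in the third paragraph: one must check that applying the product channel to the product purification returns precisely $H^{\cE} \otimes H^{\cF}$ with the references grouped on one side and the outputs on the other, matching the $(A_0 A_1 : B_0 B_1)$ bipartition of Lemma~\ref{lem:k-correlation-tensorize}. This follows immediately from $(\cE \otimes \cF)$ acting as $\cE(\cdot) \otimes \cF(\cdot)$ on product inputs together with the purification-independence built into Corollary~\ref{cor:contraction-for-sandwiched-case}, so no relabeling beyond the obvious tensor identification is needed. One also notes the hypotheses transfer trivially: $\sigma \otimes \rho \in \Density_{+}$ iff $\sigma, \rho \in \Density_{+}$, and $\cE \otimes \cF$ is CPTP whenever $\cE$ and $\cF$ are.
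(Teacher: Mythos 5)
Your proposal is correct and follows exactly the route the paper takes: the paper derives this corollary in one line from the generic correspondence of Corollary~\ref{cor:contraction-for-sandwiched-case} together with the tensorization of $\mu_{GM}$ as the $k=1/2$ case of Lemma~\ref{lem:k-correlation-tensorize}. Your write-up merely makes explicit the details the paper leaves implicit (choosing the product purification, checking that $\id\otimes(\cE\otimes\cF)$ yields $H^{\cE}\otimes H^{\cF}$ under the $(A_0A_1:B_0B_1)$ bipartition, and noting that complete positivity is what guarantees the coupling is a genuine state), all of which are handled correctly.
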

\end{tcolorbox}
\noindent Furthermore, because the correspondence between $\mu_{GM}$ and $\eta_{\chi^{2}_{GM}}$ holds for the entire quantum state space due to Proposition \ref{prop:every-joint-state-is-a-degraded-purif}, we directly obtain a form of `duality.'
\begin{tcolorbox}[width=\linewidth, sharp corners=all, colback=white!95!black, boxrule=0pt,frame hidden]
\begin{corollary}\label{cor:duality-of-contraction}
    Let $\rho_{AB}$ be a quantum state. There exist unique quantum channels $\cN_{A \to B}$ and $\cM_{B \to A}$ such that $\rho_{AB} = (\id_{A} \otimes \cN)(\psi_{\rho}) = (\cM \otimes \id_{B})(\wt{\psi}_{\rho})$ where $\psi_{\rho}$, $\wt{\psi}_{\rho}$ are the canonical purifications of $\rho_{A}$ and $\rho_{B}$ respectively and 
    $$\eta_{\chi^{2}_{GM}}(\cN,\rho_{A}) = \eta_{\chi^{2}_{GM}}(\cM,\rho_{B}) \ . $$ 
\end{corollary}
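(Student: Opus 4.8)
The plan is to read both channels and both contraction coefficients straight off the two earlier results, and then identify the two contraction coefficients through the symmetry of the maximal correlation coefficient; no new analytic estimate is needed. First I would invoke Proposition~\ref{prop:every-joint-state-is-a-degraded-purif} applied to $\rho_{AB}$ viewed from the $A$ side: it produces a \emph{unique} quantum channel $\cN_{A \to B}$ with $\rho_{AB} = (\id_A \otimes \cN)(\psi_\rho)$, where $\psi_\rho$ is the canonical purification of $\rho_A$. Applying the same proposition with the roles of $A$ and $B$ exchanged (purifying $\rho_B$ via $\wt\psi_\rho$ and degrading from the $B$ side) yields the unique channel $\cM_{B \to A}$ with $\rho_{AB} = (\cM \otimes \id_B)(\wt\psi_\rho)$. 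This settles the existence and uniqueness claims; the use of the canonical purification is precisely what removes the isometric ambiguity, as the remark following that proposition records.

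Next I would apply the `furthermore' part of Corollary~\ref{cor:contraction-for-sandwiched-case}, whose proof constructs its channel exactly as the unique channel of Proposition~\ref{prop:every-joint-state-is-a-degraded-purif}. Applied to $\rho_{AB}$ it gives $\mu_{GM}(A:B)_\rho = \sqrt{\eta_{\chi^2_{GM}}(\cN,\rho_A)}$, and applied to the same state regarded with $B$ as the first system it gives $\mu_{GM}(B:A)_\rho = \sqrt{\eta_{\chi^2_{GM}}(\cM,\rho_B)}$. The only remaining ingredient is the symmetry $\mu_{GM}(A:B)_\rho = \mu_{GM}(B:A)_\rho$, which I would read off Definition~\ref{def:f-quantum-max-corr}: since the optimizers are Hermitian we have $Y^\ast = Y$, so the objective is $\lvert \Tr[(X \otimes Y)\rho_{AB}] \rvert$, which is invariant under simultaneously relabeling the two subsystems and exchanging the variables $X \leftrightarrow Y$, while the four defining constraints are permuted into one another. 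Chaining the two displayed identities through this symmetry and squaring then yields $\eta_{\chi^2_{GM}}(\cN,\rho_A) = \eta_{\chi^2_{GM}}(\cM,\rho_B)$.

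The thing to be careful about—rather than a genuine obstacle—is bookkeeping. I must confirm that the channel delivered by the `furthermore' statement of Corollary~\ref{cor:contraction-for-sandwiched-case} is literally the unique $\cN$ (respectively $\cM$) of the uniqueness claim, so that the two contraction coefficients appearing in the conclusion are the intended ones, and that reindexing $\rho_{AB} \leftrightarrow \rho_{BA}$ is a mere relabeling of tensor factors that changes neither $\eta_{\chi^2_{GM}}$ nor $\mu_{GM}$. All the analytic content is already carried by Corollary~\ref{cor:contraction-for-sandwiched-case}, so once these identifications are pinned down the proof is a two-line composition of the two correspondences.
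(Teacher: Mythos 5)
Your proposal is correct and matches the paper's own (implicit) argument exactly: the paper derives Corollary~\ref{cor:duality-of-contraction} directly from Proposition~\ref{prop:every-joint-state-is-a-degraded-purif} (uniqueness of $\cN$ and, by exchanging roles, $\cM$) together with the `furthermore' part of Corollary~\ref{cor:contraction-for-sandwiched-case} applied from both sides, with the identification $\eta_{\chi^{2}_{GM}}(\cN,\rho_{A}) = \mu_{GM}(A:B)_{\rho}^{2} = \mu_{GM}(B:A)_{\rho}^{2} = \eta_{\chi^{2}_{GM}}(\cM,\rho_{B})$ resting on the same swap symmetry of Definition~\ref{def:f-quantum-max-corr} you invoke. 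Your bookkeeping check that the channel in the `furthermore' statement is literally the unique channel of Proposition~\ref{prop:every-joint-state-is-a-degraded-purif} is also exactly how the paper's proof of Corollary~\ref{cor:contraction-for-sandwiched-case} sets things up, via \eqref{eq:identification-of-rhoAB-via-GM}.
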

\end{tcolorbox}
\noindent The above corollaries are further examples of obtaining stronger results when the operator monotone function $f$ is the geometric mean analogous to stronger results we saw in this setting in Sections \ref{subsubsec:q-max-corr-bounded-above-by-one} and \ref{subsec:strengthened-results-for-GM}.

In the remainder of this subsection, we establish the above results. To do this, we introduce the `Heisenberg time-reversal' map, which generalizes the adjoint of the Petz recovery map used in Section \ref{subsec:extreme-values-and-ACD}. We then use this to establish the input-dependent contraction coefficients as operator norms over the same spaces as quantum maximal correlation coefficients, but evaluated on different maps. By identifying these different maps as (relaxed) quantum couplings, we are able to establish Theorem \ref{thm:correspondence-between-contraction-coeffs-and-max-corr-coeffs}. The corollaries then follow from using the extra identifications that follow from the geometric mean as used in Section \ref{subsubsec:q-max-corr-bounded-above-by-one}.

\paragraph{Heisenberg Time Reversal Map} We begin by introducing the `Heisenberg Time Reversal Map.' This has implicitly appeared at least in \cite{Cao-2019a} where it was concatenated with another map. 
\begin{tcolorbox}[width=\linewidth, sharp corners=all, colback=white!95!black, boxrule=0pt,frame hidden]
\begin{definition}\label{def:Heisenberg-time-reversal-map}
For operator monotone $f$, positive, trace-preserving map $\cE_{A \to B}$, and state $\sigma_{A}$, we define the `Heisenberg time-reversal' map 
\begin{align}\label{eq:Heis-reversal-map}
    \cR_{f,\cE,\sigma} \coloneq \mbf{J}_{f,\cE(\sigma)}^{-1} \circ \cE \circ \mbf{J}_{f,\sigma} \ .
\end{align}
\end{definition}
\end{tcolorbox}
\noindent We remark a direct calculation may verify $\cR_{GM,\cE,\sigma} = \cP^{\ast}_{\cE,\sigma}$, i.e. for $f$ being the geometric mean, we recover the adjoint of the Petz recovery map (with respect to Hilbert-Schmidt inner product). The `value' in this class of maps is that they are expectation-preserving under evolution in the technical sense given the following proposition. The expectation-preserving property justifies the claim these are the `Heisenberg' time-reversal maps as they act on the observables rather than the state.
\begin{proposition}\label{prop:reversal-map-expectation-preserving}
    For any operator monotone function $f$, positive, trace-preserving map $\cE_{A \to B}$ and state $\sigma_{A}$,
    \begin{enumerate} 
        \item $\cR_{f,\cE,\sigma}$ is unital, and
        \item for all $X \in \Lin(A)$, $\mbb{E}_{\cE(\sigma)}[\cR_{f,\cE,\sigma}(X)] = \mbb{E}_{\sigma}[X]$, i.e. it is expectation-preserving under evolution.
    \end{enumerate}
\end{proposition}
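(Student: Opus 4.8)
The plan is to prove both items by a short ``inside-out'' computation that reduces everything to the trace identities of Proposition~\ref{prop:mult-and-div-under-trace} and the Hadamard-product representation of $\mbf{J}_{f,\sigma}$ in \eqref{eq:Hadamard-prod-form}--\eqref{eq:J-pseudoinverse}. The two workhorse facts I would establish first are $\mbf{J}_{f,\sigma}(\mbb{1}) = \sigma$ and $\mbf{J}_{f,\cE(\sigma)}^{-1}(\cE(\sigma)) = \Pi_{\supp(\cE(\sigma))}$: both follow by evaluating the relevant operator in the eigenbasis of $\sigma$ (resp.\ $\cE(\sigma)$), where the diagonal entries are $P_{f}(\lambda_{i},\lambda_{i}) = \lambda_{i}f(1)$ (resp.\ its reciprocal on the support), so the normalization $f(1)=1$ collapses them to $\lambda_{i}$ (resp.\ to $1$ on the support and $0$ off it).

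For Item~1, I would simply compose these two facts: $\cR_{f,\cE,\sigma}(\mbb{1}) = \mbf{J}_{f,\cE(\sigma)}^{-1}\big(\cE(\mbf{J}_{f,\sigma}(\mbb{1}))\big) = \mbf{J}_{f,\cE(\sigma)}^{-1}(\cE(\sigma)) = \Pi_{\supp(\cE(\sigma))}$. This is the identity on $\supp(\cE(\sigma))$, and equals $\mbb{1}$ exactly when $\cE(\sigma)$ is full rank (the setting of Theorem~\ref{thm:correspondence-between-contraction-coeffs-and-max-corr-coeffs} and its corollaries); this is the sense in which $\cR_{f,\cE,\sigma}$ is unital.

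For Item~2, I would write $\mbb{E}_{\cE(\sigma)}[\cR_{f,\cE,\sigma}(X)] = \Tr[\cE(\sigma)\,\cR_{f,\cE,\sigma}(X)]$ and apply the third identity of Proposition~\ref{prop:mult-and-div-under-trace} with $\cE(\sigma)$ in the role of the state and $Z \coloneq \cE(\mbf{J}_{f,\sigma}(X))$, which peels off the outer $\mbf{J}_{f,\cE(\sigma)}^{-1}$ and yields $\Tr[\Pi_{\supp(\cE(\sigma))}\,\cE(\mbf{J}_{f,\sigma}(X))]$. Once the projector is removed, trace-preservation of $\cE$ together with the first identity of Proposition~\ref{prop:mult-and-div-under-trace} closes the argument: $\Tr[\cE(\mbf{J}_{f,\sigma}(X))] = \Tr[\mbf{J}_{f,\sigma}(X)] = \Tr[\sigma X] = \mbb{E}_{\sigma}[X]$.

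The main obstacle is justifying the removal of $\Pi_{\supp(\cE(\sigma))}$, i.e.\ showing $\Tr\big[(\mbb{1}-\Pi_{\supp(\cE(\sigma))})\,\cE(\mbf{J}_{f,\sigma}(X))\big] = 0$, equivalently $\supp(\cE(\mbf{J}_{f,\sigma}(X))) \subseteq \supp(\cE(\sigma))$. For full-rank $\sigma$ this is where positivity of $\cE$ enters: decomposing $W \coloneq \mbf{J}_{f,\sigma}(X)$ into its Hermitian and anti-Hermitian parts, each Hermitian piece $W_{h}$ satisfies $-c\,\sigma \leq W_{h} \leq c\,\sigma$ for some $c>0$ since $\sigma>0$, so positivity gives $-c\,\cE(\sigma) \leq \cE(W_{h}) \leq c\,\cE(\sigma)$ and hence the support containment. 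The general case reduces to this by restricting $\mbf{J}_{f,\sigma}$ and $\cE$ to $\supp(\sigma)$. A tidy alternative that sidesteps bookkeeping is to pass to adjoints: since $\mbf{J}_{f,\sigma}$ and $\mbf{J}_{f,\cE(\sigma)}^{-1}$ are self-adjoint for the Hilbert--Schmidt inner product (Proposition~\ref{prop:J-operator-self-adjoint}), one has $\cR_{f,\cE,\sigma}^{\ast} = \mbf{J}_{f,\sigma}\circ\cE^{\ast}\circ\mbf{J}_{f,\cE(\sigma)}^{-1}$, whence $\cR_{f,\cE,\sigma}^{\ast}(\cE(\sigma)) = \mbf{J}_{f,\sigma}(\cE^{\ast}(\Pi_{\supp(\cE(\sigma))})) = \mbf{J}_{f,\sigma}(\mbb{1}) = \sigma$, using unitality of $\cE^{\ast}$ (from trace-preservation) and the same positivity fact to identify $\cE^{\ast}(\Pi_{\supp(\cE(\sigma))}) = \mbb{1}$; pairing against arbitrary $X$ then gives Item~2 directly.
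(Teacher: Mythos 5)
Your proposal is correct on the main line, but for Item~2 it takes a genuinely different and less economical route than the paper. The paper's proof never invokes positivity of $\cE$: it writes $\mbb{E}_{\cE(\sigma)}[\cR_{f,\cE,\sigma}(X)] = \langle \mbb{1}, \mbf{J}^{-1}_{f,\cE(\sigma)}\circ\cE\circ\mbf{J}_{f,\sigma}(X)\rangle_{f,\cE(\sigma)}$, lets the $\mbf{J}_{f,\cE(\sigma)}$ hidden in the weighted inner product cancel the outer $\mbf{J}^{-1}_{f,\cE(\sigma)}$, removes $\cE$ by trace preservation (unitality of $\cE^{\ast}$), and finishes with Proposition~\ref{prop:mult-and-div-under-trace} --- three lines, using only the trace-preserving hypothesis. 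Your detour through the third identity of Proposition~\ref{prop:mult-and-div-under-trace} leaves the projector $\Pi_{\supp(\cE(\sigma))}$ behind, and you then pay for its removal with the positivity/domination argument $-c\sigma \leq W_{h} \leq c\sigma$; that argument is correct when $\sigma$ is full rank, but it imports a hypothesis the direct computation does not need. Your ``tidy alternative'' via adjoints is in fact the paper's computation in disguise: $\cR^{\ast}_{f,\cE,\sigma} = \cS_{f,\cE,\sigma}$, and $\cS_{f,\cE,\sigma}(\cE(\sigma)) = \sigma$ is exactly Item~1 of Proposition~\ref{prop:schrodinger-reversal-map}. On Item~1 you are, if anything, more careful than the paper: with the pseudoinverse convention \eqref{eq:J-pseudoinverse} one indeed gets $\cR_{f,\cE,\sigma}(\mbb{1}) = \Pi_{\supp(\cE(\sigma))}$, so unitality holds in the support-restricted sense unless $\cE(\sigma)$ is full rank.

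There is one concrete soft spot: the sentence ``the general case reduces to this by restricting $\mbf{J}_{f,\sigma}$ and $\cE$ to $\supp(\sigma)$'' does not work as stated, because $\mbf{J}_{f,\sigma}(X)$ is \emph{not} supported on $\supp(\sigma)$ whenever $f'(+\infty) \neq 0$. For example, with $f_{AM}$ one has $\mbf{J}_{f_{AM},\sigma}(X) = \tfrac{1}{2}(\sigma X + X\sigma)$, which retains the cross blocks $\Pi_{\supp(\sigma)} X (\mbb{1}-\Pi_{\supp(\sigma)})$; restricting to the support discards them, and your domination bound fails for the Hermitian parts of these cross terms. The repair is short: the convention $0f(0/0)=0$ gives $P_{f}(0,0)=0$, so $W \coloneq \mbf{J}_{f,\sigma}(X)$ always has vanishing kernel--kernel block; then for any $\ket{v} \in \ker(\cE(\sigma))$ the operator $M_{v} \coloneq \cE^{\ast}(\dyad{v})$ is positive semidefinite (positivity of $\cE$) with $\Tr[M_{v}\sigma] = \bra{v}\cE(\sigma)\ket{v} = 0$, forcing $M_{v}\Pi_{\supp(\sigma)} = 0$, whence $\bra{v}\cE(W)\ket{v} = \Tr[M_{v}W] = \Tr[M_{v}(\mbb{1}-\Pi_{\supp(\sigma)})W(\mbb{1}-\Pi_{\supp(\sigma)})] = 0$, which is exactly the vanishing you need. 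The same caveat applies to your claim $\cE^{\ast}(\Pi_{\supp(\cE(\sigma))}) = \mbb{1}$: for rank-deficient $\sigma$ this holds only up to a positive semidefinite remainder confined to the kernel--kernel block, which $\mbf{J}_{f,\sigma}$ then annihilates --- so the adjoint route still closes, but not for the reason you gave.
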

\begin{proof}
For Item 1, $\cR_{f,\cE,\sigma}(\mbb{1}) = (\mbf{J}^{-1}_{f,\cE(\sigma)} \circ \cE)(\sigma) = (\mbf{J}^{-1}_{f,\cE(\sigma)})\cE(\sigma) = \mbb{1}$ where the first equality is \eqref{eq:Heis-reversal-map} and that $[\sigma,\mbb{1}] = 0$ and the third is $\cE(\sigma)$ commutes with itself. For Item 2,
\begin{equation}
    \begin{aligned}
        \mbb{E}_{\cE(\sigma)}[\cR_{\cE,\sigma}(X)] &= \langle \mbb{1}, \mbf{J}_{\cE(\sigma)}^{-1} \circ \cE \circ \mbf{J}_{f,\sigma}(X) \rangle_{f,\cE(\sigma)} \\
        &= \langle \mbb{1}, \cE \circ \mbf{J}_{f,\sigma}(X) \rangle \\
        &=  \langle \mbb{1}, \mbf{J}_{f,\sigma}(X) \rangle \\
        &= \mbb{E}_{\sigma}[X] \ ,
    \end{aligned}
\end{equation}
where the first equality is by definition, the second is the definition of the inner product, the third is that the adjoint of $\cE$ is unital, and the last is by definition.
\end{proof}

\paragraph{$\chi^{2}$-Contraction Coefficients as Optimization Problems} We now characterize input-dependent contraction coefficients as operator norms on the Heisenberg time reversal maps and optimization programs that look similar to maximal correlation coefficients. This is the main technical lemma of this section, and only relies on $\cE$ being trace-preserving and positive.
\begin{lemma}\label{lem:contraction-as-map-norm}
    Let $f \in \cM_{\text{St}}$. Let $\cE_{A \to B}$ be a positive, trace-preserving map and $\sigma \in \Density_{+}(A)$.
    Then 
    $$\sqrt{\eta_{\chi^{2}_{f}}}(\cE,\sigma) = \Vert \cR_{\cE,\sigma} : \Herm_{0,\sigma}(A') \to \Herm_{0,\cN(\sigma)}(B') \Vert_{f,\sigma \to f,\cE(\sigma)} \ , $$ where the latter may be written as
    \begin{equation}\label{eq:contraction-coefficient-as-optimization}
        \begin{aligned}
            \max & \; \Big \vert \Tr[X \otimes Y^{\ast} \Omega_{\cE \circ \mbf{J}_{f,\sigma}}] \Big \vert \\
            \text{s.t.} \; & \langle \mbb{1}, X \rangle_{f,\sigma_{A}} = \langle \mbb{1}, Y \rangle_{f,\cE(\sigma)} = 0 \\
        \; &  \langle X, X \rangle_{f,\sigma_{A}} = \langle Y, Y \rangle_{f,\cE(\sigma)} = 1 \ ,
        \end{aligned}
    \end{equation}
    where the maximization is over $X \in \Herm(A')$, $Y \in \Herm(B')$ and \sloppy $A' \coloneq \linspan(\supp(\sigma)), B' \coloneq \linspan(\supp(\cE(\sigma))$.
\end{lemma}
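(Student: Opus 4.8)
The plan is to follow Raginsky's operator-theoretic strategy, using the variance representation of the $\chi^2_f$-divergence (Item 2 of Proposition~\ref{prop:chi-squared-properties}) to recast the contraction coefficient as a generalized Rayleigh quotient, and then to recognize that quotient as the squared operator norm of the Heisenberg time-reversal map $\cR_{\cE,\sigma}$. First I would rewrite the defining ratio. Setting $\Delta \coloneq \rho - \sigma$, a traceless Hermitian operator supported on $A' = \supp(\sigma) = A$, the definition \eqref{eq:chi-squared-f-def} gives $\chi^2_f(\rho\Vert\sigma) = \langle \Delta, \mbf{J}^{-1}_{f,\sigma}(\Delta)\rangle$ and $\chi^2_f(\cE(\rho)\Vert\cE(\sigma)) = \langle \cE(\Delta), \mbf{J}^{-1}_{f,\cE(\sigma)}(\cE(\Delta))\rangle$, where positivity of $\cE$ and full-rankness of $\sigma$ force $\cE(\Delta)=\cE(\rho)-\cE(\sigma)$ into $\supp(\cE(\sigma)) = B'$. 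Since both expressions are homogeneous of degree two in $\Delta$ and $\sigma$ is full rank on $A'$, the supremum over states $\rho\neq\sigma$ relaxes to a supremum over all nonzero traceless Hermitian $\Delta$ on $A'$ (the usual argument that $\sigma + t\Delta$ is a valid state for small $t$).

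Next I would change variables via $X \coloneq \mbf{J}^{-1}_{f,\sigma}(\Delta)$, i.e.\ $\Delta = \mbf{J}_{f,\sigma}(X)$. Because $f$ is symmetry-inducing, $\mbf{J}^{\pm 1}_{f,\sigma}$ is Hermitian-preserving (Proposition~\ref{prop:symmetry-inducing-equivalences}) and invertible on $\Lin(A'\vert\sigma)$, so this is a bijection from the traceless Hermitian operators onto the mean-zero Hermitian operators $\{X : \langle\mbb{1},X\rangle_{f,\sigma} = \Tr[\sigma X] = 0\}$. Using the variance relation, shift-invariance of the variance, and Item~4 of Proposition~\ref{prop:variance-properties}, the denominator becomes $\Var_{f,\sigma}[X] = \langle X, X\rangle_{f,\sigma} = \norm{X}_{f,\sigma}^2$; a parallel computation, invoking self-adjointness of $\mbf{J}^p_{f,\sigma}$ (Proposition~\ref{prop:J-operator-self-adjoint}) and the definition \eqref{eq:Heis-reversal-map} of the reversal map, rewrites the numerator as $\norm{\cR_{\cE,\sigma}(X)}_{f,\cE(\sigma)}^2$ (here $\mbf{J}_{f,\cE(\sigma)}\mbf{J}^{-1}_{f,\cE(\sigma)}$ acts as the identity on $\cE(\Delta)$ precisely because $\cE(\Delta)$ is supported on $B'$). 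Thus the quotient equals $\norm{\cR_{\cE,\sigma}(X)}_{f,\cE(\sigma)}^2 / \norm{X}_{f,\sigma}^2$, and taking the supremum gives $\eta_{\chi^2_f}(\cE,\sigma) = \norm{\cR_{\cE,\sigma}}_{\mathscr{X}\to\mathscr{Y}}^2$, where $\mathscr{X} = (\Herm_{0,\sigma}(A'),\langle\cdot,\cdot\rangle_{f,\sigma})$ and $\mathscr{Y} = (\Herm_{0,\cE(\sigma)}(B'),\langle\cdot,\cdot\rangle_{f,\cE(\sigma)})$. These are genuine real Hilbert spaces by Proposition~\ref{prop:orthogonal-Herm-is-Hilbert}, and Proposition~\ref{prop:reversal-map-expectation-preserving} ensures $\cR_{\cE,\sigma}$ sends mean-zero operators to mean-zero operators, so the map norm is well defined; taking square roots yields the first claimed equality.

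Finally, to obtain the optimization form \eqref{eq:contraction-coefficient-as-optimization} I would apply Lemma~\ref{lem:map-norms-as-optimizations} to the Hilbert-space map $\cR_{\cE,\sigma}\colon\mathscr{X}\to\mathscr{Y}$, giving $\norm{\cR_{\cE,\sigma}}_{\mathscr{X}\to\mathscr{Y}} = \max \lvert\langle Y, \cR_{\cE,\sigma}(X)\rangle_{f,\cE(\sigma)}\rvert$ over unit-norm, mean-zero $X\in\Herm(A')$, $Y\in\Herm(B')$, which are exactly the constraints of \eqref{eq:contraction-coefficient-as-optimization}. The objective simplifies by unfolding the inner product, $\langle Y, \cR_{\cE,\sigma}(X)\rangle_{f,\cE(\sigma)} = \langle Y, \mbf{J}_{f,\cE(\sigma)}(\cR_{\cE,\sigma}(X))\rangle = \langle Y, (\cE\circ\mbf{J}_{f,\sigma})(X)\rangle$, and then expressing this Hilbert--Schmidt pairing through the Choi operator of $\cE\circ\mbf{J}_{f,\sigma}$ exactly as in the proof of Proposition~\ref{prop:Schmidt-to-sing} and \eqref{eq:action-of-Choi}.

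I expect the main obstacle to be purely the bookkeeping of this last step: the Choi identity naturally produces $\Tr[X^{\t}\otimes Y\,\Omega_{\cE\circ\mbf{J}_{f,\sigma}}]$, whereas \eqref{eq:contraction-coefficient-as-optimization} displays $\Tr[X\otimes Y^{\ast}\Omega_{\cE\circ\mbf{J}_{f,\sigma}}]$. These two maxima coincide because $Y = Y^{\ast}$ for Hermitian $Y$, and—with the transpose taken in the eigenbasis of $\sigma$ as in Theorem~\ref{thm:correspondence-between-contraction-coeffs-and-max-corr-coeffs}—the symmetry of the perspective function for symmetry-inducing $f$ (Proposition~\ref{prop:symmetry-inducing-equivalences}) makes $\norm{\cdot}_{f,\sigma}$ transpose-invariant, so $X\mapsto X^{\t}$ is a norm-preserving bijection of the feasible set. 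The only genuinely substantive point beyond this is justifying the support behavior of the pseudoinverses, which reduces to $\cE(\rho)\ll\cE(\sigma)$ via positivity of $\cE$ and $\rho \leq c\,\sigma$; everything else is the change-of-variables computation already laid out above.
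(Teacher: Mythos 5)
Your proposal is correct, and it reaches the statement by the same overall strategy as the paper---identify $\eta_{\chi^{2}_{f}}(\cE,\sigma)$ as the squared norm of $\cR_{\cE,\sigma}$ as a map between the mean-zero Hermitian subspaces of $L^{2}_{f}(\sigma)$ and $L^{2}_{f}(\cE(\sigma))$ (real Hilbert spaces via Proposition~\ref{prop:orthogonal-Herm-is-Hilbert}), apply Lemma~\ref{lem:map-norms-as-optimizations}, and dispose of the Choi/transpose bookkeeping in the eigenbasis of $\sigma$---but your execution of the central reduction differs from the paper's in a way worth recording. The paper stays in the likelihood-ratio picture: it invokes the variance relation (Item 2 of Proposition~\ref{prop:chi-squared-properties}), substitutes $A = \mbf{J}^{-1}_{f,\sigma}(\rho)$ subject to $\mbb{E}_{\sigma}[A]=1$, and then proves equivalence with the supremum over $\Herm_{0,\sigma}(A)\setminus\{0\}$ through two affine constructions, $A_{H} = \mbb{1}+\ve H$ and $G_{A} = A - \mbb{1}$, each requiring a run of variance identities that lean on the expectation-preserving property of $\cR_{\cE,\sigma}$ (Proposition~\ref{prop:reversal-map-expectation-preserving}). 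You instead work directly with $\Delta = \rho - \sigma$, note that numerator and denominator of the contraction ratio are homogeneous of degree two in $\Delta$, and use full-rankness of $\sigma$ (so $\sigma + t\Delta \in \Density$ for small $t>0$) to relax to all nonzero traceless Hermitian $\Delta$---the same scaling trick the paper itself deploys later in Lemma~\ref{lem:chi-squared-contraction-unity}---after which the change of variables $X = \mbf{J}^{-1}_{f,\sigma}(\Delta)$ lands you in $\Herm_{0,\sigma}(A)$ immediately via $\Tr[\mbf{J}_{f,\sigma}(X)] = \Tr[\sigma X]$ (Proposition~\ref{prop:mult-and-div-under-trace}). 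Your route is substantially shorter and bypasses the affine-shift calculations entirely; what the paper's longer route buys is the explicit non-commutative probabilistic reading (contraction as the minimal decrease of the variance of the likelihood ratio), which is the Raginsky analogy the paper wants on display. Two points you rightly made explicit that the paper's proof leaves more implicit, and which are genuinely needed since the lemma does not assume $\cE(\sigma)$ full rank: that positivity of $\cE$ together with $-c\sigma \leq \Delta \leq c\sigma$ forces $\cE(\Delta)$ to be supported on $B'$, so the pseudoinverse composition $\mbf{J}_{f,\cE(\sigma)}\circ\mbf{J}^{-1}_{f,\cE(\sigma)}$ acts as the identity on $\cE(\Delta)$; and that for symmetry-inducing $f$ the symmetry $P_{f}(x,y)=P_{f}(y,x)$ makes $X \mapsto X^{\Trans}$ (transpose in the eigenbasis of $\sigma$) a norm- and constraint-preserving bijection of the feasible set, which legitimizes dropping the transpose produced by the Choi identity.
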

\begin{proof}[Proof of Theorem \ref{lem:contraction-as-map-norm}]
The following proof may be seen as a non-commutative generalization of the proof of \cite[Theorem 3.2]{Raginsky-2016a}. The basic idea is to convert things into a form that we can identify it as an operator norm and then apply Lemma \ref{lem:map-norms-as-optimizations}.

Because we keep the function $f$ fixed throughout the proof, we will omit it when it is clear without it for notational simplicity.
\begin{align}
    \eta_{\chi^{2}_{f}}(\cE,\sigma) =& \sup_{\rho \in \Density \, : \rho \neq \sigma } \frac{\chi^{2}_{f}(\cE(\rho) \Vert \cE(\sigma))}{\chi^{2}_{f}(\rho \Vert \sigma)} \\
    =& \sup_{\rho \in \Density \, : \rho \neq \sigma} \frac{\Var_{\cE(\sigma)}[\mbf{J}_{\cE(\sigma)}^{-1} \circ \cE(\rho)]}{\Var_{\sigma}[\mbf{J}_{\sigma}^{-1}(\rho)]} \\
    =& \sup_{\substack{A \in \Herm \setminus \{0\}: \\ \mbb{E}_{\sigma}[A] = 1}} \frac{\Var_{\cE(\sigma)}[\cR_{\cE,\sigma}(A)]}{\Var_{\sigma}[A]}
    \ , 
\end{align}
where the equality is Item 2 of Proposition \ref{prop:chi-squared-properties} and the third equality is justified via
\begin{align}
    \langle \mbb{1}, \rho \rangle = 1 \iff \langle \mbb{1}, \mbf{J}_{f,\sigma} \circ \mbf{J}_{f,\sigma}^{-1} (\rho) \rangle = 1 \iff \langle \mbb{1}, \mbf{J}_{f,\sigma}^{-1} (\rho) \rangle_{f,\sigma} = 1 \ , 
\end{align}
and then defining $A \coloneq \mbf{J}_{f,\sigma}^{-1}(\rho)$ so that $\rho = \mbf{J}_{f,\sigma}(A)$. That the expectation is one is by Proposition \ref{prop:mult-and-div-under-trace} and that it is Hermitian follows from $\mbf{J}_{\sigma}^{-1}$ being Hermitian-preserving, which holds as we assumed $f$ is symmetry-inducing.

We now aim to replace the supremisation over $A$ with $H \in \Herm_{0,\sigma}(A) \setminus \{0\}$, where
\begin{align}\label{eq:orthogonal-Herm-space}
    \Herm_{0,\sigma}(A) = \{H \in \Herm(A): \mbb{E}_{\sigma}[H] = 0\} \ .
\end{align}
We do this by showing equivalence of supremising over either set. Let $H \in \Herm_{0,\sigma}(A)$, let $A_{H} \coloneq \mbb{1}_{A} + \ve H$ for any $\ve > 0$. It follows from the normalization of the density matrix $\mbb{E}_{\sigma}[A_{H}] = 1$.  We will now show $\Var_{\sigma}[A_{H}] = \ve^{2}\Var_{\sigma}[H]$ and $\Var_{\cE(\sigma)}[\cR_{\cE,\sigma}(A_{H})] = \ve^{2}\Var_{\sigma}[\cR_{\cE,\sigma}(H)]$. To establish the first equality,
\begin{align}
    \Var_{\sigma}[A_{H}] &= \langle A_{H}, A_{H} \rangle_{f,\sigma} - 1 \\
    &= \langle \mbb{1}, \mbb{1} \rangle_{f,\sigma} + \ve \langle \mbb{1}, H\rangle_{f,\sigma} + \ve \langle H,\mbb{1} \rangle_{f,\sigma} + \ve^{2} \langle H , H \rangle_{f,\sigma} - 1 \\
    &= 0 + \ve^{2} \langle H , H \rangle_{f,\sigma} \\
    &= \ve^{2}\Var_{\sigma}[H] \  ,
\end{align}
where: the first equality is Item 1 of Proposition \ref{prop:variance-properties}; the second is expanding terms, the third is canceling terms, Proposition \ref{prop:mult-and-div-under-trace}, and $H \in \Herm_{0,\sigma}(A)$; and the final equality uses $\ve^{2} \cdot 0 = 0$, Item 1 of Proposition \ref{prop:variance-properties}, and Proposition \ref{prop:mult-and-div-under-trace}.
To establish the second equality,
\begin{equation}\label{eq:AH-var-1}
\begin{aligned}
    &\langle \cR_{\cE,\sigma}(A_{H}), \cR_{\cE,\sigma}(A_{H}) \rangle_{f,\cE(\sigma)} \\
    =&  \langle \mbb{1} + \ve\cR_{\cE,\sigma}(H), \mbb{1} + \ve \cR_{\cE,\sigma}(H) \rangle_{f,\cE(\sigma)} \\
    =&  1 + \ve\langle \mbb{1}, \cR_{\cE,\sigma}(H) \rangle_{f,\cE(\sigma)} + \ve\langle \cR_{\cE,\sigma}(H), \mbb{1} \rangle_{f,\cE(\sigma)} + \ve^{2} \langle \cR_{\cE,\sigma}(H), \cR_{\cE,\sigma}(H) \rangle_{f,\cE(\sigma)} \ , 
\end{aligned}
\end{equation}
where the first equality is the linearity of $\cR_{\cE,\sigma}$ and that it is unital and the second is expanding terms. Now,
\begin{align}\label{eq:AH-var-2}
    \langle \mbb{1}, \cR_{\cE,\sigma}(H) \rangle_{f,\cE(\sigma)} = \mbb{E}_{\cE(\sigma)}[\cR_{\cE,\sigma}(H)] = \mbb{E}_{\sigma}[H] = 0 \ ,
\end{align}
where the first equality is definition, the second is by Proposition \ref{prop:reversal-map-expectation-preserving}, and the last is by our assumption on $H$, i.e. the definition of $\Herm_{0,\sigma}(A)$. Similarly,
\begin{align}\label{eq:AH-var-3}
    \langle \cR_{\cE,\sigma}(H), \mbb{1} \rangle_{f,\cE(\sigma)} = \langle \cR_{\cE,\sigma}(H), \cE(\sigma) \rangle =  \Tr[\cE \circ \mbf{J}_{f,\sigma} (H)] = \Tr[\mbf{J}_{f,\sigma}(H)] = 0 \ ,
\end{align}
where the first equality is the property of $\mbf{J}_{\cE(\sigma)}$, the second is the definition of $\cR_{\cE,\sigma}$ and Proposition \ref{prop:mult-and-div-under-trace}, the third is that $\cE^{\ast}$ is unital, and the last is $\mbb{E}_{\sigma}[H]=0$ by assumption. Combining these,
\begin{align*}
    \Var_{\cE(\sigma)}[\cR_{\cE,\sigma}(A_{H})] =& \langle \cR_{\cE,\sigma}(A_{H}), \cR_{\cE,\sigma}(A_{H}) \rangle_{f,\cE(\sigma)} - \left( \langle \mbb{1}, \cR_{\cE,\sigma}(A_{H}) \rangle_{f,\cE(\sigma)} \right)^{2} \\
    =& \langle \cR_{\cE,\sigma}(A_{H}), \cR_{\cE,\sigma}(A_{H}) \rangle_{f,\cE(\sigma)} - \left( \mbb{E}_{\sigma}[A_{H}] \right)^{2} \\
    =& (1 + \ve^{2} \langle \cR_{\cE,\sigma}(H), \cR_{\cE,\sigma}(H) \rangle_{f,\cE(\sigma)}) - 1 \\
    =&  \ve^{2} \langle \cR_{\cE,\sigma}(H), \cR_{\cE,\sigma}(H) \rangle_{f,\cE(\sigma)} \\
    =& \ve^{2}\Var_{\sigma}[\cR_{\cE,\sigma}(H)] \ ,
\end{align*}
where the first equality is Item 1 of Proposition \ref{prop:variance-properties}, the second is Proposition \ref{prop:reversal-map-expectation-preserving}, the third is construction of $A_{H}$ and combining \eqref{eq:AH-var-1},\eqref{eq:AH-var-2},\eqref{eq:AH-var-3} , the fourth is canceling terms, and the fifth is subtracting $0^{2}$, using that $\mbb{E}_{\sigma}[H] = 0$ and Item 1 of Proposition \ref{prop:variance-properties}. Thus, 
$$\sup_{\substack{A \in \Herm \setminus \{0\}: \\ \mbb{E}_{\sigma}[A] = 1}} \frac{\Var_{\cE(\sigma)}[\cR_{\cE,\sigma}(A)]}{\Var_{\sigma}[A]} \geq \sup_{H \in \Herm_{0}(A) \setminus \{0\}} \frac{\Var_{\cE(\sigma)}[\cR_{\cE,\sigma}(A)]}{\Var_{\sigma}[H]} \ . $$
We now prove the other containment. Let $A \in \Herm$ such that $\mbb{E}_{\sigma}[A] = 1$. Note this means $A \neq 0$. Define $G_{A} \coloneq A - \mbb{1}$. Then $\Var_{\sigma}[G_{A}] = \Var_{\sigma}[A]$ follows from a direct calculation using the definition of the inner product and Propositions \ref{prop:variance-properties} and \ref{prop:mult-and-div-under-trace}. One may show that $\Var_{\cE(\sigma)}[\cR_{\cE,\sigma}(G_{A})] = \Var_{\cE(\sigma)}[\cR_{\cE,\sigma}(A)]$ in a similar manner although using the same sort of identities as were used to establish $\Var_{\cN(\sigma)}[\cR_{\cE,\sigma}(A_{H})] =\ve^{2}\Var_{\sigma}[\cR_{\cE,\sigma}(H)]$. Thus, we have the other containment and may conclude
\begin{align}
    \eta_{\chi^{2}_{f}}(\cE,\sigma) =& \sup_{H \in \Herm_{0,\sigma}(A) \setminus \{0\}} \frac{\Var_{\cE(\sigma)}[\cR_{\cE,\sigma}(H)]}{\Var_{\sigma}[H]} \ .
\end{align}
Note that by Proposition \ref{prop:variance-properties} and the definition of $\Herm_{0}(A)$, this means $\Var_{\sigma}[H] = \Vert H \Vert^{2}_{\mbf{J}_{f,\sigma}}$. Moreover, as $\cR_{\cE,\sigma}$ is expectation-preserving (Proposition \ref{prop:reversal-map-expectation-preserving}), by Proposition \ref{prop:variance-properties}, $\Var_{\cE(\sigma)}[\cR_{\cE,\sigma}(H)] = \Vert \cR_{\cE,\sigma}(H) \Vert^{2}_{\mbf{J}_{f,\cE(\sigma)}}$. Thus, 
\begin{align}\label{eq:chi-squared-f-norm-ratio}
    \sqrt{\eta_{\chi^{2}_{f}}(\cE,\sigma)} =& \sup_{H \in \Herm_{0,\sigma}(A) \setminus \{0\}} \frac{\Vert \cR_{\cE,\sigma}(H) \Vert_{f,\cE(\sigma)}}{\Vert H \Vert_{f,\sigma}} \ ,
\end{align}
where we have moved the square out via monotonicity of the square function. As $\Herm_{0}(A)$ is a normed space and $\cR_{\cE,\sigma}: \Herm_{0,\sigma}(A) \to \Herm_{0,\cN(\sigma)}(B)$ by Proposition \ref{prop:reversal-map-expectation-preserving}, we may conclude that the RHS of the above equation is in fact the norm of $\cR_{\cE,\sigma}$ as a map between these two normed spaces (See \eqref{eq:map-norm-ratio-form}). Applying Lemma \ref{lem:map-norms-as-optimizations}, we obtain
\begin{align}
    \sqrt{\eta_{\chi^{2}_{f}}(\cE,\sigma)} =& \sup\Big\{ \Big \vert  \langle y, \cR_{\cE,\sigma}(x) \rangle_{f,\sigma} \Big \vert \, : \substack{ \, \Vert x \Vert_{f,\sigma} = 1 \, , \, \Vert y \Vert_{f,\cE(\sigma)} = 1 \\ x \in \Herm_{0,\sigma}(A) \, , \, y \in \Herm_{0,\cE(\sigma)}(B)}  \Big\} \ .
\end{align}
Finally, we simplify the objective:
\begin{align}
    \langle y, \cR_{\cE,\sigma}(x) \rangle_{f,\sigma} =& \langle y, \cE \circ \mbf{J}_{f,\sigma}(x) \rangle = \Tr[x^{\Trans} \otimes y^{\ast} \Omega_{\cE \circ \mbf{J}_{f,\sigma}}] \ ,
\end{align}
where the first equality is definition of $\cR$ and the inner product, the second is the action of a channel in terms of the Choi operator. Finally, we may remove the transpose on $x$ by choosing to define the transpose with respect to the basis of $\sigma$. This implies $\Tr[x \sigma] = \Tr[x^{\Trans} \sigma^{\Trans}] = \Tr[x^{\Trans} \sigma]$ and using \eqref{eq:MC-form} one may verify that $\langle x, \mbf{J}_{f,\sigma}(x) \rangle = \langle x^{\Trans}, \mbf{J}_{f,\sigma}(x^{\Trans}) \rangle$ for this choice. This completes the proof.
\end{proof}

\paragraph{Quantum Couplings and Establishing Correspondence} By looking at Definition \ref{def:f-quantum-max-corr} and Eq.~\eqref{eq:contraction-coefficient-as-optimization}, it immediately follows that $\sqrt{\eta_{\chi^{2}_{f}}}(\cE,\sigma)$ is directly equal to evaluating the $f$-maximal correlation coefficient on a state if and only if $\Omega_{\cE \circ \mbf{J}_{f,\sigma}} = \rho_{AB}$ such that $\rho_{A} = \sigma$ and $\rho_{B} = \cE(\sigma)$. In effect, this is asking when $\Omega_{\cE \circ \mbf{J}_{f,\sigma}} = \rho_{AB}$ is a a quantum generalization of a coupling, where recall that classically a coupling of two distributions $p_{X}$ and $q_{Y}$ is a joint distribution $r_{XY}$ such that $r_{X} = p_{X}$ and $r_{Y} = q_{Y}$. For this reason, we provide quantum generalizations of this notion.
\begin{definition}[Quantum Couplings]\label{def:quantum-coupling}
    Let $\rho_{A}$ and $\sigma_{B}$ be quantum states. A \textit{relaxed coupling} of $\rho_{A}$ and $\sigma_{B}$ is a Hermitian operator $H_{AB}$ such that $H_{A} = \rho_{A}$ and $H_{B} = \rho_{B}$. If a relaxed coupling is positive semidefinite, then we say it is a \textit{coupling}.
\end{definition}

Next we identify $\Omega_{\cE \circ \mbf{J}_{f,\sigma}}$ in Lemma \ref{eq:map-norm-as-optimization} as always being at least a relaxed coupling.
\begin{proposition}[$f$-couplings]\label{prop:f-couplings}
    Let $f \in \cM_{\text{St}}$, $\cE_{A \to B}$ be a positive, trace-preserving map, and $\sigma_{A} \in \Density(A)$. Define the Choi operator in the eigenbasis of $\sigma_{A}$. Then $\Omega_{\cE \circ \mbf{J}_{f,\sigma}}$ is a relaxed coupling of $\sigma_{A}$ and $\cE(\sigma)_{B}$.
\end{proposition}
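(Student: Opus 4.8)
The plan is to verify directly the three defining features of a relaxed coupling for $H_{AB} \coloneq \Omega_{\cE \circ \mbf{J}_{f,\sigma}}$: that it is Hermitian, that its $A$-marginal equals $\sigma_{A}$, and that its $B$-marginal equals $\cE(\sigma)$. All three follow by combining standard properties of the Choi representation with the elementary identities for $\mbf{J}_{f,\sigma}$ established earlier, so no heavy machinery is needed; the work is essentially bookkeeping at the level of maps.

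First I would dispose of Hermiticity abstractly. A direct computation from \eqref{eq:action-of-Choi} shows that $\Omega_{\Phi}$ is Hermitian whenever $\Phi$ is Hermitian-preserving: one has $\Omega_{\Phi}^{\ast} = \sum_{i,j} \ket{j}\bra{i} \otimes \Phi(\ket{i}\bra{j})^{\ast}$, and using $\Phi(\ket{i}\bra{j})^{\ast} = \Phi(\ket{j}\bra{i})$ followed by relabelling $i \leftrightarrow j$ returns $\Omega_{\Phi}$. Since $\cE$ is positive it is Hermitian-preserving, and since $f \in \cM_{\text{St}}$ is symmetry-inducing, Proposition \ref{prop:symmetry-inducing-equivalences} gives that $\mbf{J}_{f,\sigma}$ is Hermitian-preserving; the composition $\cE \circ \mbf{J}_{f,\sigma}$ is therefore Hermitian-preserving, and hence $H_{AB}$ is Hermitian.

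Next I would compute the two marginals. For the $B$-marginal I would use the identity $\Tr_{A}[\Omega_{\Phi}] = \Phi(\mbb{1})$, which is immediate from \eqref{eq:action-of-Choi} since $\Tr[\ket{i}\bra{j}] = \delta_{ij}$. Thus $H_{B} = \cE(\mbf{J}_{f,\sigma}(\mbb{1}))$, and the Hadamard-product representation \eqref{eq:Hadamard-prod-form} together with normalization $P_{f}(\lambda_{i},\lambda_{i}) = \lambda_{i} f(1) = \lambda_{i}$ (with the convention $P_{f}(0,0) = 0$) gives $\mbf{J}_{f,\sigma}(\mbb{1}) = \sigma$, so $H_{B} = \cE(\sigma)$. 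For the $A$-marginal I would use $\Tr_{B}[\Omega_{\Phi}] = \sum_{i,j}\ket{i}\bra{j}\,\Tr[\Phi(\ket{i}\bra{j})]$; trace-preservation of $\cE$ combined with Proposition \ref{prop:mult-and-div-under-trace} ($\Tr[\mbf{J}_{f,\sigma}(X)] = \Tr[\sigma X]$) yields $\Tr[(\cE\circ\mbf{J}_{f,\sigma})(\ket{i}\bra{j})] = \Tr[\sigma \ket{i}\bra{j}] = \bra{j}\sigma\ket{i}$, so that $H_{A} = \sum_{i,j} \bra{j}\sigma\ket{i}\,\ket{i}\bra{j}$.

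The one point requiring care — and the only place the basis hypothesis enters — is that this last expression equals $\sigma$ rather than $\sigma^{\Trans}$. This is precisely why the Choi operator is declared to be defined in the eigenbasis of $\sigma_{A}$: in that basis the vectors $\ket{i}$ diagonalize $\sigma$, so $\bra{j}\sigma\ket{i} = \lambda_{i}\delta_{ij}$ and $H_{A} = \sum_{i} \lambda_{i} \ket{i}\bra{i} = \sigma$. I expect this basis dependence to be the main (if modest) subtlety; everything else is a routine consequence of normalization, trace-preservation, and the symmetry-inducing property, and nothing requires $\sigma$ to be full rank.
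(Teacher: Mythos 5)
Your proof is correct and follows essentially the same route as the paper's: both verify the $A$- and $B$-marginals using the normalization $f(1)=1$ (so $P_{f}(\lambda_{i},\lambda_{i})=\lambda_{i}$, equivalently $\mbf{J}_{f,\sigma}(\mbb{1})=\sigma$) together with trace-preservation of $\cE$, and both obtain Hermiticity from the chain ``$f$ symmetry-inducing $\Rightarrow$ $\mbf{J}_{f,\sigma}$ Hermitian-preserving (Proposition \ref{prop:symmetry-inducing-equivalences}) $\Rightarrow$ $\cE\circ\mbf{J}_{f,\sigma}$ Hermitian-preserving $\Rightarrow$ $\Omega_{\cE\circ\mbf{J}_{f,\sigma}}$ Hermitian.'' The only cosmetic difference is that you organize the marginal computations through the abstract identities $\Tr_{A}[\Omega_{\Phi}]=\Phi(\mbb{1})$ and Proposition \ref{prop:mult-and-div-under-trace}, which has the small virtue of making explicit the transpose subtlety ($H_{A}=\sigma^{\Trans}=\sigma$ only because the Choi operator is defined in the eigenbasis of $\sigma$) that the paper absorbs silently by expanding termwise in that eigenbasis from the start.
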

\begin{proof}
    Let $\sigma = \sum_{i} \lambda_{i} \dyad{\nu_{i}}$. Then 
    \begin{align}
        \Omega_{\cE \circ \mbf{J}_{f,\sigma}} = \sum_{i,j} \ket{\nu_{i}}\bra{\nu_{j}} \otimes \cE(P_{f}(\lambda_{i},\lambda_{j})\ket{\nu_{i}}\bra{\nu_{j}})
        = \sum_{i,j} P_{f}(\lambda_{i},\lambda_{j})\ket{\nu_{i}}\bra{\nu_{j}} \otimes \cE(\ket{\nu_{i}}\bra{\nu_{j}}) \ ,
    \end{align}
    where we used the perspective function is a scalar to move it to the other term. Using the two equal representations respectively,
    \begin{align}
        \Tr_{A}[ \Omega_{\cE \circ \mbf{J}_{f,\sigma}}] &= \sum_{i,j} \Tr[\ket{\nu_{i}}\bra{\nu_{j}}] \otimes \cE(P_{f}(\lambda_{i},\lambda_{j})\ket{\nu_{i}}\bra{\nu_{j}}) = \sum_{i} \cE(P_{f}(\lambda_{i},\lambda_{i})\ket{\nu_{i}}\bra{\nu_{i}}) = \cE(\sigma_{A})  \\
        \Tr_{B}[ \Omega_{\cE \circ \mbf{J}_{f,\sigma}}] &= \sum_{i,j} P_{f}(\lambda_{i},\lambda_{j})\ket{\nu_{i}}\bra{\nu_{j}} \otimes \Tr[\cE(\ket{\nu_{i}}\bra{\nu_{j}})] = \sum_{i} P_{f}(\lambda_{i},\lambda_{i})\dyad{\nu_{i}} = \sigma_{A}  \ , 
    \end{align}
    where in each case we used linearity of $\cE$ and that $f(1) = 1$. In the second case, we used that $\cE$ is trace-preserving. 

    Finally, by Proposition \ref{prop:symmetry-inducing-equivalences}, since $f$ is symmetry-inducing, $\mbf{J}_{f,\sigma}$ is Hermitian-preserving. As $\cE$ is also Hermitian-preserving, $\cE \circ \mbf{J}_{f,\sigma}$ is Hermitian-preserving. As the Choi operator of a Hermitian-preserving map is Hermitian \cite{WatrousBook}, $\Omega_{\cE \circ \mbf{J}_{f,\sigma}}$ is Hermitian. This completes the proof.
\end{proof}

As $\Omega_{\cE \circ \mbf{J}_{f,\sigma}}$ is generally merely a relaxed coupling, we extend the definition of the maximal correlation coefficient to relaxed couplings.
\begin{definition}\label{def:q-max-corr-coeff-on-coupling}
    Let $f:$ be a normalized, operator monotone function. Let $H_{AB}$ be a relaxed quantum coupling of $\rho_{A}$ and $\rho_{B}$. The $f$-maximal correlation coefficient of the relaxed quantum coupling is
    \begin{equation}
    \begin{aligned}
    \label{eq:f-quantum-max-corr-on-coupling}
        \mu_{f}(A:B)_{H} \coloneq \max \; & \vert \Tr[X \otimes Y^{\ast}H_{\wt{A}\wt{B}}] \vert \\
        \text{s.t.} \; & \langle \mbb{1}, X \rangle_{f,\rho_{\wt{A}}} = \langle \mbb{1}, Y \rangle_{f,\rho_{\wt{B}}} = 0 \\
        \; &  \langle X, X \rangle_{f,\rho_{\wt{A}}} = \langle Y, Y \rangle_{f,\rho_{\wt{B}}} = 1 \ ,
    \end{aligned}
    \end{equation}
    where $\wt{A} \coloneq \supp(\rho_{A})$, $\wt{B} \coloneq \supp(\rho_{B})$, and the maximization is over $X \in \Herm(\wt{A})$, $Y \in \Herm(\wt{B})$.
\end{definition}
We note the following.
\begin{corollary}\label{cor:q-max-corr-coeff-props-on-relaxed-couplings}
    The quantum maximal correlation coefficients evaluated on relaxed quantum couplings are invariant under local isometries and satisfy data processing under local quantum channels.
\end{corollary}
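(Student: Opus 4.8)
The plan is to observe that the corollary requires essentially no new work: the proofs of local isometric invariance (Proposition~\ref{prop:iso-inv-of-q-max-corr}) and data processing (Propositions~\ref{prop:DPI-for-f-correlation-with-supp-restriction} and~\ref{prop:DPI-for-f-correlation}) established for genuine states $\rho_{AB}$ never invoke the positive semidefiniteness of the joint operator. They rely only on two ingredients. First, that the \emph{marginals} $\rho_A,\rho_B$ are genuine quantum states, which is what makes the inner products $\langle\cdot,\cdot\rangle_{f,\rho_A}$, $\langle\cdot,\cdot\rangle_{f,\rho_B}$ well defined and the variance data processing inequality (Item~3 of Proposition~\ref{prop:variance-properties}) applicable. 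Second, the purely algebraic adjoint identities \eqref{eq:DPId-obj-func} and \eqref{eq:DPId-expect-const}, which hold for an arbitrary operator in the bipartite space rather than only a positive one, since they merely move adjoint maps across the trace pairing. Because Definition~\ref{def:q-max-corr-coeff-on-coupling} imposes exactly the same marginal-dependent constraints and the same objective as Definition~\ref{def:f-quantum-max-corr}, both results will transfer once I verify that local processing maps relaxed couplings to relaxed couplings with the expected marginals.

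For isometric invariance, I would take isometries $V_{A\to A'}$, $W_{B\to B'}$ and a relaxed coupling $H_{AB}$ of $\rho_A,\rho_B$, and set $H_{A'B'} \coloneq (V\otimes W)H_{AB}(V\otimes W)^{\ast}$. This is Hermitian with marginals $V\rho_A V^{\ast}$ and $W\rho_B W^{\ast}$ (using $W^{\ast}W=\mbb{1}_B$ for the $A'$ marginal and symmetrically for $B'$), which are again quantum states, so $H_{A'B'}$ is a relaxed coupling of them. The variance-matching computation of Proposition~\ref{prop:iso-inv-of-q-max-corr} depends only on the spectral data of the marginals, so it applies verbatim and yields $\mu_f(A:B)_H = \mu_f(A':B')_{H'}$, and likewise for $\mu_f^{\text{Lin}}$.

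For data processing, I would first treat the case of full-rank input marginals. If $\cE_{A\to A'}$, $\cF_{B\to B'}$ are quantum channels, then their adjoints $\cE^{\ast},\cF^{\ast}$ are unital and completely positive, hence unital Schwarz maps, so $\cE,\cF$ are adjoints of unital Schwarz maps and the hypotheses of the state-level propositions are met. The operator $(\cE\otimes\cF)(H_{AB})$ is Hermitian (channels are Hermitian-preserving) with marginals $\cE(\rho_A)$ and $\cF(\rho_B)$, which are states, hence it is a relaxed coupling. The rescaling argument of Proposition~\ref{prop:DPI-for-f-correlation-with-supp-restriction}---pull optimizers $X,Y$ for the processed coupling back to $\cE^{\ast}(X),\cF^{\ast}(Y)$ and rescale using $\Var_{f,\rho_A}(\cE^{\ast}(X))\le \Var_{f,\cE(\rho_A)}(X)$ while the expectation constraints are preserved by \eqref{eq:DPId-expect-const}---then produces a feasible point for $\mu_f(A:B)_H$ of value at least $\mu_f(A':B')_{(\cE\otimes\cF)(H)}$. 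The general (non-full-rank) case follows by composing with the local support isometries as in Proposition~\ref{prop:DPI-for-f-correlation}, now invoking the relaxed-coupling isometric invariance just established.

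The closest thing to an obstacle is purely bookkeeping: confirming that each local map sends a relaxed coupling to a relaxed coupling with the processed marginals, so that the constraints of Definition~\ref{def:q-max-corr-coeff-on-coupling} remain well posed on the image. Once this is checked, every step of the original proofs goes through unchanged, since none of them ever used positivity of the joint operator.
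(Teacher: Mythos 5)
Your proposal is correct and follows essentially the same route as the paper, whose proof likewise consists of the direct check that local operations send relaxed couplings to relaxed couplings with the processed marginals, plus the observation that the state-level proofs of Propositions~\ref{prop:iso-inv-of-q-max-corr} and~\ref{prop:DPI-for-f-correlation} carry over verbatim because Definition~\ref{def:q-max-corr-coeff-on-coupling} already restricts to the supports of the marginals and never uses positivity of the joint operator. Your write-up is simply a more explicit version of the paper's two-line argument, correctly identifying that only the marginals being states (for the inner products and the variance DPI) and the adjoint identities \eqref{eq:DPId-obj-func}--\eqref{eq:DPId-expect-const} are ever needed.
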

\begin{proof}
    A direct calculation shows that if $H_{AB}$ is a coupling of $\rho_{A},\sigma_{B}$, then $(\cE \otimes \cF)(H_{AB})$ is a coupling of $\cE(\rho),\cF(\sigma)$. As we defined the generalization as a priori restricting to the support of the marginals, one can use all of the same proofs as in Section \ref{subsec:quantum-maximal-corr-coeff-and-DPI}. 
\end{proof}

Finally, we combine the above points to establish Theorem \ref{thm:correspondence-between-contraction-coeffs-and-max-corr-coeffs}.
\begin{proof}[Proof of Theorem \ref{thm:correspondence-between-contraction-coeffs-and-max-corr-coeffs}]
    Considering \eqref{eq:map-norm-as-optimization} and Definition \ref{def:f-quantum-max-corr}, if $\Omega_{\cE \circ \mbf{J}_{f,\sigma}}$ were a relaxed coupling, then \eqref{eq:map-norm-as-optimization} would be evaluating the maximal correlation coefficient on a relaxed coupling as defined in Definition \ref{def:quantum-coupling}. By Proposition \ref{prop:f-couplings}, this is indeed the case. This completes the proof.
\end{proof}

\begin{remark}[$f$-Couplings as Quantum States over Time]\label{rem:QSOT-from-NC-Prob}
    There is a somewhat extensive literature on ``quantum states over time" (see \cite{Leifer_2013,Fullwood_2022,parzygnat2023time} and references therein). At a formal level, the idea boils down to there existing a mapping from the space of Choi operators of quantum channels and unipartite quantum states, i.e. $\star: \Channel(A,B) \times \Density(A) \to \Herm(A \otimes B)$ where the image of this map is identified as the space of `quantum states over time.' It is not hard to show some pre-existing proposals are simply using a specific choice of $f$. To see this, first note that a direct calculation will verify $\mbf{J}_{f,\rho_{A} \otimes I} = \mbf{J}_{f,\rho_{A}} \otimes \id_{B}$. Then, as examples, using that $\mbf{J}_{f_{AM},\rho}[X] = \frac{1}{2}(\rho_{A}X + X\rho_{A})$ and $\mbf{J}_{f_{GM},\rho}[X] = \rho_{A}^{1/2}X\rho_{A}^{1/2}$, we may conclude that the $\star$ operation of Fullwood and Parzygnat is $\Omega_{\cE} \star \rho_{A} \coloneq \mbf{J}_{f_{AM},\rho} \otimes \id_{B}[\Omega_{\cE}]$ and the corresponding $\star$ operation of Leifer-Spekkens \cite{Leifer_2013} would be $\Omega_{\cE} \star \rho_{A} \coloneq \mbf{J}_{f_{GM},\rho_{A}}\otimes \id_{B}[\Omega_{\cE}]$. One can generalize this construction by picking $f \in \cM_{\text{St}}$ and defining $\star_{f}$ by the action $\Omega_{\cE} \star_{f} \rho_{A} = \mbf{J}_{f,\rho \otimes I}(\Omega_{\cE})$. Such a generalization relates to our $f$-couplings by noting
    \begin{align}
        \Omega_{\cE \circ \mbf{J}_{f,\rho}} = (\id_{A} \otimes \cE \circ \mbf{J}_{f,\rho_{A}})(\Phi^{+}) = (\mbf{J}_{f,\rho_{A}} \otimes \cE)(\Phi^{+}) = (\mbf{J}_{f,\rho_{A}} \otimes \id_{B})(\Omega_{\cE}) = \mbf{J}_{f,\rho_{A} \otimes I}(\Omega_{\cE}) \ ,
    \end{align}
    where we moved $\mbf{J}_{f,\rho_{A}}$ using that the Choi operator is defined in the eigenbasis of $\rho_{A}$. In other words, $f$-couplings may be viewed as formally motivated `quantum states over time,' although we are using the basis choice to avoid a transpose.\footnote{We thank Afham for asking us questions that made it clear to us this is what the quantum states over time researchers are working on.} 
    
    Lastly, we remark that as each standard operator monotone $f$ induces an alternative notion of Petz recovery map (see Definition \ref{def:Schrod-reversal}), this is likely the direct way of identifying well-defined quantum generalizations of Bayes' rule from `time-reversal maps' in \cite{parzygnat2023time}.
\end{remark}

\paragraph{Strengthened Correspondence for \texorpdfstring{$\eta_{\chi^{2}_{GM}}$}{}}
Here we provide the proof of Corollary \ref{cor:contraction-for-sandwiched-case} which was not explained at the start of the section.

\begin{proof}[Proof of Corollary \ref{cor:contraction-for-sandwiched-case}]
   The first claim follows from Theorem \ref{thm:correspondence-between-contraction-coeffs-and-max-corr-coeffs}, the identification in \eqref{eq:choi-of-GM-J-operator}, that purifications are equivalent under local isometries (cf.~\cite{WatrousBook}), and the geometric mean maximal correlation coefficient is invariant under local isometries even when extended to relaxed couplings (Corollary \ref{cor:q-max-corr-coeff-props-on-relaxed-couplings}). The `furthermore' statement holds because the identification in \eqref{eq:identification-of-rhoAB-via-GM} always exists for some choice of quantum channel by Proposition \ref{prop:every-joint-state-is-a-degraded-purif}.
\end{proof}

\begin{remark}\label{remark:bounding-contraction-coeff}
    In Section \ref{subsubsec:q-max-corr-bounded-above-by-one}, we bounded the maximal correlation coefficients evaluated on quantum states for $f \leq f_{GM}$ above by one using the adjoint of the Petz recovery map. It is not hard to see that proof method can be generalized using the Heisenberg time-reversal map in Definition \ref{def:Heisenberg-time-reversal-map}. However, for this to work for \textit{all} quantum states, one would need to relate $\rho_{AB}$ to the coupling induced by $\mbf{J}_{f,\rho_{A}}$ as done in \eqref{eq:identify-rho-AB-with-Jfgm} for $f = f_{GM}$. It does not seem possible to do this in general, which is why it is important to work with $f_{GM}$ and use the ordering between maximal correlation coefficients. 
\end{remark}

\subsection{Contraction Coefficients, the Schr\"{o}dinger Time Reversal Map, and Eigenvalues}\label{sec:refined-eigenvalues}

While, as seen in Lemma \ref{lem:contraction-as-map-norm}, the Heisenberg time reversal map is useful for clarifying the relation between $f$-maximal correlation coefficients, it turns out using the ``Schr\"{o}dinger" picture of time-reversal instead has its own uses. Namely, it allows us to identify the contraction coefficient $\eta_{\chi^{2}_{f}}(\cE,\sigma)$ as the eigenvalue of the Schr\"{o}dinger time-reversal map applied to $\cE(\sigma)$. This will be useful in studying the saturation of the data processing inequality in the subsequent subsection. This eigenvalue characterization was investigated in \cite{Cao-2019a}, and the eigenvalue characterization is more generally known, see e.g. \cite{Lesniewski-1999a,Temme-2010a, hiai2016contraction}. However, we refine \cite[Lemma 10]{Cao-2019a}, which will be useful for our main results, as well as clarify the relation between the Schr\"{o}dinger and Heisenberg time-reversal maps. For these reasons, we present this material briefly, but highlight what has been done previously so as to not overemphasize our contribution in this regard.

\paragraph{Schr\"{o}dinger Time Reversal Map (Recovery Maps)} We begin with the definition. 
\begin{tcolorbox}[width=\linewidth, sharp corners=all, colback=white!95!black, boxrule=0pt,frame hidden]
\begin{definition}\label{def:Schrod-reversal}
For an operator monotone $f$, positive, trace-preserving $\cE_{A \to B}$ and state $\sigma_{A}$, we define the ``Schr\"{o}dinger Time Reversal" map:
\begin{align}\label{eq:Schrod-reversal-map}
    \cS_{f,\cE,\sigma} \coloneq \mbf{J}_{f,\sigma} \circ \cE^{\ast} \circ \mbf{J}^{-1}_{f,\cE(\sigma)} \ .
\end{align}
\end{definition}
\end{tcolorbox}
\noindent Operationally, the `value' of these maps is they return $\cE(\sigma)$ to $\sigma$. It is in this sense that they are `Schr\"{o}dinger' time reversal maps as they act on the state rather than the observables. It is also in this sense that they can be viewed as recovery maps. In fact, $\cS_{GM,\cE,\sigma} = \cP_{\cE,\sigma}$, i.e. the Petz recovery map as may be verified via \eqref{eq:Petz-recovery-defn}. Given $\cR_{GM,\cE,\sigma} = \cP^{\ast}_{\cE,\sigma}$, this would suggest there exists a relation to the Heisenberg time reversal maps in general. These ideas are captured in the following proposition. 
\begin{proposition}\label{prop:schrodinger-reversal-map}
    For any operator monotone function $f$, positive map $\cE_{A \to B}$, and state $\sigma_{A}$,
    \begin{enumerate}
        \item If $\cE(\sigma)>0$, $\cS_{f,\cE,\sigma}(\cE(\sigma)) = \sigma$ 
        \item If $\sigma > 0$, $\cS_{f,\cE,\sigma}$ is the adjoint map of $\cR_{f,\cE,\sigma}$ with respect to HS inner product,
        \item If $\sigma > 0$, $\cS_{f,\cE,\sigma}$ is a trace-preserving map.
    \end{enumerate}
\end{proposition}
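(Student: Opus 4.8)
The plan is to prove the three claims about the Schrödinger time-reversal map $\cS_{f,\cE,\sigma} = \mbf{J}_{f,\sigma} \circ \cE^{\ast} \circ \mbf{J}_{f,\cE(\sigma)}^{-1}$ by direct manipulation using the properties of the $\mbf{J}_{f,\sigma}$ operators established earlier, especially Proposition \ref{prop:mult-and-div-under-trace} and Proposition \ref{prop:J-operator-self-adjoint}. Each claim is short enough that the whole proposition should follow from a few lines of operator algebra.

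For Item 1, I would compute $\cS_{f,\cE,\sigma}(\cE(\sigma))$ directly. Since $\cE(\sigma) > 0$ commutes with itself, we have $\mbf{J}_{f,\cE(\sigma)}^{-1}(\cE(\sigma)) = \mbb{1}$ by the normalization of $f$ (this is the commuting case where $\mbf{J}_{f,\cE(\sigma)}$ acts as multiplication by $\cE(\sigma)$). Then $\cE^{\ast}(\mbb{1}) = \mbb{1}$ because $\cE$ is trace-preserving, so its adjoint is unital. Finally $\mbf{J}_{f,\sigma}(\mbb{1}) = \sigma$, again by the commuting-case identity and normalization. Chaining these gives $\cS_{f,\cE,\sigma}(\cE(\sigma)) = \sigma$.

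For Item 2, the claim is that $\cS_{f,\cE,\sigma}$ is the HS-adjoint of $\cR_{f,\cE,\sigma} = \mbf{J}_{f,\cE(\sigma)}^{-1} \circ \cE \circ \mbf{J}_{f,\sigma}$. I would use that the adjoint of a composition reverses order and take adjoints factor by factor: $(\cR_{f,\cE,\sigma})^{\ast} = \mbf{J}_{f,\sigma}^{\ast} \circ \cE^{\ast} \circ (\mbf{J}_{f,\cE(\sigma)}^{-1})^{\ast}$. Here $\cE^{\ast}$ is the HS-adjoint by definition, and both $\mbf{J}_{f,\sigma}$ and $\mbf{J}_{f,\cE(\sigma)}^{-1}$ are self-adjoint with respect to the HS inner product by Proposition \ref{prop:J-operator-self-adjoint}. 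This yields exactly $\mbf{J}_{f,\sigma} \circ \cE^{\ast} \circ \mbf{J}_{f,\cE(\sigma)}^{-1} = \cS_{f,\cE,\sigma}$.

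For Item 3, I would show $\Tr[\cS_{f,\cE,\sigma}(X)] = \Tr[X]$ for all $X$. Using Proposition \ref{prop:mult-and-div-under-trace}, $\Tr[\mbf{J}_{f,\sigma}(Y)] = \Tr[\sigma Y]$; applying this with $Y = (\cE^{\ast} \circ \mbf{J}_{f,\cE(\sigma)}^{-1})(X)$ converts the outer $\mbf{J}_{f,\sigma}$ into $\Tr[\sigma \, \cE^{\ast}(\mbf{J}_{f,\cE(\sigma)}^{-1}(X))]$, which by the adjoint relation equals $\Tr[\cE(\sigma)\, \mbf{J}_{f,\cE(\sigma)}^{-1}(X)]$. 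Then the identity $\Tr[\sigma' \mbf{J}_{f,\sigma'}^{-1}(X)] = \Tr[\Pi_{\supp(\sigma')} X \Pi_{\supp(\sigma')}]$ from Proposition \ref{prop:mult-and-div-under-trace} applied to $\sigma' = \cE(\sigma)$ reduces this to a projected trace; the subtlety I expect to be the main obstacle is handling the support of $\cE(\sigma)$ when it is not full rank, since trace-preservation on all of $\Lin(A)$ requires that the projection $\Pi_{\supp(\cE(\sigma))}$ not discard any trace. I would resolve this by noting that $\mbf{J}_{f,\cE(\sigma)}^{-1}$ already projects onto $\supp(\cE(\sigma))$ in the pseudoinverse sense of \eqref{eq:J-pseudoinverse}, so $\Pi_{\supp(\cE(\sigma))} X \Pi_{\supp(\cE(\sigma))}$ appears consistently and the trace matches $\Tr[X]$ exactly under the assumption $\sigma > 0$, which guarantees the supports align so that no trace is lost.
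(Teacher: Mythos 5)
Your proofs of Items 1 and 2 are correct and are essentially the paper's own arguments. Item 1 is the identical three-step chain: $\mbf{J}^{-1}_{f,\cE(\sigma)}(\cE(\sigma)) = \mbb{1}$ because $\cE(\sigma)$ commutes with itself, $\cE^{\ast}(\mbb{1}) = \mbb{1}$ because $\cE$ is trace-preserving, and $\mbf{J}_{f,\sigma}(\mbb{1}) = \sigma$. Item 2 is the same factor-by-factor adjoint computation resting on the self-adjointness of the $\mbf{J}^{p}_{f,\sigma}$ operators (Proposition \ref{prop:J-operator-self-adjoint}); the paper writes it out through the inner product, $\langle X, \cS_{f,\cE,\sigma}(Y)\rangle = \langle \cR_{f,\cE,\sigma}(X), Y\rangle$, rather than through the rule $(\cA \circ \cB)^{\ast} = \cB^{\ast} \circ \cA^{\ast}$, but the content is identical.

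For Item 3 you take a genuinely different route: the paper dispatches it in one line from Item 2 together with the unitality of $\cR_{f,\cE,\sigma}$ (Item 1 of Proposition \ref{prop:reversal-map-expectation-preserving}), since the HS-adjoint of a unital map is trace-preserving, whereas you compute $\Tr[\cS_{f,\cE,\sigma}(X)]$ directly via Proposition \ref{prop:mult-and-div-under-trace}. Your computation is correct up to $\Tr[\cE(\sigma)\,\mbf{J}^{-1}_{f,\cE(\sigma)}(X)] = \Tr[\Pi_{\supp(\cE(\sigma))}X\Pi_{\supp(\cE(\sigma))}]$, and you rightly flag the support of $\cE(\sigma)$ as the crux. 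But your resolution of that crux is wrong: the hypothesis $\sigma > 0$ controls $\supp(\sigma)$, not $\supp(\cE(\sigma))$, and a positive trace-preserving $\cE$ can destroy rank. Concretely, take $\cE$ to be the replacer channel onto $\dyad{0}$ in dimension two with $\sigma = \mbb{1}/2 > 0$. Then $\cE(\sigma) = \dyad{0}$, and with the pseudoinverse convention \eqref{eq:J-pseudoinverse} one computes for any normalized $f$ that $\cE^{\ast}(\mbf{J}^{-1}_{f,\cE(\sigma)}(X)) = X_{00}\mbb{1}$, hence $\Tr[\cS_{f,\cE,\sigma}(X)] = X_{00}$, while $\Tr[X] = X_{00} + X_{11}$. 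So trace preservation on all of $\Lin(B)$ genuinely requires $\cE(\sigma) > 0$ (or restricting inputs to operators supported in $\supp(\cE(\sigma))$); the assumption $\sigma > 0$ does not do the work you assign to it.

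It is only fair to note that the paper's route buries the same assumption: its unitality claim $\cR_{f,\cE,\sigma}(\mbb{1}) = \mbf{J}^{-1}_{f,\cE(\sigma)}(\cE(\sigma)) = \mbb{1}$ also yields only $\Pi_{\supp(\cE(\sigma))}$ in the pseudoinverse sense, so both arguments are complete exactly when $\cE(\sigma)$ is full rank. The clean fix for your write-up is to add the hypothesis $\cE(\sigma) > 0$ to your Item 3 (mirroring Item 1), or to state trace preservation only on $\Lin(\supp(\cE(\sigma)))$ — not to invoke $\sigma > 0$.
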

\begin{proof}
    Item 1 follows $ (\mbf{J}_{f,\sigma} \circ \cE^{\ast} \circ \mbf{J}^{-1}_{f,\cE(\sigma)})(\cE(\sigma)) = (\mbf{J}_{f,\sigma} \circ \cE^{\ast})(\mbb{1}) = \mbf{J}_{f,\sigma}(\mbb{1}) = \sigma$ where we used that $\cE^{\ast}$ is unital and that $\cE(\sigma),\sigma$ commute with themselves. 
    
    Item 2 follows from
    $$ \langle X , \cS_{f,\cE,\sigma}(Y) \rangle = \langle X , (\mbf{J}_{f,\sigma} \circ \cE^{\ast} \circ \mbf{J}^{-1}_{f,\cE(\sigma)})(Y) \rangle = \langle (\mbf{J}^{-1}_{f,\cE(\sigma)} \circ \cE^{\ast} \circ \mbf{J}_{f,\sigma})(X),Y \rangle = \langle \cR_{f,\cE,\sigma}(X),Y \rangle \ , $$
    where the first equality is \eqref{eq:Schrod-reversal-map}, the second is the definition of adjoint map used iteratively as well as the maps $\mbf{J}_{f,\sigma},\mbf{J}_{f,\sigma}^{-1}$ being self-adjointed (Proposition \ref{prop:J-operator-self-adjoint}), and the second equality is \eqref{eq:Heis-reversal-map}. Item 3 follows from Item 2 and that Proposition \ref{prop:reversal-map-expectation-preserving} shows $\cR_{f,\cE,\sigma}$ is unital.
\end{proof}

\paragraph{\texorpdfstring{$\chi^{2}$}{}-Contraction and Eigenvalues}
We now can state other characterizations of $\chi^{2}_{f}$ input-dependent contraction coefficients. 
\begin{lemma}\label{lem:contraction-coeff-to-eig}Let $\cE(\sigma) > 0,\sigma \in \Density_{+}(A)$, and $f$ be a symmetry-inducing operator monotone. Consider $\cS_{f,\cE,\sigma} \circ \cE: (\Herm(A), \langle \cdot , \cdot \rangle_{f,\sigma}^{\star}) \to (\Herm(A), \langle \cdot , \cdot \rangle_{f,\sigma}^{\star})$.
\begin{align}
    \eta_{\chi^{2}_{f}}(\cE,\sigma) = \lambda_{2}( \cS_{f,\cE,\sigma} \circ \cE) = \Vert \cE: \Herm_{0}(A) \to \Herm_{0}(B) \Vert_{\mbf{J}_{f,\sigma}^{-1} \to \mbf{J}_{f,\cE(\sigma)}^{-1}}^{2} \ ,
\end{align}
where $\Herm_{0}(A) \coloneq \{H \in \Herm(A) : \Tr[H] = 0\}$. Moreover, $\lambda_{1}(\cS_{f,\cE,\sigma} \circ \cE) = 1$ with eigenvector $\sigma$ and the remaining eigenvectors are traceless, Hermitian operators and the norm may be expressed as
\begin{equation}\label{eq:contraction-coefficient-as-another-optimization}
        \begin{aligned}
            \max & \; \Big \vert \Tr[X \otimes Y^{\ast} \Omega_{\mbf{J}_{f,\cE(\sigma)}^{-1}\circ \cE}] \Big \vert \\
            \text{s.t.} 
        \; &  \langle X, X \rangle_{f,\sigma_{A}}^{\star} = \langle Y, Y \rangle_{f,\cE(\sigma)}^{\star} = 1 \ \\
        \; & \Tr[X] = \Tr[Y] = 0 \ , 
        \end{aligned}
    \end{equation}
    where $X,Y$ are Hermitian.
\end{lemma}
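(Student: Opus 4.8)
The plan is to reduce the entire statement to the spectral theorem for the single operator $\Phi \coloneq \cS_{f,\cE,\sigma} \circ \cE$ acting on $(\Herm(A), \langle \cdot, \cdot \rangle^{\star}_{f,\sigma})$, which is a genuine real Hilbert space because $f$ is symmetry-inducing (Proposition \ref{prop:orthogonal-Herm-is-Hilbert}). First I would rewrite the contraction coefficient purely in terms of $\star$-norms. Using $\chi^{2}_{f}(\rho \Vert \sigma) = \langle \rho - \sigma, \mbf{J}^{-1}_{f,\sigma}(\rho - \sigma) \rangle = \Vert \rho - \sigma \Vert^{\star 2}_{f,\sigma}$ and substituting $H \coloneq \rho - \sigma$, the scale-invariance of the defining ratio together with $\sigma \in \Density_{+}(A)$ (so $\sigma + \ve H$ is a state for small $\ve$) extends the supremum from states $\rho \neq \sigma$ to all of $\Herm_{0}(A) \setminus \{0\}$. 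Since $\cE$ is positive and trace-preserving it maps $\Herm_{0}(A)$ into $\Herm_{0}(B)$, giving the third equality $\eta_{\chi^{2}_{f}}(\cE,\sigma) = \Vert \cE \Vert^{2}_{\mbf{J}^{-1}_{f,\sigma} \to \mbf{J}^{-1}_{f,\cE(\sigma)}}$ directly as the squared operator norm between the two $\star$-Hilbert spaces on traceless operators.

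Next I would analyze $\Phi$ itself. A one-line computation using $\sigma > 0$ gives $\mbf{J}^{-1}_{f,\sigma} \circ \Phi = \cE^{\ast} \circ \mbf{J}^{-1}_{f,\cE(\sigma)} \circ \cE$, which is manifestly Hilbert--Schmidt self-adjoint because $\mbf{J}^{-1}_{f,\cE(\sigma)}$ is self-adjoint (Proposition \ref{prop:J-operator-self-adjoint}); this is exactly the condition for $\Phi$ to be self-adjoint with respect to $\langle \cdot, \cdot \rangle^{\star}_{f,\sigma}$. Moreover $\langle H, \Phi H \rangle^{\star}_{f,\sigma} = \Vert \cE(H) \Vert^{\star 2}_{f,\cE(\sigma)} \geq 0$, so $\Phi$ is positive semidefinite, and the Rayleigh-quotient characterization of the largest eigenvalue identifies the supremum from the previous paragraph with $\lambda_{\max}(\Phi|_{\Herm_{0}(A)})$.

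The key structural step is to locate the eigenvalue $1$. By Item 1 of Proposition \ref{prop:schrodinger-reversal-map}, $\Phi(\sigma) = \cS_{f,\cE,\sigma}(\cE(\sigma)) = \sigma$, so $\sigma$ is an eigenvector with eigenvalue $1$. Its $\star$-orthogonal complement is exactly $\Herm_{0}(A)$, since for Hermitian $X$ we have $\langle X, \sigma \rangle^{\star}_{f,\sigma} = \langle X, \mbf{J}^{-1}_{f,\sigma}(\sigma) \rangle = \langle X, \mbb{1} \rangle = \Tr[X]$, which vanishes iff $X$ is traceless. Because $\cE$ and $\cS_{f,\cE,\sigma}$ are both trace-preserving (Item 3 of Proposition \ref{prop:schrodinger-reversal-map}), $\Phi$ preserves $\Herm_{0}(A)$, so it respects the orthogonal splitting $\Herm(A) = \linspan\{\sigma\} \oplus \Herm_{0}(A)$; its spectrum is therefore $\{1\}$ together with the eigenvalues of $\Phi|_{\Herm_{0}(A)}$, whose eigenvectors are all traceless Hermitians. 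Invoking the data processing inequality (Item 1 of Proposition \ref{prop:chi-squared-properties}) gives $\eta_{\chi^{2}_{f}}(\cE,\sigma) \leq 1$, which forces the eigenvalue $1$ to be the largest, $\lambda_{1}(\Phi) = 1$, and hence $\lambda_{2}(\Phi) = \lambda_{\max}(\Phi|_{\Herm_{0}(A)}) = \eta_{\chi^{2}_{f}}(\cE,\sigma)$. I expect this to be the delicate point: it is where the data processing inequality is genuinely needed, so the ordering $\lambda_{1} = 1 \geq \lambda_{2}$ implicitly relies on the hypotheses under which $\chi^{2}_{f}$ contracts, whereas the algebraic steps (self-adjointness, the splitting) only use positivity, trace-preservation, and $\sigma, \cE(\sigma) > 0$.

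Finally, the optimization form follows by applying Lemma \ref{lem:map-norms-as-optimizations} to $\cE$ between the $\star$-Hilbert spaces on $\Herm_{0}(A)$ and $\Herm_{0}(B)$, which expresses $\sqrt{\eta}$ as a maximum of $\vert \langle Y, \cE(X) \rangle^{\star}_{f,\cE(\sigma)} \vert = \vert \langle Y, (\mbf{J}^{-1}_{f,\cE(\sigma)} \circ \cE)(X) \rangle \vert$ over unit-$\star$-norm $X, Y$. Rewriting the Hilbert--Schmidt inner product through the Choi identity \eqref{eq:action-of-Choi} turns this into $\vert \Tr[X^{\Trans} \otimes Y^{\ast}\, \Omega_{\mbf{J}^{-1}_{f,\cE(\sigma)} \circ \cE}] \vert$, and choosing the transpose in the eigenbasis of $\sigma$ removes it exactly as in the proof of Lemma \ref{lem:contraction-as-map-norm}; the constraints $X \in \Herm_{0}(A)$, $Y \in \Herm_{0}(B)$ become $\Tr[X] = \Tr[Y] = 0$, matching the claimed program.
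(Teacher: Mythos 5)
Your proof is correct and follows essentially the same route as the paper's: reduce the contraction coefficient to a Rayleigh quotient over traceless Hermitian operators, identify $\sigma$ as the top (eigenvalue-one) eigenvector of the self-adjoint, positive semidefinite map $\cS_{f,\cE,\sigma} \circ \cE$ on $(\Herm(A), \langle \cdot , \cdot \rangle_{f,\sigma}^{\star})$, and obtain the optimization form from Lemma \ref{lem:map-norms-as-optimizations} together with the Choi identity, removing the transpose by fixing it in the eigenbasis of $\sigma$. The only cosmetic divergence is at the spectrum bound, where you use the $\chi^{2}_{f}$ data processing inequality plus invariance of $\Herm_{0}(A)$ under the trace-preserving map, whereas the paper uses $\sqrt{\cS_{f,\cE,\sigma} \circ \cE} \leq \id$ (Proposition \ref{prop:Schrod-map-composed-properties}) and tracelessness of the eigenvectors $\star$-orthogonal to $\sigma$ --- both rest on the same Petz monotonicity and the same implicit hypothesis that $\cE$ is the adjoint of a unital Schwarz map, which you correctly flag.
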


Technically, Lemma \ref{lem:contraction-coeff-to-eig} is a slightly improved version of \cite[Lemma 10]{Cao-2019a}. This is because our version establishes that the relevant space is only the Hermitian operators, which will be useful subsequently, as well as provides equivalent expressions in terms of an operator norm and an optimization problem. To do this, our proof method slightly diverges from that in \cite{Cao-2019a}, but as the proof is long, less novel, and not crucial to understand for the rest of the work, we present it in Appendix \ref{subsec:lemmata-for-chi-square-contraction}. We do however note the following in the main text, which is used to establish the above, will be relevant later, and is slight generalizations of most items of \cite[Lemma 9]{Cao-2019a} with direct proofs.
\begin{proposition}\label{prop:Schrod-map-composed-properties}
    For operator monotone function $f$, $\sigma \in \Density_{+}(A)$, and $\cE_{A \to B}$ being the adjoint of a unital Schwarz map,
    \begin{enumerate}[itemsep=0pt]
        \item $\cS_{f,\cE,\sigma} \circ \cE$ is a positive semidefinite operator on Hilbert space $(\Lin(A),\langle \cdot , \cdot \rangle_{f,\sigma}^{\star})$. 
        \item $\sqrt{\cS_{f,\cE,\sigma} \circ \cE} \leq \id_{A}$ as a map on Hilbert space $(\Lin(A),\langle \cdot , \cdot \rangle_{f,\sigma}^{\star})$.
    \end{enumerate}
\end{proposition}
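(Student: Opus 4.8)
The plan is to reduce both items to a single computation in the Hilbert--Schmidt picture and then transport the conclusions to the $\star$-inner product. Since $\sigma \in \Density_{+}(A)$, the operator $\mbf{J}_{f,\sigma}$ is invertible, so $\mbf{J}_{f,\sigma}^{-1} \circ \mbf{J}_{f,\sigma} = \id_{A}$ and the composition unwinds as
\begin{align}
    \mbf{J}_{f,\sigma}^{-1} \circ (\cS_{f,\cE,\sigma} \circ \cE) = \cE^{\ast} \circ \mbf{J}_{f,\cE(\sigma)}^{-1} \circ \cE =: M \ .
\end{align}
The key observation is the dictionary between the two inner products: because $\langle X, Y\rangle_{f,\sigma}^{\star} = \langle X, \mbf{J}_{f,\sigma}^{-1}(Y)\rangle$ with $\mbf{J}_{f,\sigma}^{-1}$ Hilbert--Schmidt positive definite, a map $T$ on $(\Lin(A), \langle \cdot, \cdot\rangle^{\star}_{f,\sigma})$ is $\star$-self-adjoint (resp.\ $\star$-positive) if and only if $\mbf{J}_{f,\sigma}^{-1}\circ T$ is Hilbert--Schmidt self-adjoint (resp.\ positive), and $T \leq \id_{A}$ in the $\star$-ordering if and only if $\mbf{J}_{f,\sigma}^{-1}\circ T \leq \mbf{J}_{f,\sigma}^{-1}$ in the Hilbert--Schmidt ordering. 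With $T = \cS_{f,\cE,\sigma} \circ \cE$, the identity above says precisely that $\mbf{J}_{f,\sigma}^{-1}\circ T = M$, so it suffices to analyze $M$.

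For Item 1, I would note that $M$ is manifestly positive semidefinite on $(\Lin(A),\langle\cdot,\cdot\rangle)$: for any $X$,
\begin{align}
    \langle X, M(X)\rangle = \langle \cE(X), \mbf{J}_{f,\cE(\sigma)}^{-1}(\cE(X))\rangle \geq 0 \ ,
\end{align}
since $\mbf{J}_{f,\cE(\sigma)}^{-1}$ is a monotone-metric operator and hence Hilbert--Schmidt positive semidefinite (Fact \ref{fact:monotone-metric}). As the $\star$-space is a complex Hilbert space, $\langle X, T(X)\rangle^{\star}_{f,\sigma} = \langle X, M(X)\rangle \geq 0$ for all $X$ already forces $T$ to be self-adjoint and positive semidefinite with respect to $\langle \cdot, \cdot\rangle^{\star}_{f,\sigma}$; self-adjointness of $M$ can also be read off directly from $\mbf{J}_{f,\cE(\sigma)}^{-1}$ being Hilbert--Schmidt self-adjoint (Proposition \ref{prop:J-operator-self-adjoint}) together with $(\cE^{\ast} N \cE)^{\ast} = \cE^{\ast} N^{\ast} \cE$.

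For Item 2, the square-root bound follows from the operator monotonicity of $\sqrt{\cdot}$ once I show $T \leq \id_{A}$ in the $\star$-ordering. This is exactly the data processing inequality in operator form: since $\cE$ is the adjoint of a unital Schwarz map, Proposition \ref{prop:DPI-for-J-op} gives $M = \cE^{\ast} \circ \mbf{J}_{f,\cE(\sigma)}^{-1} \circ \cE \leq \mbf{J}_{f,\sigma}^{-1}$. Evaluating in the $\star$-inner product, $\langle X, T(X)\rangle^{\star}_{f,\sigma} = \langle X, M(X)\rangle \leq \langle X, \mbf{J}_{f,\sigma}^{-1}(X)\rangle = \langle X, X\rangle^{\star}_{f,\sigma}$ for all $X$, i.e.\ $T \leq \id_{A}$ on the $\star$-Hilbert space. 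Combined with Item 1 ($T \geq 0$), the operator monotone square root yields $\sqrt{T} \leq \id_{A}$. I do not expect a serious obstacle here; the only care needed is the bookkeeping between the two inner products—verifying that the positivity, self-adjointness, and ordering statements proved in the Hilbert--Schmidt picture transport correctly via the map $T \mapsto \mbf{J}_{f,\sigma}^{-1}\circ T$—and confirming that the hypotheses of Proposition \ref{prop:DPI-for-J-op} hold, which is where $\sigma \in \Density_{+}(A)$ and $\cE$ being the adjoint of a unital Schwarz map enter.
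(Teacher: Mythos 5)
Your proposal is correct and follows essentially the same route as the paper: both unwind $\mbf{J}_{f,\sigma}^{-1}\circ\cS_{f,\cE,\sigma}\circ\cE = \cE^{\ast}\circ\mbf{J}_{f,\cE(\sigma)}^{-1}\circ\cE$ to get positivity of the quadratic form in Item 1, and both invoke Proposition \ref{prop:DPI-for-J-op} to obtain $\langle X, (\cS_{f,\cE,\sigma}\circ\cE)(X)\rangle_{f,\sigma}^{\star} \leq \langle X, X\rangle_{f,\sigma}^{\star}$ for Item 2. Your phrasing via $T \leq \id_{A}$ plus operator monotonicity of the square root is equivalent to the paper's direct verification that $\sqrt{\cS_{f,\cE,\sigma}\circ\cE}$ is a contraction in the $\star$-norm, and your explicit bookkeeping of the transport between the Hilbert--Schmidt and $\star$-inner products is a minor elaboration of what the paper leaves implicit.
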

\begin{proof}
    First,
    \begin{align}
        \langle X , \cS_{f,\cE,\sigma} \circ \cE(X) \rangle_{f,\sigma}^{\star} = \langle X, \cE^{\ast} \circ J_{f,\cE(\sigma)}^{-1} \circ \cE(X) \rangle = \langle \cE(X), J_{f,\sigma}^{-1} \circ \cE(X) \rangle = \langle \cE(X), \cE(X) \rangle_{f,\sigma}^{\star} \geq 0 \ ,
    \end{align}
    where we used the definitions of the operators and that it is an inner product. This completes the first item. Second, as $\cS_{f,\cE,\sigma} \circ \cE$ is positive semidefinite, it is self-adjoint and admits a square root operator $\sqrt{\cS_{f,\cE,\sigma} \circ \cE} \eqqcolon \cM$.
    \begin{align}
        \Vert \cM(X) \Vert_{f,\sigma}^{\ast} = \sqrt{\langle X , \cS_{f,\cE,\sigma} \circ \cE(X) \rangle_{f,\sigma}^{\ast}} = \sqrt{\langle X, \cE^{\ast} \circ J_{f,\cE(\sigma)}^{-1} \circ \cE(X) \rangle} \leq \sqrt{\langle X, J_{f,\sigma}^{-1} (X)\rangle} = \Vert X \Vert_{f,\sigma}^{\star} \ ,
    \end{align}
    where the inequality is Proposition \ref{prop:DPI-for-J-op}.
\end{proof}

Furthermore, an unfortunate aspect of the above lemma is the support constraints on $\cE(\sigma)$. This can generally be avoided. This is because we can obtain the relation to the eigenvalue from Theorem \ref{lem:contraction-as-map-norm} and Item 3 of Proposition \ref{prop:schrodinger-reversal-map}. Arguably this suggests the `Heisenberg picture' is more fundamental. As the proof is just moving terms around in the inner product, the proof is also provided in Appendix \ref{subsec:lemmata-for-chi-square-contraction}, but we state the result here.
\begin{corollary}\label{cor:Schroding-er-map-without-rank-constraints}
    Let $f \in \cM_{\text{St}}$ and satisfy the conditions of Proposition \ref{prop:suff-conds-for-restricting-support}. Then for arbitrary positive, trace-preserving map $\cE_{A \to B}$ and $\sigma \in \Density_{+}(A)$, $\eta_{\chi^{2}_{f}}(\cE,\sigma)$ is the second eigenvalue of $\cS_{f,\cE,\sigma} \circ \cE: (\Herm(A), \langle \cdot , \cdot \rangle_{f,\sigma}^{\star}) \to (\Herm(A), \langle \cdot , \cdot \rangle_{f,\sigma}^{\star})$ where the first eigenvalue is always one with eigenvector $\sigma$, and the remaining eigenvectors are traceless.
\end{corollary}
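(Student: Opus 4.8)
The plan is to reduce Corollary \ref{cor:Schroding-er-map-without-rank-constraints} to the already-established Lemma \ref{lem:contraction-as-map-norm} (equivalently Theorem \ref{lem:contraction-as-map-norm}) so as to avoid the full-rank constraint on $\cE(\sigma)$ that the eigenvalue statement of Lemma \ref{lem:contraction-coeff-to-eig} requires. The key observation is that under the hypotheses of Proposition \ref{prop:suff-conds-for-restricting-support}, the operators $\mbf{J}_{f,\cE(\sigma)}$ and $\mbf{J}_{f,\cE(\sigma)}^{-1}$ project onto the support of $\cE(\sigma)$, so that the problematic kernel of $\cE(\sigma)$ simply never contributes and we may work as though the relevant inner products are well-defined on the whole Hermitian space. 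Thus, rather than redoing the eigenvalue calculation (which is what forces $\cE(\sigma)>0$ in Lemma \ref{lem:contraction-coeff-to-eig}), I would exploit the Heisenberg-picture characterization, which holds for all positive trace-preserving $\cE$ and full-rank $\sigma$ alone.

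Concretely, first I would invoke Lemma \ref{lem:contraction-as-map-norm} to write $\eta_{\chi^{2}_{f}}(\cE,\sigma) = \Vert \cR_{f,\cE,\sigma} \Vert^{2}$ as the square of the operator norm of the Heisenberg time-reversal map on the spaces $\Herm_{0,\sigma}(A) \to \Herm_{0,\cE(\sigma)}(B)$ with the $\langle\cdot,\cdot\rangle_{f,\sigma}$ and $\langle\cdot,\cdot\rangle_{f,\cE(\sigma)}$ inner products. Next I would use that for a bounded map between Hilbert spaces, $\Vert \cR_{f,\cE,\sigma}\Vert^{2} = \lambda_{\max}(\cR_{f,\cE,\sigma}^{\dagger}\,\cR_{f,\cE,\sigma})$, where the adjoint $\cR_{f,\cE,\sigma}^{\dagger}$ is taken with respect to these weighted inner products; by Item 2 of Proposition \ref{prop:schrodinger-reversal-map} this adjoint, once translated back through the $\mbf{J}$-operators, is exactly the Schr\"{o}dinger map $\cS_{f,\cE,\sigma}$. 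Composing gives that the squared operator norm equals the top eigenvalue of $\cS_{f,\cE,\sigma}\circ\cE$ restricted to the traceless Hermitian subspace, i.e.~the \emph{second} eigenvalue of $\cS_{f,\cE,\sigma}\circ\cE$ on all of $\Herm(A)$. The conditions of Proposition \ref{prop:suff-conds-for-restricting-support} are what let me pass between the map-norm over $\Herm_{0,\cE(\sigma)}(B)$ and the eigenvalue over the full $\Herm(B)$, since the support projection is automatic.

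Then I would verify the claims about the spectrum: that $\sigma$ is a fixed point with eigenvalue one follows from Item 1 of Proposition \ref{prop:schrodinger-reversal-map} together with $\cS_{f,\cE,\sigma}(\cE(\sigma))=\sigma$ (after noting $\cE(\cdot)$ sends $\sigma$ to $\cE(\sigma)$), and that the remaining eigenvectors are traceless follows because $\cS_{f,\cE,\sigma}\circ\cE$ is self-adjoint on $(\Herm(A),\langle\cdot,\cdot\rangle^{\star}_{f,\sigma})$ by Proposition \ref{prop:Schrod-map-composed-properties}, so its eigenspaces are orthogonal in that inner product, and orthogonality to $\sigma$ under $\langle\cdot,\cdot\rangle^{\star}_{f,\sigma}$ is precisely tracelessness by Proposition \ref{prop:mult-and-div-under-trace}. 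Finally Proposition \ref{prop:Schrod-map-composed-properties} bounds all eigenvalues by one, confirming that the fixed point $\sigma$ is the top eigenvector and the contraction coefficient is the second eigenvalue.

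I expect the main obstacle to be bookkeeping the interplay between the two different weighted inner products and the support restriction cleanly enough that the adjoint identity of Proposition \ref{prop:schrodinger-reversal-map} Item 2 transfers the operator-norm-to-eigenvalue step without any hidden appeal to $\cE(\sigma)>0$. In particular, care is needed because $\cS_{f,\cE,\sigma}$ involves $\mbf{J}^{-1}_{f,\cE(\sigma)}$, which is only a genuine inverse on the support; the hypotheses of Proposition \ref{prop:suff-conds-for-restricting-support} are exactly what guarantee the pseudoinverse behaves consistently, but I would need to state explicitly that all the relevant operators leave $\supp(\cE(\sigma))$ invariant so that the composition $\cS_{f,\cE,\sigma}\circ\cE$ is well-defined and self-adjoint on the full Hermitian space. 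Once that invariance is pinned down, the rest is a transcription of the argument already carried out in Lemma \ref{lem:contraction-as-map-norm} and Lemma \ref{lem:contraction-coeff-to-eig}, which is why this result is relegated to the appendix and stated as a corollary.
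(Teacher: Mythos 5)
Your proposal is correct and takes essentially the same route as the paper: the paper likewise starts from the rank-unconstrained Heisenberg-picture norm characterization of Lemma \ref{lem:contraction-as-map-norm} and passes to the Schr\"{o}dinger picture via the substitution $G = \mbf{J}_{f,\sigma}(H)$, which is exactly the Rayleigh-quotient form of your similarity relation (note the weighted adjoint of $\cR_{f,\cE,\sigma}$ is in fact $\cE^{\ast}$, so that $\cR^{\dagger}\circ\cR = \mbf{J}_{f,\sigma}^{-1}\circ(\cS_{f,\cE,\sigma}\circ\cE)\circ\mbf{J}_{f,\sigma}$), before running the spectral argument of Lemma \ref{lem:contraction-coeff-to-eig} verbatim. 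The one citation to tighten is your appeal to Item 1 of Proposition \ref{prop:schrodinger-reversal-map}, which formally assumes $\cE(\sigma) > 0$; as the paper observes, the fixed-point computation \eqref{eq:Schrodinger-thm-item-2-eigenvector} survives without full rank because $\mbf{J}^{-1}_{f,\cE(\sigma)}(\cE(\sigma)) = \Pi_{\supp(\cE(\sigma))}$ and, since $\sigma$ is full rank and $\cE$ is positive, every $\cE(\rho)$ is supported in $\supp(\cE(\sigma))$ so that $\cE^{\ast}(\Pi_{\supp(\cE(\sigma))}) = \mbb{1}$.
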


 Before moving forward, we suggest what is likely the `best' interpretation of the lemma. The norm optimization program in Lemma \ref{lem:contraction-coeff-to-eig} is provided to highlight it does not relate to $f$-maximal correlation coefficients through the Schr\"{o}dinger picture, and thus the eigenvalue characterization is the most interesting characterization in the Schr\"{o}dinger picture. Next, as noted, we can view $\cS_{f,\cE,\sigma}$ as a `recovery map' with respect to a given inner product, so it is a measure of reversibility, albeit not in a physical sense as $\cS_{f,\cE,\sigma}$ need not be CP. However, by Proposition \ref{prop:schrodinger-reversal-map}, it is the adjoint map of $\cE$ on the Hermitian operators for the given inner product. Then, $\lambda_{2}(\cS_{f,\cE,\sigma} \circ \cE)$ is the square of the second largest singular value of $\cE$ on the Hermitian operators. It thus inherits the traditional, geometric intuition of the singular value: $\eta_{\chi^{2}_{f}}(\cE,\sigma)$ is measuring how $\cE$ shrinks a basis of the Hermitian operators under the inner product $\langle \cdot, \cdot \rangle_{f,\sigma}^{\star}$. The subsequent section will further build upon this idea.

\subsection{Extreme Values, Saturation of Data Processing, and Recoverability}\label{sec:chi-square-extreme-values-and-recoverability}
In this subsection, we are interested in when the data processing inequality is saturated. We begin this investigation by studying the conditions such that $\eta_{\chi^{2}}(\cE,\sigma) \in \{0,1\}$, i.e. the quantity saturates its extreme values. The study of $\eta_{\chi^{2}}(\cE,\sigma)=1$ will motivate the study of the $\chi^{2}_{f}$-divergences saturating the data processing inequality. Equivalent conditions to saturating data processing for each operator monotone $f$ is presented in Theorem \ref{thm:DPI-with-equality}. This is likely the result of broadest interest in this section. This theorem further implies results of previous works \cite{Gao-2023-sufficient-fisher,Jencova-2017a} as shown in Corollaries \ref{cor:suff-for-chi-sq} and \ref{cor:suff-for-SRD}. At the core of the section is the idea that while each $\chi^{2}_{f}$-divergence has its own special recovery map (the Schr\"{o}dinger Time Reversal map), the $\chi^{2}_{GM}$ recovery map (the Petz recovery map) is uniquely central to the theory of recoverability as has been known for decades \cite{petz1986sufficient}.

\paragraph{Contraction Coefficient Extreme Values}

We begin with simple facts about the bounds. Namely, under what conditions we can guarantee the contraction coefficient is within the intuitive parameter range $[0,1]$. Also note that when $\cE$ is a replacer channel, all information is destroyed, so Item 3 of the following proposition captures that the contraction coefficient is non-zero unless all information is destroyed.
\begin{proposition}\label{prop:f-chi-contraction-extreme-values} In all conditions below, let $f$ be an operator monotone, $\sigma \in \Density$ and $\cE$ be positive and trace-preserving. 
    \begin{enumerate}
        \item $0 \leq \eta_{\chi^{2}_{f}}(\cE,\sigma)$.
        \item If $\sigma \in \Density_+$ and (a) $\cE$ is the adjoint of a unital Schwarz map or (b) $f \in \cM_{\text{St}}$ satisfies Proposition \ref{prop:suff-conds-for-restricting-support}, then $\eta_{\chi^{2}_{f}}(\cE,\sigma) \leq 1$.
        \item If $\cE(\sigma) \in \Density_{+}$ or $f$ satisfies Proposition \ref{prop:suff-conds-for-restricting-support}, $\eta_{\chi^{2}_{f}}(\cE,\sigma) = 0$ if and only if $\cE$ is a replacer channel.
        \end{enumerate}
\end{proposition}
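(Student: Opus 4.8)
The three items are of increasing difficulty, and I would tackle them in order using the characterizations already developed. For Item 1, the lower bound $\eta_{\chi^2_f}(\cE,\sigma) \geq 0$ is immediate: the contraction coefficient is a supremum of ratios $\chi^2_f(\cE(\rho)\Vert\cE(\sigma))/\chi^2_f(\rho\Vert\sigma)$, and by Item 2 of Proposition \ref{prop:chi-squared-properties} (or directly from \eqref{eq:chi-squared-f-def} and Proposition \ref{prop:ordering-of-NC-mult}, which gives $\mbf{J}_{f,\sigma}^{-1} \geq 0$) each $\chi^2_f$ is non-negative, so every term in the supremand is a ratio of non-negative quantities and hence non-negative.

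For Item 2, the upper bound $\eta_{\chi^2_f}(\cE,\sigma) \leq 1$ is exactly the statement that $\chi^2_f$ satisfies data processing, which is Item 1 of Proposition \ref{prop:chi-squared-properties}. In case (a), $\cE$ is the adjoint of a unital Schwarz map and $\sigma \in \Density_+$ ensures $\rho \ll \sigma$ for any competing $\rho$, so the hypotheses of that proposition are met and $\chi^2_f(\cE(\rho)\Vert\cE(\sigma)) \leq \chi^2_f(\rho\Vert\sigma)$, forcing the ratio (and hence the supremum) to be at most one. In case (b), where $f \in \cM_{\text{St}}$ satisfies the conditions of Proposition \ref{prop:suff-conds-for-restricting-support}, I would invoke Corollary \ref{cor:Schroding-er-map-without-rank-constraints}: there $\eta_{\chi^2_f}(\cE,\sigma)$ equals the second eigenvalue $\lambda_2(\cS_{f,\cE,\sigma}\circ\cE)$ on the Hilbert space $(\Herm(A),\langle\cdot,\cdot\rangle_{f,\sigma}^\star)$, and by Item 2 of Proposition \ref{prop:Schrod-map-composed-properties} the operator $\sqrt{\cS_{f,\cE,\sigma}\circ\cE} \leq \id_A$, so all eigenvalues of $\cS_{f,\cE,\sigma}\circ\cE$ are at most one, giving $\lambda_2 \leq 1$. (The support condition of Proposition \ref{prop:suff-conds-for-restricting-support} is what lets us bypass the $\cE(\sigma)>0$ requirement and apply the eigenvalue/Schwarz machinery cleanly.)

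Item 3 is the substantive part and where I expect the main obstacle. The ``if'' direction is easy: if $\cE$ is a replacer channel, $\cE(\rho) = \cE(\sigma)$ for all $\rho$, so $\chi^2_f(\cE(\rho)\Vert\cE(\sigma)) = 0$ identically and the supremum is zero. For ``only if,'' suppose $\eta_{\chi^2_f}(\cE,\sigma) = 0$. The cleanest route is via the eigenvalue characterization: under either hypothesis ($\cE(\sigma)\in\Density_+$ gives Lemma \ref{lem:contraction-coeff-to-eig}, or the support condition gives Corollary \ref{cor:Schroding-er-map-without-rank-constraints}), $\eta_{\chi^2_f}(\cE,\sigma) = 0$ means the second singular value of $\cE: \Herm_0(A)\to\Herm_0(B)$ under the $\mbf{J}^{-1}$-weighted inner products vanishes, so $\cE$ annihilates every traceless Hermitian operator, i.e. $\cE(H) = 0$ for all $H\in\Herm_0(A)$. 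Then for any $\rho$, writing $\rho = \sigma + (\rho-\sigma)$ with $\rho-\sigma \in \Herm_0(A)$, linearity gives $\cE(\rho) = \cE(\sigma)$, which is precisely the replacer condition. The delicate point to verify carefully is that the vanishing of $\eta$ genuinely forces the whole restricted map (not just its action detected by $\chi^2_f$) to kill $\Herm_0(A)$: one must check that $\Var_{f,\sigma}[H] = \Vert H\Vert^2_{f,\sigma} \neq 0$ for $H\neq 0$ supported appropriately, so that a nonzero $\cE(H)$ would produce a strictly positive ratio. This uses definiteness of $\langle\cdot,\cdot\rangle_{f,\sigma}$ on $\Lin(A|\sigma)$ together with the reduction to the support guaranteed by the hypotheses, and it is the step I would write out most carefully.
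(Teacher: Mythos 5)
Your Items 1 and 2 track the paper's proof essentially verbatim: Item 1 is the non-negativity of the monotone metric evaluated at $\cE(\rho-\sigma)$, and Item 2 splits exactly as the paper does into the DPI (Proposition~\ref{prop:chi-squared-properties}) for case (a) and Corollary~\ref{cor:Schroding-er-map-without-rank-constraints} for case (b). One small off-hypothesis citation in case (b): you invoke Item 2 of Proposition~\ref{prop:Schrod-map-composed-properties} to bound the eigenvalues, but that proposition is proved under the assumption that $\cE$ is the adjoint of a unital Schwarz map, whereas case (b) only assumes $\cE$ is positive and trace-preserving; the paper avoids this by resting on the corollary alone, whose proof routes through the Heisenberg-picture norm identity of Lemma~\ref{lem:contraction-as-map-norm} (valid for positive TP maps) rather than through the Schwarz-map machinery.

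For Item 3 your ``only if'' direction takes a genuinely different route from the paper, and this is where there is a real gap. The paper argues directly from the definition: since $\eta_{\chi^{2}_{f}}(\cE,\sigma)$ is a supremum of non-negative ratios, $\eta = 0$ forces $\gamma_{f,\cE(\sigma)}(\cE(\rho)-\cE(\sigma),\cE(\rho)-\cE(\sigma)) = 0$ for every input, and definiteness of the monotone metric on the support of $\cE(\sigma)$ (guaranteed by $\cE(\sigma) \in \Density_{+}$ or by Proposition~\ref{prop:suff-conds-for-restricting-support}) yields $\cE(\rho) = \cE(\sigma)$ for all $\rho$. This needs nothing beyond $f$ being operator monotone. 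Your route through the spectral characterization cannot cover the statement as written: Lemma~\ref{lem:contraction-coeff-to-eig} requires $f$ symmetry-inducing together with $\sigma \in \Density_{+}$ and $\cE(\sigma) > 0$, and Corollary~\ref{cor:Schroding-er-map-without-rank-constraints} requires $f \in \cM_{\text{St}}$ and $\sigma \in \Density_{+}$, whereas Item 3 is asserted for an arbitrary operator monotone $f$ (e.g.\ $f_{k}(x) = x^{k}$ with $k \neq 1/2$, which is not symmetry-inducing) and arbitrary $\sigma \in \Density$ subject only to the stated conditions on $\cE(\sigma)$ or $f$. So your proof, as planned, establishes a strictly weaker statement. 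The fix is exactly the ``delicate point'' you flag, made primary rather than auxiliary: drop the eigenvalue detour, observe that every ratio in the supremum vanishes, and apply definiteness of $\langle \cdot,\cdot\rangle^{\star}_{f,\cE(\sigma)}$ on the \emph{output} side (note the relevant definiteness is with respect to $\cE(\sigma)$, not the input-side $\Var_{f,\sigma}$ you point to). Once phrased this way your linearity argument --- $\cE(\rho) = \cE(\sigma) + \cE(\rho - \sigma)$ --- closes the proof in full generality, matching the paper.
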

\begin{remark}\label{rem:bounding-other-max-corr-coeffs}
    Note Item 2 of the above cannot be used to establish $\mu_{f}(A:B)_{\rho} \leq 1$ for $f \leq f_{GM}$ as the correspondence between contraction coefficients and maximal correlation coefficients (Theorem \ref{thm:correspondence-between-contraction-coeffs-and-max-corr-coeffs}) generally uses relaxed quantum couplings (Definition \ref{def:quantum-coupling}) and does not generically cover the space of all joint quantum states for other choices of operator monotone $f$.
\end{remark}
\begin{proof}
    Item 1 follows from $\chi^{2}_{f,\cE(\sigma)} = \gamma_{f,\cE(\sigma)}(\cE(\rho-\sigma),\cE(\rho-\sigma)) \geq 0$ where the inequality is the non-negativity of $\langle A , A \rangle^{\star}_{f,\cE(\sigma)}$. \\

    To establish Item 2, in the case $\cE$ is the adjoint of a unital Schwarz map and $\sigma \in \Density_+$, DPI for $\chi^{2}_{f}$ holds (Item 1 of Proposition \ref{prop:chi-squared-properties}), so it follows via definition. Otherwise one must appeal to Corollary \ref{cor:Schroding-er-map-without-rank-constraints} which guarantees it is upper bounded by one. \\

    To establish Item 3, if $\cE$ is a replacer channel $\cR_{\tau}$, then 
    $$\chi^{2}_{f}(\cR_{\tau}(\rho) \Vert \cR_{\tau}(\sigma)) = \chi^{2}_{f}(\tau \Vert \tau) = \gamma_{f,\tau}(\tau - \tau, \tau - \tau) = \gamma_{f,\tau}(0,0) = 0 \ . $$
    On the other hand, if $\eta_{\chi^{2}_{f}}(\cE,\sigma) = 0$, then any sequence of density matrices $(\rho_{n})_{n}$ must be such that 
    $$\lim_{n \to \infty} \chi^{2}_{f}(\cE(\rho_{n}) \Vert \cE(\sigma)) = \lim_{n \to \infty} \gamma_{f,\cE(\sigma)}(\cE(\rho_{n})-\cE(\sigma),\cE(\rho_{n})-\cE(\sigma)),  \ . $$
    If $\cE(\sigma) \in \Density_+$, then by the non-negativity of a metric, $\gamma_{f,\cE(\sigma)}(X,X) = 0$ if and only if $X = 0$. Thus, all possible supremizing sequences are such that $\lim_{n \to \infty} \cE(\rho_{n}) = \cE(\sigma)$. We may therefore conclude $\cE$ maps all density matrices to $\cE(\sigma)$ as otherwise we would have a contradiction. Therefore, $\cE$ is $\cR_{\tau}$ where $\tau = \cE(\sigma)$. Similarly, if $f$ satisfies Proposition \ref{prop:suff-conds-for-restricting-support}, then we are only interested in $\Pi_{\supp(\cE(\sigma))}\cE(\rho_{n})\Pi_{\supp(\cE(\sigma))}$, so the same argument goes through. 
\end{proof}

We will now show that when considering the extreme values, $\chi^{2}_{GM}$ is the most important $\chi^{2}_{f}$ contraction coefficient as it captures the rest being equality as well. This will make use of the following lemma that relies on the eigenvalue interpretation of the contraction coefficient (Lemma \ref{lem:contraction-coeff-to-eig}). 
\begin{tcolorbox}[width=\linewidth, sharp corners=all, colback=white!95!black, boxrule=0pt,frame hidden]
\begin{lemma}\label{lem:chi-squared-contraction-unity}
    Let $\sigma \in \Density_{+}(A)$ and $f$ be symmetry-inducing. If (a) $\cE(\sigma) \in \Density_{+}$ and $\cE$ is the adjoint of a unital Schwarz map or (b) $f \in \cM_{\text{St}}$ satisfies Proposition \ref{prop:suff-conds-for-restricting-support} and $\cE$ is positive, then $\eta_{\chi^{2}_{f}}(\cE,\sigma) = 1$ if and only if there exists $\Density(A) \ni \rho \neq \sigma$ such that $\rho \ll \sigma$ and $(\cS_{f,\cE,\sigma} \circ \cE)(\rho) = \rho$.
\end{lemma}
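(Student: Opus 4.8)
The plan is to reduce the statement to the spectral structure of the composite map $\cS_{f,\cE,\sigma} \circ \cE$ that is already established in Lemma~\ref{lem:contraction-coeff-to-eig} (under hypothesis (a)) and in its rank-free refinement Corollary~\ref{cor:Schroding-er-map-without-rank-constraints} (under hypothesis (b)). In both cases these results view $\cS_{f,\cE,\sigma} \circ \cE$ as a self-adjoint operator on the real Hilbert space $(\Herm(A), \langle \cdot, \cdot \rangle^{\star}_{f,\sigma})$, identify its largest eigenvalue as $\lambda_{1} = 1$ with eigenvector $\sigma$, record that every other eigenvector is traceless, and give $\eta_{\chi^{2}_{f}}(\cE,\sigma) = \lambda_{2}$, the second-largest eigenvalue. (Self-adjointness on the Hermitian operators is the identity $\langle X, (\cS_{f,\cE,\sigma}\circ\cE)(Y)\rangle^{\star}_{f,\sigma} = \langle \cE(X), \cE(Y)\rangle^{\star}_{f,\cE(\sigma)}$, which is symmetric in $X,Y$ since $f$ is symmetry-inducing, cf.\ Proposition~\ref{prop:symmetry-inducing-equivalences}.) Since $1$ is the top eigenvalue, the whole claim collapses to: $\lambda_{2}=1$ if and only if the eigenvalue $1$ has multiplicity at least two, i.e.\ there is an eigenvector with eigenvalue $1$ linearly independent of $\sigma$.

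For the forward direction, assuming $\eta_{\chi^{2}_{f}}(\cE,\sigma) = \lambda_{2} = 1$, I would choose an eigenvector $H \in \Herm(A)$ for eigenvalue $1$ that is orthogonal to $\sigma$ in $\langle\cdot,\cdot\rangle^{\star}_{f,\sigma}$; such an $H$ exists precisely because the eigenvalue-$1$ eigenspace is at least two-dimensional, and it is nonzero and traceless because $\langle \sigma, H\rangle^{\star}_{f,\sigma} = \Tr[H]$ (the computation carried out in the reverse direction below). Set $\rho_{\epsilon} := \sigma + \epsilon H$ for small $\epsilon > 0$: it has unit trace since $H$ is traceless, it is positive definite for $\epsilon$ small since $\sigma \in \Density_{+}(A)$ is full rank, and the full rank of $\sigma$ also makes $\rho_{\epsilon} \ll \sigma$ automatic and $\rho_{\epsilon}\neq\sigma$. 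Linearity together with the two fixed-point relations $(\cS_{f,\cE,\sigma}\circ\cE)(\sigma)=\sigma$ and $(\cS_{f,\cE,\sigma}\circ\cE)(H)=H$ then gives $(\cS_{f,\cE,\sigma}\circ\cE)(\rho_{\epsilon})=\rho_{\epsilon}$, producing the desired fixed point.

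For the reverse direction, given $\rho\neq\sigma$ with $\rho\ll\sigma$ and $(\cS_{f,\cE,\sigma}\circ\cE)(\rho)=\rho$, I would set $H := \rho-\sigma$, which is Hermitian, nonzero, and traceless. Subtracting the fixed-point identity for $\sigma$ shows $(\cS_{f,\cE,\sigma}\circ\cE)(H)=H$, so $H$ is an eigenvector for eigenvalue $1$. To see it is independent of $\sigma$ within the top eigenspace, compute $\langle \sigma, H\rangle^{\star}_{f,\sigma} = \Tr[\Pi_{\supp(\sigma)}H\Pi_{\supp(\sigma)}] = \Tr[H] = 0$ using Proposition~\ref{prop:mult-and-div-under-trace} and the full rank of $\sigma$. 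Hence the eigenspace of $1$ is at least two-dimensional, forcing $\lambda_{2}=1$, i.e.\ $\eta_{\chi^{2}_{f}}(\cE,\sigma)=1$.

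The main obstacle is not any single calculation but threading the two hypotheses uniformly: the argument needs the full spectral picture (self-adjointness, $\lambda_{1}=1$ with eigenvector $\sigma$, traceless remaining eigenvectors, $\eta=\lambda_{2}$) to hold in both cases, which is exactly why the statement is split according to whether Lemma~\ref{lem:contraction-coeff-to-eig} applies or its refinement Corollary~\ref{cor:Schroding-er-map-without-rank-constraints} does. In particular I would take care that in case (b) the fixed-point relation $(\cS_{f,\cE,\sigma}\circ\cE)(\sigma)=\sigma$ is supplied by the corollary itself (it is the content of ``the first eigenvalue is always one with eigenvector $\sigma$'') rather than by Item~1 of Proposition~\ref{prop:schrodinger-reversal-map}, whose hypothesis $\cE(\sigma)>0$ can fail when $\cE(\sigma)$ is rank-deficient.
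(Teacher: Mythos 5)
Your proposal is correct and takes essentially the same route as the paper: both directions rest on the spectral characterization of $\cS_{f,\cE,\sigma}\circ\cE$ on $(\Herm(A),\langle\cdot,\cdot\rangle^{\star}_{f,\sigma})$ from Lemma~\ref{lem:contraction-coeff-to-eig} and Corollary~\ref{cor:Schroding-er-map-without-rank-constraints}, and your necessity direction builds $\rho = \sigma + \ve H$ from a traceless eigenvalue-one eigenvector exactly as the paper builds $\rho = \sigma - \ve X$ (the paper just fixes an explicit $\ve = \omega/\zeta$ where you invoke continuity). The only cosmetic deviation is in the sufficiency direction, where the paper plugs $\rho-\sigma$ directly into the Rayleigh-quotient form \eqref{eq:contraction-Schrodinger-Jf-inv} and invokes $\eta_{\chi^{2}_{f}}(\cE,\sigma)\leq 1$, while you argue via multiplicity of the eigenvalue $1$ — equivalent reasoning — and your caveat that in case (b) the fixed point $(\cS_{f,\cE,\sigma}\circ\cE)(\sigma)=\sigma$ must be supplied by the corollary rather than by Item~1 of Proposition~\ref{prop:schrodinger-reversal-map} is a correct and careful reading of the hypotheses.
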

\end{tcolorbox}
\begin{proof}
     If there exists $\rho \neq \sigma \in \Density$ such that $(\cS_{f,\cE,\sigma} \circ \cE)(\rho) = \rho$ then, by \eqref{eq:contraction-Schrodinger-Jf-inv},
    \begin{align*}
        1 \geq \eta_{\chi^{2}_{f}}(\cE,\sigma) = \sup_{\rho \neq \sigma \in \Density} \frac{\langle \rho - \sigma, (\cS_{f,\cE,\sigma} \circ \cE)(\rho - \sigma) \rangle_{f,\sigma}^{\star}}{\langle \rho - \sigma, \rho - \sigma \rangle_{f,\sigma}^{\star}} \geq \frac{\langle \rho - \sigma, \rho - \sigma \rangle_{f,\sigma}^{\star}}{\langle \rho - \sigma, \rho - \sigma \rangle_{f,\sigma}^{\star}} = 1 \ ,
    \end{align*}
    so this is a sufficient condition. To see it is also necessary, assume $\eta_{\chi^{2}_{f}}(\cE,\sigma) = 1$. We start with the case $\cE(\sigma) \in \Density_+$. By Lemma \ref{lem:contraction-coeff-to-eig}, this means there exists $X \in \Herm(A)$ that is traceless such that $(\cS_{f,\cE,\sigma} \circ \cE)(X) = X$. We will use this to build $\rho \neq \sigma$. First, we note that $X$ is linearly independent of $\sigma$ as we verify by contradiction with orthonormality. Assume $X = \lambda \sigma$ for some $\lambda \neq 0$. Then,
    \begin{align}
        \langle X , \sigma \rangle_{f,\sigma}^{\star} = \langle X , \Pi_{\supp(\sigma)} \rangle = \lambda \Tr[\sigma] \neq 0 \ , 
    \end{align}
    which contradicts orthonormality. Define $\zeta \coloneq \Vert K \Vert_{\infty}$ and let $\omega$ be such that $\lambda_{\min}(\sigma) > \omega > 0$ which exists as $\sigma$ is full rank. Then we have the following implications:
    \begin{align}
        \mbb{1}_{A} \geq \frac{1}{\zeta}X \geq -\mbb{1}_{A} \Rightarrow \lambda_{\min}(\sigma)\mbb{1}_{A} > \omega \mbb{1}_{A} \geq \frac{\omega}{\zeta}X \Rightarrow \lambda_{\min}(\sigma)\mbb{1}_{A} - \frac{\omega}{\zeta}X \geq 0 \Rightarrow \sigma - \frac{\omega}{\zeta}X \geq 0 \ .
    \end{align}
    Thus, we could define $\ve \coloneq \frac{\omega}{\zeta} > 0$ and  $\rho \coloneq (\sigma - \ve X)$, which is unit trace as $X$ is traceless. Note that $\rho \neq \sigma$ as $X$ and $\sigma$ are linearly independent and $\ve > 0$. Finally,
    \begin{align}
        (\cS_{f,\cE,\sigma} \circ \cE)(\rho) =& (\cS_{f,\cE,\sigma} \circ \cE)(\sigma - \ve X) = \sigma - \ve X = \rho    \ ,
    \end{align}
    where we just used the definition of $\rho$ and the linearity of the map. This completes the proof when $\cE(\sigma) \in \Density_+(B)$. In the case $f$ satisfies Proposition \ref{prop:suff-conds-for-restricting-support}, one does the same argument, but appeals to Corollary \ref{cor:Schroding-er-map-without-rank-constraints}.
\end{proof}

\begin{theorem}\label{thm:contraction-coeff-equal-unity}
   Let $\sigma \in \Density_{+}(A)$. Let $\cE$ be 2-positive and TP such that $\cE(\sigma) > 0$ then $\eta_{\chi^{2}_{f}}(\cE,\sigma)=1$ for all operator monotone $f$ if and only if $\eta_{\chi^{2}_{GM}}(\cE,\sigma)= 1$. Similarly, for $\cE$ positive and TP, $\eta_{\chi^{2}_{GM}}(\cE,\sigma) =1$ if and only if $\eta_{\chi^{2}_{f}}(\cE,\sigma)=1$ for all $f \in \cM_{\text{St}}$ satisfying Proposition \ref{prop:suff-conds-for-restricting-support}.
\end{theorem}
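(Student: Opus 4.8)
The plan is to prove each biconditional by disposing of its trivial implication and then reducing the substantive one, via Lemma~\ref{lem:chi-squared-contraction-unity}, to a single Petz-recoverability condition whose propagation across all $f$ is governed by the classical theory of sufficiency. In both statements the implication ``$\eta_{\chi^{2}_{f}}(\cE,\sigma)=1$ for all admissible $f$ $\Rightarrow$ $\eta_{\chi^{2}_{GM}}(\cE,\sigma)=1$'' is immediate: $f_{GM}(x)=\sqrt{x}$ is operator monotone, lies in $\cM_{\text{St}}$, and satisfies the hypotheses of Proposition~\ref{prop:suff-conds-for-restricting-support} (one checks $f_{GM}(0^{+})=0$, $f_{GM}(+\infty)=+\infty$, and $f_{GM}'(+\infty)=0$), so $GM$ is among the functions being quantified over.

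For the substantive direction $\eta_{\chi^{2}_{GM}}(\cE,\sigma)=1 \Rightarrow \eta_{\chi^{2}_{f}}(\cE,\sigma)=1$, I would first note that the hypotheses of each part are exactly what is needed to apply Lemma~\ref{lem:chi-squared-contraction-unity}: in the $2$-positive case, $\cE^{\ast}$ is unital and $2$-positive, hence a unital Schwarz map, so $\cE$ is the adjoint of a unital Schwarz map and $\cE(\sigma)>0$ puts us in case (a); in the merely positive case, $GM\in\cM_{\text{St}}$ satisfies Proposition~\ref{prop:suff-conds-for-restricting-support}, putting us in case (b). Either way, $\eta_{\chi^{2}_{GM}}(\cE,\sigma)=1$ produces a state $\rho\neq\sigma$ with $\rho\ll\sigma$ and $(\cS_{GM,\cE,\sigma}\circ\cE)(\rho)=\rho$. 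Since $\cS_{GM,\cE,\sigma}=\cP_{\cE,\sigma}$ is the Petz recovery map (as observed after Definition~\ref{def:Schrod-reversal}), this says precisely that the Petz map recovers $\rho$, i.e.\ $(\cP_{\cE,\sigma}\circ\cE)(\rho)=\rho$, while trivially $(\cP_{\cE,\sigma}\circ\cE)(\sigma)=\sigma$.

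Next I would invoke Petz's reversibility theorem \cite{petz1986sufficient}: perfect Petz recovery of $\rho$ is equivalent to $\cE$ being sufficient for $\{\rho,\sigma\}$, and sufficiency forces equality in the data-processing inequality for every monotone metric, hence $\chi^{2}_{f}(\cE(\rho)\Vert\cE(\sigma))=\chi^{2}_{f}(\rho\Vert\sigma)$ for every admissible $f$. Feeding this back through Lemma~\ref{lem:chi-squared-contraction-unity} in the $f$-direction gives $(\cS_{f,\cE,\sigma}\circ\cE)(\rho)=\rho$, so $\eta_{\chi^{2}_{f}}(\cE,\sigma)\geq 1$; combined with the upper bound $\eta_{\chi^{2}_{f}}(\cE,\sigma)\leq 1$ from Item~2 of Proposition~\ref{prop:f-chi-contraction-extreme-values} (whose two hypotheses match the two parts of the theorem), this yields $\eta_{\chi^{2}_{f}}(\cE,\sigma)=1$.

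I expect the \textbf{main obstacle} to be the step ``Petz recovery of a single $\rho$ $\Rightarrow$ recovery by every $\cS_{f,\cE,\sigma}$,'' which is where all the content sits and which is delicate for merely positive maps in the second part, since the Stinespring/complete-positivity machinery underlying the classical sufficiency theorem is unavailable there. For that case the plan is to first restrict to $\supp(\sigma)$ and $\supp(\cE(\sigma))$ using Proposition~\ref{prop:suff-conds-for-restricting-support} together with the rank-free eigenvalue characterization of Corollary~\ref{cor:Schroding-er-map-without-rank-constraints}, reducing to the faithful setting, and then to propagate the intertwining relation $\cE^{\ast}\circ\mbf{J}^{-1}_{f,\cE(\sigma)}\circ\cE=\mbf{J}^{-1}_{f,\sigma}$ (restricted to $\rho$) from $f=f_{GM}$ to general $f$ via the integral (Löwner) representation of the operator monotone function defining $\mbf{J}_{f,\sigma}$: the recovery identity holds for the resolvent-type building blocks generated by $\sigma^{it}\rho^{-it}$ and hence for their $f$-weighted integrals. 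Verifying this propagation, rather than the bookkeeping around supports or the universal bound, is the crux of the argument.
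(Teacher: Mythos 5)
Your setup coincides with the paper's: the trivial direction, the extraction via Lemma \ref{lem:chi-squared-contraction-unity} of a state $\sigma \neq \rho \ll \sigma$ with $(\cP_{\cE,\sigma}\circ\cE)(\rho)=\rho$ (with cases (a) and (b) matched to the two hypotheses exactly as you describe), and the identification $\cS_{f_{GM},\cE,\sigma}=\cP_{\cE,\sigma}$ are all correct. The divergence --- and the gap --- lies in what you do next. You route through Petz's sufficiency theorem and then declare the ``main obstacle'' to be propagating recovery from $\cP_{\cE,\sigma}$ to every $f$-dependent map $\cS_{f,\cE,\sigma}$, to be handled by a L\"{o}wner-representation/intertwining argument that you never carry out. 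That step is not needed at all. Once a single pair of fixed points of $\cP_{\cE,\sigma}\circ\cE$ is in hand (and $(\cP_{\cE,\sigma}\circ\cE)(\sigma)=\sigma$ is automatic), the paper concludes in two lines: $\cP_{\cE,\sigma}$ is a CPTP map on the support of $\sigma$, so for every admissible $f$,
\begin{align*}
\eta_{\chi^{2}_{f}}(\cE,\sigma) \;\geq\; \frac{\chi^{2}_{f}(\cE(\rho)\Vert\cE(\sigma))}{\chi^{2}_{f}(\rho\Vert\sigma)} \;\geq\; \frac{\chi^{2}_{f}\bigl((\cP_{\cE,\sigma}\circ\cE)(\rho)\,\Vert\,(\cP_{\cE,\sigma}\circ\cE)(\sigma)\bigr)}{\chi^{2}_{f}(\rho\Vert\sigma)} \;=\; 1 \ ,
\end{align*}
where the second inequality is DPI of $\chi^{2}_{f}$ under the one fixed map $\cP_{\cE,\sigma}$, and the matching upper bound is Item 2 of Proposition \ref{prop:f-chi-contraction-extreme-values}. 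The $f$-uniformity is thus carried entirely by the geometric-mean recovery map; no $f$-dependent recovery, sufficiency theory, or integral representation enters (this is the same double-DPI sandwich underlying Proposition \ref{prop:faithfulness-of-chi-squared-GM}). Your intermediate move of feeding equality back through Lemma \ref{lem:chi-squared-contraction-unity} to obtain $(\cS_{f,\cE,\sigma}\circ\cE)(\rho)=\rho$ before concluding $\eta\geq 1$ is likewise redundant: equality of a single ratio with $0<\chi^{2}_{f}(\rho\Vert\sigma)<\infty$ (guaranteed since $\rho\neq\sigma$, $\rho\ll\sigma$, $\sigma>0$) already gives $\eta_{\chi^{2}_{f}}(\cE,\sigma)\geq 1$ by the supremum definition.

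Concretely, then: for the first biconditional your argument goes through, since citing \cite{petz1986sufficient} (Petz recovery $\Rightarrow$ sufficiency $\Rightarrow$ equality of all monotone metrics) is legitimate for $2$-positive, trace-preserving dynamics and, when unpacked, reduces to precisely the sandwich above --- so this part is correct but needlessly heavy. For the second biconditional your proposal is genuinely incomplete: the L\"{o}wner-representation propagation that you yourself flag as ``the crux'' is exactly where all the content of your approach would sit, and it is left unverified. Worse, it is a misdiagnosis of where the difficulty lies --- the only $f$-uniform fact required is DPI of each $\chi^{2}_{f}$ under the single map $\cP_{\cE,\sigma}$, which the paper supplies via the CPTP claim for $\cP_{\cE,\sigma}$ on $\supp(\sigma)$ together with the rank-free upper bound from Item 2(b) of Proposition \ref{prop:f-chi-contraction-extreme-values} (i.e., Corollary \ref{cor:Schroding-er-map-without-rank-constraints}). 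Had you run your first-part sandwich verbatim in the positive case, the entire intertwining discussion would evaporate; as written, the second part of your proof does not close.
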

\begin{proof}
    In either case, the more general condition is trivial, so we just need to prove $\eta_{\chi^{2}_{GM}}(\cE,\sigma)=1$ implies the same for the others. Assume  $\eta_{\chi^{2}_{GM}}(\cE,\sigma)=1$ is the case. By Lemma \ref{lem:chi-squared-contraction-unity}, this implies there is $\sigma \neq \rho \ll \sigma$, such that $(\cP_{\cE,\sigma} \circ \cE)(\rho) = \rho$. Note that $\cP_{\cE,\sigma}$ is a CPTP map on the support of $\sigma$ \cite[Section 12.3]{Wilde-Book}. Thus, 
    \begin{align}
        \eta_{\chi^{2}_{f}} \geq \frac{\chi^{2}_{f}(\cE(\rho)\Vert \cE(\sigma))}{\chi^{2}_{f}(\rho \Vert \sigma)} \geq  \frac{\chi^{2}_{f}((\cP_{\cE,\sigma} \circ \cE)(\rho)\Vert (\cP_{\cE,\sigma} \circ \cE)(\sigma))}{\chi^{2}_{f}(\rho \Vert \sigma)} = 1 \ , 
    \end{align}
    where the first inequality that the contraction coefficient is a supremum and the second inequality is DPI. Using Item 2 of Proposition \ref{prop:f-chi-contraction-extreme-values} completes the proof.
\end{proof}

\paragraph{Saturation of Data Processing Inequality} Note that one may be less interested in the contraction coefficient than when the DPI inequality of $\chi^{2}_{f}$ divergence itself is saturated. This also follows from the eigenvalue characterization in Lemma \ref{lem:contraction-coeff-to-eig}.
\begin{tcolorbox}[width=\linewidth, sharp corners=all, colback=white!95!black, boxrule=0pt,frame hidden]
\begin{theorem}\label{thm:DPI-with-equality}
    Let $f$ be a symmetry-inducing operator monotone, $\rho \ll \sigma$, and $\cE$ be a 2-positive, trace-preserving map. Then $\chi^{2}_{f}(\cE(\rho)\Vert\cE(\sigma)) = \chi^{2}_{f}(\rho \Vert \sigma)$ if and only $(\cS_{f,\cE,\sigma} \circ \cE)(\rho) = \rho$. Moreover, for $f = f_{GM}$, the same holds where $\cE$ need only be positive, TP.
\end{theorem}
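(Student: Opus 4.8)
The plan is to recognize both divergences as quadratic forms of the single operator $A \coloneq \rho - \sigma$ in the monotone‑metric inner product, to reduce saturation to the vanishing of a positive‑semidefinite difference operator applied to $A$, and then to unfold this into the fixed‑point condition for $\cS_{f,\cE,\sigma}\circ\cE$.

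First I would reduce to the full‑rank case. Since $\rho \ll \sigma$ and $\cE$ is positive, $\cE(\rho)\ll\cE(\sigma)$, so (exactly as in the proof of Item 1 of Proposition \ref{prop:chi-squared-properties}) both $\chi^2_f$ terms and the map $\cS_{f,\cE,\sigma}$ are unchanged upon compressing the input space to $\supp(\sigma)$ and the output space to $\supp(\cE(\sigma))$; this compression preserves $2$-positivity and, using $\cE^*(\Pi_{\supp(\cE(\sigma))}) = \Pi_{\supp(\sigma)}$ on $\supp(\sigma)$, trace preservation, so I may assume $\sigma,\cE(\sigma) \in \Density_{+}$. Writing $A = \rho - \sigma$ and $\cE(\rho)-\cE(\sigma)=\cE(A)$,
\begin{align}
\chi^2_f(\rho\Vert\sigma) = \langle A, \mbf{J}_{f,\sigma}^{-1}(A)\rangle, \qquad \chi^2_f(\cE(\rho)\Vert\cE(\sigma)) = \langle A, (\cE^{\ast}\circ\mbf{J}_{f,\cE(\sigma)}^{-1}\circ\cE)(A)\rangle.
\end{align}
Set $\Delta \coloneq \mbf{J}_{f,\sigma}^{-1} - \cE^{\ast}\circ\mbf{J}_{f,\cE(\sigma)}^{-1}\circ\cE$, which is Hermitian‑preserving because $f$ symmetry‑inducing makes $\mbf{J}^{-1}$ Hermitian‑preserving (Proposition \ref{prop:symmetry-inducing-equivalences}) and $\cE,\cE^{\ast}$ are Hermitian‑preserving.

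The crux is that $\Delta \geq 0$ as an operator on the real Hilbert space $(\Herm(A),\langle\cdot,\cdot\rangle)$. In the first claim, $\cE$ being $2$-positive and trace‑preserving makes $\cE^{\ast}$ a unital $2$-positive, hence Schwarz, map, so $\cE$ is the adjoint of a unital Schwarz map and Proposition \ref{prop:DPI-for-J-op} yields precisely $\cE^{\ast}\circ\mbf{J}_{f,\cE(\sigma)}^{-1}\circ\cE \leq \mbf{J}_{f,\sigma}^{-1}$, i.e. $\Delta\geq0$. For $f=f_{GM}$ with only positivity assumed, the same operator inequality holds by the Russo--Dye argument underlying Lemma \ref{lem:sandwiched-correlation-coeff-DPI} (equivalently, it is the positive‑map data processing of $\chi^2_{GM}$, a monotone function of the sandwiched R\'enyi‑$2$ divergence, \cite{Muller-2017a}). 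Given $\Delta\geq0$, saturation reads $\langle A, \Delta(A)\rangle = 0$, which for a positive operator is equivalent to $\Delta^{1/2}(A)=0$ and hence to $\Delta(A)=0$, i.e. $\mbf{J}_{f,\sigma}^{-1}(A) = (\cE^{\ast}\circ\mbf{J}_{f,\cE(\sigma)}^{-1}\circ\cE)(A)$. Applying the (now genuinely invertible) operator $\mbf{J}_{f,\sigma}$ gives $A = (\cS_{f,\cE,\sigma}\circ\cE)(A)$. Finally, since $\cE(\sigma)\in\Density_{+}$, Item 1 of Proposition \ref{prop:schrodinger-reversal-map} gives $(\cS_{f,\cE,\sigma}\circ\cE)(\sigma)=\sigma$, so by linearity $(\cS_{f,\cE,\sigma}\circ\cE)(\rho-\sigma)=\rho-\sigma$ is equivalent to $(\cS_{f,\cE,\sigma}\circ\cE)(\rho)=\rho$, completing both cases.

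The main obstacle is establishing $\Delta\geq0$ as a genuine operator inequality in the $f_{GM}$ positive‑map case: the bare data‑processing inequality only gives nonnegativity of the quadratic form along the single direction $A=\rho-\sigma$, whereas the square‑root step needs positivity on the whole Hermitian subspace. I expect to secure this from the positive‑map contraction $\Vert\,\mbf{J}_{f_{GM},\cE(\sigma)}^{-1/2}\circ\cE\circ\mbf{J}_{f_{GM},\sigma}^{1/2}\,\Vert_{\mathrm{op}}\leq1$, proved by Russo--Dye as in Lemma \ref{lem:sandwiched-correlation-coeff-DPI}; a secondary, purely bookkeeping point is verifying that the support compression of $\cE$ remains trace‑preserving, for which the identity $\cE^{\ast}(\Pi_{\supp(\cE(\sigma))})=\Pi_{\supp(\sigma)}$ on $\supp(\sigma)$ is the needed input.
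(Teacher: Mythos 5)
Your proposal is correct and takes essentially the same route as the paper: the paper saturates $\langle A,(\cS_{f,\cE,\sigma}\circ\cE)(A)\rangle^{\star}_{f,\sigma}=\langle A,A\rangle^{\star}_{f,\sigma}$ against the operator inequality $\cS_{f,\cE,\sigma}\circ\cE\leq\id$ on the $\star$-inner-product space (Proposition \ref{prop:Schrod-map-composed-properties}, itself an immediate consequence of Proposition \ref{prop:DPI-for-J-op}), which is exactly your $\Delta\geq 0$ conjugated by $\mbf{J}_{f,\sigma}$, and then identifies $A=\rho-\sigma$ as a unit eigenvector and cancels $\sigma$ via Item 1 of Proposition \ref{prop:schrodinger-reversal-map}. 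If anything, you are more explicit than the paper on the support-compression bookkeeping and on why the positivity of $\Delta$ persists for $f_{GM}$ under merely positive trace-preserving maps (via the Russo--Dye argument of Lemma \ref{lem:sandwiched-correlation-coeff-DPI} or the positive-map data processing of $\wt{D}_{2}$ from \cite{Muller-2017a}), points the paper's proof leaves implicit.
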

\end{tcolorbox}
\begin{proof}[Proof of Theorem \ref{thm:DPI-with-equality}]
    For the first direction, if $(\cS_{f,\cE,\sigma} \circ \cE)(\rho)=\rho$ where $\rho \ll \sigma$, then
    \begin{align}
        \chi^{2}_{f}(\cE(\rho) \Vert \cE(\sigma)) = \langle \rho - \sigma, \cS_{f,\cE,\sigma}(\rho-\sigma)\rangle_{f,\sigma}^{\star} = \langle \rho - \sigma , \rho - \sigma \rangle_{f,\sigma}^{\star} = \chi^{2}_{f}(\rho \Vert \sigma) \ ,
    \end{align}
    where the first equality is by the same argument as obtaining the numerator of \eqref{eq:contraction-Schrodinger-Jf-inv}, the second equality is Item 1 of Proposition \ref{prop:schrodinger-reversal-map} where we can treat $\sigma$ as full rank as $\rho \ll \sigma$, and the last equality is by the same argument as obtaining the denominator of \eqref{eq:contraction-Schrodinger-Jf-inv}.
    
     For the second direction, let $\chi^{2}_{f}(\cE(\rho) \Vert \cE(\sigma)) = \chi^{2}_{f}(\rho \Vert \sigma)$. Define $A \coloneq \rho - \sigma$. Then 
    \begin{align}\label{eq:saturating-DPI-step-1}
        1 = \frac{\chi^{2}_{f}(\cE(\rho) \Vert \cE(\sigma))}{\chi^{2}_{f}(\rho \Vert \sigma)} = \frac{\langle A , (\cS_{f,\cE,\sigma} \circ \cE)(A)\rangle_{f,\sigma}^{\star}}{\langle A , A \rangle_{f,\sigma}^{\star}}
    \end{align}
    Re-arranging, we have $\langle A, \cS_{f,\cE,\sigma} \circ \cE(A) \rangle_{f,\sigma}^{\star} = \langle A, A \rangle_{f,\sigma}^{\star}$. As $\cS_{f,\cE,\sigma} \circ \cE \leq \id_{A}$ on the respective space (Proposition \ref{prop:Schrod-map-composed-properties}), $A$ is an eigenvector of $\cS_{f,\cE,\sigma} \circ \cE$ with an eigenvalue of one. Thus,
    \begin{align}
        \rho - \sigma = A = (\cS_{f,\cE,\sigma} \circ \cE)(A) = (\cS_{f,\cE,\sigma} \circ \cE)(\rho) - \sigma \ , 
    \end{align}
    where the last equality uses Item 1 of Proposition \ref{prop:schrodinger-reversal-map} as we may treat $\sigma$ as full rank as $\rho \ll \sigma$. Then canceling the $\sigma$ on either side completes the proof.
\end{proof}

An application of Theorem \ref{thm:DPI-with-equality} is that it implies in some cases equality holding for all divergences of a given family is equivalent to just the corresponding $\chi^{2}$-divergence satisfying equality, which is equivalent to the Petz recovery map preserving some different state. We collect these corollaries here.

First, we obtain the following, which may be seen as recovering part of \cite[Theorem 8]{jenvcova2012reversibility}. As such, it also may be seen as a variant of \cite[Corollary 4.7]{Gao-2023-sufficient-fisher} with more general choices of dynamics.
\begin{tcolorbox}[width=\linewidth, sharp corners=all, colback=white!95!black, boxrule=0pt,frame hidden]
\begin{corollary}\label{cor:suff-for-chi-sq}
    Let $\cE$ be the adjoint of a unital Schwarz map and $\rho \ll \sigma$. The following are equivalent:
    \begin{enumerate}
        \item $\chi^{2}_{f}(\cE(\rho)\Vert \cE(\sigma)) = \chi^{2}_{f}(\rho \Vert \sigma)$ for all operator monotone $f$
        \item $\chi_{GM}^{2}(\cE(\rho)\Vert \cE(\sigma)) = \chi_{GM}^{2}(\rho \Vert \sigma)$
        \item $(\cP_{\cE,\sigma} \circ \cE)(\rho) = \rho$.
    \end{enumerate}
\end{corollary}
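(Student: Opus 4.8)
The plan is to prove the cycle of equivalences $1 \Rightarrow 3 \Rightarrow 2 \Rightarrow 1$, using Theorem~\ref{thm:DPI-with-equality} as the workhorse together with the special role of the geometric mean established earlier. First I would observe that the implication $1 \Rightarrow 2$ is trivial, since it is merely the specialization of a universal statement to the single choice $f = f_{GM}$; likewise $2 \Rightarrow 3$ is immediate from the ``moreover'' clause of Theorem~\ref{thm:DPI-with-equality}, which tells us that for $f = f_{GM}$ (where only positivity and trace-preservation of $\cE$ are needed, and the adjoint of a unital Schwarz map is in particular positive and trace-preserving), saturation $\chi^2_{GM}(\cE(\rho)\Vert\cE(\sigma)) = \chi^2_{GM}(\rho\Vert\sigma)$ holds if and only if $(\cS_{f_{GM},\cE,\sigma}\circ\cE)(\rho) = \rho$, and by Proposition~\ref{prop:schrodinger-reversal-map} (or the remark following Definition~\ref{def:Schrod-reversal}) $\cS_{f_{GM},\cE,\sigma} = \cP_{\cE,\sigma}$. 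Hence $2 \iff 3$ is already essentially in hand.

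The substantive direction is $3 \Rightarrow 1$: assuming the Petz recovery map preserves $\rho$, I must show $\chi^2_f$ saturates the data processing inequality for \emph{every} operator monotone $f$, not just $f_{GM}$. The key idea is that the Petz recovery map $\cP_{\cE,\sigma}$ is a genuine quantum channel (completely positive and trace-preserving) on the support of $\sigma$, as used already in the proof of Theorem~\ref{thm:contraction-coeff-equal-unity} via \cite[Section 12.3]{Wilde-Book}. Since $\cE$ is the adjoint of a unital Schwarz map, $\chi^2_f$ obeys the data processing inequality under $\cE$ for all operator monotone $f$ (Item~1 of Proposition~\ref{prop:chi-squared-properties}), and it obeys data processing under the CPTP map $\cP_{\cE,\sigma}$ as well. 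The plan is then to sandwich: starting from the trivial equality $\chi^2_f(\rho\Vert\sigma) = \chi^2_f(\rho\Vert\sigma)$ and using $(\cP_{\cE,\sigma}\circ\cE)(\rho) = \rho$ together with $(\cP_{\cE,\sigma}\circ\cE)(\sigma) = \sigma$ (the latter from Item~1 of Proposition~\ref{prop:schrodinger-reversal-map} applied to $f_{GM}$, treating $\sigma$ as full rank since $\rho\ll\sigma$), I write
\begin{align*}
    \chi^2_f(\rho\Vert\sigma) = \chi^2_f\big((\cP_{\cE,\sigma}\circ\cE)(\rho)\,\Vert\,(\cP_{\cE,\sigma}\circ\cE)(\sigma)\big) \leq \chi^2_f(\cE(\rho)\Vert\cE(\sigma)) \leq \chi^2_f(\rho\Vert\sigma) \ ,
\end{align*}
where the first inequality is data processing under $\cP_{\cE,\sigma}$ and the second is data processing under $\cE$. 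The squeeze forces both inequalities to be equalities, giving $\chi^2_f(\cE(\rho)\Vert\cE(\sigma)) = \chi^2_f(\rho\Vert\sigma)$ for every operator monotone $f$, which is exactly Item~1. This is essentially the same two-sided contraction argument already deployed in Theorem~\ref{thm:contraction-coeff-equal-unity}, now applied to the divergences directly rather than to the contraction coefficients.

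The main obstacle I anticipate is purely bookkeeping around support and rank conditions: I must ensure that $\cP_{\cE,\sigma}$ is legitimately CPTP on the relevant space so that the data processing inequality genuinely applies to it, and that restricting to $\supp(\sigma)$ (justified by $\rho\ll\sigma$) is compatible with the hypotheses of Proposition~\ref{prop:chi-squared-properties} and Proposition~\ref{prop:schrodinger-reversal-map}. There is a subtlety in that $\cE(\sigma)$ need not be full rank, but since $\chi^2_f$ only depends on the behavior on $\supp(\cE(\sigma))$ and $\cE(\rho) \ll \cE(\sigma)$ follows from positivity and $\rho \ll \sigma$, this should cause no real difficulty. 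I would also double-check that the fixed-point property $(\cP_{\cE,\sigma}\circ\cE)(\sigma) = \sigma$ holds unconditionally here, which it does since it is built into the Schr\"{o}dinger time-reversal map being a recovery map for $\sigma$. Provided these standard support reductions are handled cleanly, the proof is short and reduces entirely to combining Theorem~\ref{thm:DPI-with-equality} with the channel property of the Petz map and a two-sided data processing squeeze.
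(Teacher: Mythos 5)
Your proposal is correct and follows essentially the same route as the paper: $1 \Rightarrow 2$ by specialization, $2 \Rightarrow 3$ via the ``moreover'' clause of Theorem~\ref{thm:DPI-with-equality} with $\cS_{f_{GM},\cE,\sigma} = \cP_{\cE,\sigma}$, and $3 \Rightarrow 1$ by the two-sided data processing squeeze using that $\cP_{\cE,\sigma}$ is CPTP on $\supp(\sigma)$ --- you simply spell out the details the paper compresses into citations, and your support/rank bookkeeping is handled correctly. (Note the cycle announced in your first sentence, $1 \Rightarrow 3 \Rightarrow 2 \Rightarrow 1$, is a slip; what you actually prove is $1 \Rightarrow 2 \Rightarrow 3 \Rightarrow 1$, which is fine.)
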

\end{tcolorbox}
\begin{proof}
    $1) \Rightarrow 2)$ as it is more specific. $2) \Rightarrow 3)$ by Theorem \ref{thm:DPI-with-equality}, and $3) \Rightarrow 1)$ by the data processing inequality (Proposition \ref{prop:DPI-for-J-op}).
\end{proof}

Second we obtain the following, which says the Sandwiched R\'{e}nyi divergences $\wt{D}_{\alpha}(\rho \Vert \sigma)$ (see Definition \ref{def:SRD}) saturate the data processing inequality under a positive map if and only if the $\chi^{2}_{GM}$ divergence does. To the best of our knowledge, this result is new, but may also be obtained by combining  the results of \cite{Jencova-2017a} and \cite{Muller-2017a}. This is not necessarily surprising as the proof of Theorem \ref{thm:DPI-with-equality} is similar to that of \cite[Lemma 8]{Jencova-2017a}.
\begin{tcolorbox}[width=\linewidth, sharp corners=all, colback=white!95!black, boxrule=0pt,frame hidden]
\begin{corollary}\label{cor:suff-for-SRD}
    Let $\cE$ be a positive, trace-preserving map and $\rho \ll \sigma$. The following are equivalent:
    \begin{enumerate}
        \item $\wt{D}_{\alpha}(\cE(\rho) \Vert \cE(\sigma)) = \wt{D}_{\alpha}(\rho \Vert \sigma)$ for all $\alpha \in (1/2,1)\cup(1,\infty)$
         \item $\chi_{GM}^{2}(\cE(\rho)\Vert \cE(\sigma)) = \chi_{GM}^{2}(\rho \Vert \sigma)$
        \item $(\cP_{\cE,\sigma} \circ \cE)(\rho) = \rho$.
    \end{enumerate}
\end{corollary}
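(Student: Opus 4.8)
The plan is to prove Corollary~\ref{cor:suff-for-SRD} by establishing a cycle of implications, with the central observation being that all three conditions reduce to the single algebraic statement that the Petz recovery map fixes $\rho$. The implications $2) \Leftrightarrow 3)$ follow immediately from the `moreover' clause of Theorem~\ref{thm:DPI-with-equality} applied to $f = f_{GM}$: that theorem requires only that $\cE$ be positive and trace-preserving (not $2$-positive) precisely in the geometric-mean case, which matches the hypotheses here, and $\cS_{f_{GM},\cE,\sigma} = \cP_{\cE,\sigma}$ by the identification noted after Definition~\ref{def:Schrod-reversal}. So the real content is linking the Sandwiched R\'enyi family in Item~1 to condition~3).

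For the direction $3) \Rightarrow 1)$, I would argue that if $(\cP_{\cE,\sigma}\circ\cE)(\rho) = \rho$ then we can sandwich the data processing inequality: since $\cP_{\cE,\sigma}$ is a channel on $\supp(\sigma)$ (and $\rho \ll \sigma$), and the Sandwiched R\'enyi divergences satisfy DPI under positive, trace-preserving maps for $\alpha \in (1/2,1)\cup(1,\infty)$ by \cite{Muller-2017a}, we have
\begin{align*}
    \wt{D}_{\alpha}(\rho \Vert \sigma) \geq \wt{D}_{\alpha}(\cE(\rho)\Vert\cE(\sigma)) \geq \wt{D}_{\alpha}((\cP_{\cE,\sigma}\circ\cE)(\rho)\Vert(\cP_{\cE,\sigma}\circ\cE)(\sigma)) = \wt{D}_{\alpha}(\rho \Vert \sigma) \ ,
\end{align*}
where the final equality uses that $\cP_{\cE,\sigma}$ always fixes $\sigma$ (i.e. $(\cP_{\cE,\sigma}\circ\cE)(\sigma)=\sigma$) and the assumption $(\cP_{\cE,\sigma}\circ\cE)(\rho)=\rho$. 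Forcing equality throughout yields Item~1.

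The harder direction is $1) \Rightarrow 3)$, and this is the step I expect to be the main obstacle, since it must extract the exact algebraic recovery condition from equality of a whole family of R\'enyi divergences. My plan is to avoid re-deriving this from scratch by leveraging the known equivalence between saturation of Sandwiched R\'enyi DPI and Petz recoverability established in \cite{Jencova-2017a}: it suffices that equality hold for a single $\alpha$ in the stated range (or that the family equality imply the single $\chi^2_{GM}$ equality, which then routes through Theorem~\ref{thm:DPI-with-equality}). Concretely, I would either (i) cite that $\wt{D}_{\alpha}$ equality for some $\alpha$ already forces $(\cP_{\cE,\sigma}\circ\cE)(\rho)=\rho$ via the reversibility result of \cite{Jencova-2017a}, closing the cycle directly, or (ii) show that equality of the full family implies equality of $\chi^2_{GM}$ and invoke $2)\Rightarrow 3)$. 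The delicate point is that $\cE$ is only assumed positive and trace-preserving, not completely positive, so I must ensure the cited recoverability machinery and the DPI for $\wt{D}_{\alpha}$ are valid at that generality---which is exactly why \cite{Muller-2017a} (DPI under positive maps) and the positive-map version of Theorem~\ref{thm:DPI-with-equality} for $f_{GM}$ are the right tools, and why the corollary is stated for $f_{GM}$ rather than general $f$.
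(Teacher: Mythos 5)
Your proposal is correct and follows essentially the same cycle as the paper: $2) \Leftrightarrow 3)$ via the $f_{GM}$ case of Theorem~\ref{thm:DPI-with-equality}, and $3) \Rightarrow 1)$ by the sandwich argument with the DPI for positive, trace-preserving maps from \cite{Muller-2017a}, using that $\cP_{\cE,\sigma}$ is positive and trace-preserving on the relevant support and fixes $\sigma$. The only point worth flagging is that the direction you identify as the main obstacle, $1) \Rightarrow 3)$, is actually the easiest step and is precisely your option (ii) made concrete: equality for all $\alpha$ in particular holds at $\alpha = 2$, and a direct calculation gives $\chi^{2}_{GM}(\rho \Vert \sigma) = \exp(\wt{D}_{2}(\rho \Vert \sigma)) - 1$, so Item~1 implies Item~2 in one line and Theorem~\ref{thm:DPI-with-equality} then closes the cycle---no appeal to the reversibility machinery of \cite{Jencova-2017a} (your option (i), whose validity for merely positive maps is exactly the concern you raise) is needed.
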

\end{tcolorbox}
\begin{proof}
    $1) \Rightarrow 2)$ follows from the fact that 1) implies the equality for $\wt{D}_{2}$ and a direct calculation will  verify $\chi^{2}_{GM}(\rho \Vert \sigma) = \exp(\wt{D}_{2}(\rho \Vert \sigma)) - 1$. $2) \Rightarrow 3)$ follows from Theorem \ref{thm:DPI-with-equality}. $3) \Rightarrow 1)$ follows from the DPI for positive, trace-preserving maps \cite{Muller-2017a}.
\end{proof}

\subsection{Convergence Rates of Time Homogeneous Markov Chains}\label{sec:time-homogeneous-Markov-chains}
In this subsection, we consider what the previous sections imply about time homogeneous Markov chains, which can be natural noise models in large depth communication networks and memory devices. We extend results of \cite{GZB-preprint-2024} to the quantum setting for the recent Hirche-Tomamichel quantum $f$-divergences making use of a result in \cite{Beigi-2025a}. This shows another way in which they seem the `correct' or `ideal' quantum generalization of $f$-divergences (Theorem \ref{thm:mixing-rate}). We also obtain bounds on the rate of data processing for Sandwiched R\'{e}nyi divergences \cite{Wilde-2014a,Muller-Lennert-2013a} in terms of $\eta_{\wt{\chi}^{2}}(\cE,\pi)$. Before presenting the results, we begin with a detailed motivation that explains what this section aims to resolve.

\subsubsection{Motivation: Mixing Times and Rate of Data Processing}
Recall that in the previous section, we saw $\eta_{\chi^{2}_{f}}(\cE,\sigma) = 1$ if and only if the dynamics of $\cE_{A \to B}$ are physically reversible, via $\cP_{\cE,\sigma}$, for two distinct quantum states, namely $\sigma$ and $\rho \neq \sigma$. Now assume that $\sigma \coloneq \pi$ is the \textit{unique} fixed point of $\cE_{A \to A}$.\footnote{Recall there is at least one fixed point \cite[Theorem 4.24]{WatrousBook}.} It is worthwhile to know how quickly $\cE^{n} \coloneq \circ_{i \in [n]} \cE$ will take any input $\rho$ to $\pi$ as this places limits on distinguishability. Bounds on how fast such a process occurs is known as mixing times and has applications in computer science for sampling algorithms \cite{levin2017markov,Temme-2010a}. We can obtain the following bounds on the mixing time, which is effectively a straightforward variant of \cite[Theorem 9]{Temme-2010a} making use of Proposition \ref{prop:Df-lb-in-chi-square} that generalizes classical results of \cite{GZB-preprint-2024} (See Appendix \ref{subsec:time-homogeneous-MC-lemmata} for the proof).
\begin{proposition}\label{prop:L1-mixing-times}
    Let $\pi \in \Density(A)$ be the unique, full rank fixed point of $\cE$. For any $f \in \cM_{\text{St}}$ and $n \in \mbb{N}$,
    \begin{align}
        \Vert \cE^{n}(\rho) - \pi \Vert_{1} \leq \eta_{\chi^{2}_{f}}(\cE,\pi)^{n/2}\sqrt{\chi^{2}_{f}(\rho,\pi)} \leq \sqrt{\frac{2}{\lambda_{\min}(\pi)}}\eta_{\chi^{2}_{f}}(\cE,\pi)^{n/2} \ .
    \end{align}
    In other words, for $\Vert \cE^{n}(\rho) - \pi \Vert_{1} \leq \delta$ for all $\rho \in \Density(A)$, it suffices for $n \geq \log(\frac{2}{\delta^{2}\lambda_{\min}})/\log(1/\eta_{\chi^{2}_{f}}(\cE,\pi))$.
\end{proposition}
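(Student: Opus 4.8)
The plan is to establish the chain of inequalities by first controlling the trace distance via a $\chi^2_f$-divergence, then invoking the contraction coefficient across $n$ applications of $\cE$, and finally bounding the single-shot $\chi^2_f(\rho,\pi)$ by a dimension-free quantity. For the first inequality, I would appeal to a standard Pinsker-type bound relating the trace norm to $\chi^2_f$. The cleanest route is to use $\Vert \cE^n(\rho) - \pi \Vert_1^2 \leq \chi^2_f(\cE^n(\rho) \Vert \pi)$, which holds because the arithmetic-mean divergence $\chi^2_{AM}$ already dominates $\tfrac14\Vert\cdot\Vert_1^2$ and, by the ordering $\chi^2_{AM} \leq \chi^2_f$ for all $f \in \cM_{\text{St}}$ recorded in \eqref{eq:ordering-of-chi-squareds}, any $\chi^2_f$ upper bounds it as well. (The factor must be checked against the paper's normalization; presumably the referenced Proposition~\ref{prop:Df-lb-in-chi-square} supplies precisely the needed inequality $\Vert\rho-\pi\Vert_1^2 \leq \chi^2_f(\rho\Vert\pi)$, so I would cite that directly rather than re-deriving the constant.)

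The second, central step is to iterate the contraction coefficient. Since $\pi$ is the unique fixed point, $\cE(\pi) = \pi$, and so for each application we have $\chi^2_f(\cE(\tau)\Vert\pi) = \chi^2_f(\cE(\tau)\Vert\cE(\pi)) \leq \eta_{\chi^2_f}(\cE,\pi)\,\chi^2_f(\tau\Vert\pi)$ by the definition of the input-dependent contraction coefficient in \eqref{eq:input-dependent-contraction-coeff}. Applying this $n$ times with $\tau = \cE^{k}(\rho)$ gives $\chi^2_f(\cE^n(\rho)\Vert\pi) \leq \eta_{\chi^2_f}(\cE,\pi)^n\,\chi^2_f(\rho\Vert\pi)$. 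Taking square roots and combining with the first step yields $\Vert \cE^n(\rho) - \pi\Vert_1 \leq \eta_{\chi^2_f}(\cE,\pi)^{n/2}\sqrt{\chi^2_f(\rho\Vert\pi)}$. I would note that this iteration requires $\cE$ to be a channel (so DPI/contraction holds), which is guaranteed here since $\pi$ is a genuine fixed point of the quantum channel $\cE$.

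For the third inequality I must bound $\chi^2_f(\rho\Vert\pi)$ uniformly over all input states $\rho$. Here I would use the ordering $\chi^2_f \leq \chi^2_{HM}$ from \eqref{eq:ordering-of-chi-squareds} to reduce to the harmonic mean case, or more directly bound $\chi^2_f(\rho\Vert\pi) = \langle \rho-\pi, \mbf{J}_{f,\pi}^{-1}(\rho-\pi)\rangle$ using the worst-case eigenvalue of $\mbf{J}_{f,\pi}^{-1}$. Since $P_f(\lambda_i,\lambda_j) \geq$ some multiple of $\min(\lambda_i,\lambda_j) \geq \lambda_{\min}(\pi)$ times a normalization constant (using $f(1)=1$ and the ordering $f_{HM}\leq f$), the inverse operator $\mbf{J}_{f,\pi}^{-1}$ has operator norm at most $1/\lambda_{\min}(\pi)$ on the relevant subspace, giving $\chi^2_f(\rho\Vert\pi) \leq \lambda_{\min}(\pi)^{-1}\Vert\rho-\pi\Vert_2^2 \leq \lambda_{\min}(\pi)^{-1}\Vert\rho-\pi\Vert_1^2 \leq 2/\lambda_{\min}(\pi)$, where the last step uses $\Vert\rho-\pi\Vert_1 \leq 2$ for states. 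This establishes the uniform bound $\sqrt{\chi^2_f(\rho,\pi)} \leq \sqrt{2/\lambda_{\min}(\pi)}$.

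The final \emph{``in other words''} statement is purely algebraic: setting $\sqrt{2/\lambda_{\min}}\,\eta^{n/2} \leq \delta$, taking logarithms, and solving for $n$ gives $n \geq \log(2/(\delta^2\lambda_{\min}))/\log(1/\eta_{\chi^2_f}(\cE,\pi))$, using that $\eta < 1$ (which holds because $\pi$ is the \emph{unique} fixed point, so by Lemma~\ref{lem:chi-squared-contraction-unity} the contraction coefficient is strictly less than one). The main obstacle I anticipate is pinning down the exact constants in the two $\chi^2_f$-versus-norm comparisons so that the factor of $2$ and the dependence on $\lambda_{\min}$ come out exactly as stated; the structural logic (Pinsker, iterate contraction, uniform bound, solve for $n$) is routine, but verifying that the cited Proposition~\ref{prop:Df-lb-in-chi-square} gives the first inequality with no stray constant, and that the eigenvalue bound on $\mbf{J}_{f,\pi}^{-1}$ lands exactly at $1/\lambda_{\min}$ rather than some $f$-dependent multiple, is where care is needed.
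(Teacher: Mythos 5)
Your proposal follows the same three-step skeleton as the paper's proof: (i) dominate the trace norm by $\chi^{2}_{f}$, (ii) iterate the input-dependent contraction coefficient using $\cE(\pi)=\pi$, and (iii) bound $\chi^{2}_{f}(\rho \Vert \pi)$ uniformly by $2/\lambda_{\min}(\pi)$, then solve for $n$. Step (ii) is exactly the paper's argument and is correct. However, two of your constant/citation choices fail as written. For step (i), Proposition~\ref{prop:Df-lb-in-chi-square} does \emph{not} supply $\Vert \rho - \pi \Vert_{1}^{2} \leq \chi^{2}_{f}(\rho \Vert \pi)$: it is the \emph{reverse}-direction comparison $\chi^{2}_{HM}(\rho \Vert \sigma) \leq 2\,\text{TD}(\rho,\sigma)^{2}/\lambda_{\min}(\sigma)$, which the paper uses in step (iii), not (i). The Pinsker-type inequality you need is \cite[Lemma 5]{Temme-2010a} (which the paper cites), giving $\Vert \rho - \sigma \Vert_{1}^{2} \leq \chi^{2}_{f}(\rho \Vert \sigma)$ with constant $1$; your fallback $\chi^{2}_{AM} \geq \tfrac{1}{4}\Vert \cdot \Vert_{1}^{2}$ would cost a factor of $2$ in the final bound. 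For step (iii), your ``direct'' route has an arithmetic slip: $\Vert \rho - \pi \Vert_{2}^{2} \leq \Vert \rho - \pi \Vert_{1}^{2} \leq 4$, not $2$, so it yields $4/\lambda_{\min}(\pi)$. This is repairable, either via $\Vert \rho - \pi \Vert_{2}^{2} = \Tr[\rho^{2}] - 2\Tr[\rho\pi] + \Tr[\pi^{2}] \leq 2$, or via your first-mentioned alternative (reduce to $\chi^{2}_{HM}$ by \eqref{eq:ordering-of-chi-squareds} and apply Proposition~\ref{prop:Df-lb-in-chi-square} with $\text{TD} \leq 1$), which is exactly what the paper does. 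Your operator-norm estimate $\Vert \mbf{J}_{f,\pi}^{-1} \Vert \leq 1/\lambda_{\min}(\pi)$ for $f \in \cM_{\text{St}}$ is itself fine, since $P_{f}(\lambda_{i},\lambda_{j}) \geq P_{f_{HM}}(\lambda_{i},\lambda_{j}) \geq \min(\lambda_{i},\lambda_{j}) \geq \lambda_{\min}(\pi)$.

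The genuine error is your closing claim that uniqueness of the fixed point forces $\eta_{\chi^{2}_{f}}(\cE,\pi) < 1$ via Lemma~\ref{lem:chi-squared-contraction-unity}. That lemma characterizes $\eta = 1$ by the existence of some $\rho \neq \pi$ fixed by the \emph{composed} map $\cS_{f,\cE,\pi} \circ \cE$, not by $\cE$ itself, and uniqueness of the fixed point of $\cE$ does not preclude this. A concrete counterexample is any nontrivial unitary or classical permutation channel, e.g.~the bit flip: its unique full-rank fixed point is the maximally mixed state, yet it is invertible by a channel, so $\eta_{\chi^{2}_{f}} = 1$ for every $f$ and the chain does not mix (it is periodic). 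The paper is careful on this point: Theorem~\ref{thm:computable-mixing-times} explicitly allows the case where all contraction coefficients equal one, and the ``in other words'' clause of the proposition is only informative when $\eta_{\chi^{2}_{f}}(\cE,\pi) < 1$, which is an additional hypothesis rather than a consequence of uniqueness.
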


This previous proposition tells us that $\chi^{2}_{f}$-divergence contraction coefficients are useful for bounding the loss of distinguishability under the $L_{1}$-norm, which is relevant to hypothesis testing via the Holevo-Helstrom theorem. However, note that, at least asymptotically, most information processing tasks are characterized by quantities induced by the relative entropy. It thus is perhaps more important to understand mixing times in terms of the relative entropy or other monotonic divergences. Recently, \cite{GZB-preprint-2024} considered the problem of measuring the rate at which a classical time homogeneous Markov chain $\cW^{\circ n}$ with unique fixed point distribution $\pi$ under $f$-divergences. They showed for a large class of $f$-divergences, if the rate is to be considered in terms of contraction coefficients, the rate of contraction is bounded above by $\eta_{\chi^{2}}(\cW,\pi)$. This is appealing as it is efficient to compute and is the fastest it could be using contraction coefficients. \cite{GZB-preprint-2024} also extended this to Petz $f$-divergences \cite{HIAI_2011,Hiai-2017a} in terms of what can now be seen to be $\chi^{2}_{HM}(\cE,\pi)$, but with no promises that it is fastest or it is efficient to compute. This exemplifies our poor understanding of quantum time homogeneous Markov chains and quantum $f$-divergences. This subsection aims to make use of the tools we developed in Section \ref{sec:quantum-chi-squared} to resolve this.

\subsubsection{Further Background on Families of Quantum Divergences}
For clarity, we briefly present further background and standard definitions on families of quantum divergences we will use subsequently. Classically, the central quantity for (asymptotic) information theory is the relative entropy $D(p \Vert q)$. There exist two main families that generalize this quantity. The first family is the $f$-divergences $D_{f}(p \Vert q) \coloneq \sum_{x} q(x) f\left(\frac{p(x)}{q(x)} \right)$ where $f$ is a convex function. These all satisfy the data processing inequality and recover the relative entropy by setting $f = t\log(t)$ (See \cite{polyanskiy-2023a} for further detail). The second family are the R\'{e}nyi divergences $D_{\alpha}(p \Vert q) \coloneq \frac{1}{\alpha-1}\log(\sum_{x} p(x)^{\alpha}q(x)^{1-\alpha})$ where $\alpha \in (0,1) \cup (1,\infty)$. These also all satisfy the data processing inequality and recover the relative entropy by taking the limit $\alpha \to 1$ (See \cite{Tomamichel-2016a} for further information). 

In quantum information theory, as quantum states $\rho$ and $\sigma$ do not necessarily commute and thus cannot be simultaneously diagonalized, one can construct multiple generalizations of the R\'{e}nyi and $f$-divergences that recover the classical case in the same manner as there are multiple quantum $\chi^{2}$-divergences. We will be interested in the following families of quantum divergences, which all satisfy the data processing inequality (see citations for further information).
\begin{definition} (Petz $f$-divergences \cite{Petz-1986a,HIAI_2011, Hiai-2017a}) \label{def:Petz-f-div} Let $f:(0,\infty) \to \mbb{R}$ be a convex function, $\rho \ll \sigma \in \Density(A)$. The Petz $f$-divergence is
\begin{align}
    \ol{D}_{f}(\rho \Vert \sigma) \coloneq \langle \sigma^{1/2}, f(L_{\rho}R_{\sigma^{-1}})\sigma^{1/2} \rangle \ , 
\end{align}
where $L_{\rho},R_{\sigma^{-1}}$ are the left and right multiplication operators \eqref{eq:left-and-right-mult-operators} and the function of these is well defined (Proposition \ref{prop:func-of-mod-op}).
\end{definition}

\begin{definition}
    (Hirche-Tomamichel $f$-divergences \cite{Hirche-2024a}) \label{def:HT-divergences} Let $f:(0,\infty) \to \mbb{R}$ be convex, twice-differentiable, and satisfy $f(1) = 0$. Then for quantum states $\rho$ and $\sigma \in \Density(A)$, the HT $f$-divergence is
    \begin{align}
        D_{f}(\rho \Vert \sigma) \coloneq \int_{1}^{\infty} f''(\gamma)E_{\gamma}(\rho \Vert \sigma) + \gamma^{-3}f''(\gamma^{-1})E_{\gamma}(\sigma \Vert \rho) d\gamma \ 
    \end{align}
    whenever the integral is finite and otherwise set to $+\infty$ where, for any $\gamma \geq 1$,
    \begin{align}
        E_{\gamma}(\rho \Vert \sigma) \coloneq \Tr[\rho - \gamma \sigma]_{+} - (\Tr[A - \gamma \sigma])_{+} \ , 
    \end{align}
    and $A_+$ denotes the positive part of the eigen-decomposition of a Hermitian operator $A$.
\end{definition}
\begin{definition}(Sandwiched R\'{e}nyi Divergences \cite{Wilde-2014a,Muller-2017a}) \label{def:SRD} Let $\alpha \in (0,1) \cup (1,\infty)$ and $\rho,\sigma \in \Density(A)$. Then
\begin{align}
    D_{\alpha}(\rho \Vert \sigma) \coloneq \begin{cases}
        \frac{1}{\alpha-1}\log\Vert \sigma^{\frac{1-\alpha}{2\alpha}}\rho\sigma^{\frac{1-\alpha}{2\alpha}} \Vert^{\alpha}_{\alpha} & \text{if } (\alpha < 1 \wedge \Tr[\rho \sigma] \neq 0) \vee \rho \ll \sigma \\
        +\infty & \text{otherwise} \ ,
    \end{cases}
\end{align}
where $\Vert \cdot \Vert_{\alpha}$ denotes the $\alpha$-Schatten norm (see e.g. \cite{Tomamichel-2016a}).
\end{definition}

What is important for us is the relation between these families of divergences and specific instances of the quantum $\chi^{2}_{f}$-divergences. Classically, the $\chi^{2}$-divergence may be recovered as the `local behaviour' of any $f$-divergence \cite{Sason-2018a} (see also \cite{polyanskiy-2023a}). Quantitative bounds on approximating an $f$-divergence with the $\chi^{2}-$divergence for most $f$-divergences was established in \cite{GZB-preprint-2024}, which ultimately implies it controls the rate at which a classical, time-homogeneous Markov chain converges according to a large family of contraction coefficients.

In the quantum scenario, the relation is somewhat murkier. \cite{GZB-preprint-2024} established quantitative bounds on approximating the Petz $f$-divergences by what may be identified as $\chi^{2}_{HM}$ using $\mbf{J}^{-1}_{f_{HM},\sigma}[X] = \frac{1}{2}\left(\sigma^{-1}X + X\sigma^{-1}\right)$ \cite{Petz-2011a}. The authors further bound the rate of convergence for time-homogeneous Markov chains according to Petz $f$-divergences. However, they did not establish that this $\chi^{2}_{HM}$ is the local behaviour of the Petz $f$-divergences in terms of the contraction coefficient $\eta_{\chi^{2}_{HM}}(\cE,\sigma)$. Interestingly, $\chi^{2}_{HM}(\rho \Vert \sigma) = \ol{D}_{x^{2}}(\rho \Vert \sigma) - 1$ where $\ol{D}_{x^{2}}$ is the Petz $f$-divergence for $f(x) = x^{2}$ as may be verified by direct calculation using Definition \ref{def:Petz-f-div} and Proposition \ref{prop:func-of-mod-op}. In a similar manner, a direct calculation will verify $\chi^{2}_{GM}(\rho \Vert \sigma) = \exp(\wt{D}_{2}(\rho \Vert \sigma)) - 1$, but, to the best of our knowledge, no further relations between $\chi^{2}_{GM}(\rho \Vert \sigma)$ and $\wt{D}_{2}(\rho \Vert \sigma)$ are known. In a more positive direction, \cite[Theorem 2.8]{Hirche-2024a} showed that the $\chi^{2}_{LM}$-divergence corresponds to the local behaviour of the HT $f$-divergences and \cite[Theorem 5.9]{Hirche-2024a} generalized the quantitative bounds between $f$-divergences and the $\chi^{2}$-divergence of \cite{GZB-preprint-2024} to the HT $f$-divergences and the $\chi^{2}_{LM}$-divergence. Given its historical importance, we remark that the $\chi^{2}_{LM}$-divergence is a long-studied quantity. Namely $\chi^{2}_{LM}(\rho \Vert \sigma) = \gamma_{LM,\sigma}(\rho,\rho) \eqqcolon K^{Bo}_{D}(\rho,\sigma)$ where the final expression is the ``Bogoliubov-Kubo-Mori inner product" \cite[Eq. 7]{Petz-1996a}, which is useful in information geometry and considering quantum Fisher information (see e.g.~\cite{petz1994geometry,Lesniewski-1999a,Gao-2023-sufficient-fisher} and references therein).

In total, while we do not have a complete picture, the HT $f$-divergences recover the majority of the classical theory, but other $\chi^{2}_{f}$ divergences are relevant for other families of quantum divergences. Table \ref{tab:div-locality-summary} summarizes these correspondences.
\begin{table}[H]
    \centering
    \begin{tabular}{l|l} 
    Quantum Divergence Family & Relevant $\chi^{2} _{f}$-Divergence  \\ \hline
    Petz $f$-Divergences \cite{HIAI_2011,Hiai-2017a} & $\chi^{2}_{HM}(\rho \Vert \sigma) = \ol{D}_{x^{2}}(\rho \Vert \sigma) - 1$ \\
    Hirche-Tomamichel $f$-Divergences \cite{Hirche-2024a} & $\chi^{2}_{LM}(\rho \Vert \sigma) = \int_{0}^{\infty} \Tr[\rho(\sigma + s\mbb{1})^{-1}\rho(\sigma + s\mbb{1})^{-1}] ds - 1$ \\ 
    Sandwiched $\alpha$-R\'{e}nyi Divergences \cite{Wilde-2014a,Muller-Lennert-2013a} & $\chi^{2}_{GM}(\rho \Vert \sigma) = \exp(\wt{D}_{2}(\rho \Vert \sigma)) - 1$
\end{tabular}
    \caption{\footnotesize Summary of the families of quantum divergences considered in this section and the seemingly relevant $\chi^{2}_{f}$-divergence.}
    \label{tab:div-locality-summary}
\end{table}

\subsubsection{Bounds on Rates of Data Processing}
Here we establish the main result of this subsection, which shows we can bound the rate of contraction for time homogeneous Markov chains as measured by various families of quantum divergences using corresponding input-dependent contraction coefficients of quantum $\chi^{2}$-divergences (Theorem \ref{thm:mixing-rate}). We also strengthen this result for the relative entropy specifically (Theorem \ref{thm:rel-ent-mixing}).

To establish these results, we will make use of a few definitions. First, we recall the definition for input-dependent and input-independent contraction coefficients more generally than just for $\chi^{2}$-divergences.
\begin{definition}
    Assume $\mbb{D}(\cdot \Vert \cdot): \Density(A) \times \Density(A) \to \mbb{R}$ satisfies the data processing inequality, i.e. $\mbb{D}(\cE(\rho) \Vert \cE(\sigma)) \leq \mbb{D}(\rho \Vert \sigma)$ for all $\cE \in \Channel(A,B)$ and $\rho,\sigma \in \Density(A)$ such that $\mbb{D}(\rho \Vert \sigma) < +\infty$. Then the input-dependent and input-independent contraction coefficents are given respectively by
    \begin{align}
        \eta_{\mbb{D}}(\cE,\sigma) \coloneq \sup_{\substack{\rho \neq \sigma: \\ 0 < \mbb{D}(\rho \Vert \sigma) < + \infty}} \frac{\mbb{D}(\cE(\rho) \Vert \cE(\sigma))}{\mbb{D}(\rho \Vert \sigma)} \\
        \eta_{\mbb{D}}(\cE) \coloneq \sup_{\sigma \in \mbb{D}} \eta_{\mbb{D}}(\cE,\sigma) \ .
    \end{align}
\end{definition}
\noindent We remark the above definition captures more than divergences. For example, it captures the trace distance.

We also note the following relation that follows from definitions and results of \cite{Hirche-2024a}:
\begin{align}\label{eq:HT-f-div-contraction-ordering}
    \eta_{\chi^{2}_{LM}}(\cE,\sigma) \leq \eta_{\chi^{2}_{LM}}(\cE) \leq \eta_{D_{f}}(\cE) \leq \eta_{\Tr}(\cE) \ , 
\end{align}
where $D_{f}$ denotes a member of the HT $f$-divergences (Definition \ref{def:HT-divergences}). As this relation is known to hold for classical $f$-divergences, this suggests it `ought' to be the correct quantum $f$-divergences to recover the results of \cite{GZB-preprint-2024}.

Second, we call an $f$-divergence `Pinsker-satisfying' if there exists $L_{f} > 0$ such that $\mbb{D}_{f}(\rho \Vert \sigma) \geq \frac{L_{f}}{2}\text{TD}(\rho,\sigma)^{2}$. We can then define the set of functions 
\begin{align}
    \cF_{\text{Pin}} \coloneq \left\{f: \begin{matrix} f(1) = 0 \, , \, f''(1) > 0 \, , \text{convex, Pinsker-satisfying} \, , \\ \text{and twice continuously differentiable}
    \end{matrix} \right\} \ .
\end{align}
Third, we will specify the notion of convergence of a time-homogeneous quantum Markov chain that we are interested in. 
\begin{definition}
    We say a quantum channel $\cE_{A \to A}$ is mixing if there exists a unique quantum state $\pi$ such that $\lim_{n \to \infty} \Vert \cE^{\circ n}(\rho) - \pi \Vert_{1} = 0$ for all $\rho \in \Density(A)$.
\end{definition}

Lastly, we will make use of the following lemma.
\begin{lemma}(\cite[Theorem 5.9]{Beigi-2025a})
    Let $\rho,\sigma \in \Density(A)$. Define $I = [e^{-D_{\max}(\sigma \Vert \rho)},e^{D_{\max}(\rho \Vert \sigma)}]$. Then
    \begin{align}
    \frac{\kappa_{f}^{\downarrow}(\rho,\sigma)}{2}\chi^{2}_{LM}(\rho \Vert \sigma) \leq D_{f}(\rho \Vert \sigma) \leq \frac{\kappa_{f}^{\uparrow}(\rho,\sigma)}{2}\chi^{2}_{LM}(\rho \Vert \sigma) \ ,
    \end{align}
    where 
    \begin{align}
        \kappa_{f}^{\uparrow}(\rho,\sigma) &\coloneq \max_{\substack{t \in [0,1] \\ \gamma \in I}} f''(1+ t(\gamma-1)) \\
        \kappa_{f}^{\downarrow}(\rho,\sigma) &\coloneq \min_{\substack{t \in [0,1] \\ \gamma \in I}} f''(1+ t(\gamma-1)) \ .
    \end{align}
\end{lemma}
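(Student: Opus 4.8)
The statement to be proved is the sandwiching bound of \cite[Theorem 5.9]{Beigi-2025a}, quoted as a lemma: for $\rho,\sigma \in \Density(A)$ with $I = [e^{-D_{\max}(\sigma\Vert\rho)},e^{D_{\max}(\rho\Vert\sigma)}]$, the Hirche--Tomamichel $f$-divergence is bounded above and below by $\frac{\kappa_f^{\uparrow}}{2}$ and $\frac{\kappa_f^{\downarrow}}{2}$ times $\chi^2_{LM}(\rho\Vert\sigma)$. My plan is to work entirely from the integral representation in Definition \ref{def:HT-divergences} together with the identity $\chi^2_{LM}(\rho\Vert\sigma) = \gamma_{LM,\sigma}(\rho,\rho) = \langle \rho-\sigma,\mbf{J}_{f_{LM},\sigma}^{-1}(\rho-\sigma)\rangle$ and the relation $\chi^2_{LM}(\rho\Vert\sigma)=\int_0^\infty \Tr[\rho(\sigma+s\mbb{1})^{-1}\rho(\sigma+s\mbb{1})^{-1}]\,ds - 1$ recorded in Table \ref{tab:div-locality-summary}. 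The first step is to rewrite $D_f(\rho\Vert\sigma)$ in a form where the $f''$ dependence is factored out of a common integral kernel that is exactly $\chi^2_{LM}$; this is the crux of the argument.

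\textbf{Key steps.}
First I would Taylor-expand the integrand of $D_f$ along the segment joining the two eigenvalue ratios. Concretely, one writes the HT $f$-divergence using the second-order integral form $D_f(\rho\Vert\sigma)=\int_0^\infty\!\!\int_0^1 f''(1+t(\gamma-1))\,w(t,\gamma)\,dt\,d\mu(\gamma)$, where $w$ is a nonnegative weight and $\mu$ is the spectral measure built from the pinched joint spectrum of $(\rho,\sigma)$; the key analytic fact is that $\int_0^1\!\!\int_0^\infty w(t,\gamma)\,d\mu(\gamma)\,dt = \frac{1}{2}\chi^2_{LM}(\rho\Vert\sigma)$, i.e. the weight integrates precisely to (half) the logarithmic-mean $\chi^2$-divergence. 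Second, since $f$ is convex and twice continuously differentiable, $f''\ge 0$, so every $\gamma$ appearing with nonzero weight lies in the interval $I$ (this is where $D_{\max}$ enters: $\gamma$ ranges over ratios $\lambda_i(\rho)/\lambda_j(\sigma)$, all of which lie in $[e^{-D_{\max}(\sigma\Vert\rho)},e^{D_{\max}(\rho\Vert\sigma)}]$ because $D_{\max}(\rho\Vert\sigma)=\log\lambda_{\max}(\sigma^{-1/2}\rho\sigma^{-1/2})$). Third, I would bound $f''(1+t(\gamma-1))$ pointwise by its maximum $\kappa_f^{\uparrow}$ and minimum $\kappa_f^{\downarrow}$ over $(t,\gamma)\in[0,1]\times I$, pull the resulting constant out of the integral, and invoke the weight-integration identity from the first step to land on $\frac{\kappa_f^{\downarrow}}{2}\chi^2_{LM}\le D_f\le\frac{\kappa_f^{\uparrow}}{2}\chi^2_{LM}$.

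\textbf{Main obstacle.}
The hard part will be establishing the integral-with-weight representation and, in particular, verifying that the weight integrates to exactly $\frac{1}{2}\chi^2_{LM}(\rho\Vert\sigma)$ rather than to some other monotone metric. This requires carefully unwinding Definition \ref{def:HT-divergences}: the divergence is defined via $E_\gamma(\rho\Vert\sigma)=\Tr[\rho-\gamma\sigma]_+ - (\Tr[\rho-\gamma\sigma])_+$ integrated against $f''(\gamma)+\gamma^{-3}f''(\gamma^{-1})$, and one must differentiate the hockey-stick functional $E_\gamma$ to reveal the spectral measure, then match the $\gamma=1$ \emph{local} behaviour against the logarithmic-mean kernel $P_{f_{LM}}(\lambda_i,\lambda_j)^{-1}=\int_0^\infty (\lambda_i+s)^{-1}(\lambda_j+s)^{-1}\,ds$. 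The correspondence that $\chi^2_{LM}$ is the local behaviour of the HT $f$-divergences is exactly \cite[Theorem 2.8]{Hirche-2024a}, so the cleanest route is to treat that second-order identity as the established input and reduce the whole lemma to a pointwise two-sided bound on $f''$ over the interval $I$; the remaining work is then just the mean-value/monotonicity estimate, which is routine once the representation is in hand.
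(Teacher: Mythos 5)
First, a point of status: the paper does not prove this lemma at all — it is imported verbatim from \cite[Theorem 5.9]{Beigi-2025a} — so your attempt can only be compared against the natural proof of that external result. Your skeleton matches that proof in outline: the HT divergence of Definition \ref{def:HT-divergences} is linear in $f''$, and after the substitution $\gamma \mapsto 1/\gamma$ in the second term one gets $D_{f}(\rho \Vert \sigma) = \int_{0}^{\infty} f''(\gamma)\,w(\gamma)\,d\gamma$ with the nonnegative weight $w(\gamma) = E_{\gamma}(\rho \Vert \sigma)$ for $\gamma \geq 1$ and $w(\gamma) = \gamma E_{1/\gamma}(\sigma \Vert \rho)$ for $\gamma < 1$; one then shows $w$ is supported in $I$, bounds $f''$ pointwise, and evaluates $\int w = \frac{1}{2}\chi^{2}_{LM}(\rho\Vert\sigma)$ by plugging the quadratic $f(x)=(x-1)^{2}$ (so $f'' \equiv 2$) into the same representation. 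Note the Taylor-remainder layer with the $t$-integral is unnecessary here: since $1 \in I$, the set $\{1+t(\gamma-1) : t \in [0,1],\, \gamma \in I\}$ is just $I$, so $\kappa_{f}^{\uparrow},\kappa_{f}^{\downarrow}$ reduce to the extrema of $f''$ over $I$; the $t$-parametrization is an artifact of the classical proof in \cite{GZB-preprint-2024}, where it genuinely arises from Taylor expansion of $f$.

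Two of your supporting justifications, however, fail as stated. (i) Your claim that the relevant $\gamma$ lie in $I$ ``because $\gamma$ ranges over ratios $\lambda_{i}(\rho)/\lambda_{j}(\sigma)$'' is false: cross-ratios of eigenvalues are not confined to $[e^{-D_{\max}(\sigma\Vert\rho)},e^{D_{\max}(\rho\Vert\sigma)}]$ — take $\rho = \sigma$ with nondegenerate spectrum, where $D_{\max} = 0$ forces $I = \{1\}$ yet $\lambda_{i}/\lambda_{j} \neq 1$. The correct argument is an operator inequality: for $\gamma \geq 1$ one has $E_{\gamma}(\rho\Vert\sigma) = \Tr[\rho-\gamma\sigma]_{+}$, which vanishes if and only if $\rho \leq \gamma\sigma$, i.e. $\gamma \geq e^{D_{\max}(\rho\Vert\sigma)}$, and symmetrically for the reversed hockey-stick; hence $w$ vanishes outside $I$ with no spectral decomposition involved. (ii) There is no ``spectral measure built from the pinched joint spectrum'' underlying $D_{f}$: the optimal projector inside $E_{\gamma}$ (the positive part of $\rho - \gamma\sigma$) varies with $\gamma$, so differentiating the hockey-stick functional does not reveal a measure attached to a fixed classical pair, and any route through pinching only yields $D_{f}(\cP_{\sigma}(\rho)\Vert\sigma) \leq D_{f}(\rho\Vert\sigma)$ by data processing — a lower bound, which cannot deliver the upper sandwich inequality. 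Relatedly, citing \cite[Theorem 2.8]{Hirche-2024a} (local behaviour) is insufficient for the normalization step: local behaviour gives $\int w \approx \frac{1}{2}\chi^{2}_{LM}$ only in the limit $\rho \to \sigma$, whereas the sandwich is a global statement; you need the exact global identity $D_{(x-1)^{2}}(\rho\Vert\sigma) = \chi^{2}_{LM}(\rho\Vert\sigma)$ (also established in \cite{Hirche-2024a}, consistent with $\mbf{J}^{-1}_{f_{LM},\sigma}(X) = \int_{0}^{\infty}(\sigma+s\mbb{1})^{-1}X(\sigma+s\mbb{1})^{-1}\,ds$ and the integral formula in Table \ref{tab:div-locality-summary}). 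With (i) and (ii) repaired, your argument closes and coincides with the cited proof.
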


With these definitions provided, we state the theorem, which we note covers the three families of divergences summarized in Table \ref{tab:div-locality-summary}.
\begin{tcolorbox}[width=\linewidth, sharp corners=all, colback=white!95!black, boxrule=0pt,frame hidden]
\begin{theorem}\label{thm:mixing-rate}
    Let $\cE$ be mixing with fixed point $\pi$. 
    \begin{enumerate}
        \item \cite{GZB-preprint-2024} If $f \in \cF_{\text{Pin}}$ and either $\vert f''(0) \vert < +\infty$ or $\pi$ is full rank, $\lim_{n \to \infty} \left[\eta_{\ol{D}_{f}}(\cE^{n},\pi)^{1/n} \right] \leq \eta_{\chi^{2}_{HM}}(\cE,\pi)$.
        \item If $\alpha \in [1/2,\infty)$, $\lim_{n \to \infty} [\eta_{\wt{D}_{\alpha}}(\cE,\pi)^{1/n}] \leq \eta_{\chi^{2}_{GM}}(\cE,\pi)$.
        \item If $f \in \cF_{\text{Pin}}$ and $\pi$ is full rank, $\lim_{n \to \infty} [\eta_{D_{f}}(\cE,\pi)^{1/n}] \leq \eta_{\chi^{2}_{LM}}(\cE,\pi) \leq \eta_{f}(\cE,\pi)$. That is, it converges the fastest it could converge according to the theory of contraction coefficients.
    \end{enumerate}
    Moreover, the first and third bounds can be tight classically.
\end{theorem}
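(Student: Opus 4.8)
The plan is to handle all three parts with one template and then feed in the part-specific comparison between the divergence $\mbb{D}$ and the relevant $\chi^{2}_{f}$-divergence ($f_{HM}$, $f_{GM}$, $f_{LM}$ in Parts 1, 2, 3). Write $\eta \coloneq \eta_{\chi^{2}_{f}}(\cE,\pi)$. First I would dispose of the trivial case $\eta = 1$, where every contraction coefficient is at most one and the bound is vacuous. When $\eta < 1$, Proposition~\ref{prop:L1-mixing-times} gives \emph{uniform} exponential mixing, $\sup_{\rho}\Vert \cE^{n}(\rho)-\pi\Vert_{1} \leq \sqrt{2/\lambda_{\min}(\pi)}\,\eta^{n/2} \to 0$, so I may fix a small $\delta>0$ and pick $n_{0}$ with $\cE^{m}(\rho)$ in the trace-distance ball of radius $\delta$ about $\pi$ for every state $\rho$ and every $m \geq n_{0}$. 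Since $\pi$ is full rank, these near-diagonal states are themselves full rank with spectrum bounded away from $0$, which keeps every auxiliary quantity finite and controlled.

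The core estimate is a telescoping that confines each use of a \emph{lower} comparison bound to states already near $\pi$. Splitting $\cE^{n} = \cE^{\,n-n_{0}}\circ\cE^{n_{0}}$ and using the data processing inequality of $\mbb{D}$ at the fixed point, so that $\mbb{D}(\cE^{n_{0}}(\rho)\Vert\pi) \leq \mbb{D}(\rho\Vert\pi)$, I would bound for $n \geq n_{0}$
\[
\frac{\mbb{D}(\cE^{n}(\rho)\Vert\pi)}{\mbb{D}(\rho\Vert\pi)} \;\leq\; \frac{\mbb{D}(\cE^{n}(\rho)\Vert\pi)}{\mbb{D}(\cE^{n_{0}}(\rho)\Vert\pi)}.
\]
Both $\cE^{n}(\rho)$ and $\cE^{n_{0}}(\rho)$ lie within $\delta$ of $\pi$, so I may invoke the part-specific two-sided comparison, ``$c_{1}(\delta)\,\chi^{2}_{f}(\omega\Vert\pi) \leq \mbb{D}(\omega\Vert\pi) \leq c_{2}(\delta)\,\chi^{2}_{f}(\omega\Vert\pi)$ for $\omega$ near $\pi$,'' at both states, and then iterate the input-dependent $\chi^{2}_{f}$-contraction (Proposition~\ref{prop:chi-squared-properties}), using $\cE(\pi)=\pi$:
\[
\frac{\mbb{D}(\cE^{n}(\rho)\Vert\pi)}{\mbb{D}(\cE^{n_{0}}(\rho)\Vert\pi)} \;\leq\; \frac{c_{2}(\delta)}{c_{1}(\delta)}\,\frac{\chi^{2}_{f}(\cE^{n}(\rho)\Vert\pi)}{\chi^{2}_{f}(\cE^{n_{0}}(\rho)\Vert\pi)} \;\leq\; \frac{c_{2}(\delta)}{c_{1}(\delta)}\,\eta^{\,n-n_{0}}.
\]
Taking the supremum over $\rho$, then the $n$-th root and $n\to\infty$, the fixed prefactor $\tfrac{c_{2}(\delta)}{c_{1}(\delta)}\eta^{-n_{0}}$ contributes $1$ and leaves $\lim_{n}\eta_{\mbb{D}}(\cE^{n},\pi)^{1/n} \leq \eta$. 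The point of inserting $\cE^{n_{0}}(\rho)$ is that the lower comparison is only ever applied at full-rank states near $\pi$, so $c_{1}(\delta)$ stays bounded away from $0$; this is exactly where the full-rank hypothesis on $\pi$ (and, in Part 1, the alternative $|f''(0)|<+\infty$) enters.

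It remains to supply the near-diagonal comparisons. For Part 3 these are \cite[Theorem~5.9]{Beigi-2025a} as quoted above, with $c_{1},c_{2} = \tfrac12\kappa^{\downarrow}_{f},\tfrac12\kappa^{\uparrow}_{f}$; as $\delta\to 0$ the interval $I$ collapses to $\{1\}$, so both tend to $\tfrac12 f''(1)>0$ by twice-continuous differentiability, and the secondary inequality $\eta_{\chi^{2}_{LM}}(\cE,\pi)\leq \eta_{f}(\cE,\pi)$ is \eqref{eq:HT-f-div-contraction-ordering} (since $\chi^{2}_{LM}$ is the local behaviour of $D_{f}$, \cite{Hirche-2024a}). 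For Part 2 I would start from the exact identity $\chi^{2}_{GM}=\exp(\wt{D}_{2})-1$, use monotonicity of $\wt{D}_{\alpha}$ in $\alpha$ to sandwich $\wt{D}_{\alpha}$ by $\wt{D}_{2}$-type quantities near $\pi$, and observe that the local Hessian of $\wt{D}_{\alpha}$ is a monotone metric, hence comparable to $\chi^{2}_{GM}$ within a factor depending only on the spectrum of $\pi$ through the ordering \eqref{eq:ordering-of-chi-squareds}. Part 1 is the quantum transcription of \cite{GZB-preprint-2024}, whose quantitative $f$-divergence-to-$\chi^{2}_{HM}$ bounds (via $\chi^{2}_{HM}=\ol{D}_{x^{2}}-1$) furnish $c_{1},c_{2}$. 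Finally, classical tightness of Parts 1 and 3 is inherited: for commuting $\rho,\sigma,\cE$ all $\mbf{J}_{f,\sigma}$ coincide, so $\chi^{2}_{HM}=\chi^{2}_{LM}=\chi^{2}$ and the Petz and Hirche--Tomamichel families reduce to ordinary $f$-divergences, where the sharp classical statement of \cite{GZB-preprint-2024} applies verbatim. The main obstacle is precisely the uniform, sub-exponential control of the comparison constants, which the telescoping device reduces to continuity of $f''$ at $1$ together with the full-rank hypothesis on $\pi$.
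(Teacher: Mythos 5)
Your architecture is genuinely different from the paper's and, for Parts 1 and 3, it is sound. The paper does not telescope: it lower-bounds the denominator $D_{f}(\rho\Vert\pi)$ \emph{globally} by $\chi^{2}_{LM}(\rho\Vert\pi)$ up to a $\lambda_{\min}(\pi)$ factor (Pinsker-type bounds, Proposition~\ref{prop:Df-lb-in-chi-square} and Lemma~\ref{lem:chi-square-lb-in-terms-of-HM}), applies the upper comparison of \cite[Theorem~5.9]{Beigi-2025a} only to the numerator at $\cE^{n}(\rho)$, and then shows $\sup_{\rho}\kappa^{\uparrow}_{f}(\cE^{n}(\rho),\cE^{n}(\pi))\to f''(1)$ using the mixing hypothesis, the $D_{\max}$-versus-trace-distance bound of Lemma~\ref{lem:Dmax-topol-equiv-to-sigma-often}, eventual positivity of $\cE^{n}(\rho)$ (\cite[Lemma~67]{GZB-preprint-2024}) and Weyl's inequalities; all constants are killed by the $n$-th root. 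Your burn-in at time $n_{0}$ buys something the paper's route doesn't need but doesn't hurt: the lower comparison constant $c_{1}(\delta)$ is only ever invoked near $\pi$, where it tends to $f''(1)/2>0$, instead of a global Pinsker constant, and your derivation of \emph{uniform} mixing from $\eta<1$ via Proposition~\ref{prop:L1-mixing-times} (after splitting off the vacuous case $\eta=1$) is a clean substitute for the paper's direct uniformization of the mixing hypothesis. One line you should add: if $\mbb{D}(\cE^{n_{0}}(\rho)\Vert\pi)=0$ then $\cE^{n_{0}}(\rho)=\pi$, hence $\cE^{n}(\rho)=\pi$ and the original ratio is $0$, so the telescoped bound is not needed there.

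The genuine gap is Part 2 for $\alpha>2$. Monotonicity of $\wt{D}_{\alpha}$ in $\alpha$ gives $\wt{D}_{\alpha}\geq\wt{D}_{2}$, which is the \emph{wrong} direction for the upper comparison $\wt{D}_{\alpha}(\omega\Vert\pi)\leq c_{2}(\delta)\,\chi^{2}_{GM}(\omega\Vert\pi)$ your template requires, and the remark that the local Hessian of $\wt{D}_{\alpha}$ is a monotone metric is not quantitative: a Hessian statement at $\pi$ does not give a uniform bound on a $\delta$-ball without second-order remainder control, and the ordering \eqref{eq:ordering-of-chi-squareds} only compares the $\chi^{2}_{f}$'s among themselves — it never relates $\wt{D}_{\alpha}$ to any quadratic form. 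The paper supplies exactly the missing ingredient as Lemma~\ref{lem:sandwiched-to-sandwiched-chi-square-bound}: for $\alpha>2$, $\wt{D}_{\alpha}(\rho\Vert\sigma)\leq \frac{\kappa^{\uparrow}_{\alpha}(p_{\rho,\sigma},q)}{\ln(2)}\wt{\chi}^{2}(\rho\Vert\sigma)$, proved by pinching $\rho$ with respect to $\sigma$, invoking the classical Hellinger-to-$\chi^{2}$ bound of \cite{GZB-preprint-2024} on the pinched spectra, and regularizing over $n$ copies so that the pinching penalty $\frac{1}{n}\log\vert\mathrm{spec}(\sigma^{\otimes n})\vert = O(\log(n)/n)$ vanishes — a nontrivial step your sketch does not supply. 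With that lemma in hand your telescoping does close for Part 2, since $\kappa^{\uparrow}_{\alpha}$ is controlled on the $\delta$-ball by the same $D_{\max}$ continuity you already invoke for Part 3 (for $\alpha\leq 2$ your route via $\wt{D}_{\alpha}\leq\wt{D}_{2}=\log(1+\chi^{2}_{GM})\leq\chi^{2}_{GM}/\ln(2)$ is indeed sufficient); without it, Part 2 is unproven for $\alpha>2$.
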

\end{tcolorbox}
Before providing the proof of Theorem \ref{thm:mixing-rate}, we note that while the above theorem is very general in terms of the divergences it bounds, perhaps the most important divergence to consider is the relative entropy. This is because it induces the quantities that capture the asymptotic rate of various information processing tasks. As $D(\rho \Vert \sigma) = \lim_{\alpha \to 1} \wt{D}_{\alpha}(\rho \Vert \sigma) = D_{t\log t}(\rho \Vert \sigma) = \ol{D}_{t \log t}(\rho \Vert \sigma)$ (see \cite{Tomamichel-2016a,Hiai-2017a,Hirche-2024a} respectively), this implies that its rate of convergence is bounded by three possible $\chi^{2}_{f}$ contraction coefficients. We can in fact prove even in a non-asymptotic sense the relative entropy contracts as the minimum of a larger class of $\chi^{2}_{f}$ contraction coefficients.
\begin{tcolorbox}[width=\linewidth, sharp corners=all, colback=white!95!black, boxrule=0pt,frame hidden,breakable]
\begin{theorem}\label{thm:rel-ent-mixing}
    Let $\cE_{A \to B}$ be a quantum channel and $\sigma \in \Density(A)$. Define $\cG \coloneq \{f \in \cM_{St}: f_{HM} \leq f \leq f_{LM}\}$. 
    \begin{align}
        \eta_{D}(\cE,\sigma) \leq \frac{\ln(2)}{\lambda_{\min}(\sigma)}\eta_{\chi^{2}_{f}}(\cE,\sigma) \quad \forall f \in \cG \ . \label{eq:rel-ent-bound}
    \end{align}
    Similarly, if $\cE(\pi) = \pi$, for all $\rho \ll \pi$
    \begin{align}
        D(\cE^{n}(\rho) \Vert \pi) \leq \frac{2}{\lambda_{\min}(\pi)} \eta_{\chi^{2}_{f}}(\cE,\pi)^{n} \ .
    \end{align}
\end{theorem}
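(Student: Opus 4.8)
```latex
\textbf{Proof plan for Theorem \ref{thm:rel-ent-mixing}.} The plan is to prove the two displays in turn, where the second is essentially an iterate of the first combined with a Pinsker-type bound. The crux is the input-dependent bound \eqref{eq:rel-ent-bound}, which relates the relative entropy contraction to a $\chi^2_f$-divergence contraction for any $f \in \cG$. The natural strategy is to sandwich the relative entropy between two $\chi^2_f$-divergences in an input-dependent manner and then exploit the DPI for the $\chi^2_f$-divergence. Concretely, I would first establish a local comparison of the form $D(\rho \Vert \sigma) \leq c_1(\sigma) \chi^2_{f}(\rho \Vert \sigma)$ for an upper-bounding choice of $f$, and a reverse inequality $\chi^2_{f'}(\rho \Vert \sigma) \leq c_2 D(\rho \Vert \sigma)$ for a lower-bounding choice $f'$. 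Since the logarithmic mean corresponds to the local behavior of the relative entropy (as noted in the discussion surrounding Table \ref{tab:div-locality-summary}, via \cite{Hirche-2024a}), and since by \eqref{eq:ordering-of-means} and \eqref{eq:ordering-of-standard-monotones} we have $f_{HM} \leq f \leq f_{LM}$ for $f \in \cG$, the ordering of $\chi^2_f$-divergences in \eqref{eq:ordering-of-chi-squareds} gives $\chi^2_{LM} \leq \chi^2_f \leq \chi^2_{HM}$ on $\cG$, which should pin down the comparison constants.

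The key analytic input I expect to need is a pair of pointwise (or operator-level) bounds comparing the relative entropy integrand to the $\chi^2_f$ integrand. For the upper bound on $D(\rho \Vert \sigma)$ I would use that $\log$ is dominated by a suitable affine-in-likelihood-ratio expression controlled by $\lambda_{\min}(\sigma)^{-1}$, which is exactly where the factor $\tfrac{\ln(2)}{\lambda_{\min}(\sigma)}$ in \eqref{eq:rel-ent-bound} enters. The factor $\ln 2$ handles the conversion between natural and binary logarithm conventions. Writing the relative entropy in a form amenable to the $L^2_f(\sigma)$ framework — for instance using the variance relation in Item 2 of Proposition \ref{prop:chi-squared-properties} together with the integral representation of the relevant monotone metric — should let me bound $D(\cE(\rho) \Vert \cE(\sigma))$ from above by $\tfrac{\ln 2}{\lambda_{\min}(\cE(\sigma))} \chi^2_f(\cE(\rho) \Vert \cE(\sigma))$ and then apply $\chi^2_f(\cE(\rho) \Vert \cE(\sigma)) \leq \eta_{\chi^2_f}(\cE,\sigma)\chi^2_f(\rho \Vert \sigma)$ by definition of the contraction coefficient. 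The remaining step is to control $\chi^2_f(\rho \Vert \sigma)$ from above by something proportional to $D(\rho \Vert \sigma)$, so that the ratio defining $\eta_D(\cE,\sigma)$ collapses to the desired product. Here the lower comparison $\chi^2_f(\rho \Vert \sigma) \leq C D(\rho \Vert \sigma)$ must hold with a constant that, after combining, leaves precisely $\tfrac{\ln 2}{\lambda_{\min}(\sigma)}\eta_{\chi^2_f}(\cE,\sigma)$; I would verify that the constants multiply correctly using the $\chi^2_{LM}$-to-$D$ local correspondence so that the logarithmic mean provides the tightest matching.

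For the second display, I would combine \eqref{eq:rel-ent-bound} with the $L_1$-mixing bound of Proposition \ref{prop:L1-mixing-times}. Taking $\sigma = \pi$ a fixed point, iterating the single-step contraction gives $D(\cE^n(\rho) \Vert \pi) \leq \eta_D(\cE,\pi)^{n} D(\rho \Vert \pi)$, but this would produce an extra $\eta_D$ factor rather than $\eta_{\chi^2_f}$. Instead, the cleaner route is to bound $D(\cE^n(\rho)\Vert\pi)$ directly by $\tfrac{\ln 2}{\lambda_{\min}(\pi)}\chi^2_f(\cE^n(\rho)\Vert\pi)$ via the first inequality with $\rho \mapsto \cE^n(\rho)$, then apply the DPI for $\chi^2_f$ iterated $n$ times (Item 1 of Proposition \ref{prop:chi-squared-properties}) to get $\chi^2_f(\cE^n(\rho)\Vert\pi) \leq \eta_{\chi^2_f}(\cE,\pi)^n \chi^2_f(\rho\Vert\pi)$, and finally bound the initial $\chi^2_f(\rho\Vert\pi) \leq 2/\lambda_{\min}(\pi)$ uniformly (as is done in Proposition \ref{prop:L1-mixing-times}). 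Reconciling the constant $\tfrac{2}{\lambda_{\min}(\pi)}$ in the final display against the $\tfrac{\ln 2}{\lambda_{\min}(\pi)}$ of \eqref{eq:rel-ent-bound} is a bookkeeping matter, since $\ln 2 \cdot \chi^2_f(\rho\Vert\pi) \leq \ln 2 \cdot 2/\lambda_{\min}(\pi)$ and absorbing $\ln 2 < 1$ is harmless for an upper bound.

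\textbf{Main obstacle.} I expect the hard part to be establishing the input-dependent upper comparison $D(\rho \Vert \sigma) \leq \tfrac{\ln 2}{\lambda_{\min}(\sigma)} \chi^2_f(\rho \Vert \sigma)$ uniformly over \emph{all} $f \in \cG$ with the sharp constant, rather than merely for $f = f_{LM}$. Because the $\chi^2_f$-divergences are ordered as $\chi^2_{LM} \leq \chi^2_f$ on $\cG$, an inequality proved for $\chi^2_{LM}$ immediately transfers upward to every larger $f$, so the genuine difficulty is concentrated in the logarithmic-mean case and in verifying that the $\lambda_{\min}(\sigma)^{-1}$ scaling is exactly what controls the gap between the relative entropy and its local quadratic approximation. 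This likely requires an integral-representation argument for $\ol{D}_{t\log t}$ or $D_{t\log t}$ bounding the $\log$ integrand against the second-order (Bogoliubov-Kubo-Mori) term, with the spectral floor $\lambda_{\min}(\sigma)$ bounding the resolvents $(\sigma + s\mathbb{1})^{-1}$ appearing in the $\chi^2_{LM}$ representation recorded in Table \ref{tab:div-locality-summary}.
```
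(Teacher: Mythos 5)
Your high-level architecture (sandwich $D$ between two $\chi^2_f$-divergences, apply the $\chi^2_f$ contraction in the middle) matches the paper's proof, but you have placed the dimension-dependent constant on the wrong side of the sandwich, and this is not mere bookkeeping. The output-side comparison needs no constant at all: $D \leq \chi^2_{LM}$ holds with constant one (a well-known bound, which the paper simply cites from \cite{Beigi-2025a}), and since every $f \in \cG$ satisfies $f \leq f_{LM}$, Item 3 of Proposition \ref{prop:chi-squared-properties} gives $D(\cE(\rho)\Vert\cE(\sigma)) \leq \chi^{2}_{LM}(\cE(\rho)\Vert\cE(\sigma)) \leq \chi^{2}_{f}(\cE(\rho)\Vert\cE(\sigma))$ for free. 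Consequently your declared ``main obstacle'' --- proving $D \leq \tfrac{\ln 2}{\lambda_{\min}}\chi^2_{LM}$ with a sharp constant via resolvent bounds on $(\sigma+s\mbb{1})^{-1}$ --- is a non-existent difficulty, and the specific comparison you propose, with constant $\tfrac{\ln 2}{\lambda_{\min}(\cE(\sigma))}$, would in any case not yield the theorem: the statement carries the \emph{input-side} floor $\lambda_{\min}(\sigma)$, whereas $\lambda_{\min}(\cE(\sigma))$ is a different quantity that can be much smaller (or zero when $\cE(\sigma)$ is rank-deficient), and there is no identity reconciling the two.

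The step where the constant actually enters --- and which your proposal never identifies --- is the input-side comparison, and its mechanism is the trace distance, not any local ($\chi^2_{LM}$-to-$D$) correspondence: since $f_{HM} \leq f$ on $\cG$, one has
\begin{align*}
    \chi^{2}_{f}(\rho \Vert \sigma) \leq \chi^{2}_{HM}(\rho \Vert \sigma) \leq \frac{2}{\lambda_{\min}(\sigma)}\,\mathrm{TD}(\rho,\sigma)^{2} \leq \frac{\ln 2}{\lambda_{\min}(\sigma)}\,D(\rho \Vert \sigma) \ ,
\end{align*}
where the middle inequality is the H\"{o}lder-type bound of Proposition \ref{prop:Df-lb-in-chi-square} and the last is Pinsker's inequality. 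A local second-order correspondence cannot produce such a global linear domination of $\chi^2$ by $D$; the detour through $\mathrm{TD}$ is the key idea, and your plan omits it entirely. The misallocation also corrupts your second display: following your constants ($\tfrac{\ln 2}{\lambda_{\min}(\pi)}$ on the output comparison and $\tfrac{2}{\lambda_{\min}(\pi)}$ on the initial $\chi^2_f$) yields $\tfrac{2\ln 2}{\lambda_{\min}(\pi)^{2}}\eta_{\chi^{2}_{f}}(\cE,\pi)^{n}$, worse than claimed by a factor $\tfrac{\ln 2}{\lambda_{\min}(\pi)}$ that absorbing ``$\ln 2 < 1$'' does not fix. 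With the correct constant-one comparison the computation closes exactly as in the paper: $D(\cE^{n}(\rho)\Vert\pi) \leq \chi^{2}_{f}(\cE^{n}(\rho)\Vert\pi) \leq \eta_{\chi^{2}_{f}}(\cE,\pi)^{n}\,\tfrac{2}{\lambda_{\min}(\pi)}\,\mathrm{TD}(\rho,\pi)^{2} \leq \tfrac{2}{\lambda_{\min}(\pi)}\eta_{\chi^{2}_{f}}(\cE,\pi)^{n}$, using submultiplicativity of the input-dependent contraction coefficient together with $\cE(\pi)=\pi$ and $\mathrm{TD} \leq 1$.
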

\end{tcolorbox}
We note for $d \geq 2$, $\frac{\ln(2)}{\lambda_{\min}(\sigma)} \geq 2\ln(2) \approx 1.4$. Even classically it is known $\eta_{\chi^{2}}(\cE,\sigma) < \eta_{D}(\cE,\sigma)$ \cite{Anantharam-2013a}, so any scaling term must be strictly greater than one for \eqref{eq:rel-ent-bound} to hold. \eqref{eq:rel-ent-bound} also recovers the classical result \cite[Corollary 1]{Makur-2020a} where we note that work uses base $e$. Moreover, we stress that by \eqref{eq:HT-f-div-contraction-ordering}, for sufficient depth these are the tightest bounds one can obtain for relative entropy.
\begin{proof}[Proof of Theorem \ref{thm:rel-ent-mixing}]
    First, letting $f \in \cG$ and $\rho \ll \sigma$,
    \begin{equation}
    \begin{aligned}
        D(\cE(\rho) \Vert \cE(\sigma)) \leq \chi^{2}_{LM}(\cE(\rho)  \cE(\sigma)) \leq& \chi^{2}_{f}(\cE(\rho) \Vert \cE(\sigma)) \\\leq& \eta_{\chi^{2}_{f}}(\cE,\sigma) \chi^{2}_{f}(\rho \Vert \sigma) \\
        \leq& \eta_{\chi^{2}_{f}}(\cE,\sigma) \chi^{2}_{HM}(\rho \Vert \sigma) \\
        \leq& \eta_{\chi^{2}_{f}}(\cE,\sigma) \frac{2}{\lambda_{\min}(\sigma)} \text{TD}(\rho,\sigma)^{2} \\
        \leq& \eta_{\chi^{2}_{f}}(\cE,\sigma) \frac{\ln(2)}{\lambda_{\min}(\sigma)} D(\rho \Vert \sigma)
    \end{aligned}
    \end{equation}
    where the first inequality is well known (e.g. \cite{Beigi-2025a}), the second inequality is by \eqref{eq:ordering-of-chi-squareds}, the third inequality is by definition, the fourth inequality is again by \eqref{eq:ordering-of-chi-squareds}, the fifth inequality is Proposition \ref{prop:Df-lb-in-chi-square}, and the final inequality is Pinsker's inequality (for relative entropy). To complete the proof of \eqref{eq:rel-ent-bound}, we divide both sides of the above inequality by $D(\rho \Vert \sigma)$ and supremize over $\rho \neq \sigma$ such that $D(\rho \Vert \sigma) < +\infty$, which is equivalent to $\rho \neq \sigma$ and $\rho \ll \sigma$.

    For the other case, by the above inequalities,
    \begin{align}
        D(\cE^{n}(\rho) \Vert \pi) = D(\cE^{n}(\rho) \Vert \cE^{n}(\pi)) \leq \frac{2}{\lambda_{\min}(\sigma)}\eta_{f}(\cE^{n},\pi)\text{TD}(\rho,\sigma)^{2} \leq \frac{2}{\lambda_{\min}(\sigma)}\eta_{f}(\cE,\pi)^{n} \ ,
    \end{align}
    where we used $\cE(\pi) = \pi$ and that $\eta_{f}(\cF \circ \cE,\sigma) \leq \eta_{f}(\cF,\cE(\sigma)) \eta_{f}(\cE,\sigma)$.
\end{proof}

With the relative entropy strengthening established, we prove Theorem \ref{thm:mixing-rate}, which is morally similar to the above proof except we don't have linear bounds between the contraction coefficients.
\begin{proof}[Proof of Theorem \ref{thm:mixing-rate}]
    The first item is proven in \cite{GZB-preprint-2024}. We now establish the third item. That $\eta_{\chi^{2}_{LM}}(\cE,\sigma) \leq \eta_{f}(\cE,\sigma)$ is \cite[Corollary 4.2]{Hirche-2024a}. Thus it suffices to prove the other bound.
    \begin{equation}
    \begin{aligned}
        &\eta_{D_{f}}(\cE^{n},\pi) \\
        =& \sup_{\rho \in \Density(A): 0 < D_{f}(\rho \Vert \pi) < +\infty} \frac{D_{f}(\cE^{n}(\rho) \Vert \cE^{n}(\pi))}{D_{f}(\rho \Vert \pi)} \\
        \leq& \frac{4}{\lambda_{\min}(\pi)} \sup_{\rho \in \Density(A): 0 < D_{f}(\rho \Vert \pi) < +\infty} \frac{D_{f}(\cE^{n}(\rho) \Vert \cE^{n}(\pi))}{\chi^{2}_{LM}(\rho \Vert \pi)} \\
        \leq & \frac{4}{\lambda_{\min}(\pi)} \left[\sup_{\rho \in \Density(A): 0 < D_{f}(\rho \Vert \pi) < +\infty} \kappa^{\uparrow}_{f}(\cE^{n}(\rho),\cE^{n}(\pi))\right] \sup_{\rho \in \Density(A): 0 < D_{f}(\rho \Vert \pi) < +\infty}\frac{\chi^{2}_{LM}(\cE^{n}(\rho) \Vert \cE^{n}(\pi))}{\chi^{2}_{LM}(\rho \Vert \pi)} \\
        =& \frac{4}{\lambda_{\min}(\pi)} \left[\sup_{\rho \in \Density(A): 0 < D_{f}(\rho \Vert \pi) < +\infty} \kappa^{\uparrow}_{f}(\cE^{n}(\rho),\cE^{n}(\pi))\right] \eta_{\chi^{2}_{LM}}(\cE^{n},\pi) \\
        \leq& \frac{4}{\lambda_{\min}(\pi)} \left[\sup_{\rho \in \Density(A): 0 < D_{f}(\rho \Vert \pi) < +\infty} \kappa^{\uparrow}_{f}(\cE^{n}(\rho),\cE^{n}(\pi))\right] \eta_{\chi^{2}_{LM}}(\cE,\pi)^{n} \ ,
    \end{aligned}
    \end{equation}
    where the first inequality is Proposition \ref{prop:Df-lb-in-chi-square} the second inequality is \cite[Theorem 5.9]{Beigi-2025a}, the third uses that $\kappa^{\uparrow}_{f}(\cE^{n}(\rho),\cE^{n}(\pi))$ and $\frac{\chi^{2}_{LM}(\cE^{n}(\rho) \Vert \cE^{n}(\pi))}{\chi^{2}_{LM}(\rho \Vert \pi)}$ are non-negative functions over the set being optimized, and the final inequality uses the submultiplicativity of the input-dependent contraction coefficient and that $\cE(\pi) = \pi$. 

    Now we need to address the remaining supremum term. Clearly it must converge given its definition and the mixing property. Here we give an argument using Lemma \ref{lem:Dmax-topol-equiv-to-sigma-often}. Consider the intervals 
    $$I_{n} \coloneq [e^{-D_{\max}(\pi \Vert \cE^{n}(\rho))},e^{D_{\max}(\cE^{n}(\rho) \Vert \pi)}] \ . $$  First, by the assumption $\cE$ is mixing, for all $\delta > 0$, there exists $n_{\delta} \in \mbb{N}$ such that for all $n \geq n_{\delta}$, $\Vert \cE^{\circ n}(\rho) - \pi \Vert_{1} \leq \delta$ for all $\rho \in \Density(A)$. As $\pi$ is full rank, by Lemma \ref{lem:Dmax-topol-equiv-to-sigma-often}, the same claim may be made for $D_{\max}(\cE^{n}(\rho) \Vert \pi)$. Moreover,  as $\pi > 0$, \cite[Lemma 67]{GZB-preprint-2024} shows there exists $n_{0} \in \mbb{N}$ such that $\cE^{n}(\rho) > 0$ for all $n \geq n_{0}$. It follows $\cE^{n}(\rho) \gg \pi$ for all $\rho$ and $n \geq n_{0}$, so we may again apply Lemma \ref{lem:Dmax-topol-equiv-to-sigma-often} bound $D_{\max}$ by the trace distance.\footnote{The upper bound in Lemma \ref{lem:Dmax-topol-equiv-to-sigma-often} involves the minimum eigenvalue, but by the mixing property the minimum eigenvalue must converge to $\lambda_{\min}(\pi)$. This is sufficient to note as the eigenvalues are well-behaved under perturbations by Weyl's inequalities \cite{Bhatia-1997a}.} In other words, $\lim_{n \to \infty} D_{\max}(\cE^{n}(\rho)\Vert \pi) = 0 = \lim_{n \to \infty} D_{\max}(\pi \Vert \cE^{n}(\rho))$. Thus, for all $m \in \mbb{N}$, there exists $n_{m} \in \mbb{N}$ such that $I_{n} \subset (1-\frac{1}{m}, 1+\frac{1}{m})$ for all $n \geq n_{m}$. This proves $\lim_{n \to \infty} \left[ \sup_{\rho \in D(A)} \kappa^{\uparrow}_{f}(\cE^{n}(\rho),\cE^{n}(\pi)) \right] = f''(1)$. With this addressed, taking the $n^{th}$ root to both sides and taking the limit $n \to \infty$  obtains the result. The proof for the sandwiched divergence is effectively identical by using Lemma \ref{lem:sandwiched-to-sandwiched-chi-square-bound}. The tightness claim follows from \cite{Makur-2020a}.
\end{proof}

\paragraph{On Faithfulness of the Above Results}
A natural concern with the above results is that they will be trivial when the contraction coefficient is one. However, it is easy to see from Section \ref{sec:chi-square-extreme-values-and-recoverability} that in fact $\chi^{2}_{GM}(\cE,\sigma)$ is trivial only if \textit{any} contraction coefficient bound would be trivial.
\begin{proposition}\label{prop:faithfulness-of-chi-squared-GM}
    If $\eta_{\chi^{2}_{GM}}(\cE,\sigma) = 1$, then  $\eta_{\mbb{D}}(\cE,\sigma) = 1$ for any $\mbb{D}(\cdot \Vert \cdot): \Density(A) \times \Density(A) \to \mbb{R}$ satisfying the data processing inequality.
\end{proposition}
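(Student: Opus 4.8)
The plan is to reduce the statement to the existence of a genuinely recovered state and then to sandwich $\mbb{D}$ between two applications of its data processing inequality. Because $\mbb{D}$ satisfies data processing, every ratio appearing in the definition of $\eta_{\mbb{D}}(\cE,\sigma)$ is bounded above by one, so $\eta_{\mbb{D}}(\cE,\sigma) \leq 1$ holds automatically; the content of the proposition is the matching lower bound, which I would establish by exhibiting a single feasible state that attains the ratio one.

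First I would invoke the hypothesis $\eta_{\chi^{2}_{GM}}(\cE,\sigma) = 1$ together with Lemma \ref{lem:chi-squared-contraction-unity} for $f = f_{GM}$, which lies in $\cM_{\text{St}}$ and satisfies Proposition \ref{prop:suff-conds-for-restricting-support}, so case (b) applies to the positive, trace-preserving channel $\cE$ (treating $\sigma$ as full rank on its support). This yields a state $\rho \neq \sigma$ with $\rho \ll \sigma$ such that $(\cS_{f_{GM},\cE,\sigma} \circ \cE)(\rho) = \rho$. Recalling $\cS_{f_{GM},\cE,\sigma} = \cP_{\cE,\sigma}$, this reads $(\cP_{\cE,\sigma} \circ \cE)(\rho) = \rho$. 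Two standard facts about the Petz map complete the setup: by Item 1 of Proposition \ref{prop:schrodinger-reversal-map} one also has $(\cP_{\cE,\sigma} \circ \cE)(\sigma) = \sigma$, and $\cP_{\cE,\sigma}$ is a legitimate quantum channel on $\supp(\sigma)$. Hence $\cD \coloneq \cP_{\cE,\sigma} \circ \cE$ is a quantum channel fixing both $\rho$ and $\sigma$.

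Next I would run the data processing sandwich. Applying the data processing inequality for $\mbb{D}$ first under $\cE$ and then under the channel $\cP_{\cE,\sigma}$ gives
\begin{align}
    \mbb{D}(\rho \Vert \sigma) = \mbb{D}(\cD(\rho) \Vert \cD(\sigma)) \leq \mbb{D}(\cE(\rho) \Vert \cE(\sigma)) \leq \mbb{D}(\rho \Vert \sigma) \ ,
\end{align}
where the first equality uses $\cD(\rho) = \rho$ and $\cD(\sigma) = \sigma$. Thus both inequalities are equalities; in particular $\mbb{D}(\cE(\rho) \Vert \cE(\sigma)) = \mbb{D}(\rho \Vert \sigma)$. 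Provided this common value lies in $(0,+\infty)$, the state $\rho$ is feasible for the supremum defining $\eta_{\mbb{D}}(\cE,\sigma)$ and contributes a ratio of exactly one, so $\eta_{\mbb{D}}(\cE,\sigma) \geq 1$, which together with the automatic upper bound gives equality.

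The step requiring the most care is verifying that the witness $\rho$ actually lies in the feasible domain $\{0 < \mbb{D}(\rho \Vert \sigma) < +\infty\}$. Finiteness is benign, since $\rho \ll \sigma$ is automatic as $\sigma$ is full rank on the relevant support, and strict positivity is immediate whenever $\mbb{D}$ is faithful because $\rho \neq \sigma$. To avoid relying on faithfulness, I would exploit that the proof of Lemma \ref{lem:chi-squared-contraction-unity} in fact produces an entire family $\rho_{\ve} \coloneq \sigma - \ve X$ of fixed points of $\cD$, indexed by a traceless Hermitian eigenvector $X$ and all sufficiently small $\ve > 0$; each such $\rho_{\ve}$ inherits the equality $\mbb{D}(\cE(\rho_{\ve}) \Vert \cE(\sigma)) = \mbb{D}(\rho_{\ve} \Vert \sigma)$, so it suffices that one member of this family has divergence in $(0,+\infty)$. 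This is the only regime in which a contraction-coefficient bound is non-vacuous at all, so the conclusion $\eta_{\mbb{D}}(\cE,\sigma) = 1$ is exactly the assertion that no such bound can be informative once $\eta_{\chi^{2}_{GM}}(\cE,\sigma) = 1$.
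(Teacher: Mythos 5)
Your proof is correct and follows essentially the same route as the paper: invoke Lemma \ref{lem:chi-squared-contraction-unity} (for $f_{GM}$) to produce a state $\rho \neq \sigma$ fixed by $\cP_{\cE,\sigma} \circ \cE$, then sandwich $\mbb{D}$ with two applications of data processing to get $\mbb{D}(\cE(\rho)\Vert\cE(\sigma)) = \mbb{D}(\rho\Vert\sigma)$. Your extra care about the feasibility constraint $0 < \mbb{D}(\rho\Vert\sigma) < +\infty$ is a legitimate refinement that the paper's proof glosses over, but it does not change the argument's substance.
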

\begin{proof}
    By Lemma \ref{lem:chi-squared-contraction-unity}, if $\eta_{\chi^{2}_{GM}}(\cE,\sigma) = 1$, then there exists $\rho \neq \sigma$ such that $(\cP_{\cE,\sigma} \circ \cE)(\rho) = \rho$ where $\cP_{\cE,\sigma}$ is the Petz recovery map. It follows
    \begin{align}
        \mbb{D}(\rho \Vert \sigma) \geq \mbb{D}(\cE(\rho) \Vert \cE(\sigma)) \geq \mbb{D}((\cP_{\cE,\sigma} \circ \cE)(\rho) \Vert (\cP_{\cE,\sigma} \circ \cE)(\sigma)) = \mbb{D}(\rho \Vert \sigma) \ ,  
    \end{align}
    where we used the data processing inequality. It follows $\mbb{D}(\rho \Vert \sigma) = \mbb{D}(\cE(\rho) \Vert \cE(\sigma))$. The definition of (input-dependent) contraction coefficient completes the proof.
\end{proof}
This shows that if contraction coefficients can be used to obtain mixing times of a time-homogeneous Markov chain at all, it can be done using $\eta_{\chi^{2}_{GM}}(\cE,\sigma)$.

\paragraph*{On Strengthening the Above Results} The above discussion as well as Item 3 of Theorem \ref{thm:mixing-rate} and Theorem \ref{thm:rel-ent-mixing} show that we have in some respects reached the strongest results contraction coefficients can tell us for studying discrete-time time-homogeneous Markov chains without using more structure. 
To see this, note that submultiplicativity of the contraction coefficients does not help because one will always collapse to the original input-dependent contraction coefficient if $\cE(\pi) = \pi$:
$$\mbb{D}(\cE^{\circ 2}(\rho) \Vert \cE^{\circ 2}(\pi)) \leq \mbb{D}(\cE(\rho) \Vert \cE(\pi))\eta_{\mbb{D}}(\cE,\cE(\pi)) = \mbb{D}(\cE(\rho) \Vert \cE(\pi))\eta_{\mbb{D}}(\cE,\pi) \ . $$
Therefore, $\eta_{\mbb{D}}(\cE,\pi)^{n}$ really is the quantity that captures convergence for arbitrary inputs and we have captured how it behaves at least for Hirche-Tomamichel $f$-divergences. It seems the best way to strengthen these bounds is to consider multiple iterations of the channel, i.e. $\eta_{\mbb{D}}(\cE^{\circ k},\rho)$ for $k > 1$.\\

The rest of the section establishes bounds used in establishing Theorems \ref{thm:mixing-rate} and \ref{thm:rel-ent-mixing}. Some results may be of independent interest.

\subsubsection{Inequalities between Divergences}\label{sec:divergences-and-TD-inequalities}
In this paragraph, we obtain relations between divergences and the trace distance. First we obtain a reverse Pinsker inequality for max divergence.
\begin{lemma}\label{lem:Dmax-topol-equiv-to-sigma-often}
    Let $\rho,\sigma \in \Density(A)$ and $\rho \ll \sigma$. Then 
    \begin{align}
        \frac{2}{\ln(2)}\text{TD}(\rho,\sigma)^{2} \leq D_{\max}(\rho \Vert \sigma) \leq \frac{1}{\ln(2)} \lambda_{\min}(\sigma)^{-1}\text{TD}(\rho,\sigma)
    \end{align}
\end{lemma}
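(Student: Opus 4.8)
# Proof Proposal for Lemma \ref{lem:Dmax-topol-equiv-to-sigma-often}

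The plan is to establish the two inequalities separately, working from the definition $D_{\max}(\rho \Vert \sigma) = \log \lambda_{\max}(\sigma^{-1/2}\rho\sigma^{-1/2})$ (equivalently, $D_{\max}(\rho\Vert\sigma) = \log\min\{\lambda : \rho \leq \lambda\sigma\}$). Since $\rho \ll \sigma$, this quantity is finite. Throughout, I will write $\lambda^{\star} \coloneq \lambda_{\max}(\sigma^{-1/2}\rho\sigma^{-1/2})$, so that $D_{\max}(\rho\Vert\sigma) = \log_2 \lambda^{\star}$, and I will convert to natural logarithm via $\log_2 x = \ln(x)/\ln(2)$, which explains the $\ln(2)$ factors appearing in the statement.

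\textbf{Upper bound.} For the right-hand inequality, the natural approach is to bound $\lambda^{\star} - 1$ by the trace distance scaled by $\lambda_{\min}(\sigma)^{-1}$, and then use $\ln(\lambda^{\star}) \leq \lambda^{\star} - 1$. First I would write $\rho - \sigma = \sigma^{1/2}(\sigma^{-1/2}\rho\sigma^{-1/2} - \mathbb{1})\sigma^{1/2}$ and observe that $\lambda^{\star} - 1 = \lambda_{\max}(\sigma^{-1/2}\rho\sigma^{-1/2} - \mathbb{1})$. The key step is the operator bound $\sigma^{-1/2}(\rho-\sigma)\sigma^{-1/2} \leq \lambda_{\min}(\sigma)^{-1}(\rho-\sigma)_{+}$-type control, or more directly: since $\sigma^{-1} \leq \lambda_{\min}(\sigma)^{-1}\mathbb{1}$, one obtains $\lambda_{\max}(\sigma^{-1/2}(\rho-\sigma)\sigma^{-1/2}) \leq \lambda_{\min}(\sigma)^{-1}\lambda_{\max}(\rho-\sigma) \leq \lambda_{\min}(\sigma)^{-1}\Vert \rho - \sigma\Vert_{\infty} \leq \lambda_{\min}(\sigma)^{-1}\text{TD}(\rho,\sigma)$, where the last step uses $\Vert\rho-\sigma\Vert_\infty \leq \Vert \rho-\sigma\Vert_1 = 2\,\text{TD}(\rho,\sigma)$; one should check the constants here carefully since $\text{TD} = \frac{1}{2}\Vert\cdot\Vert_1$. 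Combining $\ln\lambda^{\star} \leq \lambda^{\star}-1$ with this bound and dividing by $\ln(2)$ yields the claimed upper bound.

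\textbf{Lower bound.} For the left-hand inequality $\frac{2}{\ln(2)}\text{TD}(\rho,\sigma)^2 \leq D_{\max}(\rho\Vert\sigma)$, the cleanest route is via a known reverse-Pinsker-type relation: since $D_{\max}$ dominates the relative entropy, $D_{\max}(\rho\Vert\sigma) \geq D(\rho\Vert\sigma)$, and then apply Pinsker's inequality $D(\rho\Vert\sigma) \geq \frac{2}{\ln 2}\text{TD}(\rho,\sigma)^2$ (in bits). I expect this to be the most efficient argument, as both ingredients are standard and already used elsewhere in the paper (Pinsker appears in the proof of Theorem \ref{thm:rel-ent-mixing}).

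The main obstacle I anticipate is bookkeeping the normalization conventions: the factor $\frac{2}{\ln 2}$ versus $\frac{L_f}{2}$ in the Pinsker statement, whether trace distance is $\frac{1}{2}\Vert\cdot\Vert_1$, and the base-$2$ versus base-$e$ logarithm in $D_{\max}$. If the direct relative-entropy domination does not yield exactly the constant $\frac{2}{\ln 2}$, I would instead lower-bound $D_{\max}$ directly: from $\rho \leq \lambda^\star \sigma$ one deduces a pointwise-type inequality that forces $\lambda^\star$ to exceed $1$ by an amount controlled by the separation between $\rho$ and $\sigma$, giving $\log_2\lambda^\star \geq$ (something quadratic in $\text{TD}$). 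The operator-monotone and perspective-function machinery of the paper is not needed here; the lemma is elementary, and the only real care required is in the constants and in confirming that $\rho \ll \sigma$ guarantees finiteness at every step.
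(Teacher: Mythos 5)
Your proposal is correct, and the two halves line up with the paper as follows. The lower bound is exactly the paper's argument: $D_{\max} \geq D$ followed by Pinsker's inequality, with the same constant $\tfrac{2}{\ln 2}$ in bits; no further work is needed there. For the upper bound you take a genuinely different (and slightly more elementary) route. The paper picks the top eigenvector $\ket{\psi}$ of $\sigma^{-1/2}\rho\sigma^{-1/2}$, applies $\ln(1+x) \leq x$, and then observes that $0 \leq \lambda_{\min}(\sigma)\,\sigma^{-1/2}\dyad{\psi}\sigma^{-1/2} \leq \Pi_{\supp(\sigma)}$ is (after rescaling) a feasible operator in the variational characterization $\mathrm{TD}(\rho,\sigma) = \sup_{0\leq M \leq \mbb{1}}\vert\langle M,\rho-\sigma\rangle\vert$, which directly yields $\lambda_{\min}(\sigma)^{-1}\mathrm{TD}(\rho,\sigma)$. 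You instead factor through two spectral estimates: $\lambda_{\max}\big(\sigma^{-1/2}(\rho-\sigma)\sigma^{-1/2}\big) \leq \lambda_{\min}(\sigma)^{-1}\lambda_{\max}(\rho-\sigma)$ (valid since $\rho-\sigma$ is traceless, so $\lambda_{\max}(\rho-\sigma)\geq 0$ and the Rayleigh-quotient argument with $\Vert \sigma^{-1/2}v \Vert^{2} \leq \lambda_{\min}(\sigma)^{-1}$ goes through — you should state the nonnegativity explicitly), followed by $\Vert \rho-\sigma\Vert_{\infty} \leq \mathrm{TD}(\rho,\sigma)$, and the same $\ln\lambda^{\star} \leq \lambda^{\star}-1$. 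Both routes give identical constants; yours avoids the variational form of trace distance at the final step, while the paper's single feasible-operator construction is more compact.

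One point needs repair in your justification, which you yourself flagged: the parenthetical ``$\Vert\rho-\sigma\Vert_{\infty} \leq \Vert\rho-\sigma\Vert_{1} = 2\,\mathrm{TD}(\rho,\sigma)$'' would only give the bound with an extra factor of $2$. The step your chain actually uses, $\Vert\rho-\sigma\Vert_{\infty} \leq \tfrac{1}{2}\Vert\rho-\sigma\Vert_{1} = \mathrm{TD}(\rho,\sigma)$, is the sharp one, and it is true precisely because $\rho-\sigma$ is a traceless Hermitian operator (its positive and negative parts have equal trace norm, each equal to $\mathrm{TD}(\rho,\sigma)$); this is the paper's Proposition \ref{prop:spec-to-TD}, proved there via the same variational characterization of trace distance. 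With that justification substituted, your proof is complete, including the non-full-rank case, since $\rho \ll \sigma$ lets every operator be restricted to $\supp(\sigma)$ with $\lambda_{\min}(\sigma)$ the smallest nonzero eigenvalue.
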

\begin{proof}
    The lower bound is a corollary of Pinsker's inequality \cite{WatrousBook}, i.e. $D_{\max}(\rho \Vert \sigma) \geq D(\rho \Vert \sigma) \geq \frac{2}{\ln(2)}\text{TD}(\rho,\sigma)^{2}$. Thus, we focus on the upper bound.

    First, recall $D_{\max}(\rho \Vert \sigma) = \log(\Vert \sigma^{-1/2}\rho\sigma^{-1/2}\Vert_{\infty})$. Now we know that $\Vert \sigma^{-1/2}\rho\sigma^{-1/2}\Vert_{\infty} \geq 1$ as follows from $D_{\max}(\rho \Vert \sigma) \geq 0$. Moreover as $\sigma^{-1/2}\rho\sigma^{-1/2}$ is Hermitian, its operator norm is its maximum eigenvalue. Let $\ket{\psi} \in \supp(\sigma)$ be the corresponding eigenvector. We then have
    \begin{align}
        \ln(\Vert \sigma^{-1/2}\rho\sigma^{-1/2}\Vert_{\infty}) =& \ln(\langle \dyad{\psi}, \sigma^{-1/2}\rho\sigma^{-1/2}\rangle) \\
        =& \ln(1 + \langle \dyad{\psi}, \sigma^{-1/2}\rho\sigma^{-1/2}-I\rangle) \\
        \leq& \ln(1 + \vert\langle \dyad{\psi}, \sigma^{-1/2}\rho\sigma^{-1/2}-I\rangle\vert) \\
        \leq& \vert\langle \dyad{\psi}, \sigma^{-1/2}\rho\sigma^{-1/2}-I\rangle\vert \\
        =& \vert\langle \sigma^{1/2}\sigma^{-1/2}\dyad{\psi}\sigma^{-1/2}\sigma^{1/2}, \sigma^{-1/2}\rho\sigma^{-1/2}-I\rangle\vert \\
        =& \vert \langle \sigma^{-1/2}\dyad{\psi}\sigma^{-1/2}, \rho-\sigma \rangle) \vert \ ,
    \end{align}
    where the second equality is $\Tr[\dyad{\psi}] = 1$, the first inequality is that $\ln$ monotonically increases, the second inequality is $\ln(1+x) \leq x$ for $x > -1$, the third equality is that $\sigma^{1/2}\sigma^{-1/2} = \Pi_{\supp(\sigma)}$, the fourth equality is $\sigma^{1/2} \cdot \sigma^{1/2}$ is a self-adjoint map (with respect to HS inner product) and simplifying using $\rho \ll \sigma$.
    
    Now as $0 \leq \dyad{\psi} \leq I$, $0 \leq \sigma^{-1/2}\dyad{\psi}\sigma^{-1/2} \leq \lambda_{\min}(\sigma)^{-1}\Pi_{\supp(\sigma)}$. Therefore,
    \begin{align}
        \vert\langle \sigma^{-1/2}\dyad{\psi}\sigma^{-1/2}, \rho-\sigma \rangle\vert = &\lambda_{\min}(\sigma)^{-1} \vert\langle \lambda_{\min}(\sigma)\sigma^{-1/2}\dyad{\psi}\sigma^{-1/2}, \rho-\sigma \rangle\vert \\
        \leq& \lambda_{\min}(\sigma)^{-1} \sup_{0 \leq M \leq I} \vert \langle M, \rho - \sigma \rangle \vert \\
        =& \lambda_{\min}(\sigma)^{-1} \text{TD}(\rho,\sigma) \ , 
    \end{align}
    where the final equality is the variational characterization of the trace distance, see e.g. \cite{Khatri-2020a}. This completes the proof.
\end{proof}

Next we aim to bound divergences below by $\chi^{2}_{HM}$, the largest quantum $\chi^{2}$-divergence. This may be seen as a quantum generalization of a result of \cite{Sason-2014a}. To that end, we begin with the following observation that is well known in the classical setting for probability distributions.
\begin{proposition}\label{prop:spec-to-TD}
    For $\rho,\sigma \in \Density(A)$, $\Vert \rho - \sigma \Vert_{\infty} \leq \frac{1}{2}\Vert \rho - \sigma \Vert_{1}$. 
\end{proposition}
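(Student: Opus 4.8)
The plan is to exploit that $\rho - \sigma$ is Hermitian and traceless, so that tracelessness forces the largest eigenvalue magnitude to be controlled by half of the sum of all magnitudes. First I would set $\Delta \coloneq \rho - \sigma$, which is Hermitian since $\rho$ and $\sigma$ are, and which satisfies $\Tr[\Delta] = \Tr[\rho] - \Tr[\sigma] = 1 - 1 = 0$. Diagonalizing $\Delta = \sum_i \mu_i \dyad{e_i}$ with real eigenvalues $\mu_i$, the two relevant norms are simply $\Vert \Delta \Vert_\infty = \max_i |\mu_i|$ and $\Vert \Delta \Vert_1 = \sum_i |\mu_i|$.

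Next I would use the tracelessness constraint $\sum_i \mu_i = 0$. Splitting the eigenvalues into the nonnegative ones and the strictly negative ones, this constraint reads
\begin{align}
    \sum_{i:\,\mu_i \geq 0} \mu_i = \sum_{i:\,\mu_i < 0} |\mu_i| \ .
\end{align}
Since these two quantities are nonnegative and sum to $\sum_i |\mu_i| = \Vert \Delta \Vert_1$, each of them equals exactly $\tfrac{1}{2}\Vert \Delta \Vert_1$.

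Finally, the operator norm $\Vert \Delta \Vert_\infty = \max_i |\mu_i|$ is attained either at a nonnegative eigenvalue or at a negative one. In the former case $\max_i |\mu_i| \leq \sum_{i:\,\mu_i \geq 0} \mu_i = \tfrac{1}{2}\Vert \Delta \Vert_1$, and in the latter case $\max_i |\mu_i| \leq \sum_{i:\,\mu_i < 0} |\mu_i| = \tfrac{1}{2}\Vert \Delta \Vert_1$; either way $\Vert \rho - \sigma \Vert_\infty \leq \tfrac{1}{2}\Vert \rho - \sigma \Vert_1$, as claimed.

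There is no substantial obstacle here; the proof is elementary spectral bookkeeping. The only point requiring care is that the bound relies essentially on tracelessness — the inequality fails for generic Hermitian operators (e.g.\ the identity) — so the step invoking $\Tr[\rho] = \Tr[\sigma] = 1$ is where all the content lies, and it should be flagged explicitly rather than absorbed silently.
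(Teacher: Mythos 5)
Your proof is correct, but it takes a different route from the paper's. The paper argues via variational characterizations: it writes $\Vert \rho - \sigma \Vert_{\infty} = \sup_{\ket{\psi}} \vert \Tr[\psi(\rho-\sigma)] \vert$ (valid since $\rho-\sigma$ is Hermitian), relaxes the rank-one projector to an arbitrary POVM element $0 \leq \Lambda \leq \mbb{1}$, and then invokes the known variational characterization $\sup_{0 \leq \Lambda \leq \mbb{1}} \vert \Tr[\Lambda(\rho-\sigma)] \vert = \text{TD}(\rho,\sigma) = \tfrac{1}{2}\Vert \rho - \sigma \Vert_{1}$. You instead diagonalize $\Delta = \rho - \sigma$ and do the spectral bookkeeping directly: tracelessness splits the eigenvalue mass evenly between the positive and negative parts, so each part carries exactly $\tfrac{1}{2}\Vert \Delta \Vert_{1}$, and the largest eigenvalue magnitude sits inside one of the two parts. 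The two arguments are close cousins — the variational formula for $\text{TD}$ encodes precisely the half-mass fact you prove by hand — but yours is self-contained and makes explicit \emph{where} tracelessness enters (a point the paper absorbs silently into the cited characterization), and it immediately yields the statement for any traceless Hermitian operator. The paper's version is shorter given the cited machinery and fits the operational (measurement-based) framing used elsewhere in that section; your closing remark that the bound fails without tracelessness (e.g.\ for $\mbb{1}$) is a worthwhile observation that neither proof strictly needs but that correctly identifies the content of the step.
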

\begin{proof}
    We have
    \begin{align}
        \Vert \rho - \sigma \Vert_{\infty} =&\sup_{\ket{\psi} \in \cS(A)} \vert \Tr[\psi(\rho - \sigma)] \vert 
        \leq \sup_{0 \leq \Lambda \leq \mbb{1}} \vert \Tr[\Lambda (\rho - \sigma)] \vert 
        = \text{TD}(\rho,\sigma) \ , 
    \end{align}
    where the first equality uses that as $\rho-\sigma$ is Hermitian, it's infinity norm is the maximum eigenvalue, the inequality is as we have relaxed the optimization, and the final equality is the variational characterization of trace distance.
\end{proof}

Using this, we obtain the following, which is known in the classical case \cite{Sason-2014a}. Given \eqref{eq:ordering-of-chi-squareds}, it is also a strengthening of a result in \cite{GZB-preprint-2024} as well as of \cite[Eq.~5.69]{Beigi-2025a} where the latter follows from \eqref{eq:ordering-of-chi-squareds}.
\begin{proposition}\label{prop:Df-lb-in-chi-square}
    For $P,Q \in \Pos(A)$, $\chi^{2}_{HM}(P \Vert Q) \leq \frac{\Vert P-Q \Vert_{\infty} \Vert P - Q \Vert_{1}}{\lambda_{\min}(Q)}$. Moreover, for $\rho,\sigma \in \Density(A)$, $\chi^{2}_{HM}(\rho \Vert \sigma) \leq \frac{2 \text{TD}(\rho,\sigma)^{2}}{\lambda_{\min}(\sigma)}$.
\end{proposition}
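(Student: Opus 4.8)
The plan is to prove the operator bound $\chi^{2}_{HM}(P \Vert Q) \leq \frac{\Vert P-Q \Vert_{\infty}\Vert P-Q \Vert_{1}}{\lambda_{\min}(Q)}$ first and then specialize to states via Proposition~\ref{prop:spec-to-TD}. Writing $A \coloneq P - Q$, recall from \eqref{eq:chi-squared-f-def} that $\chi^{2}_{HM}(P \Vert Q) = \langle A, \mbf{J}^{-1}_{f_{HM},Q}(A)\rangle$, and that the harmonic-mean operator has the explicit inverse $\mbf{J}^{-1}_{f_{HM},Q}(X) = \tfrac{1}{2}(Q^{-1}X + X Q^{-1})$ as noted in Section~\ref{sec:time-homogeneous-Markov-chains}. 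First I would expand
\begin{align*}
    \chi^{2}_{HM}(P \Vert Q) = \langle A, \tfrac{1}{2}(Q^{-1}A + AQ^{-1})\rangle = \Tr[A Q^{-1} A]\ ,
\end{align*}
where the last equality uses the cyclicity of the trace together with $A = A^{\ast}$.

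The key step is then a clean H\"{o}lder-type bound on $\Tr[A Q^{-1} A]$. I would insert the operator inequality $Q^{-1} \leq \lambda_{\min}(Q)^{-1} \Pi_{\supp(Q)} \leq \lambda_{\min}(Q)^{-1}\mbb{1}$ and the positivity of $AQ^{-1}A$ (which follows since $Q^{-1}\geq 0$ and $A$ is Hermitian) to write
\begin{align*}
    \Tr[AQ^{-1}A] \leq \lambda_{\min}(Q)^{-1}\Tr[A^{2}] = \lambda_{\min}(Q)^{-1}\Tr[A \cdot A]\ .
\end{align*}
To finish, I would apply the matrix H\"{o}lder inequality $\Tr[A\cdot A] \leq \Vert A \Vert_{\infty}\Vert A \Vert_{1}$ (the pairing of the operator norm with its dual, the trace norm), yielding $\chi^{2}_{HM}(P \Vert Q) \leq \lambda_{\min}(Q)^{-1}\Vert A \Vert_{\infty}\Vert A \Vert_{1}$, which is exactly the claimed bound once $A = P-Q$ is substituted back.

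For the ``moreover'' statement with $\rho,\sigma \in \Density(A)$, I would take $P = \rho$, $Q = \sigma$ and combine the operator bound with Proposition~\ref{prop:spec-to-TD}, which gives $\Vert \rho-\sigma\Vert_{\infty}\leq \tfrac{1}{2}\Vert\rho-\sigma\Vert_{1} = \text{TD}(\rho,\sigma)$, together with $\Vert\rho-\sigma\Vert_{1} = 2\,\text{TD}(\rho,\sigma)$. Substituting both yields $\chi^{2}_{HM}(\rho \Vert \sigma) \leq \lambda_{\min}(\sigma)^{-1}\cdot \text{TD}(\rho,\sigma)\cdot 2\,\text{TD}(\rho,\sigma) = \tfrac{2\,\text{TD}(\rho,\sigma)^{2}}{\lambda_{\min}(\sigma)}$, as desired. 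I expect the only subtle point to be the support bookkeeping: when $Q$ is not full rank one should interpret $\mbf{J}^{-1}_{f_{HM},Q}$ in the pseudoinverse sense of \eqref{eq:J-pseudoinverse} and note that $\chi^{2}_{HM}(P\Vert Q)$ is finite only when $P \ll Q$, so that $A$ is supported on $\supp(Q)$ and the replacement $Q^{-1}\leq \lambda_{\min}(Q)^{-1}\Pi_{\supp(Q)}$ is valid on the relevant subspace; for the state case $\lambda_{\min}(\sigma)$ denotes the smallest nonzero eigenvalue, which is the standard convention here. The inequalities themselves are elementary, so no serious obstacle is anticipated beyond this care with supports.
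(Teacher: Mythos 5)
Your proof is correct and follows essentially the same route as the paper's: bound $Q^{-1}\leq \lambda_{\min}(Q)^{-1}\mbb{1}$ to get $\Tr[Q^{-1}(P-Q)^{2}]\leq \lambda_{\min}(Q)^{-1}\Tr[(P-Q)^{2}]$, apply the H\"{o}lder pairing $\Vert\cdot\Vert_{\infty}$--$\Vert\cdot\Vert_{1}$, and then specialize to states via Proposition~\ref{prop:spec-to-TD}. Your explicit derivation of $\chi^{2}_{HM}(P\Vert Q)=\Tr[(P-Q)Q^{-1}(P-Q)]$ from $\mbf{J}^{-1}_{f_{HM},Q}$ and your support bookkeeping are correct minor elaborations of steps the paper takes for granted.
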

\begin{proof}
    \begin{align}
        \chi^{2}_{HM}(P \Vert Q) = \Tr[Q^{-1}(P-Q)^{2}] \leq& \lambda_{\min}(Q)^{-1} \Tr[(P-Q)^{2}] \\ 
        =& \lambda_{\min}(Q)^{-1} \big\vert \Tr[(P-Q)^{2}] \big\vert \\
        \leq & \lambda_{\min}(Q)^{-1} \Vert P - Q \Vert_{\infty} \Vert P - Q \Vert_{1} \ , 
    \end{align}
    where the first inequality follows from $Q^{-1} \leq \lambda_{\min}^{-1}(Q)I$, the equality is the non-negativity of the squared operator and the final inequality is H\"{o}lder's inequality. Finally, when $P = \rho$, $Q = \sigma$, we may apply Proposition \ref{prop:spec-to-TD}. 
\end{proof}

Finally, this establishes our lower bounds.
\begin{lemma}[Lower bounds]\label{lem:chi-square-lb-in-terms-of-HM}
Let $\rho,\sigma \in \Density(A)$. 
    \begin{enumerate}
        \item Let $f:(0,\infty) \to \mbb{R}$ be convex, twice continuously differentiable, and Pinsker-satisfying. Then $\mbb{D}_{f}(\rho \Vert \sigma) \geq \frac{L_{f} \cdot \lambda_{\min}(\sigma)}{4}\chi^{2}_{HM}(\rho \Vert \sigma)$.
        \item Let $\alpha \in [1/2,+\infty]$. Then $\wt{D}_{\alpha}(\rho \Vert \sigma) \geq \frac{\lambda_{\min}(\sigma)}{2}\chi^{2}_{HM}(\rho \Vert \sigma)$.
    \end{enumerate}
\end{lemma}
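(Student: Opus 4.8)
The plan is to obtain both inequalities by chaining a Pinsker-type lower bound (each divergence family bounds the squared trace distance from below) with the rearranged form of Proposition \ref{prop:Df-lb-in-chi-square}, namely
\begin{align}\label{eq:plan-rearranged-prop}
    \text{TD}(\rho,\sigma)^{2} \geq \frac{\lambda_{\min}(\sigma)}{2}\,\chi^{2}_{HM}(\rho \Vert \sigma) \ .
\end{align}
We may assume $\sigma$ is full rank throughout, since otherwise $\lambda_{\min}(\sigma) = 0$ and the right-hand sides vanish while the divergences are non-negative. For Item 1 the argument is essentially immediate: by the definition of $f$ being Pinsker-satisfying there is $L_{f} > 0$ with $\mathbb{D}_{f}(\rho \Vert \sigma) \geq \frac{L_{f}}{2}\text{TD}(\rho,\sigma)^{2}$, and substituting \eqref{eq:plan-rearranged-prop} yields $\mathbb{D}_{f}(\rho \Vert \sigma) \geq \frac{L_{f}\,\lambda_{\min}(\sigma)}{4}\chi^{2}_{HM}(\rho \Vert \sigma)$, which is exactly the claimed constant.

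For Item 2, I would first reduce to the endpoint $\alpha = 1/2$ using the well-known monotonicity of the sandwiched R\'{e}nyi divergence in $\alpha$, so that $\wt{D}_{\alpha}(\rho \Vert \sigma) \geq \wt{D}_{1/2}(\rho \Vert \sigma)$ for all $\alpha \in [1/2,\infty]$ (the case $\alpha = \infty$, $\wt{D}_{\infty} = D_{\max}$, is also directly handled by the Pinsker bound in Lemma \ref{lem:Dmax-topol-equiv-to-sigma-often}). The crux is then a Pinsker-type inequality $\wt{D}_{1/2}(\rho \Vert \sigma) \geq \text{TD}(\rho,\sigma)^{2}$. I would establish this by identifying $\wt{D}_{1/2}(\rho \Vert \sigma) = -2\log F(\rho,\sigma)$ with $F$ the fidelity, applying $-\log x \geq 1-x$ to get $\wt{D}_{1/2}(\rho \Vert \sigma) \geq 2\bigl(1 - F(\rho,\sigma)\bigr)$ (any logarithm base only improves the constant, since $1/\ln 2 > 1$), and closing with the Fuchs--van de Graaf inequality $\text{TD}(\rho,\sigma) \leq \sqrt{1 - F(\rho,\sigma)^{2}}$, which gives $1 - F(\rho,\sigma) \geq \tfrac{1}{2}\text{TD}(\rho,\sigma)^{2}$ and hence $\wt{D}_{1/2}(\rho \Vert \sigma) \geq \text{TD}(\rho,\sigma)^{2}$. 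Combining with \eqref{eq:plan-rearranged-prop} produces
\begin{align}
    \wt{D}_{\alpha}(\rho \Vert \sigma) \geq \wt{D}_{1/2}(\rho \Vert \sigma) \geq \text{TD}(\rho,\sigma)^{2} \geq \frac{\lambda_{\min}(\sigma)}{2}\,\chi^{2}_{HM}(\rho \Vert \sigma) \ ,
\end{align}
which is the desired bound.

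The only genuinely non-routine ingredient is the chain $\wt{D}_{1/2} \geq \text{TD}^{2}$ for Item 2, since it requires the fidelity identification together with the monotonicity in $\alpha$ and the Fuchs--van de Graaf estimate; everything else is direct assembly of previously established facts. I expect that step to be the main obstacle, though it is standard and each piece is textbook-level.
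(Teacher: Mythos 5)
Your proof is correct and follows essentially the same route as the paper: Item 1 is the direct combination of the Pinsker-satisfying hypothesis with Proposition \ref{prop:Df-lb-in-chi-square}, and Item 2 uses monotonicity in $\alpha$, the fidelity identification of $\wt{D}_{1/2}$, a logarithm inequality, Fuchs--van de Graaf, and then Proposition \ref{prop:Df-lb-in-chi-square}, exactly as in the paper. The only (cosmetic) difference is your choice of the unsquared fidelity convention, where $\wt{D}_{1/2} = -2\log F$ and $\mathrm{TD} \leq \sqrt{1-F^{2}}$, versus the paper's squared convention, where $\wt{D}_{1/2} = \log(F^{-1})$ and $\mathrm{TD} \leq \sqrt{1-F}$; both chains are internally consistent and yield the same final constant.
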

\begin{proof}
    For the first item, this follows immediately from the assumption on being Pinsker-satisfying and Proposition \ref{prop:Df-lb-in-chi-square}. For the second item, we use
    \begin{align}
        \wt{D}_{\alpha}(\rho \Vert \sigma) \geq \wt{D}_{1/2}(\rho \Vert \sigma) = \log(F(\rho,\sigma)^{-1}) \geq \frac{1}{\ln(2)}(1 - F(\rho,\sigma)) \geq \frac{1}{\ln(2)}\text{TD}(\rho,\sigma)^{2} \ , 
    \end{align}
    where the first inequality we used that the Sandwiched R\'{e}nyi divergence monotonically increases in $\alpha$, the equality is well-known and may be verified via the definition, the second inequality is $\ln(x) \geq 1 - \frac{1}{x}$, and the final inequality is the Fuchs-van de Graaf inequality. Applying Proposition \ref{prop:Df-lb-in-chi-square} completes the proof.
\end{proof}

\paragraph{Upper Bound for Sandwiched R\'{e}nyi Divergence}
In this paragraph, we show how to upper bound Sandwiched R\'{e}nyi divergences by the $\chi_{GM}^{2}$-divergence. To do this, we require some preliminaries. First we define the Sandwiched quantum mean, $\wt{Q}_{\alpha}(\rho \Vert \sigma) = \Tr[\left( \sigma^{\frac{1-\alpha}{2\alpha}}\rho\sigma^{\frac{1-\alpha}{2\alpha}}\right)^{\alpha}]$. Note that $\wt{D}_{\alpha}(\rho \Vert \sigma) = \frac{1}{\alpha-1}\log(\wt{Q}_{\alpha}(\rho \Vert \sigma))$ for all $\alpha \in (0,1) \cup (1,\infty)$ and $\wt{\chi}^{2}(\rho \Vert \sigma) = \wt{Q}_{2}(\rho \Vert \sigma) - 1$ as follows from Definition \ref{def:SRD}. We also know the Pinching inequality for $\alpha > 1$: 
\begin{align}\label{eq:sandwiched-mean-pinch-property}
    \vert \text{spec}(\sigma) \vert^{\alpha} \wt{Q}_{\alpha}(\cP_{\sigma}(\rho)\Vert \sigma) \geq \wt{Q}_{\alpha}(\rho \Vert \sigma) \geq \wt{Q}_{\alpha}(\cP_{\sigma}(\rho)\Vert \sigma) \ , 
\end{align}
where $\cP_{\sigma}$ is the pinching map with respect to $\sigma$ (see \cite{Tomamichel-2016a}). Next we define the Sandwiched Hellinger divergence, which was introduced in \cite{Beigi-2025a}:
\begin{equation}\label{eq:sandwiched-hell-in-terms-of-mean}
    \wt{H}_{\alpha}(\rho \Vert \sigma) = \frac{1}{1-\alpha}\left(1 - \wt{Q}_{\alpha}(\rho \Vert \sigma) \right) \ . 
\end{equation}
Note this is defined such that $\wt{D}_{\alpha}(\rho \Vert \sigma) = \frac{1}{\alpha-1}\log(1+(\alpha-1)\wt{H}_{\alpha}(\rho \Vert \sigma))$, which is the correspondence between the classical R\'{e}nyi and Hellinger divergences (see e.g. \cite{Sason-2016-f-div-ineqs}). We remark the Hellinger divergences are $f$-divergences defined via the function $f(t) \coloneq \frac{t^{\alpha}-1}{\alpha-1}$. We now have the definitions we need.
\begin{lemma}\label{lem:sandwiched-to-sandwiched-chi-square-bound}
Let $\rho,\sigma \in \Density(A)$ such that $\rho \ll \sigma$. Let $p_{\rho,\sigma}$ be the distribution defined by the spectrum of $\cP_{\sigma}(\rho)$ and similarly $q$ with respect to $\sigma$. The following bound holds.
    \begin{align}
        \wt{D}_{\alpha}(\rho \Vert \sigma) \leq \begin{cases} 
           \frac{\kappa_{\alpha}^{\uparrow}(p_{\rho,\sigma}, q)}{\ln(2)} \wt{\chi}^{2}(\rho \Vert \sigma) & \alpha > 2 \\
           \frac{2}{\ln(2)}\wt{\chi}^{2}(\rho \Vert \sigma) & \alpha \leq 2 \ ,
        \end{cases}
    \end{align}
    where
    \begin{align}
        1 \leq \kappa^{\uparrow}_{\alpha}(p_{\rho,\sigma},q) = \alpha \max_{x} \left(\frac{p_{\rho,\sigma}(x)}{q(x)}\right)^{\alpha -2} \ .
    \end{align}
\end{lemma}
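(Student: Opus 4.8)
The plan is to linearize the logarithm in $\wt{D}_{\alpha}$, reduce the claim to a comparison between the sandwiched Hellinger divergence $\wt{H}_{\alpha}$ and the sandwiched $\chi^{2}$-divergence $\wt{\chi}^{2}=\chi^{2}_{GM}$, and then treat the two regimes $\alpha\in(1,2]$ and $\alpha>2$ separately. First I would write $\wt{D}_{\alpha}(\rho\Vert\sigma)=\frac{1}{\alpha-1}\log\!\big(1+(\alpha-1)\wt{H}_{\alpha}(\rho\Vert\sigma)\big)$ and apply the scalar bound $\log(1+x)\le x/\ln(2)$ (valid for $x>-1$, with base-$2$ logarithm). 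Since $(\alpha-1)\wt{H}_{\alpha}=\wt{Q}_{\alpha}-1>-1$ and $\alpha>1$, multiplying by the positive factor $\tfrac{1}{\alpha-1}$ preserves the inequality and yields $\wt{D}_{\alpha}(\rho\Vert\sigma)\le\frac{1}{\ln(2)}\wt{H}_{\alpha}(\rho\Vert\sigma)$. It then suffices to bound $\wt{H}_{\alpha}$ by the appropriate multiple of $\wt{\chi}^{2}$.

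For $\alpha\in(1,2]$ I would avoid the Hellinger step and use monotonicity of the sandwiched Rényi divergence in the order: $\wt{D}_{\alpha}(\rho\Vert\sigma)\le\wt{D}_{2}(\rho\Vert\sigma)=\log\wt{Q}_{2}(\rho\Vert\sigma)=\log\!\big(1+\wt{\chi}^{2}(\rho\Vert\sigma)\big)\le\frac{1}{\ln(2)}\wt{\chi}^{2}(\rho\Vert\sigma)$, which already gives the stated $\tfrac{2}{\ln(2)}$ bound (indeed with room to spare).

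The substantive case is $\alpha>2$, which I expect to be the main obstacle. Here $\wt{H}_{\alpha}\ge\wt{\chi}^{2}$, so monotonicity is useless and one needs a genuine second-order comparison producing the sharper coefficient $\kappa_{\alpha}^{\uparrow}(p_{\rho,\sigma},q)=\alpha\max_{x}\big(p_{\rho,\sigma}(x)/q(x)\big)^{\alpha-2}$. The Hellinger divergence is the $f$-divergence of $f_{\alpha}(t)=\frac{t^{\alpha}-1}{\alpha-1}$, with $f_{\alpha}''(t)=\alpha t^{\alpha-2}$. The classical Taylor-with-remainder argument (the commutative case of the cited Beigi bound) applied to the pinched distributions gives $H_{\alpha}(p_{\rho,\sigma}\Vert q)\le\tfrac12\kappa_{\alpha}^{\uparrow}(p_{\rho,\sigma},q)\,\chi^{2}(p_{\rho,\sigma}\Vert q)$, where $f_{\alpha}''$ is maximized over a spectral interval whose right endpoint is $\max_{x}p_{\rho,\sigma}(x)/q(x)=e^{D_{\max}(\cP_{\sigma}(\rho)\Vert\sigma)}$. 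I would then connect these pinched classical quantities to the quantum ones via the pinching inequality \eqref{eq:sandwiched-mean-pinch-property}: at $\alpha=2$ it yields $\chi^{2}(p_{\rho,\sigma}\Vert q)=\wt{Q}_{2}(\cP_{\sigma}(\rho)\Vert\sigma)-1\le\wt{Q}_{2}(\rho\Vert\sigma)-1=\wt{\chi}^{2}(\rho\Vert\sigma)$, placing the quantum $\wt{\chi}^{2}$ on the right-hand side.

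The delicate point — and where I expect the real difficulty to lie — is reconciling the two faces of pinching. Data processing and \eqref{eq:sandwiched-mean-pinch-property} force $\wt{Q}_{\alpha}(\rho\Vert\sigma)\ge\wt{Q}_{\alpha}(\cP_{\sigma}(\rho)\Vert\sigma)$, so the quantum sandwiched Hellinger is \emph{larger} than its pinched classical counterpart, the opposite of what a naive chain would need; one therefore cannot simply dominate $\wt{H}_{\alpha}(\rho\Vert\sigma)$ by $H_{\alpha}(p_{\rho,\sigma}\Vert q)$. Landing the $\alpha>2$ estimate with the \emph{pinched} (hence smaller, by the data-processing of $D_{\max}$) ratio inside $\kappa_{\alpha}^{\uparrow}$ requires controlling the off-diagonal contributions to $\wt{Q}_{\alpha}$ and $\wt{Q}_{2}$ commensurately rather than discarding them. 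I would attempt a direct quantum second-order estimate of the form $\wt{H}_{\alpha}(\rho\Vert\sigma)\le\kappa_{\alpha}^{\uparrow}(p_{\rho,\sigma},q)\,\wt{\chi}^{2}(\rho\Vert\sigma)$ modeled on the scalar remainder bound $f_{\alpha}''(\xi)\le\alpha\,\Lambda^{\alpha-2}$ with $\Lambda=\max_{x}p_{\rho,\sigma}(x)/q(x)$, expecting the factor-of-two gap relative to the classical constant $\tfrac12\kappa_{\alpha}^{\uparrow}$ to be exactly the price of this quantum-to-classical passage; combining it with the $\alpha=2$ pinching relation and the logarithm step of the first paragraph then yields the claim, while the final displayed inequality $1\le\kappa_{\alpha}^{\uparrow}$ is immediate since $\sum_{x}p_{\rho,\sigma}(x)=\sum_{x}q(x)=1$ forces some ratio $p_{\rho,\sigma}(x)/q(x)\ge1$.
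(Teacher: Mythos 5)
Your treatment of the $\alpha \leq 2$ case is correct and matches the paper (monotonicity in $\alpha$ plus $\wt{D}_{2} = \log(1+\wt{\chi}^{2})$ and $\log(1+x) \leq x/\ln(2)$; you even get the constant $1/\ln(2)$ rather than $2/\ln(2)$). Likewise your reduction $\wt{D}_{\alpha} \leq \frac{1}{\ln(2)}\wt{H}_{\alpha}$ and the observation $\chi^{2}(p_{\rho,\sigma}\Vert q) = \chi^{2}_{GM}(\cP_{\sigma}(\rho)\Vert\sigma) \leq \wt{\chi}^{2}(\rho\Vert\sigma)$ are both steps the paper uses. But for the main case $\alpha > 2$ you have correctly diagnosed the obstruction --- the pinching inequality makes $\wt{H}_{\alpha}(\rho\Vert\sigma) \geq H_{\alpha}(p_{\rho,\sigma}\Vert q)$, pointing the wrong way --- and then you only \emph{conjecture} a way around it: a direct single-copy operator-Taylor estimate $\wt{H}_{\alpha}(\rho\Vert\sigma) \leq \kappa_{\alpha}^{\uparrow}(p_{\rho,\sigma},q)\,\wt{\chi}^{2}(\rho\Vert\sigma)$ with the factor of two ``expected'' to absorb the quantum-to-classical passage. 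No argument is given for this, and none is apparent: $\wt{Q}_{\alpha}$ is a trace of a non-commuting power $(\sigma^{\frac{1-\alpha}{2\alpha}}\rho\sigma^{\frac{1-\alpha}{2\alpha}})^{\alpha}$, there is no commuting ``likelihood ratio'' to Taylor-expand, and the constant $\kappa_{\alpha}^{\uparrow}$ is built from the \emph{pinched} spectrum, which a single-copy argument has no natural access to. This is a genuine gap, not a routine detail.

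The paper closes this gap with an asymptotic pinching argument that your proposal misses entirely. Use additivity, $\wt{D}_{\alpha}(\rho\Vert\sigma) = \frac{1}{n}\wt{D}_{\alpha}(\rho^{\otimes n}\Vert\sigma^{\otimes n})$, and apply the upper pinching inequality \eqref{eq:sandwiched-mean-pinch-property} on the $n$-copy level: the multiplicative loss $\vert\spec(\sigma^{\otimes n})\vert^{\alpha}$ becomes, after the logarithm and division by $n$, an additive error of order $\frac{\log\vert\spec(\sigma^{\otimes n})\vert}{n} = O\!\left(\frac{\log n}{n}\right)$, since $\vert\spec(\sigma^{\otimes n})\vert$ grows only polynomially in $n$. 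On the pinched (commuting) state, one then runs exactly your intended chain --- $\wt{D}_{\alpha}(\cP_{\sigma}(\rho)\Vert\sigma) \leq \frac{1}{\ln(2)}\wt{H}_{\alpha}(\cP_{\sigma}(\rho)\Vert\sigma) = \frac{1}{\ln(2)}H_{\alpha}(p_{\rho,\sigma}\Vert q)$, the classical second-order bound with constant $\kappa_{\alpha}^{\uparrow}(p_{\rho,\sigma},q)$, and finally $\chi^{2}(p_{\rho,\sigma}\Vert q) \leq \wt{\chi}^{2}(\rho\Vert\sigma)$ by the lower pinching inequality at $\alpha = 2$ --- and taking $n \to \infty$ kills the correction term, so no quantum operator-Taylor estimate is ever needed. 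In short: the pieces you assembled are the right ones, but the load-bearing idea for $\alpha > 2$ is the tensorization-plus-pinching limit, without which your proof does not go through.
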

\begin{proof}
    We start with the case $\alpha > 2$ and then explain the modifications in the other case. Now,
    \begin{align}
        \wt{D}_{\alpha}(\rho \Vert \sigma) =& \frac{1}{n}\left[ \wt{D}_{\alpha}(\rho^{\otimes n} \Vert \sigma^{\otimes n}) \right] \\
        \leq&  \frac{1}{n}\left[ \wt{D}_{\alpha}(\cP_{\sigma}(\rho)^{\otimes n} \Vert \sigma^{\otimes n}) + \log\vert \text{spec}(\sigma^{\otimes n})\vert  \right] \\
        =&  \wt{D}_{\alpha}(\cP_{\sigma}(\rho) \Vert \sigma) + \frac{\log\vert \text{spec}(\sigma^{\otimes n})\vert}{n} \\
        \leq&  \frac{1}{\ln(2)}\wt{H}_{\alpha}(\cP_{\sigma}(\rho) \Vert \sigma) + \frac{\log\vert \text{spec}(\sigma^{\otimes n})\vert}{n} \\
        =& \frac{1}{\ln(2)}H_{\alpha}(p_{\rho,\sigma} \Vert q) + \frac{\log\vert \text{spec}(\sigma^{\otimes n})\vert}{n} \\
        \leq & \kappa_{\alpha}^{\uparrow}(p_{\rho,\sigma}, q)\chi^{2}(p_{\rho,\sigma}\Vert q) + \frac{\log\vert \text{spec}(\sigma^{\otimes n})\vert}{n} \\
        =& \frac{\kappa_{\alpha}^{\uparrow}(p_{\rho,\sigma}, q)}{\ln(2)}\chi_{GM}^{2}(\cP_{\sigma}(\rho) \Vert \sigma) + \frac{\log\vert \text{spec}(\sigma^{\otimes n})\vert}{n} \\
        \leq & \frac{\kappa_{\alpha}^{\uparrow}(p_{\rho,\sigma}, q)}{\ln(2)}\chi_{GM}^{2}(\rho \Vert \sigma) + \frac{\log\vert \text{spec}(\sigma^{\otimes n})\vert}{n} \ ,
    \end{align}
    where the first and second equality are additivity over tensor products \cite{Tomamichel-2016a}, the first inequality is the pinching inequality \eqref{eq:sandwiched-mean-pinch-property}, the second inequality is $\log(1+x) \leq x$, the third equality is because the Sandwiched Hellinger evaluates on commuting states to the classical Hellinger on those spectra, the fourth is \cite[Theorem 31]{GZB-preprint-2024}, the fourth equality is because on commuting states $\chi^{2}_{GM}(\rho \Vert \sigma) = \chi^{2}(p \Vert q)$ where $p,q$ are the spectra of $\rho$ and $\sigma$, and the final equality is by \eqref{eq:sandwiched-mean-pinch-property}. As this held for all $n \in \mbb{N}$ and $\log \vert \text{spec}(\sigma^{\otimes n})\vert = O(\log(n))$, considering the limit $n \to \infty$ establishes the first case in the proof up to the definition of $\kappa^{\uparrow}_{\alpha}$. By \cite{GZB-preprint-2024}, 
    $$\kappa^{\uparrow}(p_{\rho,\sigma},q) \coloneq \max_{x,t \in [0,1]} \alpha\left(1-t + t\frac{p_{\rho,\sigma}(x)}{q(x)}\right)^{\alpha -2} = \alpha \max_{x} \left(\frac{p_{\rho,\sigma}(x)}{q(x)}\right)^{\alpha -2}   \ , $$
    where the equality uses that there must exist $x$ such that the ratio is at least 1 and $t^{\alpha-2}$ is monotonically increasing for $\alpha > 2$. This completes the first case. In the case $\alpha \leq 2$, we may use monotonicity of the Sandwiched divergences, i.e. $\wt{D}_{\alpha}(\rho \Vert \sigma) \leq \wt{D}_{2}(\rho \Vert \sigma)$ and then do the same argument, but $\kappa^{\uparrow}_{2}(\mbf{p},\mbf{q}) = 2$, which simplifies the claim.
\end{proof}

\section{Computing Operator Monotone-Induced Correlation Quantities}\label{sec:computability}
In this section, we establish the operationally relevant contraction and correlation coefficients $\eta_{\chi^{2}_{f}}(\cE,\sigma)$, $\mu_{f}(A:B)_{\rho}$, and $ \mu_{f_{k}}^{\text{Lin}}(A:B)_{\rho}$ are efficient to compute  (Theorems \ref{thm:contraction-coeff-computability} and \ref{thm:f-max-corr-computability} respectively). We do this by constructing efficient algorithms. From a mathematical perspective, computing $\mu(X:Y)_{p}$ is well-known to be efficient as it is the second eigenvalue of $\wt{p}_{XY} = p_{X}^{-1/2}p_{XY}p_{Y}^{-1/2}$ with respect to the Euclidean inner product, and so we might hope this extends to the quantum setting. From a practical perspective, this section is motivated from these quantities being relevant for determining mixing times (Proposition \ref{prop:L1-mixing-times}, Theorem \ref{thm:rel-ent-mixing}), bounding the rate of contraction (Theorem \ref{thm:mixing-rate}), and placing limits on convertability under local operations (Theorem \ref{thm:k-correlation-nec-for-local-processing}). It therefore may be useful to be able to compute these quantities. Indeed, the main practically relevant conclusion of this section is the following:
\begin{tcolorbox}[width=\linewidth, sharp corners=all, colback=white!95!black, boxrule=0pt,frame hidden]
\begin{theorem}\label{thm:computable-mixing-times}
    If $\cE_{A \to A}$ has a unique, full rank fixed point $\pi_{A}$, Algorithm \ref{alg:contraction-coeff} is an efficient method for computing mixing time bounds under trace distance (resp.~relative entropy) via Proposition \ref{prop:L1-mixing-times} (resp~Theorem \ref{thm:rel-ent-mixing}). Moreover, this method is trivial for $f=f_{GM}$ only if any bound using contraction coefficients would be trivial. 
\end{theorem}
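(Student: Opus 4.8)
The plan is to assemble the theorem from ingredients already in hand: the efficient computability of the relevant contraction coefficient, the closed-form mixing-time bounds it feeds, and the faithfulness of $\chi^{2}_{GM}$. I would first dispatch efficiency. Under the hypothesis that $\pi$ is the unique, full rank fixed point of $\cE$, we have $\pi \in \Density_{+}$ and the preconditions of both Proposition \ref{prop:L1-mixing-times} (trace distance, valid for any $f \in \cM_{\text{St}}$) and Theorem \ref{thm:rel-ent-mixing} (relative entropy, valid for $f \in \cG$) are satisfied; note $f_{GM} \in \cG$ since $f_{HM} \leq f_{GM} \leq f_{LM}$ by \eqref{eq:ordering-of-means}, so $f_{GM}$ is admissible for both. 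By Theorem \ref{thm:contraction-coeff-computability}, which rests on the eigenvalue characterization of Lemma \ref{lem:contraction-coeff-to-eig} (equivalently Corollary \ref{cor:Schroding-er-map-without-rank-constraints}) together with the Hadamard-product representation that renders $\mbf{J}_{f,\pi}^{\pm 1}$ and the induced inner products efficiently evaluable, Algorithm \ref{alg:contraction-coeff} returns $\eta_{\chi^{2}_{f}}(\cE,\pi)$ efficiently. Substituting this value into the closed-form bounds $n \geq \log(\tfrac{2}{\delta^{2}\lambda_{\min}})/\log(1/\eta_{\chi^{2}_{f}}(\cE,\pi))$ of Proposition \ref{prop:L1-mixing-times} and $D(\cE^{n}(\rho)\Vert\pi) \leq \tfrac{2}{\lambda_{\min}(\pi)}\eta_{\chi^{2}_{f}}(\cE,\pi)^{n}$ of Theorem \ref{thm:rel-ent-mixing} then yields the asserted upper bounds with no further optimization, establishing the first sentence.

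For the `moreover' clause I would first make the notion of triviality precise: the method is \emph{trivial} exactly when the bound it produces is vacuous. Under the stated hypotheses Item 2 of Proposition \ref{prop:f-chi-contraction-extreme-values} gives $\eta_{\chi^{2}_{f}}(\cE,\pi) \in [0,1]$ (here $\cE$ is the adjoint of the unital Schwarz map $\cE^{\ast}$), so both closed-form bounds are finite and informative precisely when $\eta_{\chi^{2}_{f}}(\cE,\pi) < 1$ and degenerate---an infinite required $n$, or a non-decaying relative-entropy bound---precisely when $\eta_{\chi^{2}_{f}}(\cE,\pi) = 1$. Hence the $f = f_{GM}$ instance is trivial iff $\eta_{\chi^{2}_{GM}}(\cE,\pi) = 1$. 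Applying Proposition \ref{prop:faithfulness-of-chi-squared-GM} with $\sigma = \pi$, this forces $\eta_{\mbb{D}}(\cE,\pi) = 1$ for every dissimilarity measure $\mbb{D}$ obeying the data processing inequality, and in particular for every well-defined $\chi^{2}_{f}$ contraction coefficient, so any contraction-coefficient-based bound is likewise trivial. This is the desired implication.

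Since the substantive work lives in the already-proved inputs, the assembly's only delicate points are pinning down triviality and confirming the hypotheses align, both handled above. The step I would flag as the genuine crux is the faithfulness direction being anchored specifically at the geometric mean: it relies on $\cS_{GM,\cE,\pi}$ coinciding with the Petz recovery map $\cP_{\cE,\pi}$, which is CPTP and therefore a legitimate channel to which data processing applies, whereas the Schrödinger reversal map $\cS_{f,\cE,\pi}$ for other $f$ need not even be completely positive. It is exactly this physicality of the $f_{GM}$ recovery map that makes the `trivial only if all trivial' dichotomy hold, so I would emphasize that the choice $f = f_{GM}$ in the statement is not incidental but is precisely what licenses Proposition \ref{prop:faithfulness-of-chi-squared-GM}.
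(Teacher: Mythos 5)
Your proposal is correct and follows essentially the same route as the paper: efficiency via Theorem \ref{thm:contraction-coeff-computability} feeding the closed-form bounds of Proposition \ref{prop:L1-mixing-times} and Theorem \ref{thm:rel-ent-mixing}, and the `moreover' clause via the faithfulness of $\eta_{\chi^{2}_{GM}}$ (Proposition \ref{prop:faithfulness-of-chi-squared-GM}, which already packages the appeal to Lemma \ref{lem:chi-squared-contraction-unity} that the paper's proof invokes explicitly). Your added precision---pinning ``trivial'' to $\eta_{\chi^{2}_{f}}(\cE,\pi)=1$ via Proposition \ref{prop:f-chi-contraction-extreme-values}, verifying $f_{GM}\in\cG$, and flagging that the argument hinges on $\cS_{GM,\cE,\pi}=\cP_{\cE,\pi}$ being CPTP---is accurate and only makes the paper's terse proof more explicit.
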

\end{tcolorbox}
\begin{proof}
    By Proposition \ref{prop:L1-mixing-times} and Theorem \ref{thm:rel-ent-mixing}, to bound the mixing time under trace distance or relative entropy, it suffices to bound $\eta_{\chi^{2}_{f}}$ for some standard operator monotone $f$. By Theorem \ref{thm:contraction-coeff-computability}, this chosen quantity can be computed efficiently. To establish the `moreover' statement, by Lemma \ref{lem:chi-squared-contraction-unity}, $\eta_{\chi^{2}_{f_{GM}}}(\cE,\sigma) = 1$ only if there exists $\rho \neq \sigma$ such that the Petz recovery map composed with the channel preserves both $\rho$ and $\sigma$. By Proposition \ref{prop:faithfulness-of-chi-squared-GM}, this implies all other contraction coefficients are unity.
\end{proof}
\noindent The above theorem is appealing for a variety of technical reasons. First, it is always the case that either the mixing time is finite (in particular by choosing $f_{GM}$) or one needs more information about the channel than a classical description of it and its fixed point. Second, the fact the algorithm works for all standard operator monotone $f$ is useful because the choice of $f$ that results in the $\chi^{2}_{f}$ contraction coefficient being minimal is known to depend on the choice of quantum channel \cite[Section 5.2]{hiai2016contraction}.

Mathematically, the proof idea of this section is simple so long as one has the appropriate linear algebraic background as we now explain. For standard operator monotone functions, We have seen that $\eta_{\chi^{2}_{f}}$, $\mu_{f}(A:B)_{\rho}$, and $\mu_{f_{k}}^{\text{Lin}}(A:B)_{\rho}$ can be characterized as eigenvalues (Corollary \ref{cor:Schroding-er-map-without-rank-constraints}, Proposition \ref{prop:standard-operator-monotone-maximal-corr}, and Lemma \ref{lem:k-correlation-coeff-Schmidt-coeff-characterization} respectively). It is a textbook result (e.g. \cite{Johnston-2021a}) that one can compute the eigenvalues of a finite-dimensional map using the inner product. The nuance is that to be efficiently computable, the relevant map and inner product need to be efficient to compute as well. Establishing that the relevant map and inner product are efficient to compute is thus the technical work in this section. 

\begin{remark}[Relation to Spectral Gap Methods]\label{rem:relation-to-spectral-gap-methods}
    In spectral gap methods for mixing times, one takes the map representing the dynamics, say $\cE$, and checks the minimal difference between the identity channel and the map representing the dynamics (or some relevant map induced by it) when excluding the known fixed point. This is a way of extracting the second eigenvalue of the dynamical map (or the relevant map induced by it) as the minimizer will be the corresponding eigenvector. As,  by Corollary \ref{cor:Schroding-er-map-without-rank-constraints}, the $\chi^{2}_{f}$ input-dependent contraction coefficients are the second eigenvalue of $\cS_{f,\cE,\sigma} \circ \cE$, the ability to compute the contraction coefficient for many $f$ (Theorem \ref{thm:contraction-coeff-computability}) implies the ability to compute \textit{a family} of spectral gaps and take the smallest to get the strongest bound in this methodology. This is an analogous idea to \cite{generalized-QSL}, which optimizes the operator monotone function to alter the Riemannian metric relevant to calculating the quantum speed limit.  
    
    We note that an important aspect of taking Theorem \ref{thm:computable-mixing-times} as a spectral gap method is that Corollary \ref{cor:Schroding-er-map-without-rank-constraints} already guarantees the first eigenvalue is one and is relevant to the contraction coefficient. In the quantum setting, unless one appeals to Corollary \ref{cor:Schroding-er-map-without-rank-constraints}, it is not obvious what is the appropriate map induced by the dynamics nor the appropriate inner product spaces to consider for extracting an eigenvalue to obtain a generic spectral gap method. Indeed, in \cite{Temme-2010a}, the authors apply a spectral gap method directly to an alternative map where they try and extract an eigenvalue in terms of the Hilbert-Schmidt inner product. This forces them to work with unital channels as they otherwise cannot guarantee their induced map has a spectrum bounded above by one (See \cite[Proposition 19]{Temme-2010a} and discussion thereafter).
\end{remark}

\subsection{Preliminaries}
In this subsection, we briefly provide background on the theory of computability and abstract linear algebra needed in this section not needed in previous sections.

\paragraph{Computability} As is standard, we take a classical algorithm to be efficient if its time complexity scales polynomially in the input description \cite{Sipser-2005a,Arora-2009a}. We remark being efficient is distinct from necessarily being practical, e.g. semidefinite programs are often efficient in this sense \cite{Nemirovski-2004a}, but aren't always practical to use in quantum information theory as the number of constraints grows exponentially in the number of qubits. A major appeal of the algorithms we present in this section is that they are built from standard subroutines (matrix multiplication, trace, etc.) that are known to be efficient in the way described.

The algorithms we introduce will also rely upon computing the matrix monotone function $f$. Thus, our claims of efficiency essentially will rely upon our ability to compute $f$. To offload the notion of efficiency of $f$, we use the notion of a poly-time computable function.
\begin{definition}
    \cite{Braverman-2005a}  The time complexity of number $a \in \mbb{R}$, $T_{a}(n)$, is the best time complexity for computing a $2^{-n}$ approximation of $a$. Let $S \subseteq I \subset \mbb{R}$. We say a continuous function $f:I \to \mbb{R}$ is poly-time computable on $S$ if there exists a polynomial $p$ such that $T_{f(x)}(n) \leq p(n)$ for all $x \in S$.
\end{definition}
The important point for us is that if $f$ is poly-time computable on $S$, if we pick up some error that scales as a polynomial $p$ in the input description, we only need to increase the precision of $f$ by $\log(p(n))$ to remove this extra error. Therefore, $f$ being poly-time computable allows efficiency of our algorithms to be established by counting the subroutines. We remark this means we are not doing a full analysis on the numerical error/numerical stability. We briefly discuss numerical stability in Section \ref{subsec:num-stability}.

\subsubsection{More Linear Algebraic Background}\label{sec:more-lin-alg-bg}
As mentioned, the key idea for establishing the results of this section will be that we have identified the relevant quantities as eigenvalues and thus we simply need methods for computing the relevant eigenvalues. In finite dimensions, there is a standard method for obtaining the eigenvalues of a linear map $\cT: (V, \langle \cdot , \cdot \rangle_{V}) \to (V, \langle \cdot, \cdot \rangle_{V})$ as we briefly explain. Given an orthonormal basis (ONB) $\{\mbf{e}_{i}\}_{i}$ for inner product space $(V, \langle \cdot , \cdot \rangle_{V})$, one computes the `standard matrix $T$ of $\cT$' via $T_{i,j} = \langle \mbf{e}_{i}, \cT(\mbf{e}_{j}) \rangle_{V}$. Then the eigenvalues of $\cT$ are the eigenvalues of $T$, so one may solve for the eigenvalues of $\cT$ from $T$. We refer the reader to \cite{Johnston-2021a} for further information and examples of the standard matrix.

Given previous sections, the relevant inner product spaces of this section will involve the vector space of Hermitian or linear operators. Here we introduce the basic ideas we will use to construct an ONB for Hermitian and linear operators. First, we recall the generalized Gell-Mann operator basis \cite{Kalev-2014a}: \begin{align}
    G_{n,m} = \begin{cases}
        \frac{1}{\sqrt{2}}(\ket{n}\bra{m}+\ket{m}\bra{n}) & n < m \\
        \frac{i}{\sqrt{2}}(\ket{n}\bra{m}-\ket{m}\bra{n}) & m < n
    \end{cases}
\end{align} 
for $n,m \in [d]$ and $G_{n,n} = \frac{1}{\sqrt{n(n+1)}}(\sum_{k = 1}^{n} \dyad{k} - n\dyad{n+1})$ for $n \in [d-1]$. The generalized Gell-Mann matrices form an orthonormal basis of the traceless, Hermitian operators with respect to the HS inner product as may be verified by direct calculation. Then one may convert this into a (non-orthonormal) basis of the Hermitian operators (with respect to the HS inner product) by adding the square root of any density matrix. That this is true may be verified as follows. Any density matrix has unit trace and thus must be linearly independent of the generalized Gell-Man operator basis. As $\Herm(A)$ has dimension $\vert A \vert^{2}$ and $\Herm_{0}(A)$ has dimension $\vert A \vert^{2} -1$, we may conclude this is a basis of $\Herm(A)$. Finally, such a basis is also a basis for the linear operators as it also has dimension $\vert A \vert^{2}$. We will use this basis throughout this section.

The remaining tool we will make use of is the Gram-Schmidt process, which is well-known, see e.g. \cite[Theorem 1.4.6]{Johnston-2021a}. This in particular allows us to convert a basis of a vector space into an orthonormal basis.
\begin{proposition}[Gram-Schmidt Process]\label{prop:GS-process}
    Let $(V, \langle \cdot, \cdot \rangle_{V})$ be an inner product space. Let $\{v_{1},...,v_{n}\} = B \subset V$ be a linearly independent set of vectors. Define
    \begin{align}
        w_{j} \coloneq v_{j} - \sum_{i=1}^{j-1} \langle u_{i}, v_{j} \rangle_{V} \cdot u_{i} \text{ and } u_{j} \coloneq w_{j}/\Vert w_{j} \Vert_{V} \quad \forall j \in \{1,...,n\} \ . 
    \end{align}
    Then $B' \coloneq \{u_{1},...,u_{n}\}$ is an orthonormal basis of $\linspan(B)$ with respect to inner product $\langle \cdot , \cdot \rangle_{V}$.
\end{proposition}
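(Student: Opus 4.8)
The plan is to prove the three assertions—well-definedness of each $u_j$, orthonormality of $B'$, and the span identity $\linspan(B') = \linspan(B)$—simultaneously by induction on $j \in \{1,\dots,n\}$. The inductive hypothesis I would carry is that $\{u_1,\dots,u_j\}$ is a well-defined orthonormal set with $\linspan(u_1,\dots,u_j) = \linspan(v_1,\dots,v_j)$. The base case $j=1$ is immediate: since $B$ is linearly independent it contains no zero vector, so $w_1 = v_1 \neq 0$, hence $u_1 = v_1/\Vert v_1 \Vert_V$ is defined, has unit norm, and clearly spans $\linspan(v_1)$.

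For the inductive step, assume the hypothesis holds through index $j-1$. First I would establish that $w_j \neq 0$, which is the only place linear independence is genuinely used: if $w_j = 0$ then the defining formula gives $v_j = \sum_{i=1}^{j-1} \langle u_i, v_j \rangle_V\, u_i \in \linspan(u_1,\dots,u_{j-1}) = \linspan(v_1,\dots,v_{j-1})$ by the inductive span identity, contradicting the linear independence of $\{v_1,\dots,v_j\}$. Thus $u_j = w_j/\Vert w_j \Vert_V$ is well-defined with $\Vert u_j \Vert_V = 1$. Next I would verify orthogonality: for any $k < j$, using the inductive orthonormality $\langle u_k, u_i\rangle_V = \delta_{ki}$ and linearity in the second slot,
\begin{align*}
    \langle u_k, w_j \rangle_V = \langle u_k, v_j \rangle_V - \sum_{i=1}^{j-1} \langle u_i, v_j \rangle_V \langle u_k, u_i \rangle_V = \langle u_k, v_j \rangle_V - \langle u_k, v_j \rangle_V = 0 \,,
\end{align*}
so $\langle u_k, u_j \rangle_V = 0$ and $\{u_1,\dots,u_j\}$ is orthonormal. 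Finally I would close the span identity in both directions: $u_j$ is a scalar multiple of $w_j$, which is a linear combination of $v_j$ and $u_1,\dots,u_{j-1} \in \linspan(v_1,\dots,v_{j-1})$, so $u_j \in \linspan(v_1,\dots,v_j)$; conversely, rearranging the definition gives $v_j = \Vert w_j \Vert_V\, u_j + \sum_{i=1}^{j-1}\langle u_i, v_j\rangle_V\, u_i \in \linspan(u_1,\dots,u_j)$, and combined with the inductive hypothesis this yields $\linspan(u_1,\dots,u_j) = \linspan(v_1,\dots,v_j)$.

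Taking $j = n$ completes the induction: $B' = \{u_1,\dots,u_n\}$ is orthonormal and satisfies $\linspan(B') = \linspan(B)$. Since an orthonormal set is automatically linearly independent and $B'$ spans $\linspan(B)$, it is an orthonormal basis of $\linspan(B)$, as claimed. I do not anticipate a serious obstacle here—this is the classical Gram–Schmidt argument—and the only subtle point worth stating carefully is the nonvanishing of $w_j$, since that is precisely the step that fails without linear independence and that guarantees each normalization is legitimate.
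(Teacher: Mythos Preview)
Your proof is correct and is the standard inductive argument for Gram--Schmidt. The paper does not supply its own proof of this proposition at all---it simply states the result as well-known and cites a textbook \cite[Theorem 1.4.6]{Johnston-2021a}---so there is nothing to compare against; your argument is exactly what one would find in such a reference.
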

This will be used in particular to convert the basis we just described to an ONB for whichever inner product space we need to consider.

\subsection{Computing Contraction Coefficients \texorpdfstring{$\eta_{\chi^{2}_{f}(\cE,\sigma)}$}{}}
Here we show that we can efficiently compute the relevant contraction coefficients. The basic idea is that since Lemma \ref{lem:contraction-coeff-to-eig} establishes a relation between $\eta_{\chi^{2}_{f}}(\cE,\sigma)$ and the second eigenvalue of $\cS_{f,\cE,\sigma} \circ \cE$ on the inner product space $(\Herm(A), \langle \cdot , \cdot \rangle_{f,\sigma}^{\star})$ we just need to compute this eigenvalue. As described in Algorithm \ref{alg:contraction-coeff}, we will use the Gram-Schmidt process to efficiently find an ONB with respect to the inner product space $(\Herm(A), \langle \cdot , \cdot \rangle_{f,\sigma}^{\star})$ and then use this ONB to compute the standard matrix of $\cS_{f,\cE,\sigma} \circ \cE$. This allows us to determine the eigenvalue. The efficiency claims then follow from $f$ being poly-time computable.
\begin{algorithm}
\caption{Computing $\eta_{\chi^{2}_{f}}(\cE,\sigma)$}\label{alg:contraction-coeff}
\begin{algorithmic}[1]
\Procedure{GetContractionCoefficient}{$f$,$\cE$,$\sigma$} \Comment{Computes $\eta_{\chi^{2}_{f}}(\cE,\sigma)$}
    \State $\{e_{j}\}_{j \in [d^{2}_{A}]} \leftarrow \text{GetONB}(f,\cE,\sigma)$
    \For{$(i,j) \in [d_{A}^{2}]^{\times 2}$}\Comment{Generates Standard Matrix}
        \State $T_{i,j} \leftarrow \langle e_{i}, (\cS_{f,\cE,\sigma} \circ \cE)(e_{j}) \rangle_{f,\sigma}^{\star}$
    \EndFor
    \State $\{\lambda_{i}\}_{i \in [d_{A}^{2}]} \leftarrow \text{Eig}(T)$ \Comment{Obtains Eigenvalues}
    \State \textbf{return} $\lambda_{2}$
\EndProcedure 
\Procedure{GetONB}{$f$,$\sigma$}\Comment{Obtains ONB for $(\Herm(A),\langle \cdot , \cdot \rangle_{f,\sigma}^{\star})$}
\State $\{v_{j}\}_{j \in [d_{A}^{2}]} \leftarrow \{\sigma^{1/2}\} \cup \{G_{n,m}\}_{n,m}$ 
\For{$j \in [d_{A}^{2}]$} \Comment{Gram-Schmidt Process}
    \State $w_{j} \leftarrow v_{j} - \sum_{i=1}^{j-1} \langle e_{i}, v_{j}\rangle_{f,\sigma}^{\star}e_{i}$
    \State $e_{j} \leftarrow w_{j}/\sqrt{\langle w_{j},w_{j} \rangle_{f,\sigma}^{\star}}$
\EndFor
\State \textbf{Return} $\{e_{j}\}_{j \in [d^{2}_{A}]}$.
\EndProcedure
\end{algorithmic}
\end{algorithm}

\begin{tcolorbox}[width=\linewidth, sharp corners=all, colback=white!95!black, boxrule=0pt,frame hidden]
\begin{theorem}\label{thm:contraction-coeff-computability}
    Let $f \in \cM_{St}$ and be poly-time computable. Let $\sigma \in \Density_{+}(A)$ and either $\cE(\sigma) > 0$ or $f$ satisfies the conditions of Proposition \ref{prop:suff-conds-for-restricting-support}. Then Algorithm \ref{alg:contraction-coeff} computes $\eta_{\chi^{2}}(\cE,\sigma)$ with time complexity $O(\max\{d_{A}^{9},d_{A}^{2}d_{B}^{5}\})$. In other words, computing $\eta_{\chi^{2}_{f}}(\cE,\sigma)$ is efficient under these conditions.
\end{theorem}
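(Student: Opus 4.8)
The statement is fundamentally a complexity bookkeeping claim built atop the eigenvalue characterization in Lemma~\ref{lem:contraction-coeff-to-eig} (via Corollary~\ref{cor:Schroding-er-map-without-rank-constraints} to handle the rank issues). The plan is to argue correctness and then count the arithmetic operations of Algorithm~\ref{alg:contraction-coeff}. First I would establish correctness: by Lemma~\ref{lem:contraction-coeff-to-eig} (or Corollary~\ref{cor:Schroding-er-map-without-rank-constraints} in the degenerate case that $f$ satisfies Proposition~\ref{prop:suff-conds-for-restricting-support}), we know $\eta_{\chi^{2}_{f}}(\cE,\sigma)$ is the second eigenvalue $\lambda_{2}$ of the map $\cS_{f,\cE,\sigma} \circ \cE$ on the inner product space $(\Herm(A),\langle \cdot,\cdot \rangle_{f,\sigma}^{\star})$, with $\lambda_{1} = 1$ corresponding to the eigenvector $\sigma$. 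The algorithm therefore correctly returns the desired quantity \emph{provided} the \textsc{GetONB} subroutine returns a genuine orthonormal basis in this inner product and the standard matrix $T$ is assembled correctly, which is exactly the content of the Gram-Schmidt process (Proposition~\ref{prop:GS-process}) applied to the spanning set $\{\sigma^{1/2}\} \cup \{G_{n,m}\}$ that Section~\ref{sec:more-lin-alg-bg} shows is a basis of $\Herm(A)$.

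\textbf{Complexity accounting.} Next I would count subroutine calls against matrix sizes. The core primitive cost is a single evaluation of $\langle X, Y \rangle_{f,\sigma}^{\star} = \langle X, \mbf{J}_{f,\sigma}^{-1}(Y)\rangle$. Using the Hadamard-product representation \eqref{eq:Hadamard-prod-form}--\eqref{eq:J-pseudoinverse}, once $\sigma$ is diagonalized (an $O(d_{A}^{3})$ cost, incurred once), applying $\mbf{J}_{f,\sigma}^{-1}$ to a $d_{A}\times d_{A}$ operator amounts to an entrywise multiplication by the precomputed matrix $(W_{f,\sigma})_{ij} = P_{f}(\lambda_{i},\lambda_{j})^{-1}$ in the eigenbasis, costing $O(d_{A}^{2})$ perspective-function evaluations plus the basis changes. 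Here is where poly-time computability of $f$ enters: each $P_{f}(\lambda_{i},\lambda_{j})$ value is obtained to the needed precision at cost polynomial in the bit-length, so the $d_{A}^{2}$ evaluations do not dominate. The Gram-Schmidt loop runs over $d_{A}^{2}$ basis vectors, and step $j$ performs $O(j)$ inner products, giving $O(d_{A}^{2})$ inner products each costing $O(d_{A}^{3})$ (matrix multiplication in $\Lin(A)$), for $O(d_{A}^{2}) \cdot O(d_{A}^{2}) \cdot O(d_{A}^{3}) = O(d_{A}^{9})$ from the double summation. Assembling $T$ requires $(d_{A}^{2})^{2} = d_{A}^{4}$ entries, each an inner product $\langle e_{i}, (\cS_{f,\cE,\sigma}\circ\cE)(e_{j})\rangle_{f,\sigma}^{\star}$; applying $\cE$ (a map $\Lin(A)\to\Lin(B)$, realized through its Choi operator) and then $\cS_{f,\cE,\sigma}$ costs up to $O(d_{B}^{3})$ or $O(d_{A}^{2}d_{B})$ depending on which multiplication dominates, which is where the $d_{A}^{2}d_{B}^{5}$ term originates from repeated action of $\cE$ and its adjoint together with the $\mbf{J}$-operators on the $B$ side. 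Finally diagonalizing the $d_{A}^{2}\times d_{A}^{2}$ matrix $T$ to extract eigenvalues is $O((d_{A}^{2})^{3}) = O(d_{A}^{6})$. Taking the maximum of all contributions yields the stated $O(\max\{d_{A}^{9}, d_{A}^{2}d_{B}^{5}\})$.

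\textbf{Main obstacle.} The genuinely delicate point is not the raw operation count but ensuring that every step remains \emph{exactly} an inner-product/eigenvalue computation in the weighted space $(\Herm(A),\langle\cdot,\cdot\rangle_{f,\sigma}^{\star})$ rather than the Hilbert--Schmidt space, and that $f$ being poly-time computable suffices to absorb the perspective-function evaluations without a separate numerical-error analysis. I would handle this by invoking the remark in the computability preliminaries that a polynomial-in-$n$ accumulated error only requires boosting the precision of $f$ by $\log(p(n))$, so counting subroutines is legitimate. The one subtlety to flag explicitly is that $\cS_{f,\cE,\sigma}$ and $\cS_{f,\cE,\sigma}\circ\cE$ are in general non-Hermitian-\emph{preserving} as maps but \emph{are} self-adjoint as operators on the weighted Hilbert space (Proposition~\ref{prop:Schrod-map-composed-properties}); since we build the standard matrix using an ONB of that weighted space, the eigenvalues of $T$ are exactly the eigenvalues of $\cS_{f,\cE,\sigma}\circ\cE$, and no inconsistency arises. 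I expect the verification that the degenerate rank cases are correctly covered---i.e.~that restricting to $\Herm(A)$ and invoking Corollary~\ref{cor:Schroding-er-map-without-rank-constraints} preserves the eigenvalue identity when $\cE(\sigma)$ is not full rank---to be the only part requiring genuine care beyond routine accounting.
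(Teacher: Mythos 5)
Your proposal follows the paper's proof essentially step for step: correctness via the eigenvalue characterization (Lemma~\ref{lem:contraction-coeff-to-eig}, with Corollary~\ref{cor:Schroding-er-map-without-rank-constraints} covering the case where $\cE(\sigma)$ is not full rank), an ONB of $(\Herm(A),\langle\cdot,\cdot\rangle_{f,\sigma}^{\star})$ from Gram--Schmidt applied to $\{\sigma^{1/2}\}\cup\{G_{n,m}\}$, the standard matrix $T$, and subroutine counting with poly-time computability of $f$ absorbing the perspective-function evaluations. This is the same route and the final bound matches.

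Two local slips in your accounting are worth fixing, though neither changes the conclusion. First, your Gram--Schmidt count is internally inconsistent: step $j$ performs $j-1$ inner products, so the total is $\sum_{j}(j-1)=O(d_{A}^{4})$ inner products, not $O(d_{A}^{2})$; and your displayed product $O(d_{A}^{2})\cdot O(d_{A}^{2})\cdot O(d_{A}^{3})$ equals $O(d_{A}^{7})$, not $O(d_{A}^{9})$. The paper gets $O(d_{A}^{9})$ by costing each inner product at $O(d_{A}^{5})$ (Lemma~\ref{lem:efficiency-of-inner-product}, where each matrix entry in the $\sigma$-eigenbasis is computed via a separate matrix product) times $O(d_{A}^{4})$ evaluations; your $O(d_{A}^{3})$-per-inner-product model (a single conjugation into the eigenbasis) is actually tighter and would give $O(d_{A}^{7})$, which still sits inside the claimed upper bound, but you should commit to one cost model and make the arithmetic consistent. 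Second, your aside that $\cS_{f,\cE,\sigma}\circ\cE$ is "in general non-Hermitian-preserving" is false for $f\in\cM_{\text{St}}$: by Proposition~\ref{prop:symmetry-inducing-equivalences} every $\mbf{J}^{p}_{f,\sigma}$ is Hermitian-preserving for symmetry-inducing $f$, and $\cE$, $\cE^{\ast}$ are as well, so the composition preserves $\Herm(A)$ --- indeed this is precisely what licenses taking $\Herm(A)$ as the domain in the first place. The self-adjointness on the weighted space (Proposition~\ref{prop:Schrod-map-composed-properties}) that you invoke is also not needed for correctness of the eigenvalue extraction: the standard matrix of any linear map with respect to an ONB has the same spectrum as the map, self-adjoint or not.
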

\end{tcolorbox}

We remark that the scaling of this algorithm's complexity is clearly not ideal from a practical perspective. However, we note one particularly nice property of this algorithm: faithfulness. Given Proposition \ref{prop:faithfulness-of-chi-squared-GM}, if there exists a contraction coefficient that can give bounds on the mixing time, $\eta_{\chi^{2}_{GM}}(\cE,\pi)$ can. Given its possible practical utility, we highlight this as its own theorem before providing the proof of Theorem \ref{thm:contraction-coeff-computability}.
\begin{proof}[Proof of Theorem \ref{thm:contraction-coeff-computability}]
    (\textbf{Correctness}) \sloppy Given the assumptions stated in the theorem statement, by Lemma \ref{lem:contraction-coeff-to-eig} or Corollary \ref{cor:Schroding-er-map-without-rank-constraints}, we are interested in the second eigenvalue of $\cS_{f,\cE,\sigma} \circ \cE: (\Herm(A), \langle \cdot , \cdot \rangle_{f,\sigma}^{\star}) \to (\Herm(A), \langle \cdot , \cdot \rangle_{f,\sigma}^{\star})$, so we need to show Algorithm \ref{alg:contraction-coeff} computes this. At a high level, the algorithm computes an ONB for $(\Herm(A), \langle \cdot , \cdot \rangle_{f,\sigma}^{\star})$, computes the standard matrix $T$ of $\cS_{f,\cE,\sigma} \circ \cE$ on this basis, and then solves for the eigenvalues of $T$, which are the eigenvalues of $\cS_{f,\cE,\sigma} \circ \cE$. The following explains this in more detail.

    First, we explain the correctness of the call to GetONB, i.e. that it provides an ONB $(\Herm(A), \langle \cdot , \cdot \rangle_{f,\sigma}^{\star})$. As explained in Section \ref{sec:more-lin-alg-bg}, $\{v_{j}\}_{j \in [d_{A}^{2}]} = \{\sigma_{A}^{1/2}\} \cup \{G_{n,m}\}$ form a basis of $\Herm(A)$. As the Gram-Schmidt process (Proposition \ref{prop:GS-process}) takes a set of linearly independent vectors $\{v_{i}\}_{i}$ to a set of vectors that are an ONB of $\linspan(\{v_{i}\}_{i})$ with respect to the given inner product, the $\{w_{j}\}_{j \in [d_{A}^{2}]}$ in Algorithm \ref{alg:contraction-coeff} form an ONB for the inner product space $(\Herm(A), \langle \cdot , \cdot \rangle_{f,\sigma}^{\star})$.

    Second, as $\{w_{j}\}_{j \in [d_{A}^{2}]}$ form an ONB of $(\Herm(A), \langle \cdot , \cdot \rangle_{f,\sigma}^{\star})$, by definition of the $T_{i,j}$, the matrix $T \coloneq \sum_{i,j} T_{i,j} \vert i \rangle \langle j \vert$ is the standard matrix encoding of the linear transformation $\cS_{f,\cE,\sigma} \circ \cE$ on this basis. As the eigenvalues of $T$ correspond to the eigenvalues of $\cS_{f,\cE,\sigma}\circ \cE$, $\lambda_{2}(T) = \lambda_{2}(\cS_{f,\cE,\sigma}\circ \cE)$. This completes the proof of correctness.

    \textbf{(Efficiency)} We break the analysis into three pieces: pre-computation considerations, the complexity of GetONB computation, and computing the standard matrix. We then combine these to complete the analysis.
    
    We begin by some preliminary pre-computation considerations. We presume the $\{\lambda_{i}\}$ of $\sigma$ are pre-computed early in the algorithm. Such pre-computation has time complexity $O(d_{A}^{3})$ via the singular value decomposition (SVD) algorithm. Similarly, we assume the Kraus operators of $\cE$ are pre-computed. The time complexity of this depends on how $\cE$ is initially encoded, but it may be found from the Choi operator by getting the eigenvectors and converting them into matrices. Obtaining the eigenvectors is $O((d_{A}d_{B})^{3})$ by the SVD as the Choi operator is a $d_{A}d_{B} \times d_{A}d_{B}$ matrix. Converting a vector of length $d_{A}d_{B}$ into a $d_{B} \times d_{A}$ matrix is $O(d_{A}d_{B})$ and as there are at most $O(d_{A}d_{B})$ Kraus operators via this construction, it is at most time $O((d_{A}d_{B})^{2})$. Finally, to compute $\cE(X)$ for Hermitian $X$, one may use the Kraus operators. As there are at most $d_{A}d_{B}$ Kraus operators and matrix multiplication of a $d_{B} \times d_{A}$ matrix by a $d_{A} \times d_{A}$ matrix is $O(d_{A}^{2}d_{B})$, this is $O(d_{A}^{3}d_{B}^{2})$. The eigenvalues of $\cE(\sigma)$ may be stored. This completes analysis of the pre-computation.

    Next, we analyze the GetONB subroutine. Defining the $\{v_{j}\}$ takes $O(d_{A}^{2})O(d_{A}^{2}) = O(d_{A}^{4})$ steps. In each step of the Gram-Schmidt process, one has to compute the inner product $j-1$ times. By Lemma \ref{lem:efficiency-of-inner-product}, computing the inner product once is $O(d_{A}^{5})$. Thus, we have $0 \cdot O(d_{A}^{5}) + 1 \cdot O(d_{A}^{5}) + \hdots + (d^{2}-1) \cdot O(d_{A}^{5}) = \frac{1}{2}d_{A}^{2}(d_{A}^{2}-1)O(d_{A}^{5}) = O(d_{A}^{9})$.
    
    Now we analyze the GetContractionCoefficient subroutine. First, note that we already showed computing $\cE(e_{j})$ takes at most $O(d_{A}^{4}d_{B}^{3})$. Next, we have $\cS_{f,\cE,\sigma} = \mbf{J}_{f,\sigma} \circ \cE^{\ast} \circ \mbf{J}^{-1}_{f,\cE(\sigma)}$. As is well known, $\cE^{\ast}$ can be computed with the Kraus operators of $\cE$ and, by an identical argument as time complexity for $\cE(X)$, computing $\cE^{\ast}(Y)$ will take at most $O(d_{A}^{2}d_{B}^{3})$ time. By Lemma \ref{lem:efficiency-of-Jfpsigma}, computing $\mbf{J}_{f,\sigma}$ and $\mbf{J}_{f,\cE(\sigma)}^{-1}$ take time $O(d_{A}^{5})$ and $O(d_{B}^{5})$ respectively. Thus, computing $(\cS_{f,\cE,\sigma} \circ \cE)(e_{j})$ takes $O(d_{A}^{3}d_{B}^{2}) + O(d_{B}^{5}) + O(d_{A}^{2}d_{B}^{3}) + O(d_{A}^{5}) = O(\max\{d_{A}^{3}d^{2}_{B},d^{2}_{A}d_{B}^{3},d_{A}^{5},d_{B}^{5}\}) = O((\max\{d_{A},d_{B}\})^{5})$. By Lemma \ref{lem:efficiency-of-inner-product}, computing the inner product takes $O(d_{A}^{5})$, so we may conclude computing $\langle e_{i}, (\cS_{f,\cE,\sigma} \circ \cE)(e_{j}) \rangle_{f,\sigma}^{\star}$ still has complexity $O((\max\{d_{A},d_{B}\})^{5})$. This must be done $d_{A}^{2}$ times to compute $T$, so this obtains a complexity of $O(d_{A}^{2}(\max\{d_{A},d_{B}\})^{5})$ for computing $T$. Finally, computing the eigenvalues of $T$ via SVD has time complexity $O( (d_{A}^{2})^{3}) = O(d_{A}^{6})$. As $d_{A}^{6} \leq d_{A}^{7} \leq d_{A}^{2}(\max\{d_{A},d_{B}\})^{5}$, this time complexity is already dominated. 

    Putting the complexity analysis of precomputation, getONB, and computing $T$ together, we have 
    $$O(d_{A}^{4}d_{B}^{3}) + O(d_{A}^{9}) + O(d_{A}^{2}(\max\{d_{A},d_{B}\})^{5}) = O(\max\{d_{A}^{9},d_{A}^{2}d_{B}^{5}\}) \ , $$
   where we have used if $d_{A} \geq d_{B}$, then $d_{A}^{9}$ dominates and $d_{A}^{2}d_{B}^{5} > d_{A}^{4}d_{B}^{3}$ if and only if $d_{B} > d_{A}$ as a direct calculation will verify. This completes the proof.
\end{proof}
\begin{remark}
    Note that this shows for computing mixing times of a time-homogeneous Markov chain $\cE_{A \to A}$ where $d_{A} = d_{B}$, the Gram-Schmidt process is asymptotically the most time demanding aspect of the algorithm.
\end{remark}

\paragraph{Non-Standard Subroutines}
As the proof shows, Theorem \ref{thm:contraction-coeff-computability} requires two non-standard subroutines. The first is computing the inner product $\langle \cdot, \cdot \rangle_{f,\sigma}^{\star}$ and the second is calculating $\cS_{f,\cE,\sigma}$ on an input operator. Given the set $(f,\cE,\sigma)$, computing $\cS_{f,\cE,\sigma}=\mbf{J}_{f,\sigma} \circ \cE^{\ast} \circ \mbf{J}^{-1}_{f,\cE(\sigma)}$ reduces to being able to compute $\mbf{J}^{p}_{f,\sigma}(X)$ for $p \in \{-1,1\}$ and arbitrary $\sigma,X$. Here we provide algorithms and complexity analysis for these subroutines.

\begin{algorithm}
\caption{Computing $\langle X , Y \rangle_{\mbf{J}_{f,\sigma}^{p}}$}\label{alg:inner-product}
\begin{algorithmic}[1]
\Procedure{InnerProduct}{$f$,$p$,$X$,$Y$,$\sigma$}
    \State $(\{\lambda_{i}\},\{\ket{\nu_{i}}\}_{i \in [d]}) \leftarrow \text{SVD}(\sigma)$ \Comment{Singular Value Decomposition}
    \For{$(i,j) \in [d]^{\times 2}$}
        \State $Z^{\sigma}_{i,j} \leftarrow P_{f}(\lambda_{i},\lambda_{j})^{p}\bra{\nu_{i}}Y\ket{\nu_{j}}$ \Comment{$\mbf{J}_{f,\sigma}^{p}(Y)$ in Basis of $\sigma$}
        \State ${X^{\ast}}^{\sigma}_{i,j} \leftarrow \bra{\nu_{i}}X^{\ast}\ket{\nu_{j}}$ \Comment{$X^{\ast}$ in Basis of $\sigma$}
    \EndFor
    \State \textbf{return} $\Tr[{X^{\ast}}^{\sigma}Z^{\sigma}]$
\EndProcedure 
\end{algorithmic}
\end{algorithm}

\begin{lemma}\label{lem:efficiency-of-inner-product}
    Let $f$ be poly-time computable and $p \in \{-1,-1/2,1/2,1\}$. Let $A \cong \mbb{C}^{d}$,  $X,Y \in \Lin(A)$, and $\sigma \in \Density_{+}(A)$. Then $\langle A , B \rangle_{\mbf{J}_{f,\sigma}^{p}}$ has time complexity $O(d^{5})$ and thus is efficient to compute using Algorithm \ref{alg:inner-product}.
\end{lemma}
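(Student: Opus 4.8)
The plan is to verify that Algorithm~\ref{alg:inner-product} returns exactly the quantity $\langle X, Y\rangle_{\mbf{J}_{f,\sigma}^{p}} = \Tr[X^{\ast}\mbf{J}_{f,\sigma}^{p}(Y)]$ (up to the approximation of $f$) and then to bound its running time by counting the standard subroutines it invokes, using poly-time computability of $f$ to absorb the only non-arithmetic step.

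First I would establish correctness. Since $\sigma \in \Density_{+}(A)$ we may diagonalize $\sigma = \sum_{i} \lambda_{i}\dyad{\nu_{i}}$ with all $\lambda_{i} > 0$; by the Hadamard product representation for real powers \eqref{eq:general-Hadamard-prod-form}, in this eigenbasis one has $[\mbf{J}_{f,\sigma}^{p}(Y)]_{ij} = P_{f}(\lambda_{i},\lambda_{j})^{p}\bra{\nu_{i}}Y\ket{\nu_{j}}$, which is precisely the array $Z^{\sigma}$ built in the loop. The scalars $P_{f}(\lambda_{i},\lambda_{j})^{p}$ are well defined for every $p\in\{-1,-1/2,1/2,1\}$ because $P_{f}(\lambda_{i},\lambda_{j}) = \lambda_{j}f(\lambda_{i}/\lambda_{j}) > 0$ when $\lambda_{i},\lambda_{j} > 0$ and $f$ is a (positive-valued) operator monotone, so no pseudoinverse branch of \eqref{eq:perspective-function} is triggered. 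Writing the unitary $V \coloneq \sum_{i}\dyad{\nu_{i}}{i}$, the computed arrays are $X^{\ast\sigma} = V^{\ast}X^{\ast}V$ and $Z^{\sigma} = V^{\ast}\mbf{J}_{f,\sigma}^{p}(Y)V$, and since $VV^{\ast} = \Pi_{\supp(\sigma)} = \mbb{1}$ (full rank), cyclicity of the trace gives
\begin{align}
    \Tr[X^{\ast\sigma}Z^{\sigma}] = \Tr[(V^{\ast}X^{\ast}V)(V^{\ast}\mbf{J}_{f,\sigma}^{p}(Y)V)] = \Tr[X^{\ast}\mbf{J}_{f,\sigma}^{p}(Y)] = \langle X, Y\rangle_{\mbf{J}_{f,\sigma}^{p}} \ ,
\end{align}
which is the returned value. (Hermitian-preservation of $\mbf{J}_{f,\sigma}^{p}$ from Proposition~\ref{prop:symmetry-inducing-equivalences} and self-adjointness from Proposition~\ref{prop:J-operator-self-adjoint} may be invoked if one wants the returned number to be real when $X,Y$ are Hermitian, but are not needed for the identity itself.)

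Second I would bound the complexity by counting subroutines. The singular value decomposition of $\sigma$ is $O(d^{3})$. The double loop ranges over $d^{2}$ index pairs, and for each pair it forms the matrix elements $\bra{\nu_{i}}Y\ket{\nu_{j}}$ and $\bra{\nu_{i}}X^{\ast}\ket{\nu_{j}}$ and applies one evaluation of $P_{f}(\cdot,\cdot)^{p}$; bounding the arithmetic needed to produce the relevant entries inside each iteration by $O(d^{3})$ yields $O(d^{2})\cdot O(d^{3}) = O(d^{5})$, and the final trace of a product of two $d\times d$ matrices is $O(d^{3})$, which is dominated. The one non-arithmetic ingredient is evaluating $f$ inside $P_{f}$: because only polynomially many operations surround each evaluation, it suffices to compute $f$ to precision $2^{-\poly(d)}$, and poly-time computability of $f$ then contributes only a $\poly(\log d)$ overhead, leaving the polynomial count intact. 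Hence the total running time is $O(d^{5})$.

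The correctness direction is essentially bookkeeping once \eqref{eq:general-Hadamard-prod-form} is in hand, so I expect the genuine care to lie in the complexity accounting: making precise that poly-time computability of $f$ lets the perspective-function evaluations be folded into the subroutine count without numerical blowup, and confirming that the change of basis, the Hadamard scaling, and the trace together contribute at most $O(d^{5})$. Consistent with the stated scope of the section, I would explicitly flag that no full numerical-stability analysis is attempted, so the claim is one of polynomial-time efficiency in the subroutine-counting model rather than a bit-complexity theorem.
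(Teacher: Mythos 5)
Your proposal is correct and matches the paper's own proof in both structure and substance: correctness via the Hadamard product representation of $\mbf{J}_{f,\sigma}^{p}$ in the eigenbasis of $\sigma$, followed by the same subroutine count (SVD $O(d^{3})$, $d^{2}$ loop iterations at $O(d^{3})$ each, final trace $O(d^{3})$, with poly-time computability of $f$ absorbing the function evaluations) yielding $O(d^{5})$. Your added remarks — that $P_{f}(\lambda_{i},\lambda_{j})>0$ on a full-rank $\sigma$ so no pseudoinverse branch is needed, and the explicit basis-change unitary for the trace identity — are harmless refinements of the same argument, not a different route.
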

\begin{proof}
    (\textbf{Correctness}) We want to compute $\Tr[X^{\ast}\mbf{J}_{f,\sigma}^{p}(Y)]$. By \eqref{eq:Hadamard-prod-form}, if we can write $Y$ in the basis of $\sigma$, $Y^{\sigma}$, we can compute $Z \coloneq \mbf{J}_{f,\sigma}^{p}(Y)$ via $Y(i,j) \coloneq P_{f}(\lambda_{i},\lambda_{j})^{p}Y^{\sigma}(i,j)$. This is Line 4 of Algorithm \ref{alg:inner-product}. Then, by writing $X^{\ast}$ also in the basis of $\sigma$, ${X^{\ast}}^{\sigma}$, we can compute $\langle X, Y \rangle_{\mbf{J}^{p}_{f,\sigma}} = \Tr[{X^{\ast}}^{\sigma}Z^{\sigma}]$, which is what we return. \\

    \noindent (\textbf{Efficiency}) Note $d$ is the rank of $\sigma$. Thus the singular value decomposition has time complexity $O(d^{3})$. As $f$ is efficient and $x^{p}$ is efficient, computing $P_{f}(\lambda_{i},\lambda_{j})^{p}$ is efficient (and not a function of the dimension). Using that $\bra{\nu_{j}} Y \ket{\nu_{i}} = \Tr[\ket{\nu_{i}}\bra{\nu_{j}}Y]$, which is matrix multiplication followed by the trace, we can conclude computing $Z^{\sigma}_{i,j}$ and $X^{\sigma}_{i,j}$ is $O(d^{2}) + O(d^{3})$. As this is done $d^{2}$ times, computing $Z^{\sigma},X^{\sigma}$ is $O(d^{5})$. Finally, computing $\Tr[{X^{\sigma}}^{\ast}Z^{\sigma}]$ is $O(d^{3})$ for the same reason. Thus we have $O(d^{5})$, which is a polynomial in the dimension.
\end{proof}
\begin{remark}
    The above proof works for more general $p$, but we selected the cases interesting for this work.
\end{remark}

\begin{algorithm}
    \caption{Computing $\mbf{J}_{f,\sigma}^{p}(X)$}\label{alg:computing-Jfpsigma}
    \begin{algorithmic}[1]
    \Procedure{Jfpsigma}{$f$,$p$,$X$,$\sigma$} \Comment{Computing $\mbf{J}_{f,\sigma}^{p}(X)$}
    \State $(\{\lambda_{i}\},\{\ket{\nu_{i}}\}_{i \in [d]}) \leftarrow \text{SVD}(\sigma)$ \Comment{Singular Value Decomposition}
    \For{$(i,j) \in [d]^{\times 2}$}
        \State $Z^{\sigma}_{i,j} \leftarrow P_{f}(\lambda_{i},\lambda_{j})^{p}\bra{\nu_{i}}X\ket{\nu_{j}}$ \Comment{$J_{f,\sigma}^{p}(X)$ in Basis of $\sigma$} 
        \State $U_{i,j} \leftarrow \langle i \vert \nu_{j} \rangle$ \Comment{Unitary from $\sigma$ basis to computational basis.}
    \EndFor
    \State \textbf{return} $U Z^{\sigma} U^{\ast}$
\EndProcedure 
\end{algorithmic}
\end{algorithm}

\begin{lemma}\label{lem:efficiency-of-Jfpsigma}
    Let $f$ be poly-time computable and $p \in \{-1,-1/2,1/2,1\}$. Let $A \cong \mbb{C}^{d}$,  $X \in \Lin(A)$, and $\sigma \in \Density_{+}(A)$. Then $\mbf{J}_{f,\sigma}^{p}$ has time complexity $O(d^{5})$ and thus is efficient to compute using Algorithm \ref{alg:computing-Jfpsigma}.
\end{lemma}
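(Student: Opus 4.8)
The plan is to mirror the structure of the correctness and efficiency analysis already carried out for Algorithm \ref{alg:inner-product} in Lemma \ref{lem:efficiency-of-inner-product}, since computing $\mbf{J}_{f,\sigma}^{p}(X)$ uses the same Hadamard-product representation from \eqref{eq:Hadamard-prod-form} and \eqref{eq:general-Hadamard-prod-form}. First I would establish correctness. By \eqref{eq:general-Hadamard-prod-form}, in the eigenbasis $\{\ket{\nu_{i}}\}$ of $\sigma$ the action of $\mbf{J}_{f,\sigma}^{p}$ is entrywise multiplication by $P_{f}(\lambda_{i},\lambda_{j})^{p}$, i.e. $[\mbf{J}_{f,\sigma}^{p}(X)]^{\sigma}_{ij} = P_{f}(\lambda_{i},\lambda_{j})^{p}X^{\sigma}_{ij}$ where the superscript $\sigma$ denotes matrix entries taken in the eigenbasis of $\sigma$ (with negative powers interpreted in the pseudoinverse sense per \eqref{eq:J-pseudoinverse}). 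This is exactly what Line 4 computes for $Z^{\sigma}$. The remaining subtlety, which distinguishes this from Algorithm \ref{alg:inner-product}, is that here we must return an actual operator rather than a scalar trace, so the algorithm conjugates $Z^{\sigma}$ back into the computational basis via the change-of-basis unitary $U$ with $U_{ij} = \langle i \vert \nu_{j} \rangle$ built in Line 5. I would verify that $U Z^{\sigma} U^{\ast}$ recovers $\mbf{J}_{f,\sigma}^{p}(X)$ in the computational basis, since $U$ maps the eigenbasis of $\sigma$ to the computational basis.

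Next I would carry out the efficiency analysis, which is essentially identical in structure and bottleneck to Lemma \ref{lem:efficiency-of-inner-product}. The singular value decomposition of $\sigma$ costs $O(d^{3})$. Because $f$ is poly-time computable and $x^{p}$ is poly-time computable, evaluating each $P_{f}(\lambda_{i},\lambda_{j})^{p}$ is efficient and dimension-independent (absorbed into the precision overhead discussed in the Preliminaries). Each matrix element $\bra{\nu_{i}}X\ket{\nu_{j}} = \Tr[\ket{\nu_{i}}\bra{\nu_{j}}X]$ is a matrix multiplication followed by a trace, costing $O(d^{3})$, and populating the full $d \times d$ array of $Z^{\sigma}$ entries costs $O(d^{2}) \cdot O(d^{3}) = O(d^{5})$. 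Building $U$ costs $O(d^{2})$, and the final conjugation $U Z^{\sigma} U^{\ast}$ is two matrix multiplications at $O(d^{3})$. The dominant term is therefore $O(d^{5})$, establishing the claim.

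I expect no genuine mathematical obstacle; the only thing requiring a moment of care is confirming the basis-change direction is correct (that $U$, and not $U^{\ast}$, takes the $\sigma$-eigenbasis coordinates to computational-basis coordinates) and that the pseudoinverse convention in \eqref{eq:J-pseudoinverse} is respected when $p < 0$ and $\sigma$ is rank-deficient. Since the theorem hypothesis restricts to $\sigma \in \Density_{+}(A)$, the pseudoinverse issue does not actually arise here, so the proof is a direct transcription of the two steps above. A representative way to present it would be:
\begin{proof}
    (\textbf{Correctness}) We want to compute $\mbf{J}_{f,\sigma}^{p}(X)$. By \eqref{eq:general-Hadamard-prod-form}, writing $X$ in the eigenbasis of $\sigma$ as $X^{\sigma}_{ij} = \bra{\nu_{i}}X\ket{\nu_{j}}$, the operator $Z \coloneq \mbf{J}_{f,\sigma}^{p}(X)$ has matrix elements $Z^{\sigma}_{ij} = P_{f}(\lambda_{i},\lambda_{j})^{p}X^{\sigma}_{ij}$ in the same basis, which is Line 4. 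Defining $U_{ij} = \langle i \vert \nu_{j} \rangle$ as the unitary from the eigenbasis of $\sigma$ to the computational basis, $U Z^{\sigma} U^{\ast}$ is $\mbf{J}_{f,\sigma}^{p}(X)$ expressed in the computational basis, which is what is returned. \\

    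\noindent (\textbf{Efficiency}) The singular value decomposition costs $O(d^{3})$. As $f$ and $x^{p}$ are poly-time computable, computing $P_{f}(\lambda_{i},\lambda_{j})^{p}$ is efficient and dimension-independent. Using $\bra{\nu_{i}}X\ket{\nu_{j}} = \Tr[\ket{\nu_{i}}\bra{\nu_{j}}X]$, each entry $Z^{\sigma}_{ij}$ costs $O(d^{3})$, so computing the full array $Z^{\sigma}$ over the $d^{2}$ index pairs is $O(d^{5})$. Building $U$ is $O(d^{2})$ and the conjugation $U Z^{\sigma} U^{\ast}$ is $O(d^{3})$. The dominant term is $O(d^{5})$, a polynomial in the dimension.
\end{proof}
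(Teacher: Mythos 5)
Your proposal is correct and follows essentially the same route as the paper's own proof: compute the SVD of $\sigma$, apply the Hadamard-product action entrywise in the eigenbasis via $P_{f}(\lambda_{i},\lambda_{j})^{p}$, conjugate back with the change-of-basis unitary $U_{ij} = \langle i \vert \nu_{j} \rangle$, and tally the $d^{2}$ entry computations at $O(d^{3})$ each to get the dominant $O(d^{5})$ term. Your citation of \eqref{eq:general-Hadamard-prod-form} rather than \eqref{eq:Hadamard-prod-form} is if anything slightly more apt for the half-integer powers, and the negligible discrepancy in costing $U$ ($O(d^{2})$ versus the paper's $O(d)$ per entry) is immaterial since both are dominated.
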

\begin{proof}
    (\textbf{Correctness}) We want to compute $\mbf{J}_{f,\sigma}^{p}(X)$. Algorithm \ref{alg:computing-Jfpsigma} does this by computing it in the basis of $\sigma$ and then applying the unitary that will return it in the computational basis. By \eqref{eq:Hadamard-prod-form}, $Z^{\sigma} \coloneq \mbf{J}_{f,\sigma}^{p}(Y)$ via $Z(i,j) \coloneq P_{f}(\lambda_{i},\lambda_{j})^{p}\langle \nu_{i} \vert X \vert \nu_{j} \rangle$. This is Line 4 of Algorithm \ref{alg:computing-Jfpsigma}. The only issue is since we assume $\sigma,X$ are provided in the computational basis, we should return $Z^{\sigma}$ in the computational basis. The unitary that takes the basis of $\sigma$ to the computational basis has entries $U_{i,j} = \langle i \vert \nu_{j} \rangle$ as a standard calculation will verify. Thus, Line 5 of Algorithm \ref{alg:computing-Jfpsigma} constructs this unitary. Finally, this means returning $UZ^{\sigma}U^{\ast}$ returns $\mbf{J}^{p}_{f,\sigma}(X)$ in the computational proof as promised. \\

    \noindent (\textbf{Efficiency}) Note $d$ is the rank of $\sigma$. Thus the singular value decomposition has time complexity $O(d^{3})$. As $f$ is efficient and $x^{p}$ is efficient, computing $P_{f}(\lambda_{i},\lambda_{j})^{p}$ is efficient (and not a function of the dimension). Using that $\bra{\nu_{j}} X \ket{\nu_{i}} = \Tr[\ket{\nu_{i}}\bra{\nu_{j}}X]$, which is matrix multiplication followed by the trace, computing $Z^{\sigma}_{i,j}$ is $O(d^{2}) + O(d^{3}) = O(d^{3})$. Computing $U_{i,j}$ is $O(d)$.
    As both of these are done $d^{2}$ times, computing $\mbf{J}_{f,\sigma}^{p}(X)$ is $O(d^{5})$. Thus we have $O(d^{5})$, which is a polynomial in the dimension, and thus efficient.
\end{proof}
\subsection{Computing Quantum Maximal Correlation Coefficients \texorpdfstring{$\mu_{f}(A:B)_{\rho}, \mu_{f}^{\text{Lin}}(A:B)_{\rho}$}{}} 
In this subsection, we show one can compute the quantities $\mu_{f}(\rho_{AB})$ or $\mu^{\text{Lin}}_{f}(\rho_{AB})$ so long as we have guaranteed they are bounded above by one. The basic idea is that under these conditions, we can again reduce the problem to an eigenvalue problem, which we may solve with linear algebraic methods. The main bottleneck then becomes obtaining the maps whose eigenvalues we wish to compute. This seems to require an SDP, which limits the practicality of computing $\mu_{f}(\rho_{AB})$ as the dimension grows, but does not limit the efficiency.

The first step is to express maximal correlation coefficients in terms of eigenvalues. While we focused on the Schmidt coeffficient/singular value characterization of $f$-maximal correlation coefficients in Section \ref{sec:q-maximal-correlation-coeff}, we can of course convert this to an eigenvalue characterization.
\begin{proposition}\label{prop:max-corr-to-eig} Let all inner product spaces be implicitly with respect to the HS inner product.
    \begin{enumerate}
        \item For $f \in \cM_{St}$ such that $f_{GM} \leq f \leq f_{AM}$, $\mu_{f}(\rho_{AB})$ is the square root of the second eigenvalue of  $\Lambda_{\wt{\rho}_{f}}^{\ast} \circ \Lambda_{\wt{\rho}_{f}}: \Herm(A) \to \Herm(A)$.
        \item For $k \in [0,1]$, $\mu^{\text{Lin}}_{f_{k}}(\rho_{AB})$ is the square root of the second eigenvalue of $\Lambda_{\wt{\rho}_{k}}^{\ast} \circ \Lambda_{\wt{\rho}_{k}}: \Lin(A) \to \Lin(A)$.
    \end{enumerate}
\end{proposition}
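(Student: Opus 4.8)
The plan is to reduce both claims to the singular-value/Schmidt-coefficient characterizations already established, and then invoke the standard equivalence between singular values of a map $\Lambda$ and eigenvalues of $\Lambda^{\ast} \circ \Lambda$. The key observations are already in hand: by Lemma \ref{lem:k-correlation-coeff-Schmidt-coeff-characterization}, $\mu^{\text{Lin}}_{f_{k}}(\rho_{AB})$ equals the second singular value of the map $\Lambda_{\wt{\rho}_{k}}: \Lin(A) \to \Lin(B)$ (equivalently the second Schmidt coefficient of $\wt{\rho}_{k}$), and by Proposition \ref{prop:standard-operator-monotone-maximal-corr}, for $f_{GM} \leq f \leq f_{AM}$ the quantity $\mu_{f}(\rho_{AB})$ is the second Schmidt coefficient of $\wt{\rho}_{f}$ with respect to the Hermitian inner product spaces, which by Proposition \ref{prop:Schmidt-to-sing} equals the second singular value of $\Lambda_{\wt{\rho}_{f}}$.

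The core step is Proposition \ref{prop:Schmidt-to-sing}, which shows that for a linear map $\Lambda: X \to Y$ between Hilbert spaces, the calculation $(\Lambda^{\ast} \circ \Lambda)(M_{i}^{T}) = \lambda_{i}^{2} M_{i}^{T}$ holds, so that the squared Schmidt/singular coefficients $\lambda_{i}^{2}$ are precisely the eigenvalues of the self-adjoint positive operator $\Lambda^{\ast} \circ \Lambda$. Thus for Item 2 I would simply note that the singular values $s_{i}(\Lambda_{\wt{\rho}_{k}})$ are the square roots of the eigenvalues of $\Lambda_{\wt{\rho}_{k}}^{\ast} \circ \Lambda_{\wt{\rho}_{k}}: \Lin(A) \to \Lin(A)$ (the adjoint taken with respect to the HS inner product, as specified), and since $\mu^{\text{Lin}}_{f_{k}}$ is the second such singular value, it is the square root of the second eigenvalue of $\Lambda_{\wt{\rho}_{k}}^{\ast} \circ \Lambda_{\wt{\rho}_{k}}$. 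For Item 1 the argument is identical, but now the relevant map $\Lambda_{\wt{\rho}_{f}}$ acts between the real Hilbert spaces $(\Herm(A),\langle \cdot, \cdot \rangle)$ and $(\Herm(B),\langle \cdot, \cdot \rangle)$, so one must apply Proposition \ref{prop:Schmidt-to-sing} to the Hermitian operator spaces and conclude $\mu_{f}(\rho_{AB})$ is the square root of the second eigenvalue of $\Lambda_{\wt{\rho}_{f}}^{\ast} \circ \Lambda_{\wt{\rho}_{f}}: \Herm(A) \to \Herm(A)$.

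The only subtlety—and the main point requiring care rather than difficulty—is indexing. I need the first singular value of $\Lambda_{\wt{\rho}_{k}}$ (resp.\ $\Lambda_{\wt{\rho}_{f}}$) to be one so that the \emph{second} eigenvalue of $\Lambda^{\ast} \circ \Lambda$ indeed corresponds to $\mu_{f}$ and there is no off-by-one mismatch between the ordering of Schmidt coefficients and the ordering of eigenvalues. Lemma \ref{lem:map-norm-is-1-for-k-correlation-coeff} guarantees $\lambda_{1}(\wt{\rho}_{k}) = 1 = \Vert \Lambda_{\wt{\rho}_{k}} \Vert_{2 \to 2}$, and the analogous fact for $\wt{\rho}_{f}$ is part of Proposition \ref{prop:standard-operator-monotone-maximal-corr}; since the eigenvalues of $\Lambda^{\ast} \circ \Lambda$ are exactly the squares of the singular values of $\Lambda$ with the same ordering, the second-largest eigenvalue is $\mu_{f}^{2}$, and taking square roots completes the proof. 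The hard part, such as it is, is purely bookkeeping: making sure the adjoint, and hence $\Lambda^{\ast} \circ \Lambda$, is taken with respect to the correct (HS) inner product on the correct space ($\Herm$ versus $\Lin$), so that the spectral decomposition of $\Lambda^{\ast} \circ \Lambda$ lines up with the Schmidt decomposition invoked in the characterization lemmas.
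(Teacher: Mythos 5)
Your proposal is correct and follows essentially the same route as the paper's proof: both reduce Item 1 to the Schmidt-coefficient characterization in Proposition \ref{prop:standard-operator-monotone-maximal-corr} and Item 2 to Lemma \ref{lem:k-correlation-coeff-Schmidt-coeff-characterization}, pass to singular values via Proposition \ref{prop:Schmidt-to-sing}, and conclude by the definition of singular values as square roots of the eigenvalues of $\Lambda^{\ast}\circ\Lambda$ with respect to the HS inner product. Your extra bookkeeping about the top singular value being one (Lemma \ref{lem:map-norm-is-1-for-k-correlation-coeff}) is harmless but not needed, since both orderings are decreasing and the characterization lemmas already identify $\mu$ with the \emph{second} Schmidt coefficient.
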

\begin{proof}
    We begin with Item 1. By Proposition \ref{prop:standard-operator-monotone-maximal-corr}, $\mu_{f}(\rho_{AB})$ is the second Schmidt coefficient of $\wt{\rho}_{f}$ where both spaces are with respect to the HS inner product. As in the proof of Proposition \ref{prop:standard-operator-monotone-maximal-corr}, let $\Lambda_{\wt{\rho}_{f}}$ be the map such that $\wt{\rho}_{f} = \Omega_{\Lambda_{\wt{\rho}_{f}}}$. By Proposition \ref{prop:Schmidt-to-sing}, $\mu_{f}(\rho_{AB})$ is the second singular value of $\Lambda_{\wt{\rho}_{f}}$ (with respect to the HS inner product). By definition of singular values, this is the square root of the second eigenvalue of $\Lambda_{\wt{\rho}_{f}}^{\ast} \circ \Lambda_{\wt{\rho}_{f}}$ (with respect to the HS inner product). The proof of Item 2 is identical using Lemma \ref{lem:k-correlation-coeff-Schmidt-coeff-characterization}.
\end{proof}
As we have already have shown how to extract eigenvalues through the standard matrix, the limiting factor is computing these maps and their duals, e.g. $\Lambda_{\wt{\rho}_{f}},\Lambda_{\wt{\rho}_{f}}^{\ast}$. Computing the dual of a linear map $\cE$ is easy if one has a description of $\cE$ from which it is easy to extracts it Kraus operators (e.g. if the description is the Choi matrix) as then $\cE^{\ast}(Y) = \sum_{k} A_{k}^{\ast}YB_{k}$ if $\cE(X) = \sum_{k} A_{k}XB_{k}^{\ast}$. Thus, the limiting factor is in fact being able to compute the initial map.

Recall from \eqref{eq:rho-tilde-k-map} and the proof of Proposition \ref{prop:standard-operator-monotone-maximal-corr},
\begin{align}
\Lambda_{\wt{\rho}_{k}} \coloneq \cT^{-1}_{\rho_{B},k} \circ \Lambda_{\rho} \circ \cT^{-1}_{\ol{\rho}_{A},1-k} \quad \Lambda_{\wt{\rho}_{f}} = \mbf{J}_{f,\rho_{B}}^{-1/2} \circ \Lambda_{\rho} \circ (\mbf{J}_{f,\rho_{A}}^{-1/2})^{T}
\end{align} 
 where recall $\cE^{T}$ defines the `transposed' map,\footnote{As a reminder, the transposed map may be defined via the Kraus decomposition of the initial map: $\cE^{T}(Y) \coloneq \sum_{k} A^{T}_{k}Y\ol{B}_{k}$ if $\cE = \sum_{k} A_{k}XB_{k}^{\ast}$.} so it can be obtained from the Kraus decomposition of $\mbf{J}_{f,\rho_{A}}^{-1/2}$. It follows that given $\rho_{AB}$, one may compute the relevant $\cT^{-1}_{\tau,k}$, $\mbf{J}^{-1/2}_{f,\tau}$ maps. Therefore, the remaining limitation is being able to calculate $\Lambda_{\rho}$. The following lemma shows we can do this.
 \begin{lemma}\label{lem:channel-extractor}
     For $\rho_{AB}$, $\Lambda_{\rho} = \cN_{A \to B} \circ \Gamma_{\rho_{A}^{T}}$  where $\Gamma_{\rho_{A}}(X) = \rho_{A}^{1/2}X\rho_{A}^{1/2}$ and $\cN$ is the unique channel in Proposition \ref{prop:every-joint-state-is-a-degraded-purif}. Moreover, $\cN$ is the optimizer for the SDP
     \begin{equation}\label{eq:channel-extractor-SDP}
     \begin{aligned}
         \min & \; t \\
         & \; -t\mbb{1} \leq \rho_{AB} - \Tr[\Phi^{+}_{A\ol{A}} \otimes X_{AB}(\psi_{\rho}^{T}) \otimes \mbb{1}_{\ol{A}B})] \leq t\mbb{1} \\
         & \Tr_{B}[X] = \mbb{1}_{A'} \\
         & X_{A'B} \geq 0 \ ,
     \end{aligned}
     \end{equation}
     where the transpose in the SDP is with respect to the joint space $AA'$.
 \end{lemma}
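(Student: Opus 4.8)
The plan is to verify the two claims in the lemma separately: first the algebraic identity $\Lambda_\rho = \cN_{A \to B} \circ \Gamma_{\rho_A^T}$, and then the claim that the unique channel $\cN$ of Proposition~\ref{prop:every-joint-state-is-a-degraded-purif} is the optimizer of the semidefinite program \eqref{eq:channel-extractor-SDP}. For the first part, recall from \eqref{eq:identification-of-rhoAB-via-GM} and Proposition~\ref{prop:every-joint-state-is-a-degraded-purif} that there exists a unique channel $\cN_{A \to B}$ with $\rho_{AB} = (\id_A \otimes \cN)(\psi^\rho_{AA'})$, where $\psi^\rho = \rho_A^{1/2}\Phi^+ \rho_A^{1/2}$ is the canonical purification. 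Taking Choi operators, this reads $\Omega_\rho = \rho_{AB} = \Omega_{\cN \circ \mbf{J}_{f_{GM},\rho_A}}$ in the eigenbasis of $\rho_A$. Since $\mbf{J}_{f_{GM},\rho_A} = \Gamma_{\rho_A}$ by the geometric-mean identity $\mbf{J}_{f_{GM},\rho}[X]=\rho^{1/2}X\rho^{1/2}$ used throughout (e.g.~in Remark~\ref{rem:QSOT-from-NC-Prob}), we get $\Lambda_\rho = \cN \circ \mbf{J}_{f_{GM},\rho_A}$. The appearance of the transpose $\rho_A^T$ rather than $\rho_A$ is exactly the basis bookkeeping of the Choi isomorphism, so I would track the transpose carefully through \eqref{eq:action-of-Choi}, noting as in \eqref{eq:choi-of-GM-J-operator} that $\rho_A = \rho_A^T$ holds when the transpose is taken in the eigenbasis of $\rho_A$, reconciling the statement.

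For the second part, I would argue that the feasible set of \eqref{eq:channel-extractor-SDP} consists precisely of Choi operators $X_{A'B}$ of quantum channels $\cM_{A' \to B}$ (the constraints $X \geq 0$ and $\Tr_B[X] = \mbb{1}_{A'}$ are exactly the CPTP conditions via the Choi isomorphism), and that the quantity being squeezed between $-t\mbb{1}$ and $t\mbb{1}$ is $\rho_{AB}$ minus the operator obtained by acting with $\cM$ on the canonical purification $\psi_\rho^T$. The key observation is that minimizing $t$ drives this difference to zero, and by Proposition~\ref{prop:every-joint-state-is-a-degraded-purif} there is a unique channel $\cN$ achieving $(\id \otimes \cN)(\psi_\rho) = \rho_{AB}$ exactly, i.e.~attaining $t = 0$. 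Hence the optimal value is $t = 0$ and the unique optimizer is the Choi operator of $\cN$. I would verify that the inner expression $\Tr[\Phi^+_{A\ol A} \otimes X_{A'B}(\psi_\rho^T \otimes \mbb{1}_{\ol A B})]$ indeed evaluates to $(\id_A \otimes \cM)(\psi_\rho)$ under the chosen conventions, which is a direct computation using the action of a map in terms of its Choi operator as recorded after \eqref{eq:action-of-Choi}.

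The main obstacle I anticipate is purely notational: matching the index conventions in the SDP \eqref{eq:channel-extractor-SDP} to the Choi-isomorphism conventions fixed in \eqref{eq:action-of-Choi}, in particular the placement of the transpose (on $\rho_A$, on $\psi_\rho$, and in the constraint "with respect to the joint space $AA'$"). The substantive content—existence and uniqueness of $\cN$, and the CPTP characterization of the feasible set—is entirely supplied by Proposition~\ref{prop:every-joint-state-is-a-degraded-purif} and the standard Choi-isomorphism facts in Section~\ref{sec:functional-analysis-tools}, so no genuinely new technical machinery is needed. The only mild subtlety is confirming that the SDP actually attains its infimum (that $t=0$ is feasible and hence optimal), which follows immediately once the exact decomposition $\rho_{AB} = (\id \otimes \cN)(\psi_\rho)$ is substituted into the operator-inequality constraint to make the bracketed difference vanish identically.
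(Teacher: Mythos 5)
Your proposal is correct and follows essentially the same route as the paper: the first identity comes from the uniqueness statement of Proposition \ref{prop:every-joint-state-is-a-degraded-purif} together with $\psi_\rho = \Gamma_{\rho_A^T}(\Phi^+)$, and the SDP is recognized as $\min\{\Vert \rho_{AB} - (\id_A \otimes \cE)(\psi_\rho)\Vert_\infty : \cE \in \Channel(A,B)\}$ over Choi operators, whose unique zero-value optimizer is $\cN$. The only cosmetic difference is that you reconcile the transpose by fixing the eigenbasis of $\rho_A$ (so $\rho_A = \rho_A^T$), whereas the paper derives $\Gamma_{\rho_A^T}$ basis-independently via the transpose trick $(M \otimes \mbb{1})\ket{\Phi^+} = (\mbb{1} \otimes M^T)\ket{\Phi^+}$, which is why the transpose appears explicitly in the lemma statement; both resolutions are valid.
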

 \begin{proof}
     By Proposition \ref{prop:every-joint-state-is-a-degraded-purif}, there is a unique channel $\cN_{A \to B}$ such that $(\id_{A} \otimes \cN)(\psi_{\rho}) = \rho_{AB}$. Now noting $\psi_{\rho} \coloneq \rho_{A}^{1/2}\Phi^{+}_{A\ol{A}}\rho_{A}^{1/2} = \Gamma_{\rho_{A}^{T}}(\Phi^{+})$ where we used the transpose trick and that the square root operation commutes with the transpose. Thus we may conclude $\Lambda_{\rho} = \cN \circ \Gamma_{\rho_{A}^{T}}$. This completes the first point. To obtain the second point, note for any channel $\cE_{A \to B}$, we have some state $\sigma^{\cE}_{AB} = (\id_{A} \otimes \cE)(\psi_{\rho})$. As Proposition \ref{prop:every-joint-state-is-a-degraded-purif} tells us $\cN$ exists and is unique, $\sigma^{\cE}_{AB} = \rho_{AB}$ if and only if $\cE = \cN$. Thus, minimizing any definite measure of difference between $\sigma^{\cE}_{AB}$ and $\rho_{AB}$ over all channels $\cE$ will result in the optimizer $\cN$. The SDP in \eqref{eq:channel-extractor-SDP} is $\min\{ \Vert \rho_{AB} - (\id_{A} \otimes \cE)(\psi_{\rho}) \Vert_{\infty} : \cE \in \Channel(A,B) \} $ as follows:
     \begin{equation}
     \begin{aligned}
         &\min\{ \Vert \rho_{AB} - (\id_{A} \otimes \cE)(\psi_{\rho}) \Vert_{\infty} : \cE \in \Channel(A',B) \} \\
         \iff &\min\{t: -t\mbb{1} \leq \rho_{AB} - (\id_{A} \otimes \cE)(\psi_{\rho}) \leq t\mbb{1} : \cE \in \Channel(A',B) \} \\
         \iff &\min\{t: -t\mbb{1} \leq \rho_{AB} - \Tr_{AA'}[\Omega_{\id \otimes \cE}(\psi_{\rho}^{T} \otimes \mbb{1}_{})] \leq t\mbb{1} : \cE \in \Channel(A',B) \} \\
         \iff &\min\{t: -t\mbb{1} \leq \rho_{AB} - \Tr_{AA'}[\Phi^{+}_{A\ol{A}} \otimes \Omega_{\cE}(\psi_{\rho}^{T} \otimes \mbb{1}_{\ol{A}B})] \leq t\mbb{1} : \cE \in \Channel(A',B) \} \\
         \iff &\min\{t: -t\mbb{1} \leq \rho_{AB} - \Tr_{AA'}[\Phi^{+}_{A\ol{A}} \otimes X(\psi_{\rho}^{T} \otimes \mbb{1}_{\ol{A}B})] \leq t\mbb{1} : X_{AB} \geq 0 \, , \, \Tr_{B}[X] \geq 0 \} \ ,
     \end{aligned}
     \end{equation}
     where the first equivalence is because the difference is Hermitian, so its infinity norm is the largest magnitude eigenvalue, the second is \eqref{eq:action-of-Choi}, the third is that $\Phi^{+}_{(A'A)(\ol{A}'\ol{A})} = \Phi^{+}_{A'\ol{A}'} \otimes \Phi^{+}_{A\ol{A}}$, and the fourth is the necessary and sufficient conditions of the Choi operator of a quantum channel. Thus, we have shown that we have an SDP whose unique minimizer is $\cN$. This completes the proof.
 \end{proof}

\begin{remark}
    Note that the need for this SDP is a fully quantum aspect. If $\rho_{AB} = p_{XY}$, then Bayes rule would be enough to determine the relevant channel $W_{Y \vert X}$, but it also wouldn't be necessary to find as one can simply compute the singular values of the matrix $p_{X}^{-1/2}p_{XY}p_{Y}^{-1/2}$ directly to determine $\mu(p_{XY})$.
\end{remark}

With Lemma \ref{lem:channel-extractor}, we may conclude the $\mu_{f}(\rho_{AB})$ can be calculated by appealing to an SDP subroutine if given $\rho_{AB}$. We give one such approach in the following proof, but there may be other approaches.
\begin{tcolorbox}[width=\linewidth, sharp corners=all, colback=white!95!black, boxrule=0pt,frame hidden]
\begin{theorem}\label{thm:f-max-corr-computability}
    Let $f$ be a normalized, operator monotone. If $f$ is poly-time computable, given $\rho_{AB} \in \Density(AB)$, $\mu_{f}(A:B)_{\rho}$ for $f \in \cM_{\text{St}}$ and $\mu^{\text{Lin}}_{f_{k}}(A:B)_{\rho}$ for $k \in [0,1]$ are efficient to compute. 
\end{theorem}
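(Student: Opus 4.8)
The plan is to reduce the computation of each correlation coefficient to an eigenvalue problem that can be solved via the standard-matrix method, exactly as was done for the contraction coefficient in Theorem~\ref{thm:contraction-coeff-computability}, and then to bound the cost of obtaining the relevant maps. First I would invoke Proposition~\ref{prop:max-corr-to-eig}, which tells us that $\mu_{f}(A:B)_{\rho}$ (for $f \in \cM_{\text{St}}$ with $f_{GM} \leq f \leq f_{AM}$) is the square root of the second eigenvalue of $\Lambda_{\wt{\rho}_{f}}^{\ast} \circ \Lambda_{\wt{\rho}_{f}}$ on $(\Herm(A), \langle \cdot , \cdot \rangle)$, and that $\mu^{\text{Lin}}_{f_{k}}(A:B)_{\rho}$ is the square root of the second eigenvalue of $\Lambda_{\wt{\rho}_{k}}^{\ast} \circ \Lambda_{\wt{\rho}_{k}}$ on $(\Lin(A), \langle \cdot , \cdot \rangle)$. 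Since both characterizations use the ordinary Hilbert--Schmidt inner product, I can reuse the generalized Gell-Mann basis together with $\sigma^{1/2}$ (or, for the linear case, a completion to a basis of $\Lin(A)$) and the Gram--Schmidt process (Proposition~\ref{prop:GS-process}) to get an orthonormal basis, then form the standard matrix of the composed map and extract its eigenvalues. This part mirrors Algorithm~\ref{alg:contraction-coeff} and requires no new ideas.

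The substance of the proof is showing the composed maps $\Lambda_{\wt{\rho}_{f}}^{\ast} \circ \Lambda_{\wt{\rho}_{f}}$ and $\Lambda_{\wt{\rho}_{k}}^{\ast} \circ \Lambda_{\wt{\rho}_{k}}$ can be evaluated efficiently on an input operator. By \eqref{eq:rho-tilde-k-map} and the display in Proposition~\ref{prop:standard-operator-monotone-maximal-corr}, we have $\Lambda_{\wt{\rho}_{k}} = \cT^{-1}_{\rho_{B},k} \circ \Lambda_{\rho} \circ \cT^{-1}_{\ol{\rho}_{A},1-k}$ and $\Lambda_{\wt{\rho}_{f}} = \mbf{J}_{f,\rho_{B}}^{-1/2} \circ \Lambda_{\rho} \circ (\mbf{J}_{f,\rho_{A}}^{-1/2})^{T}$. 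The outer maps $\cT^{-1}_{\tau,k}$ and $\mbf{J}_{f,\tau}^{-1/2}$ reduce to the $\mbf{J}_{f,\sigma}^{p}$ subroutine (Lemma~\ref{lem:efficiency-of-Jfpsigma}), which is $O(d^{5})$ and poly-time since $f$ is poly-time computable; their adjoints and transposed maps are obtained from Kraus operators, which are cheap to extract. The one genuinely new ingredient is computing $\Lambda_{\rho}$, the map whose Choi operator is $\rho_{AB}$. I would appeal to Lemma~\ref{lem:channel-extractor}, which writes $\Lambda_{\rho} = \cN_{A \to B} \circ \Gamma_{\rho_{A}^{T}}$ and characterizes $\cN$ as the unique optimizer of the semidefinite program \eqref{eq:channel-extractor-SDP}. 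Feeding this SDP to a standard SDP solver yields $\cN$ (hence its Kraus operators), after which all the ingredient maps and their adjoints are available.

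With $\Lambda_{\rho}$ in hand, evaluating the full composed map on a basis element is a bounded number of applications of efficient subroutines (the $\mbf{J}$-type maps, the channel $\cN$, and multiplications by $\rho_A^{1/2}$), so each standard-matrix entry $\langle e_i, (\Lambda^{\ast}\circ\Lambda)(e_j)\rangle$ is computed in polynomial time; there are $O(d_A^4)$ such entries and the final eigenvalue extraction via SVD is polynomial. The efficiency hypotheses needed are exactly that $\rho_A$, $\rho_B$ are full rank on their supports (so the negative powers $\mbf{J}_{f,\rho_A}^{-1/2}$ etc.\ are well defined on the relevant subspaces) and that $f$ is poly-time computable; the boundedness $\mu \leq 1$ that the theorem statement leans on is already guaranteed for the stated ranges of $f$ and $k$ by Item~2 of Theorem~\ref{thm:extreme-values-summary}, Proposition~\ref{prop:standard-operator-monotone-maximal-corr}, and Lemma~\ref{lem:map-norm-is-1-for-k-correlation-coeff}, which is precisely what makes the second eigenvalue (rather than some supremum that could diverge) the right target.

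The main obstacle, and the point I would flag explicitly, is the reliance on the SDP of Lemma~\ref{lem:channel-extractor} to recover $\Lambda_\rho$. Unlike the contraction-coefficient algorithm, there is no closed-form way to read off $\cN$ from $\rho_{AB}$ without this optimization step, so the overall procedure is efficient only in the complexity-theoretic sense that SDPs are efficient \cite{Nemirovski-2004a}, and I would be careful to phrase the conclusion as poly-time computability rather than practicality. I expect the cleanest write-up to separate the argument into (i) the eigenvalue reduction via Proposition~\ref{prop:max-corr-to-eig}, (ii) efficient evaluation of the constituent maps via Lemmas~\ref{lem:efficiency-of-Jfpsigma} and \ref{lem:channel-extractor}, and (iii) the standard-matrix-plus-Gram--Schmidt routine whose correctness and polynomial cost are already established in the proof of Theorem~\ref{thm:contraction-coeff-computability}, so that the present proof can largely cite the earlier machinery and focus on the SDP subroutine as the sole new element.
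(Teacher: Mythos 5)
Your proposal matches the paper's proof essentially step for step: the eigenvalue reduction via Proposition~\ref{prop:max-corr-to-eig}, the decompositions $\Lambda_{\wt{\rho}_{f}} = \mbf{J}_{f,\rho_{B}}^{-1/2} \circ \Lambda_{\rho} \circ (\mbf{J}_{f,\rho_{A}}^{-1/2})^{T}$ and $\Lambda_{\wt{\rho}_{k}} = \cT^{-1}_{\rho_{B},k} \circ \Lambda_{\rho} \circ \cT^{-1}_{\ol{\rho}_{A},1-k}$, Kraus-level composition using the subroutine of Lemma~\ref{lem:efficiency-of-Jfpsigma}, the SDP of Lemma~\ref{lem:channel-extractor} as the single genuinely new ingredient for recovering $\Lambda_{\rho}$, and the standard-matrix eigenvalue extraction are all exactly the paper's route. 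The only (harmless) deviation is that you invoke Gram--Schmidt to orthonormalize the Gell-Mann-plus-$\rho_{A}^{1/2}$ set, whereas the paper uses it directly as an ONB for the Hilbert--Schmidt inner product; your extra step costs only polynomial overhead, so the conclusion is unaffected.
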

\end{tcolorbox}
\begin{proof}
    We explain the case where one is given $(\cE_{A \to B},\rho_{A})$. If one is instead provided $\rho_{AB}$, one must extract $\cE$ using Lemma \ref{lem:channel-extractor}.
    
    We begin with how to compute $\mu_{f}(\rho_{AB})$ and then explain the modifications for $\mu_{f_{k}}^{\text{Lin}}(\rho_{AB})$. By Proposition \ref{prop:max-corr-to-eig}, our goal is to compute the square root of the second eigenvalue of $\Lambda^{\ast}_{\wt{\rho}_{f}} \circ \Lambda_{\wt{\rho}_{f}}$ with respect to the HS inner product. Thus, we need to compute the standard matrix of this linear map on a basis of $\Herm(A)$ with respect to the HS inner product and obtain its second largest eigenvalue. The set $\{\pi_{A}^{1/2}\} \cup \{G_{n,m}\}_{n,m}$ forms an ONB of $\Herm(A)$ with respect to HS inner product, so we do not need to appeal to the Gram-Schmidt process. Thus all we need to do is be able to compute $\Lambda^{\ast}_{\wt{\rho}_{f}} \circ \Lambda_{\wt{\rho}_{f}}$ on this basis to build the standard matrix. By \eqref{eq:Lambda-wt-rho-f-structure}, $\Lambda_{\wt{\rho}_{f}} = \mbf{J}_{f,\rho_{B}}^{-1/2} \circ \cN \circ \Gamma_{\rho_{A}^{T}} \circ (\mbf{J}_{f,\rho_{A}}^{-1/2})^{T}$, so it suffices to obtain the Kraus operators of each of these maps individually and compose them. Therefore, we show how to compute these maps individually.
    
    One may compute the Choi operators $\Omega_{\mbf{J}_{f,\rho_{A}}^{-1/2}},\Omega_{\mbf{J}_{f,\rho_{B}}^{-1/2}}$  using \eqref{eq:Hadamard-prod-form}. As \eqref{eq:Hadamard-prod-form} involves computing the given map $d_{A}^{2}$ times, by appealing to Lemma \ref{lem:efficiency-of-Jfpsigma}, computing $\Omega_{\mbf{J}_{f,\rho_{A}}^{-1/2}},\Omega_{\mbf{J}_{f,\rho_{B}}^{-1/2}}$ may be computed in time $O(d_{A}^{7})$. We may then extract the Kraus operators by taking the singular value decomposition (SVD) and reshaping the matrices. This has time complexity $O(d_{A}^{6}) + O(d_{A}^{4}) = O(d_{A}^{6})$. Therefore, we may obtain the Kraus decompositions efficiently in time $O(d^{7}_{A})$. 
     
    Now let $(\{A_{x}\}_{x \in \cX},\{B_{x}\}_{x \in \cX})$, $(\{C_{y}\}_{y \in \cY},\{D_{y}\}_{y \in \cY})$, $(\{E_{z}\}_{z \in \cZ},\{F_{z}\}_{z \in \cZ})$ denote the Kraus representations ${\cN}$, $\mbf{J}_{f,\rho_{A}}^{-1/2}$, and $\mbf{J}_{f,\rho_{B}}^{-1/2}$ respectively.\footnote{Note $A_{x} = B_{x}$ for all $x \in \cX$ as $\cN$ is completely-positive.} We then may define the Kraus operators of $\Lambda_{\wt{\rho}_{f}}$:
    \begin{align}
         G_{x,y,z} \leftarrow  E_{z}A_{x}{\rho_{A}^{T}}^{1/2}C_{y}^{T} \quad \quad L_{x,y,z} \leftarrow D^{T}_{y}{\rho_{A}^{T}}^{1/2}B_{x}F_{z} \quad \forall (x,y,z) \in \cX \times \cY \times \cZ \, .
    \end{align}
    It follows the Kraus representation of $\Lambda_{\wt{\rho}_{f}}^{\ast} \circ \Lambda_{\wt{\rho}_{f}}$ is given by
    \begin{align}
        \left(\{M_{w}\}_{w \in \cW} \coloneq \{G_{x_{2},y_{2},z_{2}}^{\ast}G_{x_{1},y_{1},z_{1}}\}_{\substack{(x_{1},y_{1},z_{1}) \\ (x_{2},y_{2},z_{2})}} \quad \{R_{w}\}_{w \in \cW}\coloneq \{L_{x_{2},y_{2},z_{2}}^{\ast}L_{x_{1},y_{1},z_{1}}\}_{\substack{(x_{1},y_{1},z_{1}) \\ (x_{2},y_{2},z_{2})}} \right) \ .
    \end{align}
    To bound the number of Kraus operators, note $\cX$ depends on the Kraus representation obtained/given, but without loss of generality $\vert \cX \vert \leq d_{A}d_{B}$ by obtaining the canonical Kraus representation as we have done in the other cases. Thus, we can take the bound $\vert \cX \times \cY \times \cZ \vert \leq  d_{A}d_{B} \cdot d_{A}^{2} \cdot d_{B}^{2} = d_{A}^{3}d_{B}^{3}$ without loss of generality. Thus, this method results in at most $d_{A}^{6}d_{B}^{6}$ Kraus operators for $\Lambda_{\wt{\rho}_{f}}^{\ast} \circ \Lambda_{\wt{\rho}_{f}}$.\footnote{We note this is clearly suboptimal as its canonical Kraus representation would have at most $d_{A}^{2}$ Kraus operators.} Thus, one may compute the action of this map on the ONB of $\Herm(A)$ to obtain its standard matrix and solve for its eigenvalues. By Proposition \ref{prop:max-corr-to-eig}, taking the square root of the second largest eigenvalue obtains $\mu_{f}(\rho_{AB})$. This completes the proof for $\mu_{f}(\rho_{AB})$. 
    
    For the case of $\mu^{\text{Lin}}_{f_{k}}(\rho_{AB})$, by Proposition \ref{prop:max-corr-to-eig}, we need to compute the second eigenvalue of $\Lambda^{\ast}_{\wt{\rho}_{k}} \circ \Lambda_{\wt{\rho}_{k}}: \Lin(A) \to \Lin(A)$. One can efficiently generate an ONB of $(\Lin(A),\langle \cdot, \cdot \rangle)$, e.g. the generalized Gell Man matrices together with the square root of any density matrix will work as before. Thus, one just needs to be able to compute $\Lambda^{\ast}_{\wt{\rho}_{k}} \circ \Lambda_{\wt{\rho}_{k}}$. From Lemma \ref{lem:norm-bound}, 
    $$\Lambda_{\wt{\rho}_{k}} = \cT^{-1}_{\rho_{B},k} \circ \cN_{A \to B} \circ \Gamma_{\rho_{A}^{T}} \circ \cT^{-1}_{\ol{\rho}_{A},1-k} \ , $$
    where we remind the reader that $\cT^{-1}_{\tau,\gamma}(X) = \tau^{-\gamma/2}X\tau^{-(1-\gamma)/2}$. Letting $(A_{k},B_{k})$ denote the Kraus operators of $\cN$, a direct calculation will determine that the left and right Kraus operators of $\Lambda_{\wt{\rho}_{k}}$, denoted $G_{k}$ and $L_{k}$ respectively, are therefore
    \begin{align}
        G_{k} = \rho_{B}^{-k/2}A_{k}\sqrt{\rho_{A}^{T}} \ol{\rho}_{A}^{-(1-k)/2} \quad L_{k} = \rho_{B}^{-(1-k)/2}B_{k}\sqrt{\rho_{A}^{T}}^{\ast}{\ol{\rho}_{A}^{-k/2}}^{\ast}
    \end{align}
    
    The rest of the proof follows the same idea as the previous case. This completes the proof.
\end{proof}

\begin{remark} In the case $f = f_{GM}$, an alternative method is to get $\cN$ via Lemma \ref{lem:channel-extractor} and then apply Theorem \ref{thm:contraction-coeff-computability} for $\eta_{GM}(\cN,\rho_{A})$ given Corollary \ref{cor:contraction-for-sandwiched-case}.
\end{remark}

\subsection{On Numerical Stability and Practicality}\label{subsec:num-stability}
In this section, we have established the efficiency of computing the contraction coefficients and correlation coefficients. However, we have omitted the more practical consideration of error propagation and numerical stability. For accessibility, an initial implementation of the algorithms described in the programming language Julia \cite{bezanson2017julia} is publicly provided at \href{https://github.com/qit-george/OperatorMonotoneCorrelationTools}{this GitHub repository}. However, we have found these straightforward implementations to be too numerically unstable for random states and choices of channels for the authors to be comfortable drawing further insights from the numerical implementation. We suggest two causes of this error: the Gram-Schmidt process and the use of Kraus operators.

With regards to the Gram-Schmidt process, it is known that when the inner product space is $\mbb{R}^{n}$ equipped with the Euclidean inner product, the Gram-Schmidt process propagates error in a way that can be somewhat improved using the `modified Gram-Schmidt process' \cite{bjorck1967solving}. However, other methods are known to outperform even the modified Gram-Schmidt process in terms of numerical stability under these conditions (See \cite{stewart2022numerical} for further information on numerical methods). As our algorithms go beyond the traditional use case of the Euclidean inner product on $\mbb{R}^{n}$ as we need to consider more general inner product spaces on linear operators, it is unclear how we can currently remove the use of the (modified) Gram-Schmidt process. Furthermore, as our algorithms deviate from these previous analyses by using more general inner product spaces, a full study of the numerical stability cannot be directly lifted from previous work. 

Second, our algorithms above require extracting the Kraus operators from the Choi representation. This requires using the algorithm for SVD, which itself has numerical error,\footnote{We remark one reason the first author chose the programming language Julia is that it is open access. However, we found that Julia's linear algebra package seems to, at the time of writing this paper, be less exact at the singular value decomposition compared to Matlab at least in certain simple cases. As such, from a practical standpoint, it is possible one can somewhat improve the stability simply by switching to Matlab.} and then computing the action of the map, which will have further numerical error. It follows this likely propagates more error than necessary. A possible alternative approach would be to work solely with Choi operators. 

Given the scope of this work, we leave the numerical error analysis and development of a robust algorithmic implementation to future work.

\section{Conclusion and Outlook}\label{sec:conclusion}

In this work we have lifted the majority of the classical information-theoretic framework of the maximal correlation coefficient and its relation to the $\chi^{2}$-divergence's input-dependent contraction coefficient to the quantum setting. This has resolved a variety of gaps in our understanding of the quantum generalizations of these two concepts. In particular, we have:
\begin{enumerate}[itemsep=0pt]
\item determined the relevant quantum maximal correlation coefficients for one-shot and asymptotic distillation of a single bit of correlation (Theorems \ref{thm:extreme-values-summary} and \ref{thm:asymptotic-data});
\item established a family of strong monotones for converting quantum states under local operations (Theorem \ref{thm:k-correlation-nec-for-local-processing}) as well as a related more generalization majorization result (Theorem \ref{thm:mu-GM-as-monotone}). These results generalize a main result of \cite{Beigi-2013a};
\item established the general relation between quantum maximal correlation coefficients and input-dependent quantum $\chi^{2}$ contraction coefficients (Theorem \ref{thm:correspondence-between-contraction-coeffs-and-max-corr-coeffs});
\item established new equivalent conditions to a quantum $\chi_{f}^{2}$-divergence saturating the data processing inequality (Theorem \ref{thm:DPI-with-equality});
\item clarified the framework of mixing times of time-homogeneous quantum Markov chains by verifying that the HT $f$-divergences recover the classical asymptotic relation of \cite{GZB-preprint-2024} (Theorem \ref{thm:mixing-rate});
\item and established that the majority of the quantities investigated in this paper are efficient to compute (Theorems \ref{thm:contraction-coeff-computability} and \ref{thm:f-max-corr-computability}), which results in a generic algorithm for computing mixing times of primitive quantum channels (Theorem \ref{thm:computable-mixing-times}).
\end{enumerate}

However, there remain at least two interesting open problems in this framework: \\
\begin{itemize}
    \item \textbf{Does there exist a quantum state $\rho_{AB}$ such that ${\mu_{AM}(A:B)_{\rho} < 1}$ and ${\mu_{GM}(A:B)_{\rho}=1}$?} Operationally, combining Corollary \ref{cor:contraction-for-sandwiched-case}, Theorem \ref{thm:contraction-coeff-equal-unity}, and Item 3 of Theorem \ref{thm:extreme-values-summary}, this would show there is a separation between reversibility and correlation distillation under local operations that does not exist classically. Namely, it would be equivalent to there exist quantum states $\rho_{AB} = (\id_{A} \otimes \cE)(\psi_{\rho_{A}})$ and $\sigma_{A} \neq \rho_{A}$ such that  $\sigma_{A} = (\cP_{\cE,\rho_{A}} \circ \cE)(\sigma_{A})$, but there does not exist local operations to convert $\rho_{AB}$ into perfect correlation $\chi^{\vert p}_{XX'}$ for $p \not \in \{0,1\}$. This is not possible classically given that all maximal correlation coefficients are the same on classical distributions (Proposition \ref{prop:recovers-classical}).
    \item \textbf{Does there exist a quantum state ${\rho_{AB}}$ such that ${\mu_{AM}(A:B) < 1}$ and ${\lim_{n \to \infty} \mu_{AM}(A^{n}:B^{n})_{\rho^{\otimes n}}} = 1$?} Note by Theorem \ref{thm:asymptotic-data} this could only be true if the first question were answered in the affirmative. An answer to this second question in the affirmative would then show that correlation distillation with local operations is fundamentally different than in the classical case.
\end{itemize}

Beyond these open problems, we note that a key technical and conceptual step of this work was to elucidate the general framework of non-commutative $L^{2}(p)$ spaces induced by operator monotone functions $f$, the $L^{2}_{f}(\sigma)$ spaces, and show they have uses for lifting other statistical quantities to quantum mechanics. It would be interesting to see which canonical results in classical probability theory lift to the quantum setting through this framework and for which operator monotone function $f$. It is our hope this can further be used in other applications of quantum information processing where the classical information-theoretic methods benefit from using an operator-theoretic approach via the $L^{2}(p)$ space.

\section{Acknowledgements}
We thank Salman Beigi for helpful discussions as well as feedback on a preliminary draft of this work, Mil\'{a}n Mosonyi for a variety of helpful comments including bringing to our attention \cite[Theorem 8]{jenvcova2012reversibility} as well as that \cite{hiai-2012quasi} introduced the same definition of covariance as in this work, Afham for asking IG about the relation of this work to `quantum states over time,' John Goold for asking about the relation between Theorem \ref{thm:computable-mixing-times} and spectral gap methods, Daniel Stilck Fran\c{c}a for a discussion on the relation between input-dependent $\chi^{2}$ contraction coefficients and spectral gap methods, Ryuji Takagi for bringing \cite{generalized-QSL} to our attention, and Matthew Simon Castaneda Tan for related discussions about quantum divergences. This research is supported by the Ministry of Education, Singapore, through grant T2EP20124-0005 and the NRF Investigatorship award (NRF-NRFI10-2024-0006).

\bibliography{References.bib}

@book{WatrousBook,
  title={The Theory of Quantum Information},
  author={Watrous, John},
  year={2018},
  publisher={Cambridge University Press},
  eprint = {https://cs.uwaterloo.ca/~watrous/TQI/}
}

@book{Wilde-Book,
  title={Quantum information theory},
  author={Wilde, Mark M},
  year={2013},
  DOI = {10.1017/CBO9781139525343},
  publisher={Cambridge University Press}
}

@article{Temme-2010a,
  title={The $\chi^{2}$-divergence and mixing times of quantum Markov processes},
  author={Temme, Kristan and Kastoryano, Michael James and Ruskai, Mary Beth and Wolf, Michael Marc and Verstraete, Frank},
  journal={Journal of Mathematical Physics},
  volume={51},
  number={12},
  year={2010},
  publisher={AIP Publishing}
}

@article{Raginsky-2016a,
  title={Strong data processing inequalities and $\Phi$-Sobolev inequalities for discrete channels},
  author={Raginsky, Maxim},
  journal={IEEE Transactions on Information Theory},
  volume={62},
  number={6},
  pages={3355--3389},
  year={2016},
  publisher={IEEE}
}

@BOOK{Conway-1985a,
  title = "A Course in Functional Analysis",
  author    = "Conway, John B",
  publisher = "Springer New York",
  pages     = "1--25",
  series    = "Graduate texts in mathematics",
  year      =  1985,
  address   = "New York, NY"
}

@article{Beigi-2013a,
   title={A new quantum data processing inequality},
   volume={54},
   ISSN={1089-7658},
   url={https://arxiv.org/abs/1210.1689v5},
   DOI={10.1063/1.4818985},
   number={8},
   journal={Journal of Mathematical Physics},
   publisher={AIP Publishing},
   author={Beigi, Salman},
   year={2013},
   month={aug},
   note={NOTE: See updated arXiv version}
}

@article{Makur-2020a,
  title={Comparison of contraction coefficients for f-divergences},
  author={Makur, Anuran and Zheng, Lizhong},
  journal={Problems of Information Transmission},
  volume={56},
  pages={103--156},
  year={2020},
  publisher={Springer}
}

@book{Kreyszig-1991a,
  title={Introductory functional analysis with applications},
  author={Kreyszig, Erwin},
  volume={17},
  year={1991},
  publisher={John Wiley \& Sons}
}

@article{Cao-2019a,
  title={Tensorization of the strong data processing inequality for quantum chi-square divergences},
  author={Cao, Yu and Lu, Jianfeng},
  journal={Quantum},
  volume={3},
  pages={199},
  year={2019},
  publisher={Verein zur F{\"o}rderung des Open Access Publizierens in den Quantenwissenschaften}
}

@article{Hirche-2024a,
  title={Quantum {R}{\'e}nyi and f-divergences from integral representations},
  author={Hirche, Christoph and Tomamichel, Marco},
  journal={Communications in Mathematical Physics},
  volume={405},
  number={9},
  pages={208},
  year={2024},
  publisher={Springer}
}

@article{Lesniewski-1999a,
  title={Monotone {R}iemannian metrics and relative entropy on noncommutative probability spaces},
  author={Lesniewski, Andrew and Ruskai, Mary Beth},
  journal={Journal of Mathematical Physics},
  volume={40},
  number={11},
  pages={5702--5724},
  year={1999},
  publisher={American Institute of Physics}
}

@article{Delgosha-2014a,
  title={Impossibility of local state transformation via hypercontractivity},
  author={Delgosha, Payam and Beigi, Salman},
  journal={Communications in Mathematical Physics},
  volume={332},
  pages={449--476},
  year={2014},
  publisher={Springer}
}

@article{Petz-1996a,
  title={Monotone metrics on matrix spaces},
  author={Petz, D{\'e}nes},
  journal={Linear algebra and its applications},
  volume={244},
  pages={81--96},
  year={1996},
  publisher={Elsevier}
}

@incollection{Petz-2011a,
  title={Introduction to quantum {F}isher information},
  author={Petz, D{\'e}nes and Ghinea, Catalin},
  booktitle={Quantum probability and related topics},
  pages={261--281},
  year={2011},
  publisher={World Scientific}
}

@article{Kubo-1980a,
  title={Means of positive linear operators},
  author={Kubo, Fumio and Ando, Tsuyoshi},
  journal={Mathematische Annalen},
  volume={246},
  pages={205--224},
  year={1980},
  publisher={Springer}
}

@article{Petz-1986a,
  title={Quasi-entropies for finite quantum systems},
  author={Petz, D{\'e}nes},
  journal={Reports on mathematical physics},
  volume={23},
  number={1},
  pages={57--65},
  year={1986},
  publisher={Elsevier}
}

@article{Hiai-2017a,
  title={Different quantum f-divergences and the reversibility of quantum operations},
  author={Hiai, Fumio and Mosonyi, Mil{\'a}n},
  journal={Reviews in Mathematical Physics},
  volume={29},
  number={07},
  pages={1750023},
  year={2017},
  publisher={World Scientific}
}

@inproceedings{Muller-2017a,
  title={Monotonicity of the quantum relative entropy under positive maps},
  author={M{\"u}ller-Hermes, Alexander and Reeb, David},
  booktitle={Annales Henri Poincar{\'e}},
  volume={18},
  pages={1777--1788},
  year={2017},
  organization={Springer}
}

@book{Bhatia-2009a,
  title={Positive definite matrices},
  author={Bhatia, Rajendra},
  year={2009},
  publisher={Princeton university press}
}

@misc{GZB-preprint-2024,
      title={Divergence Inequalities with Applications in Ergodic Theory}, 
      author={Ian George and Alice Zheng and Akshay Bansal},
      year={2024},
      eprint={2411.17241},
      archivePrefix={arXiv},
      primaryClass={cs.IT},
      url={https://arxiv.org/abs/2411.17241}, 
}

@misc{Beigi-2025a,
      title={Some properties and applications of the new quantum $f$-divergences}, 
      author={Salman Beigi and Christoph Hirche and Marco Tomamichel},
      year={2025},
      eprint={2501.03799},
      archivePrefix={arXiv},
      primaryClass={quant-ph},
      url={https://arxiv.org/abs/2501.03799}, 
}

@article{Khatri-2020a,
  title={Principles of quantum communication theory: A modern approach},
  author={Khatri, Sumeet and Wilde, Mark M},
  journal={arXiv preprint arXiv:2011.04672},
  year={2020}
}

@article{Sason-2016-f-div-ineqs,
  title={$ f $-divergence Inequalities},
  author={Sason, Igal and Verd{\'u}, Sergio},
  journal={IEEE Transactions on Information Theory},
  volume={62},
  number={11},
  pages={5973--6006},
  year={2016},
  publisher={IEEE}
}

@Book{Tomamichel-2016a,
  title     = {Quantum Information Processing with Finite Resources},
  publisher = {Springer International Publishing},
  year      = {2016},
  author    = {Marco Tomamichel},
  doi       = {10.1007/978-3-319-21891-5},
}

@article{Anantharam-2013a,
  title={On maximal correlation, hypercontractivity, and the data processing inequality studied by {E}rkip and {C}over},
  author={Anantharam, Venkat and Gohari, Amin and Kamath, Sudeep and Nair, Chandra},
  journal={arXiv preprint arXiv:1304.6133},
  year={2013}
}

@article{George-2024-pure-state,
   title={Reexamination of quantum state transformations with zero communication},
   volume={109},
   ISSN={2469-9934},
   url={http://dx.doi.org/10.1103/PhysRevA.109.062418},
   DOI={10.1103/physreva.109.062418},
   number={6},
   journal={Physical Review A},
   publisher={American Physical Society (APS)},
   author={George, Ian and Chitambar, Eric},
   year={2024},
   month=jun }

@article{Ahlswede-1976a,
  title={Spreading of sets in product spaces and hypercontraction of the {M}arkov operator},
  author={Ahlswede, Rudolf and G{\'a}cs, Peter},
  journal={The annals of probability},
  pages={925--939},
  year={1976},
  publisher={JSTOR}
}

@article{Witsenhausen-1975a,
  title={On sequences of pairs of dependent random variables},
  author={Witsenhausen, Hans S},
  journal={SIAM Journal on Applied Mathematics},
  volume={28},
  number={1},
  pages={100--113},
  year={1975},
  publisher={SIAM}
}

@misc{Gao-2023-sufficient-fisher,
      title={Sufficient statistic and recoverability via Quantum {F}isher Information metrics}, 
      author={Li Gao and Haojian Li and Iman Marvian and Cambyse Rouzé},
      year={2023},
      eprint={2302.02341},
      archivePrefix={arXiv},
      primaryClass={quant-ph},
      url={https://arxiv.org/abs/2302.02341}, 
}

@article{Jencova-2017a,
  title={Preservation of a quantum {R}{\'e}nyi relative entropy implies existence of a recovery map},
  author={Jen{\v{c}}ov{\'a}, Anna},
  journal={Journal of Physics A: Mathematical and Theoretical},
  volume={50},
  number={8},
  pages={085303},
  year={2017},
  publisher={IOP Publishing}
}

@book{Bhatia-1997a,
  title={Matrix analysis},
  author={Bhatia, Rajendra},
  volume={169},
  year={1997},
  publisher={Springer Science \& Business Media}
}

@techreport{Sason-2014a,
  title={Bounds on f-divergences and related distances},
  author={Sason, Igal},
  type={CCIT Report},
  number={859},
  year={2014},
  institution={Technion — Israel Institute of Technology}
}

@book{Johnston-2021a,
  title={Advanced linear and matrix algebra},
  author={Johnston, Nathaniel and others},
  year={2021},
  publisher={Springer}
}

@article{Kalev-2014a,
   title={Mutually unbiased measurements in finite dimensions},
   volume={16},
   ISSN={1367-2630},
   url={http://dx.doi.org/10.1088/1367-2630/16/5/053038},
   DOI={10.1088/1367-2630/16/5/053038},
   number={5},
   journal={New Journal of Physics},
   publisher={IOP Publishing},
   author={Kalev, Amir and Gour, Gilad},
   year={2014},
   month=may, pages={053038} }

@inproceedings{Braverman-2005a,
  title={On the complexity of real functions},
  author={Braverman, Mark},
  booktitle={46th Annual IEEE Symposium on Foundations of Computer Science (FOCS'05)},
  pages={155--164},
  year={2005},
  organization={IEEE}
}

@article{Gao-2022a,
   title={Complete Entropic Inequalities for Quantum {M}arkov Chains},
   volume={245},
   ISSN={1432-0673},
   url={http://dx.doi.org/10.1007/s00205-022-01785-1},
   DOI={10.1007/s00205-022-01785-1},
   number={1},
   journal={Archive for Rational Mechanics and Analysis},
   publisher={Springer Science and Business Media LLC},
   author={Gao, Li and Rouzé, Cambyse},
   year={2022},
   month=may, pages={183–238} }

@BOOK{Sipser-2005a,
  title     = "Introduction to the theory of computation",
  author    = "Sipser, Michael",
  publisher = "Brooks/Cole",
  edition   =  2,
  month     =  feb,
  year      =  2005,
  address   = "Florence, KY"
}

@book{Arora-2009a,
  title={Computational complexity: a modern approach},
  author={Arora, Sanjeev and Barak, Boaz},
  year={2009},
  publisher={Cambridge University Press}
}

@article{Nemirovski-2004a,
  title={Interior point polynomial time methods in convex programming},
  author={Nemirovski, Arkadi},
  journal={Lecture notes},
  volume={42},
  number={16},
  pages={3215--3224},
  year={2004},
  publisher={Citeseer}
}

@article{George-2024a,
  title={Cone-restricted information theory},
  author={George, Ian and Chitambar, Eric},
  journal={Journal of Physics A: Mathematical and Theoretical},
  year={2024}
}

@article{Wilde-2014a,
   title={Strong Converse for the Classical Capacity of Entanglement-Breaking and Hadamard Channels via a Sandwiched {R}ényi Relative Entropy},
   volume={331},
   ISSN={1432-0916},
   url={http://dx.doi.org/10.1007/s00220-014-2122-x},
   DOI={10.1007/s00220-014-2122-x},
   number={2},
   journal={Communications in Mathematical Physics},
   publisher={Springer Science and Business Media LLC},
   author={Wilde, Mark M. and Winter, Andreas and Yang, Dong},
   year={2014},
   month=jul, pages={593–622} }

@article{Muller-Lennert-2013a,
   title={On quantum {R}ényi entropies: A new generalization and some properties},
   volume={54},
   ISSN={1089-7658},
   url={http://dx.doi.org/10.1063/1.4838856},
   DOI={10.1063/1.4838856},
   number={12},
   journal={Journal of Mathematical Physics},
   publisher={AIP Publishing},
   author={Müller-Lennert, Martin and Dupuis, Frédéric and Szehr, Oleg and Fehr, Serge and Tomamichel, Marco},
   year={2013},
   month=dec }

@inproceedings{Hirschfeld-1935a,
  title={A connection between correlation and contingency},
  author={Hirschfeld, Hermann O},
  booktitle={Mathematical Proceedings of the Cambridge Philosophical Society},
  volume={31},
  pages={520--524},
  year={1935},
  organization={Cambridge University Press}
}

@article{Gebelein-1941a,
  title={Das statistische Problem der Korrelation als Variations-und Eigenwertproblem und sein Zusammenhang mit der Ausgleichsrechnung},
  author={Gebelein, Hans},
  journal={ZAMM-Journal of Applied Mathematics and Mechanics/Zeitschrift f{\"u}r Angewandte Mathematik und Mechanik},
  volume={21},
  number={6},
  pages={364--379},
  year={1941},
  publisher={Wiley Online Library}
}

@article{Renyi-1959sieve,
  title={New version of the probabilistic generalization of the large sieve},
  author={R{\'e}nyi, Alfr\'{e}d},
  journal={Acta Math. Hung},
  volume={10},
  number={1-2},
  pages={217--226},
  year={1959}
}

@article{Renyi-1959measuresofdep,
  title={On measures of dependence},
  author={R{\'e}nyi, Alfr{\'e}d},
  journal={Acta mathematica hungarica},
  volume={10},
  number={3-4},
  pages={441--451},
  year={1959},
  publisher={Akad{\'e}miai Kiad{\'o}, co-published with Springer Science+ Business Media BV~…}
}

@article{Beigi-2013sandwiched,
  title={Sandwiched {R}{\'e}nyi divergence satisfies data processing inequality},
  author={Beigi, Salman},
  journal={Journal of Mathematical Physics},
  volume={54},
  number={12},
  year={2013},
  publisher={AIP Publishing}
}

@misc{Beigi-2023maximal-gaussian,
      title={Quantum Maximal Correlation for {G}aussian States}, 
      author={Salman Beigi and Saleh Rahimi-Keshari},
      year={2023},
      eprint={2303.07155},
      archivePrefix={arXiv},
      primaryClass={quant-ph},
      url={https://arxiv.org/abs/2303.07155}, 
}

@article{Diaconis-1991a,
  title={Geometric bounds for eigenvalues of {M}arkov chains},
  author={Diaconis, Persi and Stroock, Daniel},
  journal={The annals of applied probability},
  pages={36--61},
  year={1991},
  publisher={JSTOR}
}

@article{Fill-1991a,
  title={Eigenvalue bounds on convergence to stationarity for nonreversible {M}arkov chains, with an application to the exclusion process},
  author={Fill, James Allen},
  journal={The annals of applied probability},
  pages={62--87},
  year={1991},
  publisher={JSTOR}
}

@book{Aczel-1966a,
  title={Lectures on functional equations and their applications},
  author={Acz{\'e}l, J{\'a}nos},
  volume={19},
  year={1966},
  publisher={Academic press}
}

@article{Sason-2018a,
  title={On f-divergences: Integral representations, local behavior, and inequalities},
  author={Sason, Igal},
  journal={Entropy},
  volume={20},
  number={5},
  pages={383},
  year={2018},
  publisher={MDPI}
}

@article{petz1994geometry,
  title={Geometry of canonical correlation on the state space of a quantum system},
  author={Petz, D{\'e}nes},
  journal={Journal of Mathematical Physics},
  volume={35},
  number={2},
  pages={780--795},
  year={1994},
  publisher={AIP Publishing}
}

@book{levin2017markov,
  title={Markov chains and mixing times},
  author={Levin, David A and Peres, Yuval},
  volume={107},
  year={2017},
  publisher={American Mathematical Soc.}
}

@article{bjorck1967solving,
  title={Solving linear least squares problems by Gram-Schmidt orthogonalization},
  author={Bj{\"o}rck, {\AA}ke},
  journal={BIT Numerical Mathematics},
  volume={7},
  number={1},
  pages={1--21},
  year={1967},
  publisher={Springer}
}

@book{stewart2022numerical,
  title={Numerical analysis: A graduate course},
  author={Stewart, David E},
  volume={258},
  year={2022},
  publisher={Springer}
}

@book{hiai2014introduction,
  title={Introduction to matrix analysis and applications},
  author={Hiai, Fumio and Petz, D{\'e}nes},
  year={2014},
  publisher={Springer Science \& Business Media}
}

@misc{Leditzky2025,
    title={Quantum Channels: Theory and Applications},
    author={Felix Leditzky},
    url = {https://www.overleaf.com/project/6052c89e7f1a335d1d49d099}
}

@article{HIAI_2011,
   title={Quantum $f$-divergences and Error Correction},
   volume={23},
   ISSN={1793-6659},
   url={http://dx.doi.org/10.1142/S0129055X11004412},
   DOI={10.1142/s0129055x11004412},
   number={07},
   journal={Reviews in Mathematical Physics},
   publisher={World Scientific Pub Co Pte Ltd},
   author={HIAI, FUMIO and MOSONYI, MILÁN and PETZ, DÉNES and BÉNY, CÉDRIC},
   year={2011},
   month=aug, pages={691–747} }

@book{polyanskiy-2023a, place={Cambridge}, title={Information Theory: From Coding to Learning}, publisher={Cambridge University Press}, author={Polyanskiy, Yury and Wu, Yihong}, year={2025}}

@misc{george2025quantumdoeblincoefficientsinterpretations,
      title={Quantum {D}oeblin Coefficients: Interpretations and Applications}, 
      author={Ian George and Christoph Hirche and Theshani Nuradha and Mark M. Wilde},
      year={2025},
      eprint={2503.22823},
      archivePrefix={arXiv},
      primaryClass={quant-ph},
      url={https://arxiv.org/abs/2503.22823}, 
}

@article{bezanson2017julia,
  title={Julia: A fresh approach to numerical computing},
  author={Bezanson, Jeff and Edelman, Alan and Karpinski, Stefan and Shah, Viral B},
  journal={SIAM review},
  volume={59},
  number={1},
  pages={65--98},
  year={2017},
  publisher={SIAM},
  url={https://doi.org/10.1137/141000671}
}

@article{hiai-2012quasi,
  title={From quasi-entropy to various quantum information quantities},
  author={Hiai, Fumio and Petz, D{\'e}nes},
  journal={Publications of the Research Institute for Mathematical Sciences},
  volume={48},
  number={3},
  pages={525--542},
  year={2012}
}

@article{scandi2023quantum,
  title={Quantum Fisher Information and its dynamical nature},
  author={Scandi, Matteo and Abiuso, Paolo and Surace, Jacopo and De Santis, Dario},
  journal={Reports on Progress in Physics},
  year={2023}
}

@article{hiai2016contraction,
  title={Contraction coefficients for noisy quantum channels},
  author={Hiai, Fumio and Ruskai, Mary Beth},
  journal={Journal of Mathematical Physics},
  volume={57},
  number={1},
  year={2016},
  publisher={AIP Publishing}
}

@article{jenvcova2012reversibility,
  title={Reversibility conditions for quantum operations},
  author={Jen{\v{c}}ov{\'a}, Anna},
  journal={Reviews in Mathematical Physics},
  volume={24},
  number={07},
  pages={1250016},
  year={2012},
  publisher={World Scientific}
}

@book{paulsen2002completely,
  title={Completely bounded maps and operator algebras},
  author={Paulsen, Vern},
  volume={78},
  year={2002},
  publisher={Cambridge University Press}
}

@article{petz1986sufficient,
  title={Sufficient subalgebras and the relative entropy of states of a von Neumann algebra},
  author={Petz, D{\'e}nes},
  journal={Communications in mathematical physics},
  volume={105},
  number={1},
  pages={123--131},
  year={1986},
  publisher={Springer}
}

@book{Reed-1975,
  title={II: Fourier analysis, self-adjointness},
  author={Reed, Michael and Simon, Barry},
  volume={2},
  year={1975},
  publisher={Elsevier}
}

@article{Leifer_2013,
   title={Towards a formulation of quantum theory as a causally neutral theory of Bayesian inference},
   volume={88},
   ISSN={1094-1622},
   url={http://dx.doi.org/10.1103/PhysRevA.88.052130},
   DOI={10.1103/physreva.88.052130},
   number={5},
   journal={Physical Review A},
   publisher={American Physical Society (APS)},
   author={Leifer, M. S. and Spekkens, Robert W.},
   year={2013},
   month=nov }

@article{Fullwood_2022,
   title={On quantum states over time},
   volume={478},
   ISSN={1471-2946},
   url={http://dx.doi.org/10.1098/rspa.2022.0104},
   DOI={10.1098/rspa.2022.0104},
   number={2264},
   journal={Proceedings of the Royal Society A: Mathematical, Physical and Engineering Sciences},
   publisher={The Royal Society},
   author={Fullwood, James and Parzygnat, Arthur J.},
   year={2022},
   month=aug }

@misc{delsol2025computationalaspectstracenorm,
      title={Computational aspects of the trace norm contraction coefficient}, 
      author={Idris Delsol and Omar Fawzi and Jan Kochanowski and Akshay Ramachandran},
      year={2025},
      eprint={2507.16737},
      archivePrefix={arXiv},
      primaryClass={quant-ph},
      url={https://arxiv.org/abs/2507.16737}, 
}

@article{parzygnat2023time,
  title={From time-reversal symmetry to quantum Bayes’ rules},
  author={Parzygnat, Arthur J and Fullwood, James},
  journal={PRX Quantum},
  volume={4},
  number={2},
  pages={020334},
  year={2023},
  publisher={APS}
}

@article{CARLEN20171810,
title = {Gradient flow and entropy inequalities for quantum Markov semigroups with detailed balance},
journal = {Journal of Functional Analysis},
volume = {273},
number = {5},
pages = {1810-1869},
year = {2017},
issn = {0022-1236},
doi = {https://doi.org/10.1016/j.jfa.2017.05.003},
url = {https://www.sciencedirect.com/science/article/pii/S0022123617301878},
author = {Eric A. Carlen and Jan Maas},
keywords = {Quantum Markov semigroup, Entropy, Detailed balance, Gradient flow},
}

@article{generalized-QSL,
  title = {Generalized Geometric Quantum Speed Limits},
  author = {Pires, Diego Paiva and Cianciaruso, Marco and C\'eleri, Lucas C. and Adesso, Gerardo and Soares-Pinto, Diogo O.},
  journal = {Phys. Rev. X},
  volume = {6},
  issue = {2},
  pages = {021031},
  numpages = {19},
  year = {2016},
  month = {Jun},
  publisher = {American Physical Society},
  doi = {10.1103/PhysRevX.6.021031},
  url = {https://link.aps.org/doi/10.1103/PhysRevX.6.021031}
}

\appendix

\section{Technical Lemmata}

\subsection{Derivation of \eqref{eq:func-expansion}}\label{app:func-expansion}
Here we derive \eqref{eq:func-expansion} in detail for completeness. It is a special case of a known expansion of the function of the `relative modular operator' $L_{P}R_{Q^{-1}}$ (see e.g. \cite{HIAI_2011,Hiai-2017a}), so we provide a derivation of this more general form.

We begin with some preliminaries on functional calculus of linear maps. Recall from standard quantum information theory textbooks, e.g. \cite{WatrousBook}, that for $I \subset \mbb{C}$, function $f:I \to \mbb{C}$, and a matrix that is a normal operator $N$ with spectral decomposition $N = \sum_{i} \lambda_{i} \dyad{e_{i}}$ such that $\{\lambda_{i}\}_{i} \subset I$, we define $f(N) = \sum_{i} f(\lambda_{i}) \dyad{e_{i}}$. This is functional calculus for matrices (see e.g. \cite[Chapter 3]{hiai2014introduction}). However, functional calculus works for `normal operators' in the general sense of Hilbert spaces (see e.g. \cite{Conway-1985a}), which means we can in particular apply it to linear maps $\Phi: \Lin(A) \to \Lin(A)$ where $A$ is finite-dimensional. Namely, let $\{\zeta_{k}\}_{k} \subset \mbb{C}$, $\{Z_{k}\}_{k} \subset \Lin(A)$ be the eigenvalues and eigenvectors of $\Phi$, i.e. $\Phi(Z_{k}) = \zeta_{k} Z_{k}$ for each $k$. Then for $f:I \to \mbb{C}$ such that $\{\zeta_{k}\} \subset I$, \begin{align}\label{eq:function-of-lin-map}
    f(\Phi) \coloneq \sum_{k} f(\lambda_{k})Z_{k} 
\end{align} 
is well-defined and defines the linear map $f(\Phi): \Lin(A) \to \Lin(A)$ whenever $\Phi$ has eigenvalues that lay inside the interval $I$. We will use this to prove \cite[Eq.~2.1]{HIAI_2011}, which recovers \eqref{eq:func-expansion} via the special case $P = Q = \sigma$ and $f$ as given in Definition \ref{def:J-operator}.

\begin{proposition} (\cite[Eq.~2.1]{HIAI_2011}) \label{prop:func-of-mod-op}
    For positive operators $P,Q$ with spectral decompositions $P = \sum_{i} \lambda_{i} \dyad{e_{i}} \eqqcolon \sum_{i} \lambda_{i}P_{i}$, $Q = \sum_{j} \omega_{j} \dyad{f_{j}} \eqqcolon \sum_{j} \omega_{j} Q_{j}$, define the `relative modular operator' $L_{P}R_{Q^{-1}}$ which uses the left and right multiplication operators as defined in \eqref{eq:left-and-right-mult-operators}. Let $f: I \to \mbb{C}$ such that $\{\lambda_{i}/\omega_{j}\}_{i,j} \subset I$. Then
    \begin{align}\label{eq:func-of-mod-op}
        f(L_{P}R_{Q^{-1}}) = \sum_{i,j} f(\lambda_{i}/\omega_{j})L_{P_{i}}R_{Q_{j}} \, .
    \end{align}
\end{proposition}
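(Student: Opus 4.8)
The plan is to prove the operator identity in \eqref{eq:func-of-mod-op} by first establishing that the left and right multiplication operators $L_{P_i}$ and $R_{Q_j}$ associated to the spectral projectors form a complete set of orthogonal, idempotent, commuting ``elementary'' operators on $\Lin(A)$, and then recognizing that these are precisely the spectral projectors of the relative modular operator $L_P R_{Q^{-1}}$. Once that spectral decomposition is in hand, the functional calculus identity \eqref{eq:function-of-lin-map} applied to the normal operator $\Phi = L_P R_{Q^{-1}}$ immediately yields the claimed formula.

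First I would verify the algebraic properties of the building blocks $\cP_{ij} \coloneq L_{P_i} R_{Q_j}$, viewing them as linear maps on $\Lin(A)$. Since $P = \sum_i \lambda_i P_i$ and $Q = \sum_j \omega_j Q_j$ are spectral decompositions, the projectors satisfy $P_i P_{i'} = \delta_{i,i'} P_i$, $\sum_i P_i = \mbb{1}$, and similarly for the $Q_j$. Because left multiplication commutes with right multiplication ($L_W R_{W'} = R_{W'} L_W$ for all $W, W'$), a direct check gives $\cP_{ij} \circ \cP_{i'j'} = L_{P_i P_{i'}} R_{Q_j Q_{j'}} = \delta_{i,i'}\delta_{j,j'}\cP_{ij}$, so each $\cP_{ij}$ is idempotent and distinct ones are mutually orthogonal. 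Completeness follows from $\sum_{i,j} \cP_{ij} = L_{\sum_i P_i} R_{\sum_j Q_j} = L_{\mbb{1}} R_{\mbb{1}} = \id_{\Lin(A)}$. This shows $\{\cP_{ij}\}_{i,j}$ is a resolution of the identity into orthogonal idempotents on $\Lin(A)$.

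Next I would compute the action of $L_P R_{Q^{-1}}$ on the range of each $\cP_{ij}$. Applying the operator to a generic $\cP_{ij}(X) = P_i X Q_j$ gives $L_P R_{Q^{-1}}(P_i X Q_j) = P(P_i X Q_j)Q^{-1} = (\lambda_i P_i) X (\omega_j^{-1} Q_j) = (\lambda_i/\omega_j)\,\cP_{ij}(X)$, using $P P_i = \lambda_i P_i$ and $Q_j Q^{-1} = \omega_j^{-1} Q_j$. Hence each $\cP_{ij}$ is a spectral projector of $L_P R_{Q^{-1}}$ with eigenvalue $\lambda_i/\omega_j$, and we obtain the spectral decomposition $L_P R_{Q^{-1}} = \sum_{i,j}(\lambda_i/\omega_j)\,\cP_{ij}$. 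Since all eigenvalues $\lambda_i/\omega_j$ are real and the $\cP_{ij}$ are orthogonal idempotents summing to the identity, $L_P R_{Q^{-1}}$ is a normal (indeed diagonalizable) operator on $\Lin(A)$, so the functional calculus of \eqref{eq:function-of-lin-map} applies: for any $f$ defined on $\{\lambda_i/\omega_j\}_{i,j} \subset I$, we have $f(L_P R_{Q^{-1}}) = \sum_{i,j} f(\lambda_i/\omega_j)\,\cP_{ij} = \sum_{i,j} f(\lambda_i/\omega_j) L_{P_i} R_{Q_j}$, which is exactly \eqref{eq:func-of-mod-op}.

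The only subtlety, and the step I would treat most carefully, is the bookkeeping around eigenvalue multiplicities and the well-definedness of the functional calculus. If some ratios $\lambda_i/\omega_j$ coincide for different pairs $(i,j)$, the genuine spectral projector of $L_P R_{Q^{-1}}$ for that eigenvalue is the sum of the corresponding $\cP_{ij}$; one must check that $f$ assigns the same value to all pairs contributing to a given eigenvalue, which it does automatically since $f$ is a function of the eigenvalue alone, so the grouped and ungrouped sums agree. Thus no ambiguity arises, and the formula holds verbatim. The specialization to \eqref{eq:func-expansion} is then immediate by taking $P = Q = \sigma$, so that $\lambda_i/\omega_j \to \lambda_i/\lambda_j$ and $L_{P_i} R_{Q_j} \to L_{Q_i} R_{Q_j}$, matching the statement in Definition \ref{def:J-operator}.
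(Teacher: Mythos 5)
Your proof is correct, and it reaches the identity by a somewhat different mechanism than the paper. The paper works with the rank-one operators $\ket{e_i}\bra{f_j}$: it checks that these are eigenvectors of $L_P R_{Q^{-1}}$ with eigenvalues $\lambda_i/\omega_j$, notes they form a basis of $\Lin(A)$, and then verifies that the candidate map $\sum_{i,j} f(\lambda_i/\omega_j)\, L_{P_i}R_{Q_j}$ agrees with $f(L_P R_{Q^{-1}})$ on that basis, which forces equality everywhere. You never select eigenvectors at all: you show the superoperators $\cP_{ij} = L_{P_i}R_{Q_j}$ form a resolution of the identity into mutually orthogonal idempotents on whose ranges $L_P R_{Q^{-1}}$ acts as multiplication by $\lambda_i/\omega_j$, and then apply the functional calculus directly to this spectral resolution. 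Your route is basis-free at the level of eigenvectors and makes the multiplicity bookkeeping explicit (grouping the $\cP_{ij}$ with coinciding ratios into the true eigenprojectors), whereas the paper's agree-on-a-basis check is marginally shorter and handles multiplicities automatically, since two linear maps that coincide on a basis coincide everywhere. One small imprecision worth tightening: real eigenvalues together with \emph{algebraically} orthogonal idempotents do not by themselves imply normality. You should either note that each $\cP_{ij}$ is self-adjoint with respect to the Hilbert--Schmidt inner product (as $P_i$ and $Q_j$ are Hermitian, $L_{P_i}$ and $R_{Q_j}$ are HS-self-adjoint and commute, so the $\cP_{ij}$ are genuine orthogonal projections), or simply observe that diagonalizability is all that the definition \eqref{eq:function-of-lin-map} requires. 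Neither point affects correctness, and your specialization $P = Q = \sigma$ recovering \eqref{eq:func-expansion} matches the paper's.
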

\begin{proof}
    This is a straightforward generalization of the proof of \cite[Lemma 42]{Leditzky2025}. We will find a decomposition of $f(L_{P}R_{Q^{-1}})$ using functional calculus and then show the RHS of \eqref{eq:func-of-mod-op} defines the same map. To this end, we first prove $\{\ket{e_{i}}\bra{f_{j}}\}_{i,j}$ is a set of eigenvectors for $L_{P}R_{Q^{-1}}$. Namely,
    \begin{align}
        L_{P}R_{Q^{-1}}(\ket{e_{i}}\bra{f_{j}})= P\ket{e_{i}}\bra{f_{j}}Q^{-1} = \lambda_{i}\ket{e_{i}}\bra{f_{j}}\omega_{j}^{-1} = (\lambda_{i}/\omega_{j})\ket{e_{i}}\bra{f_{j}} \ , 
    \end{align}
    where we just used the definition of left and right multiplication operators and the assumed spectral decomposition of the operators. As $\{\ket{e_{i}}\bra{f_{j}}\}_{i,j}$ form a basis for $\Lin(A)$, this defines all the eigenvectors of $L_{P}R_{Q^{-1}}$. Thus, by \eqref{eq:function-of-lin-map} along with our assumption on $f$ and $\{\lambda_{i}/\omega_{j}\}_{i,j}$, $f(L_{P}R_{Q^{-1}}) = \sum_{i,j} f(\lambda_{i}/\omega_{j})\ket{e_{i}}\bra{f_{j}}$.

    Similarly, define the map $\Phi = \sum_{i,j} f(\lambda_{i}/\omega_{j})L_{P_{i}}R_{Q_{j}}$. Then for any $\ket{e_{i}}\bra{f_{j}}$,
    \begin{align}
        \Phi(\ket{e_{i}}\bra{f_{j}}) =& \sum_{i',j'} f(\lambda_{i'}/\omega_{j'})L_{P_{i'}}R_{Q_{j'}}(\ket{e_{i}}\bra{f_{j}}) \\
        =& \sum_{i',j'} f(\lambda_{i'}/\omega_{j'}) P_{i'}\ket{e_{i}}\bra{f_{j}}Q_{j'} \\
        =& f(\lambda_{i}/\omega_{j}) \ket{e_{i}}\bra{f_{j}} \\
        =& f(L_{P}R_{Q^{-1}})(\ket{e_{i}}\bra{f_{j}}) \ ,
    \end{align}
    where we used the spectral decomposition again and the previous calculation's result. As $\{\ket{e_{i}}\bra{f_{j}}\}_{i,j}$ are a basis for $\Lin(A)$, we may conclude that $\Phi = f(L_{P}R_{Q^{-1}})$, which concludes the proof.
\end{proof}

\subsection{Proof of Lemma \ref{lem:map-norms-as-optimizations}}
\begin{proof}[Proof of Lemma \ref{lem:map-norms-as-optimizations}]
    By \cite[Corollary 6.7]{Conway-1985a}, for any $x \in X$, 
    $$\Vert \Lambda(x) \Vert_{Y} = \sup\{ \vert f(\Lambda(x)) \vert : f \in Y^{\ast} \, , \Vert f \Vert_{Y^{\ast}} \leq 1 \} \ , $$
    where $Y^{\ast}$ is the dual space of $Y$. As $\mathscr{Y}$ is assumed to be a Hilbert space, the Riesz Representation theorem, e.g. \cite{Conway-1985a}, guarantees for each $f \in Y^{\ast}$ such that $\Vert f \Vert_{Y^{\ast}} = a$, there is $y_{f} \in Y$ with $\Vert y_{f} \Vert_{Y} = a$ such that $\vert f(\Lambda(x)) \vert = \vert \langle y_{f}, \Lambda(x) \rangle \vert$. Therefore, using the definition of the operator norm,
    $$ \Vert \Lambda \Vert_{\mathscr{X} \to \mathscr{Y}} = \sup \{ \vert \langle y, \Lambda(x) \rangle_{Y} \vert : x \in X, \, y \in Y, \, \Vert x \Vert_{X} \leq 1 \, , \Vert y \Vert_{Y} \leq 1 \} \ . $$
    As we work in finite dimensions, $\Lambda$ is bounded, so by Cauchy-Schwarz, one may verify $\langle y, \Lambda(x) \rangle_{Y} : X \times Y \to \mbb{R}$ is continuous. As the closed unit balls are compact in finite dimensions, we may replace the supremum with a maximum. Finally, one can see the maximizers have to have unit norm as otherwise one can increase the value by rescaling them.
\end{proof}

\subsection{Lemmata for Quantum Maximal Correlation Coefficients}\label{app-subsec:max-corr-coeff-lemmata}
In this appendix we collect various further results needed for the study of quantum maximal correlation coefficients.
\begin{proposition}\label{prop:restricted-unital-Schwarz}
    If $\cM_{A' \to A}$ is a unital Schwarz map and $V_{\wt{A} \to A}$ is an isometry, then the map $\cV^{\ast} \circ \cM(\cdot) \coloneq V^{\ast} \cM(\cdot) V$ is a unital Schwarz map.
\end{proposition}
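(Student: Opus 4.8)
The plan is to verify the two defining properties of a unital Schwarz map directly for $\cV^{\ast} \circ \cM$, where $\cV^{\ast}(Y) = V^{\ast}YV$. There are two things to check: that the composite is unital, and that it satisfies the Schwarz inequality \eqref{eq:Schwarz-Inequality}. I note at the outset that the composite is automatically $1$-positive, since $\cM$ is $1$-positive and conjugation by $V$ preserves positivity (for $P \geq 0$ we have $V^{\ast}PV \geq 0$), so the only nontrivial content is unitality together with the Schwarz inequality.

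First I would check unitality. Since $V$ is an isometry we have $V^{\ast}V = \mbb{1}_{\wt{A}}$, and since $\cM$ is unital we have $\cM(\mbb{1}_{A'}) = \mbb{1}_{A}$. But here one must be careful about which identity appears: the composite $\cV^{\ast} \circ \cM$ takes input on the space where $\cM$ is defined, and the relevant computation is
\begin{align}
    (\cV^{\ast} \circ \cM)(\mbb{1}) = V^{\ast}\cM(\mbb{1})V = V^{\ast}\mbb{1}_{A}V = V^{\ast}V = \mbb{1}_{\wt{A}} \ ,
\end{align}
using unitality of $\cM$ in the middle equality and the isometry property in the last. This establishes unitality.

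The main step is the Schwarz inequality. I would fix $x$ in the domain and compute, using first the Schwarz inequality for $\cM$ and then conjugating by $V$:
\begin{align}
    (\cV^{\ast} \circ \cM)(x^{\ast}x) = V^{\ast}\cM(x^{\ast}x)V \geq V^{\ast}\cM(x)^{\ast}\cM(x)V \ ,
\end{align}
where the inequality is the Schwarz inequality \eqref{eq:Schwarz-Inequality} for $\cM$ together with the fact that conjugation $Z \mapsto V^{\ast}ZV$ is a positive (indeed completely positive) operation and hence preserves the Löwner order. It then remains to show $V^{\ast}\cM(x)^{\ast}\cM(x)V \geq \big((\cV^{\ast}\circ\cM)(x)\big)^{\ast}(\cV^{\ast}\circ\cM)(x) = V^{\ast}\cM(x)^{\ast}VV^{\ast}\cM(x)V$. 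Writing $W \coloneq \cM(x)V$, this is precisely the statement $W^{\ast}W \geq W^{\ast}(VV^{\ast})W$, which holds because $VV^{\ast} = \Pi$ is a projection (as $V$ is an isometry) and therefore $\mbb{1} \geq VV^{\ast}$, so that $W^{\ast}(\mbb{1} - VV^{\ast})W \geq 0$. Chaining the two inequalities gives the Schwarz inequality for the composite.

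The one subtlety I anticipate — the step most likely to require care — is the insertion of the projection $VV^{\ast}$ in the final comparison: it is tempting to conclude the Schwarz inequality for the composite immediately from that of $\cM$, but the composite's inequality involves $(\cV^{\ast}\circ\cM)(x)^{\ast}(\cV^{\ast}\circ\cM)(x)$, which carries an extra $VV^{\ast}$ sandwiched in the middle compared to $V^{\ast}\cM(x)^{\ast}\cM(x)V$. The resolution is exactly the observation that $VV^{\ast} \leq \mbb{1}$, so this extra factor only decreases the operator, keeping the inequality in the correct direction. Once this is in place, the three displayed facts — positivity, unitality, and the Schwarz inequality — together certify that $\cV^{\ast}\circ\cM$ is a unital Schwarz map, completing the proof.
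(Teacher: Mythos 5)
Your proof is correct and follows essentially the same route as the paper's: unitality via $V^{\ast}V = \mbb{1}_{\wt{A}}$ and $\cM(\mbb{1}) = \mbb{1}$, then the Schwarz inequality in two steps — conjugating $\cM$'s Schwarz inequality by $V$ (complete positivity of conjugation) and inserting $VV^{\ast} \leq \mbb{1}$ to pass from $V^{\ast}\cM(x)^{\ast}\cM(x)V$ to $\bigl((\cV^{\ast}\circ\cM)(x)\bigr)^{\ast}(\cV^{\ast}\circ\cM)(x)$. Your explicit flagging of the $VV^{\ast}$ insertion as the key subtlety matches the paper's argument exactly.
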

\begin{proof}
    To see that $\cV^{\ast} \circ \cM$ is unital, $\cV^{\ast} \circ \cM(\mbb{1}_{A'}) = \cV^{\ast}(\mbb{1}_{A}) = V^{\ast}\mbb{1}_{A}V = \mbb{1}_{\wt{A}}$ where we used $\cM$ is unital and that $V$ is an isometry.

    To see that it is a Schwarz map, let $X \in \Lin(A')$. Then, using the Schwarz inequality \eqref{eq:Schwarz-Inequality},
    \begin{align}
        \cM(X^{\ast}X) \geq \cM(X)^{\ast}\cM(X) \Rightarrow V^{\ast}\cM(X^{\ast}X)V \geq V^{\ast}\cM(X)^{\ast}\cM(X)V \ , 
    \end{align}
    where we used $\cV^{\ast}$ is a completely positive map. Moreover, as $I_{A} \geq VV^{\ast}$,
    \begin{align}
        V^{\ast}\cM(X)^{\ast}\cM(X)V \geq V^{\ast}\cM(X)^{\ast}VV^{\ast}\cM(X)V = [(\cV \circ \cM)(X)]^{\ast} [(\cV \circ \cM)(X)] \ . 
    \end{align}
    Combining these inequalities and noting $V^{\ast}\cM(X^{\ast}X)V = \cV^{\ast} \circ \cM(X^{\ast}X)$, we have shown $\cV^{\ast} \circ \cM$ satisfies the Schwarz inequality and thus is a Schwarz map.
\end{proof}

\begin{proof}[Proof of Proposition \ref{prop:standard-operator-monotone-maximal-corr}]
    First we transform the problem into a $2\to 2$-norm as we have previously. We have $\langle \mbb{1}, X \rangle_{\mbf{J}_{f,\sigma}} = \langle \sigma^{1/2}, \mbf{J}^{1/2}_{f,\sigma}(X) \rangle$ by Proposition \ref{prop:mult-and-div-under-trace}. We also have
    \begin{align}
        \langle X , X \rangle_{\mbf{J}_{f,\sigma}} = \langle X , (\mbf{J}_{f,\sigma}^{1/2} \circ \mbf{J}_{f,\sigma}^{1/2}(X) \rangle = \langle \mbf{J}_{f,\sigma}^{1/2}(X), \mbf{J}_{f,\sigma}^{1/2}(X) \rangle \ ,
    \end{align}
    where we used that $f$ is symmetry-inducing, $X$ is Hermitian, and $\mbf{J}_{f,\sigma}^{1/2}$ is self-adjoint with respect to Hermitian operators under the Hilbert-Schmidt inner product as follows from Proposition \ref{prop:J-operator-self-adjoint}. Lastly,
    \begin{align}
        \Tr[X \otimes Y\rho_{AB}] = \Tr[(\mbf{J}_{f,\sigma}^{-1/2} \circ \mbf{J}_{f,\sigma}^{1/2})(X) \otimes (\mbf{J}_{f,\sigma}^{-1/2} \circ \mbf{J}_{f,\sigma}^{1/2})(Y)\rho_{AB}] = \Tr[\mbf{J}_{f,\sigma}^{1/2}(X) \otimes \mbf{J}_{f,\sigma}^{1/2}(Y) \wt{\rho}_{f}] \ ,
    \end{align}
    where we again used Proposition \ref{prop:J-operator-self-adjoint}. Thus, by direct calculation, we have
    \begin{equation}\label{eq:mu_f-st-opt}
    \begin{aligned}
        \mu_{f}(A:B)_{\rho} = \sup & \; \Big\vert \Tr[\wt{X} \otimes \wt{Y}^{\ast} \wt{\rho}_{f}] \Big\vert \\
        \text{s.t.} \; & \langle \rho_{A}^{1/2} , \wt{X} \rangle = 0 = \langle \rho_{B}^{1/2} , \wt{Y} \rangle \\
        & \Vert \wt{X} \Vert_{2} = 1 = \Vert \wt{Y} \Vert_{2} \ ,
    \end{aligned}
\end{equation}
where $\wt{X} \in \Herm(A), \wt{Y} \in \Herm(B)$ and $\wt{\rho}_{f} = (\mbf{J}_{f,\rho_{A}}^{-1/2} \otimes \mbf{J}_{f,\rho_{B}}^{-1/2})(\rho_{AB})$.

Next, we let $\Lambda_{\wt{\rho}_{f}}$ be the map such that $\wt{\rho}_{f} = \Omega_{\Lambda_{\wt{\rho}_{f}}}$ and show it has a singular value of one, which shows $\wt{\rho}_{f}$ has a Schmidt coefficient of one with respect to $(\Herm(A), \langle \cdot , \cdot \rangle)$, $(\Herm(B), \langle \cdot, \cdot \rangle)$ by Proposition \ref{prop:Schmidt-to-sing}. Recall\footnote{Or verify using \eqref{eq:action-of-Choi} and the transpose trick.} that for a sequence of linear maps $\cF \circ \Lambda \circ \cE$, the Choi operator is $(\cE^{T} \otimes \cF)\Omega_{\Lambda}$ where $\cE^{T}(Y) \coloneq \sum_{k} A^{T}_{k}Y\ol{B}_{k}$ if $\cE = \sum_{k} A_{k}XB_{k}^{\ast}$, which defines the `transpose' of a linear map. Therefore, 
\begin{align}\label{eq:Lambda-wt-rho-f-structure}
    \Lambda_{\wt{\rho}_{f}} = \mbf{J}_{f,\rho_{B}}^{-1/2} \circ \Lambda_{\rho} \circ (\mbf{J}_{f,\rho_{A}}^{-1/2})^{T} \ . 
\end{align} 
We can thus show $\Lambda_{\wt{\rho}_{f}}$ has a singular value $1$. Namely, defining the transpose with respect to the eigenbasis of $\rho_{A}$,
\begin{equation}
\begin{aligned}
    \Lambda_{\wt{\rho}_{f}}(\rho_{A}^{1/2}) =& \Tr_{A}[{\rho_A^{1/2}}^{T} \otimes I_{B} \Omega_{\Lambda_{\wt{\rho}_{f}}}] \\
    =& \Tr_{A}[\rho_A^{1/2}\otimes I_{B}(\mbf{J}_{f,\rho_{A}}^{-1/2} \otimes \mbf{J}_{f,\rho_{B}}^{-1/2})(\rho_{AB})] \\
    =& \Tr_{A}[\mbf{J}_{f,\rho_{A}}^{-1/2}(\rho_A^{1/2}) \otimes I_{B}(\text{id}_{A} \otimes \mbf{J}_{f,\rho_{B}}^{-1/2})(\rho_{AB})] \\
    =& \Tr_{A}[(\id_{A} \otimes \mbf{J}_{f,\rho_{B}}^{-1/2}(\rho_{AB})] \\
    =& (\mbf{J}_{f,\rho_{B}}^{-1/2})(\rho_{B}) \\
    =& \rho_{B}^{1/2} \ ,
\end{aligned}
\end{equation}
where the first equality is \eqref{eq:action-of-Choi}, the second is our identifications, the third is $\mbf{J}_{f,\rho_{A}}^{-1/2}$ is self-adjoint on Hermitian operators, and the fourth and sixth are because $\rho_{A}$ and $\rho_{B}$ commute with themselves. By Proposition \ref{prop:Schmidt-to-sing}, this means $\wt{\rho}_{f}$ has a Schmidt coefficient with respect to $(\Herm(A), \langle \cdot , \cdot \rangle)$, $(\Herm(B), \langle \cdot, \cdot \rangle)$ of one. That is, we may write $\wt{\rho}_{f} = \sum_{i} \lambda_{i} M_{i} \otimes N_{i}$ where $\{\lambda_{i}\}_{i}$ are real and decreasing and $\{M_{i}\}_{i} \subset \Herm(A)$, $\{N_{i}\}_{i} \in \Herm(B)$ are orthonormal bases (with respect to the Hilbert-Schmidt inner product) and we know there is an $i$ such that $\lambda_{i} M_{i} \otimes N_{i} = \rho_{A}^{1/2} \otimes \rho_{B}^{1/2}$.

Now, if there exists $i'$ such that $\lambda_{i'} > 1$, then $(M_{i'},N_{i'})$ is feasible for \eqref{eq:mu_f-st-opt} and, using orthonormality,
\begin{align}
    \Tr[M_{i'} \otimes N_{i'} \wt{\rho}_{f}] = \sum_{i} \lambda_{i} \langle M_{i'}^{\ast}, M_{i} \rangle \langle N_{i'}^{\ast},N_{i}\rangle = \lambda_{i} \leq \mu_{f}(\rho_{AB}) \ ,
\end{align}
so $\mu_{f}(A:B)_{\rho} > 1$. However, for $f \geq f_{GM}$, $1 \geq \mu_{f_{1/2}}^{\text{Lin}}(A:B)_{\rho} \geq \mu_{GM}(A:B)_{\rho} \geq \mu_{f}(A:B)_{\rho}$ where the first inequality is Lemma \ref{lem:map-norm-is-1-for-k-correlation-coeff} and the second is \eqref{eq:ordering-of-maximal-corr-means}. This completes the proof.
\end{proof}

\subsubsection{Lemmata for Extreme Values}
For clarity, we recall the Holevo-Helstrom theorem.
\begin{proposition}\label{prop:holevo-helstrom}
    Let $\rho^{0},\rho^{1} \in \Density(A)$ and $\lambda \in [0,1]$. For any binary POVM $\{M_{0},M_{1}\}$, the success probability of determining the value $0$ or $1$ correctly is bounded above by
    \begin{align}
      \lambda \Tr[M_{0}\rho^{0}] + (1-\lambda) \Tr[M_{1}\rho^{1}] \leq \frac{1}{2} + \frac{1}{2}\Vert \lambda \rho^{0} + (1-\lambda)\rho^{1} \Vert_{1} \ . 
    \end{align}
    Moreover, there exists a measurement that achieves this bound. 
\end{proposition}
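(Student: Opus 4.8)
The plan is to prove the displayed inequality exactly as written and then to treat the achievability (``moreover'') claim separately, since the two halves behave very differently. First I would introduce the averaged operator $\ol{\rho} \coloneq \lambda\rho^{0} + (1-\lambda)\rho^{1}$ and record the key structural fact: because $\rho^{0},\rho^{1} \in \Density(A)$ and $\lambda \in [0,1]$, the operator $\ol{\rho}$ is a convex combination of density operators and is therefore itself a density operator, i.e.\ positive semidefinite with unit trace. Consequently $\Vert \ol{\rho} \Vert_{1} = \Tr[\ol{\rho}] = 1$, so the right-hand side of the claimed bound is exactly $\tfrac{1}{2} + \tfrac{1}{2} = 1$. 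This reduction is the crux of handling the statement literally.

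With this in hand the first inequality is immediate. For any binary POVM $\{M_{0},M_{1}\}$ we have $0 \leq M_{i} \leq \mbb{1}$, and since each $\rho^{i}$ is a state this forces $\Tr[M_{i}\rho^{i}] \in [0,1]$ for $i \in \{0,1\}$. The success probability $\lambda\Tr[M_{0}\rho^{0}] + (1-\lambda)\Tr[M_{1}\rho^{1}]$ is then a convex combination of two numbers in $[0,1]$, hence at most $\lambda + (1-\lambda) = 1$, which is precisely the right-hand side computed above. This establishes the stated inequality for every measurement, with no optimization required.

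For the moreover statement I would pass to the symmetric rewriting of the success probability, obtained by substituting $M_{1} = \mbb{1} - M_{0}$ and collecting terms into the identity $\lambda\Tr[M_{0}\rho^{0}] + (1-\lambda)\Tr[M_{1}\rho^{1}] = \tfrac{1}{2} + \tfrac{1}{2}\Tr[(M_{0}-M_{1})\Delta]$, where $\Delta \coloneq \lambda\rho^{0} - (1-\lambda)\rho^{1}$ is the weighted difference. Using $-\mbb{1} \leq M_{0}-M_{1} \leq \mbb{1}$ together with the variational characterization of the trace norm (the same characterization of the trace distance invoked elsewhere in the paper, e.g.\ in Proposition~\ref{prop:spec-to-TD}), the maximum of $\Tr[(M_{0}-M_{1})\Delta]$ over binary POVMs equals $\Vert \Delta \Vert_{1}$, attained when $M_{0}-M_{1}$ is the difference of the projectors onto the positive and negative parts of $\Delta$; equivalently, the optimal measurement is the Helstrom projector onto the positive part of $\Delta$.

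The main obstacle, and the step deserving the most care, is reconciling this achievability computation with the literal right-hand side. The best attainable success probability is $\tfrac{1}{2} + \tfrac{1}{2}\Vert \Delta \Vert_{1} = \tfrac{1}{2} + \tfrac{1}{2}\Vert \lambda\rho^{0}-(1-\lambda)\rho^{1}\Vert_{1}$, which is in general \emph{strictly} below the literal bound $1$; equality with $1$ holds only in the perfectly-distinguishable regime where the weighted supports are orthogonal. Thus, as worded, the inequality holds for every measurement but its bound of $1$ is saturated only in that degenerate case, whereas the genuinely always-attainable optimum is governed by the weighted \emph{difference} $\Delta$ rather than by the average $\ol{\rho}$. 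I would therefore flag that the achievability assertion is the one that pins down the operator inside the trace norm, and state it for the quantity the Helstrom measurement actually attains.
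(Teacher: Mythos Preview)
The paper does not prove this proposition; it is introduced with ``For clarity, we recall the Holevo-Helstrom theorem'' and stated without proof. Moreover, the paper's statement contains a typo: the trace norm should be of the weighted \emph{difference} $\lambda\rho^{0}-(1-\lambda)\rho^{1}$, not the sum. This is evident from the sentence immediately following the proposition in the paper, which asserts that the optimal success probability exceeds $1/2$ unless $\lambda=1/2$ and $\rho^{0}=\rho^{1}$ --- a characterization that is meaningless for the sum (which never vanishes) but is exactly the condition $\Vert\lambda\rho^{0}-(1-\lambda)\rho^{1}\Vert_{1}=0$.

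Your analysis is entirely correct on both counts. You rightly observe that, as literally written, the right-hand side is $\tfrac{1}{2}+\tfrac{1}{2}\Vert\ol{\rho}\Vert_{1}=1$, which makes the inequality hold trivially but renders the achievability claim false except in the perfectly distinguishable case. You then supply the standard Helstrom derivation for the intended statement (rewriting via $M_{1}=\mbb{1}-M_{0}$, reducing to $\tfrac{1}{2}+\tfrac{1}{2}\Tr[(M_{0}-M_{1})\Delta]$ with $\Delta=\lambda\rho^{0}-(1-\lambda)\rho^{1}$, and optimizing via the variational characterization of $\Vert\cdot\Vert_{1}$). Since the paper offers no proof to compare against, there is nothing further to add: your diagnosis of the typo and your proof of the corrected statement are both sound.
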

Note that because $\Vert X \Vert_{1} = 0$ if and only if $X = 0$. This shows the optimal measurement correctly guesses the value of the index with probability greater than $1/2$ unless $\lambda = 1/2$ and $\rho^{0} = \rho^{1}$.

\begin{proof}[Proof of Lemma \ref{lem:indep-if-and-only-if-measurements}]
    The first direction just follows from if $\rho_{AB} = \rho_{A} \otimes \rho_{B}$, then $(\cM \otimes \cN)(\rho_{A} \otimes \rho_{B}) = p_{X} \otimes p_{Y}$. We thus focus on proving the other direction. That is, we assume there does not exist local measurements $\cM_{A \to X},\cN_{B \to Y}$ such that $p_{XY} \neq p_{X} \otimes p_{Y}$. The basic proof idea is to build measurements that correlate the two spaces until we have enough structure that we can use properties of vector spaces to prove that $\rho_{AB} = \rho_{A} \otimes \rho_{B}$.
    
    For a two-outcome measurement channel $\cM_{A \to X}$, define $\rho^{\cM}_{XB} \coloneq (\cM_{A \to X} \otimes \id_{B})(\rho_{AB})$. First we show that for any two-outcome measurement on Alice's side such that $\cM(\rho_{A})$ is full rank, it must be the case $\rho^{\cM}_{XB} = \rho^{\cM}_{X} \otimes \rho^{\cM}_{B}$ or we contradict our assumption. Let $\rho_{XB}^{\cM} = \sum_{x \in \{0,1\}} p(x)\dyad{x} \otimes \rho^{x}_{B}$ where $p(0) \in (0,1)$. Then by Proposition \ref{prop:holevo-helstrom}, there exists a two-outcome measurement on $B$ that outputs a guess $\hat{x}$ such that $\Pr[x = \hat{x}] = \frac{1}{2} + \frac{1}{2} \Vert p(0)\rho^{0}_{B} - p(1)\rho^{1}_{B} \Vert_{1} > 1/2$ unless $\rho^{0}_{B} = \rho^{1}_{B}$. Thus either $\rho^{\cM}_{XB} = \rho_{X}^{\cM} \otimes \rho_{B}^{\cM}$ or Bob can apply the Holevo-Helstrom measurement to result in an outcome stored in register $X'$ that is positively correlated with the value stored in register $X$. Thus, either $\rho^{\cM}_{XB} = \rho_{X}^{\cM} \otimes \rho_{B}^{\cM}$ or we contradict our assumption. Thus, for any two-outcome measurement on Alice's side such that $\cM(\rho_{A})$ is full rank it must be the case $\rho^{\cM}_{XB} = \rho^{\cM}_{X} \otimes \rho^{\cM}_{B}$. Note that if Alice's two-outcome measurement is not full rank, then one of the outcomes happens with probability one and thus $\rho^{\cM}_{XB} = \rho^{\cM}_{X} \otimes \rho_{B}$. Therefore, for any two-outcome measurement on Alice's side, the resulting state is independent, i.e. $\rho^{\cM}_{XB} = \rho^{\cM}_{X} \otimes \rho^{\cM}_{B}$.
    
    We next show that $\rho_{B}^{\cM}$ must be independent of the choice of $\cM$. Let there exist two-outcome measurements $\cM_{A \to X}$, $\cM'_{A \to X}$ such that $\rho_{B}^{\cM} \neq \rho_{B}^{\cM'}$. Again by Proposition \ref{prop:holevo-helstrom}, $\rho^{\cM}_{B}$ and $\rho^{\cM'}_{B}$ may be successfully distinguished with some probability strictly greater than half. Therefore, let $\ol{\cM}_{A \to X}$ be the measurement that flips a coin to use $\cM$ or $\cM'$, storing the choice as $x \in \{0,1\}$ and throws out the value of the measurement outcome. Then, $\rho^{\ol{\cM}}_{XB} = \frac{1}{2}\dyad{0} \otimes \rho^{\cM}_{B} + \frac{1}{2}\dyad{1} \otimes \rho^{\cM'}_{B}$. Again using Proposition \ref{prop:holevo-helstrom}, one may conclude that there exists a measurement on Bob's side that correlates the $X$ system, which contradicts the assumption. Therefore $\rho_{XB}^{\cM} = \rho_{X}^{\cM} \otimes \rho_{B}$ for all two-outcome measurements.
    
    The rest of the proof is using linear algebra to show that $\rho_{XB}^{\cM} = \rho_{X}^{\cM} \otimes \rho_{B}$ for all two-outcome $\cM$ implies $\rho_{AB} = \rho_{A} \otimes \rho_{B}$. As Bob's state is $\rho_{B}$ independent of Alice's measurement outcome, Bob's conditional state is $\rho_{B}$ scaled by the probability of Alice getting the outcome corresponding to the POVM element. As a POVM element is $0 \leq M \leq \mbb{1}$, it follows $\Tr_{A}[M \otimes \mbb{1}_{B}\rho_{AB}] = \Tr[M \otimes \mbb{1}_{B}\rho_{AB}]\rho_{B}$ for all $0 \leq M \leq \mbb{1}$. Using any Hermitian operator decomposes into the difference of two positive operators, and that any linear operator is decomposed into two Hermitian operators where one is multiplied by an imaginary number, we may use linearity of partial trace to ultimately conclude 
    \begin{align}\label{eq:linear-operator-restriction}
        \Tr_{A}[Z \otimes \mbb{1}\rho_{AB}] = \Tr[Z \otimes \mbb{1}\rho_{AB}]\rho_{B} \quad \forall Z \in \Lin(A) \ .
    \end{align}
    Denote the Schmidt decomposition of $\rho_{AB}$ with respect to Hilbert spaces $(\Lin(A),\langle \cdot, \cdot \rangle)$, $(\Lin(B), \langle \cdot, \cdot \rangle)$ as $\rho_{AB} = \sum_{i} \lambda_{i} N_{i} \otimes M_{i}$ where $\lambda_{i} \geq 0$, $\{N_{i}\}_{i}$ and $\{M_{i}\}_{i}$ are orthonormal sets. Then by the orthonormality,
    \begin{align}\label{eq:indep-lin-const-1}
        \Tr_{A}[N_{i}^{\ast} \otimes \mbb{1}\rho_{AB}] = \lambda_{i} M_{i} \quad \forall i \ .
    \end{align}
    Using that by the Schmidt decomposition $\Tr[N_{i}^{\ast} \otimes \mbb{1} \rho_{AB}] = \lambda_{i}\Tr[M_{i}]$ and \eqref{eq:linear-operator-restriction},
    \begin{align}\label{eq:indep-lin-const-2}
        \Tr_{A}[N_{i}^{\ast} \otimes \mbb{1}\rho_{AB}] = \lambda_{i}\Tr[M_{i}]\rho_{B} \quad \forall i \ . 
    \end{align}
    Combining \eqref{eq:indep-lin-const-1} and \eqref{eq:indep-lin-const-2}, one obtains $\lambda_{i}M_{i} = \lambda_{i}\Tr[M_{i}]\rho_{B}$ for all $i$. Moreover, taking the partial trace over $B$, the Schmidt decomposition shows $\rho_{A} = \sum_{i}\lambda_{i}\Tr[M_{i}]N_{i}.$ Combining these points, $\rho_{AB} = \sum_{i} \lambda_{i}N_{i} \otimes M_{i} = \sum_{i} \lambda_{i}\Tr[M_{i}]N_{i} \otimes \rho_{B} = \rho_{A} \otimes \rho_{B}$. 
\end{proof}

\begin{proposition}[Restatement of Proposition \ref{prop:POVM-to-PVM}]
    If there exist two-outcome measurements with POVM elements $\{M,\mbb{1}_{A}-M\}$, $\{N,\mbb{1}_{B}-N\}$ such that 
    $$\Tr[\rho M \otimes (\mbb{1}_{B} -N)] = 0 = \Tr[\rho (\mbb{1}_{A} - M) \otimes N] \ , $$
    and $0 < \Tr[\rho M \otimes N] < 1$, then there exist projective measurements that do the same.
\end{proposition}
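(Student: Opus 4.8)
The plan is to reduce the given POVM satisfying the perfect-correlation conditions to a pair of \emph{projective} two-outcome measurements that preserve those conditions. The starting observation is that the two constraints $\Tr[\rho\, M \otimes (\mbb{1}_{B}-N)] = 0$ and $\Tr[\rho\, (\mbb{1}_{A}-M) \otimes N] = 0$ say that the states $M \otimes (\mbb{1}_{B}-N)$ and $(\mbb{1}_{A}-M) \otimes N$ are supported orthogonally to $\rho_{AB}$, i.e. the positive operators $\sqrt{\rho}\,(M \otimes (\mbb{1}_{B}-N))\sqrt{\rho}$ and $\sqrt{\rho}\,((\mbb{1}_{A}-M) \otimes N)\sqrt{\rho}$ are both zero. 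First I would use the fact that for positive operators $P$ and $Q$, $\Tr[PQ]=0$ implies $\supp(P) \perp \supp(Q)$; applied here with $P = \rho$, this forces $M \otimes (\mbb{1}_{B}-N)$ and $(\mbb{1}_{A}-M)\otimes N$ to vanish on $\supp(\rho_{AB})$.

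\textbf{Key steps.} The main step is to exploit these support conditions to replace the POVM elements by their ``rounded'' spectral projections. Concretely, let $\{P^{M}_{\lambda}\}$ be the spectral projectors of $M$, and define $P \coloneq \Pi_{\{M = 1\}}$ (the projector onto the eigenvalue-one eigenspace of $M$) — or more carefully, I would work on $\supp(\rho)$ and argue that on that support $M$ is forced to be a projector, so that $\{P, \mbb{1}_A - P\}$ is automatically projective there. The argument is as follows: the vanishing of $M \otimes (\mbb{1}_{B}-N)$ and $(\mbb{1}_{A}-M)\otimes N$ on $\supp(\rho)$ means that, restricted to the support, the measurement outcomes are perfectly correlated, and hence $(\cM \otimes \cN)(\rho) = \chi^{\vert p}_{XX'}$ with $p = \Tr[\rho\, M \otimes N] \in (0,1)$. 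But $\chi^{\vert p}$ is a diagonal, rank-two state, and I would show that the correlation conditions, combined with the idempotency forced on the reduced operators, let us replace $M$ with the projector $P_A$ onto $\{x : x \in A_0\}$ (in the notation of Theorem~\ref{thm:extreme-values-summary}) and similarly $N$ with $P_B$. One clean route is to invoke the decomposition structure: the conditions are exactly those appearing in the $(3 \iff 4)$ direction of the proof of Proposition~\ref{prop:mu-AM-equivalences}, which shows that whenever such POVMs exist, the space decomposes as $A = A_0 \oplus A_1$, $B = B_0 \oplus B_1$ with $\rho_{AB}$ block-supported so that the projectors $\{\Pi_{A_0}, \Pi_{A_1}\}$ and $\{\Pi_{B_0}, \Pi_{B_1}\}$ reproduce $\chi^{\vert p}$; these projective measurements then satisfy the required equalities by direct calculation.

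\textbf{Main obstacle.} The delicate point is handling the behaviour of $M$ and $N$ \emph{off} the support of $\rho$, since the hypotheses constrain the operators only on $\supp(\rho_{AB})$, whereas a valid projective measurement must be defined globally and satisfy $0 \le P \le \mbb{1}$ on the whole space. I expect the main work to be showing that one can always extend the rounded projectors from $\supp(\rho)$ to the full Hilbert space without disturbing the three trace conditions (two vanishing cross-terms and $0 < \Tr[\rho\, P_A \otimes P_B] < 1$). This is feasible because all three conditions depend only on the action of the operators on $\supp(\rho_A)$ and $\supp(\rho_B)$ respectively — any extension to the orthogonal complement is traced against the zero part of $\rho$ — so I would define the global projectors as $P_A \coloneq \Pi_{A_0}$ extended by (say) zero on $\supp(\rho_A)^{\perp}$ and verify the three identities survive. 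The cleanest presentation is therefore to first restrict to $\supp(\rho)$, invoke the block-decomposition argument already used for $(3 \iff 4)$ in Proposition~\ref{prop:mu-AM-equivalences} to obtain the projectors $\{\Pi_{A_0}, \Pi_{A_1}\}$, $\{\Pi_{B_0}, \Pi_{B_1}\}$, and then confirm by direct computation that these projective measurements satisfy the claimed equalities with the same $p = \Tr[\rho\, \Pi_{A_0} \otimes \Pi_{B_0}] \in (0,1)$.
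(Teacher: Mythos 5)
Your proposal ultimately rests on the wrong pillar. The route you declare ``cleanest'' --- invoking the $(3 \Rightarrow 4)$ block-decomposition argument from Proposition~\ref{prop:mu-AM-equivalences} --- is inapplicable here: that argument takes as its \emph{hypothesis} dichotomous projective measurements $\{P_{0},P_{1}\}$, $\{Q_{0},Q_{1}\}$, and its very first move, $A = A_{0}\oplus A_{1}$ with $A_{i} = \linspan(\supp(P_{i}))$, depends on the supports of complementary projectors being orthogonal complements. For POVM elements $M$ and $\mbb{1}_{A}-M$ with any eigenvalue in $(0,1)$ the two supports overlap and no direct-sum decomposition arises, so your claim that $(3\iff 4)$ ``shows that whenever such POVMs exist, the space decomposes'' is not what that proof establishes. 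Worse, projectivity of the measurements is precisely the conclusion of the present proposition --- in the paper's development, Proposition~\ref{prop:POVM-to-PVM} is exactly what upgrades item~2 to item~3 of Proposition~\ref{prop:mu-AM-equivalences}, and only then does $(3\Rightarrow 4)$ apply --- so your chosen route assumes what is to be proved.

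The good news is that your \emph{first} sketch, rounding to spectral projectors, can be completed into a correct proof, and it differs genuinely from the paper's. Your opening observation is right: for positive operators, $\Tr[PQ]=0$ forces $PQ=0$, so $(M\otimes(\mbb{1}_{B}-N))\rho = 0 = ((\mbb{1}_{A}-M)\otimes N)\rho$. Now expand any $\ket{\psi}\in\supp(\rho)$ in the product eigenbasis $\ket{e_{i}}\otimes\ket{f_{j}}$ of $M$ and $N$ with eigenvalues $m_{i},n_{j}$: the two operator identities give $c_{ij}\,m_{i}(1-n_{j}) = 0 = c_{ij}\,(1-m_{i})n_{j}$, so every pair with $c_{ij}\neq 0$ satisfies $(m_{i},n_{j})\in\{(0,0),(1,1)\}$ (an intermediate $m_{i}$ would force $n_{j}=1$ and $n_{j}=0$ simultaneously). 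Replacing $M,N$ by the eigenvalue-one spectral projectors $P \coloneq \Pi_{\{M=1\}}$, $Q \coloneq \Pi_{\{N=1\}}$ then reproduces both vanishing cross-terms and $\Tr[\rho\, P\otimes Q] = \Tr[\rho\, M\otimes N] \in (0,1)$; since $P,Q$ are globally defined, your worry about extending off $\supp(\rho)$ disappears. This is the step your write-up leaves as an assertion (``I would show that\ldots''), and it is the whole proof. The paper instead argues via state discrimination: it forms the conditional states $\rho^{0}_{B},\rho^{1}_{B}$, shows $\{N,\mbb{1}_{B}-N\}$ discriminates them perfectly, invokes Holevo--Helstrom (Proposition~\ref{prop:holevo-helstrom}) to conclude they are mutually orthogonal, replaces $N$ by $\Pi_{\supp(\rho^{0}_{B})}$, and then repeats the argument on the $A$ side. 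Your spectral route is arguably more elementary (no discrimination theorem needed); the paper's avoids eigenbasis bookkeeping. Either works --- but as submitted, your proposal completes neither.
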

\begin{proof}
    We will show such projectors must exist. By assumption, $0 < \Tr[\rho M \otimes N] \leq \Tr[\rho_{A} M] < 1$. Define $p_{0} \coloneq \Tr[\rho_{A} M]$ and $\rho^{0}_{B} \coloneq \frac{1}{p_{0}}\Tr_{A}[M \otimes \mbb{1}_{B} \rho_{AB}]$. $p_{1}$ and $\rho^{1}_{B}$ may be defined in a similar fashion. Then,
    \begin{align}
        p_{0} = \Tr[M\rho_{A}] = \Tr[M \otimes \mbb{1}\rho_{AB}] =& \Tr[M \otimes N \rho] + \Tr[M \otimes (\mbb{1}_{B} - N)\rho_{AB}] \\
        =& \Tr[M \otimes N \rho] \\
        =& p_{0}\Tr[N\rho^{0}] \ ,
    \end{align}
    where we used the assumption in the second line. It follows $\Tr[N \rho^{0}] = 1$. On the other hand,
    \begin{align*}
        \Tr[\rho^{1}_{B}N] = \Tr[\Tr_{A}[(\mbb{1}-M)\rho_{AB}]N] = \Tr[(\mbb{1}-M)\otimes N \rho_{AB}] = 0 \ ,
    \end{align*}
    where the last equality is our assumption. Thus $\rho^{0}_{B},\rho^{1}_{B}$ are perfectly discriminable with $\{N,\mbb{1}-N\}$. By the Holevo-Helstrom theorem, $\{\rho^{0}_{B},\rho^{1}_{B}\}$ must be mutually orthogonal. Thus we may replace $\{N,\mbb{1}_{B}-N\}$ with $\{\Pi_{\supp(\rho^{0})},\mbb{1}_{B} - \Pi_{\supp(\rho^{0})}\}$ and obtain the same probabilities. Noting that we never relied on $M$ be a non-projective POVM element, we may use the same argument to replace $\{M,\mbb{1}_{A} - M\}$ with the projector onto the support of $\wt{\rho}^{0}_{A} \coloneq \frac{1}{\Tr[\Pi_{\supp(\rho^{0})}\rho_{B}]}\Tr_{B}[\mbb{1}_{A} \otimes \Pi_{\supp(\rho^{0})}\rho_{AB}]$ and its completion. This completes the proof.
\end{proof}

\subsection{Lemmata for \texorpdfstring{$\chi^{2}$}{} Contraction Coefficients}\label{subsec:lemmata-for-chi-square-contraction}

\begin{proof}[Proof of Lemma \ref{lem:contraction-coeff-to-eig}]
    Following \cite{Cao-2019a}  exactly, we have
    \begin{align}
        \eta_{\chi^{2}_{f}}(\cE,\sigma) =& \sup_{\rho \neq \sigma \in \Density} \frac{\chi_{f}^{2}(\cE(\rho) \Vert \cE(\sigma))}{\chi^{2}_{f}(\rho \Vert \sigma)} \\ 
        =& \sup_{\rho \neq \sigma \in \Density} \frac{\langle \rho - \sigma, (\cE^{\ast} \circ \mbf{J}_{f,\cE(\sigma)}^{-1} \circ \cE)(\rho - \sigma) \rangle}{\langle \rho - \sigma, \mbf{J}_{f,\sigma}^{-1}(\rho - \sigma) \rangle} \label{eq:contraction-Schrodinger-step-1} \\
        =& \sup_{\rho \neq \sigma \in \Density} \frac{\langle \rho - \sigma, (\cS_{f,\cE,\sigma} \circ \cE)(\rho - \sigma) \rangle_{f,\sigma}^{\star}}{\langle \rho - \sigma, \rho - \sigma \rangle_{f,\sigma}^{\star}} \label{eq:contraction-Schrodinger-Jf-inv} \\
        =& \sup_{\substack{0 \neq H \in \Herm: \\ \Tr[H]=0}} \frac{\langle H, (\cS_{f,\cE,\sigma} \circ \cE)(H) \rangle_{f,\sigma}^{\star}}{\langle H, H \rangle_{f,\sigma}^{\star}} \label{eq:chi-square-contraction-as-petz-ratio} \ ,
    \end{align}
    where the first equality is \eqref{eq:chi-squared-f-def} and the definition of adjoint map, the second equality is $\text{id} = \mbf{J}_{f,\sigma}^{-1} \circ \mbf{J}_{f,\sigma}$, the definition of $\cS_{f,\cE,\sigma}$ and the definition of the inner product, and the last equality is $\rho-\sigma$ is traceless and Hermitian.
    
    We now establish the claims about eigenvalues of the map. This uses a similar, but modified, proof of \cite[Lemma 10]{Cao-2019a}, which can be viewed as a less refined version of our result. By,
    \begin{align}\label{eq:Schrodinger-thm-item-2-eigenvector}
        \langle \sigma, (\cS_{f,\cE,\sigma} \circ \cE)(\sigma) \rangle_{f,\sigma}^{\star} = \langle \sigma, (\cE^{\ast} \circ \mbf{J}_{f,\cE(\sigma)}^{-1} \circ \cE)(\sigma) \rangle= \langle \cE(\sigma), \mbb{1}_{\supp(\cE(\sigma))} \rangle = 1 \ ,
    \end{align}
    where the first equality is $\mbf{J}_{f,\sigma}^{-1} \circ \mbf{J}_{f,\sigma} = \id_{\supp(\sigma)}$ and $\sigma$ already is projecting onto that subspace, the second equality is definition of adjoint map and $\mbf{J}_{f,\cE(\sigma)}^{-1}(\cE(\sigma)) = \mbb{1}_{\supp(\cE(\sigma))}$ and the final equality is $\cE$ is trace-preserving. Combining this with Item 2 of Proposition \ref{prop:Schrod-map-composed-properties}, we have $\cS_{f,\cE,\sigma} \circ \cE$ as an operator from inner product space from $(\Lin(A), \langle \cdot , \cdot \rangle_{f,\sigma}^{\star})$ into itself has eigenvalues between zero and one with its largest eigenvector being $\sigma$. Note the same claims hold if we replace the space of linear operators with the space of Hermitian operators.

    We now consider $$\cS_{f,\cE,\sigma} \circ \cE: (\Herm(A), \langle \cdot , \cdot \rangle_{f,\sigma}^{\star}) \to (\Herm(A), \langle \cdot , \cdot \rangle_{f,\sigma}^{\star}) \ , $$
    where the output space is appropriate as $\cS_{f,\cE,\sigma} \circ \cE$ is Hermitian-preserving as follows from Proposition \ref{prop:symmetry-inducing-equivalences}. This map is also self-adjoint on this space as as it is positive semidefinite (Proposition \ref{prop:Schrod-map-composed-properties}).
    
    As $\cS_{f,\cE,\sigma} \circ \cE$ is self-adjoint over the Hermitian operators, we may take its spectral decomposition, $\cS_{f,\cE,\sigma} \circ \cE = \sum_{i \in [d_{A}^{2}]} \lambda_{i} \nu_{i}$ where we know $\{\lambda_{i}\}_{i} \subset [0,1]$, $\{\nu_{i}\}_{i} \subset \Herm(A)$, $\lambda_{1} = 1$, and $\nu_{1} = \sigma$. Then, by the orthonormality, $0=\langle \nu_{i}, \sigma \rangle_{f,\sigma}^{\star} = \Tr[\nu_{i}]$ where we used Proposition \ref{prop:mult-and-div-under-trace} and that $\sigma > 0$. This allows us to conclude $\{\nu_{i}\}_{i \in [d_{A}^{2}]\setminus \{1\}}$ is an orthonormal set of traceless Hermitian operators. Using that the space of traceless Hermitian operators has dimension $d_{A}^{2} -1$, they form a basis of the traceless Hermitian operators. It follows any feasible $H$ in \eqref{eq:chi-square-contraction-as-petz-ratio} can be decomposed in this basis, so we may conclude $\eta_{\chi^{2}_{f}}(\cE,\sigma) \leq \lambda_{2}$. However, $\nu_{2}$ is a traceless Hermitian operator and thus feasible for \eqref{eq:chi-square-contraction-as-petz-ratio}. Thus, $\eta_{\chi^{2}_{f}}(\cE,\sigma) = \lambda_{2}$. 

    Now we prove Item 3, which follows a very similar method to the proof of Theorem \ref{lem:contraction-as-map-norm}. Starting from \eqref{eq:contraction-Schrodinger-step-1} but already having replaced $\rho - \sigma$ with traceless $0 \neq H \in \Herm(A)$, 
    \begin{align}
        \eta_{\chi^{2}_{f}}(\cE,\sigma) =& \sup_{\substack{0 \neq H \in \Herm: \\ \Tr[H]=0}} \frac{\langle H, (\cE^{\ast} \circ \mbf{J}_{f,\cE(\sigma)}^{-1}\circ \cE)(H) \rangle}{\langle H, \mbf{J}_{f,\sigma}^{-1}(H)\rangle } 
        = \sup_{0 \neq H \in \Herm_{0}(A)} \frac{\langle \cE(H),\cE(H) \rangle_{\mbf{J}_{f,\cE(\sigma)}^{-1}}}{\langle H, H\rangle_{f,\sigma}^{\star} } \ ,
    \end{align}
    where the second equality is just using the definition of adjoint map and the inner product and giving a name for the space $\Herm_{0}(A)$. Moreover as $\cE$ is Hermitian-preserving and trace-preserving, $\cE$ maps $\Herm_{0}(A)$ to $\Herm_{0}(B)$. Combining these points, we have
    \begin{align}
        \eta_{\chi^{2}_{f}}(\cE,\sigma)
        = \sup_{H \in \Herm_{0}(A)\setminus \{0\}} \frac{\langle \cE(H),\cE(H) \rangle_{\mbf{J}_{f,\cE(\sigma)}^{-1}}}{\langle H, H\rangle_{f,\sigma}^{\star} } 
        = \left(\sup_{H \in \Herm_{0}(A)\setminus \{0\}} \frac{\Vert \cE(H)\Vert_{\mbf{J}_{f,\cE(\sigma)}^{-1}}}{\Vert H \Vert^{2}_{\mbf{J}_{f,\sigma}^{-1}}}\right)^{2} \ ,  
    \end{align}
    where the second equality uses the monotonicity of the square. As $\cE$ is Hermitian-preserving and trace-preserving, the argument of the square is the operator norm given in the Lemma statement. Doing an identical argument as to in Theorem \ref{lem:contraction-as-map-norm} using Lemma \ref{lem:map-norms-as-optimizations} completes the proof. 
\end{proof}

\begin{proof}[Proof of Corollary \ref{cor:Schroding-er-map-without-rank-constraints}]
Starting from \eqref{eq:chi-squared-f-norm-ratio}, by our assumption on $f$,
\begin{align}
    \eta_{\chi^{2}_{f}}(\cE,\sigma) =& \sup_{H \in \Herm_{0,\sigma}(A) \setminus \{0\}} \frac{\Vert \cR_{f,\cE,\sigma}(H) \Vert^{2}_{\mbf{J}_{f,\cE(\sigma)}}}{\Vert H \Vert^{2}_{\mbf{J}_{f,\sigma}}} \\
    =& \sup_{H \in \Herm(A)\setminus\{0\}: \Tr[\sigma H]= 0} \frac{\langle \cR_{f,\cE,\sigma}(H),\cR_{f,\cE,\sigma}(H) \rangle_{\mbf{J}_{f,\cE(\sigma)}} }{\langle H , H \rangle_{\mbf{J}_{f,\sigma}}} \\
    =& \sup_{H \in \Herm(A)\setminus\{0\}: \Tr[\sigma H]= 0} \frac{\langle \cR_{f,\cE,\sigma}(H),(\id_{\supp(\cE(\sigma))} \circ \cE \circ \mbf{J}_{f,\sigma})(H) \rangle }{\langle (\mbf{J}_{f,\sigma}^{-1} \circ \mbf{J}_{f,\sigma})(H) , \mbf{J}_{f,\sigma}(H) \rangle} \\
    =& \sup_{H \in \Herm(A)\setminus\{0\}: \Tr[\sigma H]= 0} \frac{\langle (\cE \circ \mbf{J}_{f,\sigma})(H),(\mbf{J}_{f,\cE(\sigma)}^{-1} \circ \cE \circ \mbf{J}_{f,\sigma})(H) \rangle }{\langle (\mbf{J}_{f,\sigma}^{-1} \circ \mbf{J}_{f,\sigma})(H) , \mbf{J}_{f,\sigma}(H) \rangle} \\
    =& \sup_{G \in \Herm(A)\setminus\{0\}: \Tr[G]= 0} \frac{\langle \cE(G),(\mbf{J}_{f,\cE(\sigma)}^{-1} \circ \cE)(G) \rangle }{\langle \mbf{J}_{f,\sigma}^{-1}(G) , G \rangle} \\
    =& \sup_{G \in \Herm_{0}(A)\setminus\{0\}: \Tr[G]=0} \frac{\langle G,(\cE^{\ast} \circ \mbf{J}_{f,\cE(\sigma)}^{-1} \circ \cE)(G) \rangle }{\langle G, G \rangle_{f,\sigma}^{\star}} \\ 
    =& \sup_{G \in \Herm_{0}(A)\setminus\{0\}: \Tr[G]=0} \frac{\langle G,(\cS_{f,\cE,\sigma} \circ \cE)(G) \rangle_{f,\sigma}^{\star} }{\langle G, G \rangle_{f,\sigma}^{\star}} \ ,
\end{align}
where the second equality is definition of the canonical norms for an inner product space, the third uses the definition of the inner product spaces and $\cR_{f,\cE,\sigma}$ and the fact that $\mbf{J}_{f,\cE,\sigma}^{-1}$ projects onto the space $\text{id}_{\supp(\cE(\sigma))}$ projects onto by assumption on $f$ and Proposition \ref{prop:suff-conds-for-restricting-support}. The fourth equality uses $\mbf{J}_{f,\cE(\sigma)}^{-1}$ is self-adjoint with respect to the Hilbert-Schmidt inner product on Hermitian matrices (Proposition \ref{prop:J-operator-self-adjoint}). The fifth equality is by defining $G \coloneq \mbf{J}_{f,\sigma}(H)$ which is Hermitian since $f$ is symmetry-inducing, is non-zero since $H$ is non-zero, $\sigma$ is full rank with respect to $A$, and \eqref{eq:Hadamard-prod-form}, and $\Tr[G] = \Tr[\mbf{J}_{f,\sigma}(H)] = \Tr[\sigma H] = 0$ where we used Proposition \ref{prop:mult-and-div-under-trace} and the assumption on $H$. The rest of the proof may be argued in an identical fashion to Lemma \ref{lem:contraction-coeff-to-eig} via the spectral decomposition of $\cS_{f,\cE,\sigma} \circ \cE$ where note that \eqref{eq:Schrodinger-thm-item-2-eigenvector} does not depend on $\cE(\sigma)$ being full rank.
\end{proof}

\subsubsection{Lemmata for Time Homogeneous Markov Chains}\label{subsec:time-homogeneous-MC-lemmata}

\begin{proof}[Proof of Proposition \ref{prop:L1-mixing-times}]
    We have
    \begin{align}
        \Vert \cE^{n}(\rho - \pi) \Vert_{1}^{2} \leq \chi^{2}_{f}(\cE^{n}(\rho),\cE^{n}(\pi)) 
        =& \chi^{2}_{f}(\cE^{n}(\rho),\cE^{n}(\pi)) \\
        \leq& \eta_{\chi^{2}_{f}}(\cE,\cE^{n-1}(\pi))\chi^{2}_{f}(\cE^{n-1}(\rho),\cE^{n-1}(\pi)) \\
        =& \eta_{\chi^{2}_{f}}(\cE,\pi)\chi^{2}_{f}(\cE^{n-1}(\rho),\cE^{n-1}(\pi)) \\
        \leq& \eta_{\chi^{2}_{f}}(\cE,\pi)^{n} \chi^{2}_{f}(\rho,\sigma) \ ,
    \end{align}
    where the first inequality is \cite[Lemma 5]{Temme-2010a}, the second inequality is the definition of contraction coefficient, the second equality is the assumption $\cE(\pi) = \pi$, and the final inequality is doing the same argument iteratively. Taking the square root completes the first inequality. Using \eqref{eq:ordering-of-chi-squareds}, one may alternatively apply Proposition \ref{prop:Df-lb-in-chi-square} and that $\text{TD}(\rho,\sigma) \leq 1$ to remove the dependence on $\chi_{f}^{2}$. Re-ordering terms completes the proof.
\end{proof}

\end{document}